\definecolor{Prune}{RGB}{99,0,60}
\DeclareMathOperator{\Tr}{Tr}
\DeclareMathOperator{\TR}{TR}
\newtheorem{theorem}{Lemma}
\newcommand{\grafe}[1]{\left\{ #1 \right\}}
\newcommand{\tonde}[1]{\left( #1 \right)}
\newcommand{\quadre}[1]{\left[ #1 \right]}
\begin{document}

\includepdf[pages=1-2]{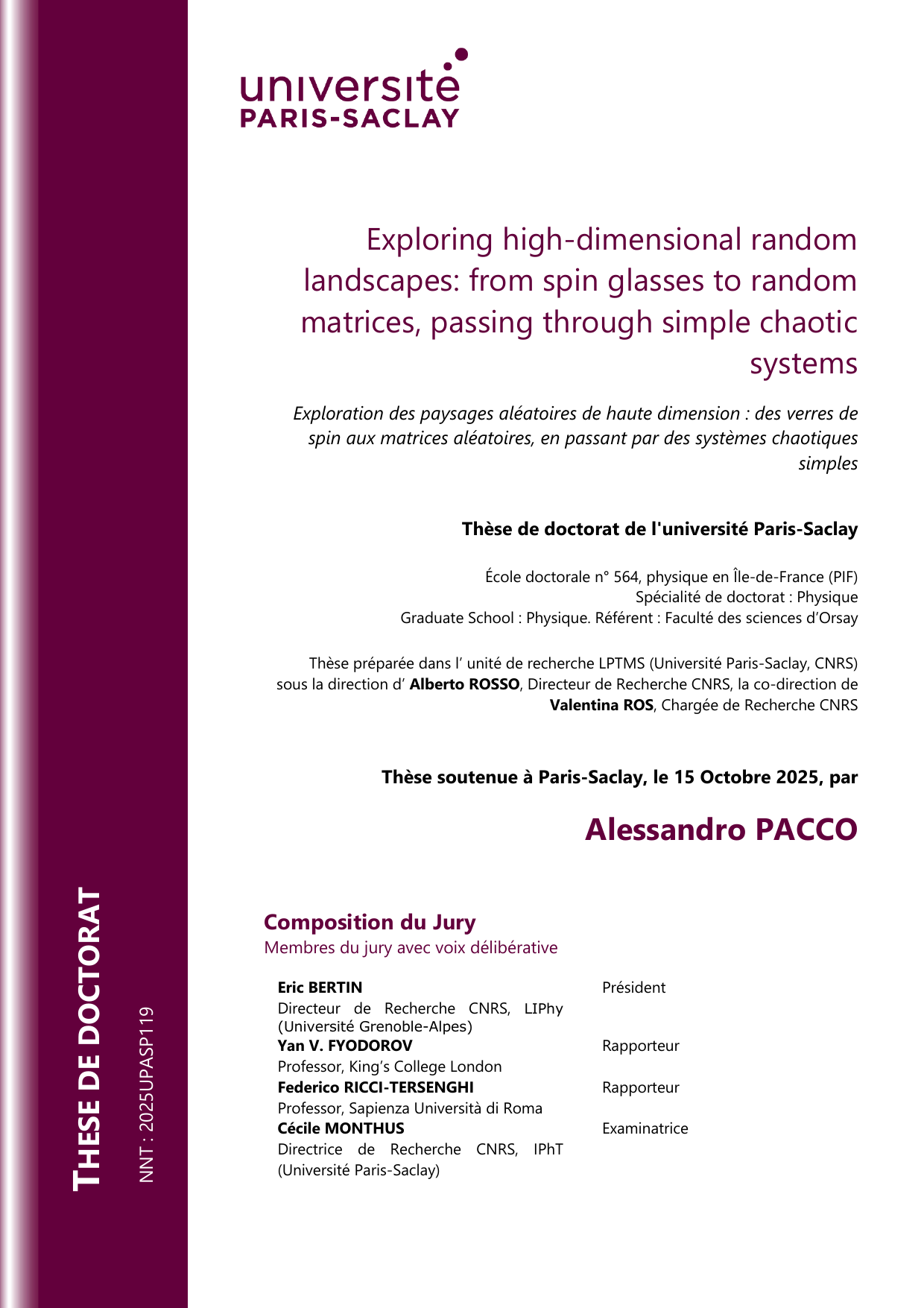}

\tableofcontents

\chapter*{Aknowledgements}
\addcontentsline{toc}{chapter}{Aknowledgements}
\markboth{}{} 
Writing a thesis is like leaping into the unknown—you have no idea what awaits you at the bottom. If you did, it wouldn’t be called research. For this reason, it is essential to have a guide: a mentor, a reference point, someone who gives you wings to land gently instead of facing an almost certain crash.

Having someone who listens to you—and to whom you must also listen—should be a non-negotiable part of a PhD. For me, those people are Alberto and Valentina. I am more grateful to them than to anyone else, for their constant support, insightful feedback, dedication, and curiosity.

A thesis, however, is not just an individual effort—it is built on the strength of a community of students and researchers. First and foremost, I want to thank my collaborators: Giulio, Alessia, Samantha, and Pierfrancesco. Working with Samantha and Pierfrancesco over these last few months has been both productive and fun. I have learned so much from them, and their knowledge has shaped many pages of this manuscript. 

I would also like to thank Yan Fyodorov and Federico Ricci-Tersenghi for their careful reading of this manuscript and for their valuable advice, both concerning the improvement of the present thesis and possible directions for future research. I am also grateful to Cécile Monthus and Eric Bertin for their insightful comments and constructive discussions, which have highlighted very interesting perspectives for future work.

My warm thanks also go to all the members of the LPTMS. In particular, Claudine, Delphine, and Véronique deserve special recognition for their tireless help in navigating French bureaucracy. For their guidance and mentorship, I am especially grateful to Satya, Christophe, Martin, and Raoul. I also want to thank all the past and present PhD students and postdocs—Charbel, Ivan, Thomas, Marco, Federico G., Romain, Giuseppe, Pawat, Maximilien, Oscar, Gabriele, Benoit, Pietro, Lukas, Giovanni, Dimitrios, Vincent, Andrey, Luca, Alberto, Florent, Gianluca, Alice, Vincenzo, Sapt, Lara, Federico L., Ana.

A thesis would also be impossible without family—the people who listen patiently to both your complaints and your achievements. I am deeply thankful to my parents Alberto and Maria, for preparing the best \textit{pot de thèse} ever. I also thank my grandparents, Ivana, Salvatore, Fulvio, Fernanda and my closest relatives, Cinzia, Paola, Silvia, Fabio, Jacopo, Daniele, Simone, Cristian, Giulia, for always being there for me. 

To my sister Elena, thank you for all the laughter we have shared over these years - it has been truly precious. And the most special thanks goes to my wife Sara, whose emotional support and advice have been the most precious part throughout this journey.

Friends have been equally essential, especially on those gray and rainy Parisian November days. Thank you to Ayoub, Carlos, Yazmina, Emanuele, Sébastien, Irène, Alexis, Mo, Enel, and Jeanne. I am especially grateful to Matteo and Megi, who have felt like a brother and sister, and to Marco, for sharing this adventure and for always making lab days brighter (you know how).

Finally, a heartfelt thanks to those friends who remain close despite the distance\\—Michele R., Giacomo, Alberto, Michele D., Nicola, Camille, Raymond, Matteo, Paolo, Cossu, Alex, Guido. Your presence, even from afar, has always been felt.

\cleardoublepage

\chapter*{List of Publications}
\addcontentsline{toc}{chapter}{List of Publications}
The present thesis is based on the following articles:
\begin{itemize}
\item \cite{pacco2024curvature} Alessandro Pacco, Giulio Biroli, and Valentina Ros. Curvature-driven pathways interpolating between stationary points: the case of the pure spherical 3-spin model.
Journal of Physics A: Mathematical and Theoretical, 57(7):07LT01, 2024\\
\item \cite{paccoros} Alessandro Pacco and Valentina Ros. Overlaps between eigenvectors of spiked, correlated random matrices: From matrix principal component analysis to random gaussian landscapes. Phys. Rev. E, 108:024145, Aug 2023.\\
\item \cite{pacco_triplets_2025} Alessandro Pacco, Alberto Rosso, Valentina Ros. Triplets of local minima in a high-dimensional random landscape: correlations, clustering, and memoryless activated jumps. J. Stat. Mech. (2025) 033302.\\
\item \cite{us_non_reciprocal_2025} Samantha J. Fournier*, Alessandro Pacco*, Valentina Ros and Pierfrancesco Urbani. Non-reciprocal interactions and high-dimensional chaos: comparing dynamics and statistics of equilibria in a solvable model. Arxiv  2503.20908\\
\item \cite{pacco_scs_2025} Alessandro Pacco et al. Landscape analysis of random neural networks with excitatory
interactions: dynamics, complexity, marginal states and topology trivializations. \textit{In preparation}

\item \cite{pacco_quenched_triplets_2025} Alessandro Pacco, Alberto Rosso, Valentina Ros. Three-point complexity of the pure spherical $p$-spin model: the quenched calculation. \textit{In preparation}
\end{itemize}

\chapter*{Thesis summary - english}
\addcontentsline{toc}{chapter}{Thesis summary - english}
The goal of this thesis is to advance our understanding of high-dimensional random landscapes. These have become paradigmatic in the description of a large variety of complex phenomena, where the collective interaction of many units gives rise to interesting macroscopic behaviors. These landscapes find broad applications, including in the physics of glasses, theoretical neuroscience, machine learning, economics and theoretical ecology. Therefore, understanding the simplest toy models is a fundamental building block to achieve a complete theory of the phenomenology associated to high-dimensional random landscapes. This thesis is therefore about deepening our understanding of such simple models, that already show a rich and broad phenomenology. \\ 

In the first Chapter we introduce the physics of high-dimensional random landscapes, and of systems with non-reciprocal interactions. This is done by first introducing the notorious Kac-Rice formula, through a bird-eye-view of the main models to which it has been applied, notably random manifolds, spin glasses, complex interacting ecosystems, random neural networks, problems of inference, optimization and machine learning. We then review in quite detail a prototypical glassy model, i.e. the pure spherical $p-$spin. We conclude the first Chapter by summarizing all of our main contributions.\\

In the second Chapter we investigate a class of simple models with non-reciprocal interactions, by comparing thoroughly the solution of the dynamics with the distribution of equilibria. We show that these models are exactly solvable, thus providing the perfect playground to investigate the relationship between the landscape and the dynamics. \\

In the third Chapter we continue on the same venue of the previous one, by thoroughly analyzing the notorious "SCS" model of randomly interacting neurons. We propose a non-linear activation function that allows us to classify the number of equlibira in terms of their stability, for any degree of non-reciprocity. As above, we compare the dynamical with the statical results, showing where they agree, and where they don't. \\

In the fourth Chapter we explore the landscape of the pure spherical $p$-spin model. This model has been studied extensively in the past few decades. We expand upon those works, by proposing two approaches to probe the barriers and the distribution of fixed points in the deep part of the landscape, which is dominated by exponentially many minima.\\

In the fifth Chapter we study a spiked matrix problem that arises in the context of determining the barriers between local minima in Chapter 4. The goal is to compute the overlaps between eigenvectors of spiked, correlated GOE random matrices.


\chapter*{Résumé de la thèse – français}
\addcontentsline{toc}{chapter}{Résumé de la thèse – français}

L’objectif de cette thèse est de faire progresser notre compréhension des paysages aléatoires de haute dimension. Ces systèmes sont devenus paradigmatiques dans la description d’un large éventail de phénomènes complexes, où l’interaction collective d’un grand nombre d’unités engendre des comportements macroscopiques riches et non triviaux. Ces paysages trouvent des applications variées, notamment en physique des verres, en neurosciences théoriques, en apprentissage automatique, en économie et en écologie théorique. La compréhension des modèles jouets les plus simples constitue ainsi une étape fondamentale vers une théorie complète de la phénoménologie associée aux paysages aléatoires de haute dimension. Cette thèse vise donc à approfondir l’étude de tels modèles simples, qui présentent déjà une phénoménologie particulièrement riche et diversifiée. \\

Dans le premier chapitre, nous introduisons la physique des paysages aléatoires de haute dimension ainsi que celle des systèmes à interactions non réciproques. Nous présentons tout d’abord la célèbre formule de Kac-Rice, à travers une vue d’ensemble des principaux modèles auxquels elle a été appliquée : variétés aléatoires, verres de spins, écosystèmes complexes, réseaux neuronaux aléatoires, ainsi que divers problèmes d’inférence, d’optimisation et d’apprentissage automatique. Nous passons ensuite en revue, de manière détaillée, un modèle prototypique : le modèle sphérique pur à $p$-spins. Le chapitre se conclut par un résumé de nos principales contributions. \\

Le deuxième chapitre est consacré à l’étude d’une classe de modèles simples à interactions non réciproques, pour lesquels nous comparons de manière approfondie la solution dynamique à la distribution des équilibres. Nous démontrons que ces modèles sont exactement solubles, constituant ainsi un terrain idéal pour explorer la relation entre la structure du paysage et la dynamique. \\

Dans le troisième chapitre, nous poursuivons cette même approche en analysant en détail le célèbre modèle "SCS" de neurones en interaction aléatoire. Nous proposons une fonction d’activation non linéaire permettant de classifier le nombre d’équilibres en fonction de leur stabilité, pour tout degré de non-réciprocité. Comme précédemment, nous comparons les résultats dynamiques et statiques, en mettant en évidence leurs points d’accord et de divergence. \\

Le quatrième chapitre est dédié à l’exploration du paysage du modèle sphérique pur à $p$-spins, un modèle abondamment étudié au cours des dernières décennies. Nous étendons ces travaux en introduisant deux approches distinctes permettant de sonder les barrières énergétiques et la distribution des points fixes dans les régions profondes du paysage, dominées par un nombre exponentiel de minima locaux. \\

Enfin, dans le cinquième chapitre, nous étudions un problème de matrice aléatoire déformée (« spiked matrix problem »), qui émerge dans le cadre de l’analyse des barrières entre minima locaux présentée au chapitre 4. L’objectif est de calculer le produit scalaire entre vecteurs propres de matrices aléatoires GOE corrélées et déformées.

\chapter{Introduction}
\label{chapter:intro}
This Chapter serves as an accessible introduction to the physics of high-dimensional random landscapes. In Sec.~\ref{sec:intro_motivation} we present a high-level motivation to the field of random landscapes and give some historical accounts on its origins, points of interest, and recent developments. In Sec.~\ref{sec:topo_complexity} we give more insight into the main object of study of this thesis, the complexity, and introduce the main tools for its computation. In Sec.~\ref{sec:p_spin_model} we develop the theory of the simplest high-dimensional random energy landscape, the \textit{pure spherical p-spin model}, and we comment on mixed models, which are generalizations of it. Finally in Sec.~\ref{sec:sumamry_contributions} we resume our main contributions of the manuscript and point out open problems.

\section{Motivation: counting equilibria}
\label{sec:intro_motivation}
The common object that brings together all the topics treated in this thesis is the Kac-Rice formula. We shall therefore devote this first section to it. In short, the Kac-Rice formula is a way of counting the number of solutions to a system of $N\in\mathbb{N}$ equations of the variable ${\bf x}\in\mathcal{M}$, with $\mathcal{M}\subset\mathbb{R}^N$ a manifold. The first appearance dates back to Marc Kac in 1943 \cite{Kac_roots_43}, where he introduced a way to count the expected number of real roots $E_N$ for a polynomial of the form:
\begin{align}
p_N(x):=\sum_{k=0}^Na_kx^k,\quad a_k\sim\mathcal{N}(0,1),
\end{align}
with $\mathcal{N}(\mu,\sigma)$ denoting the Gaussian distribution of mean $\mu$ and variance $\sigma^2$. More recent work \cite{Edelman_roots_95} gives the asymptotic expression of $E_N$ up to order $1/N^2$:
\begin{align}
    E_N=\frac{2}{\pi}\log(N) + C +\frac{2}{N\pi}+\mathcal{O}(1/N^2),\quad \quad C=0.6257358072...
\end{align}
which is already very accurate for small values of $N$, as we can see from Fig.~\ref{fig:roots_poly}.

\begin{figure}[t!]
\centering
\includegraphics[width=0.7\textwidth]{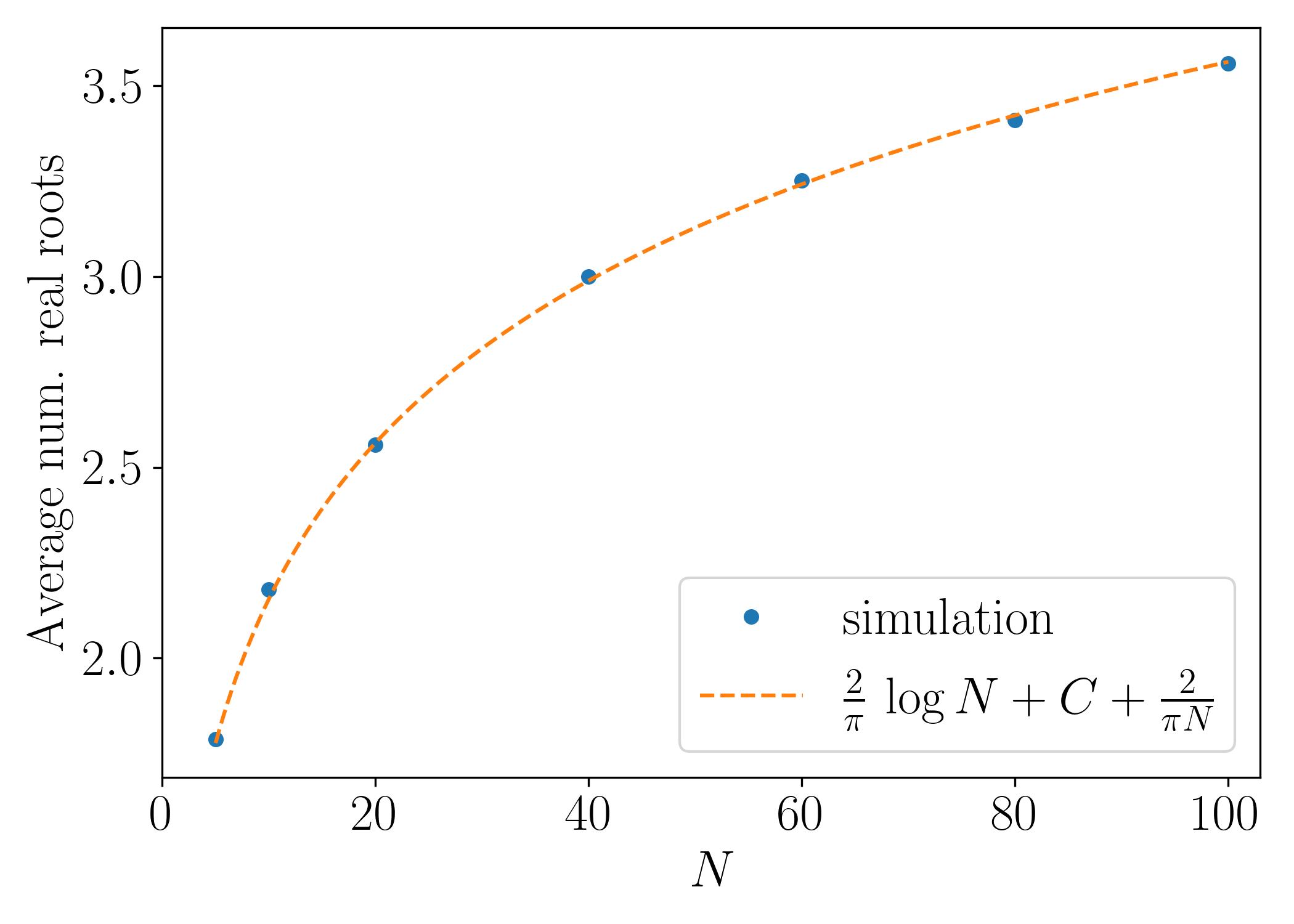}\\
\caption{Comparison of $E_N$ with numerical simulations of the number of real roots of random polynomials $p_N(x)$. For each $N$ we sample $4000$ different polynomials and average their number of real roots (found with Python's method \textit{numpy.root}).}
\label{fig:roots_poly}
\end{figure}

\noindent According to \cite{brezin_kac_2022}, the name "Kac-Rice" has been introduced only later in Ref.\cite{Cramer_stationary_67}, after works of Stephen O. Rice \cite{Rice_44} on the mean number of zeros of random Gaussian function of a single variable. The Kac-Rice formula is the subject of intense work in the mathematics community \cite{brezin_kac_2022, random_fields_adler}, and rigorous proofs often require many pages of hard work. However, for the present analyses we are interested in the theoretical physics approach, where the Kac-Rice formula is used mainly to count stationary points of high-dimensional (often Gaussian) random fields; see also these reviews \cite{ros2022high, Fyod_highd_rmt_2013, ros_lecture_2025}. In particular, consider random functions of the following form: 
\begin{align}
    {\bf F}=(F_1,\ldots,F_N),\quad\quad F_i:\mathbb{R}^N\to \mathbb{R},
\end{align}
then we are interested in counting the number of solutions ${\bf x}\in\mathcal{M}\subseteq\mathbb{R}^N$ to ${\bf F}({\bf x})={\bf 0}$ for $\mathcal{M}$ a manifold. If we denote by $D$ the set of solutions and by $\mathcal{N}_N:=|D|$ its number, then we can write:\footnote{here we are tacitly assuming that the set of solutions of the system of equations consists of isolated points in $\mathbb{R}^N$. This is usually the case for the Gaussian random systems considered in the following, but the presence of symmetries may invalidate such assumption; see for example in the case of neural networks \cite{berfin_geometry_2021, martinelli_flat_channels_2025}.}
\begin{align}
\begin{split}
\mathcal{N}_N&=\sum_{{\bf x}_\alpha\in D}1=\sum_{{\bf x}_\alpha\in D}\int_{\mathcal{M}}d{\bf x}\,\delta({\bf x}-{\bf x}_\alpha)=\int_{\mathcal{M}}d{\bf x}\,\sum_{{\bf x}_\alpha\in D}\delta({\bf x}-{\bf x}_\alpha)\\
&=\int_{\mathcal{M}}d{\bf x}\,\delta({\bf F}({\bf x}))\,|\det J({\bf x})|
\end{split}
\end{align}
where we used the Dirac delta's property $\delta({\bf F}({\bf x}))\,|\det J({\bf x})|=\sum_{{\bf x}_\alpha\in D}\delta({\bf x}-{\bf x}_\alpha)$ with $[J({\bf x})]_{ij}=\partial_iF_j({\bf x})$ the Jacobian. The name "Kac-Rice" is commonly used as a reference to the formula for the first moment $\mathbb{E}[\mathcal{N}_N]$ (with $\mathbb{E}$ the average over the randomness), which can be expressed as:
\begin{align}
\label{eq:def_annealed_kr_intro}
\begin{split}
\mathbb{E}[\mathcal{N}_N]&=\int_{\mathcal{M}}d{\bf x}\,\mathbb{E}[\delta({\bf F}({\bf x}))]\,\,\mathbb{E}\left[|\det J({\bf x})| \bigg| f_i({\bf x})=0,\forall i \right]\\
&=\int_{\mathcal{M}}d{\bf x}\,P_{{\bf F}({\bf x})}({\bf 0})\,\,\mathbb{E}\left[|\det J({\bf x})| \bigg| f_i({\bf x})=0,\forall i \right]\\
\end{split}
\end{align}
with $P_{{\bf F}({\bf x})}({\bf 0})$ the joint density of the variables $F_i({\bf x})$ evaluated at 0, and $\mathbb{E}[\cdot|\cdot]$ a conditional expectation value. \\

\noindent A commonly studied object in the physics and mathematics literature is the \textit{annealed topological complexity} (or simply annealed complexity), defined as 
\begin{align}
\label{eq:motivation_ann_kac}
    \Sigma_A:=\lim_{N\to\infty}\frac{\log \mathbb{E}[\mathcal{N}_N]}{N}
\end{align}
where for $\log$ is intended the natural logarithm. This is often compared with the so called \textit{quenched complexity}, that we will introduce in Sec.~\ref{sec:topo_complexity}. Physicists are interested in what are normally referred to as \textit{energy landscapes}, which are characterized by an energy (or Hamiltonian) $\mathcal{E}$ that is often modeled as a disordered potential or as the energy resulting from the (random) interaction of many units (or spins) in high-dimensions. In the study of Gradient Descent or Langevin Dynamics, the gradient $\nabla \mathcal{E}$ represents a fundamental object of interest, since minima of the energy are basins of attraction of the dynamics. The Kac-Rice formula is then important as a tool to count and classify such attractors of the dynamics, by choosing ${\bf F}=\nabla \mathcal{E}$ in Eq.~\eqref{eq:def_annealed_kr_intro}. Those points where $\nabla \mathcal{E}({\bf x})={\bf 0}$ are referred to as \textit{stationary} (or \textit{fixed}) points, or sometimes as \textit{equilibria}. In particular, a positive value of $\Sigma_A$ indicates that the \textit{mean} value of the number of stationary points grows exponentially with $N$ as $\mathbb{E}[\mathcal{N}_N]\sim e^{N\Sigma_A}$. The \textit{mean} value may differ from the \textit{typical} value of $\mathcal{N}_N$ when $N$ is large (see Sec.~\ref{sec:topo_complexity}), however many of the systems we will introduce below and study in the rest of the manuscript possess an exponential, both mean and typical, number of stationary points. When counting stationary points of an energy landscape, the Jacobian $J$ presented above simply reduces to the \textit{Hessian} at that point, which encodes for the stability of the point, making it either a minimum, a saddle or a maximum. In most situations one can introduce an index (say $\lambda$) in the Kac-Rice formula, used to classify the various stationary points of the energy landscape by their instability, that is, by the number of negative eigenvalues of the Hessian. Counting and classifying the stationary points of an energy landscape is an interesting mathematical question for its own sake, but with important applications in several fields of research such as glasses and spin glasses, machine learning and deep learning, inference and optimization, economics, neuroscience, ecology and condensed matter theory. In short, every system whose dynamics can be seen as an optimization or motion in an energy (or cost or loss or fitness) landscape is a subject of interest and study for the Kac-Rice formula. Although the complexity is not the only interesting quantity about an energy landscape, it surely is an important indicator of its \textit{ruggedness} (or \textit{glassiness}), that is, of its non-convexity. This has important consequences for optimization algorithms, since finding the global minimum can be hard in a rugged landscape, but easy in a convex one. \\

\noindent Below we give a bird-eye-view of the field of random landscapes, giving particular attention to the uses of the Kac-Rice formula in statistical physics, from its first uses to its most recent developments, expanding on the ideas given above. Hopefully, the list of references is representative, although certainly not exhaustive given the breadth of these fields of research. \\

\subsection{The Kac-Rice for energy landscapes}
\noindent \textbf{Spin glasses.} We have already mentioned that the first appearance of formulas for counting zeros of random functions dates back to the '40s, by seminal works of Kac and Rice \cite{Kac_roots_43, Rice_44, Kac_roots_48}. It seems that the first appearance of the "Rice formula" in the context of physics dates back to 1956 in Ref.~\cite{Higgins}, where the author studies the statistical properties of a random, moving Gaussian surface, as a model of ocean waves. However the Kac-Rice formula has gained a lot of popularity in the statistical physics community, and in particular in spin glass theory, where it has become the paradigmatic tool to count minima of high-d energy landscapes. The first appearance of the (configurational) \textit{complexity} was due to Bray and Moore \cite{bray1980metastable}, to count the number of stable solutions of the TAP free energy of the Sherrington-Kirkpatrick (SK) model \cite{SK75, TAP_77} at fixed values of the free energy and of the temperature. A problem that was noticed in those earlier works concerns the absolute value of the determinant present in Eq.~\eqref{eq:def_annealed_kr_intro}; indeed without it, the expression provides a topological invariant called Euler characteristics from Morse theory \cite{random_fields_adler}. Despite this, in Ref.~\cite{bray1980metastable} the authors were able to overcome this issue by noting that in some models an omission of the modulus may be legit in the large $N$ limit at low energies, where the landscape is expected to be dominated by local minima. The determinant was then computed by integral representations. 
The first time that the complexity appeared as a property of the topology of the landscape (and thus, temperature independent) was in a paper by Cavagna, Giardina, Parisi \cite{cavagna1998stationary}. This paper considers the landscape of the simplest prototypical Gaussian landscape, the \textit{pure spherical p-spin model} \cite{Rem_1, gross1984simplest, Thirum_pspin_87, crisanti1992sphericalp, Crisanti_TAP_pspin_95}, putting on firmer grounds the definition of complexity. The determinant is still computed by means of integral representations, but the authors remark that the Hessian spectrum can be computed from random matrix theory (RMT). In particular, their approach allows to characterize local minima but also saddles with a finite number of negative eigenvalues, thus improving previous works \cite{Crisanti_TAP_pspin_95, Rieger_TAP_92, kurchan_barriers_93}. The picture emerging from \cite{cavagna1998stationary} is that below a certain \textit{threshold} energy the total complexity actually corresponds to the complexity of local minima, while saddle points with a \textit{finite} number of negative modes in the Hessian still have a positive (but lower) complexity. Since the number of stationary points counted grows exponentially with the complexity, this explains why below the threshold the complexity is dominated by the local minima. Instead, above the threshold, the landscape is dominated by saddles with an extensive number of negative modes \footnote{that is, $\sim\mathcal{O}(N)$} in their Hessian. In another article \cite{cavagna1999quenched} Cavagna and collaborators also compute the quenched complexity of the $p$-spin model with external magnetic field. In \cite{cavagna1997investigation} the \textit{constrained complexity} (or two-point complexity) is introduced, a tool that allows to obtain the structure of local minima of the landscape, by computing the distribution of local minima at overlap $q$ (a measure of distance that we will explore later) from a reference one. This followed previous works in which the same question was answered, but among equilibrium configurations at a given temperature \cite{franz1995recipes, Barrat_bifurcation_95, franz1998effective, cavagna1997structure}. Two decades later, the constrained complexity is still the object of numerous works \cite{pacco_triplets_2025, pacco2024curvature,ros2019complexity, ros2020distribution, ros2021dynamical, kent2024arrangement}, and we will talk about it extensively in Chapter~\ref{chapter:energy_landscapes}, when we discuss the three-point complexity \cite{pacco_triplets_2025}. \\

\noindent Pure models are the simplest instances of a class of \textit{mixed} spherical models \cite{MixedModelNieuwenhuizen95,Barrat_bifurcation_95, Leuzzi_mixed_2004, Leuzzi_mixed_2006, benarous_mixed_2019, folena2020rethinking}, which can be seen as sums of pure $p$-spin models with different $p$'s. In the physics community, the study of the energy landscape of mixed models has gained recent attention ~\cite{Kent_Dobias_typical_2023, kent2024arrangement, kent_rsb_2023}, since the publication of Ref.~\cite{folena2020rethinking}, where the authors compare the dynamics of these models with the complexity of the energy landscape, finding striking differences from pure models. \\

\noindent \textbf{Random manifolds.} While the links between random energy landscapes and RMT were already pointed out in \cite{cavagna1998stationary}, it was Fyodorov \cite{Fyod2004} to first cast the computation of the complexity for a certain class of random Gaussian landscapes into a RMT problem. The random matrix is, clearly, the Jacobian (or Hessian for energy landscapes) in Eq.~\eqref{eq:def_annealed_kr_intro}. The novel approach consists in evaluating directly the expected value of the determinant by leveraging the probability distribution of the random matrix, without using any integral representations. This usually proves simpler in the annealed case, since the conditioning does not affect the distribution. The model studied in \cite{Fyod2004} consists of a particle moving in a quadratically confined potential with Gaussian disorder, through the energy function:
\begin{align}
\label{eq:def_d=0_random_mfd}
    \mathcal{E}({\bf x})=\frac{\mu}{2}{\bf x}^2+V({\bf x}) ,\quad{\bf x}\in\mathbb{R}^N,\quad\,\mu>0
\end{align}
with $V$ a zero-mean isotropic Gaussian potential with correlations that depend only on the (Euclidean) distance:
\begin{align}
    \mathbb{E}[V({\bf x}_1)V({\bf x}_2)]=Nf\left( \frac{1}{2N}||{\bf x}_1-{\bf x}_2||^2\right),\quad f''(0)>0.
\end{align}
The result for the annealed complexity of stationary points as $N\to\infty$ reads:\footnote{"A" stands for annealed}
\begin{align}
    \Sigma_A(\mu)=H(\mu-J)\left[\frac{1}{2}\left(\frac{\mu^2}{J^2}-1 \right)-\ln(\mu/J)\right],\quad J=\sqrt{f''(0)},
\end{align}
with $H$ the Heaviside step function. This result indicates that the landscape has an exponential number of stationary points for $\mu<\mu_c\equiv J$ (it is said to be \textit{glassy} or \textit{hard}) and it is said to be \textit{easy} for $\mu\geq J$ (that is, the number of stationary points is sub-exponential since the complexity is zero). This transition is commonly referred to as \textit{topology trivialization} \cite{Fyod_simplest_optimization_2013, Fyodorov_2016}, and it will be important when we discuss the results of Chapters \ref{chapter:non_reciprocal} and \ref{chapter:scs}. This analysis has been extended to count stationary points with fixed energy and index (that is, number of negative Hessian eigenvalues) in the case $\mu=0$ \cite{Bray_large_dim_2007}. These results were then improved in \cite{Fyodorov_Williams_2007}, where the authors compute the complexity of stationary points with a given index for $\mu>0$. Moreover, when $f$ is such that it decays to zero for large arguments, they identify the vanishing of the complexity with the instability of the replica symmetric (RS) solution of the associated statistical mechanical problem. In a subsequent work \cite{Fyod_nadal_2012} the scaling and behavior of the annealed complexity of minima close to the trivialization transition is also considered, by leveraging results on the large deviations of the largest eigenvalue of GOE (Gaussian Orthogonal Ensemble) matrices \cite{Tracy_Widom_1996, Borot_max_eval_2011, Nadal_right_tail_TW_2012}. Notice that the statistics of extrema can also be studied in the 1-dimensional case (i.e. $N=1$) \cite{Monthus_1d_landscape_2003}. \\

\noindent Let us stress that the RMT approach has now become very popular in the statistical physics community, as it allows one to extract information on isolated eigenvalues that appear when constraining two local minima to be at fixed overlap \cite{ros2019complex, ros2019complexity, pacco2024curvature}. A particularly important recent work that merges calculations of the complexity of energy landscapes with RMT is Ref.~\cite{ros2019complex}. The model they study is the energy landscape of the pure spherical $p$-spin model plus a term that favors all configurations that are
close to a given one \cite{Gillin_Sherrington_2000}, a model that they name "generalized spiked tensor model", since it has as a special case the spiked-tensor model studied in the context of inference and high-dimensional statistics \cite{Montanari_tensor_pca_2014, Montanari_landscape_spiked_2018}.\\

\noindent The landscape defined in Eq.~\eqref{eq:def_d=0_random_mfd} falls into the category of \textit{random elastic manifolds} \cite{Rosso_review_2021}. The most general Hamiltonian describing an elastic manifold in a random potential can be written as \cite{ros2022high, Cugliandolo_mfd_out_eq_1996, Mezard_Parisi_mfd_1991}:
\begin{align}
&\mathcal{E}[{\bf u}]=\int_\mathcal{M} d{\bf x}\left[V({\bf u}({\bf x}),{\bf x})+\frac{\kappa}{2}(\nabla {\bf u}({\bf x}))^2+\frac{\mu}{2}{\bf u}({\bf x})^2\right]\\
&\mathbb{E}[V({\bf u},{\bf x})]=0,\quad\mathbb{E}[V({\bf u},{\bf x})V({\bf u}',{\bf x}')]=N\delta({\bf x}-{\bf x}')f\left(\frac{||{\bf u}-{\bf u}'||^2}{N}\right)
\end{align}
where $\mathcal{M}\subset \mathbb{R}^d$, ${\bf u}({\bf x})\in\mathbb{R}^N$. Physically, this represents the generalization to arbitrary dimensions of an elastic line in a random potential, where the function ${\bf u}$ represents the displacement field at every position in space. The term multiplying $\kappa$ represents an elastic energy, the $\mu$ term is a confining potential and $V$ is the quenched disorder, modeled by a random potential energy which couples directly to ${\bf u}$. The case $N\to\infty $ with $d=0$ corresponds exactly to \eqref{eq:def_d=0_random_mfd}, which can be seen as a "toy model" of a single particle in a random potential, thus without elastic interaction. The annealed complexity 
of this model for $d=N=1$ was computed in \cite{fyod_doussal_texier_mfd_2018}, and for $d\leq4$ with $N\to\infty$ in \cite{fyod_doussal_mfd_2020}. In \cite{fyod_doussal_texier_mfd_2018} the authors consider, moreover, as an application of their results the depinning transition in the presence of a uniform applied force to the manifold. The critical value of the depinning threshold is identified as the force beyond which no metastable state survives. This can therefore be put in the framework of topology trivialization phenomena. However, the authors of \cite{fyod_doussal_texier_mfd_2018} show that the critical force predicted from the annealed computation is only an upper bound to the value inferred from numerical simulations. \\


\noindent Let us finally mention other directions where the Kac-Rice formula has been employed to study energy landscapes in recent years. In \cite{Lacroix_Fyodorov_Fedeli_2022} the authors compute the annealed complexity for a new class of models of the form in Eq.~\eqref{eq:def_d=0_random_mfd} where $V$ is made from a superposition of random plane waves: the model does not present a topology trivialization transition for any finite value of the strength of confinement but decreases rapidly as it is increased. The complexity of models of superposition of random plane waves was then extended in \cite{lacroix_fyod_waves_2024}. In \cite{kent_complex_complex_2021} a generalization of the complexity of spherical spin glasses is given in the context of complex variables, and in \cite{fyod_non_herm_2025} the Kac-Rice is used as an alternative to the Hermitization method to compute statistics of eigenvalues and eigenvectors of non-Hermitian random matrices. Let us mention also that the theory of complexity of random energy landscapes developed in the statistical physics community has gained a lot of recent interest in the mathematics community, with extensive and frequently successful efforts to put many of these results on rigorous grounds \cite{auffinger_complexity_2013, Auffinger_Arous_Černý_spin_2012, subag2017complexity, benarous_mfd_2024, Auffinger_saddles_2020, benarous_detemrinants_2022, Subag_Zeitouni_2021}. \\

\noindent \textbf{Optimization, inference, machine learning.}
Above, we have discussed the use of the Kac-Rice formula to count stationary points of energy landscapes, and we have mainly mentioned physics-inspired problems, such as spin glasses and random manifolds. However, the framework of energy landscapes is rather generic and includes any possible high-dimensional function that has to be optimized. In particular, these ideas are useful in the context of optimization, inference and, more recently, machine learning. The application of counting techniques to the "simplest random optimization" problem was done first in \cite{Fyod_simplest_optimization_2013}, where the authors study the topology trivialization of the number of equilibria in various scaling regimes the spherical $2-$spin model with magnetic field. In the context of neural networks instead, one can see the loss function as an energy landscape, with the data being drawn from a probability distribution, and thus corresponding to the quenched disorder. The problem is sometimes cast into the so-called "teacher-student" framework, where the student (neural network 2) has to learn the weights of the teacher (neural network 1). What makes the complexity calculation hard is that, even if the data is treated as Gaussian, the non-linearity usually adopted as an activation function in neural network renders the problem non-Gaussian. Recent attempts were made in the case of perceptron models \cite{tsironic_perceptron_kac_2025, maillard_landscape_2020}, and implicit formulas for the annealed complexity can be obtained. However, how to obtain explicit formulas and study the complexity numerically remains an open problem. We mention that advances have been made in simpler optimization problems, such as in constrained random least-square problems \cite{Fyodorov_least_sqaure_2022}, simple non-Gaussian problems \cite{kent_algo_marg_2024}, and in inference problems such as matrix and tensor spiked models \cite{mannelli_afraid_2019, mannelli_spiked_matrix_2019, ros2019complex, Montanari_landscape_spiked_2018}. To the best of our knowledge, a calculation of the complexity of, say, two layer fully connected networks with Gaussian data and non-linear activation functions is still missing (although it might appear soon \cite{montanari_erm_localmin_2025}).

\subsection{The Kac-Rice for non-gradient systems}
\label{subsec:kac_rice_non_grad}
\noindent Another important branch of statistical physics that has attracted a lot of attention, especially in neuroscience, ecology, biology, and economics, concerns complex systems that have non-reciprocal interactions or, more generally, whose driving force does not come from the gradient of an energy function. These are naturally occurring in nature, where various agents (neurons, cells, species, firms etc.) interact in an asymmetric way. In their simplest approximation, these systems are modeled as SDEs (Stochastic differential equations) of the form:
\begin{align}
\label{eq:intro_non-gradient}
    \frac{d{\bf x}}{dt}={\bf F}({\bf x})+{\bm\eta}, \quad {\bf x}\in\mathbb{R}^N,
\end{align}
where ${\bf F}$ is a (non-linear) Gaussian random field and ${\bm\eta}$ an external noise. When ${\bf F}$ can be written as the gradient of some energy function, we can speak of energy landscapes, since the force pushes the system towards minima of the energy function. However, even when ${\bf F}$ does not come from the gradient of an energy, the system may still possess many fixed points such that $d{\bf x}/dt=0$, but whether (or rather, when) these are actually attractors of the dynamics is debated \cite{wainrib2013topological, AnnibaleDynamics2024, HeliasFP2022, HuangSteadyState2024,fyodorov2016nonlinear, ben2021counting, Fyodorov_resilient_2021}. This is a question that we will tackle in Chapters \ref{chapter:non_reciprocal} and \ref{chapter:scs} by comparing explicitly the Kac-Rice complexity with a Dynamical Mean-Field Theory approach, see \cite{us_non_reciprocal_2025, pacco_scs_2025}.\\

\noindent One of the first works on large complex systems with random interactions dates back to the 1970s, from the seminal work by Robert May \cite{MAY_1972}, who wanted to show that, depending on the strength of (random) interaction among $N$ agents, one can have a transition from a stable regime to an unstable regime of the system. In general, this is the consequence of the interaction matrix being random and of the \textit{self-averaging} property of the spectrum. Indeed, (common) random matrix ensembles have spectra that converge to well-defined, bounded regions in the complex plane \cite{mehta_rmt, tao_circularlaw, potters_bouchaud_2020, sommers1988spectrum, Khoru_Sommers_non_herm_2009}. By changing a control parameter (such as the interaction strength among agents), one can shift the asymptotic values of the minimal (and maximal) eigenvalues, eventually crossing the instability threshold. \\

\noindent These large complex systems are generally referred to as \textit{non-gradient systems} or \textit{systems with non-reciprocal interactions} because it can be shown that, in general, they arise when the various units interact in an asymmetric way. The literature on these types of models is vast, and we shall only concentrate on those works where the Kac-Rice formula to count stationary points is used. To the best of our knowledge the first account on the use of the Kac-Rice formula to compute the complexity of stationary points of systems with asymmetric interactions is Ref.~\cite{wainrib2013topological}, where the authors analyze the model of random neural network proposed by Sompolinsky, Crisanti, Sommers in \cite{ChaosSompo88}, with a force ${\bf F}$ given by:\footnote{see also Chapter~\ref{chapter:scs}.}
\begin{align}
    {\bf F}({\bf x})=-{\bf x}+g\sum_{j}\tanh(x_j)J_{ij}, \quad J_{ij}\overset{iid}{\sim}\mathcal{N}(0,1/N).
\end{align}
The authors in \cite{wainrib2013topological} were not able to obtain an explicit formula for the complexity, but just an approximation, due to the difficulty in treating the associated random matrix problem as well as the non-linearity present in the model. However, the authors observe a topology trivialization of the annealed complexity as the interaction strength (i.e. $g$) crosses the specific value $g=1$, and they argue that the emergence of chaos (as observed in \cite{ChaosSompo88}) precisely corresponds to this transition in the explosion of unstable stationary points. \\

\noindent Shortly later, another work by Fyodorov and Khoruzhenko \cite{fyodorov2016nonlinear} considers an extension of May's work, by introducing a simple nonlinear model of large ecosystem, where the force is now made up of a sum of a conservative irrotational component (curl-free) and a curl field (divergence-free). Their model corresponds to Eq.~\eqref{eq:intro_non-gradient} with ${\bf F}({\bf x})=-\mu\,{\bf x}-\nabla V({\bf x})+\nabla\cdot{\bf A}({\bf x})/\sqrt{N}$ with $V$ a scalar and ${\bf A}$ a (antisymmetric) vector potential, both assumed to be independent Gaussian (zero-mean) random fields with a homogeneous and isotropic covariance structure. The authors were able to compute the mean of the total number of fixed points, and observe again a trivialization phenomenon: such systems exhibit a transition from a trivial
phase with a single stable equilibrium to one characterized by exponentially many. The analysis of \cite{fyodorov2016nonlinear} was extended in \cite{ben2021counting} to account for a statistical analysis of stability properties of equilibria. They count both the mean total number of stationary points and the total number of stable ones. Depending on the values of two control parameters, they obtain a regime of \textit{absolute stability}, where a single stable equilibrium is found; a regime of \textit{relative instability}, where there are exponentially many stable equilibria, which are, however, exponentially rare among all equilibria; and a regime of \textit{absolute instability}, with exponentially many unstable equilibria and exponentially rare stable ones. We have a similar portrait in Chapter~\ref{chapter:non_reciprocal} (in particular Sec.~\ref{sec:rnn_alpha>0}), where moreover we compare with dynamical results. Extensions of this work also include \cite{Fyodorov_resilient_2021, lacroix2022counting}.
\noindent Besides this, the analysis of both equilibria and dynamics of large complex ecosystems has drawn a lot of attention recently, especially in the context of Lotka-Volterra equations \cite{ros2023generalized, RosEcoQuenched2023, arnoulx2024many, BiroliLVMarginal2018}.\\

\noindent In the context of random neural networks, many works concentrate on the study of the dynamical properties \cite{ChaosSompo88, crisanti_path_2018, MastrogiuseppeLink2018, Sompo_transition_2015, Abbott_lyapunov_2023, HeliasMemory18}, but little work has been done on the Kac-Rice computation of equilibria; to the best of our knowledge only \cite{wainrib2013topological, HeliasFP2022} and our soon to be published work \cite{pacco_scs_2025}, presented in Chapter~\ref{chapter:scs}, where we count fixed points as a function of several order parameters, including the (extensive) instability index. Besides this, other works concentrate on simpler models, such as \cite{Fyodorov_2016}, where a non-gradient spherical model with an external magnetic field is considered (see also earlier works \cite{CugliandoloNonrelax97, crisanti1987dynamics}) and the annealed complexity is computed by incorporating the Lagrange multiplier into the Kac-Rice framework; see also subsequent works \cite{Kivimae2024, garcia2017numberequilibriagivennumber}. One of the main results of Ref.~\cite{Fyodorov_2016} is that also this model presents a topology trivialization transition as the variance of the (random) external field is increased, from a regime with an exponential number of stationary points to just two. In our recent works \cite{us_non_reciprocal_2025, pacco_scs_2025} we try to expand upon the aforementioned literature: we consider general scenarios, and make explicit comparisons with the corresponding solutions of the dynamical mean-field theory. An extension of \cite{Fyodorov_2016} accompanied by comparisons with the dynamics is done in Chapter~\ref{chapter:non_reciprocal}, while in Chapter~\ref{chapter:scs} we overcome the obstacles found in \cite{wainrib2013topological}, by using a specific realization of the non-linearity that allows to obtain explicit results as a function of the instability index for both the annealed and quenched (Replica Symmetric) complexities. \\




\section{Topological complexity of landscapes}
\label{sec:topo_complexity}
In the previous section we have already given a flavor on what the topological complexity represents, see Eq.~\ref{eq:motivation_ann_kac}. Here we shall be more precise, summarizing some concepts also found in recent reviews~\cite{ros2022high,ros_lecture_2025}. As we have already mentioned, what motivates us is a set of problems related to counting fixed points of random functions. We can identify this set of problems under the name of the \textit{landscape paradigm} \cite{ros2022high}. By this we mean to understand the properties of the dynamics of complex systems by looking at the underlying \textit{energy landscape} (or loss, cost, fitness, etc.) which has to be optimized according to some algorithm, the most common being gradient descent. As already anticipated, we shall also discuss
problems that go beyond the landscape paradigm, by studying fixed points of non-gradient complex systems.

\subsubsection{Average vs typical, annealed vs quenched, replicas}

\noindent One central object of analysis is the so-called topological complexity (or simply complexity for the present manuscript), which we have already introduced as a fundamental quantity to describe equilibria of dynamical systems. In order to compute the complexity, one first considers the number of solutions $\mathcal{N}_N(\Delta)$ to the (random) equation ${\bf F}({\bf x})={\bf 0}$ with ${\bf F}:\mathbb{R}^N\to \mathbb{R}^N$ such that ${\bf x}$ also satisfies a set of constraints encoded in $\Delta$ (including, for example, magnetization, energy and self-overlap\footnote{i.e. $q={\bf x}^2/N$}: $\Delta=\{m,\epsilon,q,\ldots\}$). It is also common to include a parameter, referred to as (extensive) instability index, that accounts for the fraction of negative modes of the Jacobian of the fixed points counted. Then, only after studying the solutions at fixed disorder, does one average over the randomness. However, in general, one has that $\mathcal{N}_N$ itself is not \textit{self-averaging} for large $N$, meaning that its typical value for large $N$ is not the average. In fact, it usually scales exponentially: $\mathcal{N}_N\sim e^{N\Sigma}$, with $\Sigma$ a \textit{self-averaging} random variable, which we define a posteriori as the complexity, and can be seen as an entropy of stationary points. By taking the $\log$ \footnote{natural logarithm}, the complexity is then defined as
\begin{align}
\label{eq:intro_quenched_compelxity_def}
    \Sigma(\Delta):=\lim_{N\to\infty}\frac{\mathbb{E}[\log\mathcal{N}_N(\Delta)]}{N}
\end{align}
From this we see that a positive complexity is related to an exponential abundance of stationary points that satisfy the imposed constraints encoded in $\Delta$. Although a random energy landscape need not have a positive complexity (see for instance the spherical $2$-spin model, cf. Chapter~\ref{chapter:rmt_} or Ref.~\cite{ros_lecture_2025}), the models considered in this manuscript all have a positive complexity for some values of the control parameters. \\

\noindent The quantity defined in \eqref{eq:intro_quenched_compelxity_def} is the \textit{quenched complexity} or simply the complexity, different from the \textit{annealed complexity}, introduced previously in \eqref{eq:motivation_ann_kac}, and which is obtained by interchanging the $\log$ with the average:
\begin{align}
\Sigma_A(\Delta):=\lim_{N\to\infty}\frac{\log\mathbb{E}[\mathcal{N}_N(\Delta)]}{N}.
\end{align}
In some situations the two computations can coincide \cite{Crisanti_TAP_pspin_95,cavagna1998stationary, ros2019complexity, subag2017complexity, pacco2024curvature, HeliasFP2022}, while in others they don't \cite{ros2019complex, cavagna1999quenched, pacco_triplets_2025, kent2024arrangement, RosEcoQuenched2023, us_non_reciprocal_2025}. Often they differ when the system is subject to an external field, or a preferred direction in configuration space that breaks isotropy.
Hence, in the most general case, it is the quenched complexity that controls the typical behavior of $\mathcal{N}_N$ for large $N$: $\mathcal{N}_N(\Delta)\sim e^{N\Sigma(\Delta)}$, which in general differs from the average behavior $\mathbb{E}[\mathcal{N}_N(\Delta)]\sim e^{N\Sigma_A(\Delta)}$. Due to the concavity of the logarithm, one has that the following inequality holds:
\begin{align}
\Sigma_A(\Delta)\geq \Sigma(\Delta),
\end{align}
meaning that the annealed complexity can be used as an upper bound on the true value of the complexity. This implies additionally that $\mathbb{E}[\mathcal{N}_N]$ is bigger or equal than the typical value, and (when it is bigger) it is dominated by rare realizations. A simple example to show that average and typical properties differ is given in Ref.~\cite{ros_lecture_2025}, page 31.\\

\noindent Let us now introduce the use of replicas as a non-rigorous way to compute the quenched complexity. Since in general we do not know how to compute Eq.~\eqref{eq:intro_quenched_compelxity_def} directly, we can use the following identity:
\begin{align}
\log\mathcal{N}_N=\lim_{n\to 0}\frac{[\mathcal{N}_N]^n-1}{n},
\end{align}
which applied to Eq.~\eqref{eq:intro_quenched_compelxity_def} gives:
\begin{align}
\Sigma(\Delta)=\lim_{N\to\infty}\lim_{n\to 0}\frac{\mathbb{E}[\mathcal{N}_N^n(\Delta)]-1}{nN}.
\end{align}
Let us now present the main steps of the replicated Kac-Rice method in an informal but intuitive way:\\

\noindent\textit{Step 1}\\
Treat $n$ as an integer and compute $\mathbb{E}[\mathcal{N}_N^n(\Delta)]$ via the Kac-Rice formula (or via the Boltzmann-Gibbs measure for thermodynamic quantities). By doing so we introduce $n$ replicas, one for each $\mathcal{N}_N$ that appears in the expectation:
    \begin{align}
        \mathbb{E}[\mathcal{N}_N^n(\Delta)]=\int \prod_{a=1}^n d{\bf x}^a \delta(\Delta({\bf x}^a))\,\mathbb{E}\left[\prod_a\delta({\bf F}({\bf x}^a))\right]\mathbb{E}\left[\prod_a|\det J({\bf x}^a)|\bigg|{\bf F}({\bf x}^a)={\bf 0},\forall a\right],
    \end{align}
where we used the law of total expectation to divide the expected value of the product of deltas and determinants. To do this we introduced a conditional expectation (denoted by $\mathbb{E}[\,\cdot\,|\,\cdot\,]$ ) of the product of determinants, conditioned on the fields being zero at the points ${\bf x}^a,\,\forall a=1,\ldots,n$. We also abusively denote by $\delta(\Delta({\bf x}))=\delta({\bf x}^2-Nq)\,\delta({\bf x}\cdot{\bf 1}-Nm)\cdots$ the product of Dirac deltas that constrain the system at a specific value of the order parameters of our choice. From here we can open up the delta functions using their Fourier representation. \\

\noindent\textit{Step 2}\\
Introduce a matrix of overlaps between the replicas: $\hat{Q}_{ab}={\bf x}^a\cdot{\bf x}^b/N$ (more overlaps might be needed depending on the problem at hand), and enforce them with delta functions:
\begin{align}
    1\propto\int \prod_{a<b}dQ_{ab}\,\delta(Q_{ab}-{\bf x}^a\cdot{\bf x}^b/N).
\end{align}
Then open up the delta functions using their Fourier representation with conjugate variables $\Lambda_{ab}$, and write the $n$-th moment as:
\begin{align}
    \mathbb{E}[\mathcal{N}_N^n(\Delta)]\propto \int \prod_{a<b}dQ_{ab}\,d\Lambda_{ab}\,e^{NS(\Delta,\hat{Q},\hat{\Lambda})}
\end{align}
where (for the computation of the complexity) we do not care about sub-exponential terms in $N$, since in the end we want to use a saddle point method. \\

\noindent\textit{Step 3}\\
This is the first of the two most heuristic steps of the method: we want to use a saddle point to reduce the integral to an optimization over the maximum of $S$, and to do this we make the assumption that we can exchange the $N$ and $n$ limits. Then, we assume a structure on the set of overlaps that maximize the exponential. Physicists have developed the theory of mean-field glassy systems by considering that the overlap matrix (and its conjugate) has a very particular structure, which can be either Replica Symmetric (RS) or Replica Symmetry Breaking (RSB), where multiple steps of RSB could be considered \cite{spin_glass_beyond_86}. Often (but not always \cite{kent_rsb_2023}), replicated Kac-Rice calculations are done within the RS ansatz since the calculations already prove to be quite complicated. For an example of 1RSB ansatz see Sec.~\ref{sec:p_spin_free_energy}, while for an RS example see Sec.~\ref{sec:rnn_topo_general}\\

\noindent\textit{Step 4}\\
The second heuristic step consists in using the saddle point method for $N\to \infty$ and fixed $n\in \mathbb{N}$, and then assuming that we can take an analytic continuation of $n\to 0$ (in particular, assuming that the two limits can be exchanged). By doing this, only the term proportional to $Nn$ in the exponential survives:
\begin{align}
\Sigma_A(\Delta)=\text{extr}_{\tilde{Q},\tilde{\Lambda}}S^{(1)}(\Delta,\tilde{Q},\tilde{\Lambda})
\end{align}
where $S^{(1)}$ is the term proportional to $n$ in the Taylor expansion of $S$ around $n=0$, and $\tilde{Q},\tilde{\Lambda}$ represent the set of parameters of the assumed ansatz.

\noindent On the historical side, let us just state that the use of the replica trick in the physics community has a long history (see Ref.~\cite{charobonneau_history_replica_2022} for historical accounts) that dates back to Marc Kac in 1968 \cite{charobonneau_history_replica_2022}. The method was then brought to glory by Giorgio Parisi's works \cite{parisi_order_para_spin_83, Parisi_sk_solution_1980, spin_glass_beyond_86}, when in the 80's he formulated the fullRSB solution to the Sherrington Kirkpatrick model introduced in 1975 \cite{SK75}. \\

\subsubsection{A note on the sign of the complexity}
When the complexity $\Sigma$ is positive, we clearly have that typically, for large $N$, there is an exponential number (in $N$) of fixed points. When the complexity vanishes, that is, $\Sigma=0$, we can only say that typically there is a sub-exponential number of fixed points. This means that, for large $N$, there could be a finite number (two, for example) as well as a number polynomial in $N$. If one is interested in the precise number, an exact calculation of $\mathbb{E}[\mathcal{N}]$ at finite $N$ should be done. When it becomes negative, i.e. $\Sigma<0$, then it means that only in exponentially rare realizations the number of points is bigger
than zero.

\subsubsection{Topological vs Geometrical properties}
The complexity is a topological property which tells us information on the number of fixed points and their stability. This is important since it provides us with criteria to identify \textit{dynamical transitions}, by telling us how far the system can go deep before encountering local minima. On the other side, it also helps us identify \textit{topology trivialization transitions} \cite{Fyod2004}, where the number of fixed points crosses from being exponential to sub-exponential in the system's size, as a control parameter is varied. With the same tools introduced above we can also analyze more geometrical properties of the landscape. These include, for example, the correlations between different stationary points, and the distribution of barriers. We will see in Chapter~\ref{chapter:energy_landscapes} that, similarly as above, we can compute a constrained complexity, by extracting first one fixed point, and then studying the distribution of local minima and saddles around it. In the same Chapter we will also use Kac-Rice approaches to study barriers along specific paths in phase space.

\section{The $p$-spin: a prototypical glassy model}
\label{sec:p_spin_model}
The goal of this section is to provide a "cookbook" of basic ingredients to analyze the simplest random energy landscape: the pure spherical $p-$spin model. This is a Gaussian model because the energy function $\mathcal{E}$ is a Gaussian random field, and it is mean-field because it has \textit{all-to-all} interactions with no underlying spatial structure. With $N$ we denote the (large) number of degrees of freedom (or spins), which are continuous vectors on the hypersphere of radius $\sqrt{N}$ in $\mathbb{R}^N$, denoted as $\mathcal{S}_N(\sqrt{N})$. More precisely this is an isotropic and homogeneous random landscape defined as follows: 
\begin{align}
\label{eq:chap1_def_tensor_S}
\mathcal{E}:\mathcal{S}_N(\sqrt{N})\to\mathbb{R},\quad\quad\mathbb{E}[\mathcal{E}({\bf x})]=0,\quad\mathbb{E}[\mathcal{E}({\bf x})\mathcal{E}({\bf y})]=\frac{N}{2}\left(\frac{{\bf x}\cdot{\bf y}}{N}\right)^p.
\end{align}
 This model for $p\geq 3$ has become the prototypical model of glassiness (in a way that will become clear later), since it retains the important phenomenology of glassy systems while being analytically tractable. A particular way to realize the landscape above is to introduce a symmetric (i.e. constant by permutation of the indices) Gaussian tensor $S_{i_1\ldots i_p}$ such that
\begin{align}
\mathbb{E}[S_{i_1\ldots i_p}]=1\quad\quad\mathbb{E}[S_{i_1\ldots i_p}S_{j_1\ldots j_p}]=\sum_{\pi\in G_p}\delta_{i_1j_{\pi(1)}}\ldots\delta_{i_p j_{\pi(p)}}
\end{align}
with $G_p$ the permutation group of $p$ elements, and to write 
\begin{align}
\mathcal{E}({\bf s})=\sqrt{\frac{1}{2p!\,N^{p-1}}}\sum_{i_1,\ldots ,i_p}S_{i_1\ldots i_p}x_{i_1}\cdots x_{i_p}.
\end{align}
In physics this model is generally defined with the tensor $S$ being zero whenever two indices are equal \cite{Castellani_2005}, in which case the formula \ref{eq:chap1_def_tensor_S} is only valid up to leading order in $N$.

From this representation of the model we see that each spin $x_i$ interacts with all other spins, and the model is therefore devoid of a spatial structure. This model is called "pure", to differentiate with "mixed" models, introduced in Sec.~\ref{sec:mixed_models}, where the covariance is a generic polynomial. The tensor $S$ is commonly referred to as "quenched disorder", because it represents the inherent disorder that creates the landscape, and it is quenched because we study properties of the landscape at fixed $S$, eventually taking an average in the end. Indeed, we expect that in the limit of large $N$, relevant observables become independent on the particular realization of $S$ (in the same way as we expect properties of materials to not depend on the particular impurities of each sample), and are thus \textit{self-averaging}. A natural question that one can ask is about the behavior of gradient descent (GD) as one tries to minimize this landscape, aiming for the global minima (or ground states). Thus, we are interested in studying:
\begin{align}
\frac{d{\bf x}}{dt}=-\lambda({\bf x}){\bf x}-\nabla \mathcal{E}({\bf x}),
\end{align}
where $\lambda$ is a Lagrange multiplier used to impose the spherical constraint:
\begin{align}
&0=\frac{1}{2}\frac{d{\bf x}^2}{dt}={\bf x}\frac{d{\bf x}}{dt}=-\lambda({\bf x}){\bf x}^2-{\bf x}\cdot \nabla \mathcal{E}({\bf x})\quad\Rightarrow \quad\lambda({\bf x})=-\frac{{\bf x}\cdot\nabla \mathcal{E}({\bf x})}{N}.
\end{align}
Let us remark that by $\nabla$ we mean the gradient on $\mathbb{R}^N$ (and not restricted to the sphere), which can be used thanks to the introduction of the Lagrange multiplier. For simplicity we refer to $\nabla$ as the "unconstrained" operator. It is not too hard to prove that:
\begin{align}
\label{eq:intro_homog_pspin}
    \nabla\mathcal{E}({\bf x})\cdot{\bf x}=p\,\mathcal{E}({\bf x}),
\end{align}
a property that follows from the homogeneity of $\mathcal{E}$ (i.e. that $\mathcal{E}(\alpha\,{\bf x})=\alpha^p\mathcal{E}({\bf x})$).  
This fact is crucial for the analysis of the pure $p$-spin model, as it relates the Lagrange multiplier to the energy:
\begin{align}
\label{eq:mu_energy_T=0}
    \lambda({\bf x})=-\frac{p\,\mathcal{E}({\bf x})}{N}
\end{align}
a property that is peculiar to the pure model. As we will describe later on, this fact does not hold true for mixed models, and this has profound consequences on the properties of local minima of the landscape. Similarly, we can also prove that 
\begin{align}
    \nabla^2\mathcal{E}({\bf x})\cdot{\bf x}=(p-1)\nabla\mathcal{E}({\bf x})
\end{align}
where $\nabla^2\mathcal{E}({\bf x})$ is the Hessian matrix on $\mathbb{R}^N$. These properties are important in the "algebra" of the pure $p$-spin, especially when studying the so called "two-point" and "three-point" complexities, see Chapter~\ref{chapter:energy_landscapes}. 

\subsubsection{A bit of history}
Let us give a brief historical account of the pure spherical $p$-spin model, and cite the major references with lecture notes. The first appearance of a spherical model of ferromagnet dates back to Berlin and Kac in the 50's \cite{kac_spherical_ferro_52}. The $2$-spherical model was introduced by Kosterlitz, Thouless, Jones \cite{kosterlitz1976spherical}, and also studied by Sompolinsky and Zippelius \cite{Sompolinsky_zippelius_ed_82}, and later by Kirkpatrick and Thirumalai for the case $p=2+\epsilon$ \cite{Thirum_pspin_87, Thirum_pspin_87_part2}. The model with $p$-Ising spins (i.e. $\pm 1$) was introduced by Derrida \cite{Rem_1} and studied for $p\to\infty$, and later by Gross and Mézard in \cite{gross1984simplest}. The model as we present it here was properly analyzed in the 90's by Crisanti, Sommers (and Horner) in \cite{crisanti1992sphericalp} (\cite{crisanti1993spherical}), who showed that the free energy presents a 1RSB structure below a certain temperature $T_s$ and that ergodicity is broken at a temperature $T_d>T_s$, for $p>2$. The TAP free energy \footnote{a free energy constrained to specific values of magnetization of each spin, see Sec.~\ref{sec:tap_approach}} was then computed by Crisanti and Sommers in \cite{Crisanti_TAP_pspin_95}, while the analytical solution to the out-of-equilibrium dynamics below $T_d$ was due to Cugliandolo and Kurchan \cite{cugliandolo1993analytical, Cugliandolo_1995_weak}. In '95 Franz and Parisi introduce "the potential", a free energy of two equilibrium configurations at fixed overlap extracted one after the other, from which one can bridge thermodynamical and dynamical approaches \cite{franz1995recipes, franz1998effective, Barrat_bifurcation_95, cavagna1997structure}. In the 2000's, the concept of complexity was put on firmer grounds by Cavagna, Giardina, Parisi \cite{cavagna1998stationary}. More recently, new research on the structure of the energy landscape and dynamics has received a significant boost \cite{ros2019complex, ros2019complexity, ros2020distribution, ros2021dynamical, pacco2024curvature, pacco_triplets_2025, folena_rare_2025, rizzo2021path, ghimenti_accelerating_2022}. Moreover, the ideas and results found by physicists are being formalized and further developed in recent years by a community of mathematicians: \cite{Talagrand_pspin_2005, auffinger_complexity_2013, subag2017complexity, Auffinger_Arous_Černý_spin_2012, Auffinger_saddles_2020, Subag_tap_2023, Subag_Zeitouni_2021} to cite a few. \\

\noindent Let us indicate some resources where the pure spherical $p$-spin is explained \cite{Castellani_2005, cugliandolo2002dynamics, DeDominicis_Giardina_book_2006, zamponi_mean_field_notes_2014, Folena_notes_2023, folena_these_2020, barra_pspin_97}, and other resources on the general tools used in this field: \cite{spin_glass_beyond_86, urbani_notes_2017}.


\subsection{Topological complexity}
\label{sec:chap1_topo_compl}
We show with good detail (for a physicist) the derivation of the complexity, while the other tools will be presented in a more informal and direct way.\\

\noindent As we have seen in Sec.~\ref{sec:topo_complexity}, the topological complexity of a landscape $\mathcal{E}$ counts (on an exponential scale) the number of stationary points (including local minima and saddles). In the case of the pure spherical $p-$spin model defined above, one writes the complexity as:
\begin{align}
\label{eq:intro_topo_compl_pspin_quenched}
    \Sigma(\epsilon)=\lim_{N\to\infty}\frac{1}{N}\mathbb{E}[\log\mathcal{N}(\epsilon)]
\end{align}
where $\mathcal{N}(\epsilon)$ counts the number of stationary points ${\bf x}$ at fixed energy density $\epsilon\equiv \mathcal{E}({\bf x})/N$. To be consistent with the notation used in recent papers on this topic \cite{ros2019complex, ros2019complexity, ros2021dynamical}, we define the rescaled field:
\begin{align}
    {\bm\sigma}:=\frac{{\bf x}}{\sqrt{N}},\quad h({\bm\sigma}):=\sqrt{\frac{2}{N}}\mathcal{E}\left({\bm\sigma}\sqrt{N}\right).
\end{align}
Since we are working on the hypersphere, let us define the tangent plane of a certain vector ${\bm\sigma}$ by $\tau[{\bm\sigma}]$ (that is, the vector space orthogonal to ${\bm\sigma}$). With this notation, we can moreover define a local (orthonormal) basis dependent on ${\bm\sigma}$ as:
\begin{align}
\label{eq:local_basis}
    \mathcal{B}[{\bm \sigma}]&:=\bigg\{\underbrace{{\bf e}_1({\bm \sigma}),\ldots,{\bf e}_{N-1}({\bm\sigma})}_{\tau[{\bm \sigma}]},{\bf e}_N({\bm \sigma}):={\bm \sigma}\bigg\}
\end{align}
where it is intended that $\tau[{\bm\sigma}]=\text{Span}({\bf e}_1({\bm\sigma}),\ldots,{\bf e}_{N-1}({\bm\sigma}))$. We also refer to a generic orthonormal basis of $\mathbb{R}^N$ as $\mathcal{C}:=\{{\bf x}_1,\ldots,{\bf x}_N\}$. The relation \eqref{eq:intro_homog_pspin} now reads:
\begin{align}
    \nabla h({\bm\sigma})\cdot{\bm\sigma}=p\,h({\bm\sigma}).
\end{align}
This identity implies that we can write
\begin{align}
\label{app:eq:second_grad_id}
\nabla h({\bm\sigma})={\bf g}({\bm \sigma})+ph({\bm\sigma}){\bm \sigma}
\end{align}
where ${\bf g}({\bm \sigma})$ has components in the $\mathcal{B}[{\bm \sigma}]$ basis given by
\begin{align*}
&[{\bf g}({\bm \sigma})]_{\alpha<N}=\nabla h({\bm\sigma})\cdot {\bf e}_\alpha({\bm\sigma})\\
&[{\bf g}({\bm \sigma})]_N=0.
\end{align*}
The Riemannian gradient on the hypersphere $\nabla_\perp h({\bm \sigma})$ is the $(N-1)-$dimensional vector obtained projecting ${\bf g}({\bm \sigma})$  on $\tau[{\bm \sigma}]$, that is, neglecting the (last) null component. In the following, we denote also with ${\bf g}({\bm \sigma})$ this $(N-1)-$ dimensional projection as well, with a slight abuse of notation. \\

\noindent Similarly, we define by $\nabla^2 h({\bm \sigma})$ the $N \times N$-dimensional Hessian matrix with components in $\mathcal{C}$:
\begin{equation}
\label{app:eq:def_hes}
  [\nabla^2 h({\bm \sigma})]_{ij} :=  {\bf x}_i \cdot  \nabla^2 h({\bm \sigma}) \cdot {\bf x}_j= \frac{\partial^2 h({\bm \sigma})}{\partial \sigma_i \partial \sigma_j}.
\end{equation}
As before, we have that 
\begin{align}
\label{app:eq:first_hes_id}
    \nabla^2 h({\bm \sigma})\cdot{\bm \sigma}=(p-1)\nabla h({\bm \sigma}).
\end{align}

\noindent The Riemannian Hessian on the hypersphere, denoted by $\nabla^2_\perp h({\bm\sigma})$, is conveniently found by means of the Lagrange multiplier introduced above. Given $h_{\lambda}({\bm\sigma}):=h({\bm\sigma})-\frac{\lambda}{2}({\bm\sigma}^2-N)$, then $\nabla h_\lambda({\bm\sigma})=\nabla h({\bm\sigma})-\lambda{\bm\sigma}\overset{!}{=}0\Rightarrow \lambda={\bm\sigma}\cdot\nabla h({\bm\sigma})=p\,h({\bm\sigma})$. Therefore, in the basis of $\tau[{\bm\sigma}]$, we have:
\begin{align}
\label{app:eq:def_hes_riem}
[\nabla^2_\perp h({\bm \sigma})]_{\alpha\beta}\equiv \mathbf{e}_{\alpha}({\bm\sigma})^\top \nabla^2 h_\lambda({\bm\sigma})  \mathbf{e}_{\beta}({\bm \sigma})=\mathbf{e}_{\alpha}({\bm \sigma})^\top\nabla^2 h({\bm\sigma}) \mathbf{e}_{\beta}({\bm\sigma})-p h({\bm\sigma})\delta_{\alpha \beta}.
\end{align}
where $ \alpha,\beta \leq N-1$.
Eq.~\eqref{app:eq:def_hes_riem} shows that the Riemannian Hessian is derived from the unconstrained Hessian by shifting with a diagonal matrix proportional to $p h({\bm\sigma})$, and by projecting onto the local tangent plane. Thus, working with either the unconstrained or Riemannian Hessian is essentially equivalent, as long as the shift is taken into account. We note that, with this new notation, the energy density is expressed as $\epsilon=h({\bm\sigma})/\sqrt{2N}$.\\

\noindent We have now set the stage for the computation of the complexity. It has been seen in \cite{cavagna1998stationary}, and later shown rigorously in \cite{subag2017complexity}, that one actually has
\begin{align}
\Sigma(\epsilon)=\lim_{N\to\infty}\frac{1}{N}\log\mathbb{E}[\mathcal{N}(\epsilon)],
\end{align}
meaning that the annealed and quenched averages match for this model. However, bear in mind that in general this might not be the case, as for example happens in the presence of a magnetic field \cite{cavagna1999quenched}. Physically, the difference is essentially related to the time scale of fluctuations of $S_{i_1,\ldots i_p}$ and $\bm\sigma$. In the quenched case, the tensor $S$ is quenched (that is, fixed) and all averages over configurations are taken with this disorder being fixed. Then, only after averaging over $\bm\sigma$ does one average over $S$ quantities that are \textit{self-averaging}, namely those observable physical quantities that are the same regardless of the specific realization of the disorder, when $N$ is large. Such quantities are therefore \textit{typical}, in the sense that, as $N$ becomes larger and larger, they correspond to what one would observe with an experiment or simulation at a single instance of disorder.  In the annealed computation, instead, disorder and configurational averages are on the same footing.  As we have already seen in Sec.~\ref{sec:topo_complexity}, the annealed quantity might provide \textit{atypical} results, by giving more weight to less probable instances. \\

\noindent The calculation of the quenched complexity can be done by introducing replicas, and while here we restrict to the annealed one, we have performed several quenched calculations along this manuscript, see Chapters~\ref{chapter:non_reciprocal},\ref{chapter:scs},\ref{chapter:energy_landscapes}, with a detailed presentation for Chapter~\ref{chapter:non_reciprocal} in Appendix.~\ref{app:rnn_quenched}. \\

\noindent The number of stationary points of $\mathcal{E}$ can be written as an integral over configurations that satisfy certain constraints:\footnote{one could also implement directly the Lagrange multiplier and consider an integration over $\mathbb{R}^N$ as in \cite{Fyodorov_2016}, but here we wanted to leverage the nice simplifications of the pure model, due to its isotropy and homogeneity.}
\begin{align*}
    \mathcal{N}(\epsilon)&=\int_{\mathcal{S}_N(1)}d{\bm \sigma}\,\delta\left(h({\bm\sigma})-\sqrt{2N}\epsilon\right)\delta({\bf g}({\bm\sigma}))\,\left|\det \nabla^2_\perp h({\bm\sigma})\right|.
\end{align*}
Since ultimately we will take the log of the expected value of this quantity (and divide by $N$), we can drop any prefactor that is not exponential in $N$. The Kac-Rice formula for the average gives:
\begin{align}
\label{eq:pspin:avg_EN}
    \mathbb{E}[\mathcal{N}(\epsilon)]=\int_{S_N(1)}d{\bm\sigma}\,\,\mathbb{E}[\delta(h({\bm\sigma})-\sqrt{2N}\epsilon)\delta({\bf g}({\bm\sigma}))]\,\cdot\mathbb{E}\left[|\det\nabla^2_\perp h({\bm \sigma})|\bigg|
    \grafe{
    \begin{subarray}{l}
    h({\bm\sigma}) = \sqrt{2N}\epsilon\\
   {\bf g}({\bm\sigma})={\bf 0}\end{subarray}}
    \right].
\end{align}
By virtue of the isotropy of this landscape, one can show (see below) that the expected values inside \eqref{eq:pspin:avg_EN} do not depend on ${\bm\sigma}$. Hence, by neglecting sub-exponential contributions, we have:
\begin{align}
    \mathbb{E}[\mathcal{N}(\epsilon)]\propto V\cdot P(\epsilon)\cdot H(\epsilon)
\end{align}
where
\begin{align}
    V=\int_{\mathcal{S}_N(1)} d{\bm\sigma},\quad \quad P(\epsilon)=\mathbb{E}\left[\delta\left(h({\bm\sigma})-\sqrt{2N}\epsilon\right)\delta({\bf g}({\bm\sigma}))\right]
\end{align}
and 
\begin{align}
    H(\epsilon)=\mathbb{E}\left[|\det\nabla^2_\perp h({\bm \sigma})|\bigg|
    \grafe{
    \begin{subarray}{l}
    h({\bm\sigma}) = \sqrt{2N}\epsilon\\
   {\bf g}({\bm\sigma})={\bf 0}\end{subarray}}
    \right].
\end{align}
Therefore, the complexity reads:
\begin{align}
\Sigma(\epsilon)=\lim_{N\to\infty}\frac{1}{N}[\log V+\log P(\epsilon)+\log H(\epsilon)].
\end{align}
Before proceeding, we see that it is important to compute the statistics of fields that depend on the disorder: $h, \nabla h,\nabla^2 h$. Since the disorder is Gaussian and these fields are built as sums of Gaussian random variables, they are also Gaussian. Hence, we only need to determine their mean and covariance. Once this is done we can, for example, compute conditional distributions of such quantities. These fields are, by construction, of zero mean. Their covariance can be computed directly from Eq.~\ref{eq:chap1_def_tensor_S} by differentiating with respect to the right variables. This has been done in \cite{ros2019complex, ros2019complexity}, and we simply state here the results, since the computation is not conceptually difficult. Consider two configurations ${\bm\sigma}_a$ and ${\bm\sigma}_b$, and denote for simplicity $h({\bm\sigma}_a)=h^a$, and same for $b$. Consider also 4 arbitrary vectors ${\bf u}_i$. Then we have:
\begin{align}
\label{eq:pspin_compl_correl_1}
\begin{split}
    \mathbb{E}\left[h^a h^b \right]=({\bm\sigma}_a\cdot {\bm\sigma}_b)^p,\quad\mathbb{E}\left[ \tonde{{\bm \nabla} h^a \cdot {\bf u}_1} h^b  \right]= p ({\bm \sigma}_a \cdot {\bm \sigma}_b)^{p-1} \tonde{{\bf u}_1 \cdot {\bm \sigma}_b}\\
\end{split}
\end{align}
and 
\begin{align}
    \mathbb{E}\left[\tonde{{\bf u}_1 \hspace{-0.1cm}\cdot \hspace{-0.1cm} {\bm \nabla}^2 h^a \hspace{-0.1cm} \cdot \hspace{-0.05cm} {\bf u}_2}  h^b \right]=p(p\hspace{-0.05cm}-\hspace{-0.05cm}1)({\bm \sigma}_a \hspace{-0.05cm}\cdot\hspace{-0.05cm} {\bm \sigma}_b)^{p-2}({\bf u}_1 \hspace{-0.05cm}\cdot\hspace{-0.05cm} {\bm \sigma}_b) ({\bf u}_2 \hspace{-0.05cm}\cdot\hspace{-0.05cm} {\bm \sigma}_b).
\end{align}
Between the gradient components we have:
\begin{equation}\label{app:eq:CovGradComp}
\begin{split}
  \mathbb{E}\left[ \tonde{{\bm \nabla} h^a \cdot {\bf u}_1} \tonde{{\bm \nabla} h^b \cdot {\bf u}_2} \right]&=p ({\bm \sigma}_a \cdot {\bm \sigma}_b)^{p-1} \tonde{{\bf u}_1 \cdot {\bf u}_2}\\
  &+p(p-1)({\bm \sigma}_a \cdot {\bm \sigma}_b)^{p-2} \tonde{{\bf u}_2 \cdot {\bm \sigma}_a} \tonde{{\bf u}_1 \cdot {\bm \sigma}_b}.
\end{split}
\end{equation}
 For what concerns the Hessians, one gets:
\begin{equation}\label{app:eq:HessTang}
\begin{split}
 \mathbb{E}&\left[\tonde{ {\bf u}_1 \cdot  {\bm \nabla}^2 h^a \cdot {\bf u}_2} \tonde{ {\bf u}_3 \cdot {\bm \nabla}^2 h^b \cdot {\bf u}_4} \right]=\frac{p! ({\bm \sigma}_a \cdot {\bm \sigma}_b)^{p-4}}{(p-4)!}({\bf u}_1 \cdot {\bm \sigma}_b) ({\bf u}_2\cdot {\bm \sigma}_b) ({\bf u}_3\cdot {\bm \sigma}_a)( {\bf u}_4 \cdot {\bm \sigma}_a)\\&+
 \frac{p!}{(p-3)!}({\bm \sigma}_a \cdot {\bm \sigma}_b)^{p-3} \,{({\bf u}_1 \cdot {\bf u}_4 )( {\bf u}_2 \cdot {\bm \sigma}_b)( {\bf u}_3 \cdot {\bm \sigma}_a)}\\&+\frac{p!}{(p-3)!}({\bm \sigma}_a \cdot {\bm \sigma}_b)^{p-3} \,{({\bf u}_2 \cdot {\bf u}_4)( {\bf u}_1 \cdot {\bm \sigma}_b)( {\bf u}_3\cdot {\bm \sigma}_a)}\\
  &+\frac{p!}{(p-3)!}({\bm \sigma}_a \cdot {\bm \sigma}_b)^{p-3}\, {({\bf u}_1 \cdot {\bf u}_3)( {\bf u}_2 \cdot {\bm \sigma}_b)( {\bf u}_4 \cdot {\bm \sigma}_a)}\\&+\frac{p!}{(p-3)!}({\bm \sigma}_a \cdot {\bm \sigma}_b)^{p-3}\, {({\bf u}_2 \cdot {\bf u}_3)( {\bf u}_1 \cdot {\bm \sigma}_b)( {\bf u}_4\cdot {\bm \sigma}_a)}\\
 &+\frac{p! ({\bm \sigma}_a \cdot {\bm \sigma}_b)^{p-2} }{(p-2)!}\quadre{( {\bf u}_1\cdot  {\bf u}_3)( {\bf u}_2\cdot  {\bf u}_4)+( {\bf u}_1\cdot  {\bf u}_4)( {\bf u}_2\cdot  {\bf u}_3)}.
 \end{split}
\end{equation}
Finally, the correlations between Hessians and gradients read: 
\begin{equation}\label{app:eq:CorelationsHessianGrad}
 \begin{split}
& \mathbb{E}\left[   \tonde{ {\bf u}_1 \cdot  {\bm \nabla}^2 h^a \cdot {\bf u}_2} \tonde{{\bm \nabla} h^b \cdot  {\bf u}_3 } \right]
 =p(p-1)(p-2) ({\bm \sigma}_a \cdot {\bm \sigma}_b)^{p-3} ({\bf u}_1 \cdot {\bm \sigma}_b) ({\bf u}_2 \cdot {\bm \sigma}_b)  ({\bf u}_3\cdot {\bm \sigma}_a)\\&+
  p(p-1) ({\bm \sigma}_a \cdot {\bm \sigma}_b)^{p-2} ({\bf u}_1 \cdot {\bf u}_3) ({\bf u}_2 \cdot {\bm \sigma}_b)+  p(p-1) ({\bm \sigma}_a \cdot {\bm \sigma}_b)^{p-2}({\bf u}_2 \cdot {\bf u}_3) ({\bf u}_1 \cdot {\bm \sigma}_b).
 \end{split}
\end{equation}
Since here the computation is annealed, we will only be interested in ${\bm\sigma}_a={\bm\sigma}_b$. However, these expressions will be very useful later in our analysis in Chapter~\ref{chapter:energy_landscapes}, when considering the complexity of triplets of stationary points.

\noindent Let us now compute the various pieces that compose the complexity.

\subsubsection{Phase space factor.}
Although we abusively denote this factor by $V$ to indicate a volume (indeed, in general we might consider non-spherically constrained models), in this present case it represents the surface of $\mathcal{S}_N(1)$:
\begin{align*}
    V&=\frac{2\pi^{\frac{N}{2}}}{\Gamma(\frac{N}{2})}\underset{N>>1}{\sim} e^{\frac{N}{2}\left[1+\log(\frac{2\pi}{N})\right]+o(N)}
\end{align*}
where to express its asymptotic value for $N$ large we used Stirling's approximation: $\ln(N!)\sim N\log(N)-N$.

\subsubsection{Joint probability of energy and gradients.}
We now turn our attention to the computation of $P(\epsilon)$. By leveraging the fact that
\begin{align*}
    \nabla h({\bm\sigma})={\bf g}({\bm\sigma})+p\,h({\bm\sigma}){\bm\sigma}
\end{align*}
we can write the joint probability of ${\bf g}$ and $h$ as the probability density of the unconstrained gradient $\nabla h$:
\begin{align}
    P(\epsilon)=P_{{\bf g},h}\left({\bf 0},\sqrt{2N}\epsilon\right)=P_{\nabla h}\left(p\sqrt{2N}\epsilon\,{\bm\sigma}\right).
\end{align}
Taking this into account, we can express $P(\epsilon)$ as the joint probability density of the (unconstrained) gradient components evaluated at $p\sqrt{2N}\epsilon\,{\bm\sigma}$. More precisely we have:
\begin{align}
P(\epsilon)=\frac{e^{-\frac{1}{2}\nabla h^\top \,\hat{C}^{-1}\,\nabla h}}{(2\pi)^{\frac{N}{2}}(\det\hat{C})^{\frac{1}{2}}}\Bigg|_{\nabla h={\bf 0}+p\sqrt{2N}\epsilon{\bm\sigma}}=\frac{e^{-p^2N\epsilon^2\,{\bm\sigma}^\top \,\hat{C}^{-1}\,{\bm\sigma}}}{(2\pi)^{\frac{N}{2}}(\det\hat{C})^{\frac{1}{2}}}
\end{align}
with $\hat{C}_{ij}=\mathbb{E}[(\nabla h({\bm\sigma}))_i(\nabla h({\bm\sigma}))_j]=p\,\delta_{ij}+p(p-1)\sigma_i\sigma_j$ which implies that $\hat{C}=p\mathbb{I}+p(p-1){\bm\sigma}{\bm\sigma}^\top$. It is easy to see that, in the local (orthonormal) basis $\mathcal{B}[{\bm\sigma}]$, the matrix $\hat{C}$ reads
\begin{align}
\hat{C}=
\begin{bmatrix}
p & 0 & 0 & \cdots & 0 \\
0 & p & 0 & \cdots & 0 \\
0 & 0 & p & \cdots & 0 \\
\vdots & \vdots & \vdots & \ddots & \vdots \\
0 & 0 & 0 & \cdots & p^2
\end{bmatrix}.
\end{align}
This result immediately implies that
\begin{align}
    {\bm\sigma}^\top\,\hat{C}^{-1}\,{\bm\sigma}=\frac{1}{p^2},\quad\quad \det\hat{C}=p^{N+1}.
\end{align}
Putting all together, we obtain:
\begin{align}
P(\epsilon)=e^{-N\left[\epsilon^2 +\frac{1}{2}\log(2\pi p)\right]}
\end{align}
where we neglected the additional prefactor $p$, that will not contribute when taking the log and the limit of $N\to\infty$. 

\subsubsection{Hessian}
The Hessian term $H(\epsilon)$ is usually not hard to compute when the Hessian is a random matrix that follows either the Wigner's semicircle law or the elliptic law \cite{sommers1988spectrum} for $N\to\infty$. In more complex non-Gaussian models, such as those arising when modeling neural networks \cite{maillard_landscape_2020}, the Hessian becomes more complicated. \\

\noindent Here we need to study the statistics of $\nabla^2_\perp h$, which we recall reads $[\nabla^2_\perp h({\bm\sigma})]_{ab}=[\nabla^2h({\bm\sigma})]_{ab}-p\,h({\bm\sigma})\delta_{ab}$ where $a,b\leq N-1$ represent the local basis $\mathcal{B}[{\bm\sigma}]$. We could express the problem in any basis, but we see that in the local basis the expression for the Riemannian Hessian is very simple, as we can essentially just study $\nabla^2h$ and shift by the energy density. \\

\noindent From the definition of $H(\epsilon)$ we see that we need to find the statistics of $\nabla^2 h$ conditioned to $h=\sqrt{2N}\epsilon$ and ${\bf g}={\bf 0}$. To do that, we first start from the original Gaussian distribution of $\nabla^2 h$:
\begin{align}
\mathbb{E}[\nabla^2h]=0,\quad\mathbb{E}[(\nabla^2h)_{\alpha\beta}(\nabla^2h)_{\gamma\delta}]=p(p-1)[\delta_{\alpha\gamma}\delta_{\beta\delta}+\delta_{\alpha\delta}\delta_{\beta\gamma}]
\end{align}
where $\alpha,\beta,\gamma,\delta\leq N-1$ represent indices of elements of the basis of $\tau[{\bm\sigma}]$, which are therefore orthogonal to ${\bm\sigma}$, and are responsible for the huge simplification of the covariance of Hessian elements. We immediately recognize a GOE (Gaussian Orthogonal Ensemble) law for $\nabla^2h|_{N-1}$ (see Chapter~\ref{chapter:rmt_} for an introduction), where $|_{N-1}$ is used to indicate that we restrict the matrix to the $(N-1)\times (N-1)$ inner block. Here we simply state the final result, namely that the empirical spectral distribution of $\nabla^2h|_{N-1}/\sqrt{N}$ converges (as $N$ tends to infinity) almost surely to the Wigner's semicircular law \cite{Tao_book_2012, potters_bouchaud_2020}, given by:
\begin{align}
    \rho_\sigma(\lambda)=\frac{1}{2\pi\sigma^2}\sqrt{4\sigma^2-\lambda^2}\,{\bf 1}_{|\lambda|\leq 2\sigma},\quad\quad\sigma:=\sqrt{p(p-1)}.
\end{align}
The fact that $\nabla^2 h|_{N-1}$ has size size $(N-1)\times (N-1)$ and that we divided by $\sqrt{N}$ above does not influence the convergence to the Wigner's law as $N\to\infty$, since this mismatch contributes only with finite-size $1/N$ contributions, which get flushed away in the large $N$ limit (see Chapter~\ref{chapter:rmt_} or Ref.~\cite{paccoros}). \\

\noindent Now we have to compute the conditional law of $\nabla^2h$ to the fact that ${\bf g}={\bf 0},h=\sqrt{2N}\epsilon$, which we indicate with an upper tilde: 
\begin{align}
\frac{\tilde{\nabla}^2_\perp h}{\sqrt{N}}=\frac{\tilde{\nabla}^2h|_{N-1}}{\sqrt{N}}-p\epsilon\sqrt{2}\mathbb{I}_{N-1}.
\end{align}
In this simple setting, we see directly from Eq.~\ref{eq:pspin_compl_correl_1} that $\mathbb{E}[(\nabla^2h)_{\alpha\beta}\,h]=\mathbb{E}[(\nabla^2h)_{\alpha\beta}\,{\bf g}_\gamma]=0$ for any $\alpha,\beta,\gamma\leq N-1$ representing basis elements in $\tau[{\bm\sigma}]$. This means that the statistics of $\nabla^2h|_{N-1}$ is not changed by the conditioning (see formula for conditioning of Gaussian variables in Appendix.~\ref{app:sec_gausian_conditioning}). Therefore we ultimately have that, in the limit of $N\to\infty$, the determinant term reads:
\begin{align}
    H(\epsilon)=N^{\frac{N-1}{2}}\mathbb{E}\left[\prod_{i=1}^{N-1}\left|\lambda_i-p\epsilon\sqrt{2}\right|\right]\underset{N\to\infty}{\sim} N^{\frac{N-1}{2}}e^{N\int_{-2\sigma}^{2\sigma} d\lambda\,\rho_\sigma(\lambda)\log|\lambda-p\epsilon\sqrt{2}|}
\end{align}
where $\lambda_i$ indicates the eigenvalues of $\nabla^2 h|_{N-1}/\sqrt{N}$. By performing the integration one then finds that \cite{ros2019complex}:
\begin{align}
\int_{-2\sigma}^{2\sigma}d\lambda\,\rho_\sigma(\lambda)\log|\lambda-p\epsilon\sqrt{2}|=\frac{1}{2}\log(2\,p(p-1))+I\left(\epsilon\sqrt{\frac{p}{p-1}}\right)
\end{align}
where the even function $I$ reads:
\begin{align}\label{eq:IDef}
    I(y)=\begin{cases}I_-(y)
        \quad \text{if } y\leq -\sqrt{2}\\
        I_+(y)\quad \text{if }-\sqrt{2}\leq y\leq 0
    \end{cases}
\end{align}
with
\begin{equation}
\begin{split}
&I_-(y)=\frac{y^2-1}{2}+\frac{y}{2}\sqrt{y^2-2}+\log\left(\frac{-y+\sqrt{y^2-2}}{2}\right)\\
& I_+(y)=\frac{1}{2}y^2-\frac{1}{2}(1+\log 2).
 \end{split}
\end{equation}

\begin{figure}[t!]
\centering
\includegraphics[width=1\textwidth, trim= 5 5 5 5, clip]{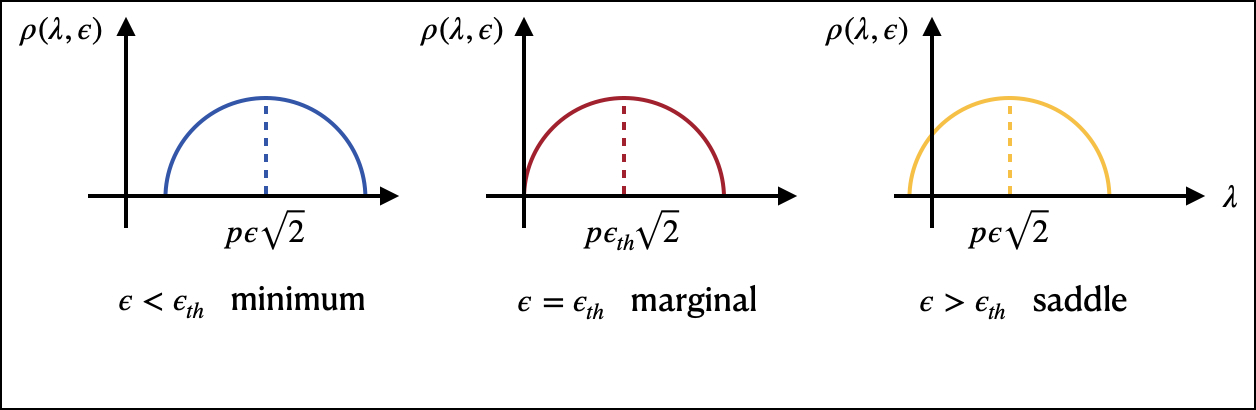}\\
\caption{Idealized representation of the spectral density of the matrix $\nabla^2_\perp h$ for $N\to\infty$. From left to right the energy density $\epsilon$ is increased, showing a transition in the stability of the Hessian, from local minima (positive spectrum) to marginal minima (the left edge touches zero) to saddles (a portion of the spectrum is negative). The spectrum is always centered at $p\epsilon\sqrt{2}$. }
\label{fig:stability_pspin_fp}
\end{figure}

\noindent Let us remark that for $N\to\infty$ the spectrum of the Hessian $\tilde{\nabla}^2_\perp h$ follows a Wigner's semicircle centered at $-p\epsilon\sqrt{2}$ and supported on $\left[-2\sqrt{p(p-1)}-p\epsilon\sqrt{2},\,\,2\sqrt{p(p-1)}-p\epsilon\sqrt{2}\right]$. As we change the energy $\epsilon$, we are shifting this support, thus changing the nature of the stationary points counted. Indeed, we see that the nature (i.e. minima or saddles) of typical stationary points depends on $\epsilon$ only, see Fig.~\ref{fig:stability_pspin_fp}. In particular, we see that for $\epsilon<\epsilon_{th}$ the points are local minima (the spectrum lies in the positive axis) and for $\epsilon>\epsilon_{th}$ the points are saddles with an extensive number of negative directions (the spectrum has a portion on the negative axis)\footnote{this is true considering the extensive instability index (that is, where the support of the semi-circle is located) and does not take into account atypical fixed points that might have isolated eigenvalues.}. The threshold value is found by imposing that the leftmost point of the support touches zero \cite{Crisanti_TAP_pspin_95, cavagna1998stationary, ros2019complex}: 
\begin{align}
\label{eq:pspin_threshold}
    -2\sqrt{p(p-1)}-p\epsilon_{th}\sqrt{2}=0\Rightarrow\epsilon_{th}=-\sqrt{\frac{2(p-1)}{p}}. 
\end{align}
The stationary points at the energy threshold are often referred to as \textit{marginal} or \textit{gapless}.

\subsubsection{Final result and numerical plot}
We have therefore completed the computation of the complexity. Putting everything together we find:
\begin{align}
    \Sigma(\epsilon)=\frac{1}{2}\left[1+\log(2(p-1))\right]-\epsilon^2+I\left(\epsilon\sqrt{\frac{p}{p-1}}\right)
\end{align}
From this expression, we can easily compute the ground state energy. In fact, we can just solve $\Sigma(\epsilon)=0$ for $\epsilon<\epsilon_{th}$. In the case $p=3$ we find $\epsilon_{gs}\approx -1.172$. A plot of this expression is shown in Fig.~\ref{fig:3_spin_complexity_1} for $p=3$.

\begin{figure}[t!]
\centering
\includegraphics[width=0.7\textwidth, trim= 5 5 5 5, clip]{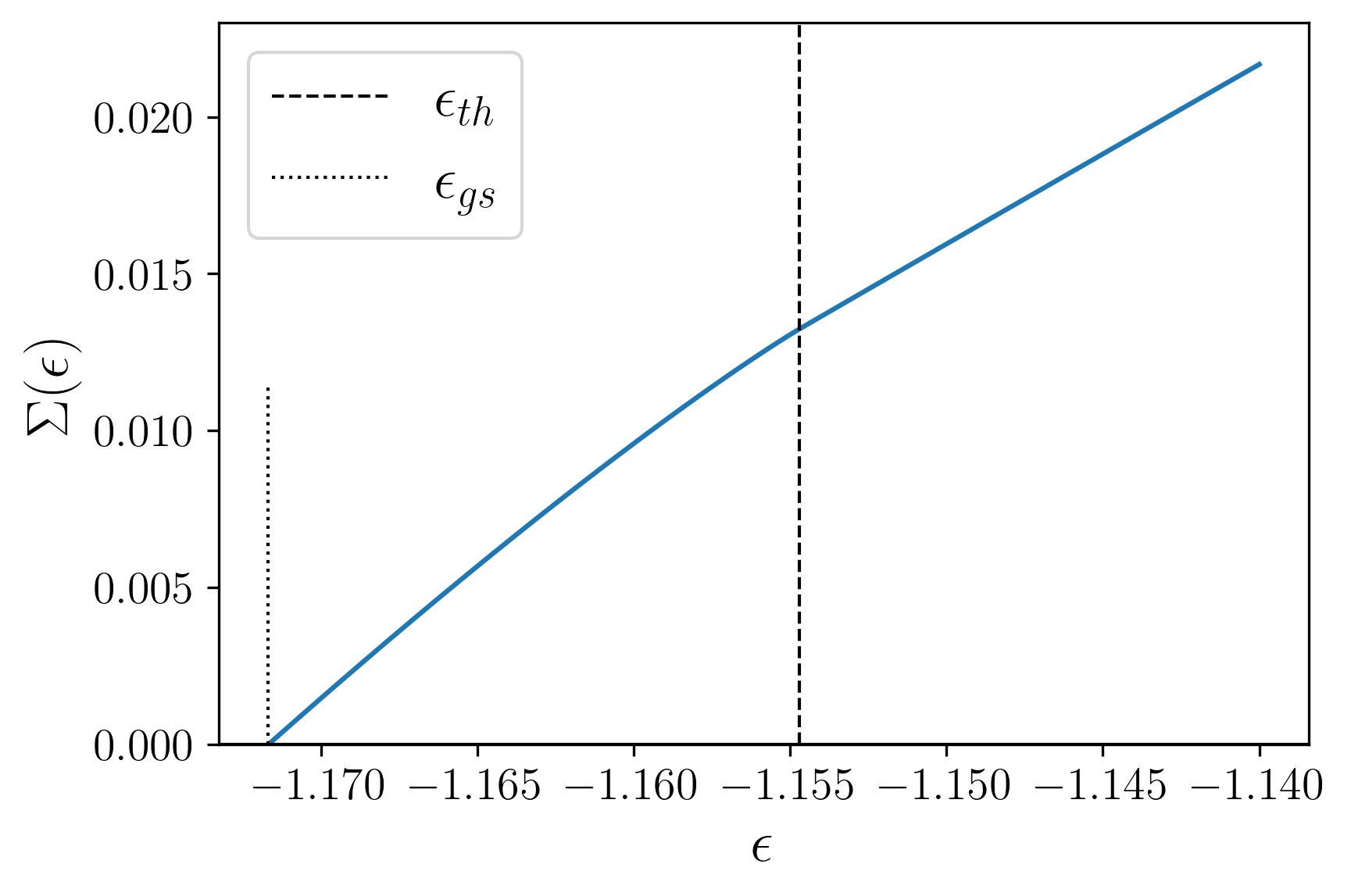}\\
\caption{Plot of the complexity $\Sigma(\epsilon)$ for the case $p=3$. To the right of $\epsilon_{th}$ the complexity refers to saddle points with an extensive number of negative eigenvalues of the Hessian. At $\epsilon_{th}$ the stationary points are marginal minima, see Fig.~\ref{fig:stability_pspin_fp}, whereas between $\epsilon_{gs}$ and $\epsilon_{th}$ they are local minima. At $\epsilon_{gs}$ they are absolute minima.}
\label{fig:3_spin_complexity_1}
\end{figure}

\subsection{Dynamical mean-field theory}
\label{sec:pspin_dynamics_calcs}

We devoted a self-contained section in the Appendix.~\ref{app:dynamical_calculations} to the derivation of the Dynamical Mean-Field Theory (DMFT) equations for a more general family of landscapes, of which the spherical $p$-spin is a special case. Let us briefly resume here the main ideas. To model a physical system in contact with a thermal bath at fixed temperature $T$ (canonical ensemble) a commonly used stochastic differential equation (SDE) is given by the overdamped Langevin equation \cite{kurchan_six_2009, cugliandolo2002dynamics, spin_glass_beyond_86}. This equation is an approximation of Newton's equations of motion, where inertia (i.e., the term proportional to the acceleration of the system) is neglected and where the interaction with the thermal bath is modeled by a Gaussian white noise. Gradient descent is a specific realization of this equation, with $T=0$. In the case of the pure spherical $p$-spin model the Langevin equation reads:
\begin{align}
\label{eq:def_langevin}
    \frac{d{\bf x}}{dt}=-\lambda({\bf x}){\bf x}-\nabla\mathcal{E}({\bf x})+{\bm \eta}(t)
\end{align}
where ${\bm\eta(t)}$ is a zero-mean Gaussian white (or additive) noise with covariance $\langle{\bm\eta}_i(t){\bm\eta}_j(t')\rangle=2T\delta_{ij}\delta(t-t')$ ($\langle\cdot\rangle$ indicates the average over this noise). The reason to write this equation is that one can show, by means of the Fokker-Planck equation, that the equilibrium distribution of this SDE is given by the Boltzmann-Gibbs distribution $\propto e^{-\mathcal{E}({\bf x})/T}$. The goal of DMFT is to obtain an effective SDE for one single unit, representative of the average behavior of all other units, and dates back 40 years \cite{dominicis_dmft_78, sompo_zipp_dyn_phase_1981}. This can be done thanks to three important aspects of the problem under study: the fact that $N$ is large, the fact that interactions are \textit{all-to-all} and the fact that we have a quenched disorder to average over. The core idea is that relevant observables are self-averaging, meaning that as $N\to\infty$, the probability of them diverging from their expected value goes to 0. By averaging over the quenched disorder, we see that the original Langevin equation (in the $N\to\infty$ limit) is equivalent to an SDE for a single unit with a Gaussian noise term, whose covariance is self-consistently determined by the autocorrelation function of the system, and with a memory kernel, encoding correlations with past configurations. This noise term encodes for the average noisy signal that each unit receives from the interaction with all the other units. \\

\noindent A pedagogical explanation of this technique is found in Ref.~\cite{HeliasBook20}.  To obtain the effective SDE (in the Itô convention in our case) one considers the probability of a certain path $p(\{{\bf x}(t')\}_{t'\leq t})$, averaged over the noise ${\bm \eta}$, and writes it as a path integral. To this probability is associated a characteristic function $Z$, which is averaged over the quenched disorder. This averaging creates an effective characteristic function elevated to the power of $N$, which is interpreted as the characteristic function of an effective unit. From this effective characteristic function, we can therefore extract the SDE governing the dynamics of the representative unit. For the $p$-spin we obtain \cite{crisanti1993spherical, cugliandolo1993analytical, franz1995recipes}:
\begin{align}
\begin{split}
&\partial_tx(t)=-\lambda(t)x(t)+\frac{p(p-1)}{2}\int_0^t ds\,R(t,s)C^{p-2}(t,s)x(s)+\eta(t)\\
&\langle\eta(t)\eta(s)\rangle=2T\delta(t-s)+\frac{p}{2}C^{p-1}(t,s),
\end{split}
\end{align}
where $\langle\cdot\rangle$ is the average over the new effective noise $\eta$, which encodes for the original thermal and quenched averages, see Appendix~\ref{app:dynamical_calculations} for details. The observables of interest are the autocorrelation function $C(t,s)=\langle x(t)x(s)\rangle$, the response function $R(t,s)=\langle\delta x(t)/\delta\eta(s)\rangle$ and $\lambda(t)$ which represents the average value of the Lagrange multiplier at time $t$ (and is thus connected to the energy density). The spherical constraint is imposed by setting $C(t,t)=1$ at all times, and moreover one obtains $R(t,t')=0$ for $t< t'$ with $\lim_{t'\to\ t^+}R(t',t)=1$. From this effective equation and using $C(t,t)=1$, one can obtain the DMFT equations for $C,R,\lambda$ \cite{crisanti1993spherical, cugliandolo1993analytical} (see Appendix for a derivation):
\begin{align}
\label{eq:pspin_dmft_equations_general}
&\lambda(t)=T+\frac{p^2}{2}\int_0^t ds\,R(t,s)C^{p-1}(t,s)\\
\begin{split}
&\partial_t C(t,t')=-\lambda(t)C(t,t')+\frac{p(p-1)}{2}\int_0^tds\,R(t,s)C^{p-2}(t,s)C(s,t')\\
&\quad\quad\quad\quad+\frac{p}{2}\int_0^{t'}ds\,R(t',s)C^{p-1}(t,s)+2TR(t',t)\\
\end{split}
\\
&\partial_tR(t,t')=-\lambda(t)R(t,t')+\frac{p(p-1)}{2}\int_{t'}^tds\,R(t,s)R(s,t')C^{p-2}(t,s)+\delta(t-t').
\end{align}
Where we assumed a random initial condition here. These equations are causal, and can be integrated numerically. 

\subsubsection{The energy density}
It is interesting to consider the value of the energy density at time $t$, denoted by $\epsilon(t)=\mathcal{E}(t)/N$. We have seen before in Eq.~\eqref{eq:mu_energy_T=0} that for pure gradient descent with $T=0$ the energy density is essentially the Lagrange multiplier, up to a factor $p$. However when $T>0$ this relation does not hold anymore. Indeed, the SDE \eqref{eq:def_langevin} is formally expressed using a discretization procedure. In our case we are using Itô's prescription (or left point rule). This prescription makes the writing of the SDE simpler (because the vector at a new time is only a function of the previous time), but at the price of modifying the chain rule. The easiest way to express the energy as a function of $\lambda$ is to consider the function $f({\bf x})={\bf x}^2$, which using Itô's lemma follows the SDE:
\begin{align}
\label{eq:intro_ito_f}
\frac{d{f({\bf x})}}{dt}=\left(-\lambda({\bf x}){\bf x}-\nabla\mathcal{E}({\bf x})\right)\cdot 2\,{\bf x} + 2\,T+2\,{\bf x}(t)\cdot{\bm\eta}(t).
\end{align}
Now, using that the model is spherical, we simply have that ${\bf x}^2=N$ which implies $df/dt=0$. It can be proved, see Appendix~\ref{app:dmft_equations}, that by taking the average with respect to the white noise ${\bm\eta}$ we get:
\begin{align}
    \langle {\bm\eta}(t)\cdot{\bf x}(t)\rangle=\sum_i\langle\eta_i(t)x_i(t)\rangle=R(t,t)=0
\end{align}
since in the Itô convention the response function $R$ at equal times is zero. Hence, by taking an average over ${\bm \eta}$ on both sides of Eq.~\eqref{eq:intro_ito_f}, this result together with the fact that $\nabla\mathcal{E}({\bf x})\cdot{\bf x}=N\,p\,\epsilon({\bf x})$ finally implies that 
\begin{align}
\epsilon(t)=\frac{T-\lambda(t)}{p}.
\end{align}

\subsubsection{The TTI regime}

\begin{figure}[t!]
\centering
\includegraphics[width=0.73\textwidth, trim= 5 5 5 5, clip]{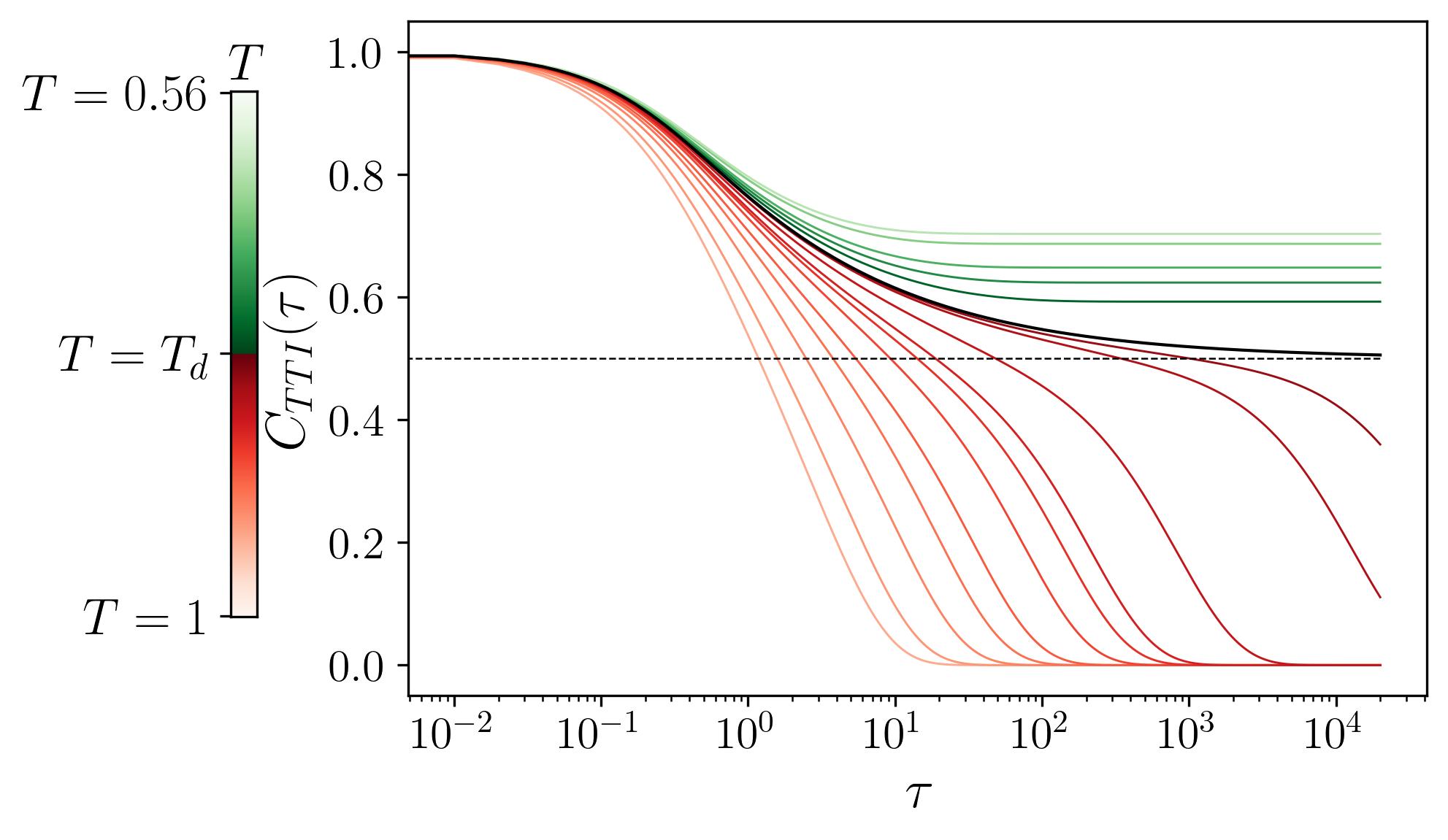}\\
\caption{Numerical solution of the equation for $C_{TTI}(\tau)$ in the TTI regime, for a range of energies above (red) and below (green) $T_d$. The $x$-axis is in log scale. We used Euler discretization with a time step $dt = 0.01$ and $p=3$. 
}
\label{fig:pspin_dmft_tti}
\end{figure}

Let us start by assuming that the system is at equilibrium, i.e. that the probability of being at a certain state follows the Boltzmann-Gibbs distribution. At equilibrium, observables are time translationally invariant (TTI) and the Fluctuation-Dissipation Theorem (FDT) applies. Mathematically, this means: 
\begin{align}
&C(t,t')=C(t-t')\equiv C(\tau)\\
&R(t,t')=R(t-t')\equiv R(\tau)\\
&R(\tau)=-\frac{1}{T}\partial_\tau C(\tau)
\end{align}
\begin{figure}[t!]
\centering
\includegraphics[width=0.6\textwidth]{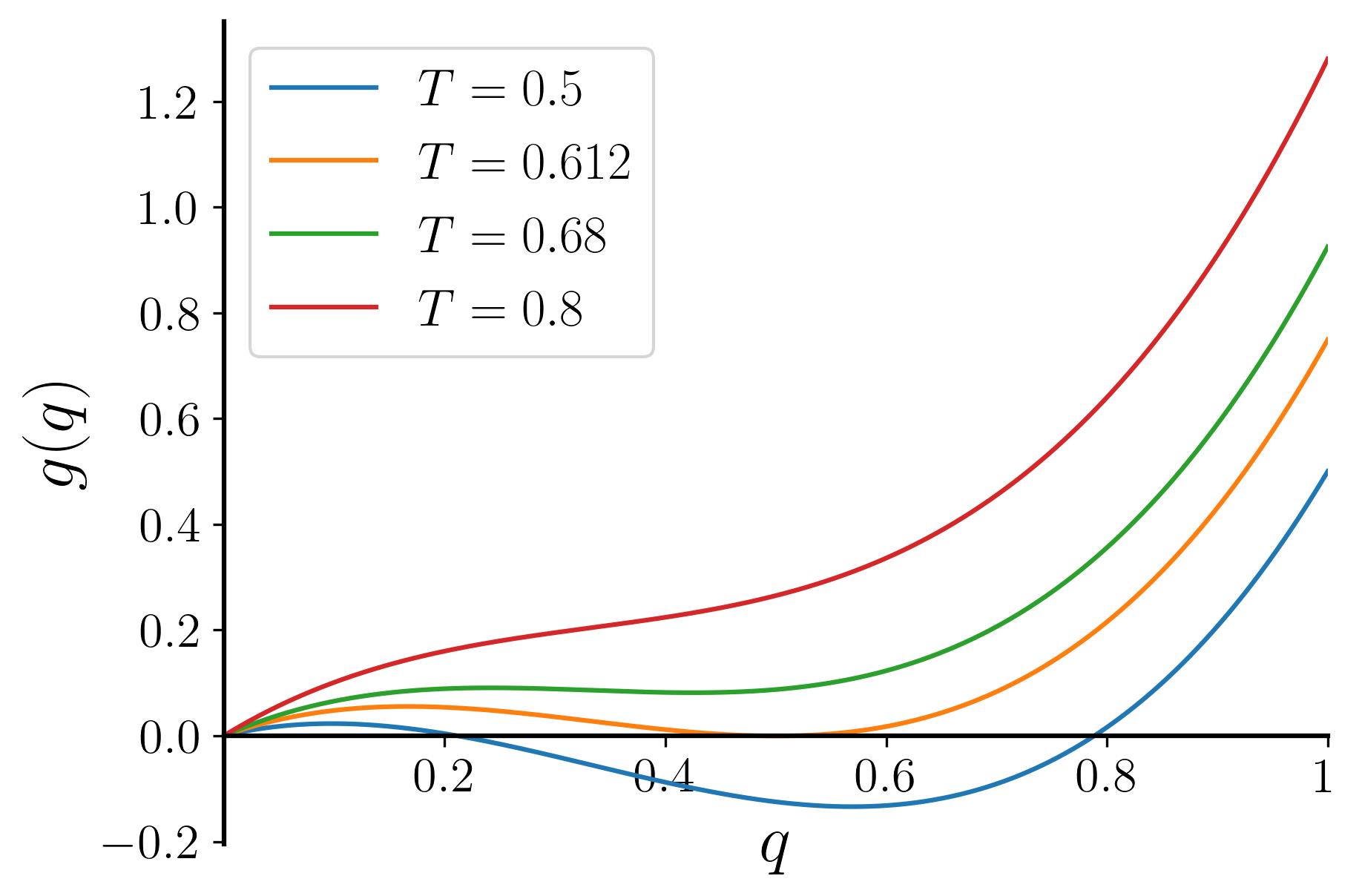}\\
\caption{Plot of $g(q)$ for various temperatures, showing that at $T_d$ (orange curve) the second minimum touches the $q$-axis.}
\label{fig:g(q)}
\end{figure}
One can then show from the DMFT equations above that the autocorrelation function $C(\tau)$ solves the following self-consistent equation:
\begin{align}
\label{eq:C_tti_eqn}
    \dot{C}(\tau)=-TC(\tau)-\frac{p}{2T}\int_0^\tau ds\,C^{p-1}(\tau-s)\dot{C}(s)
\end{align}
where $\dot{C}\equiv\partial_\tau C$. A numerical integration of this equation is shown in Fig.~\ref{fig:pspin_dmft_tti}. The picture clearly shows the appearance of a plateau of $C$ as one lowers the temperature below a critical value $T_d$. We see that for $T>T_d$, $\lim_{\tau\to\infty}C(\tau)=0$, while for $T<T_d$ we have $q_{TTI}(T):=\lim_{\tau\to\infty}C(\tau)|_{T}$, that is a $T$ dependent asymptotic value is reached. In particular, we denote $q_d\equiv q_{TTI}(T_d)$. The subscript "d" stands for dynamical, meaning that this dynamical order parameter, $C$, develops a plateau at $T_d$, with asymptotic value of $C$ given by $q_d$. A non-rigorous argument (in the spirit of \cite{Castellani_2005}) to obtain such values goes as follows:
consider fixing $T$ and, for large $\tau$, consider an expansion $C(\tau)=q_{TTI}(T)+h(\tau)=q+h(\tau)$ (we write just $q$ for simplicity here) with $|h(\tau)|<<1$ and $\lim_{\tau\to\infty}h(\tau)=\lim_{\tau\to\infty}\dot h(\tau)=0$. The goal is to linearize Eq.~\eqref{eq:C_tti_eqn} in $h$, in particular we have that $\dot{C}(\tau)=\dot{h}(\tau)$ and 
\begin{align}
    C^{p-1}(\tau)=(q+h(\tau))^{p-1}=q^{p-1}+(p-1)q^{p-2}h(\tau)+\mathcal{O}(h^2(\tau))
\end{align}
which implies 
\begin{align}
\int_0^\tau ds\,C^{p-1}(\tau-s)\dot{C}(s)=q^{p-1}[h(\tau)-h(0)]+\mathcal{O}(h^2(\tau)).
\end{align}
By plugging these results in Eq.~\eqref{eq:C_tti_eqn} we get
\begin{align}
\begin{split}
\dot{h}(\tau)&=-T\,q-T\,h(\tau)-\frac{p}{2T}q^{p-1}[h(\tau)+q-1]\\
&=-T\,q+\frac{p}{2T}q^{p-1}(1-q)-h(\tau)\left[T+\frac{p}{2T}q^{p-1}\right]
\end{split}
\end{align}
and by taking $\tau\to\infty$ we get the equation relating $T$ and $q_{TTI}(T)$:
\begin{align}
pq^{p-1}(1-q)=2T^2q.
\end{align}
If we define the function $g(q):=2T^2q-pq^{p-1}(1-q)$ we easily see that $g(0)=0$ is always a solution, but that as $T$ decreases, this function develops two fixed points (a local minimum and a local maximum), and that for $T$ low enough the local minimum touches the $q$ axis, that is, a new solution for $q$ appears, see Fig.~\ref{fig:g(q)}.
Hence the critical value is found by imposing that $g'(q)=0$, which gives:
\begin{align}
\begin{cases}
    g(q_d)=0\\
    g'(q_d)=0
\end{cases}
\quad \Rightarrow \quad T_d=\sqrt{\frac{p(p-2)^{p-2}}{2(p-1)^{p-1}}},\quad\quad q_d=\frac{p-2}{p-1}.
\end{align}
This emerging picture is a clear signature of ergodicity breaking, in the sense that below $T_d$ the assumption $\lim_{t\to\infty}C(t)=0$ ceases to be true.

\begin{figure}[t!]
\centering
\includegraphics[width=\textwidth, trim= 5 5 5 5, clip]{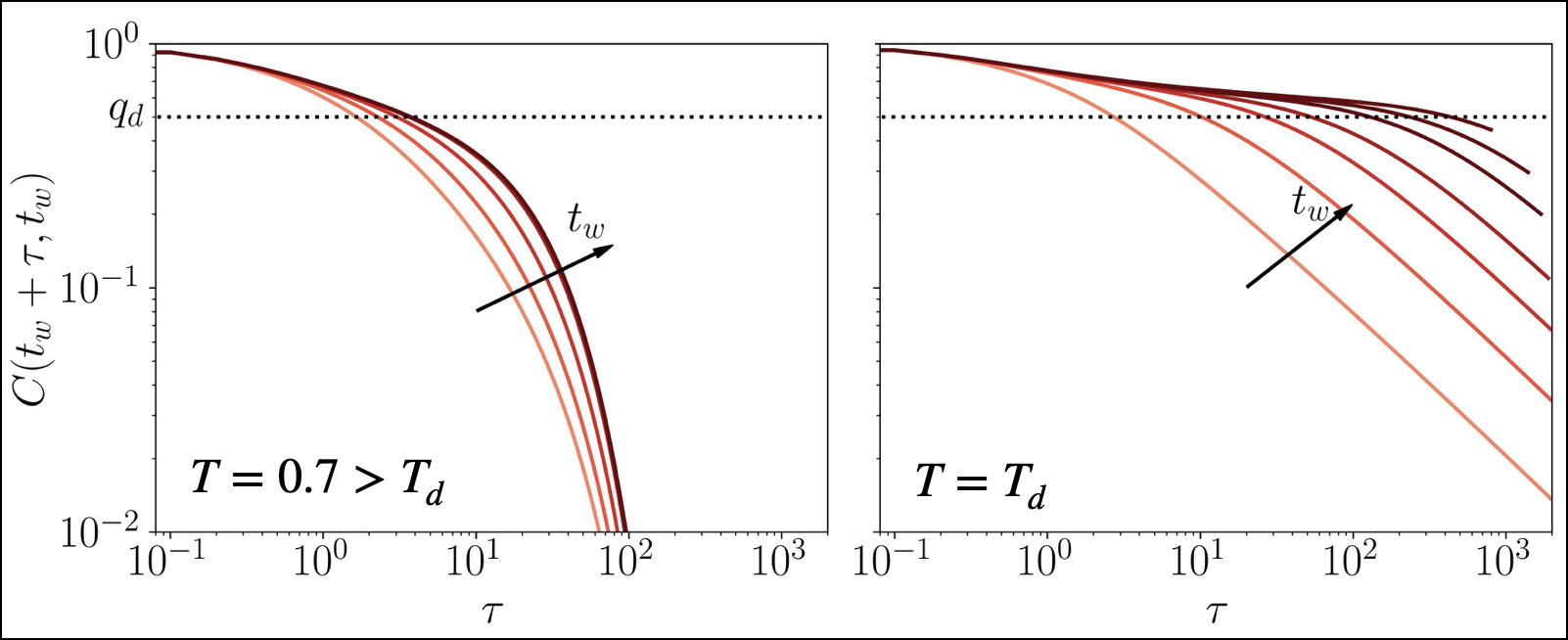}\\
\caption{Numerical integration of the DMFT equations for $p=3$. \textit{Left}. We consider $T>T_d$ and show that the curves $C(t_w+\tau,t_w)$ collapse onto each other as $t_w$ is increased. \textit{Right} We use $T=T_d$, to show that aging occurs, with a plateau at $q_d$. Both plots are in log-log scale.
}
\label{fig:pspin_dmft_2plots}
\end{figure}

\subsubsection{Out-of-equilibrium regime}
By investigating the TTI solution to the DMFT we have seen that ergodicity breaking occurs for temperatures $T\leq T_d$. We can therefore directly integrate numerically the DMFT equations, and plot the autocorrelation function. Here we use a simple Euler discretization scheme with a time step $dt=0.1$ and final time of the order of $10^3$; faster algorithms can be found in the literature \cite{Bongsoo_Kim_num_2001, mannelli_pitfalls_2020, tersenghi_seb_mixed_2025}. In Fig.~\ref{fig:pspin_dmft_2plots} we show the behavior of $C(t_w+\tau,t_w)$ as a function of $\tau$ for increasing $t_w$. We see a different behavior for $T>T_d$ and $T=T_d$: in the former case the various curves collapse onto a unique curve as $t_w$ is increased, while in the latter this does not happen, a phenomenon that takes the name of aging \cite{bouchaud1992weak, bouchaud1995aging, bouchaud1998out, cugliandolo1993analytical, Cugliandolo_1995_weak}. We see that, for fixed $t_w$,  $C(t_w+\tau,t_w)$ eventually goes to 0 as we increase $\tau$. This fact, together with the out-of-equilibrium nature of the problem mentioned above, take the name of \textit{weak ergodicity breaking} \cite{bouchaud1992weak,cugliandolo1993analytical}. The detailed analytical solution of this scenario can be found in \cite{Cugliandolo_1995_weak}, and it relies on a separation of time scales, and on an ansatz on the form of the autocorrelation and response functions. From Fig.~\ref{fig:energy_threshold_dmft} we see that gradient descent initialized at a random initial condition converges to the threshold energy $\epsilon_{th}$ of marginal minima, found in Eq.~\ref{eq:pspin_threshold}. Therefore, the threshold states, being the first local minima to appear as we go down in the landscape, are responsible for trapping the dynamics quenched below $T_d$.

\begin{figure}[t!]
\centering
\includegraphics[width=0.65\textwidth, trim= 5 5 5 5, clip]{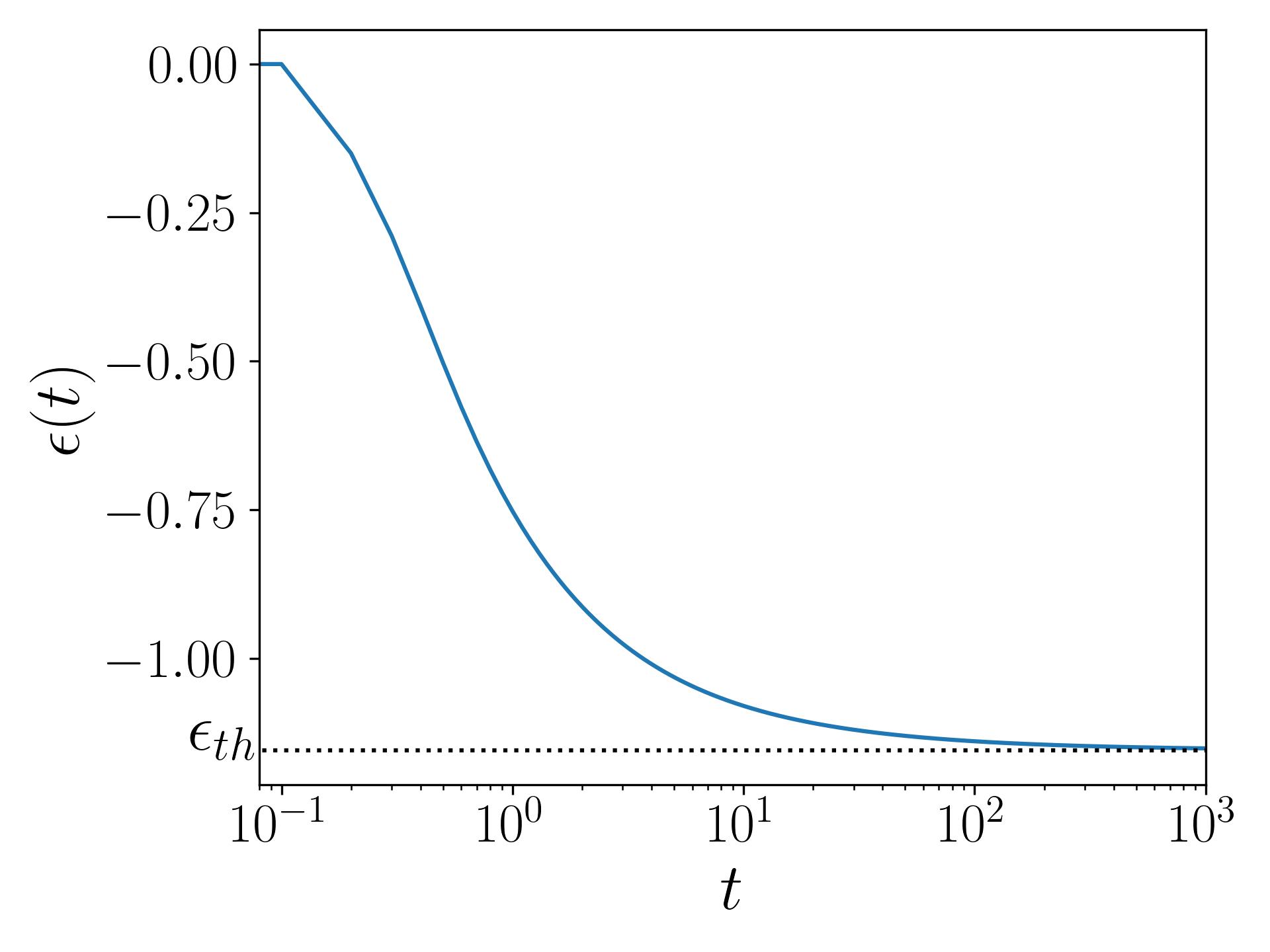}\\
\caption{Plot of the energy density $\epsilon(t)$ obtained by integration of the DMFT equations with $T=0$ (gradient descent) from infinite temperature, with $p=3$. The horizontal axis is in log scale.
}
\label{fig:energy_threshold_dmft}
\end{figure}

\subsection{Free energy}
\label{sec:p_spin_free_energy}
The free energy was computed in \cite{crisanti1992sphericalp} and a detailed computation of the one step replica symmetry breaking is found in the notes by Cavagna and Castellani \cite{Castellani_2005}. Here we shall summarize the main steps of the derivation and of the final result. The partition function at inverse temperature $\beta:=1/T$ and fixed disorder (denoted by $J$ to follow \cite{Castellani_2005}), is given by:
\begin{align}
Z_J(\beta):=\int d{\bf s}\,e^{-\beta\mathcal{E}({\bf s})}
\end{align}
where we avoid to indicate that the integration of ${\bf s}$ is over $\mathcal{S}_N(\sqrt{N})$ assuming it for granted. The problem with taking an average over the partition function is that it is not self-averaging as $N\to\infty$, meaning that a typical sample will not, in general, be close to the average value. The right quantity to work with is the free energy density, defined as:
\begin{align}
F_J(\beta):=-\frac{1}{\beta N}\log Z_J(\beta).
\end{align}
Then, for large $N$, the average of this quantity is representative of a typical sample, meaning that $\lim_{N\to\infty}\sqrt{\text{Var}(F_J(\beta))}/\mathbb{E}[F_J(\beta)]=0$. We will therefore compute a quenched average of the free energy density, defined as:
\begin{align}
F(\beta):=-\lim_{N\to\infty}\frac{1}{\beta N}\mathbb{E}\log Z_J(\beta).
\end{align}
The recipe to compute this quenched quantity using the replica method is the same as given in Sec.~\ref{sec:topo_complexity}, where now we don't have a flat integral over \textit{stationary points}, but over \textit{configurations} sampled with the Boltzmann-Gibbs measure. We thus have to compute $\mathbb{E}[Z_J(\beta)^n]$, which can be done by introducing the matrix of overlaps between the different replicas of the system: $\hat{Q}_{ab}:={\bf s}^a\cdot{\bf s}^b/N$. One then finds \cite{Castellani_2005}:
\begin{align}
\mathbb{E}[Z_J(\beta)^n]\propto \int \prod_{a<b}dQ_{ab}\,\,e^{-\frac{N}{2}S(\hat{Q})}
\end{align}
where $\hat{Q}$ has $n(n-1)/2$ independent entries and we neglect prefactors that are sub-exponential in $N$. The form of this effective action $S$ reads:
\begin{align}
    S(\hat{Q})=-n\log(2\pi e)-\log\det\hat{Q} - \frac{\beta^2}{2}\sum_{a,b}Q_{ab}^p.
\end{align}

\noindent As we already discussed in Sec.~\ref{sec:topo_complexity}, the use of the replica method consists in exchanging the $N\to\infty,\,n\to 0$, making an ansatz on the structure of $\hat{Q}$ and assuming that the results obtained for $n\in\mathbb{N}$ can be analytically continued to $n\to 0$. The great effect of the replica method is that it decouples the sites (i.e. ${\bf s}_i$) but couples different replicas (at the same site). Indeed, the free energy density now takes the form:
\begin{align}
\label{eq:F_from_S}
F(\beta)=\text{extr}_{\hat{Q}}\lim_{n\to 0}\,\frac{S(\hat{Q})}{2n},
\end{align}
where we will henceforth neglect the constant term in $S$, since it just provides a constant shift for any solution of $\hat{Q}$. There are two important issues to discuss before seeing the result. The first one is that the matrix $\hat{Q}$ has $n(n-1)/2$ independent entries, which becomes negative as $n\to 0$. This implies that, when performing a saddle point, the action $S(\hat{Q})$ changes sign as we send $n\to 0$. Indeed, $n(n-1)/2=n^2/2-n/2\sim -n/2$ for $n<<1$. Hence, terms of this type change the sign of the optimization problem, which becomes a maximization of $S$ rather than a minimization. Second, when making an ansatz on the shape of the optimum $\hat{Q}^*$, we must make sure that this solution is, at least, stable with respect to variations of $\hat{Q}^*+\delta \hat{Q}$. If it were not stable, we would be sure that the ansatz is wrong. Hence, one should check that by perturbing the overlap matrix, one maintains the maximum of $S$. In \cite{crisanti1992sphericalp} two solutions to the problem were considered, the RS and 1RSB. One can show that the RS solution is stable at high-temperatures, and unstable at low temperatures. This therefore suggests for a breaking of the replica symmetry.

\subsubsection{RS solution}
The RS ansatz corresponds to choosing $Q^{RS}_{ab}=\delta_{ab}+(1-\delta_{ab})q_0$, and it indicates the presence of a unique state with self-overlap $q_0$. By plugging this above, the free energy then reads \cite{crisanti1992sphericalp, Castellani_2005}:
\begin{align}
    F^{RS}(\beta)=-\frac{1}{2\beta}\left[\frac{\beta^2}{2}(1-q_0^p)+\log(1-q_0)+\frac{q_0}{1-q_0}\right]
\end{align}
where $q_0$ satisfies:
\begin{align}
\beta^2\frac{p\,q_0^{p-1}}{2}=\frac{q_0}{(1-q_0)^2}.
\end{align}
As one can see by doing the 1RSB ansatz, this solution is acceptable only for $T>T_s$ (introduced below), in which case the correct self-overlap is $q_0=0$. This is therefore the paramagnetic state, which can be obtained also by performing an annealed computation, and has free energy density:
\begin{align}
F^{RS}(\beta)=-\frac{\beta}{4}.
\end{align}

\subsubsection{1RSB solution}
The 1RSB solution is obtained with the following ansatz:

\begin{align}
\label{eq:structure_1rsb_matrix}
\hat{Q}^{1RSB} :=\;
\overbrace{
\begin{pmatrix}
\begin{array}{c|c|c}
%
%
\overbrace{\begin{matrix}
  1 & \cdots & q_1\\
  \vdots & \ddots & \vdots\\
  q_1 & \cdots & 1
\end{matrix}}^{m}
&
\begin{matrix}
  q_0 & \cdots & q_0\\
  \vdots & \ddots & \vdots\\
  q_0 & \cdots & q_0
\end{matrix}
&
\mathbf \ldots
\\ \hline
%
%
\begin{matrix}
  q_0 & \cdots & q_0\\
  \vdots & \ddots & \vdots\\
  q_0 & \cdots & q_0
\end{matrix}
& \mathbf \ddots &
\mathbf \vdots
\\ \hline
%
%
\mathbf \vdots
&
\mathbf \ldots
&
\begin{matrix}
  1 & \cdots & q_1\\
  \vdots & \ddots & \vdots\\
  q_1 & \cdots & 1
\end{matrix}
\end{array}
\end{pmatrix}}^{n/m\quad\textit{columns}}
\end{align}

What is important to retain is that replicas act as probing configurations of the structure of the states (i.e. ergodic components of the Boltzmann-Gibbs measure), so that the structure of $\hat{Q}$ encodes for the probability to find configurations in different states. In this case, the overlap $q_0$ indicates the overlap between two different states, while $q_1$ the self-overlap of one state. The variable $m$ also becomes a variational parameter that is connected to the probability of a given overlap. More precisely, the distribution of overlaps obtained by sampling from the Boltzmann-Gibbs measure reads \cite{Castellani_2005}:
\begin{align}
\mathbb{E}[P(q)]=\lim_{n\to 0} \frac{m-1}{n-1}\delta(q-q_1) + \frac{n-m}{n-1}\delta(q-q_0)=(1-m)\,\delta(q-q_1)+m\,\delta(q-q_0),
\end{align}
where we see that $q_0,q_1$ are the two possible values of the overlap (the two peaks) and $m$ the probability to pick $q_0$. This relation is another heuristic part of the replica method; indeed while a priori $1\leq m\leq n$, to maintain a positive probability for the overlap $q_1$, we must promote $m$ to a real number such that $0\leq m\leq 1$ (and clearly $q_0\leq q_1\leq 1$). We can see this as a sampling procedure: if we were to sample configurations from the Boltzmann-Gibbs distribution $\propto e^{-\beta\mathcal{E}({\bm\sigma})}$ for large $N$ and low $T\leq T_s$ ($T_s$ introduced below), and create a histogram for the overlaps between configurations, we would obtain two peaks: one at $q_0$ and one at $q_1$, and with different heights according to $m$. This is a consequence of the structure of the phase space, which is divided  in "blobs" of the same size $q_1$, at distances $q_0$ one from the other (distance meant in the thermodynamic sense, i.e. by averaging over configurations weighted by $e^{-\beta\mathcal{E}({\bm\sigma})}$). \\

\noindent The rest is mainly algebra; one needs to plug the ansatz ~\eqref{eq:structure_1rsb_matrix} into Eq.~\eqref{eq:F_from_S}, and optimize over $m,q_0,q_1$, to obtain \cite{Castellani_2005}:
\begin{align}
\begin{split}
 F^{\mathrm{1RSB}}(\beta)
&=-\frac{1}{2\beta}\Bigg\{\frac{\beta^{2}}{2}\Bigl[\,1 \;+\; (m-1)\,q_{1}^{\,p} \;-\; m\,q_{0}^{\,p}\Bigr]
\;+\;
\frac{m-1}{m}\,\log\bigl(1-q_{1}\bigr)
\;\\&+\;
\frac{1}{m}\,\log\!\Bigl[m\bigl(q_{1}-q_{0}\bigr) + \bigl(1-q_{1}\bigr)\Bigr]
\;+\;
\frac{q_{0}}{\,m\bigl(q_{1}-q_{0}\bigr) + \bigl(1-q_{1}\bigr)}\Bigg\}
\end{split}
\end{align}
where $m,q_0,q_1$ solve the saddle point equations $\partial_{m,q_0,q_1}S(\hat{Q}^{1RSB})=0$:
\begin{align}
&q_0=0\\
&
(1 - m)\,
\left[
\frac{\beta^{2}}{2}\,p\,q_{1}^{\,p-1}
-
\frac{q_{1}}{(1-q_{1})\bigl[(m-1)q_{1}+1\bigr]}
\right]
 \;=\; 0\\
&\frac{\beta^{2}}{2}\,q_{1}^{\,p}
\;+\;
\frac{1}{m^{2}}\,
\log\!\Bigl(\frac{1-q_{1}}{1-(1-m)\,q_{1}}\Bigr)
\;+\;
\frac{q_{1}}{m\,[\,1-(1-m)\,q_{1}\,]}
\;=\;0.
\end{align}

\begin{figure}[t!]
\centering
\includegraphics[width=\textwidth, trim= 5 5 5 5, clip]{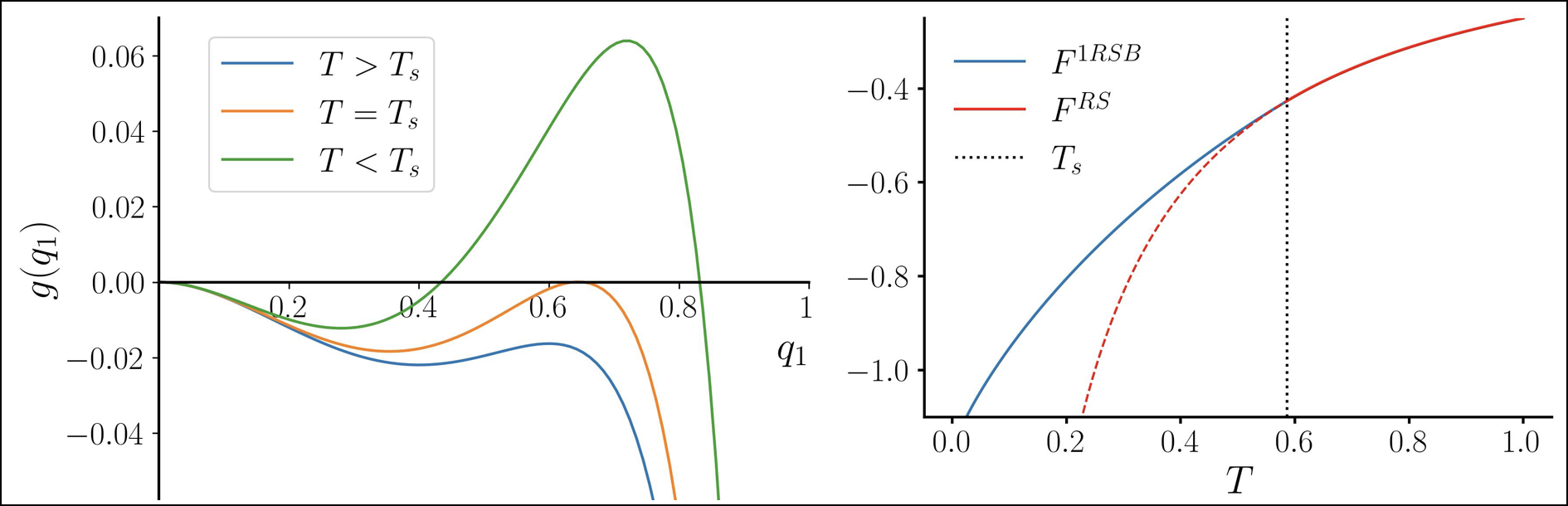}\\
\caption{\textit{Left} Plot of $g(q)$ for three temperatures, the orange one corresponds to $T_s$, and we see that the second minimum touches zero at $q_1=q_s>0$. \textit{Right} Plot of the $RS$ and $1RSB$ free energies for $T\in[0,1]$ (the constant term in both expressions was neglected). }
\label{fig:free_energies_pspin_puro}
\end{figure}

The solution $q_0=0$ says that different states are usually orthogonal to each other. Then, for high $T$ we expect that $q_1=0$: we are in the paramagnetic state and the RS solution is recovered. Now, we look for a transition in $q_1$ as we lower $T$ towards a temperature $T_s$ ("s" for static transition). This transition can be captured by using $m=1$ above, thus solving for $q_1$ in the equation $g(q_1)=0$, defined as $g(q_1):=\frac{\beta^2}{2}q_1^p+\log(1-q_1)+q_1$ (i.e. the last equation of the three above with $m\to 1$). A numerical study of this equation is shown in Fig.~\ref{fig:free_energies_pspin_puro}. We see that the maximum $g(0)=0$ is surpassed by another maximum that touches 0 at a temperature $T_s$ (orange line). At $T_s$ this second maximum has a well defined overlap $q_s>0$; we can solve for the couple $(q_s,T_s)$ by solving $g(q_s)|_{T=T_s}=g'(q_s)|_{T=T_s}=0$:
\begin{align}
\begin{cases}
&\frac{\beta^2}{2} q_1^p + \log(1 - q_1) + q_1=0\\
&\frac{p\,\beta^2}{2} q_1^{p-1} - \frac{1}{1 - q_1} + 1=0
\end{cases}\quad\overset{p=3}{\Longrightarrow}\quad T_s\approx 0.586,\, q_s\approx 0.645.
\end{align}
In Fig.~\ref{fig:free_energies_pspin_puro} we plot the free energy and the overlap $q_1$ as a function of temperature. As $T$ is lowered below $T_s$, the solution takes values $m<1$ and as $T\to 0$, $m\to 0$ and $q_1\to 1$. Let us finally remark that this transition at $T_s$ is of first order in the overlap (discontinuous jump) but is second order in the thermodynamic sense, i.e. discontinuous in the second derivative of the free energy \cite{crisanti1992sphericalp}. \\

\noindent So far, we have seen that the pure spherical $p$-spin model with $p>2$ has a static (thermodynamic transition) at $T_s$, where the correct solution becomes 1RSB. However, we have also seen that for a temperature $T_d>T_s$ we have (weak) ergodicity breaking, meaning that the system is stopped by the first local minima that appear as we decrease in energy in the landscape. We also calculated the complexity of this model, showing that it is positive above a level $\epsilon_{gs}$, and marginal states appear at $\epsilon_{th}>\epsilon_{gs}$. We saw that the energy reached by gradient descent from a random condition is precisely given by $\epsilon_{th}$, and that the system is not able to reach energies below $\epsilon_{th}$ for any $T\leq T_d$. In particular, this transition is not captured by the free energy. However, the fact that the states that appear at $T_s$ are already well formed (since $q_s>0$) hints at the presence of metastability even from this calculation. In the following, we give a summary of the TAP approach from Ref.~\cite{Castellani_2005}, and show how complexity, dynamics, and free energy are connected.

\subsection{The TAP approach}
\label{sec:tap_approach}
Here we summarize the TAP approach applied to the pure spherical $p$-spin, computed in \cite{kurchan_barriers_93, Crisanti_TAP_pspin_95}. Pedagogical calculations are found in the following Refs.~\cite{Castellani_2005, folena_these_2020, zamponi_mean_field_notes_2014, spin_glass_beyond_86, Georges_Yedidia_1991, Biroli_dynTAP_1999}. In mean-field models, we can divide the equilibrium distribution into pure states (or just states), that is, ergodic components that enjoy the clustering property, namely that spins are completely decorrelated within one state \cite{zamponi_mean_field_notes_2014}. A state $\alpha$ is defined by its average magnetizations $m_i:=\langle s_i\rangle_{\alpha}$, with $\langle\cdot\rangle_\alpha:=\int_{\alpha} d{\bf s}\,e^{-\beta\mathcal{E}({\bf s})}$ ($\alpha$ being a subset of $\mathcal{S}_N(\sqrt{N})$); the self-overlap of the pure state is then $q:=\frac{1}{N}\sum_im_i^2$. In the standard free energy approach, we first take an average over the disorder $\mathbb{E}$, and then solve the problem with replicas; the TAP approach instead has a different purpose. One aims at passing from a (standard) description in terms of the intensive variable ${\bf h}$ (an external magnetic field) to the extensive ${\bf m}$ (the vector of magnetizations), by means of a Legendre transform (which is used to change between conjugate variables), before taking any disorder average. The TAP free energy is then a free energy at fixed values of the magnetizations $m_i=\langle s_i\rangle$, and whose local minima (defined as metastable or TAP states) we want to identify with the pure states of the system. The calculation is done by defining a free energy with fixed magnetization vector ${\bf m}$, and take an expansion up to second order $\beta^2$ from the initial point $\beta=0$. In this way one can follow the evolution of the free energy at fixed magnetization ${\bf m}$ as $\beta$ is increased. For the $p-$spin, which is a fully connected model, the calculation up to $\beta^2$ is exact for $N\to \infty$ \cite{Georges_Yedidia_1991}. One starts by introducing Lagrange multipliers (i.e. the magnetic field) to impose that the constraints $\langle s_i-m_i\rangle=0$ are satisfied (the average is taken with respect to the correct Boltzmann weight, i.e. including everything that is in the exponent):
$-\beta Nf_{TAP}({\bf m},\beta):=\log Z_{TAP}({\bf m},\beta)$ with
\begin{align}
&Z_{TAP}({\bf m},\beta):=\int_{S_N(\sqrt{N})} d{\bf s}\,e^{-\beta\mathcal{E}({\bf s})+{\bm \lambda}(\beta)\cdot({\bf m}-{\bf s})};\\
& \langle s_i-m_i\rangle=0\Rightarrow \frac{\partial\log Z_{TAP}({\bf m},\beta)}{\partial m_i}=\lambda_i(\beta).
\end{align}
Notice that at the local minima of $f_{TAP}$ it must be that $\lambda_i(\beta)=0$, and also that for the spherical model an additional Lagrange multiplier must be used to enforce the spherical constraint \cite{zamponi_mean_field_notes_2014}. The goal is then to expand this expression in Taylor around $\beta=0$ \cite{Georges_Yedidia_1991}:
\begin{align}
    -\beta N f_{TAP}({\bf m},\beta)=\sum_{k=0}^\infty \frac{\partial(-\beta f_{TAP})}{\partial\beta}\Bigg|_{\beta=0}\frac{\beta^n}{n!}.
\end{align}
The final result reads \cite{zamponi_mean_field_notes_2014, Castellani_2005, Crisanti_TAP_pspin_95, kurchan_barriers_93}:
\begin{align}
f_{TAP}({\bf m},\beta)=\frac{1}{N}\mathcal{E}({\bf m})\underbrace{-\frac{1}{2\beta}\log(1-q)-\frac{\beta}{4}[(p-1)q^p-pq^{p-1}+1]}_{R(q,\beta)}
\end{align}
where we abusively use $\mathcal{E}$ even if we defined it on $S_N(\sqrt{N})$. Then by using the homogeneity of $\mathcal{E}$, we can use "angular variables" to rescale the problem \cite{kurchan_barriers_93}: $m_i=\sqrt{q}\,s_i$, $\sum_i s_i^2=N$. The result then reads $f_{TAP}({\bf s},\beta)=q^{p/2}\mathcal{E}({\bf s})/N+R(q,\beta)$, and one can directly appreciate the connection between the local minima of $\mathcal{E}$, computed in Sec.~\ref{sec:chap1_topo_compl}, and the local minima of $f_{TAP}$. Since the ${\bf s}$ and $(q,\beta)$ parts of this function are separated, one sees that local minima of $f_{TAP}$ are local minima of $\mathcal{E}$ with a non-trivial self-overlap $q$ that depends on the \textit{bare} energy density $\epsilon({\bf s}):=\mathcal{E}({\bf s})/N$ and on temperature $\beta$. Hence local minima of $\mathcal{E}$ become \textit{dressed} with thermal fluctuations for $T>0$, thus becoming metastable states. We can therefore indicate the free energy density and self-overlap of a state $\alpha$ as functions of its bare energy density $\epsilon_\alpha$: $f(\epsilon_\alpha,\beta), q(\epsilon_\alpha,\beta)$. At fixed $\epsilon_\alpha,\beta$, the self-overlap satisfies the equation $g(q,\epsilon,\beta)=0$ with:
\begin{align}
g(q,\epsilon_\alpha,\beta)=-\frac{p}{2}\, q^{\frac{p}{2}-1} \, \epsilon_\alpha
\;+\;
\frac{1}{2\beta(1-q)}
\;-\;
\frac{\beta}{4}\,\bigl[p(p-1)q^{p-1}-p(p-1)q^{p-2}\bigr].
\end{align}
One can show that this equation has no real solution for $\epsilon_\alpha>\epsilon_{th}$ (the threshold energy where marginal minima appear, see Sec.~\ref{sec:chap1_topo_compl}). We can also show \cite{Castellani_2005} that $q(\epsilon_{th},1/T_d)=q_d$, i.e. that at the threshold energy, the self-overlap at the dynamical temperature is exactly $q_d$ from Sec.~\ref{sec:pspin_dynamics_calcs}.\\

\noindent The \textit{configurational entropy} at temperature $T$ is then given by the number of states at free energy density $f$: $\Sigma_\beta(f):=(1/N)\log\sum_\alpha\delta(f-f_\alpha)$. This function, in general, need not be equal to the \textit{complexity} computed in Sec.~\ref{sec:chap1_topo_compl}, but owing to the one-to-one mapping between metastable states and local minima of $\mathcal{E}$ for this model, we have that $\Sigma_\beta(f)=\Sigma(\epsilon(f,\beta))$, where by $\epsilon(f,\beta)$ we denote the bare energy density of those states that have free energy $f$ at inverse temperature $\beta$. Also notice that $\Sigma$ is positive for any big enough energy (eventually giving the complexity of saddles), while $\Sigma_\beta$ is only positive where $\Sigma$ counts local minima. Then we have that 
\begin{align}
\label{eq:pspin_tap_eq_Z_Phi}
    Z_J(\beta)=\sum_\alpha Z_\alpha(\beta)=\sum_\alpha e^{-N\beta f_\alpha(\beta)}=\int df\sum_\alpha\delta(f-f_\alpha)e^{-N\beta f_\alpha(\beta)}=\int df\,e^{-\beta N \Phi(f,\beta)},
\end{align}
where we defined the \textit{generalized free energy} $\Phi(f,\beta):=f-T\,\Sigma_\beta(f)$. By the saddle point method we can retrieve the total free energy $F(\beta)=\Phi(f^*(\beta),\beta)$, with $f^*(\beta):=\text{argmin}_f\Phi(f,\beta)$. The bare energy density of states of free energy density $f^*(\beta)$ can be denoted as $\epsilon^*(\beta)$, and hence we have $f^*(\beta)=f(\epsilon^*(\beta),\beta)$. Thanks to the one-to-one mapping between local minima and metastable states, we can express $\Phi$ in terms of $\epsilon$ as: $\Phi(\epsilon,\beta)=\Phi(f(\epsilon,\beta),\beta)=f(\epsilon,\beta)-T\Sigma(\epsilon)\mathbb{I}_{\epsilon\in[\epsilon_{gs},\epsilon_{th}]}$, and thus $F(\beta)=\Phi(\epsilon^*(\beta),\beta)$. At fixed $\beta=1/T$ we can solve for $\epsilon^*(\beta)$ (by minimizing $\Phi$) thus recovering the free energy by solving the system:
\begin{align}
\label{eq:system_eps_q}
\begin{cases}\epsilon^*(\beta)=\frac{(p-2) q^{\frac{p}{2}} - \sqrt{p (p \,q^p + 8 T^2)}}{4 T}\\
g(q,\epsilon^*,\beta)=0
\end{cases}.
\end{align}
We can therefore divide in three situations:
\begin{itemize}
    \item $T>T_d$: the system is in the paramagnetic state $q=0,{\bf m}={\bf 0}$, we recover $F(\beta)=-\beta/4$. 
    \item $T_s<T\leq T_d$: there is a positive complexity, and the free energy is given by $F(\beta)=\Phi(\epsilon^*(\beta),\beta)$ where the system \eqref{eq:system_eps_q} must be solved in $(\epsilon,q)$ at fixed $\beta$. One recovers the paramagnetic $F(\beta)=-\beta/4$. Hence, the complexity acts as an entropy (i.e. configurational entropy) that lowers the free energy of single metastable states, thus giving back the correct total one. In particular, at $T_d$ we have the solution $\epsilon^*=\epsilon_{th}, q=q_d$, thus recovering the dynamical results: the marginal minima are those that trap the dynamics, and their self-overlap (at $T_d$) is $q_d$. For $p=3$ moreover we can solve explicitly the system to obtain $q=(1/6) (3 + \sqrt{3} \sqrt{3 - 8 T^2})$.
    \item $T\leq T_s$: we have the solution $\epsilon^*(\beta)=\epsilon_{gs}$ and $q$ given by $g(q,\epsilon_{gs},\beta)=0$. In particular, this solution appears at $T_s$, where the complexity vanishes $\Sigma(\epsilon_{gs})=0$ and the self-overlap at $T_s$ is $q_s$. Therefore, the ground states (which are sub-exponential given their zero-complexity) are exactly the states that belonged to the 1RSB solution, with self-overlap $q$. The 1RSB free energy can be recovered as $F(\beta)=f_{TAP}(\epsilon_{gs},\beta)$ and $q$ solving $g(q,\epsilon_{gs},T)=0$. This scenario takes the name of \textit{entropy crisis} \cite{Castellani_2005}. 
\end{itemize}
We have finally reunited thermodynamical, dynamical and topological calculations. The metastable states contribute to the free energy for $T\leq T_d$, but their exponential number makes it so that we recover the paramagnetic free energy. As soon as we go below $T_s$, the metastable states that contribute the most are given by the ground states of the energy landscape, and their complexity is zero.



\subsection{Franz-Parisi potential}
\label{sec:franz_parisi_analysis}
Another important tool to study mean-field glassy systems is the Franz-Parisi (FP) potential, introduced in \cite{franz1995recipes}. This tool allows us to obtain both the static and dynamic transitions from one single "potential". The FP potential is a constrained free energy to keep a secondary configuration ${\bf s}_2$ at fixed overlap $q_{12}$ from a reference configuration ${\bf s}_1$, both at temperature $T=1/\beta$ (and can be generalized to two temperatures, see e.g. Refs.~\cite{Barrat_bifurcation_95, franz1998effective, Franz_coupled_97, folena_these_2020, cavagna1997structure}). The primary and secondary configuration are also called \textit{real replicas}, and we will consider a similar computation in Chapter~\ref{chapter:energy_landscapes} but between stationary points instead of equilibrium configuration. The idea of the FP potential is to introduce a fixed configuration (the first real replica) in order to study the distribution of overlaps of the second real replica with respect to the first one. In particular, the FP potential can be seen as the large deviation function associated with the probability of finding the two real replicas at an overlap $q_{12}$, and is thus useful to probe the free energy landscape. It is defined as:
\begin{align}
&V(q_{12},\beta):=F(q_{12},\beta)-F(\beta),\\
&F(q,\beta):=-\frac{1}{\beta N}\mathbb{E}\left[\frac{1}{Z_J(\beta)}\int d{\bf s}_1\,e^{-\beta\mathcal{E}({\bf s}_1)}\log Z_J(\beta,{\bf s}_1,q)\right],\\
&Z_J(\beta,{\bf s}_1,q_{12}):=\int d{\bf s}_2\,e^{-\beta\mathcal{E}({\bf s}_2)}\delta({\bf s}_1\cdot{\bf s}_2/N-q_{12}).
\end{align}
The calculation is done by replicating both ${\bf s}_1$ and ${\bf s}_2$, thus introducing two sets of replicas: ${\bf s}_1^{(a)}$ for $a=1,\ldots,n$ and ${\bf s}_2^{(b)}$ for $b=1,\ldots,m$ with two replica indices $m,n$. The most general correct $(n+m)\times (n+m)$ overlap matrix for $T_s<T$ is \cite{franz1998effective, Franz_coupled_97, Barrat_bifurcation_95}:
\begin{align}
&\hat{Q}=\begin{pmatrix}
    \overbrace{\hat{Q}^{11}}^{n} & \overbrace{\hat{Q}^{12}}^{m}\\
    \hat{Q}^{21} & \hat{Q}^{22}
\end{pmatrix},\quad \hat{Q}^{11}_{ab}=\delta_{ab},\quad \hat{Q}^{12}_{ab}=q_{12}\,\delta_{1a}
\end{align}
and $\hat{Q}^{22}$ has a 1RSB structure identical to \eqref{eq:structure_1rsb_matrix} where (quite confusingly) the new parameters are: $m$ for the number of replicas, $x$ for the number of replica blocks, and $q_0,q_1$ as before. Then the FP potential reads \cite{cavagna1997structure}:
\begin{align}
\begin{split}
    V(q_{12},\beta)&=\text{max}_{q_0,q_1,x}-\frac{1}{2\beta}\Bigg\{\beta^2q_{12}^p+\beta^2(x-1)\frac{q_1^p}{2}-\beta^2x\frac{q_0^p}{2}+\log(1-q_1)\\
    &+\frac{1}{x}\log\left(1+x\frac{q_1-q_0}{1-q_1}\right)+\frac{q_0-q_{12}^2}{1-q_1+x(q_1-q_0)}\Bigg\}.
\end{split}
\end{align}

\begin{figure}[t!]
\centering
\includegraphics[width=0.6\textwidth]{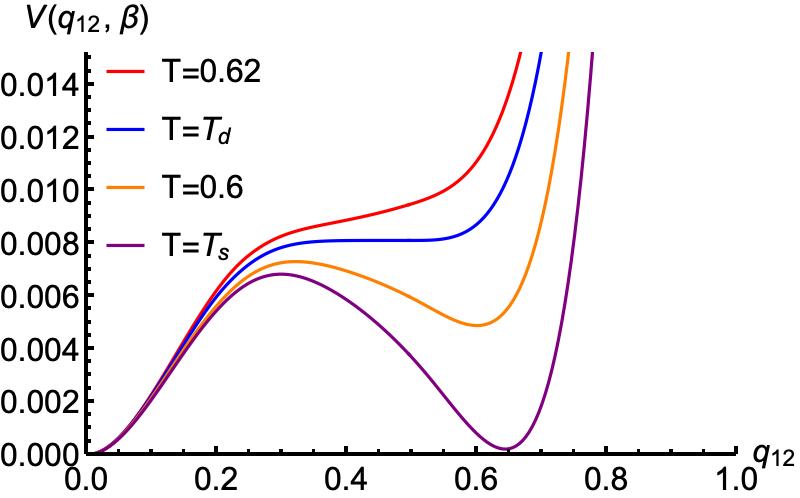}\\
\caption{Picture of the Franz-Parisi potential for the pure spherical $p$-spin model with $p=3$, at various temperatures, including $T_d$ and $T_s$. }
\label{fig:fp_potential}
\end{figure}

By optimizing over $x,q_0,q_1$ one sees that there are three regions as one increases $q_{12}$ from 0 to 1: for $q_{12}$ close to 0 we have the RS solution with $x=1$, in the middle we have $x\neq 0,1$ and $q_1\neq q_0$, and for $q_{12}$ large enough we have the RS solution $q_1=q_0$ \cite{Barrat_bifurcation_95, cavagna1997structure}. This can be obtained numerically by studying the stability of the RS solution \cite{Barrat_bifurcation_95, cavagna1997structure}. Here we limit ourselves to notice that the rightmost minimum of the potential $V$ (see Fig.~\ref{fig:fp_potential}) corresponds exactly to the metastable (TAP) states that contribute maximally to the free energy in Eq.~\eqref{eq:pspin_tap_eq_Z_Phi}. Such minimum is given by $q_0=q_1=q_{12}$ (and thus it lives in the RS regime): indeed, we have
\begin{align}
\partial[ V(q,\beta)|_{q_0=q_1=q}]/\partial q=0\Rightarrow 2 q^2 - p\,q^p \beta^2 + p\,q^{1 + p} \beta^2=0
\end{align}
which, after careful manipulations, is equivalent to \eqref{eq:system_eps_q}: this equation fixes the self-overlap (at inverse temperature $\beta$) of the metastable states that maximize the generalized free energy $\Phi$. Then if we define $r(q,\beta):=2 q^2 - p\,q^p \beta^2 + p\,q^{1 + p} \beta^2$ we recover the dynamical and statical transitions as (see Fig.~\ref{fig:fp_potential}):
\begin{align}
    \begin{cases}r(q_d,1/T_d)=0\\
    \partial_qr(q_d,1/T_d)=0
    \end{cases},\quad\quad
    \begin{cases}r(q_s,1/T_s)=0\\
    V(q_s,1/T_s)=0.
    \end{cases}
\end{align}
Let us finally remark that the height of the potential at this local minimum is related to the configurational entropy, so that the static transition at $T_s$ coincides precisely with the vanishing of the configurational entropy, as in the previous section.

\subsection{Beyond the pure model: mixed models}
\label{sec:mixed_models}

The model we have presented before corresponds to the simplest Gaussian random landscape, which is isotropic and homogeneous. It is isotropic because its law depends only on the distance between two points (i.e. the overlap) and not on their position in space; it is homogeneous because $\mathcal{E}(a{\bf x})=a^p\mathcal{E}({\bf x})$. Then, Euler's homogeneous function theorem implies exactly Eq.~\eqref{eq:intro_homog_pspin} applied to this specific example. However, in general, if this homogeneity ceases to hold, there is a priori no explicit relation between $\mathcal{E}$ and its gradient. In particular, recall that it is this property which implies that the Lagrange multiplier $\lambda$ is proportional to the energy. This in turn has profound consequences for the energy landscape: at a certain energy the local minima all have the same extensive number of unstable directions. Basically, the parameter $\lambda$ is not free, it is automatically fixed by the chosen energy level. It is this peculiar property that gives to the pure model such a nice and clear phenomenology. But how can we create a model where both the energy and the Lagrange multiplier (which controls the instability index) can be tuned ? The simplest thing to do is to break the homogeneity of the pure $p$-spin model by summing two $p$-spins with different values of $p$. In general, we can define an isotropic spherical random Gaussian field as:
\begin{align}
\mathcal{E}:\mathcal{S}_N(\sqrt{N})\to\mathbb{R},\quad \mathbb{E}[\mathcal{E}({\bf x})]=0,\quad \mathbb{E}[\mathcal{E}({\bf x})\mathcal{E}({\bf y})]=Nf\left(\frac{{\bf x}\cdot{\bf y}}{N}\right),\quad f(0)=0.
\end{align}

\begin{figure*}[t!]
  \centering
  \begin{subfigure}[b]{0.49\textwidth}
    \centering
    \includegraphics[width=\linewidth]{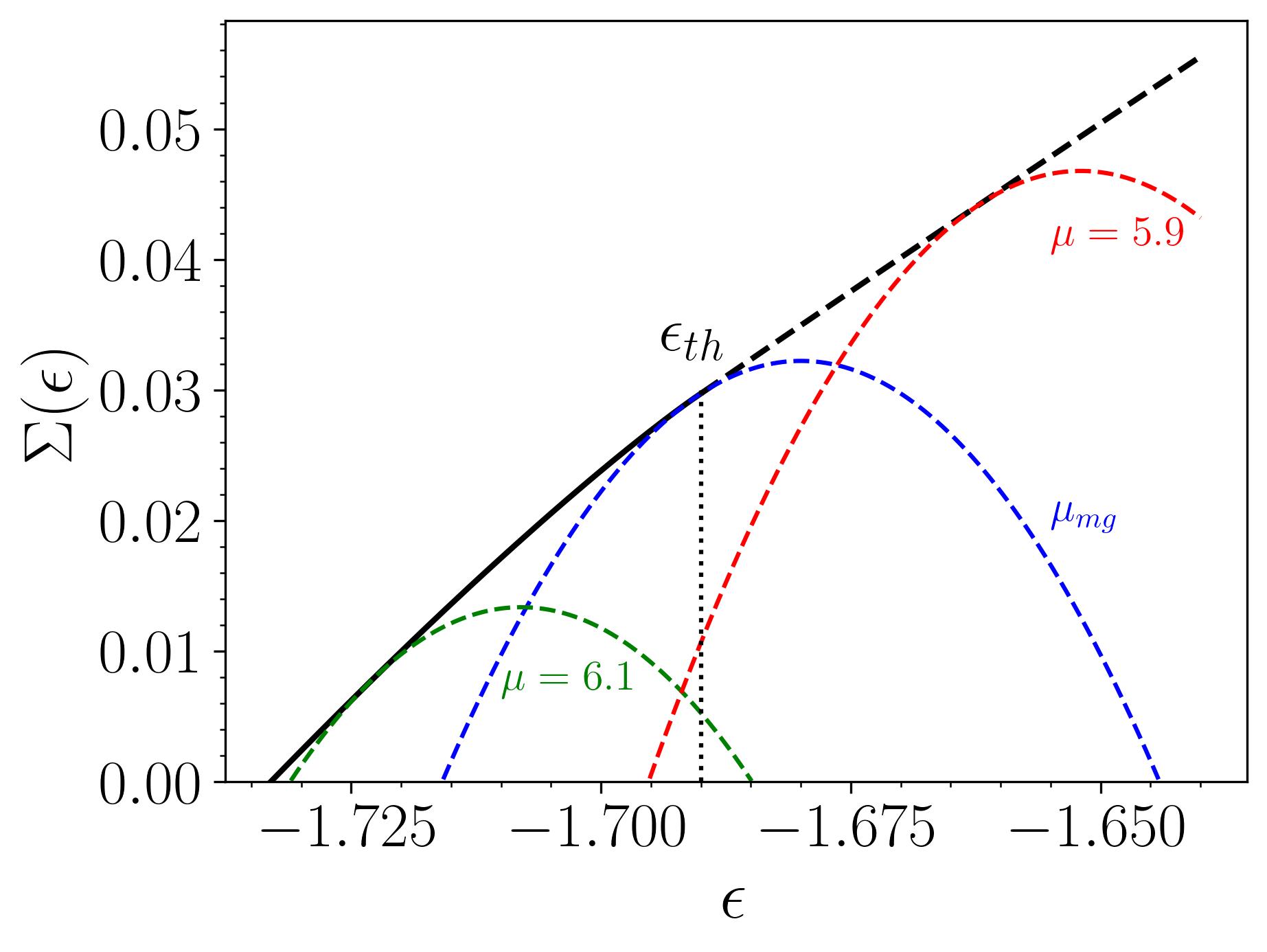}
  \end{subfigure}
  \hfill
  \begin{subfigure}[b]{0.49\textwidth}
    \centering
    \includegraphics[width=\linewidth]{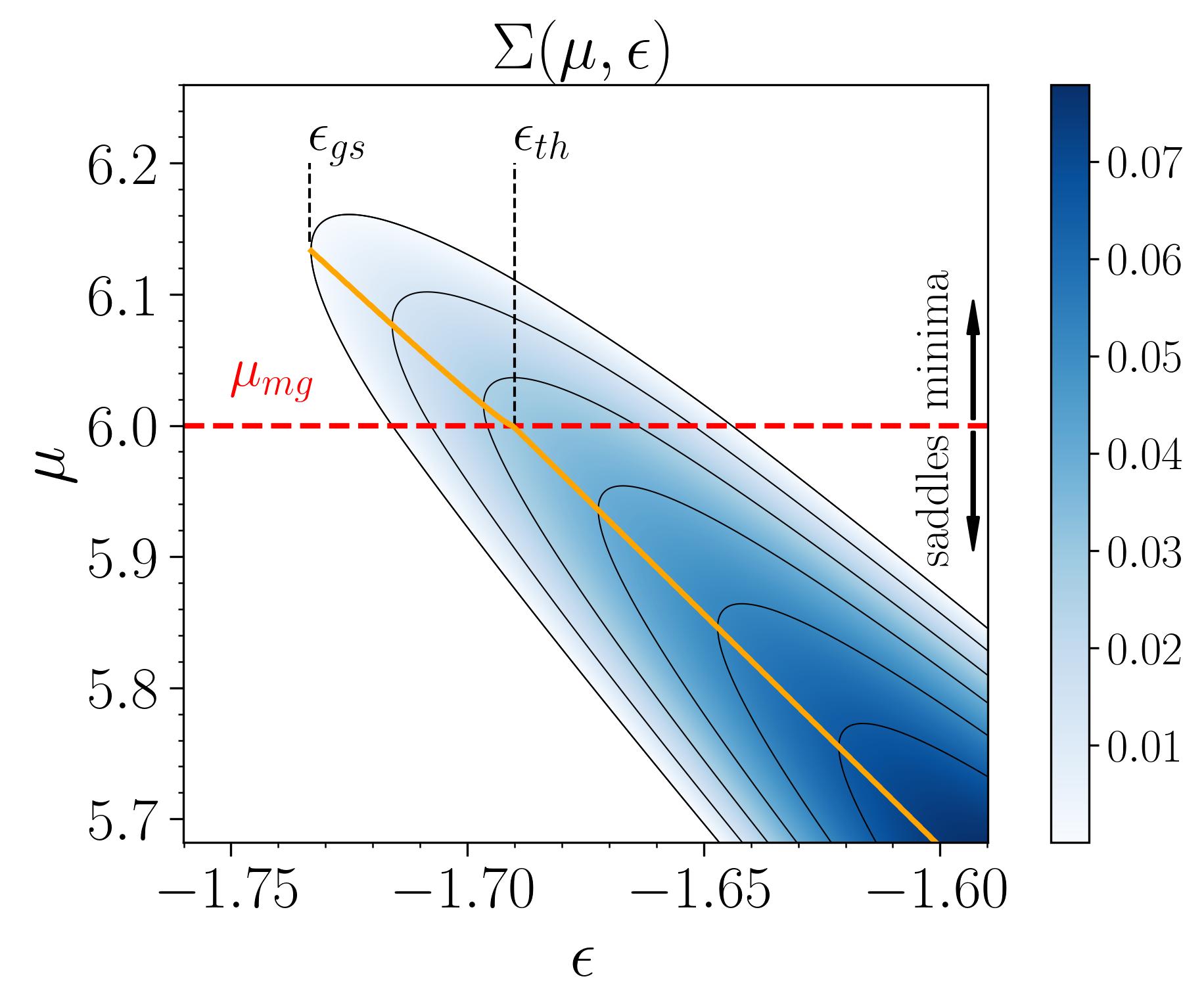}
  \end{subfigure}
  \caption{In the figure we use the $(3+4)$ spherical model, with $f(x)=(x^3+x^4)/2$. \textit{Left}. Plot of $\Sigma(\epsilon):=\text{max}_\mu\Sigma(\epsilon,\mu)$. The black line is the envelope of the complexity as a function of $\epsilon$ with free parameter $\mu$. The continuous black line represents \textit{local minima} while the dashed part \textit{extensive saddles}, that is, saddles with an extensive number of negative eigenvalues of the Hessian. Notice that $\epsilon_{th}$ is defined as the energy where the maximum is achieved at $\mu\to\mu_\text{mg}$, and that there is a broad range of energies that contain marginal minima at $\mu_\text{mg}$. 
  \textit{Right}. Color plot of $\Sigma(\epsilon,\mu)$ (where it is positive). The yellow line is the envelope: for any fixed $\epsilon$, we take the $\mu$ that maximizes the complexity. We independently reproduced the plots by looking at Refs.~\cite{folena2020rethinking, kent2024arrangement} respectively.}
  \label{fig:complexities_mixed_pspin}
\end{figure*}

If the random field is homogeneous, then there must exist a real number $k>0$ such that $f(a)=a^kf(1)$ for any $a$, automatically proving that $f$ is a monomial. Hence a form of the type $f(q)=\sum_{p>0} a_pq^p$ with $\sum_{p>0}a_p$ finite breaks the homogeneity (if at least two $a_p$'s are non-zero). This type of model is called a \textit{mixed} $p$-spin model  \cite{MixedModelNieuwenhuizen95, Barrat_bifurcation_95,Leuzzi_mixed_2004, Leuzzi_mixed_2006, benarous_mixed_2019, Auffinger_Chen_mixed_landscape_2018}, and more recently it has gained a lot of attention from recent work \cite{folena2020rethinking}, and subsequent works \cite{Folena_gd_simul_2021, folena_zamponi_weak_2023, folena_these_2020, tersenghi_seb_mixed_2025}. From a landscape point of view, the interesting thing about mixed models is that we can express the annealed complexity as a function of both the energy density and the Lagrange multiplier:
\begin{align}
\Sigma(\epsilon,\mu)=D(\mu)-\frac{\epsilon^2(f''(1)+f'(1))+2\epsilon\,\mu\,f'(1)+f(1)\mu^2}{2\left[f(1)(f'(1)+f''(1)) -f'(1)^2 \right]}
\end{align}
with 
\begin{align}
\begin{split}
D(\mu)&=\text{Re}\Bigg\{
\frac{\mu}{\sqrt{\mu^2-\mu^2_{\text{mg}}} + \mu}
        + \log\left(\sqrt{\mu^2-\mu^2_{\text{mg}}} + \mu\right)
        + \frac{1}{2}\log\left(\frac{1}{f'(1)}\right)
        - \log(2) 
\Bigg\}
    \end{split}
\end{align}
and $\mu_\text{mg}=2\sqrt{f''(1)}$. The subscript "mg" stands for marginal, as it controls the stability of the fixed points counted: if $\mu>\mu_\text{mg}$ they are overwhelmingly minima, whereas for $\mu<\mu_\text{mg}$ they are saddles or maxima. At the value $\mu=\mu_\text{mg}$ they are marginal equilibria, meaning that the leftmost edge of the support of the spectrum touches 0 in the $N\to\infty$ limit. Like for the pure model, also in this case the Hessian at fixed $\mu$ belongs to the GOE ensemble, which has a spectrum converging to the semi-circular law with eigenvalue density $\rho$ given by:
\begin{align}
    \rho(x)=\frac{2\sqrt{(x-\mu)^2-\mu_\text{mg}^2}}{\pi\mu_\text{mg}}.
\end{align}

As we have already pointed out, the interesting feature of this model is that we can tune the energy and the stability (of the counted equilibria) independently. This leads to a richer landscape with respect to the \textit{pure} model, see Fig.~\ref{fig:complexities_mixed_pspin}. In the figure we show that in particular there is a whole range of energies where there is an exponential abundance of marginal minima with $\mu=\mu_\text{mg}$. In the pure case, the color plot of Fig.~\ref{fig:complexities_mixed_pspin} \textit{right} would collapse to the yellow line. This property has important consequences on the dynamics of this \textit{mixed} model, that differentiates it from its \textit{pure} counterpart, where $f(q)$ is a monomial. Indeed, it is important to point out that many of the features presented above are peculiar to the pure model alone. First of all in the pure model the threshold separates saddles from minima, while we see from Fig.~\ref{fig:complexities_mixed_pspin} that the mixed model has marginal and local minima also above threshold. The most striking difference is probably in the out-of-equilibrium dynamics. Indeed it was initially observed in \cite{folena2020rethinking}, and later confirmed in \cite{Folena_gd_simul_2021, folena_zamponi_weak_2023, tersenghi_seb_mixed_2025} that there is not a unique threshold level that attracts the dynamics. In particular, the pure model presents weak ergodicity breaking \cite{Cugliandolo_1995_weak}: if we prepare the system at any temperature $T>T_d$ and we run a gradient descent (i.e. we switch off the temperature) the system ages towards marginal minima at $\epsilon_{th}$, while forgetting any initial condition (i.e. $\lim_{t\to\infty}C(t_w+t,t_w)=0,\,\,\forall t_w$). For the mixed model it was verified numerically \cite{folena_zamponi_weak_2023, tersenghi_seb_mixed_2025} that $T_d$ plays no special role: the system reaches asymptotically marginal minima that don't correspond to the threshold ones, and the system maintains memory of the initial condition, for any temperature $T>T_{SF}$ ("SF" for state following). In particular, the system can go below $\epsilon_{th}$ (as defined in Fig.~\ref{fig:complexities_mixed_pspin}, that is, the energy where the complexity achieves its maximum at $\mu\to\mu_{mg}$). This phenomena take the name of \textit{strong ergodicity breaking} \cite{Folena_gd_simul_2021, tersenghi_seb_sk_2020}. An interesting open problem is to try and understand what is the relation between the initial temperature (at which the system is prepared) and the energy of marginal minima reached asymptotically by gradient descent. However, the results of \cite{kent2024arrangement} deepen even more the issues. Indeed, in \cite{kent2024arrangement} the distribution of pairs of stationary points at given energies $\epsilon_0,\epsilon_1$, Lagrange multipliers $\mu_0,\mu_1$ and overlap $q$ is studied. The key result is that the neighborhoods of marginal minima are different below, at, and above the threshold: only at $\epsilon_{th}$ the marginal minima are connected by sub-extensive barriers and are found arbitrarily close to each others. Instead, above and below threshold, marginal minima are far apart and separated by extensive barriers (there is a gap in the overlap). Hence, the picture of a marginal manifold \cite{Kurchan_phase_space_1996} of marginal minima with sub-extensive barriers is only true at $\epsilon_{th}$ \cite{kent2024arrangement}. But if gradient descent can converge to marginal minima both above and below threshold (depending on the initial condition), it means that the neighborhoods of these points are not so important \cite{kent2024arrangement}. Let us finally also mention that besides the dynamics and the energy landscape, mixed spherical models can have different replica symmetry-breaking schemes \cite{MixedModelNieuwenhuizen95, Leuzzi_mixed_2004, Leuzzi_mixed_2006} as well as temperature chaos, bifurcations and level crossing \cite{kraza_following_2012, Barrat_bifurcation_95, folena_these_2020}.

\section{Summary of contributions}
\label{sec:sumamry_contributions}
This thesis is organized into three subtopics: Chapters~\ref{chapter:non_reciprocal} and Chapter~\ref{chapter:scs} treat the problem of comparing dynamics and statistics of equilibria in models with non-reciprocal interactions; Chapter~\ref{chapter:energy_landscapes} has the scope of investigating the structure of the energy landscape of the pure spherical $p-$spin by means of two (statical) approaches: curvature driven paths between fixed points, and triplets of fixed points; Chapter~\ref{chapter:rmt_} is motivated by the problem of studying paths between local minima in Chapter~\ref{chapter:energy_landscapes}, and it aims at computing the overlaps between spiked, correlated GOE matrices (which are the Hessians of local minima, in the context of Chapter~\ref{chapter:energy_landscapes}). Below we give a summary of results. 

\subsection{Chapter 2}
\label{summary:non_reciprocal}
\noindent\textbf{The question.} With this Chapter we want to investigate whether the chaotic dynamics of simple high-dimensional toy models of random neural networks with non-reciprocal interactions can be explained from a static approach, namely from the complexity of equilibria. This is an important question because static calculations usually prove easier than solving the DMFT. Dynamical and static approaches are tightly connected for the gradient descent in the model discussed in Sec.~\ref{sec:p_spin_model}, i.e. the pure $p$-spin, but whether such connections exist when the system's interactions are taken to be asymmetric is an open and important problem \cite{CugliandoloNonrelax97, wainrib2013topological, Fyodorov_2016, fyodorov2016nonlinear, ros2023generalized}. This Chapter is based on \cite{us_non_reciprocal_2025}.\\

\noindent\textbf{The model.} We consider a class of Gaussian models, previously studied in Refs.~\cite{CugliandoloNonrelax97, Fyodorov_2016}, that describe the evolution of (large) $N$ units ${\bf x}\in\mathbb{R}^N$ according to the ODE $d{\bf x}/dt=-\lambda({\bf x}){\bf x}+{\bf f}({\bf x})$, with:
    \begin{align*}
        \mathbb{E}[{\bf f}({\bf x})]=\frac{J}{N}\sum_i x_i,\quad \text{Cov}[f_i({\bf x}),f_j({\bf y})]=\delta_{ij}\Phi_1\left(\frac{{\bf x}\cdot{\bf y}}{N}\right)+\frac{x_jy_i}{N}\Phi_2\left(\frac{{\bf x}\cdot{\bf y}}{N}\right).
    \end{align*}
    This class of models is general and can account for gradient and non-gradient dynamics depending on the choice of $\Phi_1,\Phi_2$, with purely non-gradient dynamics when $\Phi_2\equiv 0$, and gradient dynamics when $\Phi_2=\Phi_1'$. We consider two classes of models: spherical models, when $\lambda({\bf x})$ is a Lagrange multiplier enforcing the constraint ${\bf x}^2=N$, or confined models when $\lambda({\bf x})$ is a confining potential depending on ${\bf x}^2$.\\

\noindent\textbf{Calculation 1: the complexity.} We compute both the quenched Replica Symmetric and the annealed complexities of stationary points for any choice of $\Phi_1,\Phi_2,\lambda,J$. The complexity is computed at fixed order parameters: $m=\sum x_i/N,\,q=\sum x_i^2/N,\,\lambda$. For the spherical models one chooses $q=1$ and $\lambda$ controls the stability of the stationary points; for confined models instead $\lambda$ depends on $q$, which then controls the stability of stationary points. \\

\noindent\textbf{Calculation 2: the dynamics.} Following \cite{ChaosSompo88,crisanti_path_2018} we show that when $\Phi_2\equiv 0$, the TTI solution to the DMFT can be characterized exactly, leading to a system of equations in $m,\lambda\,(\text{or}\,\lambda(C_0)),C_\infty$, where $C_0$ is the asymptotic value of the autocorrelation at equal times (which is equal to 1 in the spherical case), and $C_\infty$ at infinite time difference.\\

\noindent\textbf{Result 1: dynamics and complexity for $\Phi_2\equiv 0$.} We focus on the confined model given by $\Phi_1(u)=2g^2u^2$, with $g$ the interaction strength, and $\lambda={\bf x}^2/N-\gamma$, with $\gamma\geq 0$. We give explicit formulas for the dynamical order parameters $m,C_0,C_\infty$ and we integrate the DMFT equations to show that indeed the TTI is the correct solution for large times. We find a rich dynamical phase diagram (see Fig.~\ref{fig:dmft_phase}) with three phases: FFP (ferromagnetic fixed point); FC (ferromagnetic chaos); PC (paramagnetic chaos). In each phase we examine the complexity of equilibria at the values of parameters $m$ and $q=C_0$ selected by the dynamics, to see if anything special happens to the complexity at these values. While the FC-to-FFP transition is correctly predicted by the Kac-Rice calculation, the transition from FC-to-PC is not. In particular, our results are not consistent with the interpretation that the dynamics lingers among the "least unstable" ferromagnetic fixed points (cf. Fig.~\ref{fig:rnn_dmft_phase_kac}). A closer comparison within the PC phase (with $m=J=0$) reveals that the system does not converge to the shell (i.e. to the value of $q$) in configuration space hosting the most abundant fixed points, nor to the one where the complexity vanishes. Nevertheless, we find some interesting connections: the complexity evaluated at the dynamical order parameter $C_0$ in the PC phase is invariant with respect to $g,\gamma$ (in particular, only two families of solutions satisfy this property, one of which is the dynamical one); in the FC phase, the dynamical order parameter $C_0$ is the one where the annealed and quenched complexities coincide. We compared the scalings of the maximal Lyapunov exponent and the complexity across the FC-to-FFP transition, contradicting a previously conjectured link between the two \cite{wainrib2013topological}. \\

\noindent\textbf{Result 2: dynamics and complexity for $\alpha>0$.} 
We further investigate the connection between dynamics and Kac-Rice for the spherical model with $\Phi_1(u)=2g^2u^2$ and $\Phi_2=\alpha\,\Phi_1'$ with $\alpha>0$ \footnote{i.e. we introduce \textit{some} reciprocity among interactions}. Numerical integration of the DMFT equations reveals that the system is TTI (and seemingly chaotic) and does not display aging for any $\alpha<1$, see Fig.~\ref{fig:rnn_two_alphas_dmft}. We are moreover able to conjecture a simple linear behavior for the asymptotic value $\lambda_\infty(\alpha)$\footnote{with $\lambda_\infty(1)$ corresponding to the threshold energy of the $p$-spin model discussed in Sec.~\ref{sec:p_spin_model}.}, see Fig.~\ref{fig:rnn_kac_spm}. Additionally we show that for $1\geq \alpha>\alpha_c$ there are exponentially many stable equilibria (a similar situation as in \cite{fyodorov2016nonlinear}), but that the dynamics lies in a region of unstable ones, for any $\alpha<1$.
\\

\noindent\textbf{Result 3: confined mixed models.} Finally, we consider a class of confined mixed models \cite{fournier2023statistical}, with $J=0$, $\lambda(q)=\mu+q$ (with $\mu$ a tunable parameter and $q={\bf x}^2/N$), and $\Phi_1(u)=2g^2u^2+g^2u$, $\Phi_2\equiv 0$. These models have a transition to chaos at $g=\mu$, from the stable fixed point ${\bf x}={\bf 0}$ to a regime of paramagnetic chaos. We show that the complexity of unstable equilibria can have a very different behavior depending on $\mu$ (see Fig.~\ref{fig:confined_mixed_mus}): for $\mu\leq2/3$ the transition to chaos corresponds to an exponential increase of unstable equilibria (or, equivalently, to a topology trivialization when we transition from chaos to the stable fixed point); instead for $\mu> 2/3$ the transition to chaos still corresponds to an increase in complexity close to $q=0$, but it is not accompanied by a topology trivialization transition. This means that in the non-chaotic phase, there is a resilience gap \cite{Fyodorov_resilient_2021} that develops: the complexity is still positive above a certain value of $q>0$, and the fixed point at ${\bf x}={\bf 0}$ is isolated. Then, for $g$ low enough, the complexity eventually vanishes at a value $q>0$, and the only fixed point that survives is the origin (see Fig.~\ref{fig:confined_mixed_mus}~\textit{left}).

\subsection{Chapter 3}
\label{summary:scs}
\noindent\textbf{The question.}
In this Chapter we continue on the same venue of Chapter 2, but for the model of random neural network introduced by Sompolinsky, Crisanti, Sommers in \cite{ChaosSompo88}, often referred to as SCS model. In particular, we wanted to tackle the problem of computing its complexity, which is a long-standing challenge \cite{wainrib2013topological, HeliasFP2022}. Recent work has achieved a similar task in the context of Generalized Lotka-Volterra equations \cite{ros2023generalized, RosEcoQuenched2023}. This Chapter is based on \cite{pacco_scs_2025}, in preparation.\\

\noindent\textbf{The model.} The model consists of $N$ interacting units encoded in a vector ${\bf x}$ such that $d{\bf x}/dt=-{\bf x}+g\sum_j\phi(x_j)J_{ij}$ where $J$ has Gaussian entries with $\mathbb{E}[J_{ij}]=J_0/N$ and $\text{Cov}[J_{ij},J_{kl}]=(\delta_{ik}\delta_{jl}+\alpha\,\delta_{il}\delta_{jk})/N$. The non-linearity is usually taken to be $\phi=\tanh$ in the literature. Several works have studied its dynamics, both with and without external inputs \cite{crisanti_path_2018, MastrogiuseppeLink2018, AnnibaleDynamics2024, Sompo_transition_2015}. In order to facilitate the calculation of the complexity, in our work we consider $\phi(x)=(\text{sign}(x)-x)H(|x|-1)+x$, with $H$ the Heaviside step function.\\

\noindent\textbf{Calculation 1: the complexity.} With our choice of $\phi$ we overcome the challenges faced when using $\tanh$, allowing us to obtain the quenched complexity for $\alpha=0$ and the annealed complexity for any $\alpha\in[0,1]$. In particular, the random matrix problem associated to the Jacobian of fixed points drastically simplifies. Indeed, for our choice, when $N\to\infty$ the Jacobian spectrum converges to a uniform distribution supported within an ellipse in the complex plane \cite{sommers1988spectrum}. Thanks to this choice, we can calculate the complexity (both annealed and quenched RS) by introducing several order parameters, among which a parameter $D_\phi=\sum_{x_i:|x_i|\leq 1}1/N$ that counts the number of units that are smaller than 1. This parameter controls the extensive fraction of unstable modes of the Jacobian: for $D_\phi>1/g^2$ the fixed points are unstable, and for $D_\phi<1/g^2$ they are stable. \\

\noindent\textbf{Calculation 2: the dynamics.} When $J_0=0=\alpha$, it is known that the system is chaotic for $g>1$, and stable for $g<1$, and that the TTI solution of the DMFT is correct for long times \cite{ChaosSompo88}. In this Chapter, we repeat the analysis of the TTI solution at long times for our choice of $\phi$. We show that, depending on the values of $g,J_0$, the system can have PFP (paramagnetic fixed point), PC, FC, FFP phases, and that the phase diagram is qualitatively the same as for $\phi=\tanh$, see Figs.~\ref{fig:dmft_scs_phase}, \ref{fig:dmft_scs_phase_sign}. \\

\noindent\textbf{Results: dynamics and complexity for $\alpha=0$.} We make thorough comparisons between the annealed complexity and the dynamical order parameters of the TTI solution. As for Chapter 2, we show that, in general, the complexity cannot be used to predict the dynamics, see e.g. Fig.~\ref{fig:complexity_cav_dmft_3}. For this reason, the interpretation that the system is "surfing" between the most numerous fixed points does not hold. The PC-to-PFP transition can be seen as a topology trivialization in the number of unstable fixed points, see Fig.~\ref{fig:para_kac_scs}. However, at variance with Chapter 2, we find no particular invariance of the complexity in the shell chosen by the dynamics. The FC-to-FFP transition is correctly predicted by Kac-Rice, happening at a value $J_0$ strong enough to make two stable ferromagnetic fixed points appear, see Fig.~\ref{fig:complexity_cav_dmft_3}. Similarly to Chapter 2, we show that in the FC phase there is only one point where annealed and quenched complexities coincide, and that point is the only one relevant for a comparison with the dynamics. However, we also show that the dynamics does not dwell in the shell of fixed points corresponding to that point, see Figs.~\ref{fig:two_plots_cav_dmft_scs}, \ref{fig:ferro_scs_comparison_n1}.

\subsection{Chapter 4}
\label{summary:energy_lands}
In this Chapter we consider the pure spherical $p$-spin model, introduced in Sec.~\ref{sec:p_spin_model}. Consistently with Sec.~\ref{sec:p_spin_model}, we use ${\bm\sigma}\in\mathbb{R}^N$ to refer to configurations ${\bm\sigma}^2=1$, and we use ${\bf s}=\sqrt{N}{\bm\sigma}$ when they are not rescaled. By $\epsilon$ we denote the energy density $\epsilon=\mathcal{E}/N$ where $\mathcal{E}$ is the Hamiltonian, defined in Eq.~\eqref{eq:chap1_def_tensor_S}. The goal is to use a static approach to advance our understanding of the typical geometric structure of the energy landscape, building on previous works \cite{Crisanti_TAP_pspin_95, cavagna1998stationary, cavagna1997investigation, cavagna1997structure, ros2019complex, ros2019complexity, ros2021dynamical}. With our static approach we want to better understand the distribution of stationary points (and their stability) below $\epsilon_{th}$, as well as characterizing the energetic barriers between them. We use two approaches, based on \cite{pacco_triplets_2025, pacco_quenched_triplets_2025, pacco2024curvature}.\\

\noindent \textbf{Question 1.} Motivated by works on mildly supercooled particles \cite{xu2010anharmonic,widmer2008irreversible,khomenko2021relationship}, the goal is to assess to which extent, in the pure spherical $3-$spin model, information from
the local Hessian matrices around stationary points can be leveraged to find good paths connecting them. Hence, we consider a class of interpolating paths between stationary points below $\epsilon_{th}$\footnote{the threshold energy of the $p-$spin found in Sec.~\ref{sec:chap1_topo_compl}} and we investigate when local information on the landscapes' curvature allows to find paths with an energy barrier lower than the one of the geodesic path. \\

\noindent\textbf{Calculation 1: curvature-driven pathways.} The starting point is a deep local minimum of energy density $\epsilon_0\in(\epsilon_{gs},\epsilon_{th})$, the final one is a stationary point extracted conditionally to the first one. According to \cite{ros2019complexity} the second one can be either a local minimum or a rank-1 saddle, depending on its energy density and its overlap with the first one. We consider paths of the form:
\begin{align*}
    {\bm\sigma}[\gamma;f]=\gamma{\bm \sigma}_1 + \beta[\gamma; f]\; {\bm \sigma}_0 + f(\gamma){\bf v},\quad \gamma\in[0,1],
\end{align*}
where ${\bf v}$ is a perturbation to the geodesic path (i.e., when $f\equiv 0)$; $f(\gamma)$ is an injective function that controls the strength of the perturbation and $f(0)=f(1)=0$; $\beta$ (not the inverse temperature here) fixes the spherical constraint $({\bm\sigma}[\gamma;f])^2=1$ with boundary conditions ${\bm\sigma}[0;f]={\bm\sigma}_0$ and ${\bm\sigma}[1;f]={\bm\sigma}_1$. We compute the energy density profile along ${\bm\sigma}[\gamma;f]$ for $p=3$~\footnote{in which case the calculation is feasible, both annealed and quenched. In Ref.~\cite{pacco2024curvature} we verified that the two protocols give the same answer.}. We use two ways to leverage local information on the landscape curvature: either ${\bf v}\to{\bf v}^{a}_{\text{soft}}$ which encodes for the eigenvector associated to the softest mode at the starting stationary point ($a=0$) or at the ending one ($a=1$); or ${\bf v}\to{\bf v}_\text{Hess}$ which encodes for the orthogonal component of the gradient along the path. To compute these energy profiles, we had to account for the overlaps between the eigenvectors of the two Hessian matrices (at the initial and final points), and this is done in more generality in the next chapter~\ref{summary:rmt}.\\

\noindent\textbf{Result 1: energy barriers.}
We found that geodesic pathways always lie above threshold, see Fig.~\ref{fig:geodesic_barriers}. Then we compared the energetic barrier along the geodesic with the one found using the perturbed paths specified above. Surprisingly, the direction of softest local curvature at ${\bm\sigma}_0$ is never a reliable predictor of low-energy paths, except in the case in which the direction of softest curvature corresponds to an isolated mode of the Hessian at ${\bm\sigma}_1$, see Fig.~\ref{fig:density_plots}. However, other information encoded in the local
Hessian through ${\bf v}_\text{Hess}$ does allow the identification of pathways associated with lower energy barriers, see Fig.~\ref{fig:energy_barrier}. We observe that all optimized paths lie above threshold, except in the case that the arrival point is a rank-1 saddle, in which case both ${\bf v}^1_\text{soft}$ and ${\bf v}_\text{Hess}$ allow to find below threshold pathways. This analysis suggests that typical deep minima (which are distant in configuration space) are separated by above threshold barriers, a result already pointed out in \cite{rizzo2021path}.  \\

\noindent\textbf{Question 2. }One of the most striking observations made in recent years, both in experiments~\cite{durin2024earthquakelike, korchinski_thermal_2025} and in numerical simulations of elastic interfaces and interacting particles~\cite{Rosso_review_2021, scalliet2022thirty,tahaei2023scaling, de2024dynamical}, are {\em thermal avalanches}, i.e. the occurrence of a cascade of smaller activations following a slow activated nucleation. The question is then: can we find signatures of these phenomena within mean-field models ? To do this we imagine a series of activated jumps between three fixed points, and we look whether the transition rate associated to the second jump is larger than for the first one. If it is larger, then the dynamics is strongly correlated to the first jump. We translate this intuition into a static computation, by considering a series of three fixed points ${\bf s}_0,{\bf s}_1,{\bf s}_2$ extracted conditionally one after the other. Then the question becomes: is the landscape around ${\bf s}_1$ strongly influenced by the presence of ${\bf s}_0$? And more precisely, how does the landscape differ from a \textit{typical} point ${\bf s}_1$ extracted uniformly without conditioning ? In the present case, the landscape is probed by ${\bf s}_2$.\\

\noindent\textbf{Calculation 2: three-point complexity.} We compute the three-point complexity, defined as the complexity of triplets of stationary points ${\bf s}_0,\,{\bf s}_1,\,{\bf s}_2$ at energies $\epsilon_0,\epsilon_1,\epsilon_2$ below $\epsilon_{th}$ and overlaps $q={\bf s}_0\cdot{\bf s}_1/N$, $q_0={\bf s}_0\cdot{\bf s}_2/N$, $q_1={\bf s}_1\cdot{\bf s}_2/N$ \footnote{see Fig.~\ref{fig:landscape_three} for an easier representation}. In formula:
\begin{align}
\Sigma^{(3)}(\epsilon_2, q_0, q_1 | \epsilon_1,\epsilon_0, q)=\lim_{N\to\infty} \frac{1}{N}\mathbb{E}\left[\log\mathcal{N}_{{\bf s}_0 \,{\bf s}_1}(\epsilon_2, q_0, q_1 | \epsilon_1,\epsilon_0, q)\right]_{0,1},
\end{align}
where $\mathcal{N}_{{\bf s}_0,{\bf s}_1}$ is the number of such stationary points ${\bf s}_2$, and $\mathbb{E}[\cdot]_{0,1}$ indicates the order of the averages: first we extract ${\bf s}_0$, then ${\bf s}_1$ and then we average over the randomness; see around Eq.~\eqref{eq:enne} for precise definitions. This averaging over the randomness and over ${\bf s}_0$,${\bf s}_1$ has to be taken with care, in order to obtain typical (and not rare) distributions of the triplets. The replica method, applied to each of ${\bf s}_0,{\bf s}_1,{\bf s}_2$ can be used to handle such problem. In Sec.~\ref{sec:DoublyAnnealed} we explain that along some special lines annealed and quenched complexities coincide, but that in general they don't (although the difference is small) \cite{pacco_quenched_triplets_2025}.  \\

\noindent\textbf{Result 2: clustering of fixed points}
We use the distribution of ${\bf s}_2$ as a probe of the structure of the landscape in the vicinity of ${\bf s}_0$, ${\bf s}_1$. 
We identify two distinct scenarios: when the energy densities are small enough, the maximal three-point complexity reduces to the two-point complexity \cite{ros2019complexity}. Hence, the majority of the configurations ${\bf s}_2$ are arranged in the landscape in a way that is independent of ${\bf s}_0$. Instead, when the energy densities $\epsilon_1, \epsilon_2$ are large enough compared to $\epsilon_0$, the landscape displays different regimes when tuning the overlap $q$ between ${\bf s}_0$ and ${\bf s}_1$: (i) for small overlap $q$ (when ${\bf s}_0$ and  ${\bf s}_1$ are well separated in configuration space), the presence of ${\bf s}_0$ generates a \emph{depletion} of the stationary points ${\bf s}_2$ around ${\bf s}_1$; (ii) for intermediate $q$ we observe an anomalous \emph{accumulation} of the stationary points ${\bf s}_2$ near ${\bf s}_0$; (iii) for large $q$, the points ${\bf s}_2$ \emph{cluster} close to ${\bf s}_1$. For these larger values of $q$, close to a deep minimum, higher-energy stationary points are more densely packed than they are in typical regions of configuration space at that energy, far away from the deep minimum. We see this as a precursor of thermal avalanches, in the sense that a high energetic fixed point has many nearby fixed points at the same energy and large overlaps. A jump among these strongly correlated points is denoted as \textit{avalanche-like}. This is a signature of strong local correlations in the energy landscape. Fig.~\ref{Fig:2Ddensityplots} is a good summary of these landscape transitions. \\

\noindent\textbf{Result 3: dependence on the energies.}
For $p=3$ we analyzed, as a function of $\epsilon_0,\epsilon_1,\epsilon_2$, when there exist values of $q,q_0,q_1$ such that clustering occurs. We found that a particularly important point is $(\epsilon^*(\epsilon_0),q^*(\epsilon_0))$ from the two-point complexity, see the cusp in Fig.~\ref{fig:2D_plot_clustering}. This is the point that corresponds to the smallest energy where rank-1 saddles start to appear around ${\bf s}_0$, and $q^*(\epsilon_0)$ is their overlap with ${\bf s}_0$.  It turns out that we must have $\epsilon_{th}>\epsilon_1\geq\epsilon^*(\epsilon_0)$ and $\epsilon^*(\epsilon_0)\leq\epsilon_2\leq\epsilon^*(\epsilon_1)$ for clustering to occur. Instead for all other values of energies, we have that the landscape is easily interpreted in terms of the two-point complexity, and in that sense it is "memoryless", meaning that the landscape around ${\bf s}_1$ is not strongly influenced by the presence of ${\bf s}_0$, and in particular that, at fixed $q_1$, the maximum of the three-point complexity is achieved for $q_0=q\,q_1$, and it coincides with the two-point complexity. Hence, the dependence on $\epsilon_0$ is lost at the maximum.\\

\noindent\textbf{Result 4: equal energies.}
We found no signatures of clustering at equal energy densities $\epsilon_2=\epsilon_1=\epsilon_0=\epsilon$. This is a consequence of the fact that $\epsilon^*(\epsilon_0)>\epsilon_0$ always. Now, imagine a sequence of fixed points ${\bf s}_0,{\bf s}_1,{\bf s}_2$ at equal energies $\epsilon_{gs}<\epsilon<\epsilon_{th}$. If we denote by $q$ the overlap at which typically we find the closest local minimum ${\bf s}_1$ to ${\bf s}_0$, then we have that the closest local minimum ${\bf s}_2$ to ${\bf s}_1$ is also at overlap $q$, and $q_0=q^2$. We verified that these simple relations hold true even in the case of four consecutive points, and we conjecture that it should be true for any length. More specifically, if we consider the sequence ${\bf s}_0,\ldots,{\bf s}_{n-1}$ and we take ${\bf s}_i$ to be the closest minimum that is typically found near ${\bf s}_{i-1}$, then ${\bf s}_i\cdot{\bf s}_{i-1}/N=q$, and ${\bf s}_a\cdot{\bf s}_b/N=q^{|b-a|}$. Since $0<q<1$, this leads to loss of memory, meaning that we progressively forget of the previous configurations, the next closest one being determined solely by the two-point complexity, and the distance from the previous ones being multiplied by $q$.\\

\noindent\textbf{Result 5: hints on activated dynamics.}
With the results above we can speculate on the implications for the activated dynamics of the $p$-spin. It is reasonable to assume that the optimal energy barrier between minima grows with the distance between them in configuration space, since the larger is the distance, the larger is the amount of local rearrangements to connect them. This assumption is also supported by our results on curvature-driven pathways. Within this assumption, we consider an effective dynamics where the system jumps among closest minima at a given (possibly, the same) energy. 
Using our landscape's results we show
that jumps between local minima at \emph{small enough} energy density are "memoryless": they are characterized by a typical energy barrier that does not depend on configurations previously visited by the system. Based on this, it seems that thermal avalanches do not play a role in the activated dynamics of the $p$-spin at long times, at least when the system visits low energy configurations. However, it may play a role when escaping the minium itself. Indeed we find a precursor of the thermal avalanches: subsequent jumps \emph{from a deep minimum to high-energy minima} are not memoryless but display correlations, and large energy barriers are systematically followed by smaller ones. This clustered zone \textit{may} correspond to the \textit{hub} of \cite{folena_rare_2025}. Such abundance of high energetic saddles may help the system to reach above threshold energies, and decorrelate, providing many directions to descent the landscape into a new deep minimum. \\

\subsection{Chapter 5}
\label{summary:rmt}
\noindent\textbf{The question.}
This Chapter is motivated by Chapter~\ref{chapter:energy_landscapes}, in particular by the problem of determining the energy profile along perturbed geodesic pathways between local minima of the $p$-spin model. As we saw above, to lower the barriers we leverage local information of the Hessians at the two local minima. The two Hessians are correlated GOE matrices with a perturbed column and row (also denoted as spiked, correlated GOE matrices). Such perturbation can generate outliers in the spectrum of the GOE matrices. In order to compute our interpolating paths, we need to know the overlap (i.e. the squared dot product) between eigenvectors associated to any eigenvalues (in particular the isolated ones) of the two matrices, when $N\to\infty$. This Chapter, based on \cite{paccoros}, is about computing such overlaps, extending previous works \cite{bun2016rotational, bun2018overlaps}.\\

\noindent\textbf{The model.}
We consider pairs of random matrices ${\bf M}^{(a)}$, $a\in\{0,1\}$:
\begin{align}\label{eq:summary_matrix_form}
    \mathbf{M}^{(a)}=
    \begin{pmatrix}
     & && & m^{a}_{1\,N}\\
     &  &{\bf B}^{(a)}  & & \vdots\\
      & & & & m^{a}_{N-1\,N}\\
     m^{a}_{1\,N} && \ldots & m^{a}_{N-1\,N} &m_{N\,N}^{a}
    \end{pmatrix}
\end{align}
\noindent where all entries are Gaussians and ${\bf B}^{(a)}$ have zero mean elements with:
\begin{equation}
\mathbb{E}[B_{ij}^{a} \, B_{kl}^b]= \tonde{\delta_{a b} \frac{\sigma^2}{N}+ (1-\delta_{ab})\frac{\sigma^2_H}{N}}(\delta_{ik} \delta_{jl}+ \delta_{il} \delta_{jk}).
\end{equation}
Similarly, the entries $m^a_{i N}$ for $i<N$ have zero mean and correlations given by:
\begin{equation}
    \mathbb{E}[m_{iN}^a \, m_{kN}^b]= \tonde{\delta_{a b} \frac{\Delta^2_a}{N}+ (1-\delta_{ab})\frac{\Delta^2_h}{N}}\delta_{ik},
\end{equation}
and finally
\begin{align}
\mathbb{E}[m_{NN}^a]= \mu_a,\quad\quad\text{Cov}[m_{NN}^a, m_{NN}^b]= \tonde{\delta_{a b} \frac{v^2_a}{N}+ (1-\delta_{ab})\frac{v^2_h}{N}}.
\end{align}
The matrices ${\bf B}^{(a)}$ are decomposed as ${\bf B}^{(a)}={\bf H}+{\bf W}^{(a)}$ where ${\bf H}$ is a GOE matrix in common, and ${\bf W}^{(a)}$ are i.i.d. GOE matrices.\\

\noindent\textbf{Calculation 1: multiresolvent products.} In order to proceed with the computation of the overlaps, we need first to calculate the expected product of arbitrary powers of the resolvents:
\begin{align}
    {\bf \Pi}_{k,m}(z,\xi):=\mathbb{E}[{\bf G}_0^{k+1}(z){\bf G}_1^{m+1}(\xi)],\quad {\bf G}_a(z):=(z-{\bf H}-{\bf W}^{(a)}).
\end{align}
The result is Eq.~\eqref{eq:final_Pi_km}.\\

\noindent\textbf{Calculation 2: eigenvalues and eigenvector overlaps.} We compute the spectral properties of matrices of the type in \eqref{eq:summary_matrix_form}. The eigenvalue density (i.e. the bulk) is the Wigner's semicircle, with two isolated eigenvalues that may exist in the limit $N\to\infty$ depending on $\sigma,\mu_a,\Delta_a$; their expression is Eq.~\eqref{app:solutions}. If $\lambda^a,\,{\bf u}_{\lambda^a}$ denote an eigenvalue/eigenvector of ${\bf M}^{(a)}$, then the overlap is:
\begin{equation}
\Phi(\lambda^{0},\lambda^{1}):=N\,\mathbb{E}[\left( \mathbf{u}_{\lambda^{0}}\cdot \mathbf{u}_{\lambda^{1}}\right)^2],
\end{equation}
which remains of $\mathcal{O}(1)$ in the $N\to\infty$ limit when at least one of the two eigenvalues is in the bulk. If instead both of them are isolated, the correct self-averaging quantity is without $N$ in front. The computation of the overlaps is obtained from perturbative expansions of the function $\psi$ defined in Eq.~\eqref{eq:ExpProRe}.\\

\noindent\textbf{Results: eigenvector overlaps.} We found the overlaps for the following combinations of eigenvalues belonging to the two matrices: bulk-bulk, see Eq.~\eqref{eq:phi_bulk_bulk}; isolated-bulk, see Eq.~\eqref{eq:iso_iso_overlap}; isolated-isolated, see Eq.~\eqref{eq:phi_iso_bulk}. The bulk-bulk overlap was already found in \cite{bun2018overlaps}, but all other results are new. Each formula is verified with several numerical simulations.

\chapter{Non-reciprocal interactions: a class of solvable models}
\label{chapter:non_reciprocal}
In this chapter we consider high-dimensional random
non-gradient autonomous ODE’s, as toy models of random neural networks. We show that these models have a rich phase diagram, and that we can solve exactly the TTI (time translation invariant) solution of the DMFT (dynamical mean-field theory) and the quenched complexity. We compare the typical properties of the configurations belonging to the chaotic attractor, and compare with the statistics of fixed points of the dynamics found via Kac-Rice. This work expands on previous literature \cite{Fyodorov_2016, crisanti1987dynamics, ChaosSompo88, CugliandoloNonrelax97, AnnibaleDynamics2024, berthier2000two, wainrib2013topological, ben2021counting, Fyodorov_resilient_2021, fournier2023statistical}, showing that this model is an ideal playground to make explicit comparisons between the dynamical and statical properties of the system, thus providing quantitative answers to long-standing questions.  \\

\noindent\textit{Road-map}\\
\noindent In Sec.~\ref{sec:rnn_introduction} we introduce the problem. In Sec.~\ref{sec:rnn_general_model} we introduce the family of models and in Sec.~\ref{sec:rnn_topo_general} we show the general expressions of quenched and annealed complexities. In Sec.~\ref{sec:rnn_dynamics} we study a confined model with asymmetric couplings: we compare the exact solution of the DMFT with the complexity found via Kac-Rice. In Sec.~\ref{sec:rnn_alpha>0} we present some results on the case with partially asymmetric couplings for a spherical model. In Sec.~\ref{sec:confined_mixed} we study a confined mixed model, challenging the idea that transition to chaos is associated with a topology trivialization in the complexity. In Sec.~\ref{sec:rnn_future} we point out future directions and unsolved issues.\\

\noindent \textit{Aknowledgements}.\\
This is joint work with Pierfrancesco Urbani, Samantha J. Fournier and Valentina Ros, see Ref.~\cite{us_non_reciprocal_2025}. A special thanks to Samantha and Pierfrancesco, for solving and teaching me the DMFT, and whose code I used as a base to reproduce some of the plots presented here.

\section{Introduction}
\label{sec:rnn_introduction}
In Sec.~\ref{subsec:kac_rice_non_grad} we have already reviewed the importance of systems with many units interacting through random and \emph{non-reciprocal} interactions, and how the Kac-Rice formalism can be useful to count and classify the equilibria. In particular, we have seen that the minimal (that is, simplest) models of complex interacting systems can be written as 
\begin{align*}
    \frac{d{\bf x}}{dt}={\bf F}({\bf x}),\quad {\bf x}\in\mathbb{R}^N,
\end{align*}
where ${\bf F}$ is a high-dimensional random Gaussian field. We could then distinguish among two cases: if ${\bf F}$ comes from the gradient of an energy function, we can talk of an \textit{energy landscape}; if, however, ${\bf F}$ is "non-gradient", this notion of a landscape is missing. Such non-gradient systems are often referred to as having "asymmetric" or "non-reciprocal" interactions, in light of the fact that, when written as units interacting via a random tensor, the interactions are not symmetric under a permutation of the tensor indexes. As is usually done for these systems \cite{CugliandoloNonrelax97, Fyodorov_2016, ben2021counting, ros2019complex}, there is a control parameter, here denoted by $\alpha$ and called the "factor of asymmetry", such that $\alpha=1$ corresponds to the gradient case while $\alpha=0$ to the fully asymmetric case. These systems have often been used as models of random neural networks since the '80s~\cite{crisanti1987dynamics, hertz1986memory, ChaosSompo88, gardner1989phase, Schuster_suppression_92}, where it was already understood how non-reciprocity can lead to chaos~\cite{ChaosSompo88, crisanti_path_2018}.
Such models have gained importance both in the theoretical neuroscience communities~\cite{rajan2010stimulus, marti2018correlations, MastrogiuseppeLink2018, AnnibaleDynamics2024, HeliasMemory18, aguirre2022satisfiability, pereira2023forgetting} and, more recently, in machine learning~\cite{sussillo2009generating, fournier2023statistical, fournier2024generative}, since high-dimensional chaotic systems can be trained, serving as generative models. Systems with non-reciprocal interactions have also gained recent attention in the study of complex interacting ecosystems \cite{ros2023generalized, garnier2024unlearnable, fyodorov2016nonlinear, castedo2024generalised, arnoulx2024many, mahadevan2024continual, RosEcoQuenched2023, rogers2022chaos} and in the physics community in the context of spin models \cite{lorenzana_non_rec_spin_aging_2024, bertin_far_eq_2024, bertin_collective_2024, bertin_discontinuous_oscil_2024, bertin_hidden_2024, vitelli_non_rec_ising_2025, vitelli_when_is_nonrecip_2025}. \\

\noindent In this Chapter we are particularly interested in the "dynamics" versus "statics" debate. Indeed, prototypical models of randomly (and non-reciprocally) interacting neurons show a transition to chaos driven by the strength of random interactions: when the strength is weak the system quickly converges to a unique stable equilibrium; when interactions are strong enough, chaotic motion settles in \cite{ChaosSompo88}. This transition is often concomitant with an increase in complexity, intended as the explosion of the number of unstable equilibria and the loss of stability of the unique equilibrium \cite{wainrib2013topological, Fyodorov_2016,fyodorov2016nonlinear, HeliasFP2022, AnnibaleDynamics2024}. This multiplicity of equilibria naturally
raises the problem of their classification in terms of key properties, such as their linear stability \cite{ben2021counting, lacroix2022counting, Fyodorov_resilient_2021}. In conservative systems (i.e., when ${\bf F}$ is the gradient of an energy function) linking the behavior of gradient descent with the underlying energy landscape is easier. Indeed, in that case, gradient-based algorithms tend to converge to marginally stable equilibria~\cite{Castellani_2005, Crisanti_TAP_pspin_95, cavagna1998stationary, cugliandolo1993analytical} (although \textit{which ones} is still not clear in general \cite{folena2020rethinking, Folena_gd_simul_2021, folena_zamponi_weak_2023, tersenghi_seb_mixed_2025, kent2024arrangement}). However, in the case of non-conservative systems, the system's dynamics cannot be described in terms of optimization of an energy landscape, and linking the system's dynamics with the underlying geometry of phase space is more challenging. Since the first works on random neural networks \cite{ChaosSompo88}, much work has been devoted to understand the dynamics, as well as the distribution of fixed points, for systems with non-reciprocal interactions ~\cite{ CugliandoloNonrelax97, berthier2000two, Fyodorov_2016, berlemont2022glassy, fyodorov2016nonlinear, lacroix2022counting, ben2021counting, Fyodorov_resilient_2021, AnnibaleDynamics2024, ros2023generalized, MastrogiuseppeLink2018}. However, whether the long-time dynamical properties can be inferred from the complexity of equilibria is an open problem~\cite{wainrib2013topological}.\\

\noindent Here we want to tackle these questions quantitatively, by considering a prototypical class of models with non-reciprocal interactions that are sufficiently simple to compute both their complexity as well as the solution to the dynamics. A specific realization of these models was already studied in \cite{fournier2023statistical, fournier2024generative} for $\alpha=0$ in the context of recurrent neural networks for learning. Here, instead, we will compute the quenched and annealed complexities for the most general class of models, and we will solve explicitly the Dynamical Mean-Field Theory (DMFT) for a specific choice of the model, when $N\to\infty$ and $\alpha=0$. Thanks to our results, we can make explicit comparisons between the properties of the chaotic attractor and the statistics of equilibria. In particular, we show that the dynamical order parameters cannot be inferred from any special class of equilibria: neither from the most numerous, nor from the least unstable. Nonetheless, we show that \textit{some} connections do appear. A comparison between the scaling of the complexity and the maximal Lyapunov exponent is also done. We additionally discuss the case  $\alpha>0$, and we extend our results to mixed models \cite{MixedModelNieuwenhuizen95}, that is, when the force is non-homogeneous. With these mixed models, we question the relationship between \textit{transition to chaos} and \textit{topology trivialization} \cite{Fyodorov_2016, Fyodorov_resilient_2021, wainrib2013topological}. The interested reader can refer to Sec.~\ref{summary:non_reciprocal} for a summary of contributions.

\section{A family of models}
\label{sec:rnn_general_model}
\noindent Let us first introduce a very general family of models, which was first studied in ~\cite{CugliandoloNonrelax97, Fyodorov_2016}, and similar to those considered in \cite{crisanti1987dynamics, fyodorov2016nonlinear, ben2021counting, lacroix2022counting, Fyodorov_resilient_2021, Fyodorov_resilient_2021}. The dynamical equations take the general form:
\begin{align}\label{eqapp:mod}
    \partial_tx_i(t)=F_i({\bf x}):=-\lambda({\bf x}) \, x_i(t)+f_i({\bf x}),\quad {\bf x}\in\mathbb{R}^N
\end{align}
where $f_i({\bf x})$ is a Gaussian vector field of mean and covariance given by 
\begin{equation}
\label{app:eq:def_Covariance}
\mathbb{E}\left[f_i({\bf x})\right] = \frac{J }{N}\sum_{i=1}^N{x_i}  = :J\, m({\bf x}), \quad\text{Cov}\left[f_i({\bf x}), f_j({\bf y}) \right] =\delta_{ij} \Phi_1\tonde{\frac{{\bf x} \cdot {\bf y}}{N}}+ \frac{x_j \, y_i}{N} \Phi_2\tonde{\frac{{\bf x} \cdot {\bf y}}{N}}, 
\end{equation}
with $J \geq 0$ and $\Phi_1$ and $\Phi_2$ suitably chosen functions. We will derive the expressions of the complexity and of the dynamics by keeping these functions generic; we implicitly assume that the functions $\Phi_1$ and $\Phi_2$ satisfy all the conditions required for these expressions to be well-defined. This is of course true for the specific choices that we will discuss later.  In particular, we always assume that $\Phi_1(u),{\Phi}'_1(u)$ are positive for $u>0$. There are two variations of this model that can be considered:
\begin{itemize}
    \item {\bf Confined Models (CM). } In this case, ${\bf x} \in \mathbb{R}^N$ and $\lambda({\bf x})$ in \eqref{eqapp:mod} is a confining term that depends only on the norm $||{\bf x}||$:
    \begin{equation}
\lambda({\bf x})= \lambda(||{\bf x}||)
    \end{equation}
This choice defines a generalization of the “confined model" considered in \cite{fournier2023statistical}, which in turn is a simplification of standard recurrent neural network models \cite{ChaosSompo88}, obtained by replacing the neurons' firing rate function with a generic nonlinear function of the neuron's membrane potentials $x_i$.
    \item {\bf Spherical Models (SpM). } In this case, ${\bf x} \in \mathcal{S}_N(\sqrt N)= \left\{ {\bf x}: ||{\bf x}||^2=N \right\}$. To enforce the spherical constraint, $\lambda({\bf x})$ in \eqref{eqapp:mod} is chosen as a Lagrange multiplier, satisfying 
       \begin{equation}
\lambda({\bf x})= \frac{{\bf f} ({\bf x}) \cdot {\bf x}}{N}.
    \end{equation}
    With this choice, $\lambda({\bf x})$ is a Lagrange multiplier providing a strict confinement of the dynamics on the hypersphere in $N$ dimensions.
\end{itemize}
The dynamics described by this model is non-conservative for generic $\Phi_1(u), \Phi_2(u)$: the conservative limit corresponds to choosing $\Phi_2(u) =\Phi_1'(u)$. To control deviations from this limit, following \cite{Fyodorov_2016} we introduce the ratio $\alpha(u):={\Phi_2(u)}/{\Phi'_1(u)}$.
When the dynamics is conservative, meaning that the (random) force field is the derivative of a (random) energy field $\mathcal{E}({\bf x})$ with isotropic covariances, then $\alpha(u)=1$. Indeed, assume that  
\begin{align}
    f_k({\bf x})=-\frac{\partial \mathcal{E}({\bf x})}{\partial x_k}, \quad \quad  \text{Cov}\left[\mathcal{E}({\bf x}),\mathcal{E}({\bf y})\right]=Nh\tonde{\frac{{\bf x}\cdot{\bf y}}{N}},
\end{align}
then by differentiating with respect to $x_k$ and $y_l$, one gets
\begin{align}
    \text{Cov}\left[f_k({\bf x}),f_l({\bf y})\right]=\delta_{kl}h'\left(\frac{{\bf x}\cdot{\bf y}}{N}\right) + h''\left(\frac{{\bf x}\cdot{\bf y}}{N}\right)\frac{y_kx_l}{N}
\end{align}
from which we can immediately identify $\Phi_1(u)=h'(u)$ and $\Phi_2(u)=h''(u)=\Phi'_1(u)$ which clearly implies $\alpha(u)=1$. When $\alpha(u)$ is not constant and equal to unity, the dynamics is, in general, non-gradient. Although the complexity will be computed in the most general case, in our analyses we will focus on the sub-family of models defined by: 
\begin{equation}\label{eq:App:alp}
   \Phi_2(u)=\alpha\, \Phi'_1(u) \quad \quad \alpha \in [0,1]. 
\end{equation}
 In this case, the force ${\bf f}({\bf x})$ can be decomposed in a conservative contribution (the gradient of an energy landscape) and a non-conservative contribution:
 \begin{align}\label{eqapp:decoF}
f_i({\bf x})=\sqrt{1-\alpha}f^d_i({\bf x})-\sqrt{\alpha}\frac{\partial \mathcal{E}({\bf x})}{\partial x_i}+\frac{J}{N}\sum_ix_i,
\end{align}
where both $f_i^d({\bf x})$ and $\mathcal{E}({\bf x})$ are independent Gaussian fields with zero average and covariances satisfying:
\begin{align}\label{eq:CovRel}
 \text{Cov}[f_i^d({\bf x}),f_j^d({\bf y})]=\Phi_1\left(\frac{{\bf x}\cdot{\bf y}}{N}\right)\delta_{ij}, \quad \quad 
\text{Cov}[\mathcal{E}({\bf x}),\mathcal{E}({\bf y})]&= Nh\left(\frac{{\bf x}\cdot{\bf y}}{N}\right)
\end{align}
with $h$ defined from $h'(u)=\Phi_1(u)$. It is simple to check that in this case both \eqref{app:eq:def_Covariance} and \eqref{eq:App:alp}  hold true. As already mentioned, the parameter $\alpha$ measures the strength of the conservative part of the dynamics: for $\alpha=0$ we have pure non-conservative (non-gradient) dynamics, while for $\alpha=1$ we have conservative (gradient) dynamics. \\

\subsubsection{Previous work}
That non-conservative forces destroy glassy behavior, eventually leading to chaos, was already understood several years ago \cite{crisanti1987dynamics, ChaosSompo88, Parisi_asymm_1986, CugliandoloNonrelax97}. A few works have concentrated on computing the complexity of fixed points for these systems \cite{Fyodorov_2016, fyodorov2016nonlinear, ben2021counting, Fyodorov_resilient_2021, RosEcoQuenched2023, ros2023generalized}, but to the best of our knowledge no work has made explicit comparisons between the DMFT and the complexity, as we do here. Regarding the complexity calculations, our work is similar to Ref.~\cite{Fyodorov_2016}, where the annealed complexity of total fixed points for these models was computed. Let us explain the differences from that work. The first difference is in the external field. In~\cite{Fyodorov_2016} a random Gaussian field of variance $\sigma^2$ is added to the force. Because of this, as $\sigma$ is increased, the system undergoes a topology trivialization transition, from an exponential number of fixed points to just two. In our model instead we add a term $J\,m({\bf x})$ which is proportional to the magnetization $m({\bf x})$. Therefore, the topology trivialization looks a bit different: the exponentially abundant fixed points are there for any $J\geq 0$ (in particular, those characterized by a zero magnetization), but as $J$ is increased, a branch of fixed points with non-zero magnetization becomes "less unstable" and forms a small isolated "island" (see Fig.~\ref{fig:rnn_ferro_steps}), until reaching a critical value $J_c$ where this island vanishes, collapsing to only two ferromagnetic (stable) fixed points. Another difference is that we classify points by introducing the following order parameters: the instability index, the magnetization and the self-overlap, which are useful for a comparison with the DMFT. \\

\noindent For completeness let us report one of the results of Ref.~\cite{Fyodorov_2016} which is of main interest for us. The result consists in the asymptotic $N>>1$ behavior of the expectation of the total number of fixed points $\mathcal{N}_{tot}$ and the complexity $\Sigma_{tot}$ of \eqref{eqapp:mod} for the SpM:
\begin{align}
\label{eq:rnn_fyod_express}
    \mathbb{E}[\mathcal{N}_{tot}]\underset{N>>1}{\sim} 2\sqrt{\frac{\tau+1}{1-\tau}}\frac{b}{\sqrt{1-b^2}}e^{N\ln \frac{1}{b}},\quad\Sigma_{tot}=\lim_{N\to\infty}\frac{\log\mathbb{E}[\mathcal{N}_{tot}]}{N}=\ln\frac{1}{b}
\end{align}
with 
\begin{align}
&\tau=\Phi_2(1)/\Phi_1'(1),\quad\quad|\tau|<1\\
&b^2=\frac{\sigma^2+\Phi_1(1)}{\Phi_1'(1)},\quad b<1
\end{align}
and it is moreover assumed that $0<\Phi_1(1)\leq \Phi_1'(1)$ and $-\Phi_1(1)\leq \Phi_2(1)\leq \Phi_1'(1)$. For $b<1$ and $-1<\tau\leq 1$ it is found instead $\lim_{N\to\infty} \mathbb{E}[\mathcal{N}_{tot}]=2$. In this chapter instead we will only be interested in the exponential behavior (i.e. no prefactors) and we will account for the full complexity curve.\\

\noindent Regarding the DMFT side, our work is similar to \cite{CugliandoloNonrelax97}. However, in that work an external white noise is added to the dynamical equations, which we do not have. Moreover, we analyse a much broader class of models, and we are able to obtain the explicit solution to the DMFT (in the TTI regime), which we validate by numerical simulations, and was not previously found. Additionally, one of us was able to compute explicit expressions for the maximal Lyapunov exponent in the various dynamical phases (to appear in the new version of \cite{us_non_reciprocal_2025}).

\section{Topological complexity in the general case}
\label{sec:rnn_topo_general}
\noindent In this section we show the results for the quenched (Replica Symmetric) and annealed complexities for the most general choice of $\lambda,\Phi_1,\Phi_2, J$. As we saw in Sec.~\ref{sec:chap1_topo_compl} the topological complexity is the entropy of equilibria ${\bf x}^*$ satisfying $F_i({\bf x}^*)=0$ for $i=1, \cdots, N$. 
We denote with $\mathcal{N}(\lambda,m,q)$ the number of equilibria having fixed Lagrange multiplier $\lambda$ (or confining potential), magnetization and averaged squared norm (or self-overlap) $m,q$ with 
\begin{align}
m({\bf x}^*)=\frac{1}{N}\sum_i x_i^*,\quad q({\bf x}^*)=\frac{1}{N}\sum_i(x_i^*)^2
\end{align}
at fixed value of $J$. Let us abuse the notation, where we keep both $\lambda,q$ in $\mathcal{N}$, although in the case of CM we will choose $\lambda=\lambda(q)$, and in the case of SpM we set $q=1$. The complexity is defined as
\begin{align}\label{eq:Comp}
    \Sigma(\lambda,m,q):=\lim_{N\to \infty}\frac{\mathbb{E}[\log\mathcal{N}(\lambda,m,q)]}{N}.
\end{align}
As we already discussed in Sec.~\ref{sec:topo_complexity}, this quantity controls the asymptotics of the \emph{typical} value of $\mathcal{N}(\lambda,m,q)$; it is called quenched, as opposed to the annealed one:
\begin{align}
\label{eq:comp_ann_conf}
\Sigma_A(\lambda,m,q):=\lim_{N\to \infty} \frac{\log\mathbb{E}\mathcal{N}(\lambda,m,q)}{N}
\end{align}
which controls the asymptotics of the average number. In general, it holds $\Sigma_A(\lambda,m,q)\geq \Sigma(\lambda,m,q)$. The random variable counting the number of equilibria can be written in terms of the Kac-Rice formula as:
\begin{align}
    \mathcal{N}(\lambda,m,q)=\int_{\mathbb{R}^N} d{\bf x}\,\Delta({\bf x})|\det \mathcal{H}({\bf x})|\delta({\bf f}({\bf x})-\lambda\,{\bf x})
\end{align}
where 
\begin{align*}
&\Delta({\bf x}):=\delta\left(\sum_ix_i-Nm\right)\delta({\bf x}^2-Nq),\quad \quad 
[\mathcal{H}({\bf x})]_{ij}:=\frac{\partial F_i({\bf x})}{\partial x_j}.
\end{align*}

\noindent Let us apply here what we described in Sec.~\ref{sec:chap1_topo_compl}: the quenched complexity can be computed via the replicated Kac-Rice formalism \cite{ros2019complex, ros2022high}, which involves a combination of the replica trick 
\begin{align}
\log \mathcal{N}= \lim_{n \to 0} \frac{\mathcal{N}^n-1}{n}
\end{align}
with the Kac-Rice formula for the moments $\mathbb{E}[\mathcal{N}^n]$. In particular, to compute the quenched complexity one proceeds as follows (we neglect function arguments for brevity):
\begin{align}
\label{eq:replica_complexity}
\Sigma=\lim_{N\to\infty}\frac{\mathbb{E}\log\mathcal{N}}{N}=\lim_{N\to\infty,\,n \to 0}\frac{\mathbb{E}\mathcal{N}^n-1}{Nn}=\lim_{n\to0,N\to\infty}\frac{\log\mathbb{E}[\mathcal{N}^n]}{Nn}
\end{align}
where the second equality implies precisely the use of the replica trick. This trick consists in considering integer $n$ copies (replicas) of the system to determine $\mathcal{N}^n$, and then sending $n\to 0$ at the end of the calculation. There are several aspects of the replica method that are not rigorous and are worth pointing out. Although the second equality in \eqref{eq:replica_complexity} is correct, the third may not be, as we are interchanging the limits over $n$ and $N$. Then, we compute the expected value of the replicated number of equilibria by considering $n$ as an integer, finally assuming an analytic continuation from $n$ integer to $n\to 0$, which is a priori not allowed. Lastly, we are making an assumption on the distribution of the replicas.  Indeed, as usual in the replica formalism, the calculation of the quenched complexity is mapped into a variational problem for an $n \times n$ overlap matrix $\hat Q$, which encodes for the distribution of scalar products ${\bf x}^a \cdot {\bf x}^b$ extracted with uniform measure among the family of equilibria with given $m, q$. Precisely:
\begin{align}
    [\hat{Q}]_{ab}={\bf x}^a\cdot{\bf x}^b/N.
\end{align}
We solve this variational problem within the so called Replica Symmetric (RS) assumption, which corresponds to assuming that the overlaps ${\bf x}^a \cdot {\bf x}^b$ between different equilibria $a \neq b$ have a unique typical value independently of the choice of $a,b$, denoted with $\tilde Q$ below. This implies that the matrix $\hat Q$ is symmetric. To visualize this better, the RS ansatz consists in assuming that the matrix of overlaps between replicas takes the following form:
\begin{align}
    \hat{Q}=\begin{pmatrix}
        q & \tilde{Q} & \ldots & \ldots & \tilde{Q}\\
        \tilde{Q} & q & \tilde{Q} & \ldots & \tilde{Q}\\
        \vdots & \tilde{Q} & \ddots & \tilde{Q} & \vdots\\
        \vdots & \vdots & \tilde{Q} & \ddots  & \tilde{Q}\\
        \tilde{Q} & \tilde{Q} & \ldots & \tilde{Q} & q
    \end{pmatrix}\in\mathbb{R}^{n\times n},
\end{align}
where $|\tilde{Q}|\leq q$. Within the RS ansatz, we write 
\begin{align}\label{eq.compvar}
\begin{split}
&\Sigma(\lambda,m,q)=\text{extr}_{\tilde{Q}}\tilde{\Sigma}(\lambda,m,q,\tilde{Q})
\end{split}
\end{align}
where $\tilde{\Sigma}$ is the expression of the complexity at fixed overlap $\tilde{Q}$. With a unique formula we incorporate both choices of SpM ($q=1$) or CM ($\lambda=\lambda(q)$) models. Let us present right away the final result, as a reference; calculations are done in Appendix.~\ref{app:rnn_quenched} 

\subsubsection{Quenched expression}
The explicit expression of $\tilde{\Sigma}(\lambda,m,q,\tilde{Q})$ depends on the functions  $\Phi_1(u), \Phi_2(u)$ in  \eqref{app:eq:def_Covariance}, evaluated at either $q$ or $\tilde{Q}$. We introduce the following compact notation: 
\begin{align}\label{eq:Notation}
\begin{split}
&\Phi_i^q:=\Phi_i(q), \quad  \tilde \Phi_i:=\Phi_i(\tilde Q), \quad  \dot{\Phi}^q_i:= \Phi_i'(q)\quad   \text{for} \quad  i=1,2\\
&\alpha_q:= \frac{\Phi_2(q)}{\Phi_1'(q)}\quad \quad   \kappa(\lambda, q):=\frac{\lambda}{\sqrt{\Phi_1'(q)}}.
\end{split}
\end{align}
Then 
\begin{align}\label{eq:rnn_CompQgen}
\begin{split}
&\tilde{\Sigma}(m,\lambda,q,\tilde{Q})=\mathcal{V}(m, q, \tilde Q)+\mathcal{P}(\lambda, m,q,\tilde{Q})+\Theta (\kappa(\lambda, q), q)
\end{split}
\end{align}
where 
\begin{align}
\label{app:det_eqn}
    \Theta(\kappa, q)=\begin{cases}
    \frac{\log\dot{\Phi}_1^q}{2}+\frac{1}{2}\left(\frac{\kappa^2}{1+\alpha_q}-1\right)\quad\quad\text{if }\, |\kappa|\leq 1+\alpha_q\\
     \frac{\log\dot{\Phi}_1^q}{2}+\frac{\left(\kappa-\text{sign}(\kappa)\sqrt{\kappa^2-4\alpha_q}\right)^2}{8\alpha_q}+\log\left|\frac{\kappa+\text{sign}(\kappa)\sqrt{\kappa^2-4\alpha_q}}{2}\right|\quad\text{else},
    \end{cases}
\end{align} 
\begin{align}\label{eq:V_m_q_Qt_vol}
&\mathcal{V}(m, q, \tilde Q)= \frac{q-m^2 + (q - \tilde{Q}) \log(2 \pi) + (q - \tilde{Q}) \log\left(q - \tilde{Q}\right)}{2 (q - \tilde{Q})},
\end{align} 
and
\begin{align}\label{eq.Proba}
&\mathcal{P}(\lambda,m,q,\tilde{Q})=-\frac{1}{2}\left\{\log(2\pi)+\log(\Phi_1^q-\tilde{\Phi}_1)+\frac{\tilde{\Phi}_1}{\Phi_1^q-\tilde{\Phi}_1}+\lambda^2U_{11}-\lambda J mU_{12}+J^2m^2U_{22}\right\}
\end{align}
with
\begin{equation}
\begin{split}
&U_{11}=\frac{q \Phi_1^q - 2 q \tilde{\Phi}_1 + 
\tilde{Q} \tilde{\Phi}_1 + (q - \tilde{Q})^2 \tilde{\Phi}_2}{\mathcal{A}}\\
&U_{22}=\frac{1}{\Phi_1^q - \tilde{\Phi}_1}-\frac{m^2(\alpha_q\dot{\Phi}^q_1-\tilde{\Phi}_2)}{\mathcal{A}}\\
&U_{12}=2m\frac{\tilde{\Phi}_2(q-\tilde{Q})+\Phi_1^q-\tilde{\Phi}_1}{\mathcal{A}}\\
&\mathcal{A}=(\Phi_1^q)^2 + (\tilde{\Phi}_1)^2 - 2 q \tilde{\Phi}_1 \alpha_q\dot{\Phi}^q_1 + (q - \tilde{Q})^2 \alpha_q\dot{\Phi}^q_1 \tilde{\Phi}_2 + \tilde{Q} \tilde{\Phi}_1 (\alpha_q\dot{\Phi}^q_1 + \tilde{\Phi}_2)\\& + \Phi_1^q (-2 \tilde{\Phi}_1 - 2 \tilde{Q} \tilde{\Phi}_2 + q (\alpha_q\dot{\Phi}^q_1 + \tilde{\Phi}_2)).\\
\end{split}
\end{equation}
Let us say that $\mathcal{V}$ is the "phase space term", as it arises when calculating the volume satisfying the imposed constraints; $\mathcal{P}$ is the "probability term", as it encodes the probability that a certain point ${\bf x}$ satisfies the constraint to be an equilibrium point; $\Theta$ is the "determinant term", associated to the expected value of the determinant of the Jacobian. 

\subsubsection{Annealed expression}
The annealed complexity reads instead:
\begin{align}\label{eq: AnnCompGen}
\Sigma_A(m,\lambda,q)=\mathcal{P}_A(m,\lambda,q)+\mathcal{V}_A(m,q)+\Theta (\kappa(\lambda, q), q),
\end{align}
where now
\begin{equation}\label{eq.pA}
\begin{split}
&\mathcal{V}_A(m,q)=\frac{1}{2}+\frac{1}{2}\log(2\pi (q-m^2)),\\
&\mathcal{P}_A(\lambda,m,q)=-\frac{1}{2}\left[\log(2\pi)+\log(\Phi_1^q)+\lambda^2U_{11}^A-\lambda J mU_{12}^A+J^2m^2U_{22}^A \right]
\end{split}
\end{equation}
and 
\begin{equation}
\begin{split}
U_{11}^{A}=\frac{q}{\Phi_1^q + q \alpha_q\dot{\Phi}^q_1}, \quad \quad
U_{12}^A =\frac{2m}{\Phi_1^q + q \alpha_q\dot{\Phi}^q_1}, \quad \quad 
U_{22}^A=\frac{\Phi_1^q + (q-m^2) \alpha_q\dot{\Phi}^q_1}{\Phi_1^q (\Phi_1^q + 
q \alpha_q\dot{\Phi}^q_1)}.
\end{split}
\end{equation}
For $m=0$ and $q=1$, this reduces to 
\begin{align}\label{eq: AnnCompGenParaGFyod}
\Sigma_A(\lambda,m=0,q=1)=\frac{1}{2}-\frac{1}{2}\log(\Phi_1^1)-\frac{ \kappa^2}{2[\Phi_1^1 + \alpha_1\dot{\Phi}^1_1]}+\Theta (\kappa, 1), \quad \quad \kappa= \frac{\lambda}{\sqrt{\dot{\Phi}^1_1}}.
\end{align}
Let us stress that this is consistent with Eq. (3.3) in \cite{Fyodorov_2016} by setting $\sigma=0$, imposing a delta function on the Lagrange multiplier $\lambda$ and calculating the annealed complexity for $N\to\infty$. Moreover, for $|\alpha_{q=1}|<1$ and $\sqrt{\Phi_1^1/\dot{\Phi}_1^1}<1$ with $0<\Phi_1^1\leq \dot{\Phi}_1^1$ and $-\Phi_1^1\leq \Phi_2^1\leq \dot\Phi_1^1$, we recover the result in Eq.~\eqref{eq:rnn_fyod_express} by choosing $\lambda=0$, which hence corresponds to the value of most abundance of stationary points. This is quite intuitive: given the symmetry of the problem (for the SpM only, where $\lambda$ is free) the most abundant stationary points are those that have an equal fraction of positive and negative eigenvalues, corresponding to $\lambda=0$. 

\subsubsection{Linear stability of the equilibria}
The linear stability of the equilibria counted by the complexity \eqref{eq:rnn_CompQgen} and its annealed counterpart \eqref{eq: AnnCompGen} is controlled by $\lambda$. As we show in detail in Appendix~ \ref{app:determinant_calc}, the matrix controlling the linear stability of equilibria  (obtained linearizing the dynamical equations around each equilibrium configuration) is an asymmetric random matrix with Gaussian entries; for $N\to\infty$, the  eigenvalues of this matrix are uniformly distributed in a region of compact support on the complex plane. This support has the shape of an ellipse \cite{sommers1988spectrum}, centered on a point that depends on $\lambda$. In particular, for $N\to \infty$ the support of the spectrum of the Jacobian tends to an ellipse in the complex plane, with equation:
\begin{align}
\label{eq:rnn_support_ellipse}
    \frac{(x+\lambda)^2}{(1+\alpha_q)^2}+\frac{y^2}{(1+\alpha_q)^2}=\dot{\Phi}_1^q\,,
\end{align}
and the density of eigenvalues is uniform within this support. A simple computation of the determinant for these types of random matrices has been done in \cite{RosEcoQuenched2023}. \\

\noindent Then, an equilibrium is linearly stable if all the eigenvalues of the Jacobian matrix have negative real part, and unstable otherwise. Moreover, in this work we shall only consider the extensive instability, that is, the position of the edges of the bulk of the spectrum for $N\to\infty$. We will therefore not consider finite size corrections and/or isolated eigenvalues, which are left for future work. \\

\noindent From \eqref{eq:rnn_support_ellipse} we have that in the general case:
\begin{equation}
\begin{cases}
       &\lambda>\sqrt{\dot{\Phi}_1^q}(1+\alpha_q) \quad \longrightarrow \quad   \text{ linearly stable equilibria}\\
    &\lambda<\sqrt{\dot{\Phi}_1^q}(1+\alpha_q) \quad \longrightarrow \quad   \text{ linearly unstable equilibria}.
    \end{cases}
    \end{equation}
and when we have equality we speak of \textit{marginal equilibria}, since the support of the bulk of eigenvalues of the ellipse touches the origin. \\

\noindent For example, in the SpM with $q=1$ marginally stable equilibria correspond to:
 \begin{equation}
 \label{eq:lambda_ms_spm}
     \lambda_{\rm ms}:=\sqrt{\dot{\Phi}^1_1}(1+\alpha_1).
 \end{equation}
Notice that if we consider the SpM with $\Phi_1(u)\propto u^p$ and $\Phi_2=\Phi_1'$, we reduce to the spherical $p$-spin model of Chapter~\ref{chapter:intro}.

\section{Dynamics and complexity: the case $\alpha=0$.}
\label{sec:rnn_dynamics}
 Let us first specify a realization of the CM model that we will use for our analyses here. We will abusively refer to this specific choice of model as just "CM" for simplicity. 
\subsection{Choice of model}
\noindent To be quantitative and make contact with models of random neural networks where the interaction strength is usually denoted by $g$, we will use:
\begin{align}
\label{eq:rnn_choice_phi1}
\Phi_1(u)= 2 g^2 u^2,\quad h(u)= 2 g^2 u^3/3,\quad \lambda({\bf x})=\frac{||{\bf x}||^2}{N}- \gamma, \quad  \gamma \geq 0.
\end{align}
where, as we explain in Sec.~\ref{sec:rnn_general_model}, in this case $\Phi_2(u)=\alpha\Phi_1'(u)$, $\forall u$. This model provides us with a non-linear force whose strength is tuned by $g>0$.  For this choice of $\Phi_1$ the random terms appearing in \eqref{eqapp:decoF} can be parametrized as 
\begin{align}
f^d_i({\bf x}) = g \,  N^{-1} \sum_{jk}^N J_{i}^{jk} x_j x_k,\quad\quad
\mathcal{E}({\bf x}) = - 2 g \, N^{-1} \sum_{i<j<k}^N S_{ijk} x_i x_j x_k,
\end{align}
where $J_{i}^{jk}$ satisfies $J_{i}^{jk}= J_{i}^{kj}$ and 
$\mathbb{E}[J_{i}^{jk}J_a^{bc}]=\delta_{ia}(\delta_{jb}\delta_{kc}+\delta_{jc}\delta_{kb})$. Due to the lack of symmetry in the lower index, the force components $f^d_i({\bf x})$ can not be written as derivatives of a unique function: this term corresponds to the non-conservative part of the dynamics. The second tensor satisfies $\mathbb{E}[S_{ijk}^2]=1$. It corresponds to the conservative part of the force: a gradient term driving the system towards minima of the energy $\mathcal{E}({\bf x})$. The parameters $\alpha, g $  control the strength of the conservative contribution and of the randomness, while the term proportional to $J\geq 0$ favors configurations with a non-zero magnetization.\\

\noindent We insist that the first, deterministic term in~\eqref{eqapp:mod} provides a confinement to the dynamics, through the function $\lambda({\bf x})$. The model \eqref{eqapp:mod} exhibits a rich dynamical phase diagram, with single fixed points and chaotic phases, as we will see below. This model is particularly interesting, since it is rich enough to present a complex phenomenology, but easy enough to be solvable both in terms of dynamics (DMFT) and complexity (Kac-Rice). Thus, it represents the ideal playground to study non-gradient (eventually chaotic) dynamics in relation to the statistics of fixed points of its dynamical equations \eqref{eqapp:mod}. \\

The linear stability of the equilibria is controlled by the parameter $q$:
 \begin{equation}
\begin{cases}
       &q>\gamma+\sqrt{\dot{\Phi}_1^q}(1+\alpha) \quad \longrightarrow \quad   \text{ linearly stable equilibria CM}\\
    &q<\gamma +\sqrt{\dot{\Phi}_1^q}(1+\alpha) \quad \longrightarrow \quad   \text{ linearly unstable equilibria CM}.
    \end{cases}
    \end{equation}
  Marginally stable equilibria correspond to saturation of the inequality, which occurs at 
 \begin{equation}
    q_{\rm ms}:=\gamma +\sqrt{\dot{\Phi}_1^q}(1+\alpha).
 \end{equation}

In this section we study the case $\alpha=0$, where we will see that the DMFT is exactly solvable in the TTI (time translation invariant) regime. \\

\begin{figure}[t!]
\centering
\includegraphics[width=1
\textwidth]{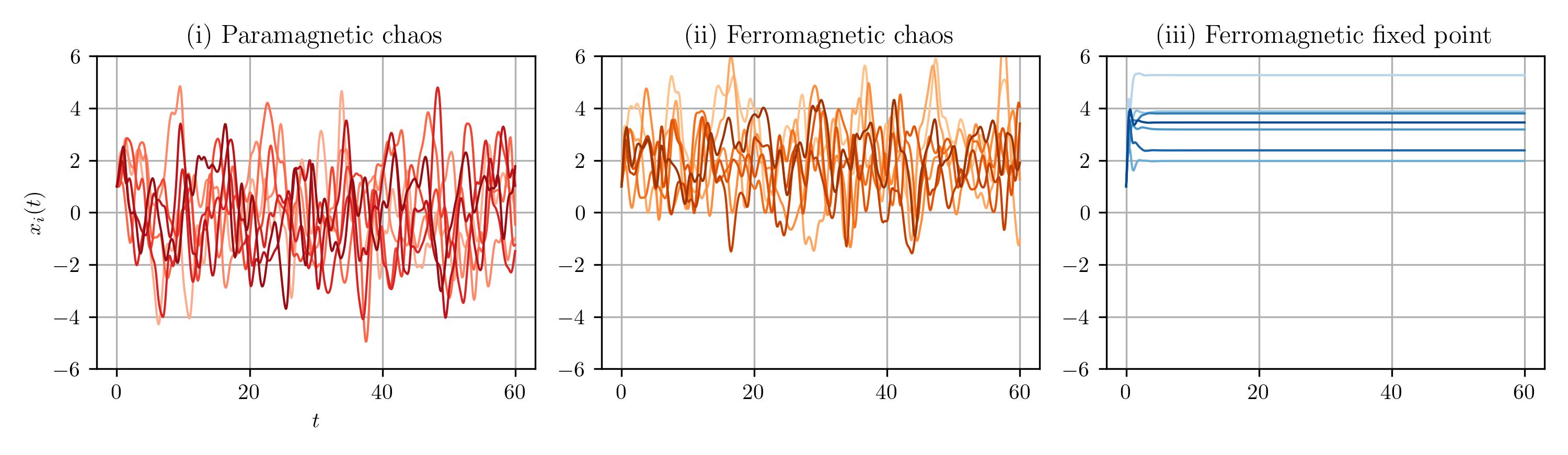}\\
\caption{Numerical integration of Eqs.~\eqref{eqapp:mod} using Euler discretization with $N=180$, $dt=0.1$ for the CM model with $\Phi_1(u)=2g^2u^2,\,\Phi_2(u)=0$ and $g=1$, $\gamma=0.5$. From left to right $J=1,\,3.1,\,6$. We track the first seven entries of the vector ${\bf x}$. The points $(i),(ii),(iii)$ are also referenced in the $(J,g)$ dynamical phase diagram in Fig.~\ref{fig:dmft_phase}.}
\label{fig:chaos_num}
\end{figure}

\noindent A numerical integration of Eqs.~\eqref{eqapp:mod} shows three qualitatively distinct regimes, shown in Fig.~\ref{fig:chaos_num}. At fixed $g, \gamma$, for large $J$ the variables $x_i(t)$ quickly converge to a ferromagnetic fixed point, and temporal fluctuations are suppressed, see Fig.~\ref{fig:chaos_num}.(iii). As $J$ decreases, the system transitions to a chaotic regime characterized by persistent oscillations of the $x_i(t)$, around values that are non-zero for intermediate $J$ (ferromagnetic chaos), and zero for smaller $J$ (paramagnetic chaos), see Fig.~\ref{fig:chaos_num}.(ii) and (i) respectively. These phases can be characterized quantitatively in the limit $N\to\infty$ by means of a DMFT approach, similar to what has been adopted in  Refs.~\cite{ChaosSompo88, crisanti_path_2018}, see also Sec.~\ref{sec:pspin_dynamics_calcs} and Appendix.~\ref{app:dynamical_calculations}. The first step of the DMFT calculation is to write a Path Integral, taking an average over paths and over the disorder, of Eqs.~\eqref{eqapp:mod}, with random initial conditions. In the limit $N\to \infty$, one obtains an effective stochastic differential equation of a single degree of freedom $x(t)$, which reads:
\begin{align}
\begin{split}
\label{eq:effective_dynamical_system}
 & \partial_t{x}(t) =  -\lambda(t) x(t) + J m(t) + \eta(t)\equiv F(t), \quad \quad  \lambda(t)=C(t,t)-\gamma.
\end{split}
\end{align}
where $\eta(t)$ is a zero-mean Gaussian process with variance $\langle \eta(t)\eta(t') \rangle = 2g^2C^2(t,t')$, and where $\langle\cdot\rangle$ denotes the average over this effective Gaussian random noise. Using this effective dynamical equation, we can obtain the self-consistent DMFT equations for 
\begin{align}
&m(t) := \langle x(t)\rangle,\quad C(t,t') := \langle x(t)x(t')\rangle,\quad R(t,t') := \langle \delta x(t)/\delta \eta(t') \rangle
\end{align}
through averages with respect to $\eta(t)$. In the Itô convention that we are using, we need to remember that $R(t,t')=0$ for $t\leq t'$. In Appendix~\ref{app:dynamical_calculations} we propose a short derivation of the effective dynamical equation for any Gaussian field of the type in Eq.~\eqref{eqapp:mod}. A long and detailed derivation is found in Chapters 7 and 10 of Ref.~\cite{HeliasBook20}.

\subsection{Derivation of the DMFT equations}
\label{sec:derivation_dmft_rnn}
Let us go through some details of the derivation of the DMFT equations here. The equation for $m$ is obtained by taking an average on both sides of \eqref{eq:effective_dynamical_system}, and it gives:
\begin{align}
    \partial_t m(t)=-\lambda(t) m(t)+Jm(t)
\end{align}
The equation for $C$ is obtained by multiplying Eq.~\eqref{eq:effective_dynamical_system} at time $t$ by $x(t')$, and then taking an average over $\eta$. This gives
\begin{align}
\partial_tC(t,t')=-\lambda(t)C(t,t')+Jm(t)m(t')+\langle \eta(t)x(t')\rangle
\end{align}
We see that we are left with computing the value $\langle\eta(t)x(t')\rangle$. The computation is found in Appendix~\ref{app:comp_eta_x_t_tp}, where it is shown that 
\begin{align*}
    \int_0^{t'}ds\,R(t',s)\Sigma(t,s)=\langle\eta(t)x(t')\rangle,\quad\quad\Sigma(t,t'):=\langle\eta(t)\eta(t')\rangle.
\end{align*}
By substituting $\Sigma(t,t')=2g^2C^2(t,t')$, we are left with the final equation for $C$:
\begin{align}
    \partial_tC(t,t')=-\lambda(t) C(t,t')+2g^2\int_0^{t'}ds\,R(t',s)C^2(t,s)+Jm(t)m(t').
\end{align}
The equation for $R$ is more immediate, we just have to derive with respect to $\eta(t')$ in Eq.~\eqref{eq:effective_dynamical_system}. It follows immediately that
\begin{align}
\partial_t R(t,t')=-\lambda(t)R(t,t')+\delta(t-t').
\end{align}
We have therefore found the three DMFT equations, which in summary read:
\begin{align}
\begin{split}
\label{eq:dmft_equations_alpha=0}
&\partial_t m(t)=-\lambda(t) m(t)+Jm(t)\\
&\partial_tC(t,t')=-\lambda(t)C(t,t')+Jm(t)m(t')+2g^2\int_0^{t'}ds\,R(t',s)C^2(t,s)\\
&\partial_t R(t,t')=-\lambda(t)R(t,t')+\delta(t-t').
\end{split}
\end{align}

\subsection{DMFT in the TTI regime.}
\label{sec:rnn_dmft_tti}
The first solution of the DMFT equations that we can try to look for is the Time Translationally Invariant (TTI) one. This is well motivated for $\alpha=0$, since we do not have conservative forces that slow down the dynamics of $x(t)$. We will show later that this ansatz is also justified numerically, by both integrating the DMFT numerically and comparing with real simulations of the dynamical system. The result of TTI is that the dynamics quickly evolves to a stationary regime where the dynamical order parameters become invariant under time translations. Hence, in this regime, we assume that our dynamics is described by:
\begin{align*}
&\lambda_\infty:=\lim_{t\to\infty} \lambda(t), \quad \quad \quad\quad\quad\quad\quad\,\,\,\,
  m_\infty:=\lim_{t\to\infty} m(t), \\
  &  
  C_{TTI}(\tau):=\lim_{t'\to\infty} C(t'+\tau,t'), \quad  \quad
  R_{TTI}(\tau):=\lim_{t'\to\infty} R(t'+\tau,t')
\end{align*}
where $\tau=t-t'$ is fixed. This means that for long times, $m,\lambda$ become time independent, and $C,R$ only depend on the time difference. Let us further introduce the two order parameters $C_0$ and $C_\infty$:
\begin{align*}
  C_0 := \lim_{\tau\to0} C_{TTI}(\tau) \quad \quad  \mathrm{and} \quad \quad  C_\infty := \lim_{\tau\to\infty} C_{TTI}(\tau).
\end{align*}
As we will see in the following, in the TTI regime, the attractor manifold of the dynamics at long times can be described as a function of $C_0, C_\infty, m_\infty$ (remark that for our choice of CM, $\lambda_\infty=C_0-\gamma$ by definition). Now, the DMFT equation for $R$ can be solved and reads:
\begin{align}
    R_{TTI}(\tau)=H(\tau)e^{-\lambda_\infty \tau},\quad H(\tau) \text{ the Heaviside step function}.
\end{align}
Notice that $\lim_{\tau\to 0^+}R_{TTI}(\tau)=1$ but $\lim_{\tau\to 0^-}R_{TTI}(\tau)=0$. The DMFT equations for $m,C$ then read:
\begin{align}
  \label{eq: equation for m infinity}
  &0 = (-\lambda_{\infty} + J) m_{\infty},\\
  \label{eq: equation for dC of tau}
  &\partial_{\tau} C_{TTI}(\tau) = -\lambda_{\infty} C_{TTI}(\tau) + 2g^2 \int_0^{\infty}ds\, C_{TTI}^2(\tau+s) e^{-\lambda_{\infty}s} + J m_{\infty}^2.
\end{align}
In the rest of this derivation we drop, for simplicity, the subscript $TTI$, as it is now clear which functions we are working with. Because $C(\tau)$ is TTI, we know that $C(\tau)=C(-\tau)$ which implies $\partial_{\tau} C(\tau)|_{\tau=0}=0$. 
While solving for the full function $C(\tau)$ is more complicated, we can find closed formulas for $C_0,C_\infty,m_\infty$, which give us the long-times autocorrelation at equal times, and at infinitely separated times, and the value of the magnetization. Therefore we need to obtain three equations to solve for these unknowns. The first of them is Eq.~\eqref{eq: equation for m infinity}. The second one can be obtained by considering the long time limit of Eq.~\eqref{eq: equation for dC of tau}. Indeed given that $\lim_{\tau\to \infty}\partial_\tau C(\tau)=0$ (which holds since $C(\tau)$ is bounded), we get:
\begin{equation}
  \label{eq: equation for C infinity}
  0 = -\lambda_{\infty} C_{\infty} + 2g^2 \frac{C^2_{\infty}}{\lambda_{\infty}} + J m_{\infty}^2\:.
\end{equation}
The last equation that we find below can be obtained following the same steps as in \cite{crisanti_path_2018, ChaosSompo88}. Multiplying Eq.~\eqref{eq:effective_dynamical_system} by itself and averaging over $\eta$ we get:
\begin{align}
    (\partial_t+\lambda(t))(\partial_{t'}+\lambda(t'))C(t,t')=2g^2C^2(t,t')+J^2m(t)m(t').
\end{align}
If now we take the TTI limit $t,t'\to\infty$ with $\tau=t-t'$, we obtain:
\begin{equation}
  (\lambda_{\infty}^2 - \partial^2_{\tau}) C(\tau) = 2g^2 C^2(\tau) + (J m_{\infty})^2\:.
\end{equation}
This equation can be conveniently re-written in the form of a particle moving under the influence of a potential $V(C)$, as:
\begin{equation}\label{Newton}
  \partial^2_{\tau} C(\tau) = - \frac{\partial V}{\partial C},
\end{equation}
where $V$ is simply found to be:
\begin{equation}
  V(C) := \frac23 g^2 C^3 -\frac12 \lambda_{\infty}^2 C^2 + \left( J m_{\infty}\right)^2 C.
\end{equation}
Then, consider the energy function $E$ defined as:
\begin{equation}
 E = \frac12 \left( \partial_{\tau} C(\tau) \right)^2 + V(C).
\end{equation}
Using \eqref{Newton}, it is not hard to see that $dE/d\tau=0$, so that this fictitious energy is conserved in time. Acceptable solutions $C(\tau)$ must be bounded $|C(\tau)| \leq C_0$ and must conserve $E$. In particular, we have seen that $\partial_{\tau} C(\tau)|_{\tau\to\infty}=0$ and $\partial_\tau C(\tau)|_{\tau=0}=0$, which implies, together with the conservation of $E$, that:
\begin{equation}
V(C_0) = V(C_\infty).
\end{equation}
This equation can be recast in the following form:
\begin{equation}
  \label{eq: equation for C0 or lambda infinity}
  (C_\infty-C_0)^2\left[ \frac43g^2C_\infty - \frac12\lambda_\infty^2 + \frac23g^2C_0 \right] = 0.
\end{equation}
Hence, finally, we have this set of closed equations that have to be solved:
\begin{equation}\label{final_eq}
\begin{cases}
    \begin{split}
        &\lambda_{\infty}=C_0-\gamma\\
        &0=(-\lambda_{\infty} + J) m_{\infty},\\
        &0 = -\lambda_{\infty} C_{\infty} + 2g^2 \frac{C^2_{\infty}}{\lambda_{\infty}} + J m_{\infty}^2,\\
        &0=(C_\infty-C_0)^2\left[ \frac43g^2C_\infty - \frac12\lambda_\infty^2 + \frac23g^2C_0 \right].
    \end{split}
    \end{cases}
\end{equation}
Notice that the last equation allows both for stable fixed point solutions with $C_0=C_\infty$, and for out-of-equilibrium solutions with $C_\infty<C_0$.

\subsection{Dynamical phase diagram}
The set of equations in \eqref{final_eq} can be solved exactly. As a result, we can obtain the dynamical phase diagram of the CM model with $\alpha =0$ (a similar derivation holds for the SpM model, see our work \cite{us_non_reciprocal_2025}). Below, we will derive the order parameters of each phase, and the transition line between phases. In the following we will often refer to paramgnetic or ferromagneitc solutions, indicating that $m_\infty=0$ or $m_\infty\neq 0$ respectively. 

\begin{figure*}[t!]
    \centering
    \includegraphics[width=0.68\textwidth]{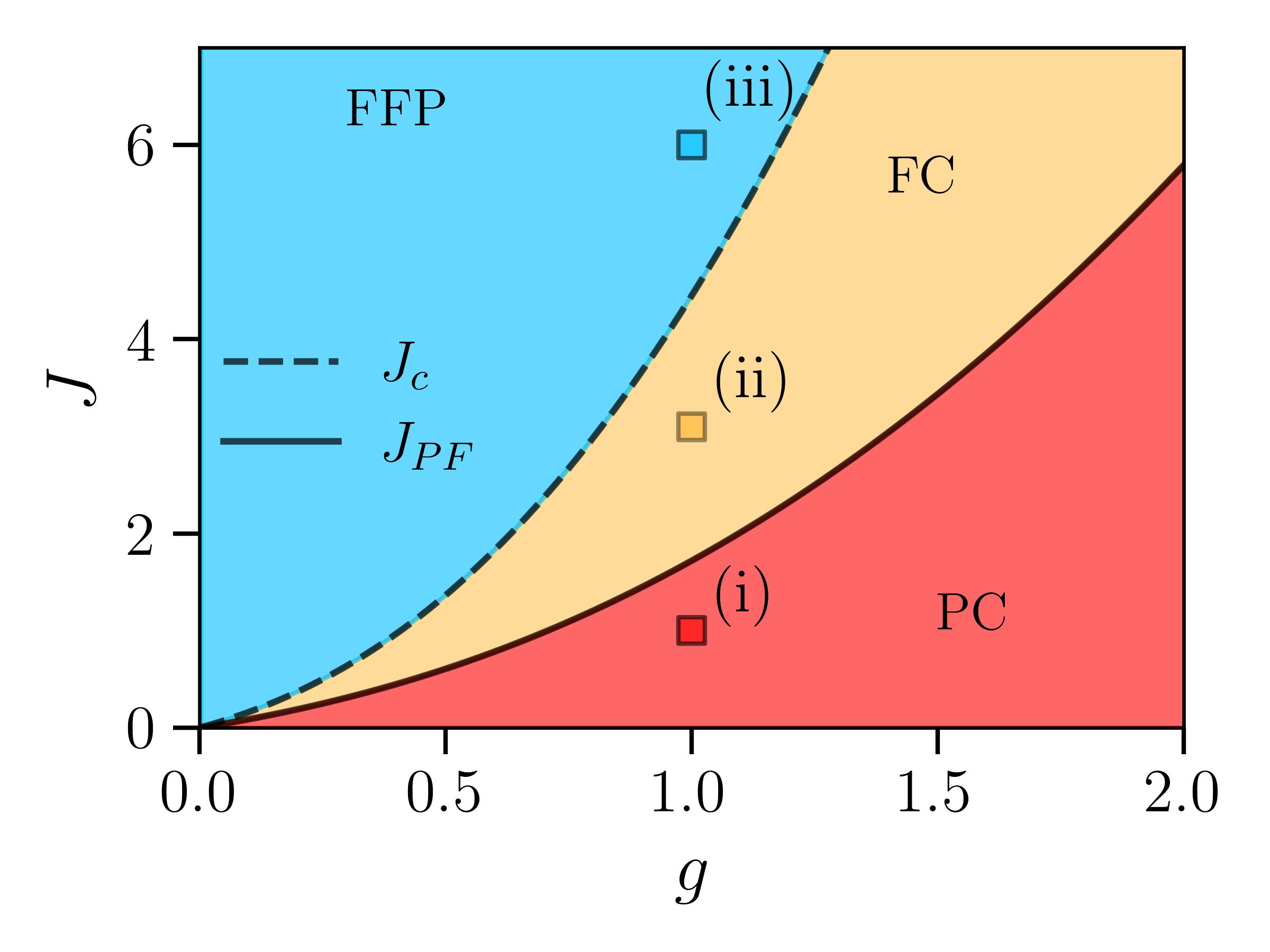}
    \caption{Dynamical phase diagram of the confined model with $\alpha=0$, $\Phi_1(u)=2g^2u^2$. At fixed $g$, increasing $J$ the system is in a (i) PC - Paramagnetic Chaotic  phase, (ii) FC - Ferromagnetic Chaotic phase, and (iii) FFP - Ferromagnetic Fixed Point phase. The colored squares in proximity to (i),(ii),(iii) represent the points at which the simulations of Fig.~\ref{fig:chaos_num} have been done (with a finite $N$, see the figure for details).}
    \label{fig:dmft_phase}
\end{figure*}

\subsubsection{Paramagnetic Fixed Point phase (PFP)}
If we choose the $C_0=C_\infty$ and $m_\infty=0$ solution, we find $C_0=C_\infty=0$, that is, the dynamics  converges to the fixed point ${\bf x}={\bf 0}$. We see from Eq.~\eqref{eqapp:mod} that this solution is only marginally stable. This means that the system will remain in this solution only if it is exactly initialized with it.  Another solution with $m_\infty=0$ can be found, but it is unstable. In Sec.~\ref{sec:confined_mixed} we propose a model with a stable PFP phase. 

\subsubsection{Ferromagnetic Fixed Point phase (FFP)}
If we choose $C_0=C_\infty$, and we look for ferromagnetic solutions, we find:
\begin{equation}\label{DMFT_fixFFCM}
C_{\infty}= C_0 = J+\gamma,\quad \quad m_{\infty} = \pm \sqrt{J+\gamma - 2g^2\frac{(J+\gamma)^2}{J^2}}.
\end{equation}
In this case the system converges to a ferromagnetic fixed point with non-zero magnetization. Since the dynamical system \eqref{eq:effective_dynamical_system} is invariant under change of sign (because $\eta$ is centered and Gaussian), the system has two (opposite) solutions for $m_\infty$.

\subsubsection{Paramagnetic Chaotic phase (PC)}
If instead we look for paramagnetic solutions such that $C_\infty<C_0$, by solving the equations in \eqref{final_eq}, we obtain the following order parameters:
\begin{equation}
\label{eq:dmft_para_orders}
m_{\infty}=0,\quad \quad C_{\infty} = 0,\quad \quad C_0 = \gamma + \frac23g\left(g+\sqrt{g^2+3\gamma}\right).
\end{equation} 
In this case, at long times, the system is found in an endogenously fluctuating stationary state with zero magnetization.

\subsubsection{Ferromagnetic Chaotic phase (FC)}
We can now look for solutions with $C_\infty<C_0$ and $m_\infty\neq 0$. The corresponding solution of the order parameters found from \eqref{final_eq} reads:
\begin{equation}
\label{eq:FC_phase}
    C_0 = J+\gamma, \quad  C_\infty=\frac12\left( \frac{3 J^2}{4 g^2} - J -\gamma \right), \quad  m_\infty = \pm \frac{1}{2}\sqrt{\frac{3J^2}{8 g^2} + J+\gamma - \frac{2g^2}{J^2}(J+\gamma)^2}.
\end{equation}
Like in the PC phase, in this case the dynamics reaches a stationary state with fluctuations that are persistent in time, in a region of phase space having a fixed magnetization.

\subsubsection{FFP to FC transition}
Such transition can be identified by analyzing the stability of the FFP solution, with $C(\tau)=C_0=C_\infty=J+\gamma$ for all $\tau$. In order to do this, we can look at the dynamical equation \eqref{Newton}. In particular, the solution $C_0=C_\infty$ will become unstable as soon as there is a change in convexity of the function $V(C)$ at $C=C_0$. This happens when
\begin{align}
\label{eq:Jc}
    V''(C)|_{C=C_0}=0\Rightarrow J_c=2g\left(g+\sqrt{\gamma+g^2}\right),
\end{align}
where the subscript $c$ stands for "critical".

\subsubsection{FC to PC transition}
A transition from ferromagnetic chaos to paramagnetic chaos appears as one lowers the value of $J$. Such transition is easily found by looking at the value of $J$ at which $m_\infty$ becomes equal to 0 from the FC phase. Thus, solving for $m_\infty=0$ in Eq.~\eqref{eq:FC_phase}, we immediately find that: 
\begin{align}
    J_{PF}=\frac{2}{3}g\left(g + \sqrt{3 \gamma + g^2}\right)
\end{align}
where PF stands for "Para-to-Ferro". We can further verify that at the same value of $J_{PF}$, also $C_\infty$ becomes equal to zero.

\subsubsection{Phase diagram and simulations}

\begin{figure*}[t!]
    \centering
    \includegraphics[width=0.7\textwidth]{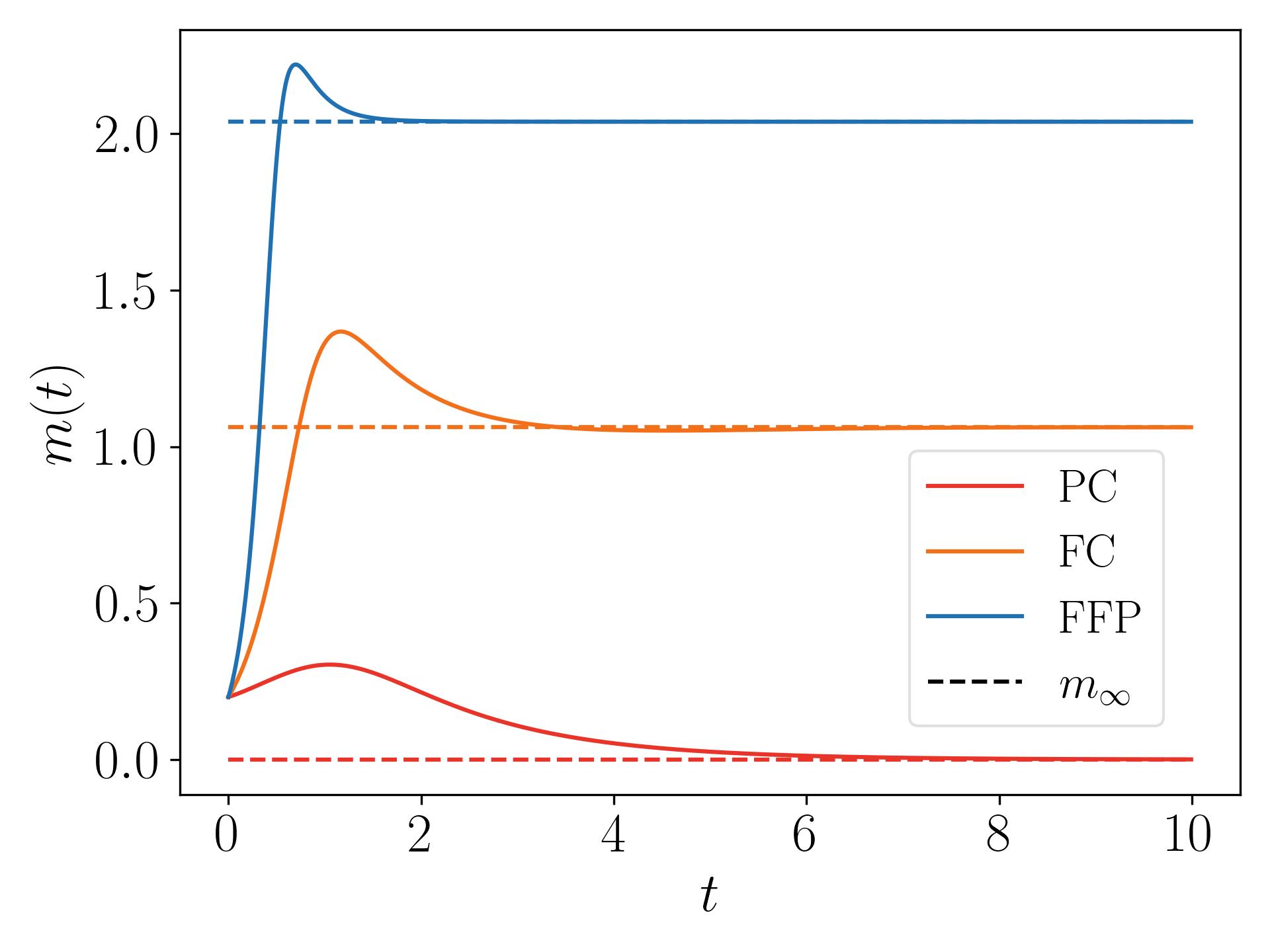}
    \caption{Numerical solution of the DMFT equations using an Euler discretization with $g=1$, $
\gamma=0.5$, $dt=0.01$, and $J=1, 3.1, 6$ in the PC,FC,FFP phases respectively (see legend). We take $C(0,0)= 1$ and $m(0)= 0.2$ and plot $m(t)$. We see that the dynamical value $m(t)$ quickly converges to the TTI value $m_\infty$ in all phases.}
    \label{fig:dmft_m_confined}
\end{figure*}

\noindent Some plots, figures, and comparisons with numerical simulations are now due. Let us start with a representation of the phase diagram in a $(g,J)$ plot, see Fig.~\ref{fig:dmft_phase}. From the figure, we can identify the three regimes: in \textit{blue}, for high values of $J$, the system converges to a ferromagnetic stable fixed point; as $J$ is lowered, we find a transition to chaos (see below for details), identified by the curve $J_c(g)$, at which the system converges at long times to a chaotic attractor manifold characterized by the order parameters in \eqref{eq:FC_phase}. The average magnetization remains non-zero, meaning that we are in a ferromagnetic chaotic regime (\textit{orange} regime). By lowering $J$ below the transition line $J_{PF}(g)$, the chaos becomes paramagnetic (\textit{red} regime), meaning that $m_\infty=0$. We remark that in the chaotic phases (\textit{orange} and \textit{red} in Fig.~\ref{fig:dmft_phase}) it holds $C_0 > C_\infty$, since the system decorrelates during the chaotic trajectories, either completely (paramagnet) or partially (ferromagnet). The fact that the motion is indeed chaotic in the orange and red zones of Fig.~\ref{fig:dmft_phase} can also be checked by computing the maximal Lyapunov exponent and was done by one of our collaborators \footnote{private communication with S.J. Fournier}(see also the upcoming version of \cite{us_non_reciprocal_2025}). \\

\noindent The numerical integration of the DMFT equations \eqref{eq:dmft_equations_alpha=0} in the case $\alpha=0$ is straightforward with an Euler discretization scheme, and it returns the magnetization $m(t)$ and the correlation function $C(t,t')$. A plot of the magnetization $m(t)$ is shown in Fig.~\ref{fig:dmft_m_confined}, in the three different regimes PC, FP, FFP. We can see that in all cases the value of $m(t)$ obtained via numerical integration quickly converges to the infinite time limit $m_\infty$ predicted within the TTI assumption. \\

 \noindent Similarly, in Fig.~\ref{fig:dmft_Cinf_confined}, we plot the autocorrelation function $C(t,t+\tau)$ for (different) values of $t$, while varying $\tau$. We see from the figure that for $t$ large enough all curves for different values of $t$ collapse onto each other. This is true in all the three different dynamical regimes, as seen in Fig.~\ref{fig:dmft_Cinf_confined}. The variations in $t$ are represented by variations in color, where darker colors correspond to larger values of $t$. In all cases, as $\tau$ grows, the curves reach asymptotically the value $C_\infty$ found within the TTI ansatz. We can therefore conclude that in this scenario ($\alpha=0$) the right solution to the DMFT equations is given by the TTI ansatz, given that the curves of the autocorrelation function only depend on the time difference $\tau$, and therefore do not display aging.

\begin{figure*}[t!]
    \centering
    \includegraphics[width=0.7\textwidth]{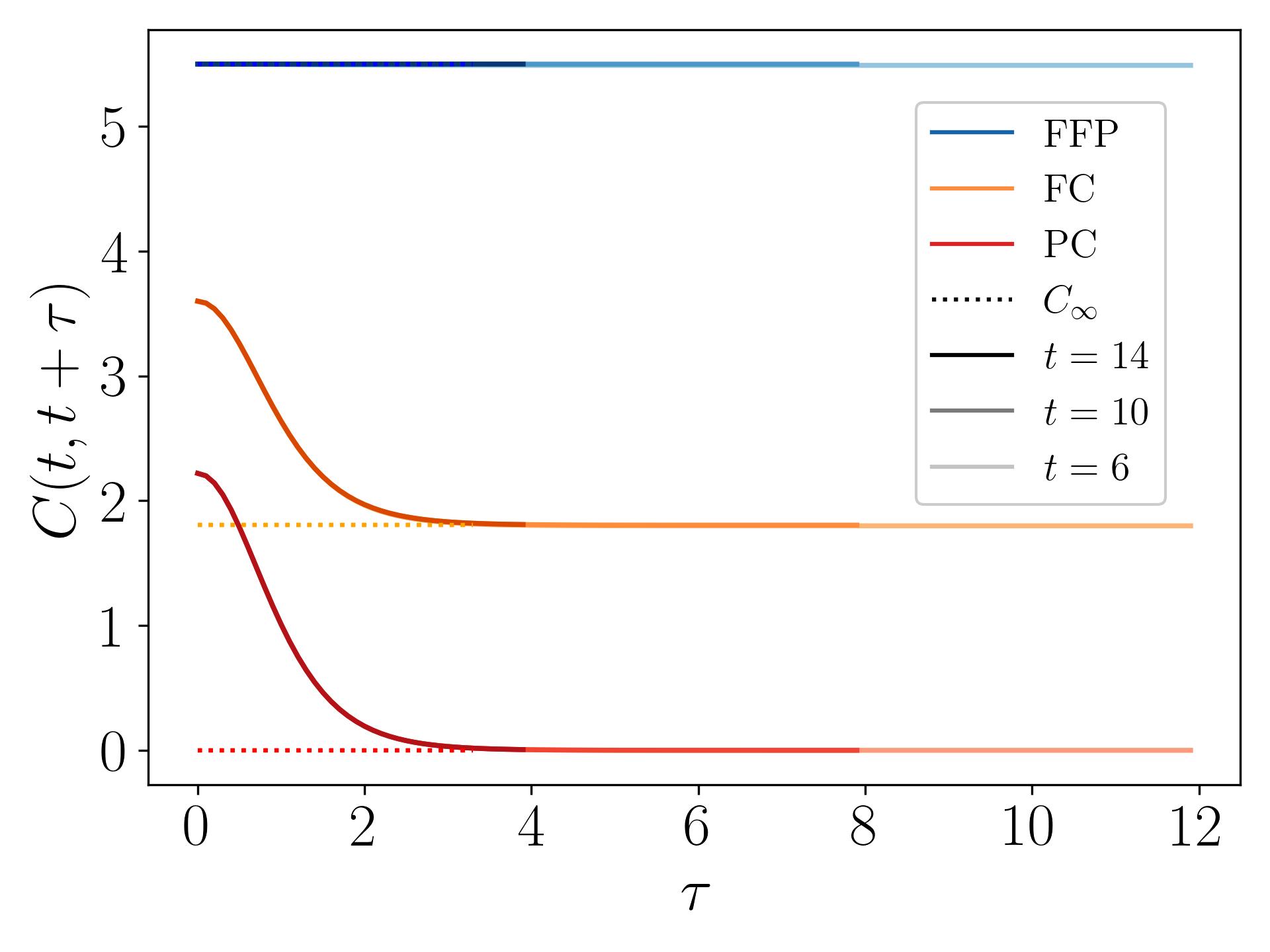}
    \caption{Numerical solution of the DMFT equations using an Euler discretization with $g=1$, $\gamma=0.5$, $dt=0.01$, and $J=1, 3.1, 5$ in the PC,FC,FFP phases respectively. We take $C(0,0)= 1$ and $m(0)= 0.2$ and plot $C(t,t+\tau)$ as a function of $\tau$ for three times: $t=14,10,6$ in each phase. Darker lines correspond to older times. 
    We see that the dynamics does not display any sign of aging, and for any $t$, $C(t,t+\tau)$ quickly converges to $C_\infty$ as $\tau$ increases (in each phase).}
    \label{fig:dmft_Cinf_confined}
\end{figure*}

\noindent An important check to do is to verify that by simulating the actual dynamical system \eqref{eqapp:mod} with $\alpha=0$ and $\Phi_1(u)=2g^2u^2$ and averaging over many realization of the disorder, we obtain the same curves as those obtained by integrating numerically the original DMFT equations for the order parameters. In order to do this we need to make sure that we initialize, both the DMFT equations and the real system, with the same initial condition. The results are presented in Fig.~\ref{fig:rnn_num_analytical_comparison}, where we use an initial condition of $m(0)=0.2$ and $C(0,0)=1$. This can be easily implemented in the DMFT, where $m$ and $C$ are functions to be integrated numerically. Instead, when simulating the dynamical system \eqref{eqapp:mod}, we can average over the disorder and, for each realization of the disorder, over paths starting at a vector picked randomly with the mean and self-overlap indicated above. A good matching shows that the DMFT equations are indeed correctly describing the real dynamics of the system for $N>>1$ units.

\begin{figure*}[t!]
    \centering
    \includegraphics[width=0.71\textwidth]{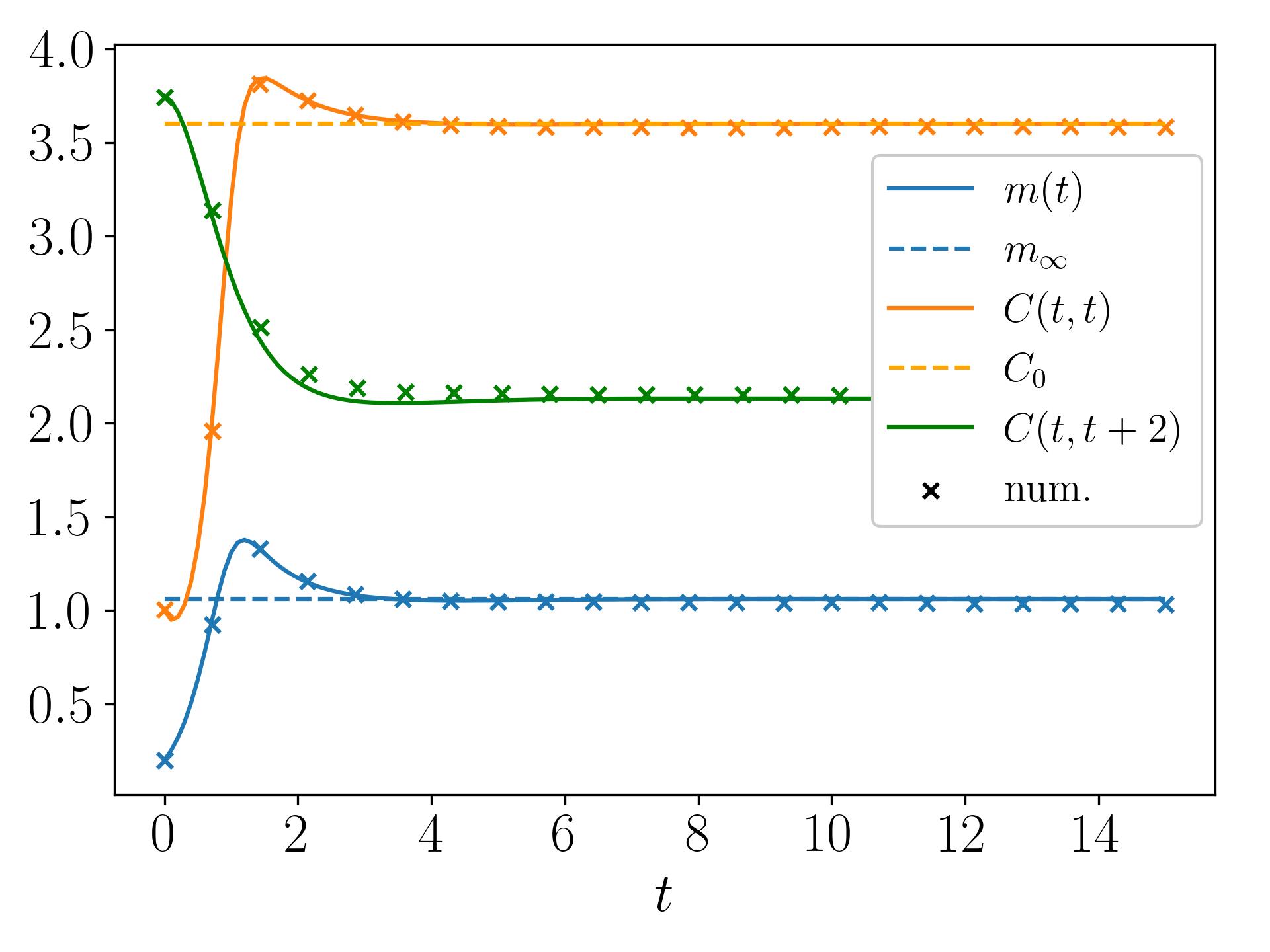}
    \caption{Comparison of the results found via numerical integration of the DMFT equations and a real simulation of $N$ interacting units, with $g=1$, $\gamma=0.5$, $dt=0.1$, $J=3.1$, $N=220$, $C(0,0)=1$, $m(0)=0.2$ and 20000 iterations over the disorder for the real simulation. We achieve good matching.}
    \label{fig:rnn_num_analytical_comparison}
\end{figure*}

\subsubsection{The Force}
It is interesting to note that with the dynamical formalism presented above we can directly compute the asymptotic average value of the force in Eq.~\eqref{eq:effective_dynamical_system}, i.e. $\lim_{t\to\infty} \langle F^2(t)\rangle$. We can show that, in general:
\begin{align}
    \lim_{t\to\infty}\langle F(t)^2\rangle=2g^2 C_0^2-\lambda_\infty^2C_0+(Jm_\infty)^2.
\end{align}
By replacing $\lambda_\infty=C_0-\gamma$ one can plot this quantity to see that it is non-zero in the chaotic phases (cf Fig. S-5 in \cite{us_non_reciprocal_2025}). This already points to the fact that a direct comparison between the DMFT and the Kac-Rice complexity of equilibria (where the force is zero) can be difficult. Nevertheless, given the simplicity of the model, we think it is worth exploring the comparison of dynamical and static approaches as a starting point.

\subsection{Complexity of the CM}
We now present the result of the complexity for the CM \footnote{In the RS ansatz. We have not checked the stability of this ansatz, which is left for future works.}, by applying the choice $\Phi_1(u)=2g^2u^2$, $\alpha=0$ and $\lambda=q-\gamma$ to the results of Sec.~\ref{sec:rnn_topo_general}. We further assume $q>0$. 

\subsubsection{Quenched complexity}
The quenched complexity is found from Eq.~\eqref{eq.compvar} by considering
\begin{align}\label{eq:cm_CompQgen}
\begin{split}
&\tilde{\Sigma}(m,q,\tilde{Q})=\mathcal{V}(m, q, \tilde Q)+\mathcal{P}(m,q,\tilde{Q})+\Theta (q),
\end{split}
\end{align}
where here we remove $\lambda$ as a function argument, and keep only $q$; we also keep the same notations as for the general case in Sec.~\ref{sec:rnn_topo_general}, with the expressions of $\mathcal{V},\mathcal{P},\Theta$ given below:
\begin{align}\label{eq.vol}
&\mathcal{V}(m, q, \tilde Q)= \frac{q-m^2 + (q - \tilde{Q}) \log(2 \pi) + (q - \tilde{Q}) \log\left(q - \tilde{Q}\right)}{2 (q - \tilde{Q})}.
\end{align} 
 and
\begin{align}
\begin{split}
\mathcal{P}(m,q,\tilde{Q})=&\frac{1}{2} \Bigg[
  -\frac{J^2 m^2}{2g^2\left(q^2-\tilde{Q}^2\right)}
  +\frac{J m^2\left(q-\gamma\right)}{g^2\left(q^2-\tilde{Q}^2\right)}
  -\frac{(\gamma-q)^2\left(q^2+q \tilde{Q}-\tilde{Q}^2\right)}{2g^2\left(q-\tilde{Q}\right)\left(q+\tilde{Q}\right)^2}\\&
  -\frac{\tilde{Q}^2}{q^2-\tilde{Q}^2}
  -\log\left(4\pi g^2\left(q^2-\tilde{Q}^2\right)\right)\Bigg]
\end{split}
\end{align}
and finally:
\begin{align}
    \Theta(q)= \begin{cases}
        \frac{1}{2} \left(-1 + \frac{(\gamma - q)^2}{4 g^2 q} + \log(4 g^2 q)\right)\quad\quad\text{if }\quad |q - \gamma|< 2 g \sqrt{q}\\
        \log(q-\gamma)\quad\quad \text{if }\quad|q-\gamma|\geq 2 g \sqrt{q}
    \end{cases}.
\end{align}

\subsubsection{Annealed complexity}
In the annealed case instead, we obtain:
\begin{align}
    \Sigma_A(m,q)=\mathcal{V}_A(m,q)+\mathcal{P}_A(m,q)+\Theta(q)
\end{align}
where $\Theta$ is the same as above, and 
\begin{equation}\label{eq:cm_pA}
\begin{split}
&\mathcal{V}_A(m,q)=\frac{1}{2}+\frac{1}{2}\log(2\pi (q-m^2)),\\
&\mathcal{P}_A(m,q)=-\frac{J \left(2\gamma + J\right)m^2 + \left(\gamma^2 - 2J m^2\right) q - 2\gamma q^2 + q^3}{4g^2 q^2} - \frac{1}{2} \log\left(4g^2 \pi q^2\right)
\end{split}
\end{equation}
Notice that, in general: the determinant terms of quenched and annealed computations coincide \cite{ros2019complex, RosEcoQuenched2023}; the probability term in the annealed case can be obtained by setting $\tilde{Q}\to 0$; however the volume terms differ and one cannot simply obtain the annealed one by sending $\tilde{Q}\to 0$. This happens because, while the probability term can be factored when the replicas are orthogonal, the phase space term counts the portion of phase space where all constraints (including the constraint on all replicas being orthogonal) are met. Therefore, in general, sending $\tilde{Q}\to0$ in the quenched computation will not factor the $n$ phase space terms associated to each replica.


\subsection{Analysis of the complexity and comparison with the dynamics}
Let us dive into the analysis of the quenched and annealed expressions presented above. As we analyze the complexity, we make comparisons with the results of the DMFT found above. In order to compare the two results, we need to compare the corresponding order parameters, according to the following associations:
\begin{align}
    C_0\longleftrightarrow q,\quad C_\infty\longleftrightarrow \tilde{Q},\quad m_\infty\longleftrightarrow m.
\end{align}

\noindent We start by looking for stable fixed points, and then we will consider the unstable fixed points. Plots of the complexity will be given as well.

\paragraph*{Isolated stable equilibrium and the FFP phase. }
Let us consider the formula for the annealed complexity $\Sigma_A$, and study the stable solutions to $\partial_m\Sigma_A(m,q)=0$ and $\partial_q\Sigma_A(m,q)=0$. One can verify that there is only one solution (that represents an isolated stable equilibrium), with order parameters:
\begin{equation}\label{eq:DMFT_fix_CM}
  q_{\text{ffp}}:=J+\gamma, \quad \quad \quad m_{\text{ffp}}:=\pm \sqrt{J+\gamma - 2g^2\frac{(J+\gamma)^2}{J^2}},
\end{equation}
satisfying $\Sigma_A(m_\text{ffp},q_\text{ffp})=0$. This point is stable provided that $q_{\text{ffp}}>\gamma+2g\sqrt{q_{\text{ffp}}}$, which implies that $J>J_c=2 g (g + \sqrt{\gamma + g^2})$: this is precisely the regime of parameters which corresponds to the FFP phase identified by the DMFT solution, see Eq.~\eqref{eq:Jc}. Moreover, the parameters in \eqref{eq:DMFT_fix_CM} coincide with the values predicted by the asymptotic solution of the DMFT for the CM, see Eq.~\eqref{DMFT_fixFFCM}. As we shall see below, in the region $J>J_c$ unstable equilibria also exist and have a positive complexity; however, only the stable isolated equilibrium matters for the long time dynamics. In the FFP phase there is therefore a direct correspondence between the result found with the DMFT and with the Kac-Rice: there exists only two stable isolated fixed points with zero complexity, and they attract the dynamics at long times, see Fig.~\ref{fig:dmft_Cinf_confined} and Fig.~\ref{fig:kac_conf_ffp}.   \\

\begin{figure*}[t!]
    \centering
    \includegraphics[width=0.75\textwidth]{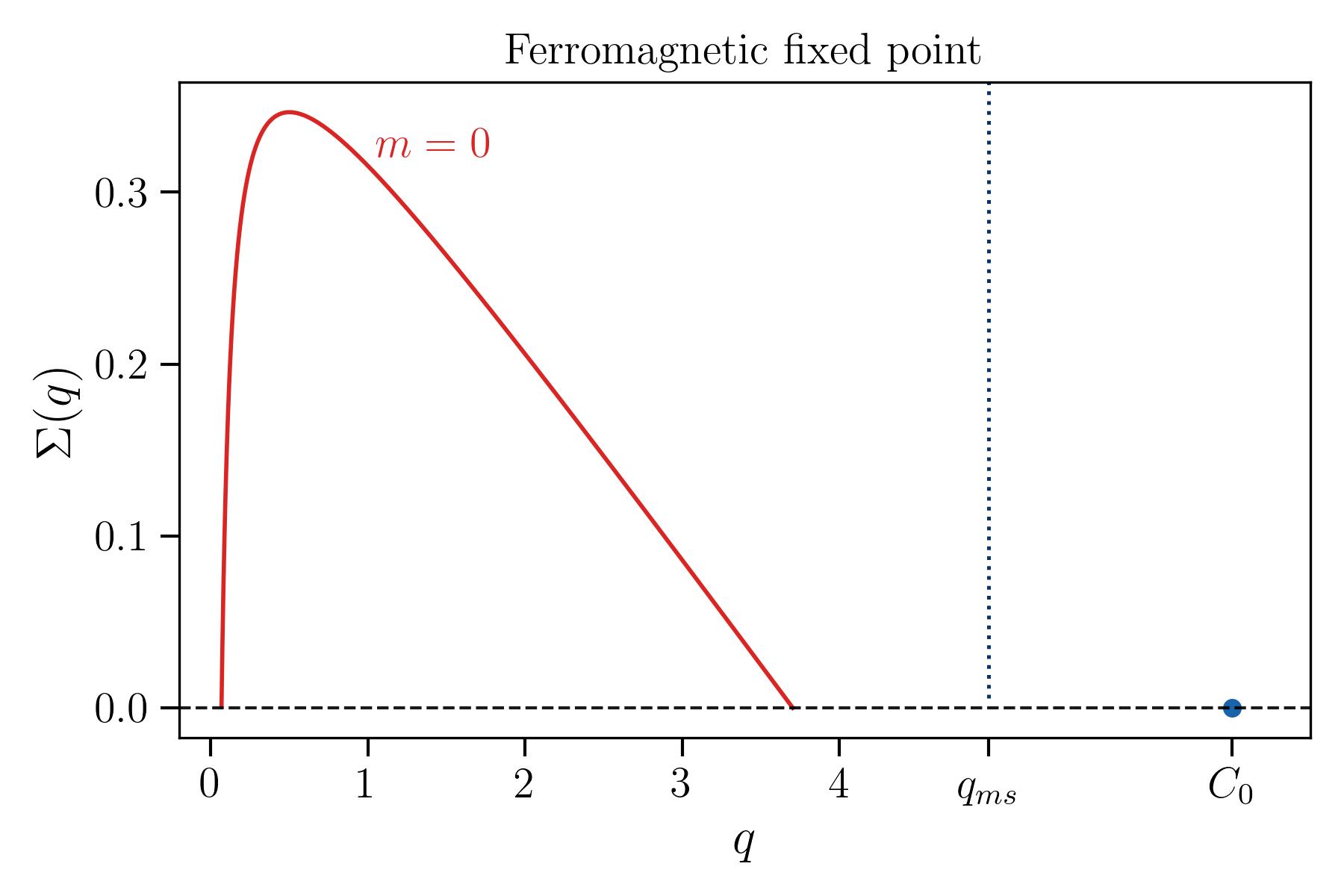}
    \caption{Plot of the quenched complexity for the CM with $g=1,J=6,\gamma=0.5$. The complexity is optimized over all order parameters except $q$, which is used to control the instability index. The curve is red to indicate that the points are paramagnetic, that is, the optimal $m$ is 0. The blue dotted line represents the critical $q=q_{ms}$ ("ms" for marginally stable) that separates unstable from stable equilibria. The blue point is the unique stable fixed point, located at a value of $q$ corresponding to the dynamical order parameter $C_0$.}
    \label{fig:kac_conf_ffp}
\end{figure*}

\noindent Let us remark that in this regime the complexity does not scale with $N$, but instead gives a finite number of equilibria (in this case two). The annealed complexity is exact in this regime, and one can check that, by starting from the chaotic region, the quenched complexity (optimized over $\tilde{Q}$ and $m$) converges to 0 as we approach the transition value $J\to J_c$. More precisely, it is not hard to check that 
\begin{align}
    \lim_{J\to J_c^-}\text{extr}_{m,\tilde{Q},q}\tilde{\Sigma}(m,q,\tilde{Q})=0.
\end{align}
We show a figure of this in Fig.~\ref{fig:rnn_ferro_steps}, when we discuss more carefully the scaling of the complexity at the transition from the FC to FFP phases. \\

\noindent When we lower $J<J_{c}$ the system's dynamics enters the chaotic phase, and the complexity calculation shows that the dynamical order parameters admit exponentially many equilibria. More precisely we observe that, in general, $C_0<q_{ms}$ and $\Sigma(m,C_0)>0$, meaning that there are exponentially many unstable fixed points in the shell chosen by the dynamics. Below we shall make more quantitative comparisons in both the PC and FC phases. \\

\subsubsection{Unstable equilibria in the PC phase. } 
\noindent When we optimize the complexity, we consider fixed $q$, and optimize $\tilde\Sigma$ over $\tilde{Q}$ and $m$, denoting  with $\tilde{Q}_{\text{typ}}$ and $m_{\text{typ}}$ the values where the optimum is attained. In this way we can plot the complexity solely as a function of the extensive instability index of the equilibria, controlled by the self-overlap $q$.  \\

\noindent The paramagnetic solution with $m_{\text{typ},P}=0$ ("P" for paramagnetic) is a solution of the optimization problem, with $\tilde{Q}_{\text{typ},P}=0$. In this case we can verify that the corresponding quenched complexity of unstable equilibria coincides with the annealed one, and they both read:
\begin{align}
 \Sigma(m=0,q)=\frac{1}{8} \left(-\frac{(\gamma - q)^2}{g^2 q} + 4\log2\right),\quad q<q_{ms}.
\end{align}
We can easily see that this curve has a maximum in $q=\gamma$ and goes to zero at 
\begin{align} q_\pm=\gamma + g \left(\pm2 \sqrt{\log2} \sqrt{\gamma + g^2 \log2} + g \log4\right).
\end{align}
A plot is shown in Fig.~\ref{fig:kac_conf_pc}. In particular, we see that the order parameters of the DMFT in the PC region (see Eq.\eqref{eq:dmft_para_orders}) do not correspond to the maximum (nor the minimum) of the complexity curve of paramagnetic fixed points. In order to link dynamical and statical approaches, one might hypothesize that the dynamics  wonders in particular regions of phase space, such as where the equilibria are most abundant, or where they are less unstable. Thus one might try to make a comparison $C_0\leftrightarrow q$, where $q$ is either the maximum of the complexity curve $(q=\gamma)$ or the point where equilibria have the least fraction of unstable modes, that is $q_+$. However we see (red dashed line in Fig.~\ref{fig:kac_conf_pc}) that none of these happen, and the dynamics chooses a value of $C_0$ that lies within the complexity curve. \\


\begin{figure*}[t!]
    \centering
    \includegraphics[width=0.75\textwidth]{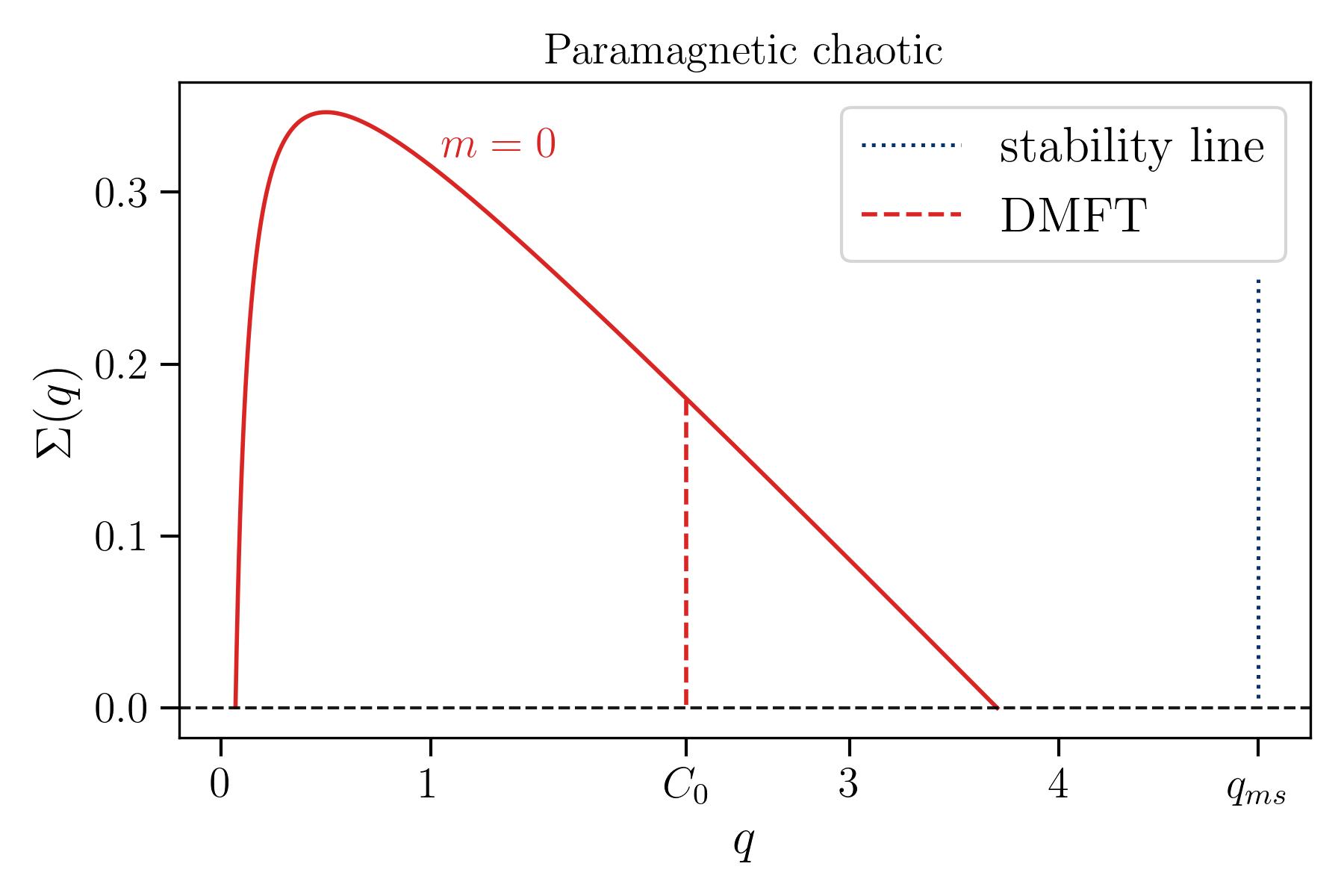}
    \caption{Plot of the paramagnetic complexity of unstable equilibria ($q<q_{ms}$) in the PC phase, with $g=1$, $\gamma=0.5$ ($J$ being irrelevant for the paramagnetic solution). We show the value $C_0$ of the self-overlap achieved by the dynamics in this phase, observing that it does not correspond to any particular point (maximum or minimum) of the complexity.}
    \label{fig:kac_conf_pc}
\end{figure*}

\noindent Quite interestingly, we noticed that at the DMFT value $q\to C_0$ in the PC phase, the complexity is a constant (i.e. does not depend on $g,\gamma$):
\begin{align}
    \Sigma\left(m=0,q=C_0\right)=\frac{1}{6} (3\log2-1).
\end{align}
More precisely, what we observe is that both differential equations
\begin{align} d\Sigma(m=0,q(g,\gamma))/dg=d\Sigma(m=0,q(g,\gamma))/d\gamma=0
\end{align}
admit two families of solutions: one corresponds to the maximum of the complexity, $q(g,\gamma)=\gamma$, and the other one has the same shape of the DMFT solution $C_0(g,\gamma)$, up to a constant. However, we do not have boundary conditions to fix the constants of the solution, which must be imposed by knowledge of at least one point of the DMFT solution (e.g. by knowing expressions for $g=1$ or $\gamma=0$). Additionally, this seems to be a feature of the fact that this model does not present a dynamical PFP (paramagnetic fixed point) phase (perhaps because $\Phi_1$ is homogeneous), which implies that there are always exponentially many fixed points for any $g,J,\gamma>0$ (i.e. no topology trivialization of the red curve in Fig.~\ref{fig:kac_conf_pc} by varying the control parameters). As we show in Chapter~\ref{chapter:scs}, in the model of randomly interacting neurons proposed in \cite{ChaosSompo88} the complexity goes to 0 at the PFP-PC transition happening there, so that this invariance cannot be a general feature. See also Sec.~\ref{sec:confined_mixed} for a choice of $\Phi_1$ that does not possess this property.\\

\noindent We think that this interesting fact is, nonetheless, evidence of a connection between the complexity found via Kac-Rice and the dynamical solution found via DMFT, although at this point we are not able to see how.

\begin{figure*}[t!]
    \centering
    \includegraphics[width=0.75\textwidth]{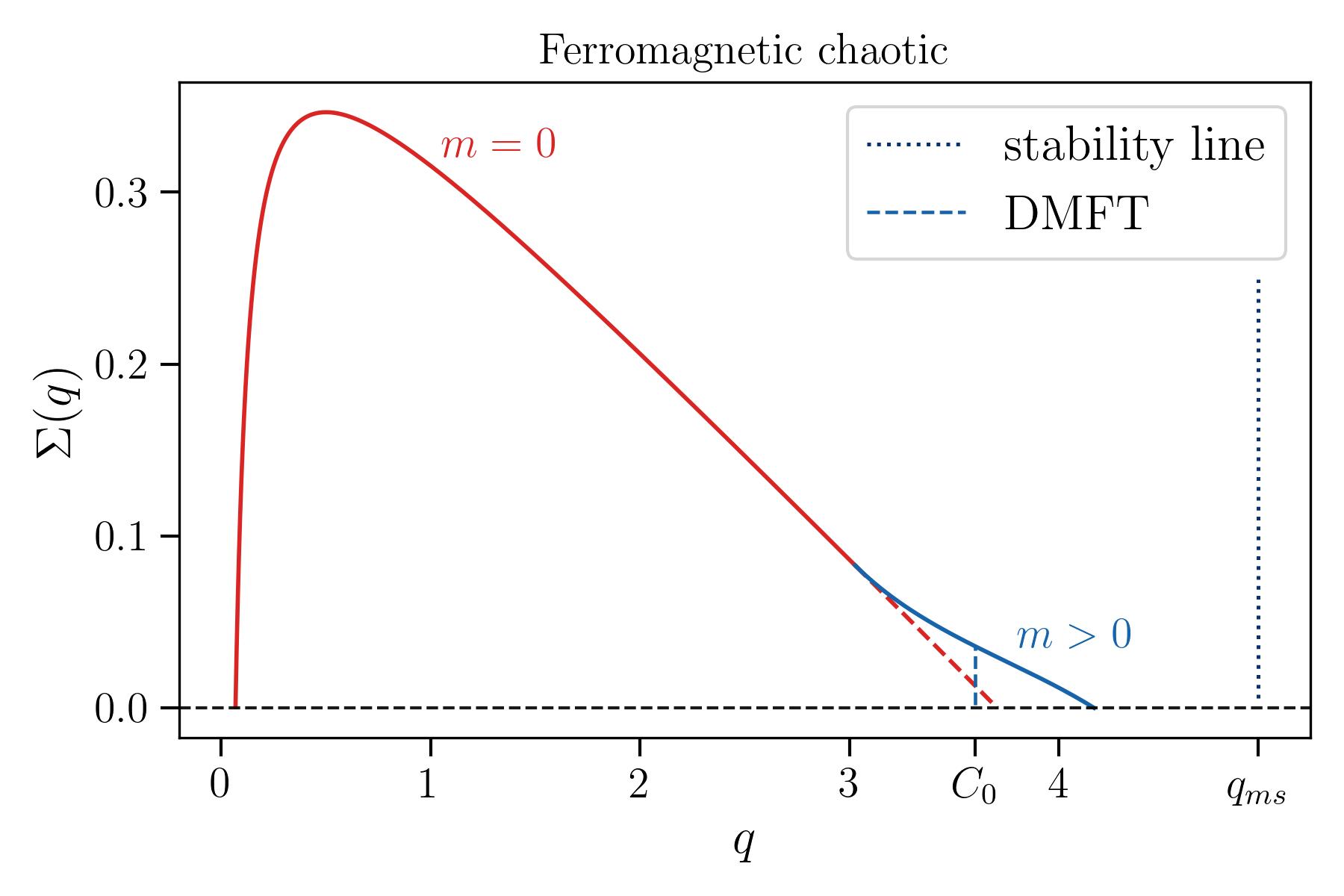}
    \caption{Plot of the quenched complexity in the FC phase, with $g=1$, $\gamma=0.5$, $J=3.1$. The red (blue) curve represents the paramagnetic (ferromagnetic) complexity. The dynamical value $C_0$ obtained for this choice of $g,\gamma,J$ is shown on the $q$-axis. We see that it lies within the curve of the ferromagnetic complexity of unstable equilibria.}
    \label{fig:kac_conf_fc}
\end{figure*}

\subsubsection{Unstable equilibria in the FC phase. }
Another solution of the complexity that is found by optimizing over $m$ and $\tilde{Q}$ is given by ferromagnetic unstable fixed points. This solution, denoted by $m_\text{typ,F}$ and $\tilde{Q}_\text{typ,F}$, appears only in the tail of the complexity curve (see Fig.~\ref{fig:kac_conf_fc}). The values that optimize the complexity are given by:
\begin{align}
\begin{split}
&\tilde{Q}_\text{typ,F}=-\frac{J (2 \gamma + J - 2 q)}{2 g^2} - q,\\
&m_\text{typ,F}
=\pm\sqrt{\frac{\tilde{Q}_\text{typ,F}\,\,\beta}
{2(q + \tilde{Q}_\text{typ,F}) \left[J (2\gamma + J - 2q) \tilde{Q}_\text{typ,F} + g^2 (q + \tilde{Q}_\text{typ,F})^2\right]}}\\
&\beta:=  2g^2 \left(q^3 + 3q^2 \tilde{Q}_\text{typ,F} + q (\tilde{Q}_\text{typ,F})^2 - (\tilde{Q}_\text{typ,F})^3\right)-(\gamma - q)^2 (3q - \tilde{Q}_\text{typ,F}) \tilde{Q}_\text{typ,F}.
\end{split}
\end{align}
An interesting value of $q$ to consider is the one where ferromagnetic equilibria become paramagnetic (see the meeting of the blue and red lines in Fig.~\ref{fig:kac_conf_fc}). By solving for $\tilde{Q}_\text{typ,F}=0$ we find:
\begin{align}
\label{eq:rnn_qpf_def}
q\to q_{\text{PF}}=\frac{2 \gamma J + J^2}{2 (J-g^2)},\quad \quad \text{"PF = para-to-ferro"}.
\end{align}
which also implies $m_{\text{typ, F}}=0$, meaning that at this value of $q$ the optimization of the complexity over $m$ gives only the paramagnetic solution. Notice that for $q<q_{PF}$ ferromagnetic equilibria are still present, but the paramagnetic ones exponentially outnumber them. \\

\noindent At difference with the paramagnetic one, the ferromagnetic quenched complexity does not coincide with the annealed one. Indeed, the annealed complexity is optimized at 
\begin{align}m_{\text{typ, F}}^A=\pm\sqrt{\frac{q (
 2 \gamma J + J^2 + 2 g^2 q - 2 J q)}{J (2 \gamma + J - 2 q)}},
 \end{align}
 and the difference between the annealed and quenched complexities after optimization over $\tilde{Q}$ and $m$ reads:
\begin{align}
D(q):= \Sigma_A(m_\text{typ,F}^A,q)-\Sigma(m_\text{typ,F},q)=\frac{-(\gamma + J - q)^2 (J (2 \gamma + J - 2 q) + 2 g^2 q)^3)}{
 4 g^2 J^2 (2 \gamma + J - 2 q)^2 q (J (2 \gamma + J - 2 q) + 4 g^2 q)}.
\end{align}
The difference $D(q)$ vanishes at $q=J+\gamma$ and $q=q_{PF}$. Notice that the first value corresponds to the asymptotic value of $C_0$ selected by the dynamics in the FC phase, see \eqref{eq:FC_phase}. Hence, we can try to find the magnetization and the overlap $\tilde{Q}$ that maximize the complexity in the shell $q=C_0=J+\gamma$ and compare them with the dynamical values of $m_\infty,C_\infty$. The optimization of the complexity at $q=J+\gamma$ gives:
\begin{align}
\begin{split}
\label{eq:confined_m_Q_ferrp_dmft_comp}
&m_\text{typ,F}(q=J+\gamma)=\pm\sqrt{(\gamma + J) (-2 \gamma g^2 - 2 g^2 J + J^2)}/J\\
&\tilde{Q}_\text{typ,F}(q=J+\gamma)=-\gamma + \frac{1}{2} J (J/g^2-2).
\end{split}
\end{align}
If the system's dynamics, at long times, were to surf between ferromagnetic fixed points in a certain shell of size $q=C_0$, one could expect that $C_\infty$ corresponds to the typical overlap between two ferromagnetic fixed points within that shell, and $m_\infty$ is their typical magnetization. Comparing the values in \eqref{eq:confined_m_Q_ferrp_dmft_comp} with $m_\infty,C_\infty$ as found within the FC phase of the dynamics in Eq.~\eqref{eq:FC_phase}, we see that they do not match. \\

\noindent One can further try to impose that $\tilde{Q}(q=C_0=J+\gamma)=C_\infty$ and then solve for the $m_\text{typ,F}$ such that $\partial_{\tilde{Q}}\tilde{\Sigma}_Q(m_\text{typ},q=C_0,\tilde{Q})|_{\tilde{Q}=C_\infty}=0$: however, also the $m_\text{typ}$ found in this way does not coincide with the DMFT value $m_\infty$. Moreover, at difference with the paramagnetic chaotic case, the complexity evaluated at the dynamical order parameters, that is $\tilde{\Sigma}(m_\infty, C_0,C_\infty)$, is not $g,\gamma$ independent.\\

\noindent We have therefore carried out a precise comparison between the complexity and the DMFT in the FC case. Quite interestingly we have seen that at the dynamical value ($C_0=J+\gamma$) quenched and annealed complexities coincide (after optimization over the other order parameters). However, the optimal values of the other order parameters of the complexity do not match with those found via DMFT, at the saddle point. Additionally, we used an RS ansatz, but we did not check for its stability, which is left for future works.\\ 

\noindent Thus, besides this matching of quenched and annealed complexities at $C_0$ in the FC phase, and the strange invariance of the complexity in the PC phase, we are not able to draw other solid links between the Kac-Rice complexity and the DMFT in the chaotic regions. Moreover, these two coincidental features are probably due to the fact that $\Phi_1$ was chosen to be homogeneous; indeed in the SCS model studied in Chapter~\ref{chapter:scs} these links do not hold anymore.

\subsubsection{Ferromagnetic to Paramagnetic transition.} 
We have seen that for $J$ big enough the complexity has a ferromagnetic branch, see blue curve in Fig.~\ref{fig:kac_conf_fc}. In particular, the red (paramagnetic) and blue (ferromagnetic) curves of the complexity meet at a point $q_{PF}$ found in Eq.~\ref{eq:rnn_qpf_def}. In particular, there must be a range $J\in[J_-,J_+]$ where this ferromagnetic (blue) branch is present. This range is found by imposing that $q_{PF}=q_+$ and solving for $J$:
\begin{align*}
&q_{PF}\overset{!}{=}q_+\Rightarrow \frac{2 \gamma J + J^2}{2 (J-g^2)}=\gamma + g \left(2 \sqrt{\log2} \sqrt{\gamma + g^2 \log2} + g \log4\right)\\
&\Rightarrow  J_{\pm}=g \left(\sqrt{\ln 16} \pm \sqrt{\ln16-2}\right) \left(\sqrt{\gamma + g^2\ln2} + g \sqrt{\ln2} \right).
\end{align*}
In particular, we denote by $J_\Sigma:=J_-$. This line is particularly important, as it is very close to the line of $J_{PF}$ found via DMFT, where a dynamical transition from paramagnetic to ferromagnetic chaos appears, see Fig.~\ref{fig:rnn_dmft_phase_kac}. A possible conjecture is that the dynamical transition from paramagnetic to ferromagnetic chaos appears at the same value of $J=J_\Sigma$ where the most abundant equilibria at large $q$ become ferromagnetic. Although we have seen that in the FC phase $C_0$ does not correspond to the point where the ferromagnetic complexity goes to zero (see Fig.~\ref{fig:kac_conf_fc}), one could still imagine that the two points collapse at the transition. However, as we can see, the equations for $J_\Sigma$ and $J_{PF}$ are different. Despite this, the fact that they are surprisingly close (see Fig.~\ref{fig:rnn_dmft_phase_kac}) suggests that there might be a stronger underlying connection. 

\begin{figure*}[t!]
    \centering
    \includegraphics[width=0.68\textwidth]{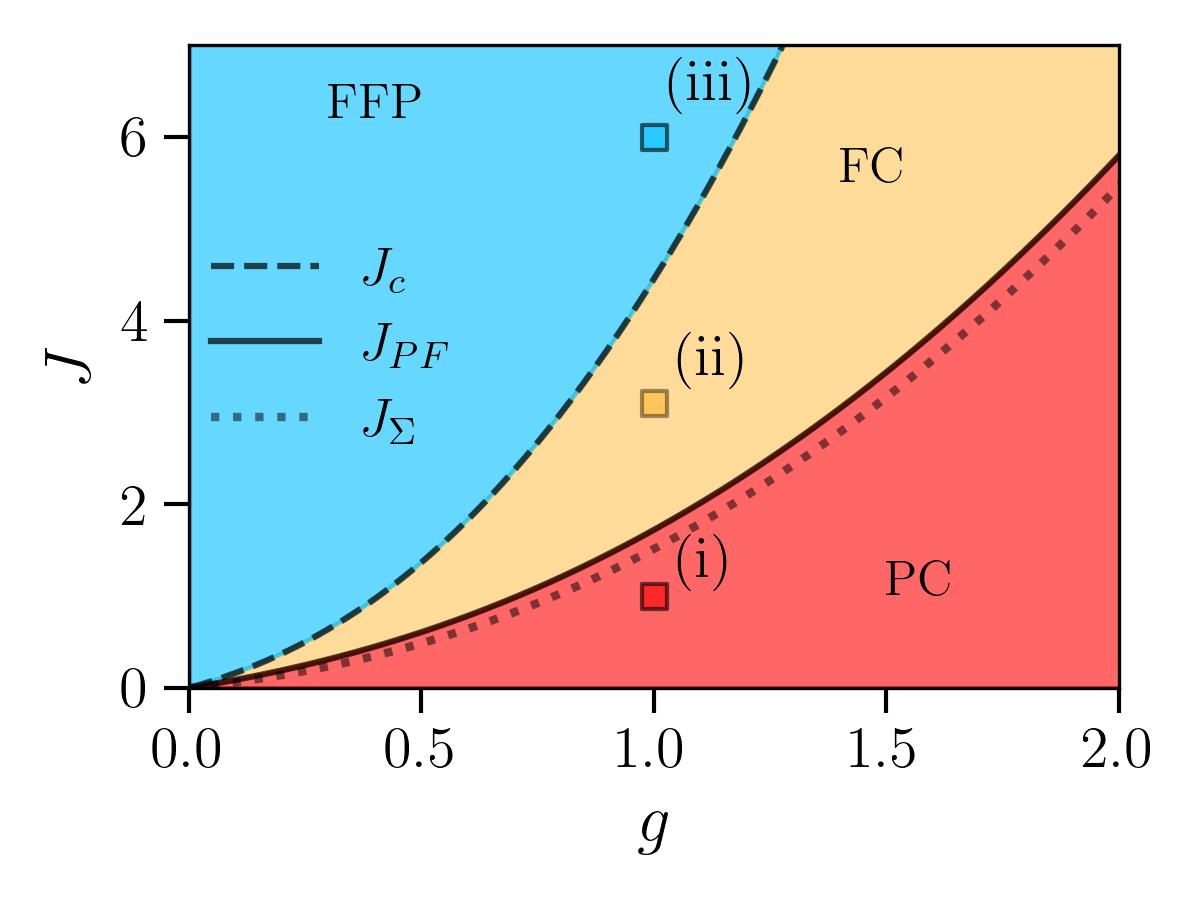}
    \caption{Dynamical phase diagram where now we have added the transition line $J_\Sigma$, which indicates the point where a branch of ferromagnetic complexity appears, as we increase $J$ from the PC phase.}
    \label{fig:rnn_dmft_phase_kac}
\end{figure*}

\noindent In particular, $J_\Sigma$ lies below the critical line $J_\text{PF}$. This means that in the PC phase (sufficiently close to the transition line $J_{PF}$), the fixed points with less unstable directions are ferromagnetic, despite the chaotic attractor being paramagnetic. The region where these ferromagnetic unstable equilibria exist extends up to $J_+$, which lies above $J_c$, meaning that in the FFP phase, there are also exponentially many ferromagnetic unstable equilibria for $J$ sufficiently close to $J_+$ (not shown in the picture).\\

\noindent One last attempt that we can make is to find the transition when the complexity at the point $q=C_0$ in the FC phase becomes paramagnetic as we lower $J$. This happens for:
\begin{align}
    q_{PF}\overset{!}{=}C_0\Rightarrow\frac{2\gamma J+J^2}{2(J-g^2)}=J+\gamma\Rightarrow J=g \left(g + \sqrt{2 \gamma + g^2}\right),
\end{align}
which again does not correspond to the dynamical ferromagnetic-to-paramagnetic chaotic transition line $J_\text{PF}$.\\

\begin{figure*}[t!]
    \centering
    \includegraphics[width=1\textwidth, trim={4 4 4 4},clip]{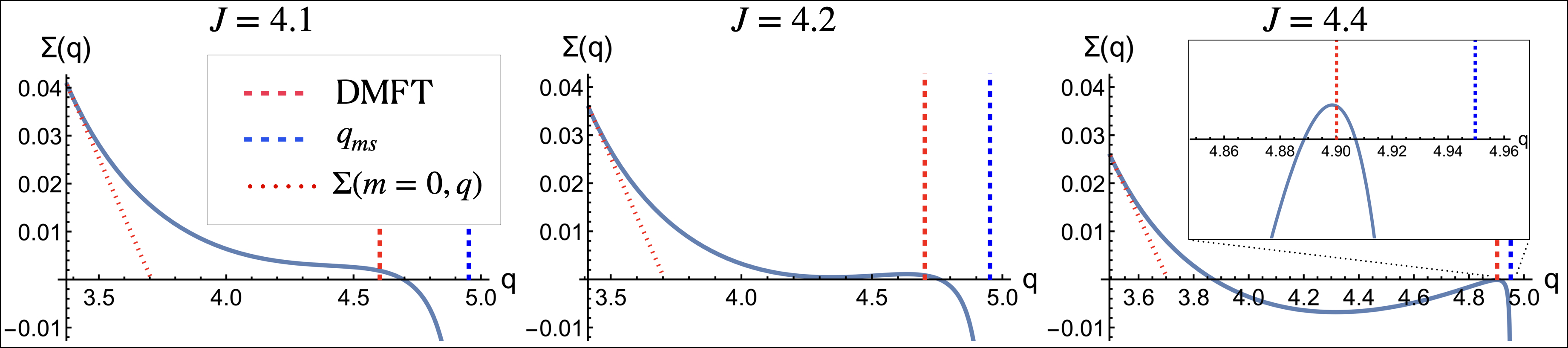}
    \caption{Complexity curves $\Sigma(q)\equiv \Sigma(m_\text{typ}(q),q)$ of the confined model for $g=1$ and $\gamma=0.5$, in the ferromagnetic chaotic phase (we zoom on the ferromagnetic branch). Blue lines correspond to ferromagnetic equilibria, red dotted lines to paramagnetic ones. The blue dashed line marks the marginal stability line $q_{\text{ms}}$, red dashed lines indicate the position of the dynamical order parameter $C_0$. From left to right, as $J$  increases, the ferromagnetic complexity develops a local maximum. As $J\to J_c$, the local maximum converges to $q_\text{ms}$ and the (local) complexity vanishes.}
    \label{fig:rnn_ferro_steps}
\end{figure*}

\subsection{Additional results and comparison with the Lyapunov}
In this paragraph we wish to investigate the behavior of the complexity for values of $J$ that are close to the transition line $J_c$ (i.e. from FC to FFP). We see from Fig.~\ref{fig:rnn_ferro_steps} that the dynamical order parameter $C_0$ (red dashed line) always lies within a region of positive complexity (blue line) as $J\to J_c$. In particular, as we can see from the figure, the complexity develops a small \textit{island} with a local maximum, close to which $C_0$ resides. However, $C_0$ does not correspond to the maximum of this island. It was suggested in \cite{wainrib2013topological} that the complexity takes the same shape of the maximal Lyapunov exponent as we cross a topology trivialization transition, from one stable equilibrium to an exponential abundance of unstable ones. In this case we therefore have to study the scaling of the maximum of this \textit{island} of complexity at the transition. If we fix $g$ (and use $\gamma=0$ for simplicity) and we write $J=J_c-4g^2\delta=4g^2(1-\delta)$ we find that for $\delta<<1$:
\begin{align}
    \Sigma_{lm}(\delta)=\frac{1}{4}\delta^2+\mathcal{O}(\delta^3)
\end{align}
where $\Sigma_{lm}$ is the complexity at this \textit{local maximum} described above. The quadratic behavior of $\Sigma_{lm}$ is robust as the same behavior holds if we consider the annealed complexity (although the prefactor will be different). The scaling of the maximal Lyapunov exponent at the transition instead reads\footnote{Private conversation with S. J. Fournier}:
\begin{align}
    \lambda_{max}(\delta)=\frac{5}{4}g^2\delta+\mathcal{O}(\delta^2)
\end{align}
We see that not only do they not match, but they also have different critical exponents. \\

\noindent We note that this comparison can be made for multiple choices of the model (both CM and SpM) and varying both $g$ or $J$ at the FC-FFP transition. For all cases, we verified that the scalings of maximal Lyapunov and complexity do not match (see the new upcoming version of \cite{us_non_reciprocal_2025}). This proves that the Kac-Rice is not sufficient to explain the dynamics of the system, and we hope that this will motivate further research to bridge the gap.

\section{Dynamics and complexity: $\alpha>0$}
\label{sec:rnn_alpha>0}
In this section we analyze what happens when we switch on correlations between different units, that is, we study the case $\Phi_2(u)=\alpha\Phi'_1(u)$ with $\alpha\in(0,1)$. Moreover, we now consider the SpM with $J=0$ and $q=1$, since calculations are a bit easier in that setting (and we refer to our work \cite{us_non_reciprocal_2025} for much broader considerations). This essentially consists in the spherical $p$-spin model with asymmetry $\alpha$ and where we allow for the interaction strength between the units to be tuned.\\

 \begin{figure*}[t!]
    \centering
    \includegraphics[width=\textwidth, trim={5 5 5 5},clip]{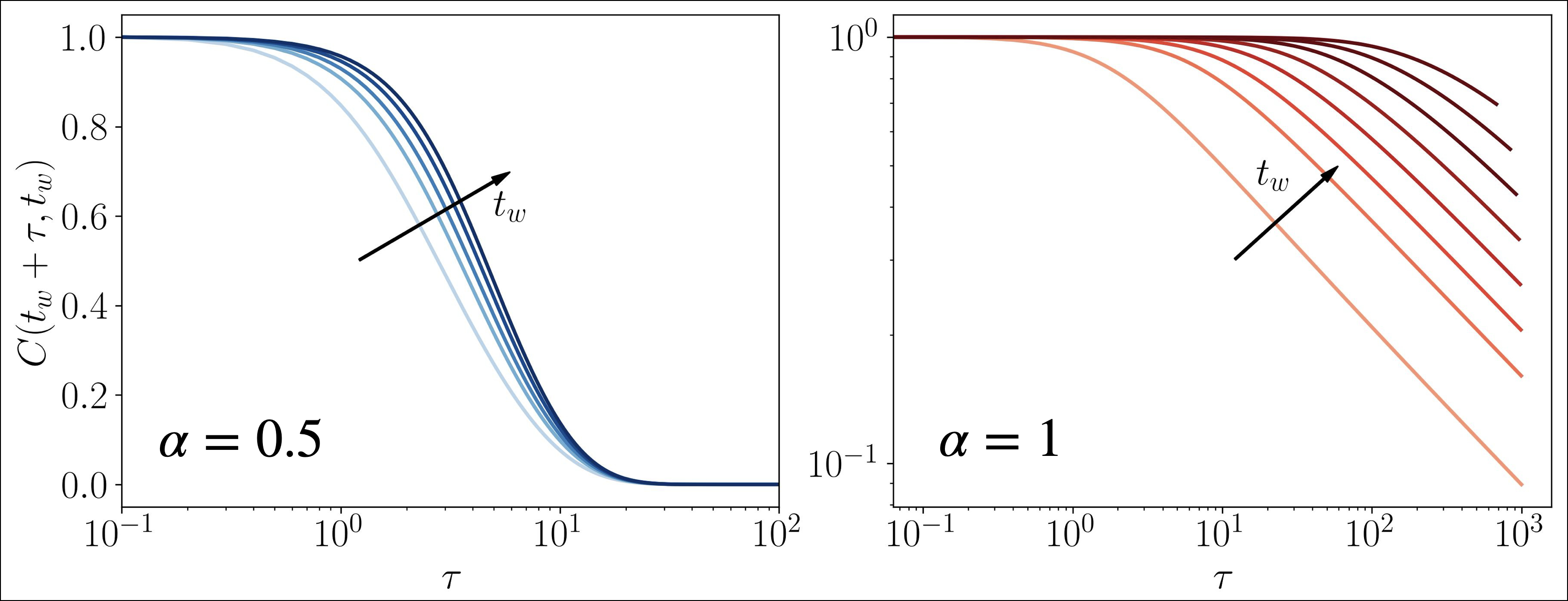}
    \caption{Plot of the correlation function $C(t_w+\tau,t_w)$ of the zeo-temperature SpM with $g=1$, $J=0$ and $\Phi_1(u)=2g^2u^2$, $\Phi_2(u)=\alpha\Phi_1'(u)$, as a function of $\tau$ for $\alpha=0.5$ (\textit{left}) and $\alpha=1$ (\textit{right}). We see that for $\alpha<1$ the correlation function becomes time translationally invariant as we increase $t_w$ \cite{CugliandoloNonrelax97}, whereas the same does not hold for $\alpha=1$, which displays aging \cite{cugliandolo1993analytical}.
    The plots are obtained by Euler discretization of the DMFT equations for $C$ and $R$ with a time-step $dt=0.1$. This method is more than sufficient for $\alpha<1$, while for $\alpha=1$ better methods exist \cite{Bongsoo_Kim_num_2001, tersenghi_seb_mixed_2025}. Both plots have the $x$-axis in log scale, while the \textit{right} plot has the $y$-axis in log scale as well. }
    \label{fig:rnn_two_alphas_dmft}
\end{figure*}

\noindent The threshold value that separates unstable from stable equilibria in the SpM is given by $\lambda_{ms}=\sqrt{\dot{\Phi}_1^1}(1+\alpha)$, see Eq.~\ref{eq:lambda_ms_spm}. The Replica Symmetric (RS) computation of the quenched complexity in Sec.~\ref{sec:rnn_topo_general} suggests that the quenched paramagnetic complexity (i.e. $m=0$) coincides with the annealed one if $\Phi_1(0)=\Phi_1'(0)=0$. Indeed, in that case, imposing $\tilde{Q}=0$ solves the Saddle Point equation on $\tilde{Q}$ and reduces the RS quenched complexity to the annealed one \footnote{for a more rigorous proof we could use our formulas for the $n-$th moment used in the replica trick and plug $n=2$, thus computing the second moment. This is left for future work.}. Assuming that $0\leq \Phi_1^1\leq \dot{\Phi}_1^1$ and that $\alpha\dot{\Phi}_1^1\neq \Phi_1^1$, the annealed complexity in the unstable region reads
\begin{align}
\label{app:eq:unstable_compl}
\begin{split}
\Sigma_A(m,\lambda)&=\frac{1}{2} \Bigg\{
\frac{ 
\lambda^2  ( \Phi_{1}^{1}-\dot{\Phi}_1^1)}{
(1 + \alpha) \dot{\Phi}_1^1 ( \alpha \dot{\Phi}_1^1 + \Phi_{1}^{1})}+
\log{\left[\frac{(1-m^2) \dot{\Phi}_1^1}{\Phi_{1}^{1}}\right]}
\Bigg\},
\end{split}
\end{align}
and we see that the maximum of $\Sigma_A(m,\lambda)$ is always attained for $m=0$ at any values of $\lambda$. Then the paramagnetic complexity of unstable equilibria reads:
\begin{align}
    \Sigma(m=0,\lambda)=\frac{1}{2} \Bigg\{
\frac{ 
\lambda^2  \left( \Phi_{1}^{1}-\dot{\Phi}_1^1\right)}{
(1 + \alpha) \dot{\Phi}_1^1 \left( \alpha \dot{\Phi}_1^1 + \Phi_{1}^{1}\right)}+
\log{\left[\frac{ \dot{\Phi}_1^1}{\Phi_{1}^{1}}\right]}
\Bigg\}.
\end{align}
The absolute maximum is obtained for $\lambda=0$ and reads $\Sigma(m=0,\lambda=0)=\log(\dot{\Phi}_1^1/\Phi_1^1)$, as already obtained in \cite{Fyodorov_2016}. By solving $\Sigma(m=0,\lambda_{ms})=0$ we can find the critical value $\alpha_c$ such that marginally stable equilibria start to appear. A simple computation gives:
\begin{align}
\alpha_c=
\frac{\dot{\Phi}_1^1 - \Phi_{1}^{1}-\Phi_{1}^{1} \log{\left(\frac{\dot{\Phi}_1^1}{\Phi_{1}^{1}}\right)}}{ \Phi_{1}^{1}- \dot{\Phi}_1^1 +  \dot{\Phi}_1^1 \log{\left(\frac{ \dot{\Phi}_1^1}{\Phi_{1}^{1}}\right)}},
\end{align}
which can be easily seen to be bounded, in modulus, by 1. \\

\begin{figure*}[t!]
  \centering
  \begin{subfigure}[b]{0.495\textwidth}
    \centering
    \includegraphics[width=\linewidth,  trim={5 5 5 5}, clip]{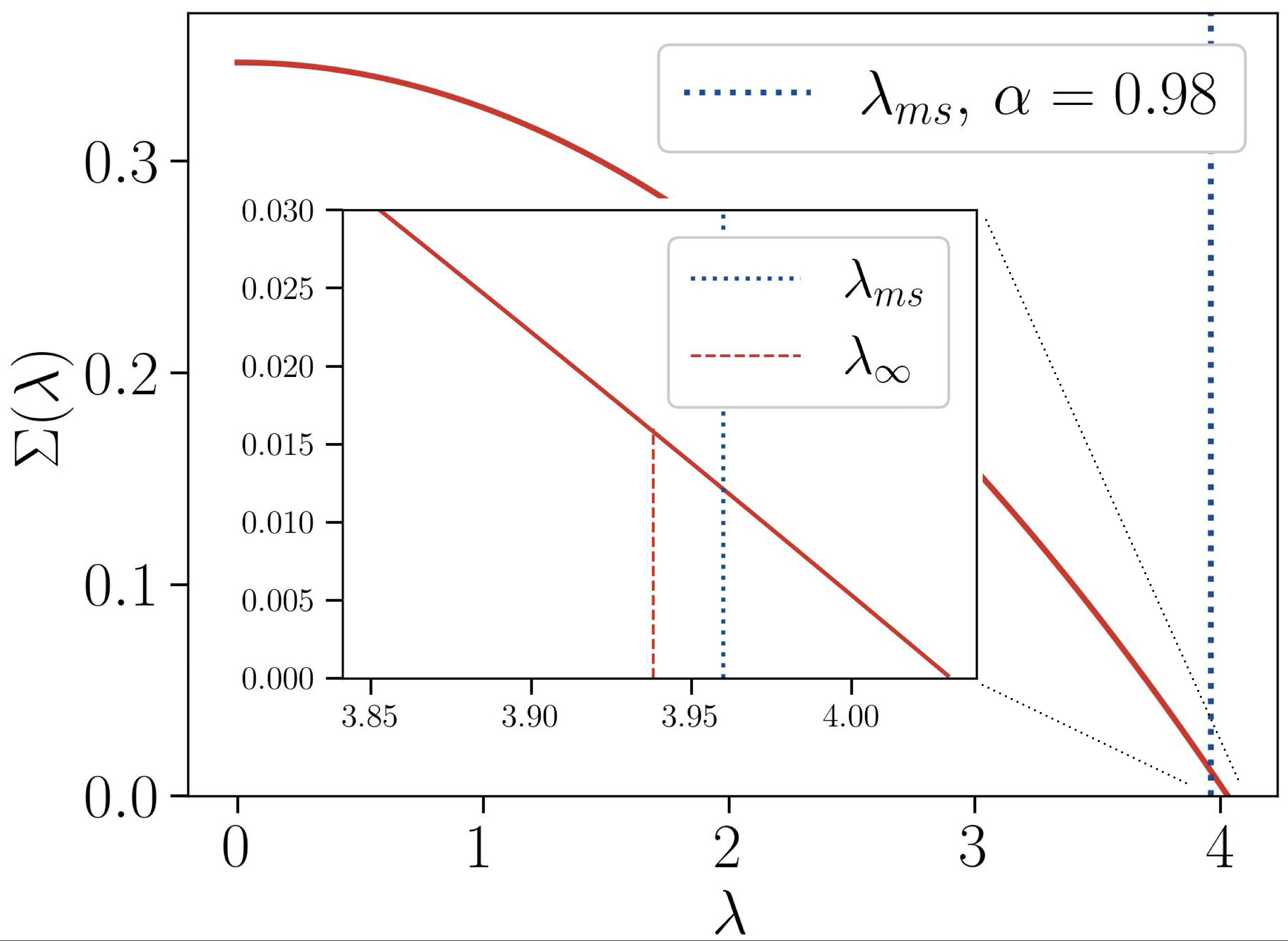}
  \end{subfigure}
  \hfill
  \begin{subfigure}[b]{0.495\textwidth}
    \centering
    \includegraphics[width=\linewidth]{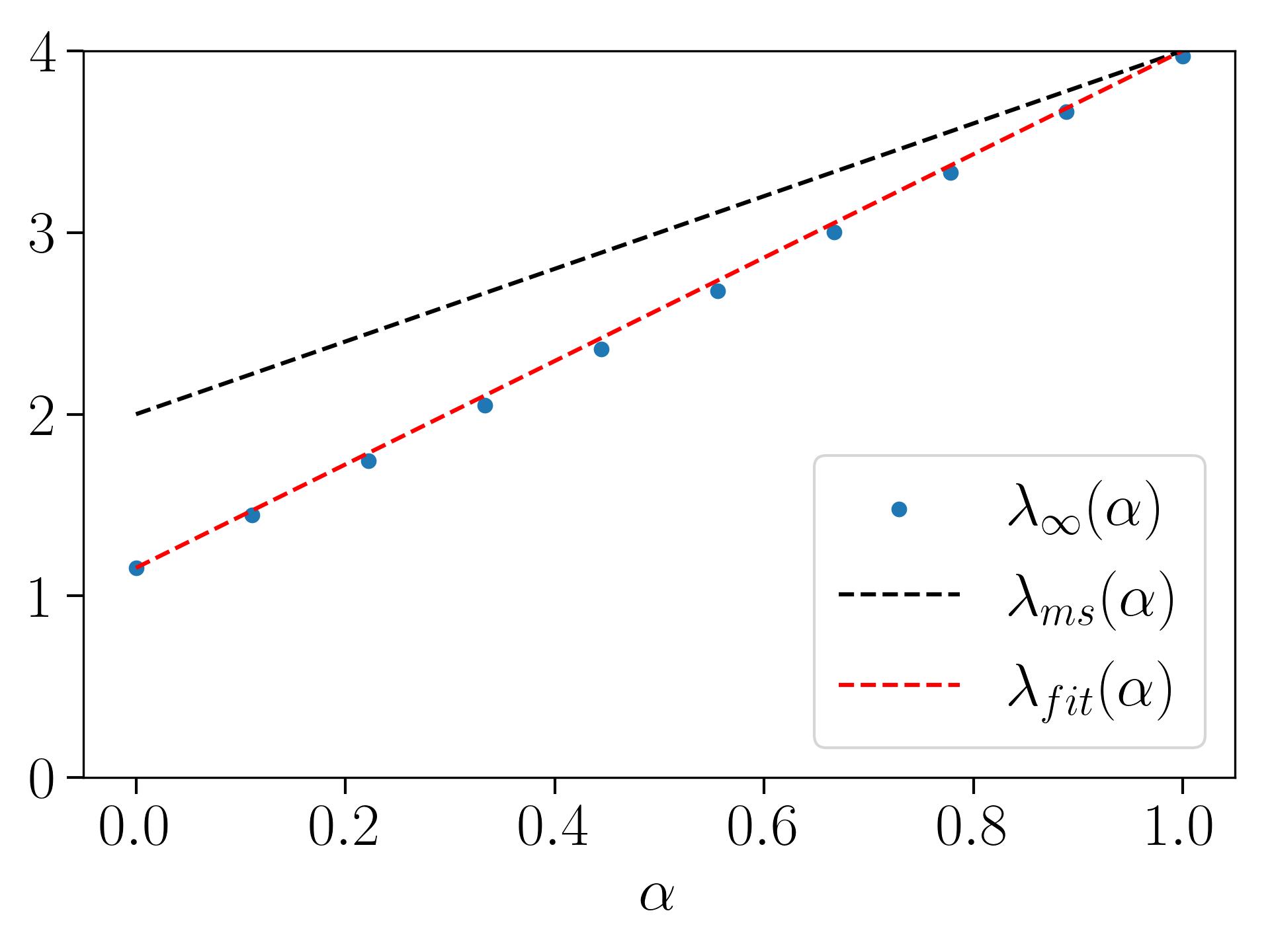}
  \end{subfigure}
  \caption{\textit{Left}. Plot of the paramagnetic complexity $\Sigma(\lambda)=\Sigma(m=0,\lambda)$ for the SpM with $\Phi_1(u)=2g^2u^2$, $\Phi_2(u)=\alpha\Phi_1'(u)$, $g=1$, $J=0$. We show the (small) difference between $\lambda_{ms}$ and $\lambda_\infty$. \textit{Right}. Plot of $\lambda_{ms}(\alpha)$ and $\lambda_\infty(\alpha)$ (found by integrating the DMFT equations with Euler discretization) for $g=1$. We see that $\lambda_\infty(\alpha)$ lies on a line $\lambda_{fit}(\alpha)$. The choice $g=1$ is just for simplicity, the plots do not show qualitative difference by changing $g$.}
  \label{fig:rnn_kac_spm}
\end{figure*}

\subsubsection{Choice $\Phi_1(u)=2g^2u^2$}
\noindent For the particular choice of $\Phi_1(u)=2g^2u^2$, we find for the  SpM that $\alpha_c=(1-\log(2))/(2\log(2)-1)\approx 0.8$, which is $g$ invariant, and $\lambda_{mg}=2g(1+\alpha)$. While finishing our work \cite{us_non_reciprocal_2025}, we realized that this value of $\alpha_c$ was already computed in Ref.~\cite{Kivimae2024}, but our results are more general and can be applied to a wide range of models, with or without spherical constraints, and with arbitrary $\Phi_1,\Phi_2$. Let us also mention that a similar coexistence of stable and unstable equilibria was also observed in \cite{ben2021counting}.\\

\noindent Providing an analytical solution to the DMFT equations with $\alpha<1$ remains unsolved (for $\alpha=0$ they were solved by us in \cite{us_non_reciprocal_2025}, and for $\alpha=1$ in \cite{cugliandolo1993analytical}). A numerical integration of the DMFT equations at zero-temperature for the model under consideration (see Appendix.~\ref{app:dmft_equations} for the equations) reveals that the correlation function $C(t,t')$ becomes time translation invariant for any $\alpha<1$ (this was already noted in Ref.~\cite{CugliandoloNonrelax97, berthier2000two}), while for $\alpha=1$ we have the well-known phenomenon of aging \cite{cugliandolo1993analytical, cugliandolo2002dynamics, Cugliandolo_1995_weak, bouchaud1998out}. In fact, we see from Fig.~\ref{fig:rnn_two_alphas_dmft} \textit{left} that for $\alpha=0.5$ the correlation functions $C(t_w,t_w+\tau)$ converge to a unique curve as $t_w$ increases. The same is not true for $\alpha=1$, as we see in the \textit{right} plot in the same figure. \\

\noindent To the best of our knowledge, it was not previously observed that the asymptotic values of the Lagrange multiplier $\lambda_\infty$ form a simple line as a function of $\alpha$, see Fig.~\ref{fig:rnn_kac_spm} \textit{right}. Our conjecture is that $\lambda_\infty(\alpha)$ takes the form of $\lambda_{fit}$ in the figure, defined as
\begin{align}
    \lambda_{fit}(\alpha):=4g\alpha + \frac{2}{\sqrt{3}}g(1-\alpha)
\end{align}
where we used knowledge that $\lambda_\infty(\alpha=1)=\lambda_{ms}(\alpha=1)=4g$ and that $\lambda_\infty(\alpha=0)=\frac{2}{\sqrt{3}}g$ as we found in Ref.~\cite{us_non_reciprocal_2025} by applying the same techniques as for the CM in Sec.~\ref{sec:rnn_dmft_tti}. A proof that $\lambda_{fit}$ is the right solution is left for future work. From Fig.~\ref{fig:rnn_kac_spm}~\textit{left} we see that for $\alpha>\alpha_c$ there are stable equilibria, although the system is chaotic \footnote{when starting from a random initial condition.} and $\lambda_\infty<\lambda_{ms}$; then, as $\alpha\to 1$, we have that $\lambda_\infty$ smoothly converges to $\lambda_{ms}$ (as we see from Fig.~\ref{fig:rnn_kac_spm}~\textit{right}). It was observed in Ref.~\cite{CugliandoloNonrelax97} that if the system is prepared at low energies at a certain temperature, it can get stuck in stable equilibria, even at non-zero temperature for $\alpha$ large enough. With the present analysis we can precisely characterize the regions and the values of $\alpha$ where there is a positive complexity of stable equilibria, cf. Fig.~\ref{fig:rnn_kac_spm}~\textit{left}. The complexity is still $g$ invariant when evaluated at $\lambda_{fit}(\alpha)$, but it is not $\alpha$ invariant, meaning that $\Sigma(m=0,\lambda_{fit}(\alpha))$ is a function of $\alpha$ alone. However, with the complexity alone we are not able at the moment to predict $\lambda_\infty$ from purely static arguments, but we hope that this motivates research to look for a way to bridge the static and dynamic properties. In particular, we should be able to predict $\lambda_\infty(\alpha)$ and $\mathcal{E}_\infty(\alpha)$ 
\footnote{the asymptotic value of the conservative part of the force ${\bf f}$.} both by solving the DMFT equations and by performing some static calculation. Ideally, we should build an exact mapping between static and dynamic approaches for any $\alpha\in[0,1)$, just as we did for the pure spherical $p$-spin (i.e. for $\alpha=1$) in Chapter~\ref{chapter:intro}. We tried several routes, e.g. by considering the volume of phase space that satisfies the basic steady state constraints that $d\lambda/dt=0$ and $d\,{\bf F}^2/dt=0$, but at the saddle point we find values that differ from the dynamical ones. This research is left for future work.

\section{Dynamics and complexity: confined mixed models}
\label{sec:confined_mixed}
We have found that a class of confined \textit{mixed} models (CMM), first studied in \cite{fournier2023statistical}, has a very interesting phenomenology. Let us consider a confined model where $\lambda=\lambda(q)$, $\Phi_2(u)=0$ and consider at first $\Phi_1(u)$ generic with $\Phi'_1(u)>0$ for $u>0$, and $J=0$ for simplicity. Having the dependence only in $q$ has the advantage that the extensive instability index of fixed points is only a function of $q$. Hence, in each shell $q$ of phase space, we typically have only one type of fixed points (for $N\to\infty$). The \textit{paramagnetic} (i.e. $m=0$) complexity of unstable fixed points reads:
\begin{align}
\Sigma(q)=\frac12\!\left(
\frac{\lambda^{2}(q)}{\dot{\Phi}_{1}(q)}
-\frac{q\,\lambda^{2}(q)}{\Phi_{1}(q)}
+\log\left(\frac{q\,\dot{\Phi}_{1}(q)}{\Phi_{1}(q)}\right)
\right),\quad \lambda(q)<\sqrt{\dot{\Phi}_1(q)}.
\end{align}
Moreover, we can use the same methods of Sec.~\ref{sec:rnn_dmft_tti} to derive general equations for the TTI regime of the DMFT for a system starting in random initial conditions \footnote{a priori, a system might not have a TTI regime, which should therefore be checked a posteriori (e.g. by simulations and direct integration of the DMFT equations)}:
\begin{align}
&\Phi_1(C_\infty)-\lambda^2(C_0)\,C_\infty+=0\\
&V(C_0)-V(C_\infty)=0\\
&V(C):=-\frac{\lambda^{2}(C_\infty)\,C^{2}}{2}+
   \int_0^Cdx\,\Phi_1(x).
\end{align}
Now, we want to model a system that has a dynamical PFP-PC phase transition (i.e. the stable fixed point ${\bf x}={\bf 0}$ becomes unstable as a control parameter is varied). The interest stems from studying the relation between the dynamical transition and the topology trivialization of the number of equilibria (vanishing of complexity). To achieve this we need a function $\Phi_1$ that allows for the fixed point ${\bf x}={\bf 0}$ to be stable. A possible choice is a \textit{mixed} model, in the same spirit of the spherical $(3+4)-$spin model analyzed in Sec.~\ref{sec:mixed_models}. We therefore choose $\Phi_1(u)=2g^2u^2+g^2u$,
\begin{figure*}[t!]
    \centering
    \includegraphics[width=\textwidth, trim={5 5 5 5},clip]{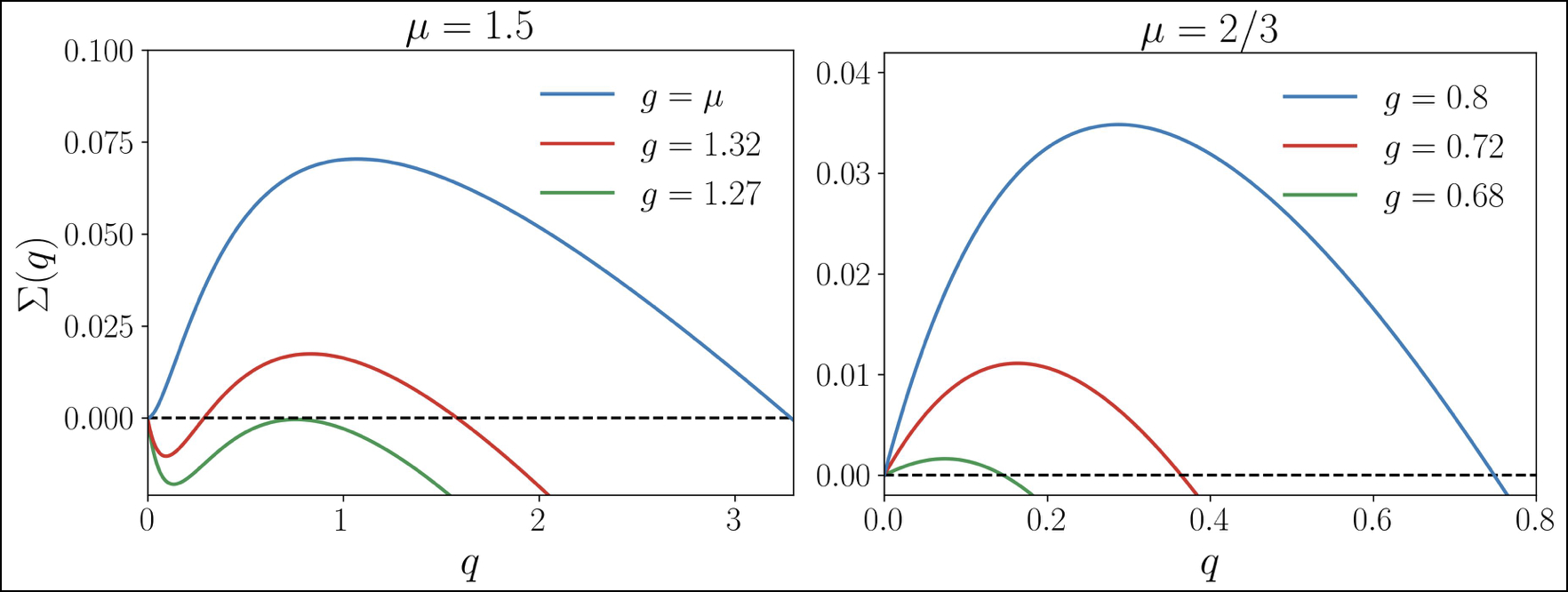}
    \caption{Picture of the complexity for the confined mixed model with $\Phi_1(u)=2g^2u^2+g^2u$, $\lambda(q)=q+\mu$, $\alpha=0$. We abusively plot the complexity also when it is negative, to show its behavior. \textit{Left}. Choice $\mu=1.5>2/3$, we see that the PFP-PC transition at $g=\mu$ corresponds to a vanishing of complexity in the neighborhood of $q=0$ (blue line) , but for $g\lesssim \mu$ the complexity remains positive after a certain gap from the origin (see red line). For $g\approx 1.27$ the complexity vanishes (green line). \textit{Right}. For $\mu=2/3$, the PFP-PC transition correponds to a topology trivialization: the complexity vanishes monotonously as $g\to\mu^+$.}
    \label{fig:confined_mixed_mus}
\end{figure*}
where $g$ is the interaction strength that we can tune, just like before. We can then choose a confining potential of the form $\lambda(q)=\mu+q$, with $\mu>0$ a tunable number. Then we have that, dynamically, the system is stable in ${\bf x}={\bf 0}$ for $g<\mu$, and transitions to chaos for $g>\mu$ \footnote{this is easily seen numerically, but a more convincing computation of the Lyapunov exponent will appear in the new version of  \cite{us_non_reciprocal_2025}.}. We can then analyze the behavior of the complexity. It is easy to see that $\lim_{q\to 0}\Sigma(q)=\lim_{q\to 0}\Sigma'(q)=0$. Moreover, by looking at the second derivative $\Sigma''(q)|_{q=0}$ and expanding $g=\mu+\epsilon$, we can find a criterion for a change in convexity of the complexity at $q=0$ for $g$ close to $\mu$. In particular, we find that for $0<\mu\leq 2/3$ the convexity of $\Sigma(q)$ at $q=0$ is negative, see Fig.~\ref{fig:confined_mixed_mus}~\textit{right}: the transition to chaos then corresponds to a topology trivialization, since for $g<\mu$ no more fixed points are present (except at ${\bf x}={\bf 0}$ of course). Instead for $\mu>2/3$ we have that the complexity has a positive convexity for $g$ close to $\mu$: for $g<\mu$ a \textit{resilience gap} develops between the stable fixed point (at ${\bf x}={\bf 0}$) and the point where the complexity is positive; for $g>\mu$ instead a branch of positive complexity starts to develop from $q=0$; as $g$ is lowered enough, the complexity eventually vanishes, see Fig.~\ref{fig:confined_mixed_mus}~\textit{left}. In Fig.~\ref{fig:confined_mixed_mus} we show the two scenarios (left for $\mu=1.5$ and right for $\mu=2/3$). In particular, for $\mu=1.5$ an "island" of positive complexity is still present even before the transition to chaos, i.e for $g<1.5$ (red line), and the topology trivialization happens for $g\approx 1.27<\mu$ (green line). A rather similar phenomenon was noticed in \cite{Fyodorov_resilient_2021}, where a \textit{resilience gap} was observed, between the stable fixed point ${\bf x}={\bf 0}$ and the emergence of exponentially many unstable equilibria beyond a certain critical radius. Here, for the choice in Fig.~\ref{fig:confined_mixed_mus}~\textit{left}, we have both a gap and a subsequent topology trivialization. It is interesting that the transition to chaos does not correspond to the vanishing of the complexity in this case, but nonetheless it corresponds to an exponential increase in unstable stationary points close to the origin. Let us finally remark that this phenomenon arises thanks both to the choice of $\Phi_1$ being \textit{mixed} and to the choice $\lambda=\mu+q$. Further research is needed to understand what are the general conditions under which the gap develops. We also leave for further work the detailed analysis of the DMFT equations in this case.

\section{Conclusions and Perspectives}
\label{sec:rnn_future}
We have carefully compared the dynamics and the statistics of equilibria for systems with non-reciprocal interactions and an external field, considering models with a confining potential as well as spherically constrained models. Our results indicate that, within the chaotic phase of these models, a simple interpretation of the dynamics as "surfing" among the most abundant equilibria is unlikely to be true. Indeed, we have found that the dynamical order parameters cannot be inferred from the Kac-Rice complexity of unstable equilibria. Nonetheless, we have observed some links between the two calculations, that point to deeper connections worth exploring. Finally, we have considered an interesting class of confined models, namely confined mixed models, that challenge our understanding of transition to chaos in terms of an explosion in the number of unstable equilibria. Indeed, we have seen that, depending on some tunable parameter, such transition to chaos \textit{can} or \textit{cannot} be concomitant with a topology trivialization transition. In the latter case, a resilience gap develops between the unique stable fixed point and the shell where exponentially many unstable fixed points are present. In particular, a topology trivialization still occurs, but well within the non-chaotic phase. With this research, our hope is to lay the foundations, and motivate further research, to better understand the relation between statics and dynamics within the chaotic phase of high-dimensional models with non-reciprocal interactions.\\

\noindent Here are some future directions:
\begin{itemize}
    \item It remains an open and important problem to match the static and dynamic approaches as soon as some non-reciprocity is introduced in the system. In particular, for $\alpha=0$ we have the explicit DMFT solution, but yet no static approach can predict the dynamical overlaps. 
    \item Despite lack of evident connections, the behavior of the complexity and of the TTI solution of the DMFT are correlated (e.g. see $J_c$ and $J_\Sigma$ in Fig.~\ref{fig:rnn_dmft_phase_kac}). It would be interesting to explore if there is a deeper link between the two methods.
    \item I believe that a solution to the DMFT for $1>\alpha>0$ in the PC phase can be obtained (i.e. find $\lambda_{\infty}$ in the SpM, and $C_0$ in the CM, as functions of $\alpha,g$). This is a very interesting open problem that needs further research. In addition, more research is needed to better explore the relation between the dynamics and the complexity in this case.

    \item The confined mixed models (CMM) deserve more attention in future works. In particular, we saw that a transition to chaos is accompanied by an increase in complexity of unstable equilibria close to the origin ${\bf x}={\bf 0}$; but depending on the choice of $\mu$, this can be accompanied by a topology trivialization or not (in which case a resilience gap develops). Dynamically, what distinguishes the two cases? In the former case, can the presence of unstable equilibria even in the stable phase influence the dynamics of the system, if properly initialized ? A closer comparison between dynamics and complexity for these models is an interesting future direction.
\end{itemize}

\chapter{Complexity and dynamics of a random neural network}
\label{chapter:scs}

\noindent This chapter is based on article \cite{pacco_scs_2025} in preparation. The goal is to study the topological complexity of the SCS (Sompolinsky, Crisanti, Sommers) random neural network with an external input, and make a comparison with the results found via dynamical mean-field theory (DMFT), which are already known in the literature. \\

\noindent \textit{Road-map}\\
In Sec.~\ref{sec:scs_introduction} we introduce the problem. In Sec.~\ref{sec:scs_the_model} we present the definition of the model and its effective dynamical equation in the large $N$ (number of interacting units) limit. In Sec.~\ref{sec:scs_dmft_zero_gamma} we study the case of  totally asymmetric interactions and derive the DMFT equations in the TTI regime; we then explore the dynamical phase diagram. In Sec.~\ref{sec:scs_topo_compl} we compute the topological complexity in the annealed setting via Kac-Rice formalism, and in Sec.~\ref{sec:scs_ann_topo_compl} we give a detailed analysis of its solution. Then in Sec.~\ref{sec:comparison_scs_kac_dmft} we compare the results found via DMFT for $\alpha=0$ with those found via Kac-Rice in the various dynamical phases. Finally in Sec.~\ref{sec:scs_perspectives} we discuss open questions and future directions. Details on the computations are presented in Appendix.~\ref{app:scs_ann_compl}.\\

\noindent\textit{Acknowledgments}\\
I acknowledge collaboration and discussions with Samantha J. Fournier, Pierfrancesco Urbani, Valentina Ros, Alessia Annibale. A special thanks goes to Alessia for drawing our attention to this problem, to Valentina for her idea on the choice of non-linearity, and to Pierfrancesco for providing his solution to the DMFT.

\section{Introduction}
\label{sec:scs_introduction}
This chapter is a continuation of Chapter~\ref{chapter:non_reciprocal}, and therefore the motivations are similar. However, the model under study is different. We consider a prototypical model of randomly interacting neurons introduced by Sompolinsky, Crisanti and Sommers in 1988 \cite{ChaosSompo88}. In the original work the dynamics of these neurons is modeled through the following system of $N>>1$ coupled ODEs:
\begin{align}
\frac{dx_i}{dt}=-x_i+\sum_j \phi(g\,x_j)J_{ij}
\end{align}
where $\phi(u)=\tanh(u)$ is a non-linear gain function, $g$ measures the degree of non-linearity, and $J_{ij}\sim\mathcal{N}(0,J^2/N)$ are Gaussian i.i.d. random variables that model the interaction between neurons. In a biological context, these are the Kirchoff's laws of the nerve cells: $x_i$ represents the membrane potential; $\phi(g\,x_i)$ its firing rate and $J_{ij}$ the synaptic efficacy which couples the output
of the (presynaptic) $j$-th neuron to the input of the (postsynaptic) $i$-th neuron. In the original work and subsequent works \cite{crisanti_path_2018}, the authors use a Path Integral formalism to obtain the $N\to\infty$ DMFT equation for the autocorrelation function $\langle x(t)x(t+\tau)\rangle$ of a single representative unit. They are then able to obtain the maximal Lyapunov exponent as (related to) the ground state of a one-dimensional Shrödinger's equation. They show that in the limit $N\to\infty$ the system exhibits a sharp transition to chaos as the value $gJ$ crosses $1$ from below. \\

\noindent While the dynamics and eigenvalue/ Lyapunov spectra of these models have received considerable attention in recent years~\cite{crisanti_path_2018, Abbott_lyapunov_2023, MastrogiuseppeLink2018, AnnibaleDynamics2024, Sompo_transition_2015, HeliasMemory18, HeliasBook20, Rajan_spectra_06}, the topological complexity for this model is still considered an uncharted territory. To the best of our knowledge, the only two works that treat the complexity of equilibria are Refs.~\cite{wainrib2013topological, HeliasFP2022}. The first work, by Wainrib and Touboul \cite{wainrib2013topological} gives an estimate of the mean number of fixed points of the dynamics close to the transition, arguing that the topological complexity and the maximal Lyapunov exponent have the same behavior, both at the edge of chaos and far from it (the second one being more of a conjecture). The second paper \cite{HeliasFP2022} instead extends this analysis, leveraging tools from random matrix theory~\cite{sommers1988spectrum, Wei_spectra_2012, Miller_rmt_2015} to compute the mean number of total fixed points.\\

\noindent In this chapter we considerably extend upon Refs.~\cite{wainrib2013topological, HeliasFP2022} by both finding a way to get the full complexity curve for any $\alpha$, thus classifying fixed points in terms of their instability index, and extending the analysis in terms of low-rank perturbations~\cite{MastrogiuseppeLink2018, AnnibaleDynamics2024} and in terms of the asymmetry factor~\cite{AnnibaleDynamics2024} (henceforth $\alpha$), which tunes the degree of correlations between $J_{ij}$ and $J_{ji}$, as we will see below. A summary of contributions is also found in Sec.~\ref{summary:scs}.

\section{The model}
\label{sec:scs_the_model}
We consider a slightly different realization of the equations from the original paper~\cite{ChaosSompo88}, and we follow instead more recent works~\cite{HeliasBook20}, placing $g$ outside of $\phi$ for convenience. We thus have ${\bf x}\in\mathbb{R}^N$ interacting units, with \textit{all-to-all} interactions, defined by:
\begin{align}
\label{eq:scs_random_net}
    \frac{dx_i}{dt}=-x_i+g\sum_{j=1}^N\phi(x_j)J_{ij},\quad i=1,\ldots,N
\end{align}
with $g>0$, and where $\phi$ is a non-linear function such that $\phi(0)=0,\phi'(0)=1$ and $J$ is the interaction matrix, which contains the interaction between all units. The usual choice for the distribution of $J$ is given by Gaussian entries:
\begin{align}
    \mathbb{E}[J_{ij}]=\frac{J_0}{N},\quad\quad\text{Cov}[J_{ij},J_{kl}]=\frac{1}{N}\left(\delta_{ik}\delta_{jl}+\alpha\,\delta_{il}\delta_{jk}\right)
\end{align}
where we added the external perturbation $J_0$, and where $\alpha\in [0,1]$ tunes the degree of asymmetry between $J_{ij}$ and $J_{ji}$. If $\alpha=1$ the interactions are symmetric (i.e. $J_{ij}=J_{ji}$), and the system effectively represents a gradient descent in a landscape. However, if $\alpha<1$, the interactions are non-reciprocal, and in particular for $\alpha=0$ they are totally asymmetric (that is, independent). One may rewrite the system of equations in a more compact form:
\begin{align}
\frac{d{\bf x}}{dt}={\bf F}({\bf x})\equiv-{\bf x}+{\bf f}({\bf x})
\end{align}
where now ${\bf f}$ is a Gaussian field of statistics:
\begin{align}
    \mathbb{E}[{\bf f}({\bf x})]=gJ_0M_\phi({\bf x}),\quad\text{Cov}[f_i({\bf x}),f_j({\bf y})]=g^2\left[\delta_{ij}\left(\frac{\phi({\bf x})\cdot\phi({\bf y})}{N}\right)+\frac{\alpha}{N}\phi(x_i)\phi(y_j) \right]
\end{align}
with $M_\phi({\bf x}):=\frac{1}{N}\sum_{i=1}^N \phi(x_i)$.

\subsection{The effective dynamical equation}
\label{sec:scs_dmft_effective_equations}
In analogy to what has been done in Chapter~\ref{chapter:intro} and Chapter~\ref{chapter:non_reciprocal}, we can study the dynamics of this model in the limit of $N$ large. This has been done first in \cite{ChaosSompo88}, without the external term $J_0$. The addition of a generic low-rank perturbation has been considered in \cite{MastrogiuseppeLink2018} with $\alpha=0$, and in \cite{AnnibaleDynamics2024} with $J_0$ and arbitrary $\alpha$. In the following we will first reproduce the dynamical analysis of \eqref{eq:scs_random_net} with $\alpha=0$ and $J_0$, showing that the phase diagram is richer than what was initially obtained in \cite{AnnibaleDynamics2024} (in particular, there is also a ferromagnetic chaotic phase); then we will introduce a specific choice of $\phi$ that allows the computation of the complexity of fixed points of \eqref{eq:scs_random_net} via Kac-Rice, thus providing an explicit way of comparing dynamics and complexity for this model. \\\\
Following the steps in Appendix~\ref{app:dynamical_calculations}, with the form for ${\bf f}$ written above, we obtain an effective SDE (stochastic differential equation) that describes the evolution of a representative unit. In the general case this SDE reads:
\begin{align}
\partial_t x(t)=-x(t)+gJ_0M_\phi(t)+\alpha g^2\int_0^t ds\,R_\phi(t,s)\phi(x(s)) +\eta(t)
\end{align}
with $\eta$ zero-mean Gaussian noise with $\langle\eta(t)\eta(s)\rangle=g^2C_\phi(t,s)$, $\langle\cdot\rangle$ representing the average over the randomness of $\eta$, and where we defined
\begin{align}
M_\phi(t):=\langle \phi(x(t))\rangle,\quad C_\phi(t,s):=\langle\phi(x(t))\phi(x(s))\rangle,\quad R_\phi(t,s):=\left\langle\frac{\delta \phi(x(t))}{\delta\eta(s)}\right\rangle.
\end{align}

\section{Dynamical phase diagram for $\alpha=0$.}
\label{sec:scs_dmft_zero_gamma}
Let us derive here the phase diagram from the DMFT equations with random initial condition in the case $\alpha=0$; the derivation is similar to the one in Sec.~\ref{sec:rnn_dynamics}, but we repeat it since the equations are quite different. The effective single unit SDE in this case reads:
\begin{align}
\label{eq:def_effective_scs}
    \partial_t x(t)=-x(t)+gJ_0M_\phi(t)+\eta(t).
\end{align}
Notice that one can conveniently write this SDE as 
\begin{align*}
&e^t(\partial_t x(t)+x(t))=e^t[gJ_0M_\phi(t)+\eta(t)]\\
\Rightarrow&\partial_t(e^tx(t))=e^t[gJ_0M_\phi(t)+\eta(t)]\\
\Rightarrow&x(t)=x(0)e^{-t}+\int_0^tds\, e^{s-t}[gJ_0M_\phi(s)+\eta(s)]
\end{align*}
so we see that $x(t)$ is a Gaussian random variable with mean $m(t):=\langle x(t)\rangle$ and variance $\Delta(t,s):=C_x(t,s)-m(t)m(s)$, where $C_x(t,s):=\langle x(t)x(s)\rangle$.
Therefore, $x(t)$ can be written as $x(t)=m(t)+\sqrt{\Delta(t,t)}\,z$ with $z\sim\mathcal{N}(0,1)$. Notice that we can write an equation for $M_\phi$ from its definition:
\begin{align}
    M_\phi(t)=\langle\phi(x(t))\rangle=\int dz \,p(z)\,\phi\left(m(t)+z\sqrt{\Delta(t,t)}\right),\quad p(z)=\frac{e^{-z^2/2}}{\sqrt{2\pi}}.
\end{align}
Moreover, by averaging the SDE on both the LHS (left hand side) and RHS, we obtain:
\begin{align}
    \partial_tm(t)=-m(t)+gJ_0M_\phi(t).
\end{align}
The response function is defined as $R(t,t')=\langle \delta x(t)/\delta\eta(t')\rangle$, and therefore its equation can be derived by taking a functional derivative of the SDE:
\begin{align}
    \partial_t R(t,t')=-R(t,t')+\delta(t-t').
\end{align}
Lastly, the equation for $C_x$ reads (by multiplying the SDE by $x(t')$ and averaging):
\begin{align}
\label{eq:scs_eq_Cx_tt'}
\partial_t C_x(t,t')=-C_x(t,t')+gJ_0M_\phi(t)m(t')+g^2\int_0^{t'}ds\, R(t',s)C_\phi(t,s),
\end{align}
where $\langle \eta(t)x(t')\rangle$ is computed in Appendix~\ref{app:comp_eta_x_t_tp}.
Therefore, we can summarize the DMFT equations for the variables $m,M_\phi,C_x,C_\phi,\Delta,R$ 
as:
\begin{align}
\begin{cases}
\partial_t C_x(t,t')=-C_x(t,t')+gJ_0M_\phi(t)m(t')+g^2\int_0^{t'}ds\, R(t',s)C_\phi(t,s)\\
\partial_t R(t,t')=-R(t,t')+\delta(t-t')\\
\partial_tm(t)=-m(t)+gJ_0M_\phi(t)\\
M_\phi(t)=\langle\phi(x(t))\rangle\\
C_\phi(t,t')=\langle\phi(x(t))\phi(x(t'))\rangle\\
\Delta(t,t')=C_x(t,t')-m(t)m(t')
\end{cases}
\end{align}
where the averages are intended over $x$ (which is a Gaussian variable that depends on $m(t),m(t')$ and $C_x(t,t')$ at all times). These are therefore coupled integro-differential equations that can be solved self-consistently.
As we will see below, by multiplying Eq.~\eqref{eq:def_effective_scs} by itself at two different times, and then taking an average, we get an additional equation that will be useful in the TTI regime:
\begin{align}
\label{eq:2_deg_Cx}
    (1+\partial_t)(1+\partial_{t'})C_x(t,t')=g^2J_0^2M_\phi(t)M_\phi(t')+g^2C_\phi(t,t').
\end{align}

\subsection{TTI regime}
For the present analysis, we are interested in the time translation invariance (TTI) regime, as in the original work by Sompolinsky et al. \cite{ChaosSompo88, crisanti_path_2018}. As we will see shortly, the present system can exhibit two types of behavior in the $N\to\infty $ limit: either chaotic motion or a stable fixed point. While for $N$ finite periodic oscillations are possible, it has been shown in \cite{AnnibaleDynamics2024} that these do not happen for large $N$. In the TTI regime we have that $t=\tau+t'$ ($\tau$ fixed) with $t'\to\infty$, and:
\begin{align}
&m:=\lim_{t'\to\infty} m(t'),\quad M_\phi:=\lim_{t'\to\infty}M_\phi(t'),\quad C_x(\tau):=\lim_{t'\to\infty} C_x(t'+\tau,t') ,\\
&R(\tau):=\lim_{t'\to\infty} R(t'+\tau,t'),\quad C_\phi(\tau):=\lim_{t'\to\infty} C_\phi(t'+\tau,t'),\quad \Delta(\tau):=\lim_{t'\to\infty}\Delta(t'+\tau,t').
\end{align}
Notice that we do not write a subscript to indicate the TTI regime, so to avoid heavy notation. The TTI hypothesis directly implies that
\begin{align}
    m=gJ_0M_\phi, \quad R(\tau)=e^{-\tau} H(\tau)
\end{align}
with $H$ the Heaviside step function. This gives us from Eq.~\eqref{eq:scs_eq_Cx_tt'}:
\begin{align}
\partial_\tau C_x(\tau)=-C_x(\tau)+g^2J_0^2M_\phi^2+g^2\int_0^{\infty}ds\,e^{-s}C_\phi(\tau+s).
\end{align}
Let us conveniently denote $C_x^0:=C_x(0),\, C_x^\infty:=\lim_{\tau\to\infty}C_x(\tau)$ (and similarly for $C_\phi$ and $\Delta$), then by sending $\tau\to\infty$ in the previous equation we get:
\begin{align}
C_x^\infty=g^2J_0^2M_\phi^2+g^2C_\phi^\infty
\end{align}
since $\partial_\tau C(\tau)|_{\tau\to\infty}=0$ ($C(\tau)$ is a decreasing function bounded from below by 0). This can be written also as $\Delta_\infty=g^2C_\phi^\infty$. Additionally, by taking the TTI limit of Eq.~\eqref{eq:2_deg_Cx} we get
\begin{align}
\label{eq:scs_second_der_Cx}
\quad (1-\partial^2_\tau)C_x(\tau)=g^2J_0^2M_\phi^2+g^2C_\phi(\tau).
\end{align}
or in terms of $\Delta$:
\begin{align}
\label{eq:ddelta_=_C}
    (1-\partial_\tau^2)\Delta(\tau)=g^2C_\phi(\tau),
\end{align}
which corresponds to the equation originally obtained in  \cite{ChaosSompo88, crisanti_path_2018}. Let us abbreviate $\Delta=\Delta(\tau)$. We can moreover write the equation for $M_\phi$ as 
\begin{align}
    M_\phi=\int dz\,p(z)\,\phi\left(gJ_0M_\phi +z\sqrt{\Delta_0}\right).
\end{align}
We should now express the quantity $C_\phi(\tau)=\langle\phi(x(t))\phi(x(t+\tau))\rangle$ as an average over $x$; we abusively keep $t$ even if it should be sent to infinity, a more rigorous analysis should consider everything $t$ dependent, then taking the $t$ limit in the end. Recall that $x(t+\tau),x(t)$ are two correlated Gaussian random variables such that:
\begin{align}
    (x(t),x(t+\tau))\sim\mathcal{N}\left(\begin{pmatrix}
        m\\ m
    \end{pmatrix},\begin{pmatrix}
        \Delta_0 & \Delta\\
        \Delta & \Delta_0
    \end{pmatrix}\right).
\end{align}
Then these two Gaussian variables can be compactly written as
\begin{align}
&x(t)=m+z\sqrt{\Delta_0},\quad\quad z\sim\mathcal{N}(0,1)\\
&x(t+\tau)=m+z\frac{\Delta}{\sqrt{\Delta_0}}+\rho\sqrt{\Delta_0-\frac{\Delta^2}{\Delta_0}} ,\quad\quad\rho\sim\mathcal{N}(0,1).
\end{align}
which implies that $C_\phi(\tau)=h_\phi(\Delta,\Delta_0)$ for $h$ a function given by
\begin{align}
\label{eq:scs_h_phi}
h_\phi(\Delta,\Delta_0)=\int d\rho\,dz\,p(z)\,p(\rho)\,\phi\left(m+\sqrt{\Delta_0}z\right)\phi\left(m+z\frac{\Delta}{\sqrt{\Delta_0}}+\rho\sqrt{\Delta_0-\frac{\Delta^2}{\Delta_0}}\right).
\end{align}
Now, the main idea is that we can use Price's theorem (see \cite{HeliasBook20}, Sec. 10.5) which says that:
\begin{align}
    \frac{\partial h_\Phi(\Delta,\Delta_0)}{\partial \Delta}=h_{\Phi'}(\Delta,\Delta_0)
\end{align}
where $\Phi$ can be any function and in our particular case it is the integral of $\phi$: $\Phi(x)=\int_0^x \phi(y)dy$. This means that we can write
\begin{align}
C_\phi(\tau)= \frac{\partial h_\Phi(\Delta,\Delta_0)}{\partial \Delta}
\end{align}
and ultimately, using Eq.~\eqref{eq:ddelta_=_C} we obtain
\begin{align}
(1-\partial^2_\tau)\Delta=g^2\frac{\partial h_\Phi(\Delta,\Delta_0)}{\partial \Delta}\Rightarrow \partial^2_\tau\Delta=-\partial_\Delta V(\Delta,\Delta_0)
\end{align}
where we defined $V$ by 
\begin{align}
V(\Delta,\Delta_0)=-\frac{1}{2}\Delta^2+g^2h_\Phi(\Delta,\Delta_0)-g^2h_\Phi(0,\Delta_0)
\end{align}
the last term being subtracted arbitrarily to ensure that $V(0,\Delta_0)=0$. Like in Chapter~\ref{chapter:non_reciprocal}, we have that this is the equation of motion for a particle in a potential, and it is easy to verify that the following quantity ("energy") is conserved:
\begin{align}
E=\frac{1}{2}(\partial_\tau \Delta(\tau))^2+V(\Delta(\tau),\Delta_0).
\end{align}
The key observation at this point is that, since $\Delta(\tau)$ is symmetric (because of time translation invariance), we have that $\partial_\tau\Delta(\tau)|_{\tau=0}=\partial_\tau\Delta(\tau)|_{\tau\to\infty}=0$, thus leading (through the conservation of $E$) to the equality $V(\Delta_0,\Delta_0)=V(\Delta_\infty,\Delta_0)$. Please note that here the potential is implicitly defined through \eqref{eq:scs_h_phi}, at variance with Chapter~\ref{chapter:non_reciprocal}, where the potential was a simple polynomial of its variables. For this reason, the analysis in Chapter~\ref{chapter:non_reciprocal} was much easier. Now, we can summarize the DMFT equations in the TTI regime as
\begin{align}
\label{eq:dmft_system_eqns}
    \begin{cases}
    m=gJ_0M_\phi\\
    M_\phi=\int dz\,p(z)\,\phi\left(gJ_0M_\phi+z\sqrt{\Delta_0}\right)\\
    \Delta_{0/\infty}=C_x^{0/\infty}-g^2J_0^2M_\phi^2\\
     V(\Delta_0,\Delta_0)=V(\Delta_\infty,\Delta_0)\\
     \Delta_\infty=g^2C_\phi^\infty=g^2h_\phi(\Delta_\infty,\Delta_0)\\
     C_\phi^0=\int dz\,p(z)\phi^2\left(gJ_0M_\phi+z\sqrt{\Delta_0}\right)
    \end{cases}.
\end{align}
These are effectively only three equations on $\Delta_\infty,\Delta_0,M_\phi$ (the 2nd, 4th and 5th), but we included also $C_\phi,m,C_x$ to show how they are related to each other, since their role will be important when comparing the dynamical results with the Kac-Rice complexity. Below we will discuss the different solutions of this system of equations, and derive the corresponding phase diagram.

\subsubsection{Paramagnetic Fixed Point Phase (PFP)}
This solution corresponds to ${\bf x}={\bf0}$, and therefore $m=M_\phi=0$, and also to $\Delta_0=\Delta_\infty=0$. Since all parameters are equal to zero, and since $\phi(0)=0$, the DMFT equations are automatically verified. This solution is stable for $g<1$ and unstable when $g>1$. If $J_0=0$, the system is inevitably chaotic when $g>1$ \cite{ChaosSompo88}, however for $J_0$ big enough, we can have a stable FFP (ferromagnetic fixed point) even for $g>1$. Let us remark that to be precise about the transition to chaos, one should compute the maximal Lyapunov exponent and show that it becomes positive at the transition; an explicit computation of this fact is done in Ref.~\cite{crisanti_path_2018, HeliasBook20}

\subsubsection{Paramagnetic Chaotic Phase (PC)}
This regime corresponds to $m=M_\phi=0$ but $\Delta_\infty=0<\Delta_0$, that is, the system decorellates and is therefore not in a fixed point configuration. In this regime all equations are automatically satisfied except for $V(\Delta_0,\Delta_0)=V(0,\Delta_0)=0$, which is used to derive $\Delta_0$, which solves the following equation:
\begin{align}
\label{eq:delta0_para}
\begin{split}
\Delta_0^2&=2g^2h_\Phi(\Delta_0,\Delta_0)-2g^2h_\Phi(0,\Delta_0)\\
&=2g^2\int dz\,p(z)\Phi^2\left(z\sqrt{\Delta_0}\right)-2g^2\left[\int dz\,p(z)\Phi\left(z\sqrt{\Delta_0}\right) \right]^2.
\end{split}
\end{align}
In this regime we find therefore that the other variables read:
\begin{align}
C_x^0=\Delta_0,\quad \,C_x^\infty=C_\phi^\infty=0,\quad C_\phi^0=\int dz\,p(z)\phi^2(z\sqrt{\Delta_0}).
\end{align}
It is interesting to compute the limiting value of $\Delta_0$ and $C_\phi^0$ as $g\to\infty$. Indeed, an asymptotic analysis of the equation above with $\phi(x)=\tanh(x)$, $\Phi(x)=\log\cosh(x)$ reveals that 
\begin{align}
\lim_{g\to\infty}\frac{\Delta_0}{g^2}=\frac{2\pi-4}{\pi},\quad \lim_{g\to\infty}C_\phi^0=1.
\end{align}

\subsubsection{Ferromagnetic Fixed Point Phase (FFP)}
This phase corresponds to a stable fixed point (therefore $\Delta_0=\Delta_\infty$) but with ferromagnetic properties (that is, $m\neq 0$). By plugging $\Delta_0=\Delta_\infty$ in the DMFT equations we find the self-consistent set of equations:
\begin{align}
\label{eq:M_D_ffp}
\begin{cases}
M_\phi=\int dz\, p(z)\phi\left(gJ_0M_\phi+z\sqrt{\Delta_0}\right)\\
\Delta_0=g^2\int dz\, p(z)\phi^2\left(gJ_0M_\phi+z\sqrt{\Delta_0}\right).
\end{cases}
\end{align}
Once these equations are solved numerically, we can find the remaining quantities of interest as:
\begin{align}
m=gJ_0M_\phi,\quad C_x^0=C_x^\infty=\Delta_0+g^2J_0^2M_\phi^2,\quad C_\phi^\infty=C_\phi^0=\Delta_0/g^2.
\end{align}

\subsubsection{Ferromagnetic Chaotic Phase (FC)}
In this phase we have an unstable (chaotic) motion with $\Delta_0>\Delta_\infty>0$ and $M_\phi\neq 0$. The order parameters in this case are given by the following self-consistent equations:
\begin{align}
\label{eq:fc_scs_dmft_eqns}
\begin{cases}
M_\phi=\int dz\,p(z)\,\phi\left(gJ_0M_\phi+z\sqrt{\Delta_0}\right)\\
V(\Delta_0,\Delta_0)=V(\Delta_\infty,\Delta_0)\\
\Delta_\infty=g^2h_\phi(\Delta_\infty,\Delta_0)
\end{cases}
\end{align}
the remaining order parameters are found by means of the equations in ~\eqref{eq:dmft_system_eqns}. To be more rigorous, we should extend here the computation of the maximal Lyapunov, to show that the motion is chaotic. This question will be tackled in a future work. 

\subsubsection{The force}
It is interesting to compute the value of the quantity $(d{\bf x}/dt)^2$ using the effective single site equation \eqref{eq:def_effective_scs} and the TTI analysis presented above. In particular if the value of this quantity at large times is non-zero, it means that the dynamics is well separated from the stationary points. Using the effective equation \eqref{eq:def_effective_scs} we define:
\begin{align}
\begin{split}
\Gamma(t,t'):=\langle \partial_tx(t)\partial_{t'}x(t')\rangle
\end{split}
\end{align}
and by taking the TTI limit we find, from Eq.~\eqref{eq:scs_second_der_Cx}:
\begin{align}
\hat{\Gamma}(\tau):=\lim_{t'\to\infty}\Gamma(t'+\tau,t')=-\partial^2_\tau C_x(\tau)=-C_x(\tau)+g^2J_0^2M_\phi^2+g^2C_\phi(\tau).
\end{align}
Then the average force at equal times takes the value
\begin{align}
    \hat{\Gamma}_0:=\hat{\Gamma}(0)=-C_x^0+g^2J_0^2M_\phi^2+g^2C_\phi^0=g^2C_\phi^0-\Delta_0
\end{align}
and as $\tau\to\infty$ we get instead:
\begin{align}
    \hat{\Gamma}^\infty=0
\end{align}
which implies that in the long time difference limit there is no correlation between the driving forces. We will see below that the force at equal times in the long time limit is non-zero in all phases, except the FFP and PFP phases, where we know the system settles in an equilibrium fixed point. \\\\

\noindent We will now derive the dynamical phase diagram and plot it for the choice $\phi(x)=\tanh(x)$. The specific analysis and resolution of the DMFT equations found in the various phases is deferred to Sec.~\ref{sec:comparison_scs_kac_dmft}, where we make an explicit choice for $\phi$ that is more convenient for comparison with the Kac-Rice complexity. 

\subsection{Dynamical transitions and phase diagram}
In this section we find the lines that separate the various phases, and numerically solve them to obtain the dynamical phase diagram.

\begin{figure*}[t!]
    \centering
    \includegraphics[width=0.7\textwidth]{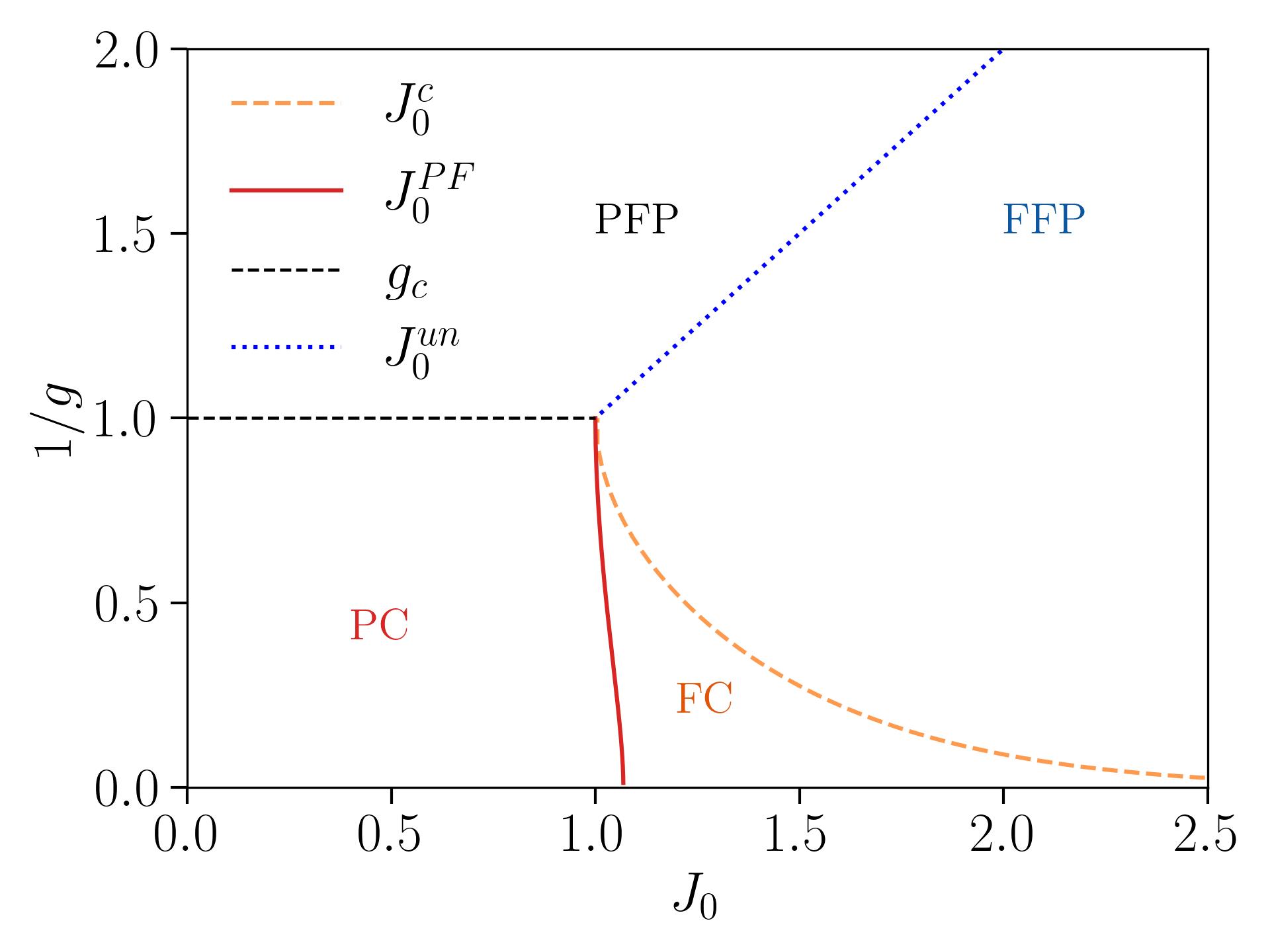}
    \caption{Dynamical phase diagram of the SCS model with $\phi(x)=\tanh(x)$ and where the connectivity (or interaction) matrix has "excitatory" columns, that is, every element is Gaussian with mean $J_0\geq 0$.  The four different phases are divided by transition lines that are marked in the legend: $g_c$ stands for the "critical" $g$ above which the PFP phase becomes PC; $J_0^{un}$ stands for "unstable", that is, the value that renders the PFP unstable by increasing $J_0$, thus leading to the FFP phase; $J_0^{PF}$ stands for "para-to-ferro", meaning that the PC phase develops a bifurcation in $M_\phi$, thus becoming ferromagnetic. }
    \label{fig:dmft_scs_phase}
\end{figure*}

\subsubsection{PFP to PC transition}
We have seen that this transition occurs when $g=1$ (above 1 we are PC and below 1 we are PFP). This transition is called type-II \cite{AnnibaleDynamics2024, Galla_random_LV_2023}, as it happens when the edge the Jacobian touches the origin (and thus an extensive fraction of unstable modes starts to appear), see below.

\subsubsection{PFP to FFP transition}
This line is found by linear stability analysis of the equation of motion itself. We start from the solution ${\bf x}={\bf 0}$ in the PFP phase, and we perturb the system by an amount $\Delta{\bf x}$. The equation of motion then becomes
\begin{align}
    d\Delta{\bf x}/dt=\Delta{\bf x}\left(-\mathbb{I}+\partial{\bf f}({\bf 0})\right)
\end{align}
where $[\partial{\bf f}({\bf 0})]_{ij}=gJ_{ij}$ since $\phi'(0)=1$ by definition. For this equation to be stable, all eigenvalues of the matrix $-\delta_{ij}+gJ_{ij}=-\delta_{ij}+g\left(J_0\frac{1}{N}+\tilde{J}_{ij}\right)$ have to be negative, with $\mathbb{E}\tilde{J}_{ij}=0$ and $\mathbb{E}[\tilde{J}_{ij}\tilde{J}_{kl}]=\frac{1}{N}\delta_{ik}\delta_{jl}$. The random matrix $\tilde{J}$ belongs to the real Ginibre ensemble \cite{Ginibre_1965}, and for $N\to\infty$ its spectrum satisfies Girko's circular law, that is, it converges to a uniform disk of unit radius \cite{Ginibre_1965, Girko_1985, mehta_rmt} in the complex plane. If $J_0>1$ a real outlier appears at $J_0$ \cite{tao_circularlaw}. Here we limit ourselves to observe that the largest eigenvalue will lie at the rightmost edge of the disk as long as $0\leq J_0\leq 1$, and it will lie at $J_0$ as soon as $J_0>1$. For the PFP solution to be stable we get in the first case that $g\leq 1$ and in the second case that $J_0\leq 1/g$. The second transition (from the PFP to FFP phase) is also referred to as type-I \cite{AnnibaleDynamics2024}, since only one isolated eigenvalue becomes unstable.

\subsubsection{FFP to FC transition}
This transition line is obtained starting from the FFP phase and then looking at when the solution becomes unstable, as $J_0$ is lowered. Hence the transition happens when $\partial^2_\Delta V(\Delta,\Delta_0)|_{\Delta=\Delta_0}=0$ from the FFP phase. This implies that the transition line in the $(g,J_0)$ plane is given by the equation
\begin{align}
1=g^2h_{\phi'}(\Delta_0,\Delta_0)=g^2\int dz\,p(z)\left[\phi'\left(gJ_0M_\phi+z\sqrt{\Delta_0}\right)\right]^2
\end{align}
where $M_\phi$ and $\Delta_0$ are given by Eq.~\eqref{eq:M_D_ffp}. 
This transition is also referred to as type-II transition \cite{AnnibaleDynamics2024}, since it is given by the fact that the maximum eigenvalue of the bulk of the spectrum of the fixed point crosses the origin, giving rise to an extensive number of unstable directions.

\subsubsection{FC to PC transition}
This transition is better studied from the PC region: as we increase $J_0$ the magnetization $M_\phi$ develops a bifurcation, thus rendering the system ferromagnetic chaotic with $M_\phi\neq 0$. The point of bifurcation (in the $g,J_0$ phase diagram) is easily found as the point where the solution $M_\phi=0$ is unstable, which is obtained when $J_0$ (at fixed $g$) is such that 
\begin{align}
    1=gJ_0\int dz\,p(z)\phi'\left(z\sqrt{\Delta_0}\right)\Rightarrow J_0=\frac{1}{g\int dz\,p(z)\,\phi'(z\sqrt{\Delta_0})},
\end{align}
obtained from the equation of $M_\phi$ in the FC phase by differentiating and then setting $M_\phi=0$. In this equation, $\Delta_0$ is found from Eq.~\eqref{eq:delta0_para}. \\\\

\noindent The dynamical phase diagram is presented in Fig.~\ref{fig:dmft_scs_phase} for the choice $\phi(x)=\tanh(x)$.

\section{Topological complexity}
\label{sec:scs_topo_compl}
To compute the topological complexity we keep $\alpha$ and we make a particular choice of $\phi$, i.e. we choose
\begin{align}
\label{eq:def_phi_sign}
    \phi(x)=\begin{cases}
        x\quad\quad\quad\,\,\,\text{if}\quad{|x|\leq 1}\\
        \text{sign}(x)\quad \text{if}\quad{|x|> 1}\\
    \end{cases}
\end{align}
which approximates (see Fig.~\ref{fig:tanh_vs_phi}) the function $\tanh$ (normally used in these problems \cite{HeliasBook20, ChaosSompo88,HeliasMemory18, Sompo_transition_2015}), while maintaining the same main characteristics, that is, $\phi(x)=-\phi(-x)$, $\phi(0)=0,\phi'(0)=1$ and $\lim_{x\to\pm\infty}\phi(x)=\pm 1$. Indeed, as we will see below, the dynamical phase diagram looks very similar to the one for $\tanh$. The reason why we prefer this choice for the function $\phi$ is that it makes possible an explicit computation of the complexity. This is essentially because the stability of the stationary points that we count is given by the spectral radius of the Jacobian matrix:
\begin{align}\label{eq:Hessian}
\partial{\bf F}({\bf x}):=\partial {\bf F}/\partial {\bf x}=-\mathbb{I}+\frac{g}{N}J_0D({\bf x})+g\tilde{\mathbf{J}}D({\bf x})
\end{align}
with $D({\bf x})=\text{diag}(\phi'(x_1),\ldots,\phi'(x_N))$ and $\tilde{\bf J}$ a Gaussian random matrix with statistics given by 
\begin{align}
\mathbb{E}[\tilde J_{ij}]=0,\quad 
\mathbb{E}[\tilde{J}_{ij}\tilde{J}_{kl}]=\frac{1}{N}(\delta_{ik}\delta_{jl}+\alpha\,\delta_{il}\delta_{jk}).
\end{align}
The spectrum of the matrix $\tilde{\bf J}$ for large $N$ belongs to the elliptic law \cite{sommers1988spectrum}, that is, its spectrum is uniform within an ellipse in the complex plane; however, an explicit formula for the spectrum of $\partial{\bf F}$ cannot be found, if not perturbatively in $\alpha$, see Refs.~\cite{baron_fine_structure_2024, baron_path_2022}. For $\alpha=0$, in particular, the radius $r({\bf x})$ of $g\,\tilde{{\bf J}}D({\bf x})$ in the limit of $N$ large has been computed in Refs.~\cite{Rajan_spectra_06, Wei_spectra_2012} and reads:
\begin{align}
r^2({\bf x})=g^2 \,\frac{1}{N}\text{tr}\,D^2({\bf x}).
\end{align}
\begin{figure*}[t!]
    \centering
    \includegraphics[width=0.6\textwidth]{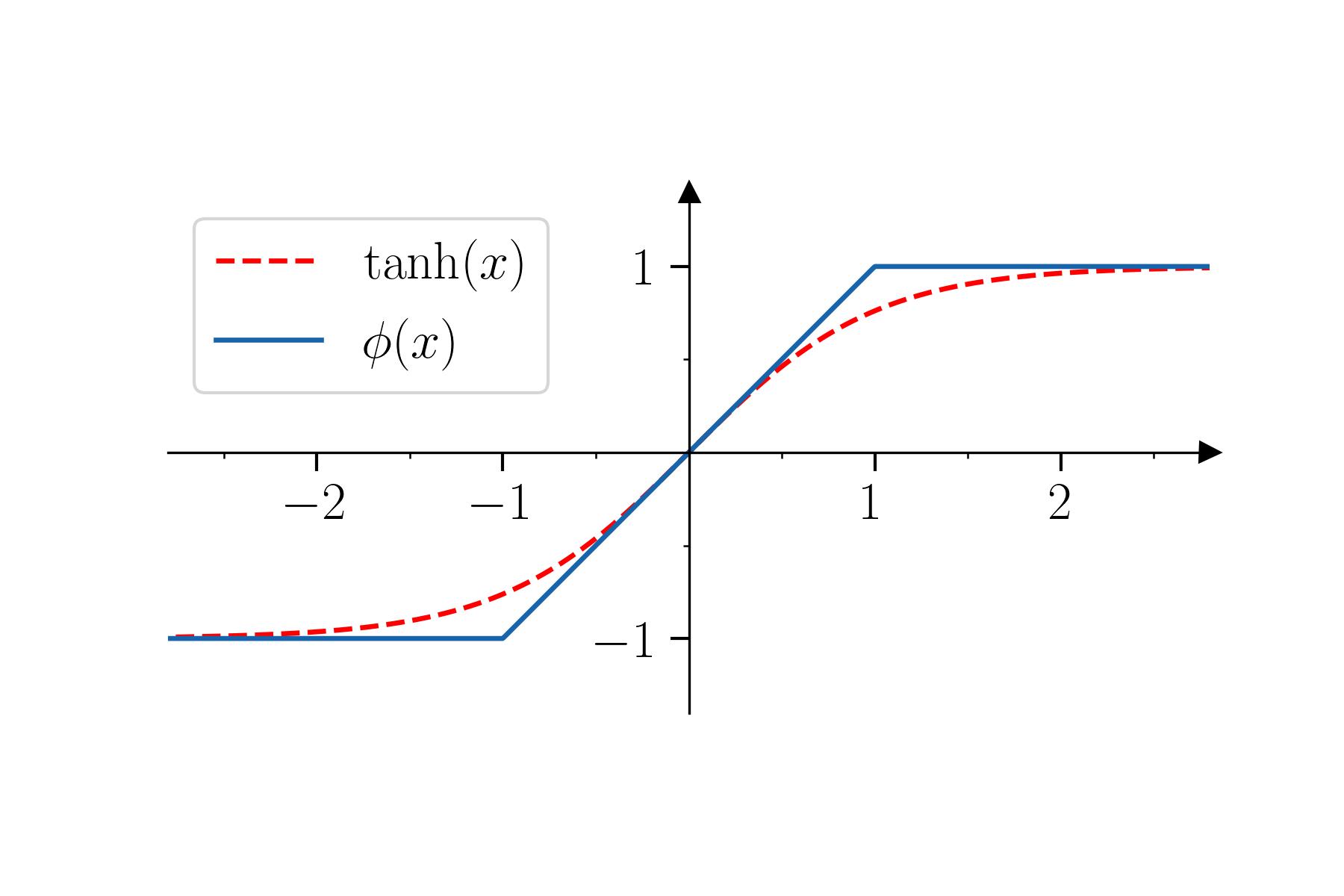}
    \caption{Visualization of the function $\tanh(x)$ versus our choice for $\phi(x)$.}
    \label{fig:tanh_vs_phi}
\end{figure*}

\noindent We see that this is not practical for analytical computations; moreover, the spectrum for $\alpha=0$ does not present an explicit form in general \cite{Wei_spectra_2012}. However, for our choice of $\phi$ this expression drastically simplifies, since 
\begin{align}
    \phi'(x)=\begin{cases}
    1 \quad\quad\text{if}\quad|x|\leq1\\
    0 \quad\quad\text{else}
    \end{cases}
\end{align}
so that $[\phi'(x)]^2=\phi'(x)$ and we can introduce an order parameter, $D_\phi$, that counts the number of ones in $\phi'({\bf x})$:
\begin{equation}
    D_\phi({\bf x}):=\frac{1}{N}\sum_i \phi'(x_i)= \frac{1}{N}\sum_i \mathbb{I}_{|x_i| \leq 1}
\end{equation}
and such that
\begin{align}
    r^2({\bf x})=g^2D_\phi({\bf x}).
\end{align}
 For this choice of $\phi$ the spectrum of ${\bf M}:=g\,\tilde{\bf J} D({\bf x})$ has a well-known form (elliptical law) as $N\to\infty$ \cite{sommers1988spectrum} (see also Ref.~\cite{Burda_non_herm_2011} for a simpler derivation):
\begin{align*}
\rho(x,y)=
\begin{cases}
\frac{1}{\pi (1-\alpha^2) g^2 D_\phi}\quad\quad\quad\text{if}\quad\quad \frac{(x+1)^2}{(1+\alpha)^2}+\frac{y^2}{(1-\alpha)^2}=g^2D_\phi\\
0\quad\quad\text{else}
\end{cases}.
\end{align*}
A small remark: formally what $D({\bf x})$ does is that it deletes those columns for which $\phi'(x_i)=0$. By shuffling the columns this results in a block diagonal matrix that has an elliptical law (with $N$ reduced by $D_\phi$) in the first block and zeros in the block at the bottom. Since ultimately the matrix is shifted by the identity, this does not cause troubles with the determinant. \\

\noindent Like in Chapter~\ref{chapter:non_reciprocal} we compute the complexity at some fixed order parameters. The reason for introducing the order parameters is to extract the minimal number of expressions that, once fixed, leave us with only Gaussian integrals to compute. At the same time, we do not want to introduce unnecessary order parameters. For our present analysis, it turns out that we need the following six order parameters:
\begin{align*}
 &m= \frac{1}{N} \sum_i x_i \quad&q &= \frac{1}{N} \sum_i x^2_i\\
&M_\phi =\frac{1}{N} \sum_i \phi(x_i) \quad&Q_\phi &=\frac{1}{N} \sum_i \phi^2(x_i)\\
&Z= \frac{1}{N} \sum_i x_i\, \phi(x_i)\quad 
&D_\phi&=\frac{1}{N}\sum_i \phi'(x_i)
\end{align*}
where, for notational simplicity, we avoid the argument ${\bf x}$.
\noindent Therefore the (quenched) topological complexity is defined as:
\begin{align}
\Sigma(m,q,M_\phi,Q_\phi,D_\phi,Z):=\lim_{N\to\infty}\frac{1}{N}\mathbb{E}\left[\log \mathcal{N}(m,q,M_\phi,Q_\phi,D_\phi,Z)\right]
\end{align}
with $\mathcal{N}$ the number of fixed points of the dynamical equation that satisfy the order parameters, given via Kac-Rice by:
\begin{align}
\mathcal{N}(m,q,M_\phi,Q_\phi,D_\phi,Z):=\int_{\mathbb{R}^N}d{\bf x}\,\Omega({\bf x})\delta({\bf F}({\bf x}))\,|\det\partial{\bf F}({\bf x})|
\end{align}
where we defined
\begin{align}
\label{eq:scs_constraints}
\begin{split}
    \Omega({\bf x})&:=
    \delta\left(m-\frac{1}{N}\sum_ix_i\right)
    \delta\left(q-\frac{1}{N}\sum_ix_i^2\right)\delta\left(D_\phi-\frac{1}{N}\sum_i \phi'(x_i)\right)\times\\
    &
    \times\delta\left(M_\phi-\frac{1}{N}\sum_i\phi(x_i)\right)
    \delta\left(Q_\phi-\frac{1}{N}\sum_i\phi^2(x_i)\right)
    \delta\left(Z-\frac{1}{N}\sum_ix_i\phi(x_i)\right).
\end{split}
\end{align}
From now on, for simplicity, we avoid to write all the order parameters as function arguments, and instead we use $\Omega$ and  write $\mathcal{N}(\Omega),\Sigma(\Omega)$.

\section{Annealed complexity}
\label{sec:scs_ann_topo_compl}
The annealed complexity is easier to compute, since we interchange the logarithm and the expectation:
\begin{align}
    \Sigma_A(\Omega):=\lim_{N\to\infty}\frac{1}{N}\log\mathbb{E}\,\left[\mathcal{N}(\Omega)\right].
\end{align}
The expected value of the number of fixed points is then easier to compute with respect to the quenched computation:
\begin{equation}\label{eq:FirstMomentmq}
   \mathbb{E}[\mathcal{N}(\Omega)]= \int_{\mathbb{R}^N} d{\bf x} \,
    \Omega({\bf x}) \mathbb{E}[\delta({\bf F}({\bf x}))] \, \mathbb{E} \left[ |\mathrm{det} \,  \partial {\bf F}({\bf x})| \, \Big| {\bf F}({\bf x})=0\right]
\end{equation}
Like before in Chapter~\ref{chapter:non_reciprocal}, this expected value can be decomposed in three parts: the phase space term $v$, the probability term $P_\alpha$, and the determinant term $d_\alpha$. The details of the computation are done in Appendix~\ref{app:scs_ann_compl}, and here we simply present the main result:
\begin{align}
    \Sigma_A(\Omega)=p_\alpha(M_\phi,Q_\phi,Z) + d_\alpha(D_\phi)+\text{extr}_{\hat{\lambda},\hat{\omega},\hat{\xi},\hat{\theta},\hat{\eta},\hat{t}}\left\{ v(\Omega,\Gamma)\right\}
\end{align}
where $\Gamma:=\{\hat\lambda,\hat\omega,\hat\xi,\hat\theta,\hat\eta,\hat t\}$ is the set of conjugate parameters that appears when opening up in Fourier the Dirac deltas that impose the various constraints. The terms appearing above read:
\begin{align}
\label{eq:scs_p_alpha}
\begin{split}
    p_\alpha(M_\phi, Q_\phi, Z)=&-\frac{1}{2} \log (2 \pi g^2 Q_\phi)-\frac{1}{2}\frac{1}{g^2 Q_\phi} \left( q-2 g J_0 M_\phi \, m+ g^2 J_0^2 M_\phi^2 \right)\\
    &+\frac{1}{2}\frac{\alpha}{g^2 Q_\phi^2 (1+\alpha)}\left( Z^2 - 2 g J_0 \, M^2_\phi Z + g^2 J_0^2 M_\phi^4 \right)
\end{split}
\end{align}
and 
\begin{align}
\label{eq:scs_d_alpha}
 &d_\alpha(D_\phi)=\\
 &=
    \begin{cases}
    \frac{1}{2 g^2}\frac{1}{1+\alpha}-\frac{D_\phi}{2}+\frac{D_\phi}{2}\log(g^2D_\phi)\quad\quad\text{if}\quad g\sqrt{D_\phi}(1+\alpha)>1\\
        \frac{1}{4\alpha g^2}(1-\sqrt{1-4\alpha g^2D_\phi})+D_\phi\log(1+\sqrt{1-4\alpha g^2D_\phi})-D_\phi(\frac{1}{2}+\log(2))\quad\text{else}
    \end{cases}
\end{align}
and 
\begin{align}
\label{eq:scs_v}
v(\Omega,\Gamma)=\hat{\lambda} Q_\phi+\hat{\omega} q + \hat{\eta} m + \hat{\xi} M_\phi+\hat{t}(D_\phi-1) + \hat{\theta}Z+\log(R) 
\end{align}
with $R=\Theta_1+\Theta_2+\Theta_3$ and:
\begin{align}
\label{eq:scs_thetas}
\begin{split}
&\Theta_1=\frac{\sqrt{\pi}e^{\frac{b'^2}{4a'}}}{2\sqrt{a'}}\left[\text{erf}\left(\frac{b'+2a'}{2\sqrt{a'}}\right)+\text{erf}\left(\frac{2a'-b'}{2\sqrt{a'}}\right)\right]\\
&\Theta_2=\frac{\sqrt{\pi}e^{\frac{b^2}{4a}+c+t}}{2\sqrt{a}}\left[-\text{erf}\left(\frac{2a-b}{2\sqrt{a}}\right)+1  \right]\\
&\Theta_3=\frac{\sqrt{\pi}e^{\frac{b''^2}{4a}+c''+t}}{2\sqrt{a}}\left[-\text{erf}\left(\frac{2a+b''}{2\sqrt{a}}\right)+1  \right]
\end{split}
\end{align}
and the additional factors taking the following values:

\begin{align}
\begin{alignedat}{4}
a   &= \hat\omega                       &\qquad
b   &= \hat\eta-\hat\theta \\[4pt]
c   &= \hat\xi-\hat\lambda                  &\qquad
a'  &= \hat\lambda+\hat\omega+\hat\theta \\[4pt]
b'  &= \hat\eta+\hat\xi                     &\qquad
c'' &= -(\hat\xi+\hat\lambda) \quad\quad b''=\hat{\eta}+\hat{\theta}.
\end{alignedat}
\end{align}

\noindent 
Since ultimately we want to be able and plot the complexity as a function of $D_\phi$ (the parameter that controls the fraction of unstable modes of the stationary points), we will optimize over the rest of the order parameters, keeping track of the values of these at the optimum. The goal is then to compare such values with the corresponding values given by the DMFT. We can therefore define the complexity just as a function of $D_\phi$, while optimizing for the other order parameters:
\begin{align}
\Sigma_A(D_\phi):=\text{extr}_{\tilde{\Omega},\Gamma}\Sigma_A(\Omega,\Gamma)=\text{extr}_{\tilde{\Omega},\Gamma}\left\{p_\alpha(M_\phi,Q_\phi,Z)+d_\alpha(D_\phi)+v(\Omega,\Gamma)\right\}
\end{align}
where we use the notation $\tilde{\Omega}:=\Omega/D_\phi=\{m,q,M_\phi,Q_\phi,Z\}$, and also for $\Sigma_A$ we keep the same notation, just changing the arguments depending on what we are optimizing over. Therefore, the object that we really are interested in studying is $\Sigma_A(\Omega,\Gamma)$, which is obtained by performing a saddle point of the action resulting from the computation of $\mathbb{E}\mathcal{N}$ after the conjugate parameters have been introduced. See Appendix~\ref{app:scs_ann_compl} for details on the calculation.

\section{Analysis of the annealed complexity}
\label{sec:scs_ann_analysis_topo_compl}
In order to compute $\Sigma_A(D_\phi)$ we have to solve a set of 11 self-consistent equations:
\begin{align}
    \nabla_{\tilde{\Omega}}\Sigma_A(\Omega,\Gamma)={\bf 0},\quad\nabla_{\Gamma}\Sigma_A(\Omega,\Gamma)={\bf 0}.
\end{align}
The optimization over the $\tilde{\Omega}$ variables is easy and gives the following equations:
\begin{align}
&\hat\eta+\frac{J_0 M_\phi}{g Q_\phi}=0\\
&\hat\omega-\frac{1}{2 g^2 Q_\phi}=0\\
& \hat\xi+\frac{J_0 \left( -2 M_{\phi} Z \alpha + m Q_{\phi} (1 + \alpha) - g J_0 M_{\phi} (Q_{\phi} - 2 M_{\phi}^2 \alpha + Q_{\phi} \alpha) \right)}{g Q_{\phi}^2 (1 + \alpha)} =0\\
&\hat\lambda+\frac{-2 g J_0 m M_{\phi} + g^2 J_0^2 M_{\phi}^2 + q}{2 g^2 Q_{\phi}^2} - \frac{1}{2 Q_{\phi}} - \frac{(-g J_0 M_{\phi}^2 + Z)^2 \alpha}{g^2 Q_{\phi}^3 (1 + \alpha)}=0\\
&\hat\theta+\frac{(-g J_0 M_{\phi}^2 + Z) \alpha}{g^2 Q_{\phi}^2 (1 + \alpha)} =0.
\end{align}
The remaining equations are obtained by taking derivatives with respect to the $\Gamma$ variables, and therefore involve the non-trivial term $R$ (which depends on the $\Gamma$ variables but not on the $\tilde{\Omega}$ ones):
\begin{align}
&m+\frac{\partial_{\hat\eta} R}{R}=0\\
&q+\frac{\partial_{\hat\omega} R}{R}=0\\
&M_\phi+\frac{\partial_{\hat\xi} R}{R}=0\\
&Q_\phi+\frac{\partial_{\hat\lambda} R}{R}=0\\
&Z+\frac{\partial_{\hat\theta} R}{R}=0\\
&D_\phi-1+\frac{\partial_{\hat t} R}{R}=0.
\end{align}
We avoid presenting here the explicit expression for these last equations, since it is rather cumbersome, and we directly pass to the numerical solution.

\subsection{Case $\alpha=0$}
Let us consider the case $\alpha=0$, which is the one we were able to solve with the DMFT. In this case we lose the dependence on $Z$, and $\hat{\theta}=0$. We are therefore left with 9 equations. Remark that the determinant term in this case simplifies to:
\begin{align}
\label{eq:d_0(Dphi)}
d_0(D_\phi)=\begin{cases}
0 \quad\text{if}\quad D_\phi<1/g^2\\
 \frac{1}{2 g^2} - \frac{D_\phi}{2} + \frac{D_\phi}{2}\log(g^2 D_\phi)\quad\text{else}.
\end{cases}
\end{align}
By solving the 9 equations numerically, we see that, similarly to the problem presented in Chapter~\ref{chapter:non_reciprocal}, we have three types of complexity: a paramagnetic and a ferromagnetic complexity of unstable fixed points, and a ferromagnetic stable fixed point. Let us remark that the transition from the PFP phase to the FFP phase is solely based on a linear stability analysis, and does not need to rely on the Kac-Rice formalism. In particular, in the PFP phase the complexity diverges because $q=Q_\phi=0$. However, as we see below, we can predict the transition from the chaotic PC phase to the PFP phase by studying the paramagnetic complexity and sending $g\to 1^+$. Let us start indeed from the paramagnetic complexity.

\subsubsection{Paramagnetic complexity}
\begin{figure*}[t!]
    \centering
    \includegraphics[width=0.72\textwidth]{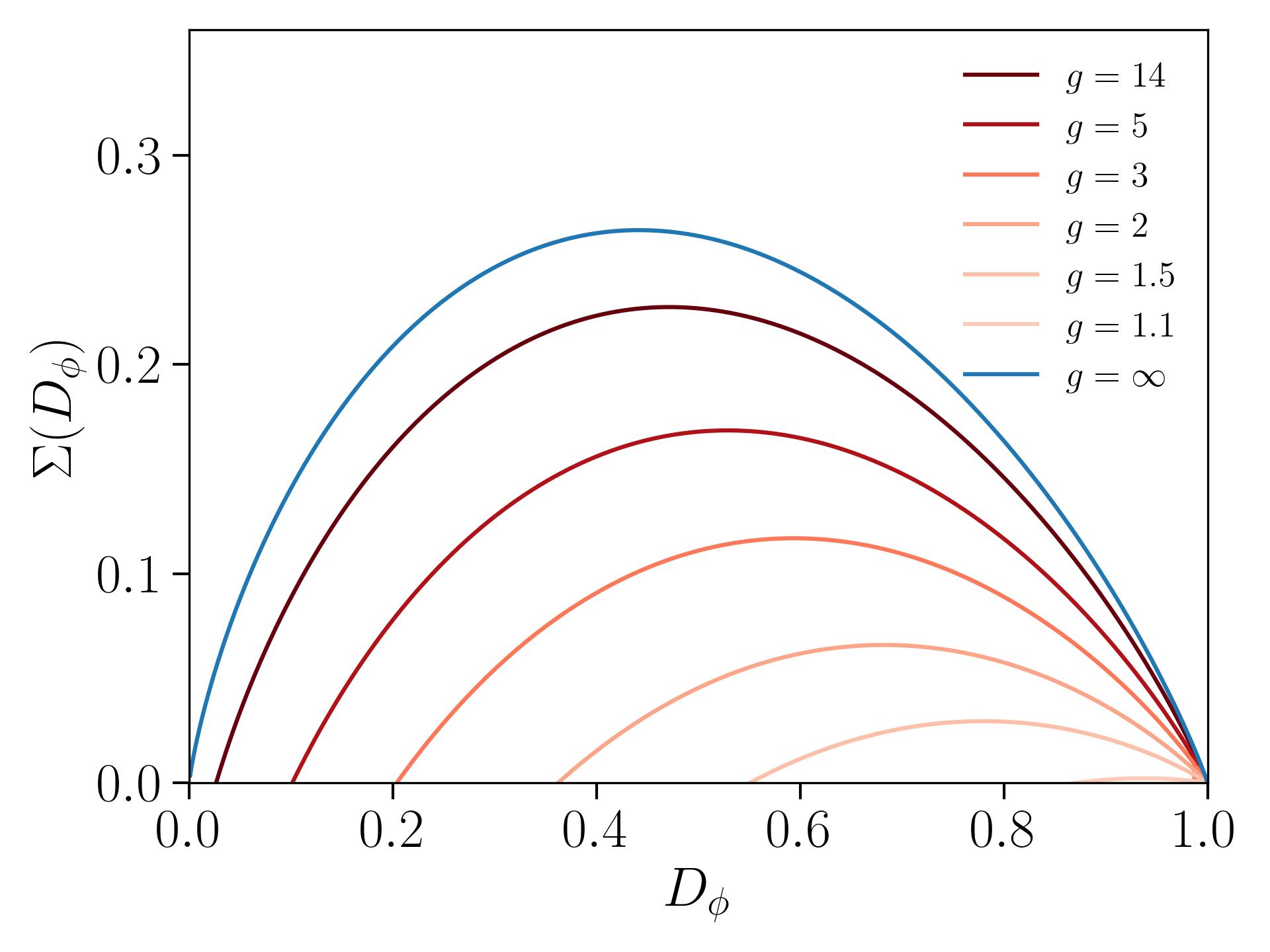}
    \caption{Plot of the paramagnetic (i.e. $m=M_\phi=0$) complexity of unstable stationary points as a function of $D_\phi$. As we decrease $g$, we see that the curve collapses to a point at $D_\phi=1$, indicating a transition to one single paramagnetic stable fixed point. The blue curve represents the complexity in the limit of $g\to\infty$}
    \label{fig:para_kac_scs}
\end{figure*}
The paramagnetic complexity corresponds to studying solutions to the equations above with $m=M_\phi=0$ (in which case the $J_0$ dependence is lost). Using the equations for $\hat{\eta}$ and $\hat{\xi}$ we see that we must have $\hat{\eta}=\hat\xi=0$. In this case the equations greatly simplify. The equation for $\hat{t}$ reads:
\begin{align}
e^{\hat t}=
\frac{(D_\phi - 1)\,e^{\hat\lambda}\,\sqrt{\hat\omega}\,
      \text{erf}\left(\sqrt{\hat\omega+\hat\lambda}\right)}
     {D_\phi\,\sqrt{\hat\omega+\hat\lambda\,}\,
      \bigl(\text{erf}\bigl(\sqrt{\hat\omega}\bigr)-1\bigr)}
\end{align}
and the remaining equations read:
\begin{align}
\begin{split}
&\hat{w}=\frac{1}{2 g^2 Q_\phi},\quad \hat\lambda=\frac{1}{2 Q_\phi}  -\frac{q}{2 g^2 Q_\phi^2} \\
&q=
\frac{\hat\omega+\hat\lambda-D_{\phi}\,\hat\lambda}
       {2\,\hat\omega\,(\hat\omega+\hat\lambda)}
-\frac{e^{-\hat\omega}}{\sqrt{\pi}}
  \left[
    \frac{D_{\phi}\,e^{-\hat\lambda}}
         {\sqrt{\hat\omega+\hat\lambda}\,
    \operatorname{erf}\bigl(\sqrt{\hat\omega+\hat\lambda}\bigr)}
    +\frac{D_{\phi}-1}
         {\sqrt{\hat\omega}\,
          \operatorname{erfc}\bigl(\sqrt{\hat\omega}\bigr)}
  \right]\\
&Q_{\phi}=1
- D_{\phi}
+\frac{D_{\phi}}{2\bigl(\hat\omega+\hat\lambda\bigr)}
-\frac{D_{\phi}\,e^{-(\hat\omega+\hat\lambda)}}
{\sqrt{\pi}\,\sqrt{\hat\omega+\hat\lambda}\,
\operatorname{erf}\left(\sqrt{\hat\omega+\hat\lambda}\right)}.
\end{split}
\end{align}
We will denote with a superscript $*$ the solution of the order parameters at the saddle point (that is, for these equations). Let us now consider the unstable branch of the paramagnetic complexity (that is, from \eqref{eq:d_0(Dphi)}, $D_\phi>1/g^2$), then the expression for the complexity reads:
\begin{align}
\begin{split}
\Sigma(D_\phi)&=
q^*\hat\omega^*
+Q_{\phi}^*\hat\lambda^*
-\frac{q^*}{2\,g^{2}\,Q^*{\phi}}
-(1-D_{\phi})\log(1-D_{\phi})
-D_{\phi}\log D_{\phi}
-\frac12\ln\bigl(2\,g^{2}\,\pi\,Q_{\phi}^*\bigr)\\&
+(1-D_{\phi})
\ln\left(
\frac{e^{-\hat\lambda^*}\sqrt{\pi}\,\bigl[1-\operatorname{erf}(\sqrt{\hat\omega^*})\bigr]}{\sqrt{\hat\omega^*}}\right)+D_{\phi}\,\ln\left(\frac{\sqrt{\pi}\operatorname{erf}\bigl(\sqrt{\hat\omega^*+\hat\lambda^*}\bigr)}{\sqrt{\hat\omega^*+\hat\lambda^*}}\right)\\
&+\frac{1}{2 g^2} - \frac{D_\phi}{2} + \frac{D_\phi}{2} \log(g^2 D_\phi).
\end{split}
\end{align}
We will drop the subscript "A" when referring to the paramagnetic complexity, since we can argue from the computation of the RS quenched complexity (see Appendix.~\ref{app:scs_ann_compl}) that it is equal to the annealed one. Let us look at Fig.~\ref{fig:para_kac_scs}, which shows the plot of the paramagnetic complexity of unstable equilibria as a function of $g$. It is interesting to see that the complexity decreases as $g\to 1^+$, while finally collapsing to a point at $D_\phi=1$ at $g=1$. This is in perfect agreement with the DMFT analysis described above: as $g$ increases above $1$, the system crosses from a stable fixed point to chaotic motion. The interpretation in terms of complexity is clear, for $g<1$ we have $\Sigma(D_\phi)=0$ and the (unique) fixed point is stable, while for $g>1$ this point becomes unstable and there is an explosion in the number of unstable stationary points. We can refer to this phenomenon as a \textit{topology trivialization}, similarly to Refs.~\cite{Fyodorov_2016, fyodorov2016nonlinear, ben2021counting}. Although we do not report it in Fig.~\ref{fig:para_kac_scs}, it can be seen that the paramagnetic complexity is always unstable for $g>1$, that is:
\begin{align}
    \forall D_\phi\in[0,1]\quad s.t.\quad\Sigma(D_\phi)\geq0,\quad D_\phi>1/g^2
\end{align}
meaning that there is an exponential abundance of unstable fixed points, and no stable ones. In the following, we examine more in detail the limits $g\to\infty$ and $g\to 1^+$.\\\\

\noindent\textit{The $g\to\infty$ limit}

\begin{figure*}[t!]
    \centering
    \includegraphics[width=0.75\textwidth]{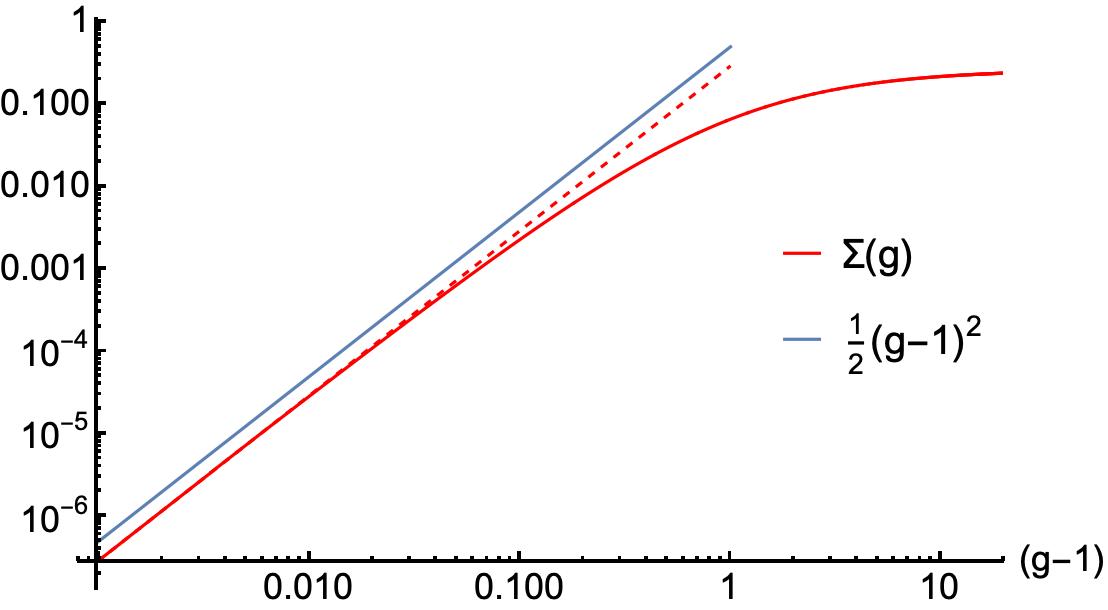}
    \caption{This log-log plot shows the maximum value of the complexity in the paramagnetic region (i.e. $J_0=0$) as a function of $g$ (denoted by $\Sigma(g)$ for simplicity, see red curve). The scaling of the red curve as $g\to 1^+$ is quadratic in $(g-1)$, while as $g\to\infty$ it saturates. The blue line represents the behavior of the maximal Lyapunov exponent with $\phi=\tanh$ for $g$ close to 1: $\lambda_{max}(g)=\frac{1}{2}(1-g)^2+\mathcal{O}((1-g)^3)$ \cite{crisanti_path_2018}. It remains to be verified whether $\lambda_{max}(g)$ behaves in the same way for our choice of $\phi$, see \cite{pacco_scs_2025}.}
    \label{fig:lyap_vs_compl}
\end{figure*}

\noindent It is interesting to note that as $g\to\infty$ the equation for the complexity reduces to a simple limit that can be computed exactly. Indeed, by considering the $g\to \infty$ limit in the equations describing the order parameters and the complexity (denoted $\Sigma^\infty$ in short) we find that 
\begin{align}
    \begin{split}
    \Sigma^\infty(D_\phi)&=(-1 + D_\phi)\log(1 - D_\phi)\\& + 
 \frac{D_\phi}{2Q_\phi^*} \left[D_\phi-1 - 2 Q_\phi^* \log(D_\phi) + 
    2 Q_\phi^* \log\left(\text{erf}\left(\sqrt{\frac{D_\phi}{2Q_\phi^*}}\right)\right)\right]
\end{split}
\end{align}
where $Q_\phi^*$ is the solution of 
\begin{align}
\label{eq:scs_Qphi_eqn}
    -1 + D_\phi + \frac{e^{-\frac{D_\phi}{2 Q_\phi^*}} \sqrt{\frac{2}{\pi}} \sqrt{D_\phi Q_\phi^*}}{
 \text{erf}\left(\sqrt{\frac{D_\phi}{2 Q_\phi^*}}\right)}=0.
\end{align}

\noindent This complexity is plotted in Fig.~\ref{fig:para_kac_scs} in blue as a function of $D_\phi$. We can also find the optimum of the complexity by taking a derivative with respect to $D_\phi$. In the limit of $g\to\infty$ the limiting value of the optimum solves the following equation:
\begin{align}
\label{eq:scs_Dphi_eqn}
\frac{D_\phi^*}{2\, Q_\phi^*} + \log\left[\frac{(1 - D_\phi^*) \cdot \text{erf}\left(\sqrt{\frac{D_\phi^*}{2Q_\phi^*}}\right)}{D_\phi^*}\right]=0.
\end{align}
Whereas Eq.~\eqref{eq:scs_Qphi_eqn} is solved for any $D_\phi$, Eq.~\eqref{eq:scs_Dphi_eqn} must be solved together with Eq.~\eqref{eq:scs_Qphi_eqn} to find the optimum of the blue curve in Fig.~\ref{fig:para_kac_scs}. The solution to these two equations is given by:
\begin{align}
&D_\phi^*=\frac{\operatorname{erf}\left(\frac{1}{\sqrt{\pi}}\right)}{e^{-1/\pi} + \operatorname{erf}\left(\frac{1}{\sqrt{\pi}}\right)}\approx0.44\\
&Q_\phi^*=\frac{\pi}{2}D_\phi^*.
\end{align}
The final expression of the complexity at the optimum is then:
\begin{align}
\Sigma^\infty(D_\phi^*)=-\frac{1}{\pi} + \log\left(1 + e^{1/\pi} \, \text{Erf}\left(\frac{1}{\sqrt{\pi}}\right)\right)\approx0.264.
\end{align}
Notice that this is the same result found in Ref.~\cite{HeliasFP2022} for the different choice $\phi(x)=\tanh(x)$. Our result hints at the fact that this result might be invariant on the specific choice of $\phi$, as long as $\phi$ is a sigmoidal function with $\phi'(0)=1$ and $\lim_{x\to\infty}\phi(\pm x)=\pm 1$. Moreover, with our approach we can obtain the full curve, as well as the value of the instability index at the maximum, which is $D_\phi^*\approx 0.44$. \\\\

\noindent Our result contradicts a conjecture in Ref.~\cite{wainrib2013topological}, where the complexity is thought to scale as the maximum Lyapunov exponent for large $g$ ($\log(g)$ according to \cite{crisanti_path_2018}). Since the (maximum) of the complexity converges to a fixed value as $g\to\infty$, this conjectured link is non-existent. \\

\noindent\textit{The $g\to1^+$ limit}

\noindent Obtaining an analytic scaling for the complexity as $g\to1^+$ does not seem to be an easy task. However, we can solve the equations numerically to see that $\text{max}_{D_\phi}\Sigma(D_\phi)\sim (1-g)^2$ as $g\to 1^+$, meaning that the complexity scales quadratically in $g-1$ for $g$ close to 1. The prefactor can be inferred numerically, and its exact value is not of particular importance, see Fig.~\ref{fig:lyap_vs_compl}. In Ref.~\cite{crisanti_path_2018} it is found that the maximum Lyapunov exponent behaves as $\lambda_{max}(g)= \frac{1}{2}(g-1)^2+\mathcal{O}((g-1)^3)$ when $g\to 1^+$ with $\phi=\tanh$. It remains to be verified whether this scaling is unchanged with our choice of $\phi$. However, we expect that the prefactors of the complexity and the maximal Lyapunov will differ, at variance with previously thought conjectures \cite{wainrib2013topological}. The article \cite{pacco_scs_2025} will present the updated plot.

\subsubsection{Ferromagnetic complexity}
The paramagnetic solution to the complexity is not, in general, the only solution to the 9 saddle point equations. Indeed, there may be ferromagnetic solutions, that is, with $m\neq 0$ and $M_\phi\neq 0$. We first concentrate on the complexity of unstable stationary points. The equations are rather cumbersome to be written explicitly; we will therefore just present their numerical solution. As in Chapter~\ref{chapter:non_reciprocal}, the ferromagnetic complexity develops from the paramagnetic complexity by increasing $J_0$, up to a point where a bifurcation in $m,M_\phi$ appears. At that point, the ferromagnetic complexity appears as a (higher value) branch detaching from the paramagnetic curve, see Fig.~\ref{fig:complexity_cav_dmft_3}.

\begin{figure*}[t!]
    \centering
    \includegraphics[width=\textwidth]{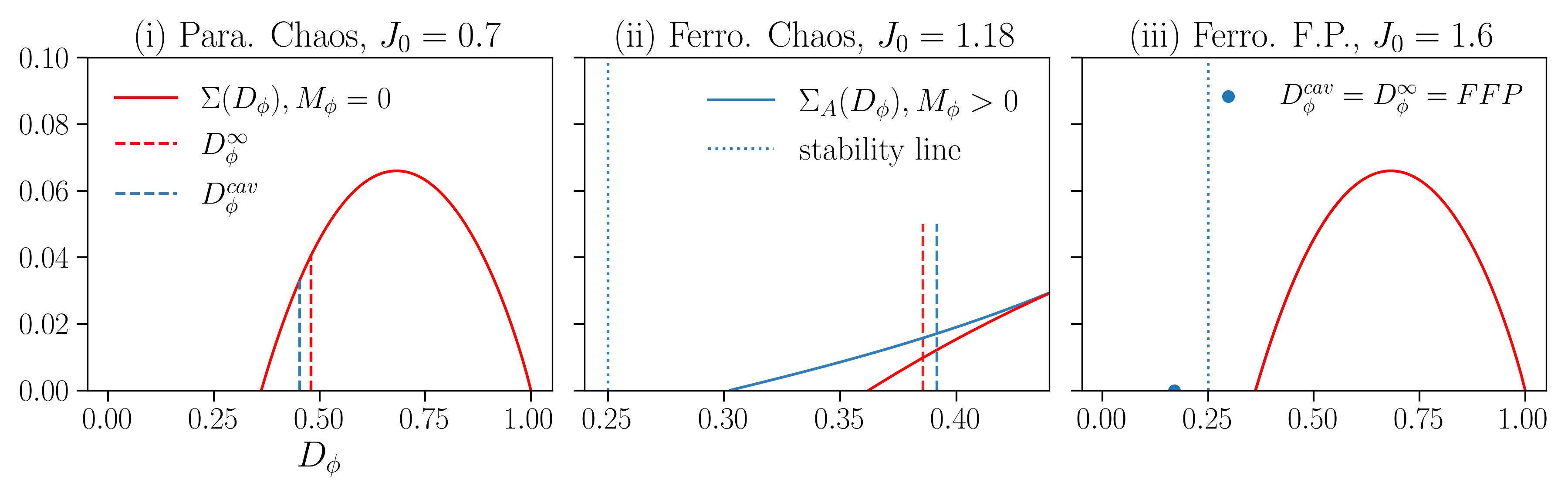}
    \caption{This figure shows how the complexity changes as $J_0$ is increased, for $g=2$. The values (i),(ii), (iii) of $J_0$ are also indicated in Fig.~\ref{fig:dmft_scs_phase_sign}. From left to right we represent the complexity in the three different dynamical phases: PC, FC and FFP. In all plots we show with dashed lines the values of $D_\phi$ reached by either the DMFT ($D_\phi^\infty$) or computed by extending the cavity solution within the chaotic regime ($D_\phi^{cav}$). In the left plot we see that these two values do not match, and the dynamical value lies in the middle of the complexity curve. In the middle plot we zoom to show better where the complexity bifurcates, from paramagnetic (red) to  ferromagnetic (blue). Also here the two values of $D_\phi$ do not match. In the rightmost plot there is only one (stable) ferromagnetic fixed point that has zero complexity (blue dot) and it corresponds to the dynamical attractor.}
    \label{fig:complexity_cav_dmft_3}
\end{figure*}


\subsubsection{Ferromagnetic (stable) fixed point }
As one increases $J_0$ the ferromagnetic branch of the complexity expands, up to a point in which it develops a single fixed point in the stable region (i.e. for $D_\phi<1/g^2$), see Fig.~\ref{fig:complexity_cav_dmft_3}. We identify the position of this point by $D_\phi^{cav}$, because it can be found also by means of a cavity approach, omitted here as it gives the same result of the annealed complexity. The transition line in the plane $(g,J_0)$ is found by studying the saddle point solution to the complexity in the regime where $D_\phi^{cav}\leq 1/g^2$, thus identifying the value of $J_0$ such that the stable fixed point reaches $D_\phi^{cav}=1/g^2$. As for the DMFT, we call such value $J_0^{c}$ ("c" for "critical"), and we will see below that they coincide. The solutions to the saddle-point that give us the Ferromagnetic Stable Fixed Point are the following 10 equations (we added the equation $\partial_{D_\phi}\Sigma_A(D_\phi)=0$):
\begin{align}
\label{eq:ffp_scs_params}
\begin{split}
&\hat\xi=0,\quad \hat\lambda=0,\quad \hat\eta = -\frac{J_0 M_\phi}{g Q_\phi},\quad \hat w = \frac{1}{2 g^2 Q_\phi},\quad m = g J_0 M_\phi,\quad \hat t=0\\
&q = g^2 J_0^2 M_\phi^2 + g^2Q_\phi,\quad D_\phi^{cav}=\frac{1}{2} \text{erf}\left(\frac{1 - g J_0 M_\phi}{ g \sqrt{2Q_\phi}}\right) + \frac{1}{2} \text{erf}\left(\frac{1 + g J_0 M_\phi}{ g \sqrt{2 Q_\phi}}\right)
\end{split}
\end{align}
and by already injecting these into the equations for $M_\phi$ and $Q_\phi$ we get directly two self-consistent equations:
\begin{align}
\label{eq:ffp_scs_mphi}
\begin{split}
&M_{\phi}
+\frac{g\sqrt{Q_{\phi}}
\Bigl(
    e^{-\frac{(gJ_{0}M_{\phi}-1)^{2}}{2g^{2}Q_{\phi}}}
  - e^{-\frac{( \;1+gJ_{0}M_{\phi})^{2}}{2g^{2}Q_{\phi}}}
\Bigr)}{\sqrt{2\pi}}
+\frac{1}{2}\,\bigl(1-gJ_{0}M_{\phi}\bigr)\,
   \operatorname{erf}\left(
      \frac{1-gJ_{0}M_{\phi}}{\sqrt{2}\,g\sqrt{Q_{\phi}}}
   \right)\\
   &
-\frac{1}{2}\,( \,1+gJ_{0}M_{\phi}\bigr)\,
   \operatorname{erf}\left(
      \frac{1+gJ_{0}M_{\phi}}{\sqrt{2}\,g\sqrt{Q_{\phi}}}
   \right)=0
\end{split}
\end{align}
and 
\begin{align}
\label{eq:ffp_scs_qphi}
\begin{split}
& D_{\phi}^{cav}-1
+ Q_{\phi}
- D_{\phi}^{cav}\,g^{2}\!\bigl(J_{0}^{2}M_{\phi}^{2}+Q_{\phi}\bigr)
\\&+
    e^{-\frac{\bigl(1+gJ_{0}M_{\phi}\bigr)^{2}}{2g^{2}Q_{\phi}}}\,
    g\sqrt{\dfrac{Q_{\phi}}{2\pi}}\,
    \Bigl(
        1-gJ_{0}M_{\phi}
        +e^{\frac{2J_{0}M_{\phi}}{gQ_{\phi}}}\bigl(1+gJ_{0}M_{\phi}\bigr)
    \Bigr)=0
\end{split}
\end{align}
where in the last equation we kept $D_\phi^{cav}$ but it is intended that it has to be replaced with its expression found above. The critical line is found by solving these two equations for $M_\phi$ and $Q_\phi$ at fixed $g$ and varying $J_0$: the value of $J_0^c$ then is the one satisfying 
\begin{align}
\label{eq:J0^c_kac}
D_\phi^{cav}=\frac{1}{g^2}\Longleftrightarrow \frac{g^2}{2} \left[\text{erf}\left(\frac{1 - g J_0^c M_\phi}{ g \sqrt{2Q_\phi}}\right) + \text{erf}\left(\frac{1 + g J_0^c M_\phi}{ g \sqrt{2 Q_\phi}}\right)\right]=1.
\end{align}
It can be verified that the complexity for this choice of variables is indeed zero, indicating a sub-exponential number of fixed points. In the present scenario there are actually two stable fixed points, each of opposite magnetization.

\section{Comparing complexity and dynamics: the case $\alpha=0$}
\label{sec:comparison_scs_kac_dmft}
The comparison between the values of the order parameters at the saddle-point obtained via DMFT (left) and those obtained via Kac-Rice (right) is done as follows:
\begin{align}
\begin{split}
    &m^\infty\leftrightarrow m,\quad M_\phi^\infty \leftrightarrow M_\phi,\quad C_x^0\leftrightarrow q, \quad C_\phi^0\leftrightarrow Q_\phi,\quad C_x^\infty\leftrightarrow \tilde{q},\quad C_\phi^\infty\leftrightarrow\tilde{Q}_\phi,\\
    & D_\phi^\infty:=\int dz\,p(z)\phi'\left(gJ_0M_\phi^\infty+z\sqrt{\Delta_0}\right)\leftrightarrow  D_\phi
\end{split}
\end{align}
where we use a superscript $\infty$ to distinguish the values $m,M_\phi,D_\phi$ of the DMFT from those of the Kac-Rice, and 
where the last two parameters $\tilde{q},\tilde{Q}_\phi$ are replica parameters that are introduced when computing the quenched complexity. However, as we explain in Appendix~\ref{sec:quenched_scs}, they are not necessary to carry out our comparison. In particular, in the PC phase, in the FFP phase and at $D_\phi^{cav}$ (in the FC phase) they are not necessary, since in the first case they are 0, in the second they converge to $q,Q_\phi$ (respectively) at the transition line, and in the last case they combine to give back the annealed result.

\begin{figure*}[t!]
  \centering
  \begin{subfigure}[b]{0.49\textwidth}
    \centering
    \includegraphics[width=\linewidth]{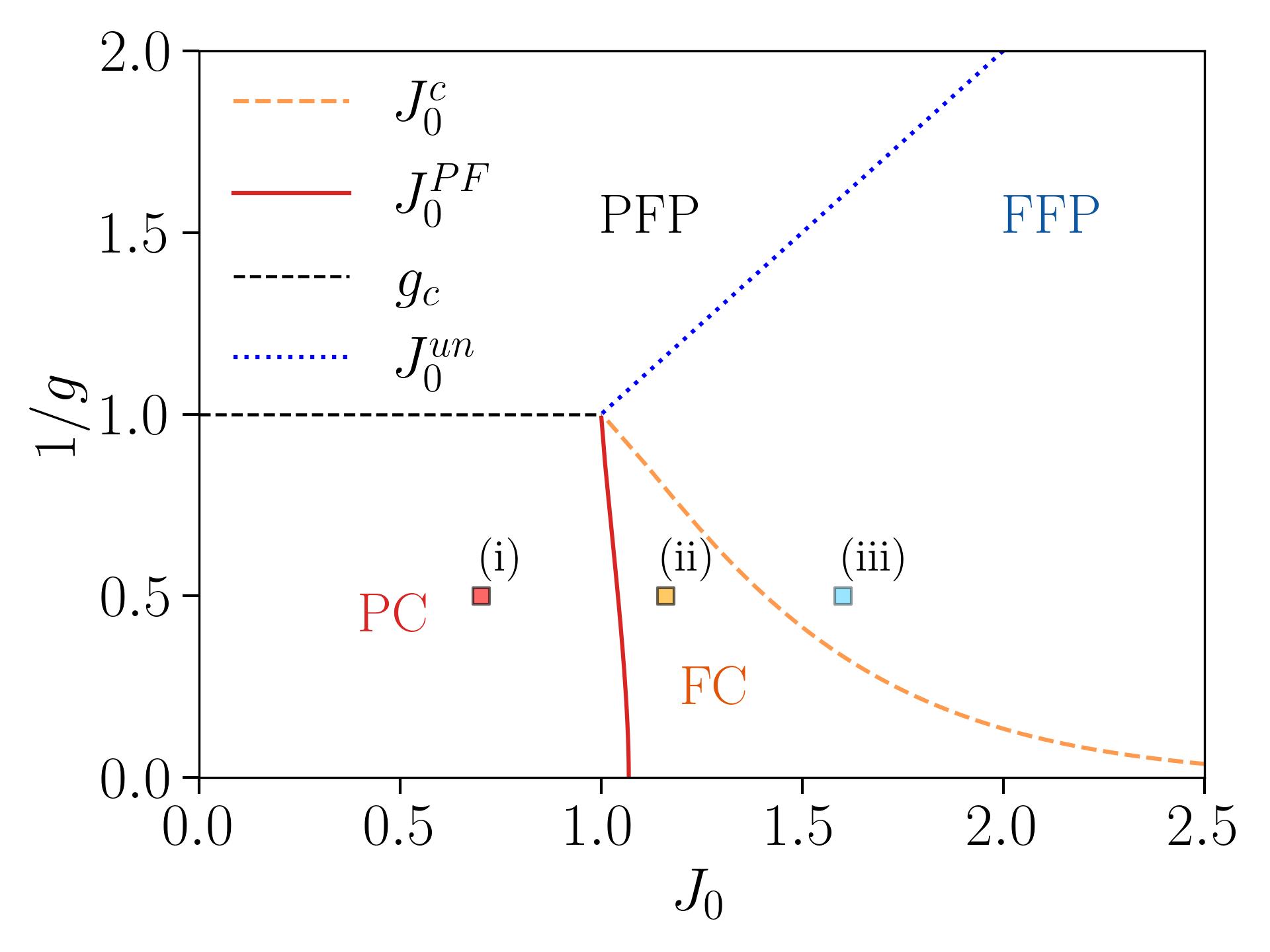}
  \end{subfigure}
  \hfill
  \begin{subfigure}[b]{0.49\textwidth}
    \centering
    \includegraphics[width=\linewidth]{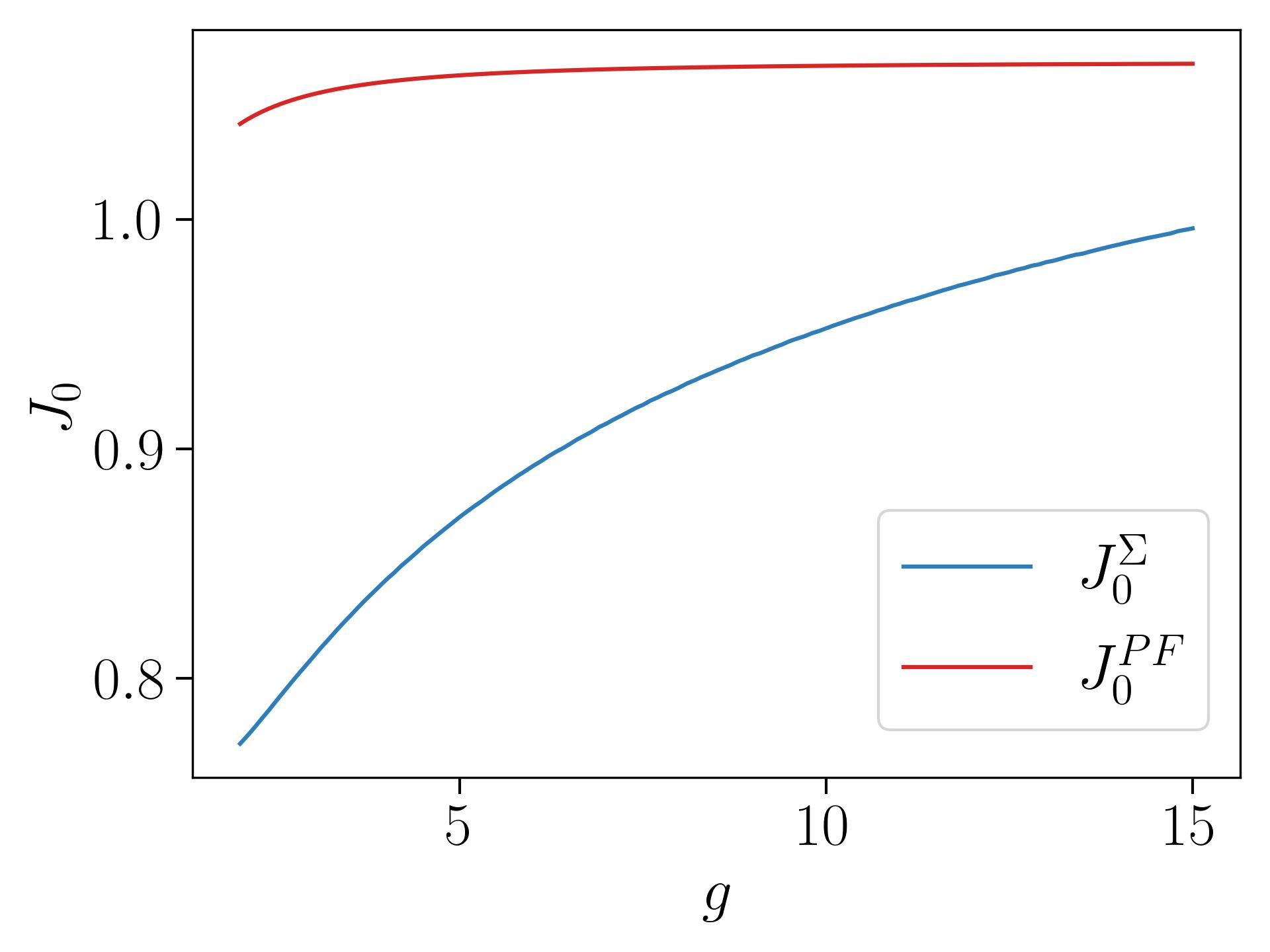}
  \end{subfigure}
  \caption{\textit{Left}. Dynamical phase diagram for $\phi$ as chosen in Eq.~\eqref{eq:def_phi_sign}. The three colored squares are for reference to the choice of values of Fig.~\ref{fig:complexity_cav_dmft_3}. \textit{Right}. For each $g>1$ we plot the curves $J_0^{PF}(g)$ (indicating the point where the dynamics passes from the PC phase to the FC phase), and $J_0^\Sigma(g)$, the curve where the complexity bifurcates to have a branch of ferromagnetic (unstable) stationary points. These two curves are clearly separated. Moreover, they do not have the same shape, in contrast to the model studied in Chapter~\ref{chapter:non_reciprocal}. }
  \label{fig:dmft_scs_phase_sign}
\end{figure*}

\subsection{FFP phase}
Let us start from the simplest phase, the Ferromagnetic (stable) Fixed Point phase. In this case we have already obtained above the explicit equations for $M_\phi$ and $Q_\phi$ of the Kac-Rice case, as well as the explicit equation for the transition line to the FC phase. The other Kac-Rice order parameters at the saddle-point are then related (from Eq.~\eqref{eq:ffp_scs_params}) by:
\begin{align}
    m=gJ_0M_\phi,\quad q=m^2+g^2Q_\phi.
\end{align}
These last relations also hold for the DMFT saddle-point equations, see \eqref{eq:dmft_system_eqns}, where one has $m^\infty=gJ_0M_\phi^\infty$ and $C_x^0=(m^\infty)^2+\Delta_0=(m^\infty)^2+g^2C_\phi^0$. The DMFT equations for $M_\phi^\infty$ and $C_\phi^0$ instead read (see Eq.~\eqref{eq:M_D_ffp}):
\begin{align*}
\begin{cases}
    M_\phi^\infty=\int dz\, p(z)\phi\left(gJ_0M_\phi^\infty+zg\sqrt{C_\phi^0}\right)\\
    C^0_\phi=\int dz\, p(z)\phi^2\left(gJ_0M_\phi^\infty+zg\sqrt{C_\phi^0}\right)\\
\end{cases}
\end{align*}
By plugging inside these equations the choice for $\phi(x)$ in \eqref{eq:def_phi_sign} and after carrying out a few calculations, it is not hard to show that we get back the same equations of the order parameters $M_\phi,Q_\phi$ found via Kac-Rice. The link is thus clear: the unique stable fixed point (or rather the two unique points with opposite magnetization) is the only attractor of the dynamics provided that $g>1$ and $J_0>J_0^c$. This is found by imposing that $J_0^c$ must satisfy 
\begin{align*}
    1=g^2\int dz\,p(z)\left[\phi'\left(gJ_0^cM_\phi^\infty+zg\sqrt{C_\phi^0} \right)\right]^2
\end{align*}
which for the choice $\phi'(x)=\mathbb{I}_{|x|<1}$ can be turned exactly into Eq.~\eqref{eq:J0^c_kac}. Hence also the transition line from the FFP to the FC regime coincides using either the Kac-Rice or DMFT formalisms. The transition line $J_0^c$ is shown in Fig.~\ref{fig:dmft_scs_phase_sign} \textit{left} (dashed, orange), while a plot of the complexity and the fixed point is shown in Fig.~\ref{fig:complexity_cav_dmft_3}.(iii). \\

\begin{figure*}[t!]
  \centering
  \begin{subfigure}[b]{0.49\textwidth}
    \centering
    \includegraphics[width=\linewidth]{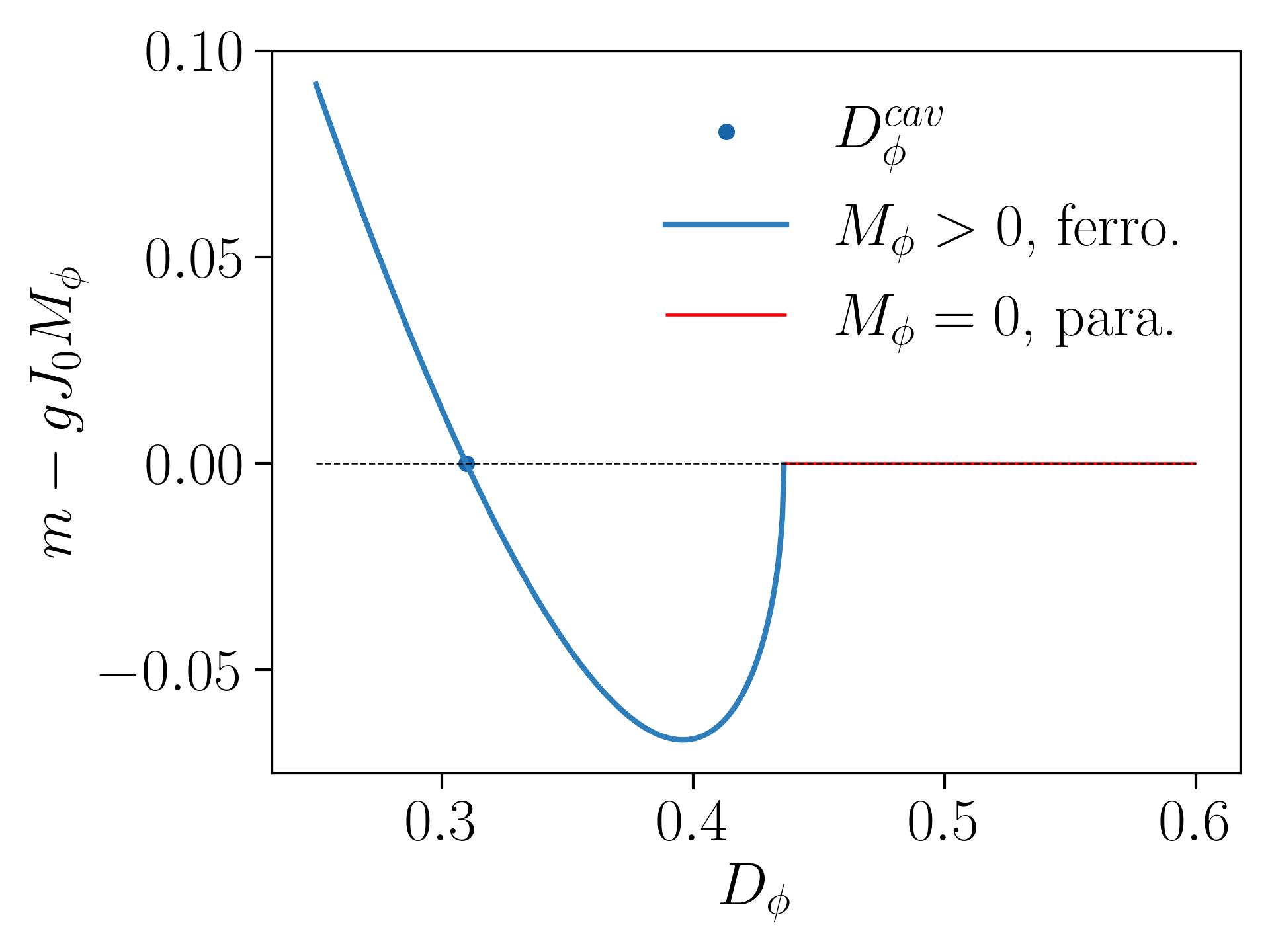}
    \label{fig:panel-a}
  \end{subfigure}
  \hfill
  \begin{subfigure}[b]{0.49\textwidth}
    \centering
    \includegraphics[width=\linewidth]{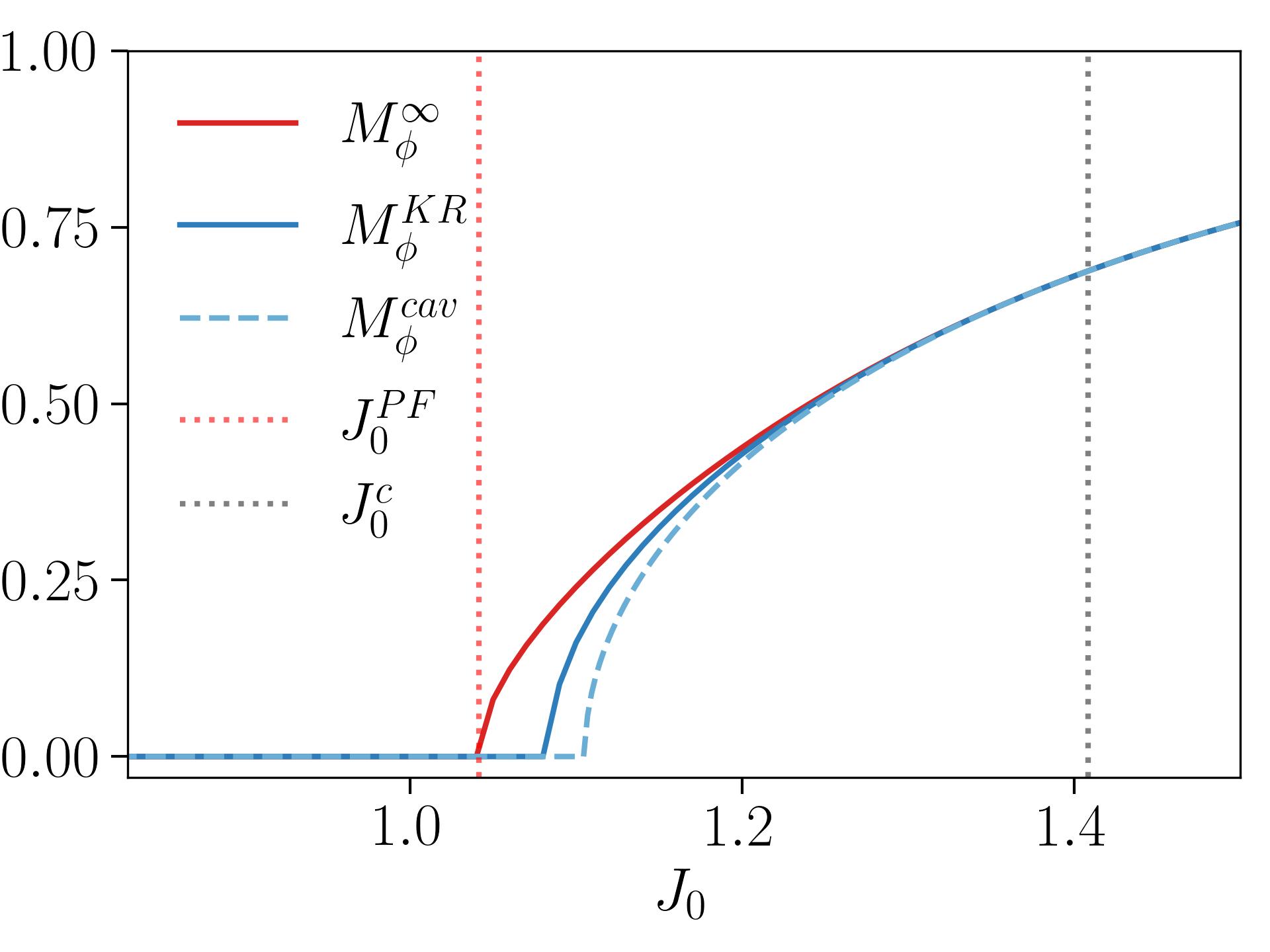}
    \label{fig:panel-b}
  \end{subfigure}
  \caption{\textit{Left}. Plot of the difference between $m$ and $gJ_0M_\phi$ (found by optimizing the complexity) for $g=2, J_0=1.3$. We see that there is only one point, at $D_\phi^{cav}$, where $m=gJ_0M_\phi$. It is therefore the only good candidate for a comparison with the DMFT. \textit{Right}. For $g=2$ and as a function of $J_0$ (comprising the three phases), comparison between the dynamical order parameters $M_\phi^{\infty}$, the "cavity" solution $M_\phi^{cav}$ and the optimal solution of the Kac-Rice, denoted $M_\phi^{KR}$, at $D_\phi^{\infty}$. The three solutions are identical in the PC and FFP phases (as we have shown to be the case), but they differ in the FC phase. In particular in the FC phase, the order parameter $M_\phi^{KR}$ obtained by optimizing the Kac-Rice at the dynamical $D_\phi^{\infty}$ does not correspond to the dynamical $M_\phi^{\infty}$.}
  \label{fig:two_plots_cav_dmft_scs}
\end{figure*}

\noindent Let us make an important comment; the equation for $D_\phi^{cav}$ can be extended to the chaotic region, although it will clearly not correspond to a unique fixed point. However, it can be checked that there is a unique point such that equations \eqref{eq:ffp_scs_params}, \eqref{eq:ffp_scs_mphi}, \eqref{eq:ffp_scs_qphi} are all satisfied simultaneously. This point, which we abusively denote as the "cavity point", was shown in our previous work, Chapter~\ref{chapter:non_reciprocal} (Ref.~\cite{us_non_reciprocal_2025}), to be the one where the stability parameter (here $D_\phi$) of the DMFT lied. Hence, in the present case, we want to check whether that claim in Chapter~\ref{chapter:non_reciprocal} remains true for the SCS model.  Moreover, this "cavity point" is crucial for the comparison between the DMFT and the Kac-Rice, as it is the only point where $m=gJ_0M_\phi$, an equality that is always satisfied by the TTI solution of the DMFT. 
From Fig.~\ref{fig:two_plots_cav_dmft_scs} \textit{left} we see that this is indeed the case, only one point can be the good candidate for a matching between DMFT and Kac-Rice equations, and it is at $D_\phi^{cav}$. As we will see below, however, $D_\phi^{cav}\neq D_\phi^{\infty}$ in the chaotic phases for the SCS model. \\

\noindent The last remark is that, differently with other systems (see, e.g., \cite{Fyodorov_2016}) choosing $J_0$ as a perturbation that multiplies $M_\phi$ has the consequence that even in the FFP phase there is still a branch of the complexity that contains exponentially many unstable fixed points (see for instance Fig.~\ref{fig:complexity_cav_dmft_3}.(iii)). This means in particular that despite the exponential abundance of (unstable) stationary points, the system is dynamically attracted to the unique stable fixed point. This is easily seen numerically (see for example Fig.~\ref{fig:chaos_num} in Chapter~\ref{chapter:non_reciprocal} for an analogous situation).

\begin{figure*}[t!]
  \centering
  \begin{subfigure}[b]{0.49\textwidth}
    \centering
    \includegraphics[width=\linewidth]{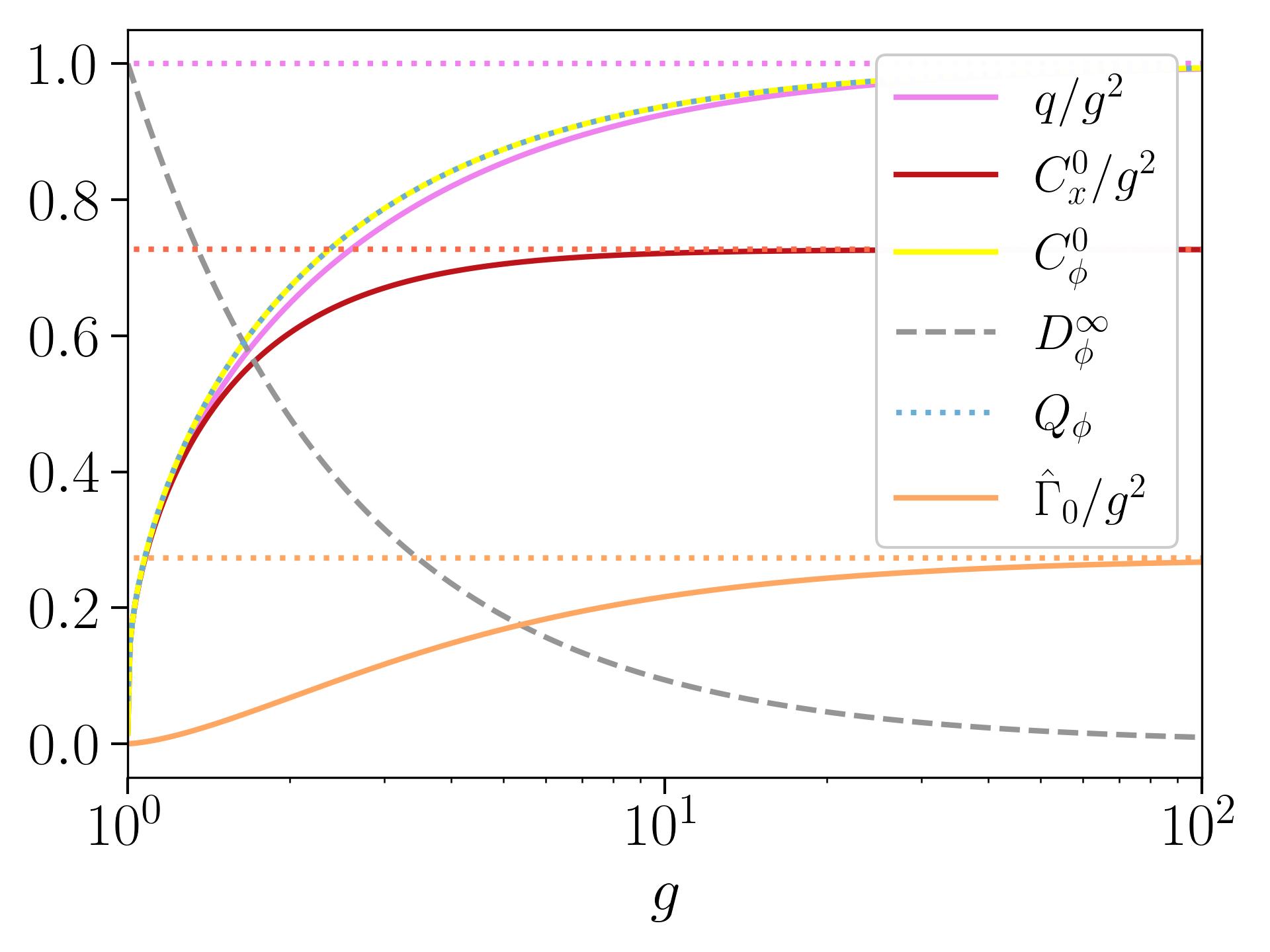}
  \end{subfigure}
  \hfill
  \begin{subfigure}[b]{0.49\textwidth}
    \centering
    \includegraphics[width=\linewidth]{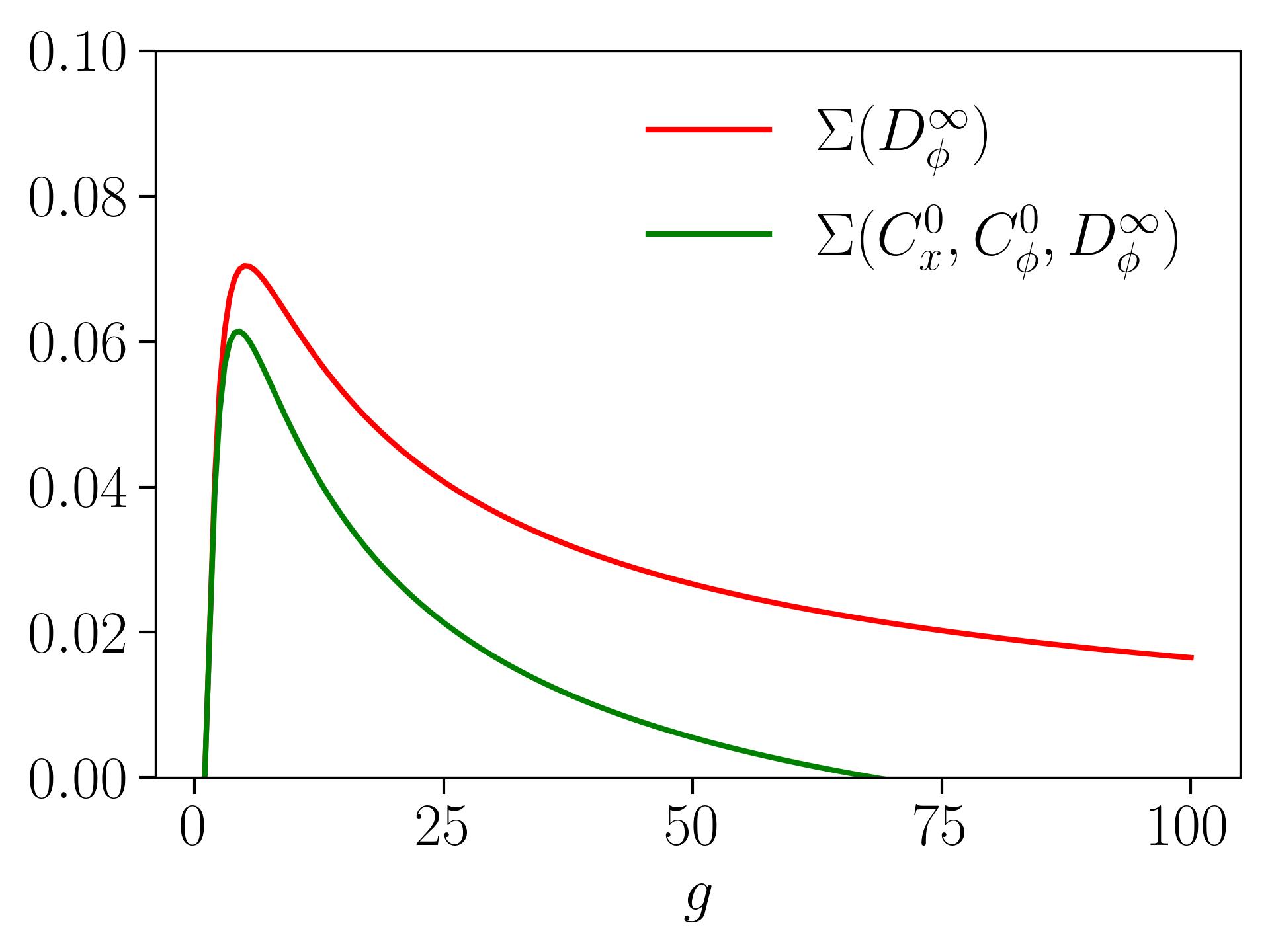}
  \end{subfigure}
  \caption{\textit{Left}. In the paramagnetic case $J_0=0$ we plot the dynamical values $D_\phi^\infty, C_x^0/g^2,C_\phi^0,\hat{\Gamma}_0/g^2$ as a function of $g$. The force $\hat{\Gamma}_0$ is shown to be positive and increasing. The other values also converge. The Kac-Rice order parameters $q,Q_\phi$ are evaluated at $g,D_\phi^\infty(g)$ for any $g$. We see that the curves of $C_\phi^0$ and $Q_\phi$ coincide graphically, (although numerically they differ at the 4-th decimal digit), wheres $q$ and $C_x^0$ are clearly separated. \textit{Right}. Plot of the complexity as a function of $g$, evaluated at $D_\phi^\infty(g)$ and either optimized over the order parameters (red) or evaluated at the DMFT values $q\to C_x^0,\,Q_\phi\to C_\phi^0$ (green). The green curve eventually goes to zero, indicating that the shell where the dynamics lives does not contain fixed points.}
  \label{fig:para_scs_two_plots}
\end{figure*}

\subsection{PC phase}
The paramagnetic chaotic phase is present when choosing $g>1$ and $J_0<J_0^{PF}(g)$. In this phase the dynamics is characterized by $m^\infty=M_\phi^\infty=0$, $C_x^0$ given by Eq.~\eqref{eq:delta0_para} and $C_\phi^0=\int dz\, p(z)\phi^2(z\sqrt{C_x^0})$. The explicit equation of $C_x^0$ for our choice of $\phi$ reads:
\begin{align}
\begin{split}
&\bigl(C_x^{0}\bigr)^{2}
+\frac{g^{2}}{2}\Biggl[
   e^{-\frac{1}{2C_x^{0}}}\,
     \sqrt{\frac{2}{\pi}}\,
     \sqrt{C_x^{0}}\,
     \bigl(1+3C_x^{0}\bigr)\\&
   +3\bigl(C_x^{0}\bigr)^{2}
     \operatorname{erf}\!\Bigl(1/\sqrt{2C_x^{0}}\Bigr)-(1+4C_x^{0})\,
     \operatorname{erfc}\!\Bigl(1/\sqrt{2C_x^{0}}\Bigr)
   \\&+\frac{
       \Bigl[
           e^{-\frac{1}{C_x^{0}}}
           \Bigl(
               \sqrt{2C_x^{0}}
               +e^{\frac{1}{2C_x^{0}}}\sqrt{\pi}\,
                \bigl(
                   -1
                   +(1+C_x^{0})
                    \operatorname{erf}\!\bigl(1/\sqrt{2C_x^{0}}\bigr)
                 \bigr)
           \Bigr)
       \Bigr]^{2}
     }{\pi}
\Biggr]=0
\end{split}
\end{align}
Moreover $C_x^0$ and $C_\phi^0$ are related  by:
\begin{align}
C_\phi^0=1 - e^{-\frac{1}{2C_x^0}} \sqrt{\frac{2}{\pi}} \sqrt{
  C_x^0}  + (C_x^0-1)\,\text{erf}\left(\frac{1}{\sqrt{2C_x^0}}\right).
\end{align}
The asymptotic value $D_\phi^\infty$ reached by the dynamics is given by the equation:
\begin{align}
D_\phi^\infty=\text{erf}\left(\frac{1}{\sqrt{2 C_x^0}}\right).
\end{align}

\begin{figure*}[t!]
  \centering
  \begin{subfigure}[b]{0.49\textwidth}
    \centering
    \includegraphics[width=\linewidth]{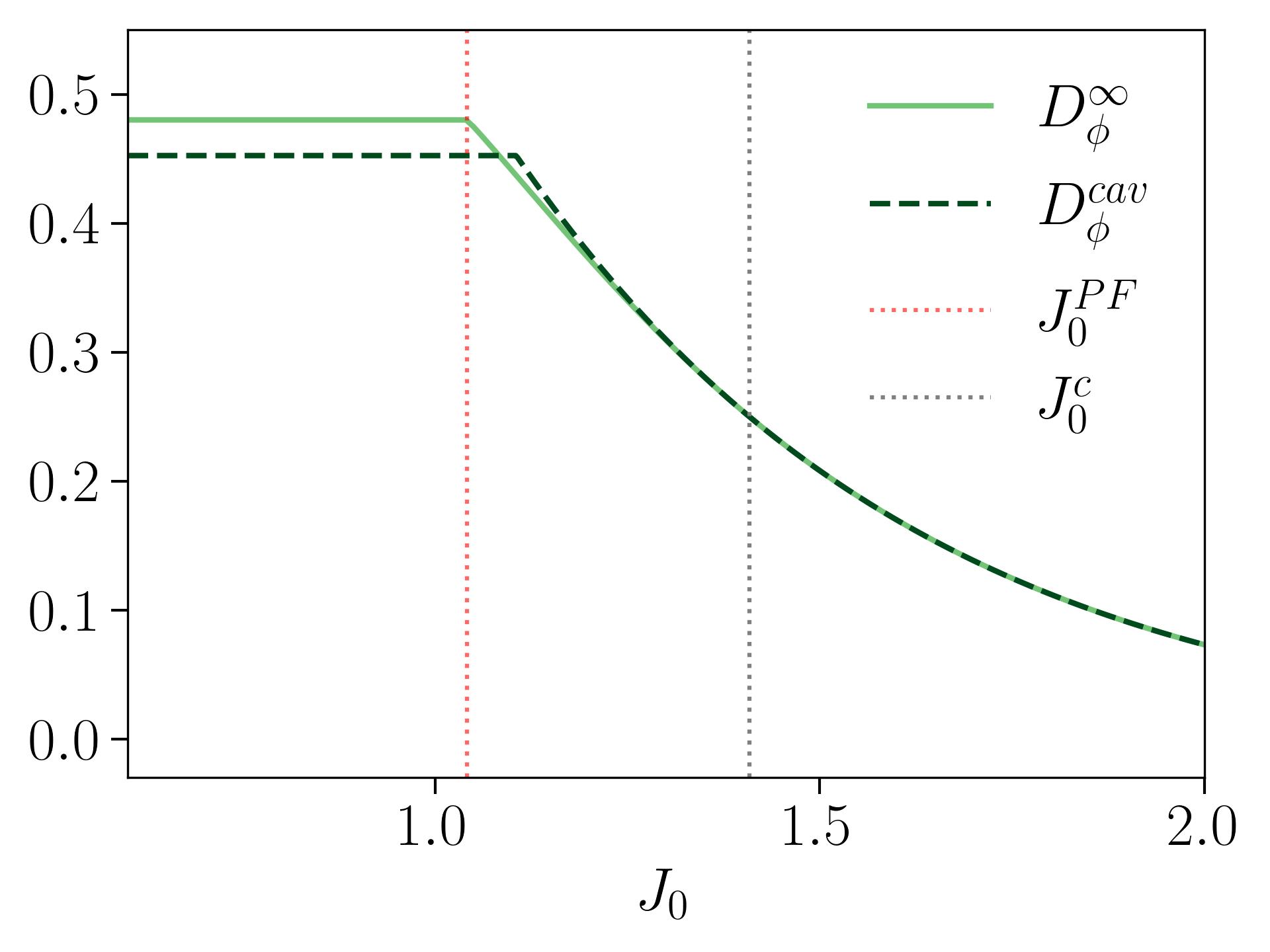}
  \end{subfigure}
  \hfill
  \begin{subfigure}[b]{0.49\textwidth}
    \centering
    \includegraphics[width=\linewidth]{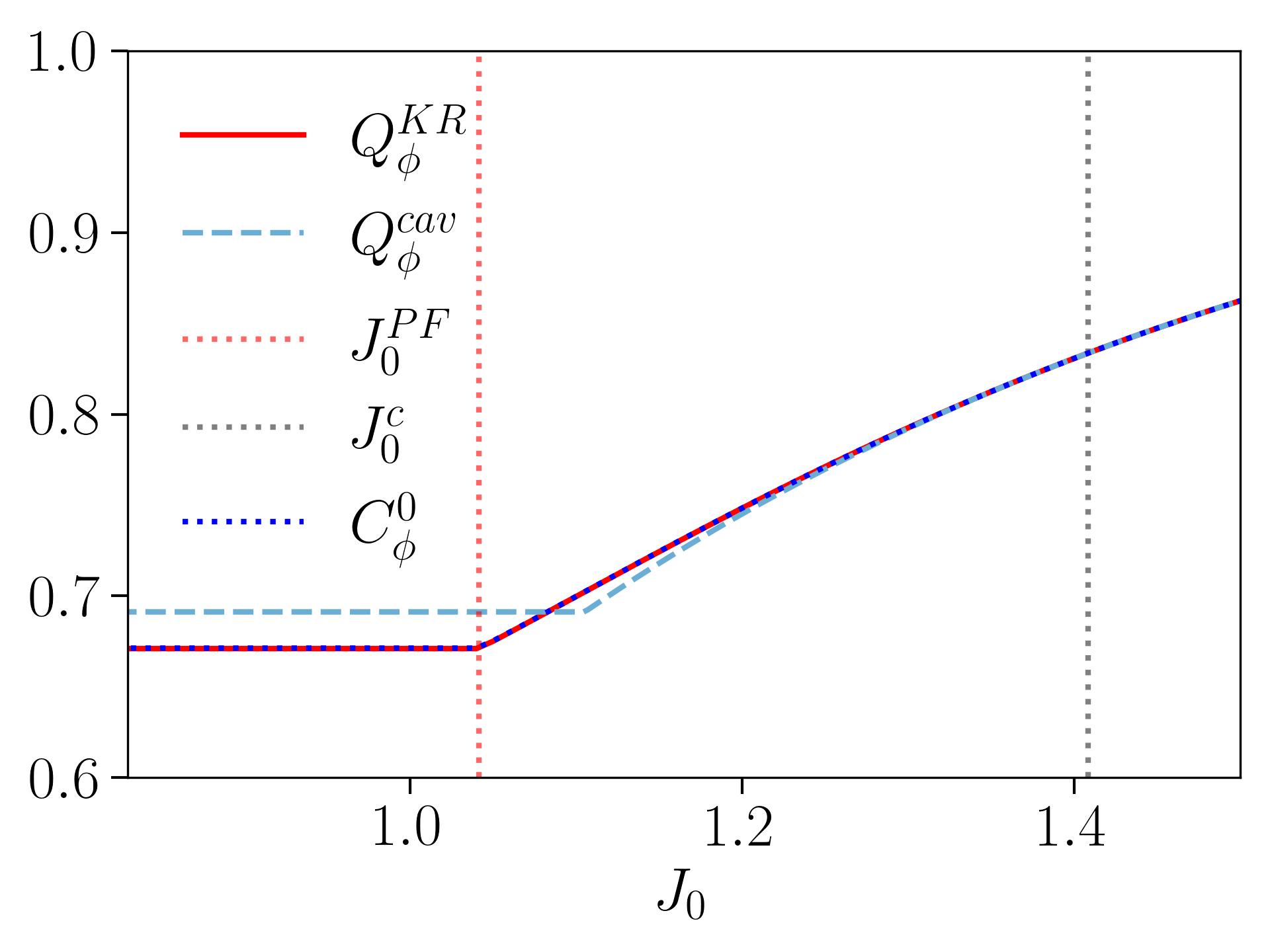}
  \end{subfigure}
  \caption{\textit{Left}.  Plot of $D_\phi^\infty$ versus $D_\phi^{cav}$ for $g=2$ and varying $J_0$ in the three dynamical phases PC,FP,FFP. The two are close but different in the PC and FC phases, and coincide in the FFP phase. In particular, the two curves transition to ferromagnetic values at two different points. \textit{Right}. Still for $g=2$ and varying $J_0$, comparison of $C_\phi^0$, $Q_\phi^{cav}$ and $Q_\phi^{KR}$ (the last one being the value of the Kac-Rice order parameter at the corresponding $D_\phi^\infty$). We see that $C_\phi^0$ and $Q_\phi^{cav}$ are slightly mismatched, whereas $C_\phi^0$ and $Q_\phi^{KR}$ coincide up to the 4-th numerical digit (not seen graphically). Whether they are actually identical requires an analytical solution of the corresponding equations.}
  \label{fig:ferro_scs_comparison_n1}
\end{figure*}

It is interesting to plot these quantities as a function of $g$ and compare with the Kac-Rice. Already in Fig.~\ref{fig:complexity_cav_dmft_3}.(i) we see that the value $D_\phi^\infty$ lies in the middle of the paramagnetic complexity curve, and it does not coincide with $D_\phi^{cav}$. A further plot of the force $\hat{\Gamma}_0$ confirms that the force is non-zero for any $g>1$, showing that the system is being constantly driven out of equilibrium; see Fig.~\ref{fig:para_scs_two_plots} \textit{left} (orange line). In the same plot, we compare the dynamical values of $C_\phi^0(g), C_x^0(g)$ with $Q_\phi(g),q(g)$ optimized at $D_\phi=D_\phi^\infty(g)$. While we see that $C_x^0/g^2$ and $q/g^2$ converge to different values, $C_\phi^0$ and $Q_\phi$ lie on the same curve. A careful numerical comparison reveals that they differ at the 4-th decimal digit. Whether they are actually identical (or just numerically close) at $D_\phi^\infty$ remains a challenging open problem given the difficulty of the equations involved. In Fig.~\ref{fig:para_scs_two_plots} \textit{right} we also investigate the value of the complexity at $D_\phi^\infty(g)$ as $g$ grows. The red curve shows the complexity obtained by optimizing over the order parameters, whereas the green curve shows the complexity evaluated at the order parameters of the DMFT. The red curve remains positive and converges to zero as $g\to\infty$, but the green curve touches zero for $g\approx 70$. Therefore, for $g$ big enough, there are no stationary points in the subspace chosen by the dynamics.

\begin{figure*}[t!]
  \centering
  \begin{subfigure}[b]{0.49\textwidth}
    \centering
    \includegraphics[width=\linewidth]{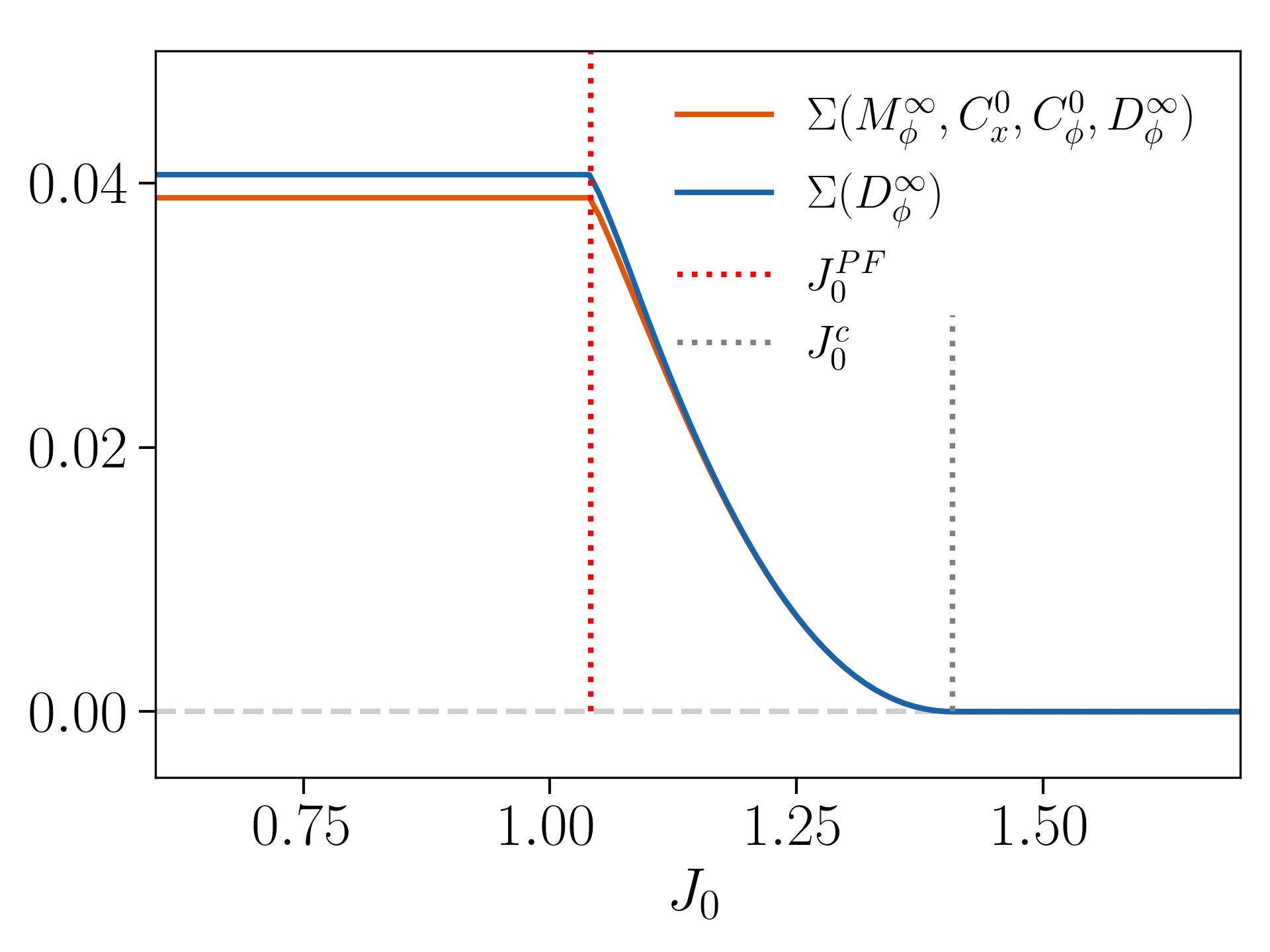}
  \end{subfigure}
  \hfill
  \begin{subfigure}[b]{0.49\textwidth}
    \centering
    \includegraphics[width=\linewidth]{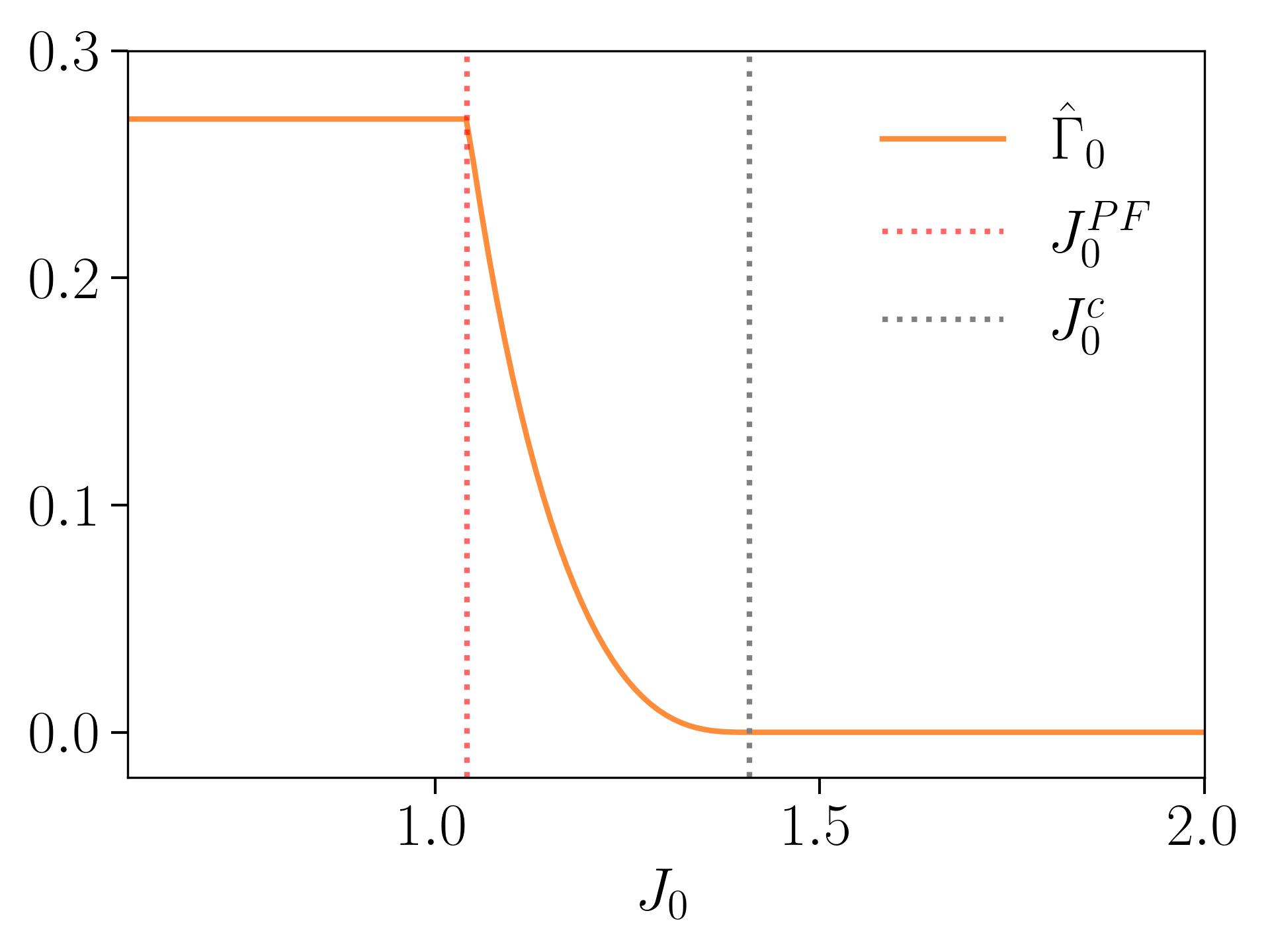}
  \end{subfigure}
  \caption{\textit{Left}. Plot of the complexity at $D_\phi^\infty$. The blue curve is the complexity optimized over its order parameters, while the orange one is evaluated at the dynamical order parameters. \textit{Right}. Plot of the force $\hat{\Gamma}_0$ as a function of $J_0$ for $g=2$, spanning all the three dynamical phases: PC, FC, FFP. }
  \label{fig:ferro_scs_comparison_n2}
\end{figure*}

\subsection{FC phase}
The equations for the order parameters in this phase are cumbersome and we avoid to write them explicitly. However, we can solve numerically the DMFT equations in \eqref{eq:fc_scs_dmft_eqns} and the Kac-Rice equations. In Fig.~\ref{fig:complexity_cav_dmft_3}.(ii) we can see how the ferromagnetic branch (blue) of the complexity bifurcates from the paramagnetic one (red). These ferromagnetic stationary points are the "least unstable" ones, meaning that their $D_\phi$ is the smallest, and thus their number of unstable modes is the lowest among the population of typical equilibria.
 A plausible conjecture would be that at fixed $g$, by decreasing $J_0$, the point where this blue branch of the complexity disappears, call it $J_0^\Sigma(g)$, coincides with $J_0^{PF}$ (the point where the dynamics transitions from PC to FC). That is, one could imagine that the dynamics becomes ferromagnetic chaotic when there start to appear ferromagnetic fixed points. However, Fig.~\ref{fig:dmft_scs_phase_sign} \textit{right} contradicts this hypothesis. The asymptotic value of $D_\phi^\infty$ reached by the dynamics reads:
\begin{align}
    D_\phi^\infty=\frac{1}{2} \left[\text{erf}\left(\frac{1 - g J_0 M_\phi^\infty}{\sqrt{2C_x^0}}\right) + 
   \text{erf}\left(\frac{1 + g J_0 M_\phi^\infty}{\sqrt{2C_x^0}}\right)\right]
\end{align}
and it is plotted in Fig.~\ref{fig:ferro_scs_comparison_n1} \textit{left} as a function of $J_0$ for $g=2$. We also compare it with the $D_\phi^{cav}$ found by extending the solution \eqref{eq:ffp_scs_params} of the FFP point into the chaotic region. Indeed, as we can see from Fig.~\ref{fig:two_plots_cav_dmft_scs} this "cavity point" is the only point where $m=gJ_0M_\phi$ (a critical equality for comparison with the DMFT). However, by looking at Fig.~\ref{fig:complexity_cav_dmft_3}.(ii) and Fig.~\ref{fig:ferro_scs_comparison_n1}\textit{left} we see that in general $D_\phi^{cav}\neq D_\phi^\infty$ (the first being obtained via Kac-Rice, the second one via DMFT). The two curves are close but different, and merge as soon as $J_0>J_0^c$, as we have shown before. Importantly, the driving force $\hat{\Gamma}_0$ is shown in Fig.~\ref{fig:ferro_scs_comparison_n2} \textit{right} for $g=2$ as a function of $J_0$. We see that the force is continuous and monotonically decreasing in $J_0$, until it reaches the value of 0 at $J_0^c$. This fact implies that the force gradually decreases until, for $J_0$ large enough, the system is at an equilibrium. Even if the force being positive implies that the system is out of equilibrium, the system must know of the existence of fixed points that are less and less unstable, as $J_0$ is increased. Indeed, the transitions for all the order parameters from the FC to the FFP phase are continuous, implying that the system gradually converges to the fixed point as $J_0$ increases. However, this analysis shows that a clear link between the dynamics and the distribution of stationary points is not obvious, and further investigation is needed to show how the properties of the phase space shape the dynamical order parameters. Nonetheless, the fact that the DMFT and Kac-Rice parameters are very close within this FC phase (see Figs.~\ref{fig:ferro_scs_comparison_n1} and ~\ref{fig:ferro_scs_comparison_n2}) suggests that these quantities are strongly correlated, although we are not able, with the present analysis, to understand how.

\section{Conclusions and Perspectives}
\label{sec:scs_perspectives}
In this Chapter we have studied the SCS model with an external field (i.e. an excitation of the neurons). We have chosen a specific non-linearity $\phi$ that approximates the usual $\tanh$ used for these models, thus showing that it allows for a computation of the annealed complexity with any asymmetry of the interactions $\alpha\in[0,1]$. Our approach allows to classify the fixed points in terms of several order parameters, including the instability index. We have been able to make careful comparisons between the complexity and the TTI solution of the DMFT for $\alpha=0$, thus revealing that, as for Chapter~\ref{chapter:non_reciprocal}, the DMFT cannot be inferred from the Kac-Rice in the chaotic phases. We have also shown that those connections that appeared in Chapter~\ref{chapter:non_reciprocal} do not hold here anymore. We have explicitly plotted the complexity as $g\to 1^+$, showing that the transition to chaos is concomitant with the appearance of exponentially many unstable fixed points. Like for Chapter~\ref{chapter:non_reciprocal}, we hope that this work with our choice of $\phi$ motivates further research to better understand the connection between the dynamics and the statics for the SCS model. \\

\noindent Several follow-ups arise from this research.
\begin{itemize}
    \item Although we already have the equations of the annealed complexity for any $\alpha$, we used them only for $\alpha=0$, hence it would be interesting to analyze them for any $\alpha$. Is there a critical $\alpha_c$ such that stable fixed points start to appear? If so, what is the relation to the dynamics for $1\geq \alpha>\alpha_c$? To which stable fixed points does the dynamics converge for $\alpha=1$? Does our choice of $\phi$ simplify the numerical integration of the DMFT equations for $\alpha\neq 0$?

    \item We have verified numerically that the complexity scales quadratically at the PFP-PC transition (that is, as $g\to 1^+$), but it would be interesting to find this result analytically. 

    \item We were able to obtain the equations for the quenched complexity within the Replica Symmetric hypothesis, but we didn't solve them numerically, which is an interesting future problem.

    \item In order to compute the quenched complexity (see Appendix) we had to first compute the $n-$th moment of the number of fixed points; $\mathbb{E}[\mathcal{N}^n]$. By considering the case $n=2$ we forsee the possibility to use a second moment approach to show that quenched and annealed results coincide within the paramagnetic region for $\alpha=0$.

    \item It would be interesting to study the scaling of the complexity and the maximal Lyapunov exponent across the FFP-FC transition. 

    \item For the present analysis, we have only classified fixed points as a function of their extensive instability index. Moreover, we have not analyzed the conditioning on the Jacobian at the fixed points, which might induce external eigenvalues in the spectrum. Extending the present analysis to account for a full analysis of the Jacobian remains an open challenge. 
\end{itemize}

\chapter{Energy landscapes}
\label{chapter:energy_landscapes}
In this Chapter we consider the pure spherical $p$-spin model, introduced in detail in Chapter~\ref{chapter:intro}. In particular, we concentrate on Kac-Rice-based methods to probe the typical properties of the landscape, including distribution of stationary points and barriers between them. We propose two approaches. The first approach (see Ref.~\cite{pacco2024curvature}) consists in determining the typical energy profile along specific paths between local minima. This problem also requires the study of overlaps between eigenvectors of spiked, correlated random matrices; treated in Chapter~\ref{chapter:rmt_}.  The second one (see Refs.~\cite{pacco_quenched_triplets_2025, pacco_triplets_2025}) consists in computing the "three-point complexity", namely the distribution of triplets of fixed points, extracted conditionally one after the other in the energy landscape. Many of the calculations are lengthy and not reported here; the interested reader can refer to the Appendices of Refs.~\cite{paccoros, pacco_triplets_2025, pacco_quenched_triplets_2025} for details (some notations might differ). \\

\noindent \textit{Road-map}\\
In Sec.~\ref{sec:ener_land_intro} we introduce and motivate our research. In Sec.~\ref{sec:TwoPoint} we explain the main ingredients of the previously computed two-point complexity. In Sec.~\ref{sec:curvature_driven_paths} we show how we compute energy barriers by selecting specific pathways. In Sec.~\ref{sec:en_land_three_point} we introduce and explain the three-point complexity, by mainly concentrating on some types of "landscape transitions" that we observe. In Sec.~\ref{sec:en_land_actv_dyn} we show how our results can give insights of the activated dynamics, and we conclude in Sec.~\ref{sec:en_land_perspectives} with perspectives.\\

\noindent \textit{Acknowledgments}\\
The two works presented in this Chapter are the result of collaborations with Valentina Ros, Alberto Rosso and Giulio Biroli. I thank them very much for stimulating discussions. I especially thank Valentina and Alberto for their ideas and constant support during these works.

\section{Introduction}
\label{sec:ener_land_intro}
In this Chapter we further analyze the energy landscape of the pure spherical $p$-spin model described in Sec.~\ref{sec:p_spin_model}. We have seen that it represents a prototypical glassy model, with a rough energy landscape below a threshold energy density $\epsilon_{th}$, reviewed also below. This models has been extensively studied in terms of its thermodynamics \cite{kurchan_barriers_93,Crisanti_TAP_pspin_95, crisanti1992sphericalp, franz1995recipes, Barrat_bifurcation_95, cavagna1997structure, cavagna1997investigation}, its energy landscape \cite{cavagna1998stationary, leuzzi2003complexity, ros2019complex, ros2019complexity, auffinger_complexity_2013, Cavagna_saddles_2001} and its dynamics \cite{sompo_zipp_dyn_phase_1981, cugliandolo2011effective, franz2013quasi, cugliandolo1993analytical, Cugliandolo_1995_weak, barrat_dynmeta_96, cugliandolo2002dynamics, ros2021dynamical}. However, the dynamical equations solved within the framework of DMFT (see Appendix~\ref{app:dynamical_calculations} for the equations) assume that $N\to\infty$ before $t,t'\to\infty$. This, however, rules out crossing over barriers that scale with $N$, associated with times that scale exponentially with $N$. Indeed, the picture of the out-of-equilibrium dynamics of this model is that quenching the system from infinite temperature (that is, a random initial condition) to any $T\leq T_d$ leads to a weak ergodicity breaking scenario \cite{bouchaud1992weak, Cugliandolo_1995_weak, cugliandolo1993analytical, bouchaud1998out}, where the system ages forever, never going below the energy threshold. How to characterize the dynamics for large but finite $N$ below $T_d$ is an open problem. In particular, what are the barriers between local minima (that is, for energies in $(\epsilon_{gs}, \epsilon_{th})$) ? If we initialize (or \textit{plant} \cite{krzakala_planting_2009}) the system in a deep minimum, how will the system restore ergodicity ? To which minima will it go ? A detailed analysis of the landscape in the vicinity of deep local minima \cite{ros2019complexity, cavagna1997investigation} reveals that there is an overlap gap between the reference minimum and the closest fixed points, which are rank-1 saddles, meaning that, typically, no fixed points are found at closer distances. We review this calculation in Sec.~\ref{sec:TwoPoint}. In particular, thanks to this result, one can find the smallest energy barrier to the closest rank-1 saddle. The minima that are reached by activated barrier crossing over these rank-1 saddles have been analyzed in \cite{ros2021dynamical}, by combining Kac-Rice and dynamical methods. It was found that minima have much higher energy densities than the reference one, and are correlated with it. By drawing analogies to finite-size simulations in Ising $p$-spin models \cite{stariolo2019activated, stariolo2020barriers}, the authors argue that these saddles would matter in the earlier times of the dynamics, and that escaping through them the system would undergo a back and forth motion with frequent returns to the original minimum. However, this dynamics only accounts for jumps between minima that are connected by the typical rank-1 saddles (see Fig.~\ref{fig:2_point_phase_diag}). It thus remains an open question to understand how the system escapes from these high-energetic nearby saddles and decorrelates from the reference minimum. Overcoming previous works \cite{lopatin1999instantons, lopatin_barriers_2000}, a recent article by Rizzo \cite{rizzo2021path} considers a dynamical theory to compute the exponentially small probability of a jump from one metastable state to another, by fixing the initial and final conditions. By focusing on the exponentially small probability that the system jumps to another equilibrium state
in a finite time, the author shows that an intermediate jump
to one of the exponentially many metastable states provides a more efficient path for restoring ergodicity than a direct jump to another equilibrium state. Moreover a plot from the same article indicates that the dynamical path from two states at equal energy goes well above threshold. An even newer strategy was adopted in the most recent work \cite{folena_rare_2025}, where the authors consider a dynamical potential reminiscent of the Franz-Parisi potential in the statics, see Sec.~\ref{sec:franz_parisi_analysis}. This is done by randomly selecting an equilibrium reference configuration, and then constraining its dynamical evolution so that at time $t$ it has overlap $q$ from the original one. They still send $N\to\infty$ first, and then extract information from the large deviation function (i.e. the potential). Quite importantly, they validate, via dynamical analysis, static predictions regarding the basin of attraction of metastable states. The basin is convex for large (enough) overlaps, and then fibered as the overlap decreases, up to a point where the system exits the basin of attraction and the dynamics becomes irreversible. They suggest that the dynamics for $T_s<T<T_d$ (see Sec.~\ref{sec:tap_approach}) starting from a deep minimum would manage to escape through the closest rank-1 saddles
and temporarily get trapped in some high energy states (denoted as \textit{hubs}) before relaxing again to other equilibrium
states, and likely exploring above threshold regions before. \\

\noindent Here, we use purely static approaches to tackle questions related to the activated dynamics and barrier crossing in energy landscapes. These approaches are justified by the fact that metastable states and local minima of the pure spherical $p$-spin are in one-to-one correspondence (see Sec.~\ref{sec:tap_approach}). Since we look for geometric properties of the landscape, no timescales are involved in our analyses. We use two approaches introduced below.

\subsubsection{Approach 1: curvature-driven pathways.} Characterizing the profile of the landscape along pathways connecting different minima is an ubiquitous problem in many complex systems. These can include energy or fitness functions \cite{wales1998archetypal,jonsson1998nudged, mauri2022mutational, tian2020exploring}, cost functions optimized by algorithms \cite{draxler2018essentially,freeman2016topology,annesi2023star,garipov2018loss} (where the landscape might be characterized by flat minima separated by low barriers \cite{baity2018comparing}) and glasses, where the height of the typical effective barriers crossed during the dynamics increases at low temperatures, thus leading to super-Arrhenius behaviors \cite{debenedetti2001supercooled, biroli_berthier_review_2010}. In Sec.~\ref{sec:curvature_driven_paths} we will consider pairs of stationary points below $\epsilon_{th}$, and compute the typical energy density profile along a geodesic, determining how the energy barriers depend on the energies and the overlap of the two points. We will then compare with "perturbed" geodesics, which follow directions correlated to the landscape curvature around one of the two stationary points. Our goal is to see when the information encoded in the local Hessian allows to lower the energetic barrier associated to the unperturbed path. Such question is motivated by studies of finite-dimensional systems of jammed and mildly supercooled particles, where the softest Hessian mode at a given configuration is associated to low energy barriers for
particle rearrangements \cite{xu2010anharmonic, widmer2008irreversible}. 
For the $p-$spin model, instead, we show here that the smallest Hessian eigenvalue at the starting minimum is not a
predictor of paths with lower energy barriers, except in the case in which the local Hessian has an isolated mode at the arrival point.
However, we show that having access to the whole local Hessian in general allows to identify pathways associated to
lower energy barriers.

\subsubsection{Approach 2: three-point complexity.} This method extends upon the two-point complexity of Ref.~\cite{ros2019complexity}, by considering the distribution of triplets of stationary points, the first one being a deep minimum, and the second and third one either local minima or rank-1 saddles. This study is motivated by works on disordered elastic interfaces and interacting particles, for which recent work has provided significant insight into their low temperature dynamics~\cite{ninarello2017models, ferrero2017spatiotemporal,liu2018creep}. One of the most striking observations made in recent years, both in experiments~\cite{durin2024earthquakelike, korchinski_thermal_2025} and in numerical simulations~\cite{Rosso_review_2021, scalliet2022thirty,tahaei2023scaling, de2024dynamical}, are {\em thermal avalanches}, i.e. the occurrence of a cascade of smaller activations following a slow activated nucleation. We wanted to look for precursors of thermal avalanches in the context of mean-field models, of which the prototypical model is the pure $p$-spin. In this work, we \textit{interpret} thermal avalanches as activated jumps between nearby fixed points: after a first jump from a deep local minimum, the system makes smaller jumps at higher energies, associated to smaller rearrangements. Concerning the energy landscape, we see this in terms of \textit{clustering} of fixed points: the first jump leads to a region of the landscape where fixed points cluster close to each other, and making jumps to nearby fixed points should be easier. In particular, this clustering is a consequence of coming from the reference minimum, and thus a signature of strong correlations within the landscape around it. In the pure $p$-spin model we see that signatures of clustering are present only at higher energies, meaning that, if the reference minium has energy $\epsilon_{gs}<\epsilon_0<\epsilon_{th}$, then clustering can happen  when the first jump is at energy density higher than a certain $\epsilon_0$ dependent value (still below threshold). In particular, we found no clustering for sequences of minima that are at equal energies. In this sense, in the pure $p$-spin a sequence of jumps at equal energy densities below threshold is "memoryless", as we shall see in Sec.~\ref{subsec:defs_clustering} and Sec.~\ref{sec:LandscapeEvolution}.


\subsubsection{The p-spin model: recap on the complexity}
We have already shown the derivation of the complexity of the pure spherical $p$-spin model in Chapter~\ref{chapter:intro}. Here let us just recall that this model is defined by an energy landscape (i.e. Hamiltonian) which is a random Gaussian field $\mathcal{E}$ defined on the hypersphere of radius $\sqrt{N}$ in $\mathbb{R}^N$. It is assumed that $\mathcal{E}$ has zero-mean and variance given by 
\begin{align}
\mathbb{E}[\mathcal{E}({\bf s}_0)\mathcal{E}({\bf s}_1)]=\frac{N}{2}\left(\frac{{\bf s}_0\cdot{\bf s}_1}{N}\right)^p.
\end{align}
The complexity is defined as
\begin{align}
    \Sigma(\epsilon_0)=\lim_{N\to\infty}\frac{\log\mathcal{N}(\epsilon_0)}{N}
\end{align}
with 
\begin{equation}\label{eq:Norma1}
    \begin{split}
&\mathcal{N}(\epsilon_0)= \int_{\mathcal{S}_N} d {\bf s}_0  \,\omega_{\epsilon_0}({\bf s}_0)
    \end{split}
\end{equation}
and 
\begin{equation}\label{eq:Measure0}
 \omega_{\epsilon_0}({\bf s}_0)=|\det \nabla^2_\perp \mathcal{E}({\bf s}_0)|\delta(\nabla_\perp \mathcal{E}({\bf s}_0))\delta(\mathcal{E}({\bf s}_0)-N\epsilon_0).
\end{equation}
The subscript $\perp$ indicates that the gradient and Hessian are Riemannian, that is, restricted on the hypersphere; see Chapter~\ref{chapter:intro}. The quantity $\Sigma(\epsilon_0)$ can be referred to as a "one-point" complexity since it counts the number of stationary points extracted from the landscape, without any further constraints.  In addition, we are using the subscript $0$ for quantities that represent this "one-point" complexity. Indeed, as we will see below, one can compute a "two-point" and even a "three-point" complexity, that is, the complexity of stationary points restricted to be at a certain distance (overlap) from previously extracted stationary points. In the context of spin glasses, these are generally referred to as "real replicas" \cite{franz1995recipes, franz1998effective, cavagna1997structure}, or \textit{primary} and \textit{secondary} (and eventually, \textit{tertiary}) \cite{Barrat_bifurcation_95}.

\noindent We recall, moreover, that the most typical stationary points, counted by $\Sigma$, are local minima for $\epsilon<\epsilon_{th}$, marginal minima for $\epsilon=\epsilon_{th}$, and saddles for $\epsilon>\epsilon_{th}$; see Sec.~\ref{sec:p_spin_model} for details.

\section{Previous results: the two-point complexity}
\label{sec:TwoPoint}
Here we recap the findings of Ref.~\cite{ros2019complexity}, please skip to Sec.~\ref{sec:curvature_driven_paths} for direct results of this Chapter.\\

\noindent The geometry of the landscape in the vicinity of a local minimum can be described~\cite{cavagna1997investigation,ros2019complexity,ros2020distribution} by computing a two-point complexity $\Sigma^{(2)}(\epsilon_1, q |\epsilon_0)$. This is done by extracting first a typical minimum of the landscape at energy $\epsilon_0$, and then by analyzing the statistics (in terms of overlap $q$ and energy $\epsilon_1$) of stationary points extracted upon conditioning to the first one, see Fig.~\ref{fig:landscape_two_point} for a visualization. In formula, this constrained complexity is defined as:
\begin{align}
\label{eq:quenchedcomp2}
\Sigma^{(2)}(\epsilon_1,q|\epsilon_0):=\lim_{N\to\infty}\frac{1}{N}\mathbb{E}\left[\log\mathcal{N}_{{\bf s}_0}(\epsilon_1,q|\epsilon_0)\right]_0.
\end{align}
Here $\mathcal{N}_{{\bf s}_0}(\epsilon_1,q|\epsilon_0)$ 
is the number of stationary points  
 ${\bf s}_1$ of energy $\epsilon_1$, that are at overlap $q$ with another stationary point ${\bf s}_0$, and reads:
 \begin{align}
 \label{eq:en_land_N_s0}
    \mathcal{N}_{{\bf s}_0}(\epsilon_1,q|\epsilon_0):=\int_{\mathcal{S}_N(\sqrt{N})}d{\bf s}_1\,\omega_{\epsilon_1,q}({\bf s}_1|{\bf s}_0)
 \end{align}
 with 
\begin{equation}\label{eq:Measure1}
  \begin{split}
&\omega_{\epsilon_1,q}({\bf s}_1|{\bf s}_0):=|\det \nabla^2_\perp \mathcal{E}({\bf s}_1)|\delta(\nabla_\perp \mathcal{E}({\bf s}_1))\,\delta(\mathcal{E}({\bf s}_1)-N\epsilon_1)\delta({\bf s}_1 \cdot {\bf s}_0-N q).
\end{split}
\end{equation}

 \begin{figure*}[t!]
    \centering
    \includegraphics[width=0.42\textwidth, trim={2 2 2 2},clip]{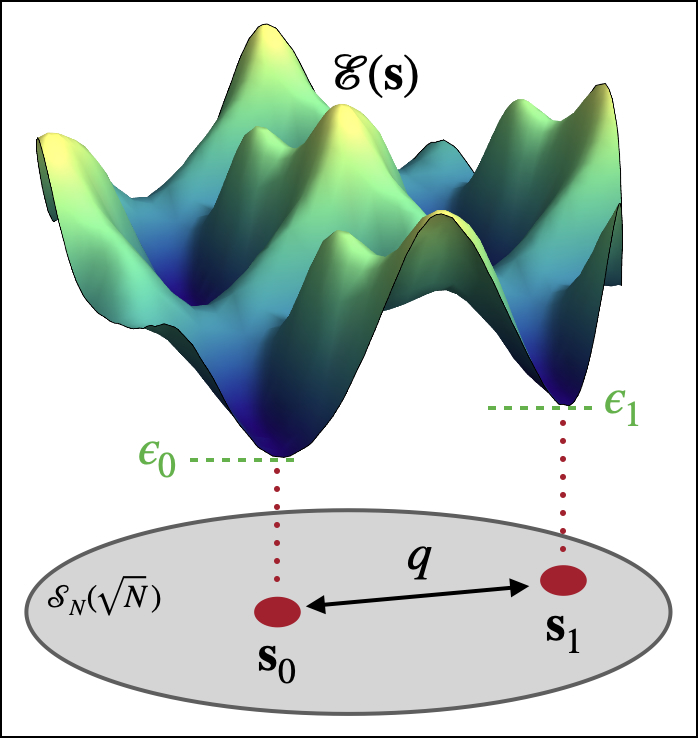}
    \caption{Artistic representation of a landscape, to show visually the primary and the secondary configurations in the calculation of the two-point complexity. }
    \label{fig:landscape_two_point}
\end{figure*}

 \noindent The average $\mathbb{E} \left[\cdot \right]_0$ denotes both a flat average over all stationary points ${\bf s}_0$ with energy $\epsilon_0$ at fixed realization of the landscape, and over the realizations of the landscape. More precisely:
 \begin{equation}
     \mathbb{E} \left[\cdot \right]_0:= \mathbb{E}\left[ \frac{1}{\mathcal{N}(\epsilon_0)} \int_{\mathcal{S}_N(\sqrt{N})} d{\bf s}_0 \, \omega_{\epsilon_0}({\bf s}_0) \, \cdot \; \right],
 \end{equation}
 where $ \omega_{\epsilon_0}({\bf s}_0)$, defined in Eq.~\eqref{eq:Measure0}, is the measure selecting configurations ${\bf s}_0$ that are stationary points of energy density $\epsilon_0$. The explicit expression of the two-point complexity can be found in \cite{ros2019complexity}, and it reads: 
\begin{equation}\label{eq:FinalComplexityPostSaddle}
\Sigma^{(2)}(\epsilon_1,q|\epsilon_0) =   \frac{Q(q)}{2}-\xi(\epsilon_0,\epsilon_1, q)+ I\tonde{\epsilon_1 \sqrt{\frac{p}{p-1}}} 
 \end{equation}
 where: 
\begin{equation*}
 \begin{split}
           &Q(q)=  1+\log \tonde{\frac{2(p-1)(1-q^2)}{1-q^{2p-2}}},\\
           &\xi(\epsilon_0,\epsilon_1, q)= \epsilon_0^2 U_0(q)+ \epsilon_0 \epsilon_1 U(q)+ \epsilon^2_1 U_1(q)
            \end{split}
\end{equation*}
 with 
\begin{equation}\label{eq:UC}
 \begin{split}
  U_0(q)&=\frac{q^{2 p} [p q^2-q^4(p-1)-q^{2 p}]}{\mathcal{A}(q)},\\
  U(q)&=\frac{2 q^{3 p} \left(p \left(q^2-1\right)+1\right)-2 q^{p+4}}{\mathcal{A}(q)},\\
  U_1(q)&=\frac{q^4-q^{2 p} - p q^{2 p}  [(p-1) q^4+(3-2 p) q^2+p-2]}{\mathcal{A}(q)},\\
  \mathcal{A}(q)&=q^{4 p}-q^{2 p}[(p-1)^2 (1+q^4)-2 (p-2) p q^2] +q^4,
 \end{split}
\end{equation}
and where 
the symmetric function $I(y)$ was already defined in Chapter~\ref{chapter:intro}, see Eq.~\eqref{eq:IDef}. \\

 \begin{figure*}[t!]
    \centering
    \includegraphics[width=0.74\textwidth, trim={2 2 2 2},clip]{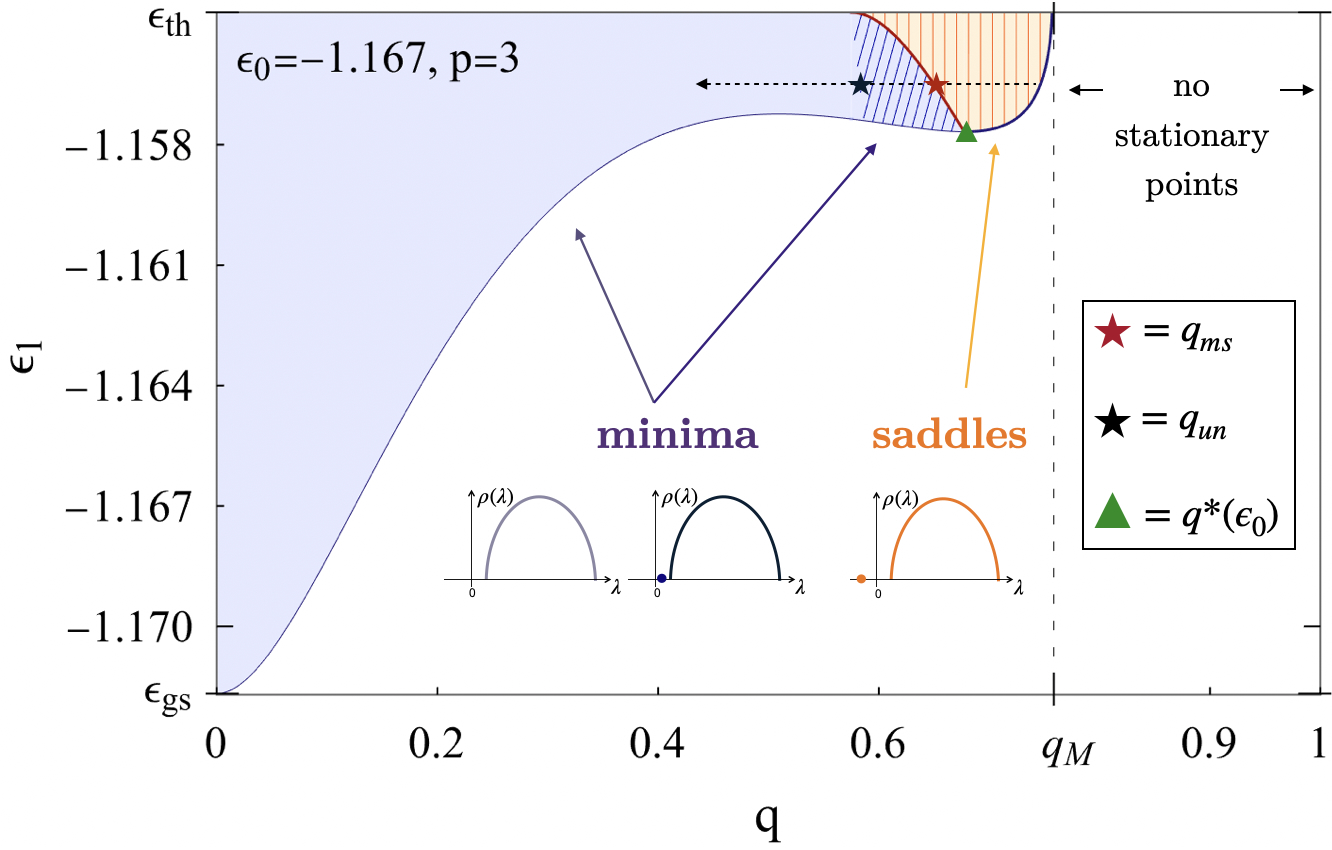}
    \caption{The plot shows in color the region in the plane $(q, \epsilon_1)$ where the two-point complexity is positive, for $\epsilon_0=-1.167$ with $p=3$. The blue leftmost area corresponds to minima, the dashed blue zone to minima with an isolated eigenvalue in the Hessian spectrum, the rightmost (dashed) yellow zone to rank-1 saddles. The stars mark the transitions in the properties of the Hessians of the stationary points, for fixed $\epsilon_1$. The red star indicates $q_{ms}$, i.e. when minima become saddles. The black star indicates $q_{un}$, i.e. when minima become correlated. In the white area typically no fixed points are found. The green triangle indicates the value $q^*(\epsilon_0)$ at energy $\epsilon^*(\epsilon_0)$, which is the first energy for which spikes are present in the spectrum of the Hessian at ${\bf s}_1$.}
    \label{fig:2_point_phase_diag}
\end{figure*}

\noindent The results of this calculation are summarized in Fig.~\ref{fig:2_point_phase_diag} for a representative value of $\epsilon_0 < \epsilon_{\rm th}$ and $p=3$. The colored region in the figure identifies the values of $q, \epsilon_1$ for which the function \eqref{eq:FinalComplexityPostSaddle} is positive (the plot is cutoff at $\epsilon_1=\epsilon_{\rm th}$, since at $\epsilon_1>\epsilon_{\rm th}$ the landscape is dominated by saddles with large index; this portion of the landscape is easily explored by relaxational dynamics, and it is therefore not of interest for our analysis). For $q=0$, the range of energy density is maximal, and extends down to the ground state energy $\epsilon_{\rm gs}$: at $q=0$ one has the largest two-points complexity, meaning that most of the stationary points of the landscape are at zero overlap with the reference one at energy $\epsilon_0$. In fact,  one has  
\begin{equation}
    \lim_{q \to 0}\Sigma^{(2)}(\epsilon_1,q|\epsilon_0)= \Sigma(\epsilon_1),
\end{equation}
meaning that one recovers the expression of the unconstrained complexity counting the number of minima irrespective of their location in configuration space. This is also intuitive, since by isotropy two vectors uniformly extracted from the hypersphere have a typical overlap of $0$ for $N>>1$. When $q$ increases, the range of energies at which one finds a positive complexity first decreases, and then increases again at the larger values of $q$, reaching a local maximum at a given $q^*(\epsilon_0)$, see Fig.~\ref{fig:2_point_phase_diag}. The maximal $q$ at which one finds stationary points at energy below the threshold one is $q=q_M(\epsilon_0)$. For each value of $\epsilon_1$, one can define the maximal overlap at which stationary points of that energy density are found: this is denoted with 
\begin{equation}\label{eq:qMAx}
q_M(\epsilon_1|\epsilon_0) := \text{max  } q \text{  such that } \Sigma^{(2)}(\epsilon_1,q|\epsilon_0) \geq 0.
\end{equation}

The different regions in Fig.~\ref{fig:2_point_phase_diag} are related to the linear stability of the stationary points found at those values of $q, \epsilon_1$; the linear stability is described by the spectrum of the Hessian matrices $\nabla^2_\perp \mathcal{E}$ at the stationary points, whose statistical properties are recalled in Sec.~\ref{app:AnnealedDistribution} in the annealed setting. A peculiar property found in~\cite{ros2019complexity} is that the \textit{quenched} and \textit{annealed} two-point complexities coincide, where the Replica Symmetric (RS) overlap between two replicas takes the value $q^2$ at the saddle point. An important consequence that they find is that the spectral properties of the Hessians in the annealed setting match those of the quenched.\\

\noindent The blue region in Fig.~\ref{fig:2_point_phase_diag} corresponds to stationary points ${\bf s}_1$ whose Hessian has all eigenvalues positive: these points are thus local minima of the landscape. The blue hatched area corresponds to minima whose Hessian has a single mode that is detached from the rest of the eigenvalues distribution (it is an isolated eigenvalue), that is smaller and whose eigenvector is partially aligned in the direction of the reference minimum of energy $\epsilon_0$; these minima thus display a softest curvature in the direction of the reference minimum, and we call them {\it correlated minima} to emphasize that their Hessians displays correlations with ${\bf s}_0$. Finally, the yellow area corresponds to rank-1 saddles, with one single Hessian mode that is negative and correlated with the direction of the reference minimum. These saddles are geometrically connected to the minimum, but also dynamically, meaning that the dynamics starting from the saddle relaxes to the local minimum~\cite{ros2021dynamical}. For each $\epsilon_1$, we denote with $q_{\rm un}(\epsilon_1|\epsilon_0)$ (where the subscript  “un" stands for  “uncorrelated") and  $q_{\rm ms}(\epsilon_1|\epsilon_0)$ (where the subscript  “ms" stands for  “minima-to-saddles") the overlaps at which the corresponding transitions occur, see Fig.~\ref{fig:2_point_phase_diag}. Remark also the overlap $q^*(\epsilon_0)$ in the figure, which is associated to an energy density $\epsilon^*(\epsilon_0)$, that plays a crucial role in our following discussions. Fig.~\ref{fig:2_point_phase_diag} shows that this is also the critical energy above which rank-1 saddles and correlated minima appear in the landscape in the vicinity of the reference minimum ${\bf s}_0$. How this critical energy depends on $\epsilon_0$ is illustrated in Fig~\ref{fig:enland_epsilon_star} for $p=3$. \\

\begin{figure}[t!]
\centering
\includegraphics[width=0.62
\textwidth, trim=5 5 5 5,clip]{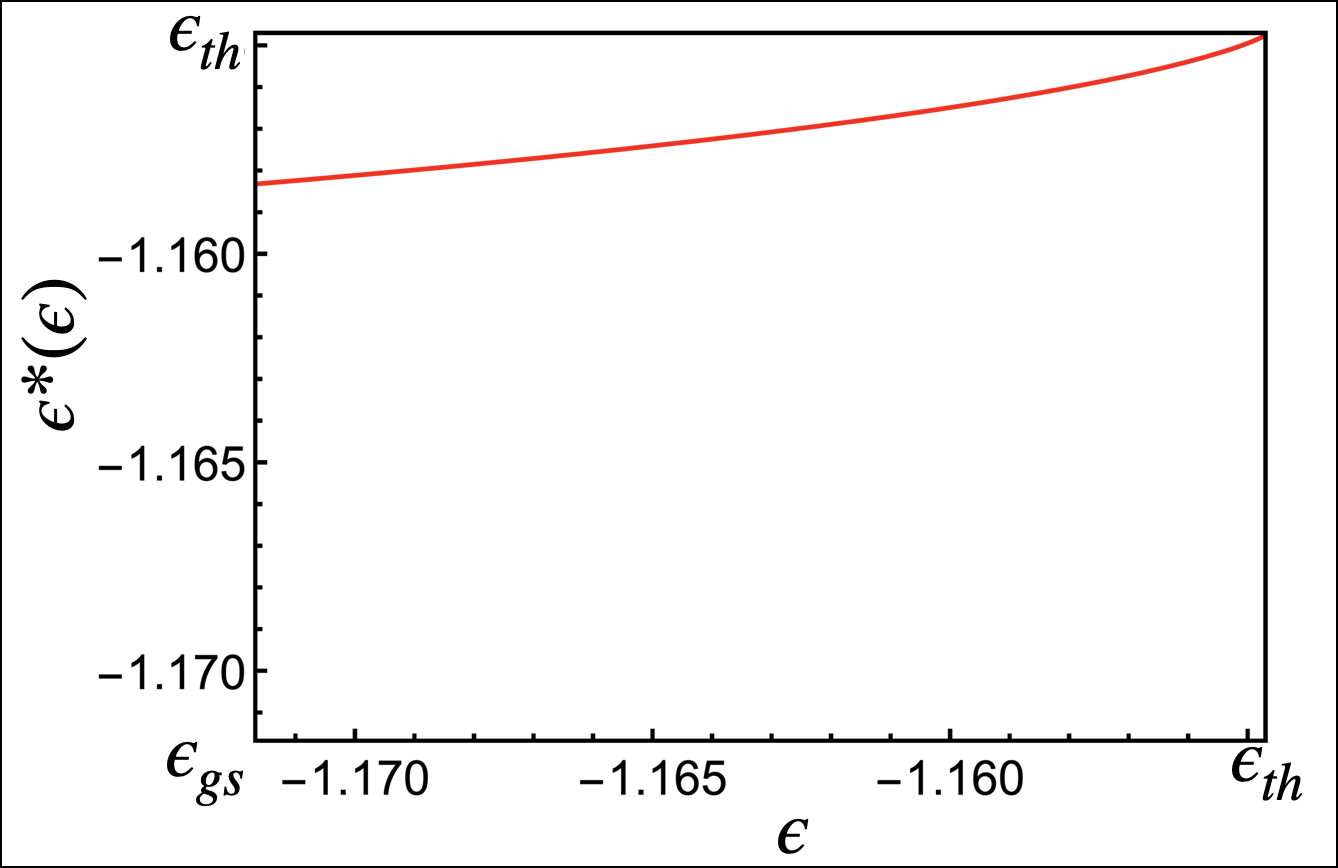}
\caption{Plot of the dependence of the energy $\epsilon^*(\epsilon)$ (energy corresponding to the green triangle in Fig.~\ref{fig:2_point_phase_diag}) for $p=3$.}
\label{fig:enland_epsilon_star}
\end{figure}

\subsection{Setting of the two-point complexity}
\label{sec:en_land_settin_two_point}
Let us give a brief summary of the setting in which the  two-point complexity is computed. Indeed, for our next discussion in Sec.~\ref{sec:curvature_driven_paths} on interpolating paths between local minima we need to have this setting under control. 

\subsubsection{Bases and tangent planes}
Like in Chapter~\ref{chapter:intro}, we work with the rescaled field on $\mathcal{S}_N(1)$ (the unit radius hypersphere in $\mathbb{R}^N$), defined as $h({\bm\sigma})=\sqrt{2/N}\,\,\mathcal{E}(\sqrt{N}{\bm\sigma})$. The general notation that we are trying to keep is that ${\bf s}_a$ refers to a configuration in $\mathcal{S}_N(\sqrt{N})$, whereas its corresponding ${\bm\sigma}_a={\bf s}_a/\sqrt{N}$ lives in $\mathcal{S}_N(1)$. Now, consider also the tangent plane $\tau[{\bm\sigma}]$ and local orthonormal basis $\mathcal{B}[{\bm\sigma}]$ as defined in Sec.~\ref{sec:chap1_topo_compl} for a generic configuration ${\bm\sigma}$. \\

\noindent The tangent plane $\tau[{\bm\sigma}]$, and therefore its basis vectors $\mathbf{e}_{\alpha}({\bm\sigma})$, depend on the particular point in configuration space that one is looking at. For two different configurations, ${\bm \sigma}_0$ and  ${\bm \sigma}_1$ at overlap $q={\bm \sigma}_0\cdot {\bm \sigma}_1$, it is convenient to choose the bases on the tangent planes $\tau[{\bm \sigma}_a]$, with $a=0,1$, as follows: first, one can always choose an orthonormal basis ${\bf x}_i$
in the $N$-dimensional space $\mathbb{R}^N$ in such a way that the first $N-2$ vectors ${\bf x}_1, \cdots, {\bf x}_{N-2}$ are orthogonal to both ${\bm \sigma}_0$ and ${\bm \sigma}_1$. These vectors belong to both tangent planes (because they are orthogonal to both the configuration vectors), and therefore one can choose ${\bf e}_i({\bm \sigma}_a):= {\bf x}_i$ for all $i=1, \cdots, N-2$. Concretely, in the ${\bf x}_i$ basis one can set ${\bm \sigma}_0=(0, 0, \cdots, 0,1)$ and ${\bm \sigma}_1=(0,0, \cdots, -\sqrt{1-q^2},q)$. Then, in the tangent plane $\tau[{\bm \sigma}_0]$ there remains a single  basis vector to be chosen, which will have a non-zero projection with ${\bm \sigma}_1$:
\begin{align}
\label{eq:def_basis_0}
 {\bf e}_{N-1}({\bm \sigma}_0):=\frac{q {\bm \sigma}_0-{\bm \sigma}_1}{\sqrt{1-q^2}}.
\end{align}
It has unit norm and it is orthogonal to all the others, since ${\bf  x}_i \perp {\bm \sigma}_a$ for any $i$ and $a=0,1$. The vector ${\bm \sigma}_0$ then completes the local basis: $\mathcal{B}[{\bm\sigma}_0]=\tau[{\bm\sigma}_0]\oplus\text{Span}({\bm\sigma}_0)$. With the choice above, $ {\bf e}_{N-1}({\bm \sigma}_0)=(0,0, \cdots, 1, 0)$ in the ${\bf x}_i$ basis. Similarly, $\tau[{\bm \sigma}_1]$ is spanned by ${\bf x}_1, \cdots, {\bf x}_{N-2}$ plus the vector ${\bf e}_{N-1}({\bm \sigma}_1)$ defined similarly as
\begin{align*}
 {\bf e}_{N-1}({\bm \sigma}_1):=\frac{q {\bm \sigma}_1-{\bm \sigma}_0}{\sqrt{1-q^2}}.
\end{align*}

Again, with the choice above we get $ {\bf e}_{N-1}({\bm \sigma}_1)=(0,0, \cdots, -q, -\sqrt{1-q^2})$.
In summary, we can choose ${\bf e}_\alpha ({\bm \sigma}_0)={\bf e}_\alpha ({\bm \sigma}_1)$  with  $\alpha=1, \cdots, N-2$ as a basis of the subspace  $\text{Span}\{{\bf x}_\alpha\}$, while the remaining basis vectors of the tangent planes are ${\bf e}_{N-1} ({\bm \sigma}_0)$ and ${\bf e}_{N-1} ({\bm \sigma}_1)$ respectively. Moreover, for practical purposes one can consider the $i-$th basis element of $\mathcal{B}[{\bm \sigma}_0]$ to be the vector with a 1 in the $i$-th slot and 0 elsewhere, and one can then express the elements of $\mathcal{B}[{\bm \sigma}_1]$ in that basis, as shown above. Let us refer to Sec.~\ref{sec:chap1_topo_compl} for the definitions of ${\bf g}$ and $\nabla_\perp h$, the gradient and Hessian restricted on the hypersphere. \\

\noindent We will often use the notation ${\bf e}_\alpha ({\bm \sigma}_a)={\bf e}_\alpha ^a$, as well as $h({\bm \sigma}_a)=h^a$,  ${\bf g}_\alpha({\bm\sigma}_a)={\bf g}_\alpha^a$ for simplicity. 

\subsubsection{Quick summary of the two-point complexity calculation}
For completeness, we provide a quick summary of the computation of Ref.~\cite{ros2019complexity}, upon which part of the present work is built. In Ref.~\cite{ros2019complexity} the authors compute the quenched complexity of stationary points ${\bm\sigma}_1$ of $h({\bm \sigma})$ at energy density $\epsilon_1$ (meaning that ${\bf g}({\bm\sigma}_1)=0$ and $h({\bm\sigma}_1)=\sqrt{2 N} \epsilon_1$), that are conditioned to be at fixed overlap ${\bm \sigma}_0 \cdot {\bm \sigma}_1= q$ with a given \emph{reference} (or \emph{primary}) minimum ${\bm\sigma}_0$ at energy density $\epsilon_0$. We refer to ${\bm \sigma}_1$ as the \emph{secondary} configuration. From Eq.~\eqref{eq:quenchedcomp2} we can compute the expected value of the logarithm by means of the replica trick; one needs to replicate the secondary configuration ${\bm\sigma}_1$:
\begin{align}\label{eq:moments}
\begin{split}
&\Sigma^{(2)}(\epsilon_1,q|\epsilon_0)=\lim_{N\to\infty}\lim_{n\to 0} \frac{M_n(\epsilon_1,q|\epsilon_0)-1}{Nn}\\
&M_n(\epsilon_1,q|\epsilon_0):=\mathbb{E}\left[
\frac{1}{\mathcal{N}(\epsilon_0)}\int_{\mathcal{S}_N(1)}d{\bm\sigma}_0\,\omega_{\epsilon_0}({\bm\sigma}_0)\int_{\mathcal{S}_N(1)}\prod_{a=1}^n d{\bm\sigma}_a\,\omega_{\epsilon_1,q}({\bm\sigma_a}|{\bm\sigma}_0)
\right]
\end{split}
\end{align}
where all symbols have been defined above, with the slight difference that here we use the rescaled variables $h$ and ${\bm\sigma}$ instead of $\mathcal{E}$ and ${\bf s}$. In order to carry out such computation, in principle one has to replicate the reference configuration ${\bm\sigma}_0$ as well by raising the denominator to the numerator, thus writing:
\begin{align}
M_n(\epsilon_1,q|\epsilon_0):=\lim_{k\to 0}\mathbb{E}\left[
\int_{\mathcal{S}_N(1)}\prod_{\beta=1}^k d{\bm\sigma}_{0,\beta}\,\omega({\bm\sigma}_{0,\beta})\int_{\mathcal{S}_N(1)}\prod_{a=1}^n d{\bm\sigma}_a\omega_{\epsilon_1,q}({\bm\sigma_a}|{\bm\sigma}_0),
\right]
\end{align}
where ${\bm\sigma}_{0,\beta=1}= {\bm\sigma}_{0}$.
Due to the isotropy of the correlation function of the random field $h({\bm \sigma})$, it turns out that this expectation value depends on the configurations ${\bm \sigma}_a, {\bm \sigma}_{0,\beta}$ only through the overlaps $q_{\alpha \beta}^0={\bm \sigma}_{0,\alpha} \cdot {\bm \sigma}_{0,\beta}$, $ q_{a \beta}={\bm \sigma}_{a} \cdot {\bm \sigma}_{0,\beta}$  and $q^1_{a b}={\bm \sigma}_{a} \cdot {\bm \sigma}_{b}$, where by construction $q_{\alpha \alpha}^0= q^1_{aa}=1$ and $ q_{a 1}={\bm \sigma}_{a} \cdot {\bm \sigma}_{0,\beta=1}=q$. This implies that the integral over the configurations can be replaced by an integral over these order parameters, which can be computed with the saddle point method. In \cite{ros2019complexity} it is shown that the saddle point equations for the parameters $q_{\alpha \beta}^0$ enforce $q_{\alpha \beta}^0=0$ for all $\alpha \neq \beta=1, \cdots, k$; this reflects the fact that the overlap between metastable states in the pure spherical $p$-spin model is vanishing \cite{crisanti1992sphericalp, cavagna1998stationary}, i.e. the fixed points are typically orthogonal to each others (cf. Sec.~\ref{sec:p_spin_model}). This also implies that $ q_{a \beta}=0$ for all $a=1, \cdots, n$ and $\beta \neq 1$. As a consequence, at the saddle point solution the secondary replicas ${\bm \sigma}_1$ are coupled only to the original reference configuration  ${\bm \sigma}_0$ and not to its replicas. Moreover this implies that to leading order in $N$, the expectation value \eqref{eq:moments} is identical to its annealed version, which is obtained averaging separately the numerator and the denominator. 
 This means that we can write
\begin{align}
\label{eq:2_point_ann_quench_calc}
\begin{split}
&M_n(\epsilon_1,q|\epsilon_0)=\frac{1}{\mathbb{E}[\mathcal{N}(\epsilon_0)]}\mathbb{E}\left[
\int_{\mathcal{S}_N(1)}d{\bm\sigma}_{0}\,\omega_\epsilon({\bm\sigma}_{0})\int_{\mathcal{S}_N(1)}\prod_{a=1}^n d{\bm\sigma}_a\omega_{\epsilon_1,q}({\bm\sigma_a}|{\bm\sigma}_0)
\right]\\
&=\mathbb{E}\left[
\int_{\mathcal{S}_N(1)}\prod_{a=1}^n d{\bm\sigma}_a\omega_{\epsilon_1,q}({\bm\sigma_a}|{\bm\sigma}_0)\bigg| \;  \begin{subarray}{l}
 h({\bm \sigma}_0) = \sqrt{2 N}\epsilon_0\\ 
 {\bf g}({\bm \sigma}_0)={\bf 0} \end{subarray}\right]\\
&=\int_{\mathcal{S}_N(1)}\prod_{a=1}^n d{\bm\sigma}_a\,\mathbb{E}\left[\prod_{a=1}^n \,   |\text{det} \nabla^2_\perp h({\bm \sigma}_a)|  \; \Big| \;   \grafe{
 \begin{subarray}{l}
 h^a = \sqrt{2 N}\epsilon_1, \,\,h^0=\sqrt{2 N} \epsilon_0\\ 
 {\bf g}^a={\bf 0}, \,\, {\bf g}^0={\bf 0} \; \forall  a=1, ...,n \end{subarray}}\right]  P_{\vec{{\bm \sigma}}|{\bm \sigma}_0}({\bf 0}, \epsilon_1)
\end{split}
\end{align}
where $\vec{\bm \sigma}:=({\bm \sigma}_1, \cdots, {\bm \sigma}_n)$ and 
\begin{align}
\label{eq:app:jpdf}
P_{\vec{{\bm \sigma}}|{\bm \sigma}_0}({\bf 0}, \epsilon_1):=\mathbb{E}\quadre{
\prod_{a=1}^n\delta(h^a-\sqrt{2N}\epsilon_1) \delta( {\bf g}^a) \; \Big| \;  \begin{subarray}{l}
 h^0=\sqrt{2 N} \epsilon_0\\ 
 {\bf g}^0={\bf 0} \end{subarray}}.
\end{align}
In the second line of \eqref{eq:2_point_ann_quench_calc} we have canceled the contribution of ${\bm\sigma}_0$ from denominator and numerator, while abusing the notation (we keep ${\bm\sigma}_0$ inside the integral). This can be done because the dependence on ${\bm\sigma}_0$ inside the integral remains only through the overlap $q$, and after carrying out the computations, one can see that the contributions from the integral on ${\bm\sigma}_0$ cancel exactly the denominator $\mathbb{E}[\mathcal{N}(\epsilon_0)]$. The expectation value is now a function of the overlaps $q^1_{ab}$, and its leading order term can be determined again with a saddle point calculation. Searching for a saddle point solution where all replicas have the same overlap (RS ansatz), $q^1_{ab}= q_1$ for all $a,b=1, \cdots n$ and $a \neq b$, one finds the solution $q_1=q^2$ \cite{ros2019complexity}, which is the smallest possible overlap between replicas that are all subject to the constraint of having overlap $q$ with the reference configuration. It can be shown explicitly that this implies that the complexity $\Sigma^{(2)}(\epsilon_1,q|\epsilon_0)$ computed within the quenched formalism is the same as that obtained within the annealed framework, i.e., setting $n=1$ in the formulas above. This means that:
\begin{align}\label{eq:QuenchedEqualAnnealed}
\Sigma^{(2)}(\epsilon_1,q|\epsilon_0)=\lim_{N\to\infty}\frac{1}{N}\mathbb{E}\left[ \log\mathcal{N}_{{\bm\sigma}_0}(\epsilon_1,q|\epsilon_0) \right]_0 = \lim_{N\to\infty}\frac{1}{N}\log \mathbb{E}\left[ \mathcal{N}_{{\bm\sigma}_0}(\epsilon_1,q|\epsilon_0)\right]_0.
\end{align}
Moreover, given our discussion on the fact that also the average over ${\bm\sigma}_0$ is annealed (that is, we can factor out the denominator), the final result is actually \textit{doubly annealed}, meaning that:
\begin{align}
\Sigma^{(2)}(\epsilon_1,q|\epsilon_0) = \lim_{N\to\infty}\frac{1}{N}\log \mathbb{E}_{2A}\left[ \mathcal{N}_{{\bm\sigma}_0}(\epsilon_1,q|\epsilon_0)\right]
\end{align}
where 
 \begin{equation}
  \label{eq:doubly_annealed_2point}
 \mathbb{E}[ \cdot ]_{2A}:= \frac{\mathbb{E}\left[
\int d{\bm \sigma}_0  \,  \omega_{\epsilon_0}({\bm\sigma}_0)\,\,\cdot\right]}{\mathbb{E}[\mathcal{N}(\epsilon_0)] }.
\end{equation}

\noindent Given this result, and given that in our work in Ref.~\cite{pacco2024curvature} we have seen that calculating paths in the annealed and quenched settings gives the same result, we will concentrate from now on only on the annealed setup for the two-point complexity. However, as we showed in Refs.~\cite{pacco_quenched_triplets_2025, pacco_triplets_2025}, the same simplification does not hold, in general, for the three-point complexity.

\subsection{Statistics of the Hessians: the annealed setup}\label{app:AnnealedDistribution}
Let us briefly discuss the statistical distribution of the unconstrained Hessian matrices  $\nabla^2 h({\bm \sigma}_a)$ with $a=0,1$, subject to the conditioning on the energies and gradients of the two configurations. Let us remark that if we stick to the \textit{doubly annealed} setting of Eq.~\eqref{eq:doubly_annealed_2point}, then ${\bm\sigma}_0$ and ${\bm\sigma}_1$ are on the same footing, and we need to take into account their mutual conditioning. As we have seen previously in Chapter~\ref{chapter:intro}, the unconstrained Hessian $\nabla^2 h$ and the Riemannian Hessian $\nabla^2_\perp h$ are easily related by a projection and shift. The main goal of presenting this section is to motivate the computations in Chapter~\ref{chapter:rmt_} about overlaps between eigenvectors of spiked, correlated GOE random matrices.

\subsubsection{Matrix distribution after conditioning}
In this case, one has to determine the (Gaussian) joint distributions of the two Hessian matrices $\nabla^2 h({\bm \sigma}_a)$, each one conditioned to  ${\bf g}({\bm \sigma}_a)={\bf 0}$ and $h({\bm \sigma}_a)= \sqrt{2 N} \epsilon_a$ for $a=0,1$. This conditional, joint distribution has been determined in \cite{ros2019complexity,subag2017complexity}. Here the conditioned Hessians are indicated by adding an upper tilde "$\,\,\tilde{}\,\,$". Each matrix $\tilde{\nabla}^2 h({\bm \sigma}_a)\in\mathbb{R}^{N\times N}$, expressed in its own local basis $\mathcal{B}[{\bm\sigma}_a]$ has the following block structure:   
\begin{equation}
\label{eq:app:annealed_matrix_0}
\frac{\tilde{\nabla}^2 h({\bm \sigma}_a)}{\sqrt{N-1}} = \begin{pmatrix}
 & & &m^a_{1 N-1}&0\\
 \\
& {\bf B}^a&&\vdots &0\\
\\
 & & &m^a_{N-2 \,N-1}&0\\
m^a_{1 N-1}& \cdots &m^a_{N-2 \,N-1}&m^a_{N-1 \,N-1}+\mu_a&0\\
0&0&0&0&  \sqrt{\frac{2 N}{N-1}} \, p(p-1)\epsilon_a
\end{pmatrix}.
\end{equation}
This expression can be found from the covariance structure between gradients and Hessians found in Sec.~\ref{sec:chap1_topo_compl} of Chapter~\ref{chapter:intro}, and by considering the formulas for Gaussian conditioning (see Appendix~\ref{app:sec_gausian_conditioning}).\\

\noindent The entries in the $(N-2) \times (N-2)$ blocks ${\bf B}^a$ are independent of the entries $m^a_{i N-1}$ in the "special" row and column. They are all zero-mean with correlations given by:
\begin{equation}\label{eq:b_correlations2}
\begin{split}
   & \mathbb{E}[B_{ij}^a \, B_{kl}^b]= \tonde{\delta_{a b} \frac{\sigma^2}{N-1}+ (1-\delta_{ab})\frac{\sigma^2_H}{N-1}}(\delta_{ik} \delta_{jl}+ \delta_{il} \delta_{jk}), \quad \quad \\
    & \mathbb{E}[m_{iM}^a \, m_{jM}^b]= \tonde{\delta_{a b} \frac{\Delta^2}{N-1}+ (1-\delta_{ab})\frac{\Delta^2_h}{N-1}}\delta_{ik} \quad \quad i,j <N-1, \\
    & \mathbb{E}[m_{N-1N-1}^a m_{N-1N-1}^b]= \frac{v_{ab}}{N-1}, \quad \quad a, b \in \{ 0,1 \}.
    \end{split}
\end{equation}
 
Therefore, the blocks ${\bf B}^a$ are $(N-2) \times (N-2)$ matrices with GOE (Gaussian Orthogonal Ensemble) statistics \footnote{In Chapter~\ref{chapter:rmt_}, Sec.~\ref{sec:Goe_intro}, we give a short introduction about GOE matrices, before computing overlaps between eigenvectors of matrices of the form \eqref{eq:app:annealed_matrix_0}.}, with rescaled variance $\sigma^2 (N-2)/(N-1)$. The two blocks ${\bf B}^0,{\bf B}^1$ are coupled component-wise. The parameters $\sigma, \sigma_H, \Delta, \Delta_h$ and $\mu_a$ are functions of $p$ and of the parameters $\epsilon_a, q$, and read \cite{ros2019complexity}: 

\begin{equation}\label{eq:app:FormPar}
\begin{split}
&\sigma^2=p(p-1)\\
&\sigma_H^2=p(p-1)q^{p-2}\\
&\sigma_W^2=\sigma^2-\sigma_H^2=p(p-1)(1-q^{p-2})\\
&\Delta^2=p(p-1) \quadre{1-\frac{(p-1)(1-q^2) q^{2p-4}}{1-q^{2p-2}}}\\
&\Delta_h^2=p(p-1) q^{p-3} (-1)\quadre{1-\frac{(p-1) (1-q^2)}{1-q^{2p-2}}}\\
&\mu_1= \mu(\epsilon_0, \epsilon_1), \quad \mu_0= \mu(\epsilon_1, \epsilon_0) 
\end{split}
\end{equation}
where: 
\begin{equation}\label{eq:app:Mu}
 \begin{split}
&\mu(x, y) := \sqrt{2} p(p-1) \left(1-q^2\right)\times \\
&\times\frac{ [q^4- (p-1)q^{2 p}+ (p-2)q^{2 p+2}] x-[q^{3 p}+(p-2) q^{p+2}-(p-1) q^{p+4}] y }{q^{6-p}+q^{3 p+2}- q^{p+2}[(p-1)^2 (1+q^4)-2 p (p-2) q^2]}.\\
 \end{split}
\end{equation}
The conditional distribution of the Riemannian Hessians then reads as follows in the $\tau[{\bm\sigma}_a]$ basis:
\begin{align}
\begin{split}
\label{eq:app:annealed_matrix_1a}
\frac{\tilde{\nabla}^2_\perp h({\bm \sigma}_a)}{\sqrt{N-1}} &= \frac{\tilde{\mathcal{H}}({\bm\sigma}_a)}{\sqrt{N-1}} -\sqrt{\frac{2 N}{N-1}} \, p\epsilon_a\mathbb{I}_{N-1}
\end{split}
\end{align}
the last term being the shift, and where we defined the following matrix in the $\tau[{\bm\sigma}_a]$ basis:
\begin{equation}
\label{eq:app:TildeMatDef} 
\frac{\tilde{\mathcal{H}}({\bm \sigma}_a)}{\sqrt{N-1}} := \begin{pmatrix}
 & & &m^a_{1 N-1}\\
 \\
& {\bf B}^a&&\vdots \\
\\
 & & &m^a_{N-2 \,N-1}\\
m^a_{1 N-1}& \cdots &m^a_{N-2 \,N-1}&m^a_{N-1 \,N-1}
\end{pmatrix} +\mu_a\,{\bf e}_{N-1}^a [{\bf e}_{N-1}^a]^T
\end{equation}
These conditioned matrices can be though of as matrices with a GOE statistics, perturbed with both an additive and multiplicative finite rank perturbation along the direction corresponding to the basis vector ${\bf e}_{N-1}({\bm \sigma}_a)$, and shifted by a diagonal matrix. Notice that in the (doubly) annealed formalism the random matrix problem is symmetric, in the sense that the two conditioned Hessians have the same structure at both the configurations ${\bm \sigma}_0$ and ${\bm\sigma}_1$. \\

\noindent Let us now consider specifically the case $p=3$, which in practice is the one we will analyse for the pathways. One sees from \eqref{eq:app:FormPar} that for $p=3$ it holds
$\Delta^2=\Delta_h^2= 6 (1-q^2)(1+q^2)^{-1}$,
which implies that $m^0_{iN-1}=m^1_{iN-1}\equiv m_{iN-1}$. Moreover, in this case $\mu_1= 6 \sqrt{2} q \tonde{\epsilon_0-q \epsilon_1}(1-q^2)^{-1}$ and $\mu_0= 6 \sqrt{2} q \tonde{\epsilon_1-q \epsilon_0}(1-q^2)^{-1}$. \\

\noindent The statistics of two jointly distributed Gaussian random matrices of the form $\tilde{\mathcal{H}}^0,\tilde{\mathcal{H}}^1$ (the upper index being used for simplicity instead of ${\bm\sigma}_a$) is the object of study of Chapter~\ref{chapter:rmt_}, and of our work \cite{paccoros}, where in that Chapter the covariances between elements take up the most general form. There, we will derive both the spectral properties (in particular, the isolated eigenvalues) and the overlaps between eigenvectors associated both to bulk and isolated eigenvalues of the pair of matrices. Below we state the main results for completeness, as they will be useful in Sec.~\ref{sec:curvature_driven_paths} when we discuss curvature driven pathways between local minima.

\noindent A brief but important detail from Chapter~\ref{chapter:rmt_} is that the particular normalizations (by $N$, or $N-1$ or $N-2$) do not matter when we take the limit of $N\to\infty$, as they only give sub-leading corrections. In the same fashion, also the covariance $v_{ab}$ of the last elements in the $(N-1)\times (N-1)$ blocks does not matter to leading order. Hence, the blocks ${\bf B}^a$ can effectively be considered as GOE matrices of variance $\sigma^2$.

\subsubsection{Spectral statistics and isolated eigenvalue}
Here we give the quick summary of the spectral properties of $\tilde{\mathcal{H}}^a$. To leading order in the size of the matrix, the eigenvalue density $\rho_N^a(\lambda)$ of both matrices is not affected  by the presence of the special line and column, and for $N\to\infty$ it just coincides with the eigenvalue density of the GOE block, i.e., it is given by the Wigner's semicircular law (see also Sec.~\ref{sec:chap1_topo_compl}):
\begin{align*}
\rho_\sigma(\lambda):=\frac{1}{2 \pi \sigma^2} \sqrt{4 \sigma^2 - \lambda^2}\,\mathds{1}_{|\lambda|\leq2\sigma}, \quad \quad \sigma^2=p(p-1).
\end{align*}

 The presence of the special row and column can give rise to subleading contributions to the eigenvalue density: these contributions correspond to eigenvalues that do not belong to the support of the semicircular law (and are said to be ``isolated"), and whose typical value depends on the parameters $\Delta, \mu_a$ governing the statistics of the entries of the special row and column. As explained in Sec.~\ref{sec:rmt_theoretical_res}, the fact that $\Delta \leq \sigma$ (as can be easily verified to be the case here), implies that only one isolated eigenvalue can exist for these matrices. Such eigenvalue exists whenever 
\begin{equation}
\label{eq:app:ConditionEx}
|\mu_a|>  \sigma \tonde{1+ \frac{\sigma^2-\Delta^2}{\sigma^2}}.
\end{equation}
Explicitly, the eigenvalue is given by:
\begin{equation}
\label{eq:app:IsoExplicit}
\lambda_{\rm iso}^a=\frac{{2 \mu_a \sigma^2- \Delta^2 \mu_a- \text{sign}(\mu_a) \Delta^2 \sqrt{\mu_a^2-4 (\sigma^2- \Delta^2)}}}{2 (\sigma^2-\Delta^2)}.
\end{equation}
Notice that for $\mu_a <0$, this eigenvalue is negative and it coincides with the smallest one of the matrix, i.e., $\lambda_{\rm \min}^a= \lambda_{\rm iso}^a$.  
Whenever the isolated eigenvalue exists, its eigenvector ${\bf e}^a_{\rm iso}$ has a projection on the vector ${\bf e}_{N-1}({\bm\sigma}_a)$, corresponding to the special line and column of the matrix, which remains of $\mathcal{O}(1)$ when $N$ is large. The typical value of this projection has been computed in Refs.~\cite{paccoros, ros2020distribution} and reads:
\begin{equation}
\label{eq:app:ProjVectors}
\begin{split}
  &u^a:=({\bf e}^a_{\rm iso} \cdot  {\bf e}_{N-1}^a)^2= \mathfrak{q}_{\sigma,\Delta}(\lambda_{\rm iso}^a,\mu_a)
   \end{split}
\end{equation}
where we introduced the function:
\begin{equation}\label{eq:Qfunc}
\begin{split}
   & \mathfrak{q}_{\sigma,\Delta}(\lambda,\mu):=\text{sign}(\mu)\,\frac{\text{sign}(\lambda)\Delta^2\sqrt{\lambda^2-4\sigma^2}-\lambda(2\sigma^2-\Delta^2)+2\mu\sigma^2}{2\Delta^2\sqrt{\mu^2-4(\sigma^2-\Delta^2)}}.
    \end{split}
\end{equation}
For more details on these results, and for some derivations, please refer to Chapter~\ref{chapter:rmt_}. \\

\noindent It can be shown that Eq.~\eqref{eq:app:ProjVectors} is indeed positive on the Right Hand Side, as it should be. Moreover, whenever the squared overlap is non-zero, the eigenvector is aligned with the direction of the finite-rank perturbation, meaning that ${\bf e}^a_{\rm iso} \cdot  {\bf e}_{N-1}^a>0$. For the $p$-spin model with $p=3$, it can be easily shown that $\mu_a<0$ and thus the isolated eigenvalue is the smallest one: $\lambda_{\rm \min}^a= \lambda_{\rm iso}^a$.

\section{Curvature-driven pathways}
\label{sec:curvature_driven_paths}
In this section we pursue the goal of estimating energy barriers between local minima of the pure spherical $3-$spin model by studying the typical energetic profile along curvature driven pathways that interpolate between two local minima ${\bm\sigma}_0, {\bm\sigma}_1 \in \mathcal{S}_N(1)$.

\subsection{Interpolating paths and energy profiles.}

\begin{figure*}[t!]
\centering
\includegraphics[width=0.57
\textwidth, trim=7 7 7 7,clip]{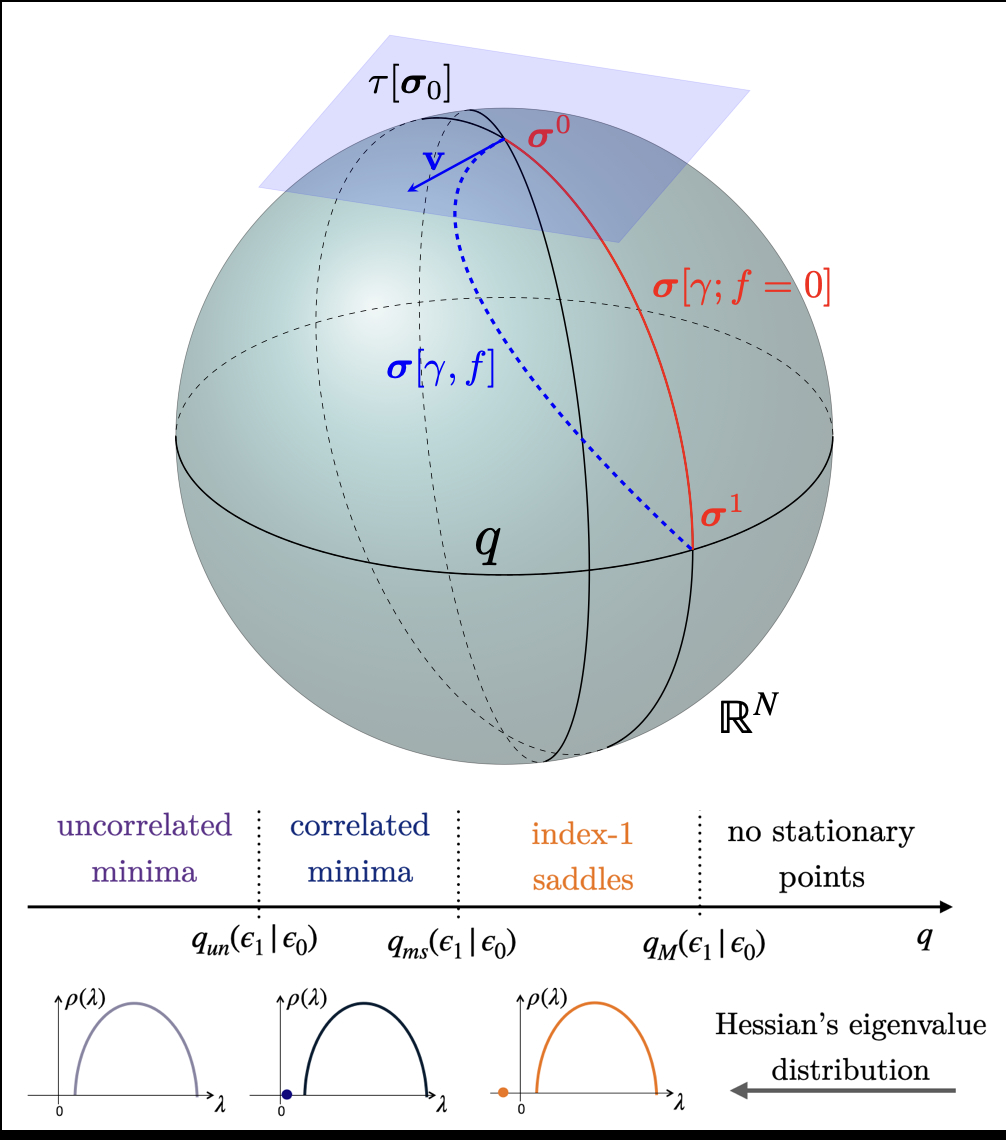}
\caption{ {\it Top.} Sketch of the configuration space (sphere), with a geodesic path (red thick line) and a perturbed path (blue dashed line) interpolating between two configurations ${\bm \sigma}_a$ with $a=0,1$.  The tangent plane $\tau[{\bm \sigma}_0]$ to which ${\bf v}$ belongs is also sketched. {\it Bottom.} Eigenvalue distribution of the Hessian at ${\bm \sigma}_1$ as a function of the overlap $q$, for fixed $\epsilon_0$ and $\epsilon_1>\epsilon^*(\epsilon_0)$ (see also Fig.~\ref{fig:2_point_phase_diag}).}
\label{fig:SketchPath}
\end{figure*}

\noindent For a fixed realization of the landscape, we consider two configurations ${\bm \sigma}_0$  and ${\bm \sigma}_1$ drawn at random from the population of stationary points of $h({\bm \sigma})$ such that ${\bf g}({\bm \sigma}_a)=0$ for $a=0,1$. We extract the stationary points in such a way that each ${\bm \sigma}_a$ has energy density $\epsilon_a \in [\epsilon_{\rm gs}, \epsilon_{\rm th}]$ for $a=0,1$, and their overlap ${\bm \sigma}_0 \cdot {\bm \sigma}_1$ equals to some  $q\in[0,1]$. We are interested in the energy density profile along paths lying on the surface of $\mathcal{S}_N(1)$, which interpolate between the two stationary points. We parametrize the paths as follows:
\begin{align}
\label{eq:path}
    {\bm\sigma}[\gamma;f]=\gamma{\bm \sigma}_1 + \beta[\gamma; f]\; {\bm \sigma}_0 + f(\gamma){\bf v},\quad \gamma\in[0,1],
\end{align}
where $f$ is a continuous function such that $f(0)=f(1)=0$, and $\bf v$ is a norm-1 vector orthogonal to both ${\bm \sigma}_0$ and ${\bm \sigma}_1$. 
The condition that ${\bm \sigma}[\gamma;f]$ lies on $\mathcal{S}_N(1)$ enforces:
\begin{equation}
\label{eq:beta}
({\bm\sigma}[\gamma,f])^2=1\Rightarrow\beta[\gamma; f] = - \gamma q+ \sqrt{1- \gamma^2(1-q^2)- f^2(\gamma)},
\end{equation}
where we chose the sign $+$ to ensure that ${\bm\sigma}[0;f]={\bm\sigma}_0$ and ${\bm\sigma}[1;f]={\bm\sigma}_1$. Notice that this quantity has to be real, and this imposes some constraints on $f(\gamma)$, namely:
\begin{align}
f^2(\gamma)\leq 1-\gamma^2(1-q^2).
\end{align}
When $f \equiv 0$, Eq.~\eqref{eq:path} gives the geodesic path connecting the two stationary points. The function  $f$ acts as a perturbation of the geodesic, along the direction identified by the vector ${\bf v}$, see Fig.~\ref{fig:SketchPath}. Therefore, we are restricting to interpolating paths belonging to the low-dimensional subspace spanned by the vectors ${\bf v}, {\bm \sigma}_0$ and ${\bm\sigma}_1$, intersected with the surface of the hypersphere. 
We aim at computing the typical energy density profile 
\begin{equation}\label{eq:Profile}
\epsilon_{\bf v}[\gamma;f]:= \lim_{N \to \infty}  \mathbb{E}\quadre{\frac{h({\bm\sigma}[\gamma;f])}{\sqrt{2N}}}_{0,1},
\end{equation}
where the average is over the distribution of stationary points ${\bm \sigma}_0, {\bm \sigma}_1$ with energy densities $\epsilon_0$, $\epsilon_1$ and overlap $q$ (denoted with a subscript "$0,1$"), and over the realizations of the landscape. Below in Sec.~\ref{sec:en_land_quench_vs_ann_consid} we shall give a more detailed explanation of how these averages are taken. For now, let us specify the choice of ${\bf v}$.\\

\noindent To specify our choice of ${\bf v}$ in \eqref{eq:path}, we recall that we introduced the vectors:
\begin{align*}
    {\bf e}_{N-1}({\bm \sigma}_0)= \frac{q \; {\bm \sigma}_0-{\bm \sigma}_1}{\sqrt{1-q^2}},\quad
    {\bf e}_{N-1}({\bm \sigma}_1)= \frac{q \; {\bm \sigma}_1-{\bm \sigma}_0}{\sqrt{1-q^2}}.
\end{align*}
Each $ {\bf e}_{N-1}({\bm \sigma}_a)$ belongs to the tangent plane $\tau[{\bm \sigma}_a]$ (which means that ${\bf e}_{N-1}({\bm \sigma}^a) \perp {\bm \sigma}^a$), and identifies the direction in the tangent plane pointing towards the other configuration ${\bm \sigma}_{b \neq a}$. We also recall that  ${\bf x}_i$ with $i=1, \cdots, N-2$ is an arbitrary orthonormal basis of the subspace orthogonal to both ${\bm \sigma}_0,{\bm\sigma}_1$. Finally, we denote with $\lambda_{\rm min}^a$ the minimal eigenvalue of $\nabla^2_\perp h({\bm \sigma}_a)$ and with  
${\bf e}_{\rm min}({\bm \sigma}_a)$ the associated eigenvector. We also recall that in Eq.~\ref{eq:app:ProjVectors}, we have defined:
\begin{align*}
    u^a=({\bf e}_{\rm min}({\bm \sigma}_a)\cdot {\bf e}_{N-1}({\bm \sigma}_a))^2.
\end{align*}
We consider two possible choices for ${\bf v}$. The first corresponds to
\begin{equation}
\label{eq:v_start}
{\bf v}\to{\bf v}_{\rm soft}^a:= \frac{{\bf e}_{\rm min}({\bm \sigma}_a)- \sqrt{u^a}\,  {\bf e}_{N-1}({\bm \sigma}_a)}{\sqrt{1-u^a}},
\end{equation}
i.e., the path is deformed in the directions of the softest curvature of the energy landscape at ${\bm \sigma}_0$ or ${\bm \sigma}_1$, and $({\bf v}_\text{soft}^a)^2=1$. The second corresponds to 
\begin{equation}
\label{eq:v_start2}
{\bf v}\to{\bf v}_{\rm Hess}:= Z \sum_{i=1}^{N-2} [{\bf x}_i \cdot \tilde{\mathcal{H}}({\bm \sigma}_0) \cdot {\bf e}_{N-1}({\bm \sigma}_0)] \, {\bf x}_i
\end{equation}
where  $\tilde{\mathcal{H}}({\bm \sigma}_a)$ was defined in Eq.~\eqref{eq:app:TildeMatDef} (and it essentially denotes the unshifted Hessian). This choice of ${\bf v}_{\rm Hess}$ is motivated by the study of the gradient vector ${\bf g}$ at each configuration ${\bm \sigma}(\gamma)$ along the geodesic path (i.e. when $f\equiv 0$). As we show in Appendix~\ref{app:gradient_max_barr}, at each point the gradient has a tangent component ${\bf g}^\parallel(\gamma)$ to the path, and an orthogonal component ${\bf g}^\perp(\gamma)$ that is proportional to \eqref{eq:v_start2}. 
While ${\bf g}^\parallel$ obviously vanishes at the value of $\gamma$ that corresponds to the local maximum of the geodesic energy profile, ${\bf g}^\perp$ does not, meaning that the maximum of the geodesic profile is not a stationary point of $h({\bm \sigma})$. 
This suggests that interpolating paths associated to lower barriers can be found by deforming the geodesic path in the direction of ${\bf g}^\perp$, since this is likely the direction that the path would follow if it was allowed to relax in configuration space by gradient descent \cite{jonsson1998nudged, bolhuis2002transition, freeman2016topology}.

\subsection{Quenched vs Annealed: general considerations}
\label{sec:en_land_quench_vs_ann_consid}
The meaning of the terms \textit{quenched}, \textit{annealed}, \textit{doubly-annealed} can be quite different depending on the context. In this paragraph we wish to make things more clear so that we will not need to enter into too much detail later on. In Sec.~\ref{sec:TwoPoint} the term \textit{quenched} meant that things were done properly, meaning that the $\log$ stays inside the expectation, and the expectation is done in order: first we average over the distribution of stationary points ${\bm\sigma}_0$ at fixed disorder, and then we average over the disorder. By \textit{annealed} we meant to exchange the $\log$ and the expectation; by \textit{doubly-annealed} we meant that the expectation over ${\bm\sigma}_0$ could be factored (i.e. no need to replicate it). \\

\noindent In the present context of curvature-driven pathways, these meanings are a bit different. Indeed, \textit{quenched} means that we average the path in order over: points ${\bm\sigma}_1$, points ${\bm\sigma}_0$, the disorder. \textit{Annealed} instead means that we factor out the first denominator from configuration ${\bm\sigma}_0$; \textit{doubly-annealed} means that we factor out also the second one, over ${\bm\sigma}_1$. Moreover, in general, some of these simplifications might be correct, in other cases not. Let us therefore put on some firmer grounds our definition of \textit{quenched}, \textit{annealed}, \textit{doubly-annealed}:
\begin{itemize}
    \item \textit{quenched}: with this term we mean no simplification whatsoever in the way the average should be done, so in the case of complexities we keep the $\log$ outside and we average over \textit{primary}, \textit{secondary} (and eventually \textit{tertiary}) configurations in the correct order and without factoring out the denominators; if no $\log$ is present, such as in the case of the interpolating paths, we keep the correct order of averaging the configurations. 

    \item \textit{annealed}: one step of simplification is done. By this we mean that if there is a $\log$ then we exchange the $\log$ with all averages; if there is no $\log$ then we just factor out the first denominator.

    \item \textit{doubly-annealed}: we could clearly define a $k-annealed$ calculation when $k$ simplifications are done. However, to make things simple, depending on the context, the \textit{doubly-annealed} calculation means that all possible simplifications are done: $\log$'s and averages are exchanged, and all configurations are on the same footing (meaning that we factor out all denominators and no-one gets replicated).
\end{itemize}
With these definitions we will not need to specify at each time what the various averages mean, and we will assume it from the context under study. \\

\noindent For the problem at hand, the quenched computation reads:
\begin{align}
\label{eq:en_land_E_0_1_h}
\begin{split}
    \mathbb{E}[h({\bm\sigma}[\gamma;f])]_{0,1}&:=\mathbb{E}\Bigg[\frac{1}{\mathcal{N}(\epsilon_0)}\int_{\mathcal{S}_N(1)}d{\bm\sigma}_0\,\omega_{\epsilon_0}({\bm\sigma}_0)\frac{1}{\mathcal{N}_{{\bm\sigma}_0}(\epsilon_1,q|\epsilon_0)}\\
    &\times \int_{\mathcal{S}_N(1)}d{\bm\sigma}_1\,\omega_{\epsilon_1,q}({\bm\sigma}_1|{\bm\sigma}_0)\,\,h({\bm\sigma}[\gamma;f])\Bigg].
\end{split}
\end{align}
The \textit{doubly-annealed} computation then corresponds to:
\begin{align}
\begin{split}
    \mathbb{E}_{2A}[h({\bm\sigma}[\gamma;f])]&:=\frac{\mathbb{E}\left[\int_{\mathcal{S}_N(1)}d{\bm\sigma}_1\,\omega_{\epsilon_1,q}({\bm\sigma}_1|{\bm\sigma}_0)\,\,h({\bm\sigma}[\gamma;f])\bigg|\begin{subarray}{l}
 h^0=\sqrt{2 N} \epsilon_0\\ 
 {\bf g}^0={\bf 0} \end{subarray}\right]}{\mathbb{E}[\mathcal{N}_{{\bm\sigma}_0}(\epsilon_1,q|\epsilon_0)]}
\end{split}
\end{align}
where the terms involving ${\bm\sigma}_0$ at the numerator and denominator cancel, but we leave a conditioning inside the expectation. This simplification is a consequence of the isotropy of the model, namely on the fact that ${\bm\sigma}_0$ only enters the expressions through the overlap $q$, which is enforced with a Dirac's delta. \\

\noindent The true simplification only happens in the $N\to\infty$ limit; but we shall treat finite $N$ expressions as if $N\to\infty$, keeping in mind that we will ultimately take the limit. In particular, in Ref.~\cite{pacco2024curvature} we have shown that one actually has
\begin{align}
   \lim_{N\to\infty} \frac{\mathbb{E}[h({\bm\sigma}[\gamma;f])]_{0,1}}{\sqrt{2N}}= \lim_{N\to\infty} \frac{\mathbb{E}_{\text{cond}}[h({\bm\sigma}[\gamma;f])]}{\sqrt{2N}}
\end{align}
where
\begin{align}
    \mathbb{E}_\text{cond}[\cdot]:=\mathbb{E}\left[\cdot\bigg|\begin{subarray}{l}
 h^a=\sqrt{2 N} \epsilon_a\\ 
 {\bf g}^a={\bf 0},\quad a=0,1 \end{subarray}\right].
\end{align}
Basically, for $N\to\infty$ at the saddle point the denominators are canceled with the integrations at the numerator, and we are left with a disorder average conditioned over the statistics of energies and gradients of two representative points ${\bm\sigma}_0,{\bm\sigma}_1$ at overlap $q$. This is essentially a consequence of the fact that for this model both the one-point and two-point complexities give the same result at the quenched and annealed levels.

\noindent In the following part of this section on interpolating paths we will use $\mathbb{E}\equiv \mathbb{E}^\text{cond}$, and implicitly assume that we take $N\to\infty$.

\subsection{The case $p=3$ and the eigenvector overlaps.} In this section we specifically set $p=3$ for simplicity. This restriction is motivated by the fact that in this case the energy profile \eqref{eq:Profile} can be expressed as a function of the local properties of the landscape at ${\bm \sigma}_a$ only, i.e. of the local gradients ${\bf g}({\bm \sigma}_a)$ and Hessian matrices $\nabla^2_\perp h({\bm \sigma}_a)$. By implementing the constraints $h({\bm \sigma}_a)= \sqrt{2 N} \epsilon_a$, ${\bf g}({\bm \sigma}_a)=0$ and by using the fact that typically $h({\bf v})=0$, one obtains:
\begin{equation}\label{eq:PrePAth}
\begin{split}
\epsilon_{\bf v}[\gamma,f]&=
\tonde{\gamma^3 +3\gamma^2 \beta \,q} \epsilon_1+ \tonde{\beta^3+3 \beta^2 \gamma \,q} \epsilon_0 - \sqrt{1-q^2} \gamma \beta \frac{f}{\sqrt{2}}  \; \mathbb{E}\quadre{{\bf v}\cdot \frac{ \tilde{\nabla}^2 h({\bm \sigma}_0) }{\sqrt{N}}\cdot {\bf e}_{N-1}({\bm \sigma}_0)}\\
& +   \frac{f^2}{2\sqrt{2}} \mathbb{E}\quadre{  \gamma \, {\bf v} \cdot \frac{ \tilde{\nabla}^2 h({\bm \sigma}_1) }{\sqrt{N}}\cdot {\bf v} + \beta\, {\bf v} \cdot \frac{ \tilde{\nabla}^2 h({\bm \sigma}_0) }{\sqrt{N}}\cdot {\bf v}},
\end{split}
\end{equation}
where we omitted the function arguments to make notation easier. It follows that for $p=3$ and with our choices of ${\bf v}$,  the energy profile depends only on correlations between the entries of the Hessian matrices, whose statistics has been explained in the previous section. We do not reproduce the derivation of this result here, details can be found in the Appendix of Ref.~\cite{paccoros}. \\

\noindent With the choice \eqref{eq:v_start}, in particular, the profile depends on the matrix elements and on the minimal eigenvector of the Hessians.  For any $q>0$, the two matrices are correlated (due to the fact that the landscape at the two points is correlated) and thus the matrix element is non-trivial. As we argue below, to determine its typical behavior for large $N$ one needs to compute the typical value of the overlap between arbitrary eigenvectors of the two correlated matrices. \\

\subsubsection{Eigenvector overlaps, the case of ${\bf v}_\text{soft}$}
After carrying out the algebra of Eq.~\eqref{eq:PrePAth} for ${\bf v}_\text{soft}^a$ one sees that there are essentially two terms that cannot be solved with "conventional" tools and results described in the previous sections. These are the eigenvector overlaps between the two Hessian matrices $\tilde{\nabla}^2_\perp h({\bm\sigma}^a)$ for $a=0,1$.\\

\noindent If we consider ${\bf v}_\text{soft}^0$, the term ${\bf v}_\text{soft}^0\cdot \tilde{\nabla}^2h({\bf\sigma}_1)\cdot {\bf v}_\text{soft}^0$ generates easily solvable terms plus the following element:
\begin{align}
\chi_0:={\bf e}_\text{min}({\bm\sigma}_0)\cdot \frac{\tilde{\mathcal{H}}({\bm\sigma}_1)}{\sqrt{N}} \cdot{\bf e}_\text{min}({\bm\sigma}_0).
\end{align}

\noindent Instead if we consider ${\bf v}_\text{soft}^1$ we can exchange 0s and 1s above; the difficult term becomes: 
\begin{align}
\chi_1:={\bf e}_\text{min}({\bm\sigma}_1)\cdot \frac{\tilde{\mathcal{H}}({\bm\sigma}_0)}{\sqrt{N}} \cdot{\bf e}_\text{min}({\bm\sigma}_1).
\end{align}
In these expressions we have already applied the change of basis (into the local basis of the outer vectors), and as it turns out the (inner) matrices are now expressed in the local basis of the outer vectors, while retaining the same matrix form as in Eq.~\eqref{eq:app:TildeMatDef}. This can be achieved with some simple manipulations of a change of basis from $\mathcal{B}[{\bm\sigma}_0]\to\mathcal{B}[{\bm\sigma}_1]$ and vice-versa. Moreover, at this point, we are not making any difference between division by $\sqrt{N}$ or $\sqrt{N-1}$, since ultimately for $N\to \infty$ it makes no difference inside Eq.~\eqref{eq:Profile}.\\

\noindent We are now ready to expand the matrices using $\{{\lambda}_\alpha^a,{\bf u}_\alpha^a\}_{\alpha\leq N-1}$, that is, their eigenvalue/ eigenvector decomposition :
\begin{align}
\chi_0=\sum_{\alpha=1}^{N-1}{\lambda}_\alpha^1[{\bf e}_\text{min}^0\cdot {\bf u}_\alpha^1]^2, \quad\quad\chi_1=\sum_{\alpha=1}^{N-1}{\lambda}_\alpha^0[{\bf e}_\text{min}^1\cdot {\bf u}_\alpha^0]^2.
\end{align}
For $N\to\infty$ then we have:
\begin{align}
\begin{split}
&\mathbb{E}[\chi_0]\to \int d\lambda\left[\lambda \,\rho_\sigma(\lambda)\,\Phi(\lambda_\text{min}^0,\lambda)\right]\\
&\mathbb{E}[\chi_1]\to \int d\lambda \left[\lambda\,\rho_\sigma(\lambda)\,\Phi(\lambda_\text{min}^1,\lambda)\right]
\end{split}
\end{align}
where we defined the eigenvector overlaps as:
\begin{align}
    \Phi(\lambda^0_\alpha,\lambda^1_\beta):=N\mathbb{E}[({\bf u}_\alpha^0\cdot{\bf u}_\beta^1)^2].
\end{align}
The various simplifications that appear here and the fact that we can use the doubly annealed average are likely to be a consequence of the choice $p=3$. Indeed, in this case one can verify that $\tilde{\mathcal{H}}^0/\sqrt{N}$ has never isolated eigenvalues, so that $\lambda_\text{min}^0=-2\sigma$. The same is not true for $\tilde{\mathcal{H}}^1/\sqrt{N}$, which can have a minimal eigenvalue isolated from the bulk of the semicircle law. \\

\noindent We also need to remark that although we have treated ${\bm\sigma}_0$ and ${\bm\sigma}_1$ on the same footing within the \textit{doubly-annealed} computation, we must still choose $\epsilon_0,\epsilon_1,q$ in such a way to respect the order in which they have to be extracted. This means that for any choice of $\epsilon_0$, we must choose $\epsilon_1,q$ in such a way that we have a positive complexity $\Sigma^{(2)}(\epsilon_1,q|\epsilon_1)>0$. This also greatly simplifies the problem: for acceptable values of these control parameters, the \textit{doubly-annealed} computation gives the right result. If we were to consider atypical choices of $\epsilon_0,\epsilon_1,q$ we would not benefit of this simplification, and we would have to recur to the \textit{quenched} calculation. \\

\noindent In light of this we can write:
\begin{align}
\begin{split}
\label{eq:curvature_chi0_chi1}
&\mathbb{E}[\chi_0]\to \int d\lambda\left[\lambda \,\rho_\sigma(\lambda)\,\Phi(-2\sigma,\lambda)\right]\\
&\mathbb{E}[\chi_1]\to \int d\lambda \,\,\lambda\,\rho_\sigma(\lambda)\Bigg\{\,\Phi(-2\sigma,\lambda)[1-H(|\mu_1|- 2\sigma+\Delta^2/\sigma)]\\&
+\Phi(\lambda_\text{iso}^1,\lambda)H(|\mu_1|- 2\sigma+\Delta^2/\sigma)\Bigg\}
\end{split}
\end{align}
with $H$ the Heaviside step function. The properties of these overlaps are the core topic of Chapter~\ref{chapter:rmt_} and Ref.~\cite{paccoros}. In Sec.~\ref{sec:theoretical_res2} we compute their expressions in the most general case of spiked, correlated GOE random matrices. Let us just state that for the present case we obtain the simple result $\chi_0=-2\sqrt{6}\,q$, whereas the integral for $\chi_1$ must be solved numerically by using the expression for $\Phi(\lambda_\text{iso}^1,\lambda)$ with $\lambda\in[-2\sigma,2\sigma]$ found in Sec.~\ref{sec:rmt_simulations_overlaps}.

\begin{figure}[t!]
\centering
\includegraphics[width=1\textwidth]{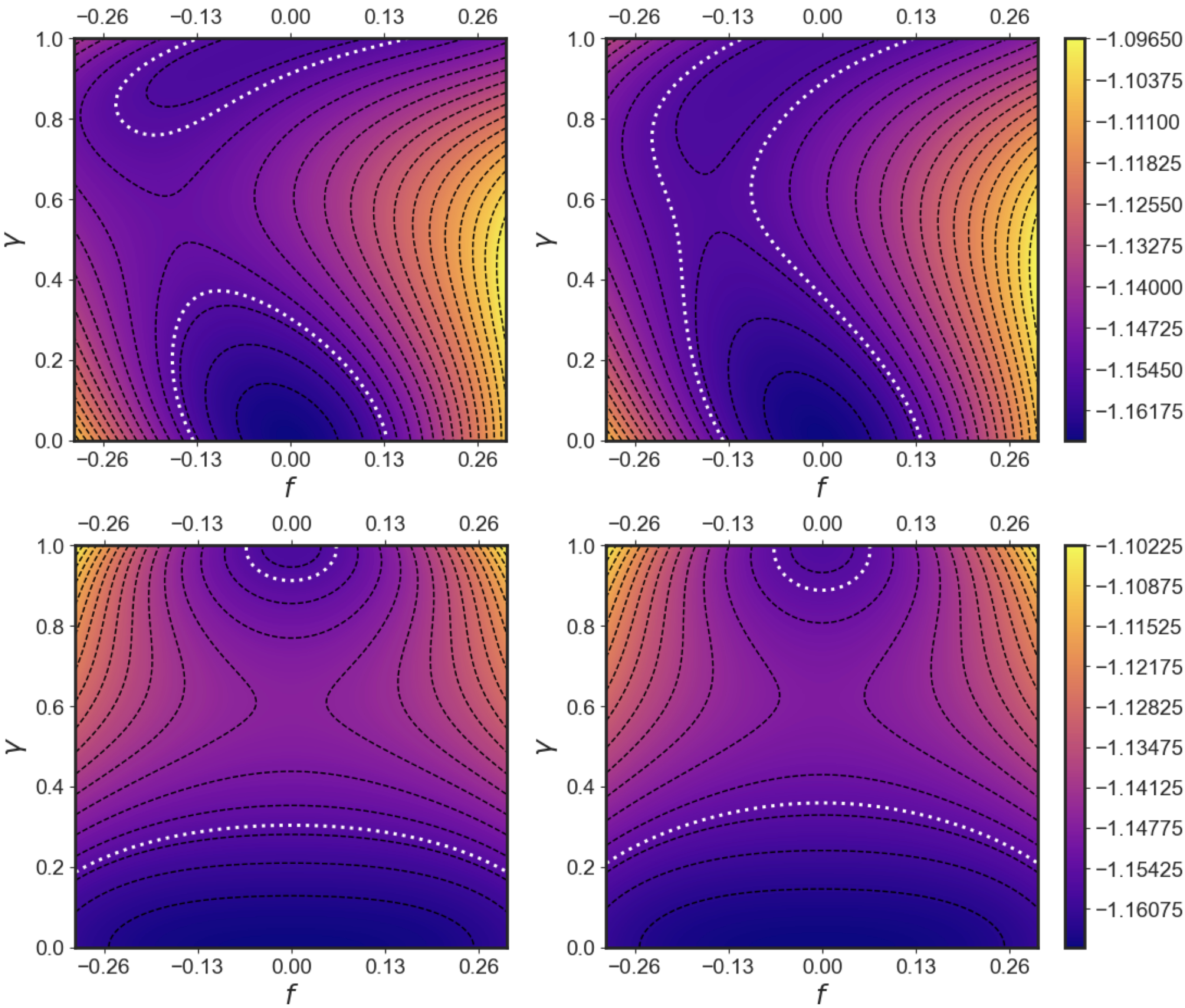}
\caption{Color plot of $\epsilon_{\bf v_{\rm soft}}[\gamma;f]$ as a function of $\gamma$ and $f$ for $\epsilon_0=-1.167$ and $\epsilon_1=-1.157$. Acceptable paths are those that start at $f(0)=0$ and end at $f(1)=0$ in a continuous and injective way. \textit{Left} corresponds to $q=0.66 \in [q_{un},q_{ms}]$, meaning that the secondary configuration is a correlated minimum with an isolated eigenvalue. \textit{Right} corresponds to $q=0.7>q_{ms}$ and the secondary configuration is a rank-1 saddle. \textit{Bottom} corresponds to ${\bf v}_{\rm soft} \to {\bf v}_{\rm soft}^0$ and \textit{Top} to ${\bf v}_{\rm soft} \to {\bf v}_{\rm soft}^1$.}
\label{fig:density_plots}
\end{figure}

\subsection{Results: the case ${\bf v}\to{\bf v}_{\rm soft}^0$}
For ${\bf v}\to{\bf v}_{\rm soft}^0$ a stylized representation of the interpolating path is drawn in Fig.~\ref{fig:SketchPath}. When $\epsilon_0 < \epsilon_{\rm th}$, we have seen that ${\bm\sigma}_0$ is typically a local minimum, and the spectrum of $\tilde{\mathcal{H}}^0/\sqrt{N}$ for large $N$ is the Wigner's semicircle, with variance $\sigma^2=6$. The minimal eigenvalue therefore tends to  $-2\sqrt{6}$. The associated eigenvector will show no condensation in the direction of ${\bf e}_{N-1}({\bm \sigma}_0)$, meaning that  typically $u^0$ takes the value $u^0_{\rm typ}=0$. Then, the path along \eqref{eq:path} reads (we avoid function arguments but recall that $\beta$ depends on $f$ as well):
\begin{align}
\label{eq:energy_profile_quench_0}
\begin{split}
&\epsilon_{{{\bf v}_{\rm soft}^0}}[\gamma;f]=\tonde{\gamma^3 +3\gamma^2 \beta \,q}  \epsilon_1+  (\beta^3+3 \beta^2 \gamma \,q)\epsilon_0-\sqrt{3}\,f^2(\gamma)\left( \beta+\gamma\,q\right).
\end{split}
\end{align}

\begin{figure}[t!]
\centering
\includegraphics[width=0.7\textwidth]{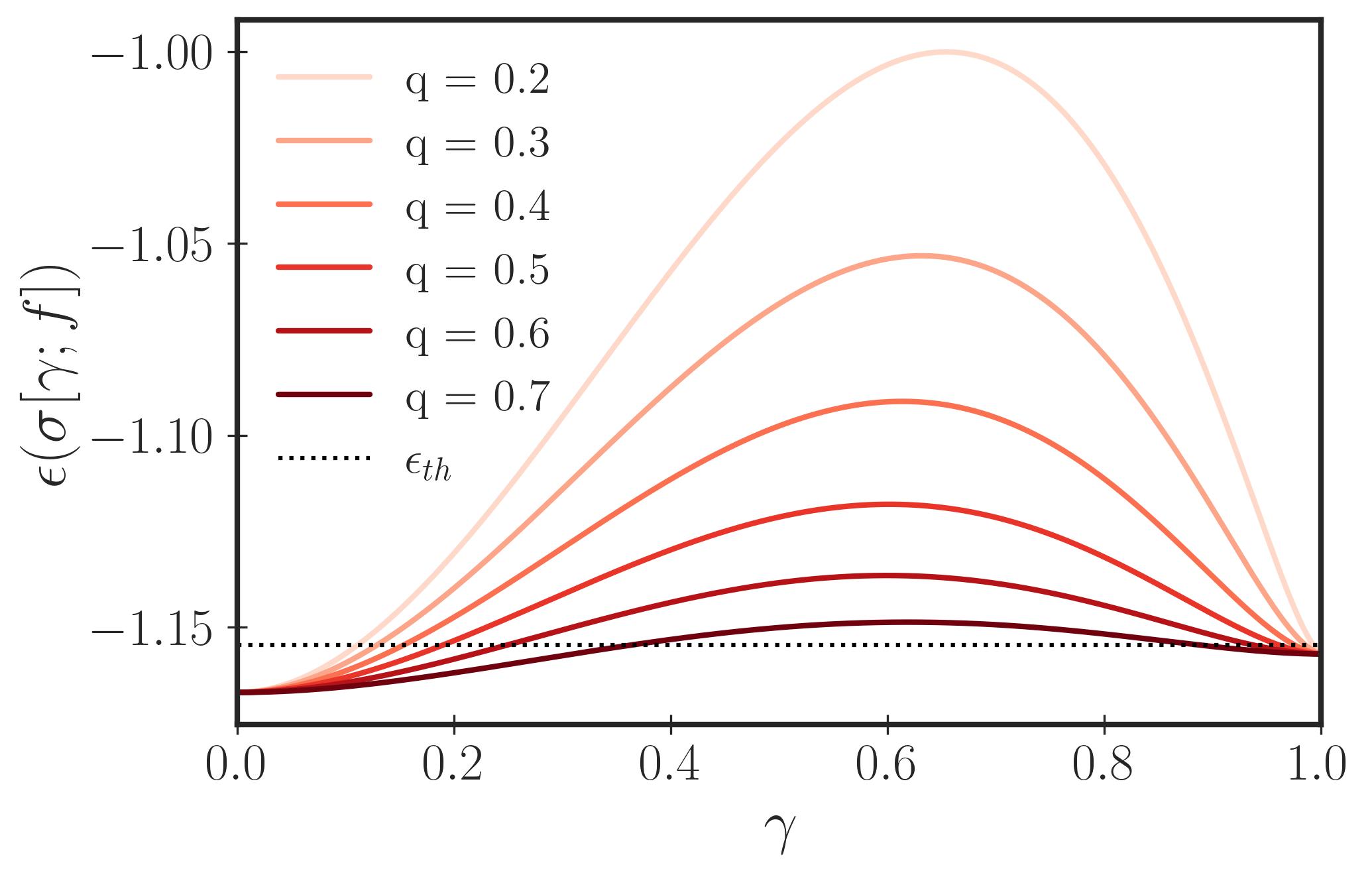}
\caption{Energy profile along Geodesic paths for various values of $q$ with $\epsilon_0=-1.167$, $\epsilon_1=-1.157$. We see that increasing $q$ decreases the barrier.}
\label{fig:geodesic_barriers}
\end{figure}

\noindent One finds that $\delta \epsilon_{{{\bf v}_{\rm soft}^0}}/\delta f=0$ is satisfied by $f \equiv 0$, and that for $\epsilon_a <\epsilon_{\rm th}$ it holds $\delta^2 \epsilon_{{{\bf v}_{\rm soft}^0}}/\delta f^2>0$ at $f \equiv 0$, meaning that the geodesic path is a minimum of the functional~\eqref{eq:energy_profile_quench_0} for any value of $\gamma\in[0,1]$. This means that arbitrary  deformations of the interpolating path in the direction of softest curvature at ${\bm \sigma}_0$ go through regions of the landscape of higher energy density, on average. This observation is confirmed by Fig.~\ref{fig:density_plots} \textit{bottom}, which shows two density plots of Eq.~\eqref{eq:energy_profile_quench_0} as a function of $\gamma$ and $f$, where $f$ is allowed to take any value within its range of validity that keeps $\beta$ in \eqref{eq:beta} well defined. Arbitrary paths are obtained drawing curves connecting the points $f(0)=0$ and $f(1)=0$ in a continuous and {injective} fashion. One finds that the energy profile along these curves is non-monotonic, with a local maximum whose energy we refer to as the energy barrier. In Fig.~\ref{fig:density_plots} the parameters $\epsilon_0$ and $\epsilon_1$ are fixed, while $q$ is tuned in such a way that ${\bm \sigma}_1$ is either a minimum with an Hessian with a single isolated eigenvalue (\textit{left}) or an rank-1 saddle (\textit{right}). The white dotted lines represent the level curves of value $\epsilon_\text{th}$. 
The \textit{bottom} figures of the plot confirm that the lowest energy barrier of $\epsilon_{{{\bf v}_{\rm soft}^0}}[\gamma;f]$ is obtained for the geodesic path $f\equiv 0$, and shows that such barrier is well above $\epsilon_{\rm th}$.
The same results could be obtained for different values of $\epsilon_a$ and $q$.\\

\noindent Moreover, one finds that the barrier associated to the geodesic path increases when the overlap $q$ decreases (see Fig.~\ref{fig:geodesic_barriers}) and when $\epsilon_1$ increases towards $\epsilon_{\rm th}$ at fixed $q$ (not shown here). \\

\noindent A final remark: the profile \eqref{eq:energy_profile_quench_0} is not the same that one would obtain by choosing ${\bf v}$ in \eqref{eq:PrePAth} as a purely random Gaussian vector ${\bf v}_{\rm rand}$, uncorrelated to the local Hessian: with that choice, all the terms depending on ${\bf v}$ in Eq. \eqref{eq:PrePAth} vanish on average, and one obtains:
$\epsilon_{{{\bf v}_{\rm rand}}}[\gamma;f]=\tonde{\gamma^3 +3\gamma^2 \beta\,q}  \epsilon_1+  (\beta^3 +3 \beta^2\gamma \,q)\epsilon_0.$
For fixed $f(\gamma)$ this energy profile is systematically higher than \eqref{eq:energy_profile_quench_0}. However, the functional is again minimized by $f \equiv 0$.\\

\subsection{Results: the case ${\bf v}\to{\bf v}_{\rm soft}^1$}
\noindent The case ${\bf v}\to{\bf v}_{\rm soft}^1$ is richer. In this case, depending on the values of $\epsilon_a$ and $q$, the stationary point ${\bm \sigma}_1$ is either a rank-1 saddle ($q_{sm} < q \leq q_M$), a minimum with one isolated mode in the Hessian ($q_{un} < q  \leq q_{sm}$), or an uncorrelated minimum ($q<q_{un}$), see Fig. \ref{fig:2_point_phase_diag}. Whenever the Hessian at ${\bm \sigma}_1$ has an isolated eigenvalue ($q>q_{un}$), it is the smallest eigenvalue and its typical value and the typical value $u^1_{\rm typ}$ are given by
\begin{equation}\label{eq:Typicals}
\begin{split}
&\lambda_{\rm typ}^1=
\frac{3 \, \delta \epsilon_q}{\sqrt{2}q }
\quadre{\frac{1+3 q^2}{
   1-q^2}
      - \sqrt{1-  \frac{2(1-q^2)^2}{3(1+q^2)\delta \epsilon_q^2}}}\\
      &u_{\rm typ}^1=\frac{1+q^2}{1+3q^2 -\frac{\sqrt{2} (q-q^3) \lambda_{\rm typ}^1 }{3 \, \delta \epsilon_q}}\quadre{1- \frac{(1-q^2) \, g( \lambda_{\rm typ}^1)}{12 \sqrt{2} (q+q^3) \delta \epsilon_q} }
      \end{split}
\end{equation}
where $\delta \epsilon_q=\epsilon_0-q \epsilon_1$ and $g(\lambda)=1+3 q^2 \lambda- \text{sign}(\lambda) (1-q^2)\sqrt{\lambda^2- 24}$, see Eq.~\eqref{eq:app:IsoExplicit} and Eq.~\ref{eq:app:ProjVectors}. In this case, a careful computation of the energy profile gives: 
\begin{align}
\label{eq:energy_quench_1}
\begin{split}
\epsilon_{{{\bf v}_{\rm soft}^1}}[\gamma;f]&=(\gamma^3+3\gamma^2\beta\,q)\epsilon_1+
(\beta^3+3\gamma\beta^2\,q)\epsilon_0\\& + \gamma\,\beta\, f
\sqrt{\frac{u^1_{\rm typ}(1-q^2)}{2(1-u^1_{\rm typ})}}
\left(\frac{6 \sqrt{2} \, q \, (\epsilon_0-q \epsilon_1)}{1-q^2}-\lambda_{\rm typ}^1\right)\\
&+\frac{f^2\,(\gamma + 2 \beta)\,  u^1_{\rm typ} }{2\sqrt{2}(1-u^1_{\rm typ})}\left(\frac{6 \sqrt{2} \, q \, (\epsilon_0-q \epsilon_1)}{1-q^2}-\lambda_{\rm typ}^1\right)\\&+\frac{f^2 \,\beta\,u^1_{\rm typ} }{2\sqrt{2}(1-u^1_{\rm typ})}\frac{6 \sqrt{2} q (q \epsilon_0-\epsilon_1)}{1-q^2} +\frac{f^2\,\gamma}{2\sqrt{2}}\lambda_{\rm typ}^1\\
& +\frac{f^2 \,\beta\, }{2\sqrt{2}(1-u^1_{\rm typ})}\int_{-2\sqrt{6}}^{2\sqrt{6}} d\lambda \,\frac{\sqrt{36 - \lambda^2}}{18 \, \pi } \,\Phi(\lambda_{\rm typ}^1,\lambda)\,\lambda.
\end{split}
\end{align}
 The function $\Phi$ in the last term is precisely the eigenvector overlap introduced above, which gives the typical value of the overlap between the eigenvector associated to $\lambda^1_{\rm min}$ and any arbitrary eigenvector of $\mathcal{\tilde{H}}({\bm \sigma}_0)$ with eigenvalue $\lambda$ in the bulk of the Wigner's semicircle. The computation of this overlap is found in Chapter~\ref{chapter:rmt_}, Sec.~\ref{sec:theoretical_res2}; the precise expression is rather involved and not of particular interest, we therefore defer to that chapter for the calculations. In Fig.~\ref{fig:density_plots} (\textit{top}) we show two density plots associated to \eqref{eq:energy_quench_1}: clearly in this case the geodesic path is no longer optimal, and thus the energy barrier is lowered by deforming the path in the direction of softest curvature at ${\bm \sigma}_1$. The optimal path is obtained numerically, by selecting the lowest energy point for each $\gamma$ and by verifying a posteriori that this leads to a well defined continuous path. To show better the relation between the optimal and geodesic paths, in Fig.~\ref{fig:barriers} we compare the energy profile \eqref{eq:energy_quench_1} evaluated along the geodesic and optimal paths, for  ${\bm \sigma}_1$ being a correlated minimum  ($q_{un} < q \leq q_{ms}$) or a saddle ($q_{ms} < q < q_M$). In the latter case, the optimal path lies entirely below $\epsilon_{\rm th}$ (dashed horizontal line). When ${\bm \sigma}_1$ is an uncorrelated minimum, since $u_{\rm typ}^1=0$ the behavior is analogous as in Eq.~\eqref{eq:energy_profile_quench_0} and the optimal path is again the geodesic one. Quite interestingly, the optimal paths seen in Fig.~\ref{fig:barriers} are significantly lower than their geodesic counterparts. In particular, we see that the optimal path to the rank-1 saddle (yellow dotted line) is below threshold.
 \begin{figure}[t!]
\centering
\includegraphics[width=0.65\textwidth]{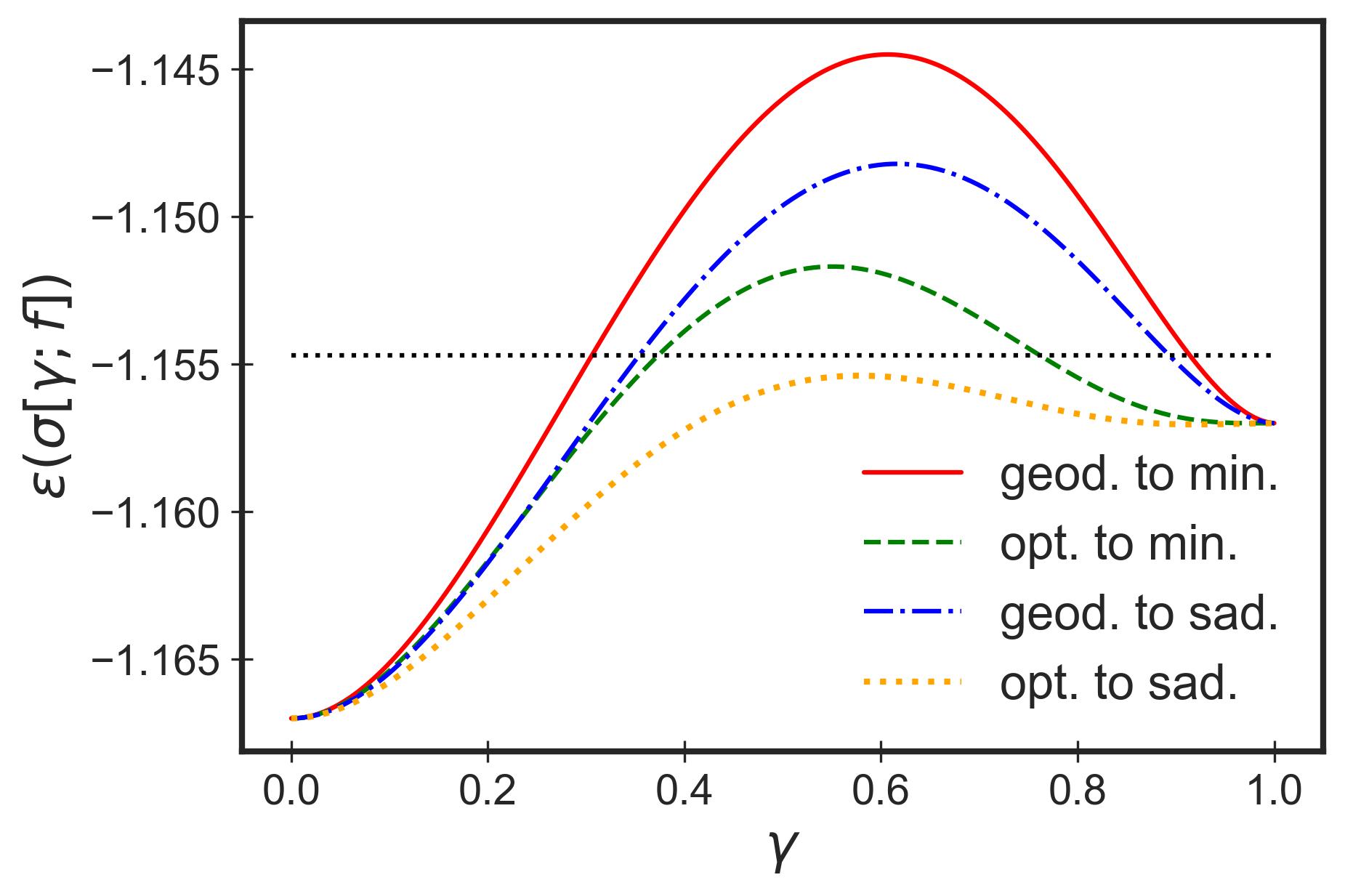}
\caption{Comparison between the energy profile along the geodesic (geod.) and optimal (opt.) paths, for the same parameters as in Fig.~\ref{fig:density_plots}({\it top}) which correspond to ${\bm \sigma}_1$ being a minimum (min.) or rank-1 saddle (sad.). The saddle is obtained by keeping the same energy and by increasing $q$. We see that optimal paths are significantly lower than the geodesic ones, with the optimal path to the saddle being below threshold. }
\label{fig:barriers}
\end{figure} 
Although we know that the rank-1 saddle must be dynamically connected to the reference minimum \cite{ros2021dynamical}, it is not obvious a priori that geodesic paths perturbed in the direction of the softest mode of the rank-1 saddle should be representative of the actual trajectory followed by the system. Our analysis shows that these artificial paths can \textit{at least} capture the fact that the softest mode provides a lower path to the reference minimum.


\subsection{Results: the case ${\bf v}\to{\bf v}_\text{Hess}$}
Similarly as above, plugging ${\bf v}_{\rm Hess}$ into \eqref{eq:PrePAth} and computing the averages one finds:
\begin{align}
\label{eq:energy_profile_Hess}
\begin{split}
\epsilon_{{{\bf v}_{\rm Hess}}}[\gamma;f]&=\tonde{\gamma^3 +3\gamma^2 \beta \,q}  \epsilon_1+  (\beta^3+3 \beta^2 \gamma \,q)\epsilon_0- \gamma \beta f(\gamma)\;\sqrt{\frac{3(1-q^2)^2}{1+q^2}}.
\end{split}
\end{align}
This case is interesting because the optimal barrier along the perturbed path is lower than the geodesic one even for $q<q_{un}$  (although in that case it is still above the threshold for all choices of $\epsilon_a < \epsilon_{\rm th}$).
In Fig.~\ref{fig:energy_barrier} we plot a comparison between the energy barrier of the geodesic path and that of the optimal paths obtained with the various prescriptions described above, for given $\epsilon_0=-1.167,\epsilon_1=-1.157$ and varying $q$. All barriers decrease as $q$ increases; when ${\bf v}={\bf v}_{\rm Hess}$ the deformed path is always associated to a lower energy barrier with respect to the geodesic, while for ${\bf v}={\bf v}^1_{\rm soft}$ this is true only for $q>q_{un}$. For large values of $q$ the barrier along the perturbed paths lies below $\epsilon_{\rm th}$, but this is true only within the range $q_{ms}<q<q_M$, when the arrival point is a rank-1 saddle. We find that this remains true for arbitrary values of  $\epsilon_0,\epsilon_1< \epsilon_{\rm th}$. For the largest $q \lesssim q_M$, the curve associated to ${\bf v}^1_{\rm soft}$ is flat (blue line in Fig.~\ref{fig:energy_barrier}), indicating that the energy profile becomes monotonically increasing in the interval $\gamma \in [0,1]$ with a maximum at $\gamma=1$, equal to $\epsilon_1$. In view of this comparison, one also understands why the geodesic path is no longer optimal when ${\bf v} \to {\bf v}_{\rm soft}^1$ and $u^1_{\rm typ} \neq 0$: in this case, indeed, ${\bf v}_{\rm soft}^1$ has an $\mathcal{O}(1)$ projection on the vector ${\bf v}_{\rm Hess}$, and thus allowing the path to deviate in the direction of ${\bf v}_{\rm soft}^1$ leads to a lower energy barrier.

\begin{figure}[t!]
\centering
\includegraphics[width=0.65\textwidth]{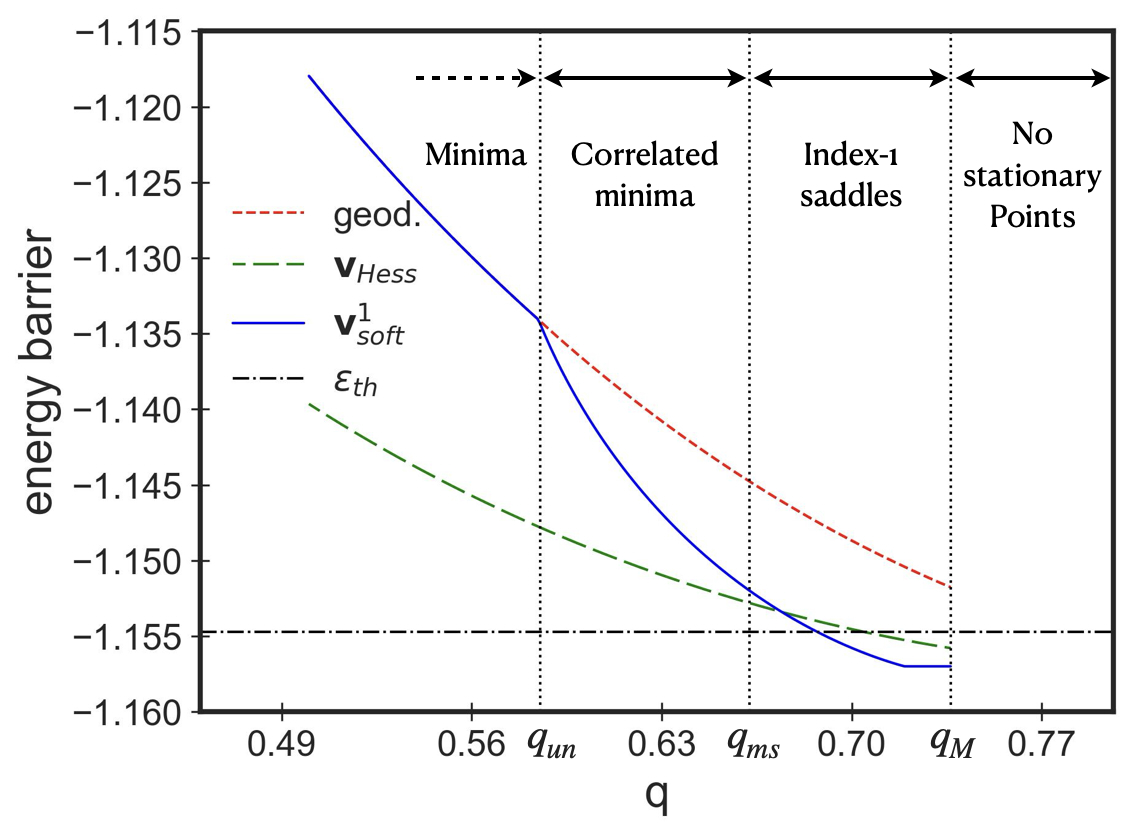}
\caption{Energy barrier along the geodesic and optimal (perturbed) paths as a function of $q$, for $\epsilon_0=-1.167$, $\epsilon_1=-1.157$ and ${\bf v}$ being equal to either $ {\bf v}^1_{\rm soft}$ or ${\bf v}_\text{Hess}$.}
\label{fig:energy_barrier}
\end{figure}

\subsection{Summary of results}
Let us briefly resume the findings of this first approach. We have characterized the energy density profile along perturbed geodesic paths between a deep minimum of energy density $\epsilon_0$ and a local minimum/ rank-1 saddle of energy density $\epsilon_1$, both below threshold. We have found that
\begin{itemize}
    \item Geodesic paths are always above threshold
    \item Perturbed paths that follow the softest curvature of the landscape at ${\bm \sigma}_0$, given by ${\bf v}^0_\text{soft}$ are on average associated to higher energy barriers, so that they are likely to be uncorrelated to good transition paths, at variance with simulations of jammed and mildly super-cooled particles \cite{xu2010anharmonic,widmer2008irreversible, khomenko2021relationship}. A possible discrepancy is that the Hessian plays a different role, since our regime would correspond to a deep super-cooled state.
    \item However, we can leverage information on the Hessian at ${\bm\sigma}_0$ to lower the barrier, by following ${\bf v}_\text{Hess}$, which encodes for the orthogonal component of the gradient along the path. It would be interesting to test the role of the Hessian by studying energy paths and barriers in small systems \cite{heuer2008exploring,baity2021revisiting}.
    \item When $(\epsilon_1,q)$ belongs to the hatched zone in Fig.~\ref{fig:2_point_phase_diag}, then the fixed point at the end has an outlier in the spectrum, and is either a correlated minimum or a rank-1 saddle. In that case, we can significantly lower the barrier by encoding in ${\bf v}_\text{soft}^1$ information on the eigenvector associated to this spike. This case of correlated minima is likely to be the one more closely related to the results found in finite dimensional systems \cite{coslovich2019localization}.
    \item The barriers of the perturbed paths can lie below the threshold energy, but only when ${\bm \sigma}_1$ is a rank-1 saddle. We cannot exclude below threshold pathways also when ${\bm\sigma}_1$ is a deep minimum, but our methods are not enough to identify them, and further research is required.
    \item All in all our analysis seems to suggest that deep minima are separated by above threshold barriers, similarly as \cite{rizzo2021path}. Whether typical pathways below threshold exist at all is an open question. In Sec.~\ref{sec:en_land_perspectives} we will make some hints on the activated dynamics arising from both this and the next calculation.
\end{itemize}

\section{The three-point complexity}
\label{sec:en_land_three_point}
In this section we concentrate on probing the arrangement of triplets of local minima in the energy landscape of the pure spherical $p-$spin model, by means of a Kac-Rice computation. We remind that here there is no temperature, and we are solely interested in the properties of the random landscape $\mathcal{E}$.

\subsection{Definition}
\label{subsec:3_pt_def}
In this work we go beyond the two-point complexity, and consider a three-point complexity, thus extending the analysis of~\cite{ros2019complexity,ros2020distribution} by computing the asymptotic behavior (for large $N$) of the typical value of the random variable $\mathcal{N}_{{\bf s}_0\, {\bf s}_1}(\epsilon_2, q_0, q_1|q,\epsilon_0,\epsilon_1)$. This is the number of stationary points ${\bf s}_2$ of energy density $\epsilon_2$, that are found at overlap $q_1$ with a stationary point ${\bf s}_1$ and at overlap $q_0$  from another stationary point ${\bf s}_0$, see Fig.~\ref{fig:landscape_three}. In turn, the stationary points ${\bf s}_1$ and ${\bf s}_0$ are at a given overlap $q$ with each others.
We are interested in the three-point complexity $\Sigma^{(3)}(\epsilon_2, q_0, q_1 | \epsilon_1,\epsilon_0, q)$, defined as 
\begin{align}
\label{eq:quenchedcomp3}
\Sigma^{(3)}=\lim_{N\to\infty} \frac{1}{N}\mathbb{E}\left[\log\mathcal{N}_{{\bf s}_0 \,{\bf s}_1}(\epsilon_2, q_0, q_1)\right]_{0,1},
\end{align}
where we neglect some function arguments (in $\mathcal{N}_{{\bf s}_0,{\bf s}_1}$ and $\Sigma^{(3)}$) to make notations a bit less heavy.  Much like for the two-point complexity, in this case the average $\mathbb{E} [ \cdot ]_{0,1}$, which we already encountered before in Eq.~\ref{eq:en_land_E_0_1_h}, denotes the flat average over the stationary points ${\bf s}_1$ with energy density $\epsilon_1$, \emph{constrained} to be at overlap $q$ with another stationary point ${\bf s}_0$ with energy density $\epsilon_0$ extracted with a flat measure; additionally, the landscape is averaged over:
\begin{equation}\label{eq:averages}
\mathbb{E} [ \cdot ]_{0,1}=\mathbb{E}\left[
\int_{\mathcal{S}_N(\sqrt{N})} d{\bf s}_0\frac{\omega_{\epsilon_0}({\bf s}_0)}{\mathcal{N}(\epsilon_0)}\int_{\mathcal{S}_N} d{\bf s}_1 \frac{\omega_{\epsilon_1,q}({\bf s}_1|{\bf s}_0)}{\mathcal{N}_{{\bf s}_0}(\epsilon_1,q|\epsilon_0)} \;\, \cdot\right],
\end{equation}
\begin{figure*}[t!]
    \centering
    \includegraphics[width=0.42\textwidth, trim={2 2 2 2},clip]{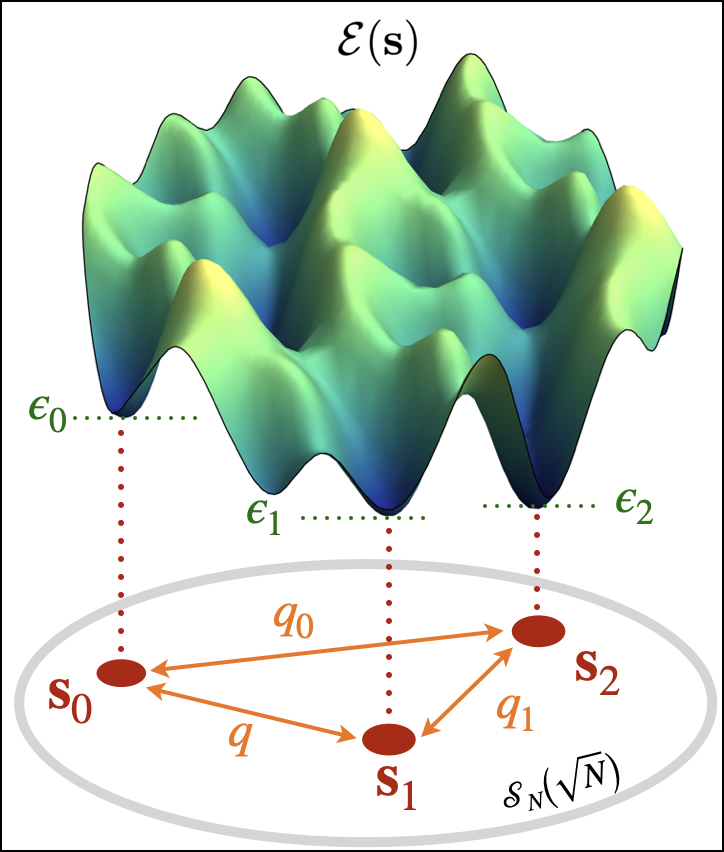}
    \caption{Pictorial representation of a landscape with the three configurations ${\bf s}_0,{\bf s}_1,{\bf s}_2$, and their overlaps.}
    \label{fig:landscape_three}
\end{figure*}
where $\omega_{\epsilon_1,q}$ is the measure that selects configurations ${\bf s}_1$ that are stationary points with given parameters $\epsilon_1, q$, defined in Eq.~\eqref{eq:Measure1}, and similarly $\omega_{\epsilon_0}$ defined in Eq.~\eqref{eq:Measure0}. The random variables $\mathcal{N}_{{\bf s}_0}(\epsilon_1,q|\epsilon_0)$ was also defined in Eq.~\ref{eq:en_land_N_s0}. The parameters $\epsilon_1, q$ are chosen in such a way that typically stationary points are found at those values, i.e., $\Sigma^{(2)}(\epsilon_1,q|\epsilon_0) \geq 0$ (the colored region in Fig. \ref{fig:2_point_phase_diag}). The number $\mathcal{N}_{{\bf s}_0\, {\bf s}_1}(\epsilon_2, q_0, q_1)$ is obtained as
\begin{equation}\label{eq:enne}
   \mathcal{N}_{{\bf s}_0\, {\bf s}_1}(\epsilon_2, q_0, q_1)= \int_{\mathcal{S}_N(\sqrt{N})} d {\bf s}_2\,  \omega_{\epsilon_2,q_0, q_1}({\bf s}_2 | {\bf s}_1,{\bf s}_0)
\end{equation}
where now
\begin{equation}\label{eq:Measure2}
  \begin{split}
\omega_{\epsilon_2,q_0, q_1}({\bf s}_2|{\bf s}_1,{\bf s}_0)=&|\det \nabla^2_\perp \mathcal{E}({\bf s}_2)|\delta(\nabla_\perp \mathcal{E}({\bf s}_2))\times\\
&\times \delta(\mathcal{E}({\bf s}_2)-N\epsilon_2)\delta({\bf s}_2 \cdot {\bf s}_0-N q_0)\delta({\bf s}_2 \cdot {\bf s}_1-N q_1).
\end{split}
\end{equation}
Notice that in the definition of $\Sigma^{(3)}$, the roles of ${\bf s}_1$ and ${\bf s}_0$ are not interchangeable a priori: while ${\bf s}_0$  is selected with no other constraints than its energy, ${\bf s}_1$ is selected with a measure that is \emph{conditioned} to the overlap with ${\bf s}_0$. We also remark that the three-point complexity differs from the zero-temperature limit of the three-replica potential introduced in \cite{cavagna1997structure}, in that the configurations are extracted with a different measure which enforces them to be stationary points of the landscape. In particular, our computation is particularly useful for the pure $p$-spin model given the correspondence between TAP states and local minima (cf. Sec.~\ref{sec:tap_approach}). \\

\subsection{On quenched vs annealed}
\label{subsec:quenched_annealed}
We already made some general considerations on quenched vs annealed in Sec.~\ref{sec:en_land_quench_vs_ann_consid}. Let us apply those considerations to the present case. The complexity~\eqref{eq:quenchedcomp3} as we defined it above is a \emph{quenched} quantity, in that it determines the asymptotic scaling of the \emph{typical} number of stationary points ${\bf s}_2$. The \textit{annealed} version then reads:
\begin{align}
\label{eq:annealedcomp3}
\Sigma^{(3)}_{A}=\lim_{N\to\infty}\frac{\log \mathbb{E}[\mathcal{N}_{{\bf s}_0\,{\bf s}_1}(\epsilon_2, q_0, q_1)]_{0,1}}{N}
\end{align}
with the same average as in \eqref{eq:averages}. Of course, this quantity controls the asymptotic scaling of the \emph{average} number of stationary points ${\bf s}_2$, which in general is an upper bound to the quenched value. In the setting we are considering, the annealed complexity \eqref{eq:annealedcomp3} still requires some form of replica trick to be calculated,  due to the presence of the denominators $\mathcal{N}(\epsilon_0)$ and $\mathcal{N}_{{\bf s}_0}(\epsilon_1,q)$ in \eqref{eq:averages}. To bypass the use of replicas, one may consider an approximation in which the expectation value of the ratio in \eqref{eq:averages} is factorized into the ratio of expectation values of the numerator and denominator, meaning that the average $ \mathbb{E}[ \cdot ]_{0,1}$ is replaced with:
\begin{equation}\label{eq:averagesAnn}
 \mathbb{E}[ \cdot ]_{2A}:= \frac{\mathbb{E}\left[
\int_{\mathcal{S}_N(\sqrt{N})^{\otimes 2}} d{\bf s}_0 \,d{\bf s}_1 \,  \omega_{\epsilon_0}({\bf s}_0)\,\omega_{\epsilon_1,q}({\bf s}_1|{\bf s}_0) \;\, \cdot\right]}{\mathbb{E}[\mathcal{N}(\epsilon_0)\,  \mathcal{N}_{{\bf s}_0}(\epsilon_1,q)] }.
\end{equation}
This corresponds to the \emph{doubly-annealed} average for the present case, where all possible approximations are made: $\log$ and averages are exchanged, denominators and numerators are factorized, see Sec.~\ref{sec:en_land_quench_vs_ann_consid}. The \textit{doubly-annealed} complexity that one obtains then reads\footnote{again omitting function arguments for notational simplicity}:
\begin{equation}\label{eq:annealedcomp3doubly}
\Sigma^{(3)}_{2A}=\lim_{N\to\infty} \frac{1}{N}\log \mathbb{E}_{2A}[ \mathcal{N}_{{\bf s}_0\,{\bf s}_1}].
\end{equation}
In general, as we have shown in Ref.~\cite{pacco_triplets_2025}, annealed and quenched averages differ for the three-point complexity. Nonetheless, they match in some special important points (as explained in the following). Moreover, they are very close numerically, so that the physical picture is qualitatively unchanged. The calculations for the quenched complexity using a Replica Symmetric ansatz are the topic of \cite{pacco_quenched_triplets_2025}, although the plots here are done with it.

\section{Landscape's geometry: accumulation and clustering}
\label{subsec:defs_clustering}
We now introduce some notions and terminology relevant to the subsequent discussion. The goal of this work is to use the three-point complexity \eqref{eq:quenchedcomp3} to determine to what extent the landscape in the vicinity of a stationary point ${\bf s}_0$ (e.g., a deep local minimum) differs from the landscape in typical regions of configuration space that are not conditioned to be near ${\bf s}_0$. In other words, the knowledge of both the two- and three-point complexity allows us to compare the local structure of the landscape (probed by ${\bf s}_2$) in the vicinity of:\\

\begin{itemize}
    \item[(i)]   \emph{typical} stationary points ${\bf s}_1$ with energy density $\epsilon_1$, i.e., stationary points extracted with the uniform measure over all stationary points with that energy density and no additional constraint;
    \item[(ii)]  \emph{conditioned} stationary points ${\bf s}_1$ with  energy density $\epsilon_1$, at overlap $q$ with another stationary point ${\bf s}_0$ with energy density $\epsilon_0$.
\end{itemize}
The first information is encoded in the two point complexity $\Sigma^{(2)}(\epsilon_2, q_1 |\epsilon_1)$, the second one in the three-point complexity $\Sigma^{(3)}(\epsilon_2, q_0, q_1 | \epsilon_1,\epsilon_0, q)$. Put differently, we want to see what is the effect that the presence of ${\bf s}_0$ has on the distribution of ${\bf s}_2$, extracted conditionally to ${\bf s}_1$.\\

\begin{figure*}[h!]
\centering
\includegraphics[width=0.8
\textwidth, trim=5 5 5 5,clip]{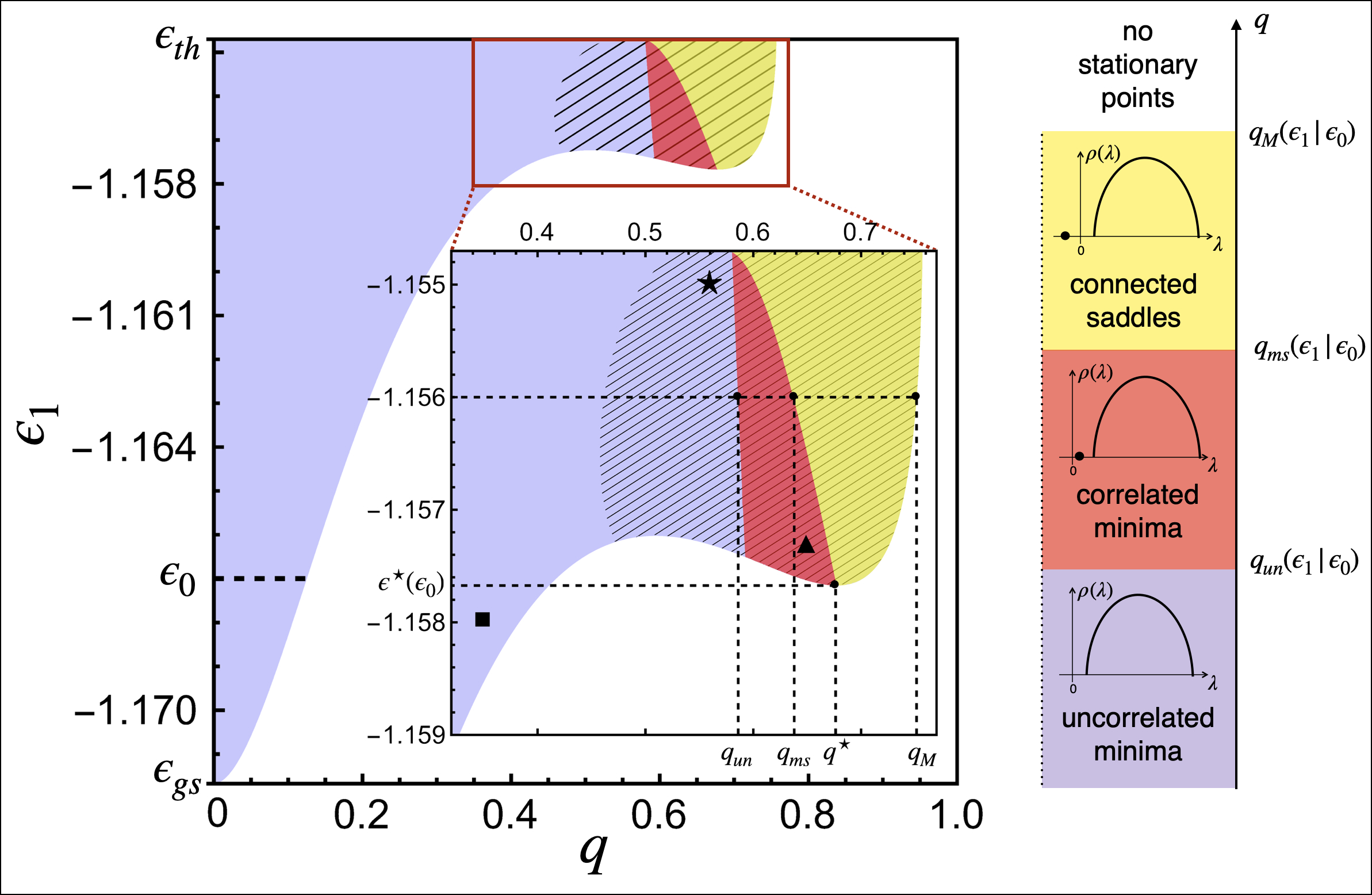}
\caption{Signatures of clustering for $p=3$. \textit{Left.} The colored area shows the range of $\epsilon_1$ and $q$ where $\Sigma^{(2)}(\epsilon_1,q|\epsilon_0) \geq 0$ for $\epsilon_0 = -1.167$. Blue indicates minima, red indicates correlated minima, and yellow indicates rank-1 saddles (see \textit{Right} picture). The lowest energy of rank-1 saddles is $\epsilon^*(\epsilon_0)$, with overlap $q^*(\epsilon_0)$. \textit{Inset}. The black hatched region marks the values of $q$ for which clustering occurs with $\epsilon_2 = \epsilon_1$, and $\epsilon_0 = -1.167$. We refer to Sec.~\ref{subsec:defs_clustering} for a definition of clustering. The symbols $\blacksquare, \blacktriangle, \bigstar$ mark specific values of parameters listed in Table \ref{fig:table} and considered in  Fig.~\ref{Fig:Plots3D}. \textit{Right.} Eigenvalue distribution of the Hessian at the stationary points ${\bf s}_1$ for $\epsilon_1 > \epsilon^*(\epsilon_0)$. For $q < q_{\rm un}(\epsilon_1|\epsilon_0)$, the eigenvalues are distributed according to a semicircle law and ${\bf s}_1$ is an uncorrelated minimum. For $q_{\rm un}(\epsilon_1|\epsilon_0) < q < q_{\rm ms}(\epsilon_1|\epsilon_0)$, the Hessian has a positive isolated eigenvalue, and  ${\bf s}_1$ is a correlated minimum. For $q_{\rm ms}(\epsilon_1|\epsilon_0) < q < q_M(\epsilon_1|\epsilon_0)$, ${\bf s}_1$ is a rank-1 saddle. No stationary points exist for $q > q_M(\epsilon_1|\epsilon_0)$.}
\label{fig:2D_plot_clustering}
\end{figure*}
\noindent When comparing the three-point and two-point complexities, it is straightforward to argue that the following inequality must hold for all values of parameters:
\begin{equation}\label{eq:BoundTrivial}
\Sigma^{(3)}(\epsilon_2, q_0, q_1 | \epsilon_1,\epsilon_0, q) \leq \Sigma^{(2)}(\epsilon_2, q_0 |\epsilon_0).
    \end{equation}
Indeed, the number of stationary points ${\bf s}_2$ conditioned on the properties of ${\bf s}_1$ is smaller than the number of stationary points of the same energy density counted without such conditioning. In both the quantities in \eqref{eq:BoundTrivial}, the stationary points ${\bf s}_2$ are enforced to be at given overlap with  ${\bf s}_0$, and only a fraction of these points also satisfy the constraint on the overlap with ${\bf s}_1$. As we justify below, the bound is saturated  for $q_1=q\cdot q_0$.\\

\noindent On the other hand, one can not assume an analogous bound exchanging the role of ${\bf s}_0$ and ${\bf s}_1$, i.e. comparing $\Sigma^{(2)}(\epsilon_2, q_1 |\epsilon_1)$ with $\Sigma^{(3)}(\epsilon_2, q_0, q_1 | \epsilon_1,\epsilon_0, q)$. Indeed, the properties of ${\bf s}_1$ are not the same in these two quantities: in one case ${\bf s}_1$ is a typical stationary point at that energy density, while in the other case it is conditioned. In fact, we shall show that there are values of parameters for which
\begin{equation}\label{eq:Bound2}
\Sigma^{(3)}(\epsilon_2, q_0, q_1 | \epsilon_1,\epsilon_0, q) > \Sigma^{(2)}(\epsilon_2, q_1 |\epsilon_1).
\end{equation}
This leads us to define the notions of \textit{local accumulation} and \textit{clustering} of stationary points, as specific instances of this phenomenon.\\

\noindent \textbf{Local accumulation.} We say that local accumulation occurs whenever for some fixed values of $\epsilon_0,\epsilon_1,q,q_0$, there exists a region of values of $q_1,\epsilon_2$ such that Eq.~\eqref{eq:Bound2} holds. In other words, there are regions in the vicinity of ${\bf s}_0$ in which the number of stationary points ${\bf s}_2$ (conditioned to the properties of ${\bf s}_1$) is higher than the value predicted by the two-point complexity, which measures the complexity in absence of the conditioning to ${\bf s}_0$. \\

 \noindent\textbf{Clustering.} We use the word clustering to designate a special instance of local accumulation, occurring when $\Sigma^{(3)}(\epsilon_2, q_0, q_1 | \epsilon_1,\epsilon_0, q)>0$ but $\Sigma^{(2)}(\epsilon_2, q_1 |\epsilon_1) = - \infty$: there are exponentially many stationary points ${\bf s}_2$ at an overlap $q_1$ (with ${\bf s}_1$) which is large enough that, typically—i.e., in the absence of ${\bf s}_0$—there would be none. In other words, clustering occurs whenever $\Sigma^{(3)}(\epsilon_2, q_0, q_1 | \epsilon_1,\epsilon_0, q)$ is positive for values of $q_1>q_M(\epsilon_2|\epsilon_1)$, where $q_M(\epsilon_2|\epsilon_1)$ was defined in Eq.~\eqref{eq:qMAx} and it identifies the maximal overlap of  stationary points of energy density $\epsilon_2$ with a minimum of energy density $\epsilon_1$. In Fig.~\ref{fig:2D_plot_clustering} we plot again the two-point complexity of ${\bf s}_1$ conditioned to ${\bf s}_0$ for $\epsilon_0=-1.167$ ($p=3$), and we show in the inset the region where clustering occurs (hatched region) for the particular choice $\epsilon_2=\epsilon_1$. This means that within that region there exist values of $q_0$ and $q_1>q_M(\epsilon_2|\epsilon_1)$ such that $\Sigma^{(3)}(\epsilon_2, q_0, q_1 | \epsilon_1,\epsilon_0, q)>0$, where it is intended $\epsilon_2=\epsilon_1$\footnote{otherwise we could not show the region in a 2D plot}. \\

It is clear that local accumulation (as well as clustering, which is a special case of it) indicates the fact that the landscape in the vicinity of ${\bf s}_0$ is strongly correlated to ${\bf s}_0$ itself: the distribution of the other stationary points in that region is not the typical one. In Sec.~\ref{eq:SpecialLines} and Sec.~\ref{sec:LandscapeEvolution} we show that such correlations are indeed present in the landscape, through a quantitative analysis of the results of the three-point complexity calculation. These landscape correlations also have an interpretation in the context of activated dynamics: as we elaborate in Sec.~\ref{sec:en_land_actv_dyn}, they can be seen as a signature of avalanche-like behavior in the dynamics.

\begingroup
\renewcommand{\arraystretch}{0.9}
\begin{table}[h!]
    \centering
    \begin{tabular}{|c|c|c|c|c|c|}
        \hline
        \multicolumn{6}{|c|}{$\epsilon_{gs}\approx -1.17167,\quad\epsilon_{th}\approx -1.1547$} \\
        \hline
        \multicolumn{6}{|c|}{$\epsilon_0=-1.167,\quad \epsilon^*(\epsilon_0)\approx-1.15767$} \\
        \hline
        \text{icon} &$\bm{\epsilon}$ & $\bf{q_{un}}(\epsilon|\epsilon_0)$ & $\bf{q_{ms}}(\epsilon|\epsilon_0)$ & $\bf{q_{M}}(\epsilon|\epsilon_0)$ & $\bf{q}$ \\
        \hline
        $\bigstar$ & -1.155 & 0.5763 & 0.6028 & 0.7564 & 0.56\\
        \hline
        $\blacktriangle$ & -1.1573 & 0.586 & 0.669 & 0.7266 & 0.65\\
        \hline
        $\blacksquare$ & -1.158 &  &  & 0.3829 & 0.35\\
        \hline
    \end{tabular}
    \caption{Values of the parameters of the points marked in Fig.~\ref{fig:2D_plot_clustering}, $p=3$.}
    \label{fig:table}
\end{table}
\endgroup

\section{The three-point complexity: results}
\label{sec:3point_results}
In the following, we discuss the results of the calculation of both the doubly annealed and quenched complexity. Calculations of the doubly annealed complexity are found in \cite{pacco_triplets_2025}, whereas the quenched complexity will be presented in \cite{pacco_quenched_triplets_2025}. \\

\noindent We begin by noting that annealed and doubly annealed complexities coincide:
\begin{equation}\label{eq:EqaAnn}
   \Sigma^{(3)}_{2A}(\epsilon_2, q_0, q_1 | \epsilon_1,\epsilon_0, q)= \Sigma^{(3)}_{A}(\epsilon_2, q_0, q_1 | \epsilon_1,\epsilon_0, q),
\end{equation}
this being a consequence of the fact that the quenched one-point and two-point complexities equal their annealed counterparts. Therefore, the factorization of the expectation values \eqref{eq:averagesAnn} is justified. Instead, the quenched complexity is strictly smaller than the annealed one for generic choices of the parameters \cite{pacco_triplets_2025}. For the values of parameters that we explore, however, the numerical values of quenched and annealed complexity happen to be quite close. Moreover, for special choices of the parameters $q_0, q_1$, the two functions coincide, as we discuss in Sec.~\ref{eq:SpecialLines}. 
\begin{figure}[t!]
\centering
\includegraphics[width=0.95
\textwidth, trim=5 5 5 5,clip]{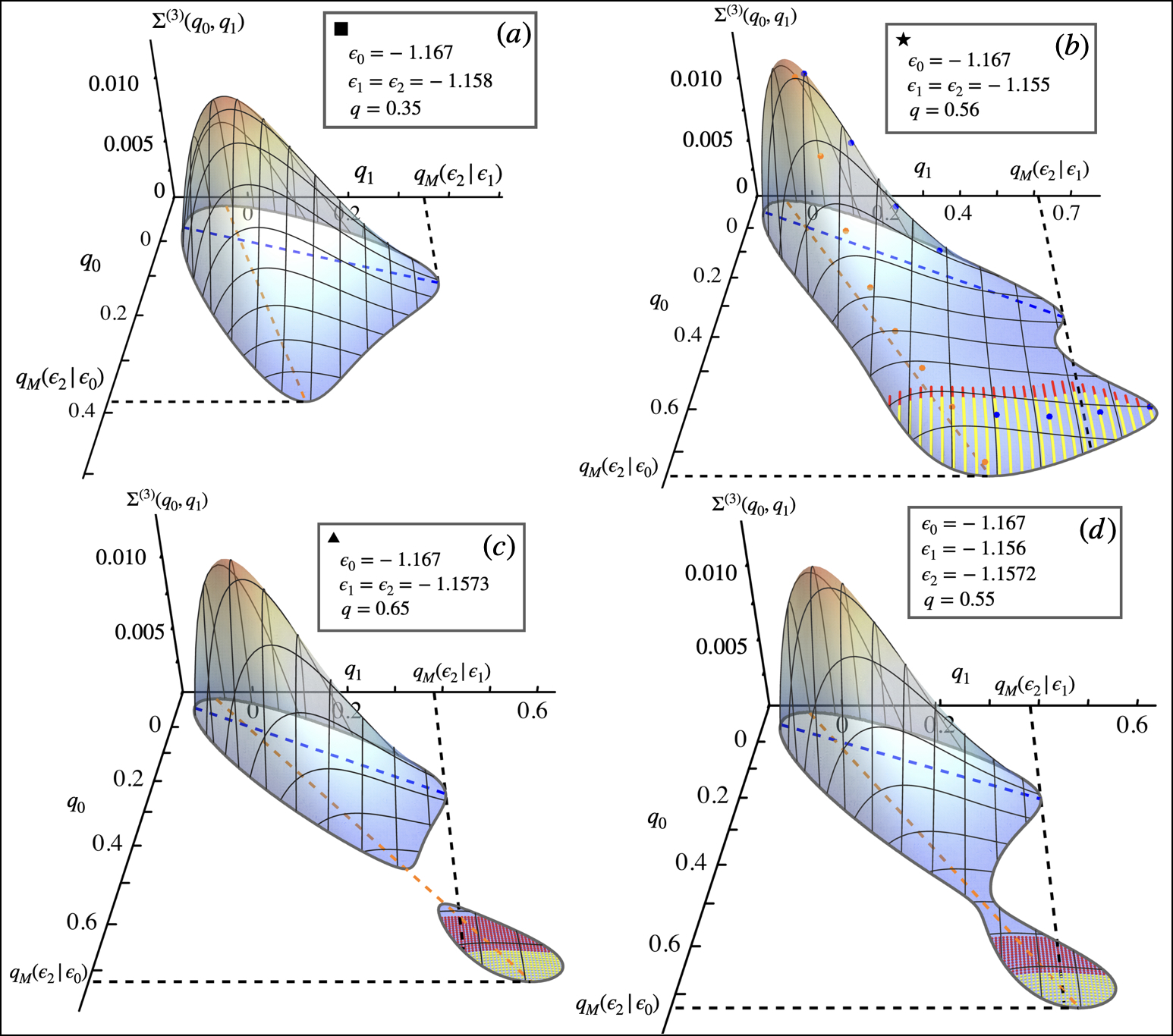}
\caption{Three-point complexity $\Sigma^{(3)}(\epsilon_2,q_0,q_1|\epsilon_1,\epsilon_0,q)$ as a function of overlaps $q_0$ and $q_1$, for specific choices of parameters referring to Fig.~\ref{fig:2D_plot_clustering}, with symbols $\bigstar, \blacktriangle,\blacksquare$. In all plots, the \textit{dashed orange line} in the $(q_0, q_1)$-plane follows $q_1=q q_0$, and the \textit{dashed blue line} follows $q_0=q q_1$. In (b), the orange points indicate the maxima of the complexity at fixed $q_0$, occurring at $q_1=q q_0$. The blue points indicate the maximum at fixed $q_1$, following $q_0=q q_1$ for small $q_1$, but jumping to higher $q_0$ values when $q_1 \approx 0.4$, thus showing the \textit{local accumulation} effect. Red and yellow zones indicate the regions where the stationary points counted, that is ${\bf s}_2$, are typically correlated minima and rank-1 saddles, respectively. In all pictures, ${\bf s}_1$ is taken to be a minimum, except (c), where it is a correlated one (see Fig~\ref{fig:2D_plot_clustering}). Except for (a), where the maximum value $q_M(\epsilon_2|\epsilon_1)$ is not exceeded by the three-point complexity, all the other pictures present \textit{clustering} (meaning that such value is exceeded). In (d), we show an example of clustering where $\epsilon_1\neq \epsilon_2$.}
\label{Fig:Plots3D}
\end{figure}

\subsection{The doubly-annealed complexity, and its reduction to the two-point complexity}\label{sec:DoublyAnnealed}
The calculation of the doubly-annealed complexity gives:
\begin{align}
\begin{split}\label{eq:ann_formulaComp2A}
&\Sigma^{(3)}_{2A}=\frac{Q_{2A}({\bf q})}{2}- f_{2A}({\bm \epsilon}, {\bf q})+I\left(\epsilon_2\sqrt{\frac{p}{p-1}}\right)
\end{split}
\end{align}
where now 
\begin{equation}
\begin{split}
    &Q_{2A}({\bf q})=1+\log\left(\frac{2 (p-1)(1-q^{2p-2})}{1-q^2}\right)+\\
    &\log\left|\frac{1-q^2-q_0^2-q_1^2+2q\,q_0\,q_1}{1-q^{2p-2}-q_0^{2p-2}-q_1^{2p-2}+2(q\, q_0 \,q_1)^{p-1}}\right|
    \end{split}
\end{equation}
and 
\begin{equation}\label{eq:FuncFDA}
\begin{split}
&f_{2A}({\bm \epsilon}, {\bf q})=  \epsilon_2^2Y^{(p)}_{2}({\bf q})+\epsilon_1\epsilon_2 \,Y_{12}^{(p)}({\bf q})+\epsilon_0\epsilon_2 Y_{02}^{(p)}({\bf q})+\\
&\epsilon_0\epsilon_1 [Y_{01}^{(p)}({\bf q})-U(q)]+\epsilon_1^2[Y_{1}^{(p)}({\bf q})-U_1(q)]+\\
&\epsilon_0^2[Y_{0}^{(p)}({\bf q})-U_0(q)-1].
\end{split}
\end{equation}
The functions $I$ and $U_0(q), U(q)$ and $U_1(q)$ are the same functions appearing in the two-point complexity, see Eq.~\eqref{eq:UC}. The remaining $Y^{(p)}({\bf q})$ are 
 functions of the overlaps ${\bf q}=(q, q_0, q_1)$. Their implicit definition is given in Appendix~\ref{app:Y_vars}. Since their explicit expression for general $p$ is rather cumbersome \footnote{we were not able to find an efficient way of simplifying the corresponding expressions, which are better used as implicit functions in a mathematical software.}, we report it only for the case $p=3$, which is the value of $p$ we consider for all plots in this Chapter.\\
We now discuss some special limits of this function. It can be checked explicitly that for $q_1$ fixed, when $q_0= q q_1$, the three-point complexity \eqref{eq:ann_formulaComp2A} reduces to a two point complexity, meaning that:
\begin{equation}\label{eq:Id1}
    \begin{split}
 &\Sigma^{(3)}_{2A}(\epsilon_2, q_0=q q_1, q_1 | \epsilon_1,\epsilon_0, q)\equiv \Sigma^{(2)}(\epsilon_2, q_1|\epsilon_1).
    \end{split}
\end{equation}
Indeed, when $q_0=q q_1$ one finds that all the coefficients multiplying $\epsilon_0$ in \eqref{eq:FuncFDA} vanish exactly, while $Y_1^{(p)}({\bf q})-U_1(q) \to U_0(q_1)$, $Y_2^{(p)}({\bf q})\to U_1(q_1)$ and $Y_{12}^{(p)}({\bf q})\to U(q_1)$. As a result, the three-point complexity becomes independent of the configuration ${\bf s}_0$.
In an analogous manner, for given $q_0$ and $q_1= q q_0$, it holds
\begin{equation}\label{eq:Id2}
    \begin{split}
    \Sigma^{(3)}_{2A}(\epsilon_2, q_0, q_1=q q_0 | \epsilon_1,\epsilon_0, q)\equiv \Sigma^{(2)}(\epsilon_2, q_0|\epsilon_0).
    \end{split}
\end{equation}
In this case, the three-point complexity becomes independent of the configuration ${\bf s}_1$. These two lines cross at $q_0=q_1=0$, where the complexity is maximal, independent of ${\bf q}=(q, q_0, q_1)$ and equal to  $\Sigma(\epsilon_2)$, see Fig.~\ref{Fig:Plots3D} for an example.\\
These two special lines in the $(q_0, q_1)$-plane have a simple entropic interpretation: in fact, if one asks what is the value of the overlap $q_0$ that maximizes the volume of the configuration space associated to ${\bf s}_2$, given the constraints that ${\bf s}_2 \cdot{\bf s}_1= N q_1$
 and ${\bf s}_1 \cdot{\bf s}_0= N q$, one finds that $q_0= q q_1$. 
Analogously, $q_1= q q_0$ maximizes the configuration space associated to ${\bf s}_2$, given the constraints that ${\bf s}_2 \cdot{\bf s}_0= N q_0$ and ${\bf s}_1 \cdot{\bf s}_0= N q$.

\subsection{The quenched complexity}\label{eq:SpecialLines}
We now discuss the results of the calculation of the quenched complexity. We fix again the reference local minimum at $\epsilon_0=-1.167$ (with $p=3$), for the plots in this section. Some relevant values of parameters descending from the two-point complexity for this particular value of $\epsilon_0$ are recalled in Table~\ref{fig:table}. \\

\noindent {\bf When $\Sigma^{(3)}$ reduces to  $\Sigma^{(2)}$.} Representative plots of the quenched three-point complexity $\Sigma^{(3)}(\epsilon_2,q_0,q_1|\epsilon_1,\epsilon_0,q)$ (denoted with $\Sigma^{(3)}(q_0,q_1)$ in the plots for brevity) as a function of the overlaps $q_0, q_1$ are given in Fig.~\ref{Fig:Plots3D} for different values of parameters. One sees that the quenched complexity is always maximal for $q_0=0=q_1$, where it reduces to the one-point complexity $\Sigma(\epsilon_2)$. 
The orange and blue dashed lines in the $(q_0, q_1)$-planes correspond to $q_1= q q_0$ (for fixed $q_0$), and at $q_0= q q_1$ (for fixed  $q_1$), respectively. These are the lines along which the doubly-annealed complexity~\eqref{eq:ann_formulaComp2A} reduces to the two-point complexity, see Eqs.~\eqref{eq:Id1} and \eqref{eq:Id2}. We find that an analogous statement remains true in the quenched calculation: along these lines, the three-point \emph{quenched} complexity coincides with the three-point \emph{doubly-annealed} complexity, 
\begin{equation}
    \begin{split}
 &\Sigma^{(3)}(\epsilon_2, q_0, q_1=q q_0 )= \Sigma^{(3)}_{2A}(\epsilon_2, q_0, q_1=q q_0 ),\\       &\Sigma^{(3)}(\epsilon_2, q_0=q q_1, q_1 )= \Sigma^{(3)}_{2A}(\epsilon_2, q_0=q q_1, q_1),
    \end{split}
\end{equation}
implying:
\begin{equation}
    \begin{split}
 &\Sigma^{(3)}(\epsilon_2, q_0, q_1=q q_0 | \epsilon_1,\epsilon_0, q)= \Sigma^{(2)}(\epsilon_2, q_0|\epsilon_0),\\       &\Sigma^{(3)}(\epsilon_2, q_0=q q_1, q_1 | \epsilon_1,\epsilon_0, q)= \Sigma^{(2)}(\epsilon_2, q_1|\epsilon_1).
    \end{split}
\end{equation}
As explained in Sec.~\ref{sec:DoublyAnnealed}, these lines can be interpreted in terms of maximization of the volume of configuration space accessible to the third configuration ${\bf s}_2$. The value of the complexity along these lines is therefore exact. \\

\noindent {\bf The maxima of the complexity.} We find that for any choice of the parameters ${\bm \epsilon}=(\epsilon_0, \epsilon_1, \epsilon_2)$ and $q$, the quenched three-point complexity at fixed $q_0$ is in fact always maximal for $q_1=q q_0$: the highest number of stationary points ${\bf s}_2$ fulfilling the constraint on $q_0$ is found in the region of configuration space corresponding to the maximal entropy.  Fig.~\ref{Fig:Plots3D}(a) gives an example for which the analogous statement holds true at fixed $q_1$ (the  complexity being maximal at $q_0=q q_1$). In this case, the maxima of the complexity are thus attained at values of the parameters $q_0$ and $q_1$ where the corresponding quenched complexity reduces to the annealed one, and it also coincides with a two-point complexity. Fig.~\ref{Fig:Plots3D}(b) represents instead a different scenario, in which one observes a jump in the value of $q_0$ maximizing the complexity at fixed $q_1$ (see blue points on the 3D plot). This is what happens: for small $q_1$ the maximum is attained at $q_0 = q q_1$; when $q_1$ exceeds a critical value (around $q_1\approx 0.4$ in Fig.~\ref{Fig:Plots3D}(b)) the point where the complexity is maximal jumps to a much larger $q_0$. This means that when ${\bf s}_2$ approaches ${\bf s}_1$, typically one finds a higher number of stationary points in the region where ${\bf s}_2$ is also close to ${\bf s}_0$. This corresponds to the \emph{local accumulation} defined in Sec.~\ref{subsec:defs_clustering}.\\

\noindent {\bf The boundaries of the domain.} The dashed black lines in Fig.~\ref{Fig:Plots3D} indicate the overlaps $q_M(\epsilon_2 | \epsilon_0)$ and $q_M(\epsilon_2 | \epsilon_1)$; these are two quantities related to the two-point complexity, see \eqref{eq:qMAx}. The first one, $q_M(\epsilon_2 | \epsilon_0)$, is the overlap at which one finds the stationary points at energy density $\epsilon_2$ that are  \emph{closest} (at highest overlap) to a stationary point at energy density $\epsilon_0$. One sees from the plots that this is always also the maximal overlap $q_0$ for which the three-point complexity is non-zero. Again, the analogous statement in $q_1$ does not necessarily hold true: except for Fig.~\ref{Fig:Plots3D}(a), in all the other cases the domain where $\Sigma^{(3)}>0$ exceeds $q_M(\epsilon_2 | \epsilon_1)$. 
This corresponds to the  \textit{clustering} defined in Sec.~\ref{subsec:defs_clustering}.\\

Away from the special lines $q_0=q q_1$ and $q_1=q q_0$, the three-point complexity can not be written in terms of the two-point one, meaning that genuine three-point correlations exist between the stationary points. These correlations give rise to several transitions in the structure of the landscape as we change the parameters ${\bm \epsilon}=(\epsilon_0, \epsilon_1, \epsilon_2)$ and ${\bf q}=(q,q_0, q_1)$ describing the properties of the stationary points. We discuss these transitions in Sec.~\ref{sec:LandscapeEvolution}.

\begin{figure}[t!]
\centering
\includegraphics[width=
\textwidth, trim=5 5 5 5,clip]{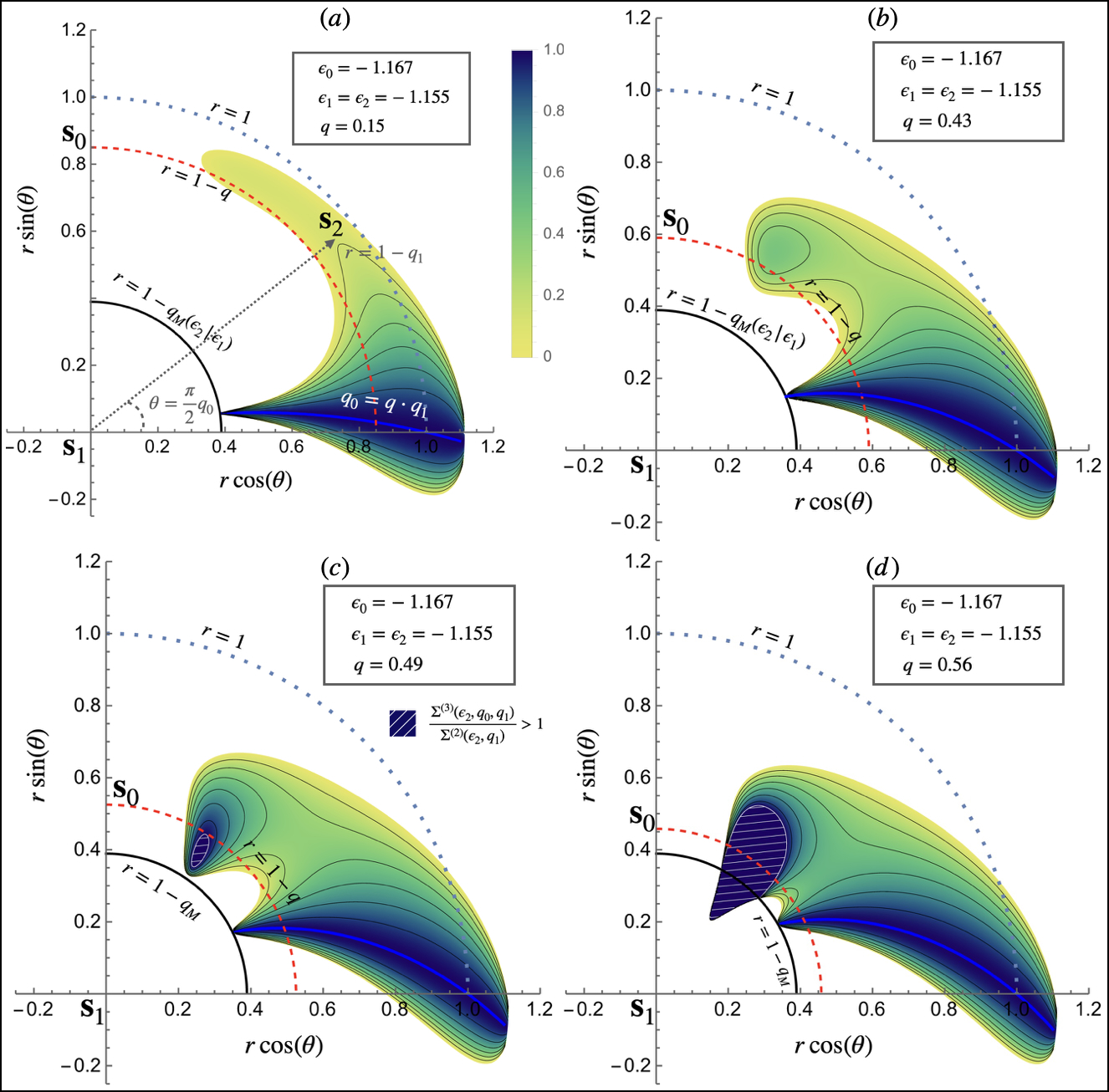}
\caption{ The figures show a density plot of the ratio $\frac{\Sigma^{(3)}(\epsilon_2,q_0,q_1)}{\Sigma^{(2)}(\epsilon_2,q_1)}$. Each point $(\theta, r)$ corresponds to ${\bf s}_2$ with overlaps $q_1=1-r$ and $q_0=2\theta/\pi$. The origin represents ${\bf s}_1$, and the point $(\theta=\pi/2, r=1-q)$ represents ${\bf s}_0$. The blue line follows $q_0=q\,q_1$, where the ratio reaches the value 1. The black arc marks $q_1=q_M(\epsilon_2|\epsilon_1)$, which is exceeded in the case of clustering (figure (d)). Regions with white hatching indicate density values greater than 1, implying \textit{accumulation}. From left to right, and top to bottom, $q$ increases while the other parameters are fixed. In (b), a second maximum emerges along constant $q_1$ curves; in (c), the corresponding area shows accumulation; and in (d), it surpasses $q_1=q_M(\epsilon_2|\epsilon_1)$, indicating \textit{clustering}.}
 \label{Fig:2Ddensityplots}
\end{figure}

\subsection{Dependence on $q$: landscape's transitions}\label{sec:LandscapeEvolution}

To describe the local arrangement of stationary points encoded in the three-point complexity and unveil the presence of landscape's transitions, we consider first the case of fixed energy densities $\epsilon_0=-1.167$, $\epsilon_1= \epsilon_2=-1.155$, and study the landscape's evolution varying the parameter $q$, which measures the overlap between ${\bf s}_0$ and ${\bf s}_1$, in Fig.~\ref{Fig:2Ddensityplots}.  The figure represents a projection of configuration space, where the stationary point ${\bf s}_1$ is placed at the center of the reference frame, while ${\bf s}_0$ is aligned along the $y$ axis at a distance $r=1-q$ (red dashed quarter of a circle). The third stationary point has radial coordinate $r=1-q_1$ (measuring its distance to ${\bf s}_1$) and angular coordinate $\theta= \frac{\pi}{2} q_0$ (measuring its vicinity to ${\bf s}_0$). The colored area of the plot identifies the region of configuration space where we find an exponentially large population of stationary points ${\bf s}_2$ at energy density $\epsilon_2$, meaning that $\Sigma^{(3)}(\epsilon_2, q_0, q_1 | \epsilon_1,\epsilon_0, q)>0$. The intensity of the color plot corresponds to the value of the ratio $\Sigma^{(3)}(\epsilon_2, q_0, q_1 | \epsilon_1,\epsilon_0, q)/\Sigma^{(2)}(\epsilon_2, q_1 | \epsilon_1)$. The different plots in Fig.~\ref{Fig:2Ddensityplots} represent the evolution of the landscape as $q$ increases, meaning that ${\bf s}_0$ and ${\bf s}_1$ are chosen to be progressively closer to each others in configuration space. \\
With increasing $q$, we see the following different regimes:

\begin{itemize}
    \item[(a)] {\bf Depletion regime. } This corresponds to Fig.~\ref{Fig:2Ddensityplots}(a), i.e., to small values of $q$. In this regime the complexity of ${\bf s}_2$ is always maximal along the curve $q_0= q q_1$, which identifies the region of configuration space maximizing the entropy of configurations at fixed $q_1$; along this line (blue in the figures), $\Sigma^{(3)}(\epsilon_2, q_0, q_1 | \epsilon_1,\epsilon_0, q)=\Sigma^{(2)}(\epsilon_2, q_1 | \epsilon_1)$. The complexity of stationary points decreases when moving away from this line, and it is always smaller than $\Sigma^{(2)}(\epsilon_2, q_1 | \epsilon_1)$. In particular, it is progressively smaller as one looks at regions closer and closer to ${\bf s}_0$ (i.e., increasing $\theta$). In this regime, the population of stationary points with energy $\epsilon_2$ in the vicinity of ${\bf s}_0$ is \emph{depleted}. Moreover, no clustering occurs, as indicated by the fact that the colored area never exceeds $r=1-q_M(\epsilon_2|\epsilon_1)$ (black continuous quarter of a circle). An example of the complexity surface in this regime (for a different choice of parameters) is given also in Fig.~\ref{Fig:Plots3D}(a).

    \item[(b)] {\bf Non-monotonic regime. } This corresponds to Fig.~\ref{Fig:2Ddensityplots}(b). In this case, the highest concentration of stationary points ${\bf s}_2$ is again at $q_0= q q_1$. However, at fixed distance to ${\bf s}_1$ (i.e., for fixed $r$), the distribution of stationary points is non monotonic in $\theta$: in the vicinity of ${\bf s}_0$, the number of points ${\bf s}_2$ increases again, and the complexity $\Sigma^{(3)}(\epsilon_2, q_0, q_1 | \epsilon_1,\epsilon_0, q)$ has a local maximum. This is a sign of  correlations in the landscape, i.e., of the fact that the presence of ${\bf s}_0$ affects the distribution of ${\bf s}_2$.  

    \item[(c)] {\bf Local accumulation transition/regime. } This corresponds to Fig.~\ref{Fig:2Ddensityplots}(c). In this case, the highest number of stationary configurations ${\bf s}_2$ is found in the vicinity of ${\bf s}_0$, where the three-point complexity is \emph{larger} than the two-point one and Eq.~\eqref{eq:Bound2} is satisfied. The line $q_0=q q_1$ (continuous blue line) is now a line of maxima of the quenched complexity, that are only local and no longer global. \\
    In this regime, the correlations in the energy landscape become strong enough, to generate \emph{local accumulation} of stationary points, whose complexity exceeds $\Sigma^{(2)}(\epsilon_2, q_1| \epsilon_1)$, see the blue zone with white hatching; the majority of the stationary points ${\bf s}_2$ is no longer found in the region where the volume of configuration space accessible to them is maximal, but it is found closer to ${\bf s}_0$, meaning that energy correlations prevail on entropy.  
     The non-monotonic behavior in $q_0$ is particularly evident for large enough $q_1$ (small $r$), where by increasing $\theta= \frac{\pi}{2} q_0$ one goes through regions of configurations space at intermediate $q_0$ where there is no stationary point of energy $\epsilon_2$. 
    For fixed $q_1$, the landscape undergoes a transition at a critical value of $q$, where the maximum of $\Sigma^{(3)}(\epsilon_2, q_0, q_1 | \epsilon_1,\epsilon_0, q)$ as a function of $q_0$ jumps from $q_0=q q_1$ to a higher value, that depends on the energy densities ${\bm \epsilon}$.
     We refer to this transition as the “local accumulation transition".

    \item[(d)] {\bf Clustering transition/regime. } This corresponds to Fig.~\ref{Fig:2Ddensityplots}(d), i.e. to large values of $q$. This regime is characterized by clustering: the maximal value of $q_1$ (smallest value of $r$) for which the three-point complexity is non-negative is \emph{larger} than the value predicted by the two-point complexity in absence of ${\bf s}_0$, meaning that 
     \begin{equation}\label{eq:q1max}
   q_1^{\rm max}({\bm \epsilon}, q) :=\text{argmax}_{q_1} \left[\max_{q_0} \Sigma^{(3)}(\epsilon_2, q_0, q_1)\right]
    \end{equation}
   satisfies
    \begin{equation}
   q_1^{\rm max}({\bm \epsilon}, q) > q_M(\epsilon_2|\epsilon_1).
    \end{equation}
 This corresponds to the fact that the blue zone with white hatching extends within the black quarter of a circle in the picture: the three-point complexity in that zone is positive, while the two-point complexity $\Sigma^{(2)}(\epsilon_2,q_1|\epsilon_1)$ predicts that there are typically no stationary points at those energies and at those overlaps $q_1$ from a ${\bf s}_1$ extracted without conditioning on ${\bf s}_0$. 
 Examples of the complexity surfaces in this clustering regime are given in Fig.~\ref{Fig:Plots3D}(b-d) as well. We call the associated transition a “clustering transition".
\end{itemize}

In summary, Fig.~\ref{Fig:2Ddensityplots} describes how the distribution of stationary points in configuration space evolves as one tunes the overlap $q$, for a fixed choice of $\epsilon_1= \epsilon_2 > \epsilon_0$. 
 Recall that in the inset of  Fig.~\ref{fig:2D_plot_clustering} we show the region where clustering exists (black hatched region), for fixed $\epsilon_0=-1.167$, and as a function of $\epsilon_1=\epsilon_2$ and $q$. For these values of parameters, clustering always occurs whenever ${\bf s}_1$ is a correlated minimum or an unstable saddle, but it also occurs when ${\bf s}_1$ is a local minimum with an Hessian that shows no correlations to ${\bf s}_0$. The parameters of Fig.~\ref{Fig:Plots3D}(b) are precisely chosen in such a way that  ${\bf s}_1$ is a stable minimum (see $\star$ in Fig.~\ref{fig:2D_plot_clustering}) and clustering is present. In the following section, we describe how much this scenario and the landscape's transitions depend on the choices of the energies $\epsilon_1, \epsilon_2.$

\subsection{Dependence of clustering on the energies}\label{sec:EnergyDependence}
For $\epsilon_1= \epsilon_2 > \epsilon_0$, as shown in Fig.~\ref{Fig:2Ddensityplots} 
a rather rich phenomenology occurs, with two distinct transitions (local accumulation and clustering) happening at different critical values of $q$. We now consider, for $\epsilon_0$ fixed, arbitrary values of $\epsilon_1, \epsilon_2$ in the range $[\epsilon_{\rm gs}, \epsilon_{\rm th}]$, and ask for which choices of the energies local accumulation and clustering occur for at least some values of the parameters ${\bf q}$. We focus in particular on clustering, which is a special case of local accumulations, see \eqref{eq:Bound2}.\\

\noindent Since correlations in the landscape become relevant for large values of the overlaps, to simplify the discussion in this section we set $q=q_M(\epsilon_1|\epsilon_0)$ for each choice  of $\epsilon_1$. We then ask for which values of $\epsilon_2$  clustering occurs in the energy landscape, meaning that there is at least one choice of $q_1, q_0$ for which 
$\Sigma^{(3)}(\epsilon_2, q_0, q_1 | \epsilon_1,\epsilon_0, q) >0$ but $\Sigma^{(2)}(\epsilon_2, q_1 |\epsilon_1)=-\infty$. We find that a crucial role is played by the energy density $\epsilon^*(\epsilon)$. We recall that this critical energy is the one above which rank-1 saddles or correlated minima appear in the landscape in the vicinity of a local minimum ${\bf s}$ of energy density $\epsilon$, in the two-point complexity, see Fig.~\ref{fig:2D_plot_clustering}. This critical energy acquires also another role in the context of our analysis:
we find indeed that clustering can happen only whenever two conditions on the energy densities are met: (i) $\epsilon_1, \epsilon_2 \geq \epsilon^*(\epsilon_0)$, and (ii) $\epsilon_2 \leq \epsilon^*(\epsilon_1)$ \footnote{we checked this statement only for $p=3$, and leave for future investigation to verify it for any $p$}. Therefore, the critical energies identified by the two-point complexity play also a crucial role in determining which stationary points are strongly correlated in the landscape. In Fig.~\ref{fig:2D_plot_clustering_e1e2} we show, for $\epsilon_1, \epsilon_2> \epsilon^*(\epsilon_0)$, the maximal values of $\epsilon_2$  for which clustering occurs (red points), derived by inspecting the three-point complexity: these values match perfectly with $\epsilon^*(\epsilon_1)$.

\begin{figure}[t]
\centering
\includegraphics[width=0.62
\textwidth, trim=5 5 5 5,clip]{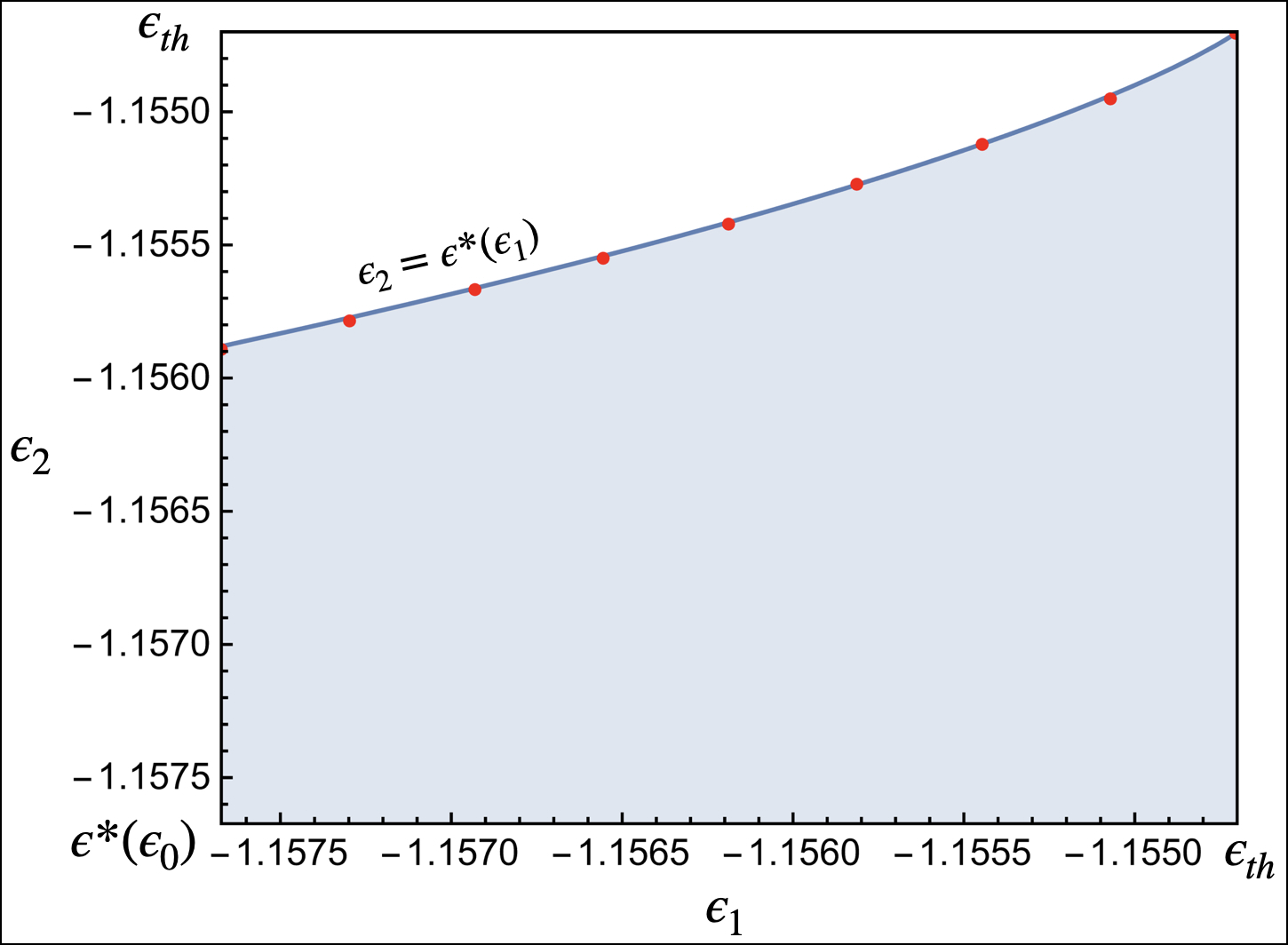}
\caption{The colored area identifies the range of energies $\epsilon_1, \epsilon_2 \geq \epsilon^*(\epsilon_0)$ for which clustering occurs, for $\epsilon_0=-1.167$ and $p=3$. Red points are values extracted from the  analysis of the three-point complexity; the continuous line corresponds to $\epsilon_2=\epsilon^*(\epsilon_1).$  Clustering is present whenever $\epsilon_1>\epsilon^*(\epsilon_0)$ and $\epsilon_2\in[\epsilon^*(\epsilon_0),\epsilon^*(\epsilon_1)]$.}
\label{fig:2D_plot_clustering_e1e2}
\end{figure}

\noindent Notice that clustering appears discontinuously as a function of $\epsilon_1$: for $\epsilon_1 < \epsilon^*(\epsilon_0)$ there is no clustering (no matter what is the value of $\epsilon_2$), whereas as soon as $\epsilon_1\geq  \epsilon^*(\epsilon_0)$ there is a whole range of energies $\epsilon_2$ such that the closest stationary points at those energies are in the clustering region. When clustering occurs for $q=q_M(\epsilon_1|\epsilon_0)$, it is also in general present in the landscape for smaller values of $q$:  Fig.~\ref{Fig:Plots3D}(d) gives an example of clustering occurring when $ \epsilon_1> \epsilon_2> \epsilon^*(\epsilon_0)$, for a value of $q<q_M(\epsilon_1|\epsilon_0)$; see also Fig.~\ref{fig:2D_plot_clustering}, where a large portion of the hatched zone is in the blue region. An important remark is that the conditions on the energies for the existence of clustering are the same with the quenched and doubly-annealed computations. This is a consequence of the simplifications that occur on the line $q_1=q\,q_0$. In fact it is precisely $q_M(\epsilon_1|\epsilon_0)$ that sets the onset of clustering when the energies are increased above $\epsilon^*(\epsilon_0)$.\\

\noindent Let us finally comment on the connections between clustering and isolated modes in the spectrum of the Hessian of the stationary points. The necessary conditions on the energy densities that we have identified for clustering to occur involve the special energy $\epsilon^*(\epsilon)$,
which is related to the appearance of isolated eigenvalues in the Hessian of stationary points counted by the two-point complexity: for such isolated eigenvalue to be present, the energy density of the counted stationary points must be larger than this value. It is interesting that this energy density, that can be determined solely from the two-point complexity, plays this critical role for the appearance of clustering. However, as remarked above, the occurrence of clustering \textbf{is not} in one-to-one correspondence with the presence of isolated eigenvalues in the Hessian spectra of either ${\bf s}_1$ or ${\bf s}_2$: Fig.~\ref{Fig:Plots3D}(b) shows on one hand that stationary points cluster in the vicinity of ${\bf s}_1$ even when the latter is a local minimum with an Hessian that shows no isolated eigenvalues and no signatures of instability; on the other hand, the stationary points ${\bf s}_2$ that cluster are not necessarily unstable saddles or correlated minima, but can be stable minima with an Hessian that shows no signatures of correlations to ${\bf s}_0, {\bf s}_1$ (at least, within the annealed study of the Hessian statistics done in  Ref.~\cite{pacco_triplets_2025} \footnote{given that annealed and quenched complexities are close, we do not expect the quenched statistics of the Hessians to vary significantly. Its analysis proves much harder than the annealed case studied in \cite{pacco_triplets_2025}, and is left for future work.}). Put differently, we want to say that while it is \textit{necessary} that $\epsilon_1,\epsilon_2>\epsilon^*(\epsilon_0)$ to have clustering, once we are above that energy, we can choose values of the overlaps such that at ${\bf s}_1,{\bf s}_2$ the Hessians have no isolated eigenvalues (and are thus solely represented by the semicircle law). 

\subsection{Hints on activated dynamics}
\label{sec:en_land_actv_dyn}
\begin{figure}[t]
\centering
\includegraphics[width=0.7
\textwidth, trim=3 3 3 3,clip]{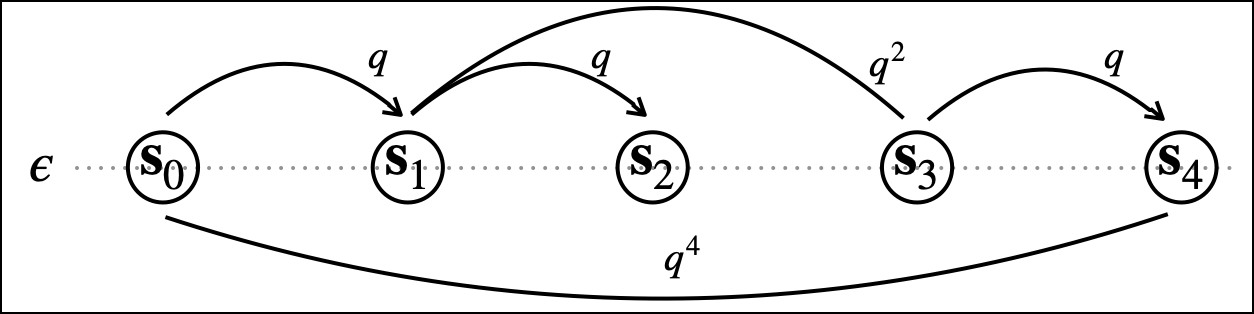}
\caption{Pictorial representation of the memoryless jump process at equal energies, with $q=q_M(\epsilon|\epsilon)$.}
\label{fig:activ_dyn_equal}
\end{figure}
With our results so far, we can try to give some \textit{speculative} hints on the activated dynamics.

\noindent {\bf Typical vs atypical regions of the landscape. } As we saw in Sec.~\ref{subsec:defs_clustering}, while with $\Sigma^{(2)}$ we can analyze \emph{typical} (unconstrained) minima ${\bf s}_1$ with energy density $\epsilon_1$; with $\Sigma^{(3)}$ we get information on \emph{conditioned} stationary points ${\bf s}_1$ with  energy density $\epsilon_1$, at overlap $q$ with another \textit{typical} stationary point ${\bf s}_0$ with energy density $\epsilon_0$. The absence of local accumulation (and so of clustering ) implies that the energy landscape close to ${\bf s}_1$, probed with ${\bf s}_2$, is not strongly affected by the conditioning to ${\bf s}_0$, but it is similar to the landscape in the vicinity of a typical stationary point ${\bf s}_1$ with the same energy density $\epsilon_1$. 
In particular, given a  sequence of three stationary points ${\bf s}_0$, ${\bf s}_1$ and ${\bf s}_2$ with fixed overlaps $q, q_1$ between the consecutive pairs, when there is no local accumulation then the third stationary point ${\bf s}_2$ is typically at overlap $q_0=q \cdot q_1$ with the first one; the corresponding complexity shows no dependence on ${\bf s}_0$, as it coincides precisely with the two-point complexity that one would get neglecting the conditioning to ${\bf s}_0$, see Sec.~\ref{sec:DoublyAnnealed}. Notice that this does not mean that the landscape is totally unaffected by the presence of ${\bf s}_0$: in fact, for all values of $q_0 \neq q q_1$ the three-point complexity is sensitive to the conditioning to ${\bf s}_0$, and it is smaller than the two-point complexity. However, optimizing over $q_0$ this dependence disappears. On the other hand, local accumulation and clustering occur when the landscape in the vicinity of ${\bf s}_1$ is strongly affected by ${\bf s}_0$, and it is characterized by a higher concentration of stationary points with respect to the concentration one finds around a typical point of the same energy density $\epsilon_1$. We can then interpret local accumulation and clustering transitions as \textit{decorrelation-correlation transitions} in the energy landscape. \\

\noindent{\bf Memoryless jumps and avalanche precursors.} 
Low-temperature activated dynamics in glassy landscapes is characterized by a separation of timescales: the system spends long periods fluctuating near a local minimum, until a rare noise-induced fluctuation pushes it over an energy barrier, allowing it to settle into a new local minimum. When modeling this dynamics, it is natural to neglect short-time fluctuations and introduce effective models in which the dynamics is described as a stochastic jump process on a network of states (representing the local minima). Models of this type have been studied quite extensively, from the exactly solvable trap model~\cite{bouchaud1992weak, dyre1987master, monthus1996models, bouchaud1995aging}, up to more recent generalizations~\cite{margiotta2018spectral, margiotta2019glassy, bertin2003cross, tapias2020entropic}. These models assume a Markovian effective dynamics, in which the system jumps with transition rates that do not depend on the configurations (minima) previously visited by the system. In view of this we can now give an interpretation of the "memoryless" and "avalanche-like" properties of the landscape.\\

\noindent\textit{Mermoryless jumps. } In particular, we have seen that if $\epsilon_1,\epsilon_2<\epsilon^*(\epsilon_0)$ then the closest fixed point to ${\bf s}_0$ is at overlap $q=q_M(\epsilon_1|\epsilon_0)$ and, irrespective of this, the closest to ${\bf s}_1$ is at overlap $q_1=q_M(\epsilon_2|\epsilon_1)$, corresponding to an overlap $q_0=q\,q_1$ from ${\bf s}_0$. In this sense, if we assume that activated dynamics proceeds by reaching closest configurations (which corresponds to less particles rearrangements), then the jump from ${\bf s}_1$ to ${\bf s}_2$ is independent on the previous history, that is, on ${\bf s}_0$. Moreover, this jump corresponds to a loss of memory of the initial minimum ${\bf s}_0$, since $q_0=q\,q_1<q$. We have verified that, within a doubly-annealed four-point complexity, this remains true: jumps at low enough energies (and, in particular, at equal energies) are memoryless, and the overlaps are related by simple products of each other, as in the three-point case. Hence, we can conjecture that this property remains true for an arbitrary number of steps: the long time activated dynamics in the pure $p$-spin could follow a memoryless jump process ${\bf s}_0,{\bf s}_1,\ldots,{\bf s}_n,\ldots $ at equal energies $\epsilon$, in which each jump is at overlap $q=q_M(\epsilon|\epsilon)$ and we progressively lose memory of the initial condition, that is, ${\bf s}_a\cdot{\bf s}_b/N=q^{|b-a|}$. We give a pictorial representation of this process in Fig.~\ref{fig:activ_dyn_equal}.\\

\noindent\textit{Avalanche-like jumps. } We have seen that if $\epsilon_{th}>\epsilon_1>\epsilon^*(\epsilon_0)$ and $\epsilon_2<\epsilon^*(\epsilon_1)<\epsilon_{th}$ then the clustering property holds. Namely if, as above, we assume that the first jump takes the largest overlap $q=q_M(\epsilon_1|\epsilon_0)$, then the largest overlap for the second jump is $q_1^{\text{max}}({\bm\epsilon},q)>q_M(\epsilon_2|\epsilon_1)$. This is a consequence of strong correlations to ${\bf s}_0$ at high energies in the landscape. In this sense, the jump is \textit{avalanche-like}, because jumping to a high energetic minimum $\epsilon_1$ (or rank-1 saddle) gives access to exponentially many fixed points at energies $\epsilon_2$ with overlaps bigger than $q_M(\epsilon_2|\epsilon_1)$, which is the maximum overlap to find fixed points for a typical minimum at that energy density $\epsilon_1$. These avalanche-like jumps still occur on times that are exponentially suppressed in $N$ (since the barriers are still extensive), but on times that are nonetheless exponentially larger with respect to memoryless jumps, which are expected to have huge barriers separating them (by our analysis on curvature-driven pathways).\\

\noindent Despite finding such signatures of clustering, we must remark that these are only happening at higher energies than the starting one, thus presenting differences from the low-dimensional systems mentioned in the introduction \cite{durin2024earthquakelike, korchinski_thermal_2025, scalliet2022thirty,tahaei2023scaling, de2024dynamical}, where the configurations reached by thermal avalanches are at equal or lower energies. \\

\begin{figure}[t]
\centering
\includegraphics[width=0.6
\textwidth, trim=3 3 3 7,clip]{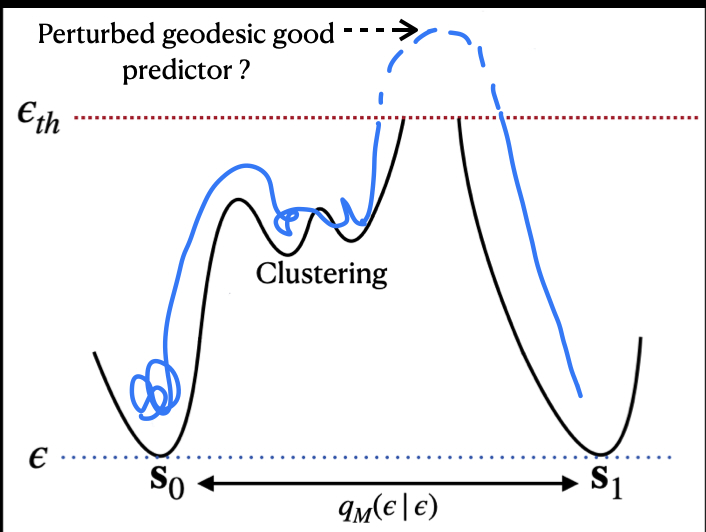}
\caption{Speculative representation of the activated dynamics.}
\label{fig:activ_dyn_pic}
\end{figure}

\section{Conclusion and perspectives.}
\label{sec:en_land_perspectives}
In this Chapter we have used static approaches to probe the geometry of the energy landscape of the pure spherical $p$-spin model. On one side we have used curvature-driven pathways to study energetic barriers between a deep energy minimum and its nearby fixed points, both at low and higher energies, but still below threshold. We have shown that, at difference with models of interacting particles \cite{xu2010anharmonic, widmer2008irreversible}, the softest mode of the Hessian at the start is not a good predictor of efficient energy paths. However, we have seen that having access to the Hessian along the path allows us to find lower energetic pathways, either by leveraging isolated eigenvalues at the arrival point, or by lowering the whole path by following the direction of steepest descent of the gradient. On the other side, we have used the three-point complexity to probe the landscape around a fixed point ${\bf s}_1$ reached after a jump from a reference deep minium ${\bf s}_0$. We have seen that at high energies (but still below threshold) there is a clustering of stationary points, most of which are rank-1 saddles, while at lower energies the landscape around ${\bf s}_1$ is almost independent on the presence of ${\bf s}_0$. We interpreted clustering as a precursor of \textit{avalanche-like} jumps, while still being different from thermal-avalanches in low-dimensional systems \cite{ferrero2017spatiotemporal, tahaei2023scaling}. \\

\noindent\textbf{A picture of activated dynamics. } From the analyses of the three-point complexity and curvature-driven pathways we can try to speculate on the emerging picture of activated dynamics when introducing a small noise into the system, partially supported in view of recent results \cite{rizzo2021path,folena_rare_2025}. We show a pictorial representation in Fig.~\ref{fig:activ_dyn_pic}: starting from a deep minimum at energy density $\epsilon$, it is natural to imagine that the system will initially jump to one of closest rank-1 saddles \cite{ros2021dynamical}, which corresponds to a high-energetic region with notable \textit{clustering} effects. Thanks to the vicinity of exponentially many nearby rank-1 saddles and correlated minima, the system may spend some time in this high energetic region (which \textit{could} correspond to the \textit{hub} found in \cite{folena_rare_2025}), and where the energy barriers (thanks to our analysis on curvature-driven pathways) are much smaller due to the presence of isolated modes in the spectra of nearby fixed points. From here the system would have easier access to above threshold regions. Indeed, according to our picture on curvature-driven pathways, and on recent work \cite{rizzo2021path}, the system is likely to reach above threshold regions before reaching other deep minima. The minima reached after overcoming the threshold would then correspond to the "memoryless" minima described in the previous section, that is, minima at equal (or lower energy) \cite{anderson1964hard,ioffe1987dynamics} and overlap $q_M(\epsilon|\epsilon)$. This process would then repeat, until eventually losing memory of the initial condition, since after each macroscopic jump to a new deep minimum, the distance from every other previous minimum decreases as a power of $q_M(\epsilon|\epsilon)$. It would be interesting to test whether the maximal energy barrier reached by the geodesic perturbed with ${\bf v}_\text{Hess}$ is a proxy to the real energy barrier overcome by the system to reach a new deep minimum.  \\

\noindent This picture remains speculative, but two takeaways should be retained: no below-threshold obvious paths between deep minima exist, and while local information can be used to lower the barrier, it still lies above threshold; strong correlations in the landscape are present among the high-energetic fixed points close to a deep minimum. \\

\noindent\textbf{Perspectives. } It would be interesting to extend our analyses to mixed models, see Sec.~\ref{sec:mixed_models}. Their two-point complexity was computed in \cite{kent2024arrangement}, where the author observes that the neighborhood of typical marginal minima is rather different above or below threshold. The threshold value is the only one where typical marginal minima are found arbitrarily close to each other, and are separated by sub-extensive barriers. Instead, above or below threshold typical marginal minima are separated by extensive energy barriers, and there is an overlap gap between them (similarly as $q_M(\epsilon_1|\epsilon_0)$ in Fig.~\ref{fig:2_point_phase_diag}). Moreover, marginal states above and below the threshold have different neighborhoods, rendering the relaxational dynamics of such systems, which are believed to converge to marginal minima, an even more interesting mystery \cite{folena2020rethinking}. We expect that the calculation of the perturbed geodesic pathways, as well as the doubly-annealed three-point complexity, should be within reach for mixed models. Since in mixed models we can freely choose the extensive stability of the stationary points (see Sec.~\ref{sec:mixed_models}), we expect that the results will differ considerably. In particular, we expect clustering (in the sense of the three-point complexity) to be a much more noticeable effect for mixed models. Indeed, consider a series of three stationary points with same energies $\epsilon$: first a local minimum at stability given by the Lagrange multiplier $\mu_0$; then another local (or marginal) minimum that satisfies
\begin{align}
    \mu_1^{max}=\mu\,\,\text{associated to the largest $q_{01}^{max}$ such that }\,\Sigma^{(2)}(\epsilon,\mu,q_{01}^{max}|\epsilon,\mu_1)=0,
\end{align}
and finally a third stationary point chosen with $\mu_2^{max}$ and $q_{12}^{max}$. Then we expect that, in general, $q_{12}^{max}>q_{01}^{max}$ and that if we imagine a series of more points, the overlaps saturate to a certain value. In the pure model, given that we cannot choose the $\mu$s, the maximum overlap given by the three-point complexity is always the same at every "jump", and thus equivalent to just considering the maximum overlap from the two-point complexity. If the same remains true for the mixed models, we could just use the two-point complexity in \cite{kent2024arrangement} to check how maximum overlaps at equal energies evolve after adding new points. Of course, it would be interesting to study the distribution of triplets of stationary points at any energies, and see which new features the mixed model has. Whether such features could be interesting for the problem of understanding gradient descent in these landscapes remains open.

\chapter{Spiked, correlated random matrices}
\label{chapter:rmt_}
In this chapter we consider a variant of the well-known spiked matrix problem. More precisely we consider pairs of spiked, correlated GOE (Gaussian Orthogonal Ensemble) random matrices. Our aim is twofold: on one side we derive the $N>>1$ spectral properties of these matrices; on the other we compute the expected overlap between their eigenvectors (i.e. the squared scalar product). This problem is of interest both for the spiked matrix problem, as well as for problems of high-dimensional random landscapes considered in Chapter~\ref{chapter:energy_landscapes}, which motivated us for this work. \\

\noindent\textit{Road-map}\\
In Sec.~\ref{sec:rmt_introduction} we review the basic properties of GOE and spiked GOE matrices. In Sec.~\ref{sec:rmt_theoretical_res} we present the specific spiked model under study and compute its spectral properties. In Sec.~\ref{sec:theoretical_res2} we introduce the overlaps between eigenvectors of the spiked, correlated random matrices, and we show how they are computed. In doing so, we also show new results on multiresolvent products. In Sec.~\ref{sec:rmt_simulations_overlaps} we give the final expressions of the overlaps and compare with numerical simulations; finally we show how our results are useful for problems of signal recovery.\\

\noindent\textit{Acknowledgments}\\
This was joint work with Valentina Ros, and the results are found in Ref.~\cite{paccoros}. I thank her very much for our stimulating discussions that led to this work.

\section{Introduction}
\label{sec:rmt_introduction}
In this section we review the basic properties of GOE and of GOE spiked random matrices, and we motivate our work. All results are explained in a physicists' informal style, as will be our calculations.  Every important calculation is nonetheless confirmed with many numerical simulations.

\subsection{GOE cookbook (informal)}
\label{sec:Goe_intro}
In this section we will briefly introduce the GOE random matrices, summing up the main quantities and results that we will need along the way. A comprehensive introduction to random matrix theory for physicists is found in the book by Potters and Bouchaud \cite{potters_bouchaud_2020}, but there are also many other resources \cite{vivo_book_rmt_2018, ros_lecture_2025, mehta_rmt, Tao_book_2012}. \\

\noindent A GOE matrix ${\bf X}$ of size $N\times N$ and variance ${\bf\sigma}^2$ is a symmetric matrix with Gaussian entries with the following statistics:
\begin{align}
\label{eq:goe_def_X}
    \mathbb{E}[X_{ij}]=0,\quad\quad \mathbb{E}[X_{ij}X_{kl}]=\frac{\sigma^2}{N}(\delta_{il}\delta_{jk}+\delta_{ik}\delta_{jl}).
\end{align}
The joint probability density of the random variable ${\bf X}$ can be written as
\begin{align}
    P({\bf X})=Z_N^{-1}e^{-\frac{N}{4\sigma^2}\text{Tr}{\bf X}^2},
\end{align}
where integration is done over the $N(N+1)/2$ elements on or above the diagonal (the matrix being symmetric). From this probability density, it is clear why the ensemble is called "Orthogonal", namely because probabilities are left invariant under orthogonal transformations. In fact, consider an orthogonal matrix ${\bf O}$ such that ${\bf O}{\bf O}^\top=1$, and the rotated matrix $\tilde{{\bf X}}={\bf O}{\bf X}{\bf O}^\top$, then from the cyclicity of the trace and $|\det{\bf O}|=1$ it follows that:
\begin{align}
    P({\bf X})d{\bf X}=P(\tilde{\bf X})d\tilde{{\bf X}}.
\end{align}
This statistical equivalence between pairs of rotated matrices implies in particular that the distribution of (normalized) eigenvectors is the one of random vectors on the hypersphere of unit radius in $\mathbb{R}^N$. \\

\noindent Now, to each realization of ${\bf X}$ one can associate an empirical spectral distribution (ESD):
\begin{align}
    \rho_N(\lambda):=\frac{1}{N}\sum_i\delta(\lambda_i-\lambda),\quad\quad\;\lambda_i\quad\text{eigenvalues of {\bf X}}.
\end{align}
\noindent When $N\to\infty$ we can see by doing a simple simulation that the spectrum of these matrices takes a very precise form, see Fig.~\ref{fig:goe_vs_spike} \textit{left}, which resembles a semicircle (a semiellipse to be more precise). The properties of this spectrum were first studied by Wigner in the 50's \cite{Wigner_1,Wigner_2}. It is now well known that if we take a sequence $\{{\bf X}_N\}_{N\geq 1}$ of GOE matrices of the form \eqref{eq:goe_def_X} then the sequence of ESDs $\{\rho_N\}_{N\geq 1}$ converges almost surely to the deterministic probability measure $\rho_\sigma$ denoted as \textit{Wigner's semicircle law}:
\begin{equation}\label{eq:DoSgoe}
\rho_\sigma(\lambda)=\frac{1}{2 \pi \sigma^2} \sqrt{4 \sigma^2 - \lambda^2}\,\mathds{1}_{|\lambda|\leq 2\sigma}
\end{equation}
see Tao \cite{Tao_book_2012} Sec.~2.4 or Potters and Bouchaud \cite{potters_bouchaud_2020} Sec.~2.2. \\

\noindent Two fundamental objects that we will need all along this Chapter are the \textit{resolvent} and the \textit{Stieltjes transform}, that for a generic $N\times N$ random matrix ${\bf M}$ are defined as
\begin{align}
\label{eq:rmt_def_res_stiel}
{\bf G}(z)=(z-{\bf M})^{-1},\quad \mathfrak{g}(z)=\frac{1}{N}\text{Tr}\,{\bf G}(z),\quad z\in\mathbb{C}\setminus \text{Sp}({\bf M}) 
\end{align}
with  $\text{Sp}({\bf M})$ the set of eigenvalues. Notice that the averaged Stieltjes transform is the moment generating function of ${\bf M}$. All normalized quantities of interest for our discussion (mainly Stieltjes transform and spectral distribution) are \textit{self-averaging} \cite{potters_bouchaud_2020} when $N\to \infty$, meaning that they converge to their expected values. We will therefore use the same notation for both the averaged and non-averaged quantities for notational simplicity.  \\

\noindent The importance of the Stieltjes transform is (at least) twofold: when $N\to\infty$ it satisfies a very particular relation with the cumulant generating function \cite{Burda_non_herm_2011} (the $R$ transform), which turns out to be useful when considering sums of independent large random matrices; the knowledge of $\mathfrak{g}$ allows one to obtain the spectral distribution as well as other important quantities such as eigenvector overlaps, that is, their squared scalar product. \\

\noindent The relation between the Stieltjes transform and the spectral distribution is obtained via the Sokhotski–Plemelj theorem:
\begin{align}
{\displaystyle \lim _{\eta \to 0^{+}}{\frac {1}{x\pm i\eta }}=\mp i\pi \delta (x)+{\mathcal {P}}{{\Big (}{\frac {1}{x}}{\Big )}}}
\end{align}
which also implies:
\begin{align}
{\displaystyle \lim _{\eta \to 0^{+}}\left[{\frac {1}{x-i\eta }}-{\frac {1}{x+i\eta }}\right]=2\pi i\delta (x),}
\end{align}
with $\mathcal{P}$ the principal value. By using this identity one can show that the spectral distribution of ${\bf M}$ is obtained from the Stieltjes transform as:
\begin{align}\label{eq:InvStj}
    \rho(\lambda)=\frac{1}{\pi}\lim_{\eta\to0^+}\text{Im}\left[\mathfrak{g}(\lambda-i\eta)\right].
\end{align}
In the case of the GOE random matrices, one can show that in the limit of $N\to\infty$, the Stieltjes transform converges to the following complex valued function:
\begin{align}
\label{eq.Stjlt}
\mathfrak{g}_\sigma(z):=\frac{z-\text{sign}(\text{Re}(z))\sqrt{z^2-4\sigma^2}}{2\sigma^2},\quad z\in\mathbb{C}\setminus [-2\sigma,2\sigma],
\end{align}
where we excluded the real interval $[-2\sigma,2\sigma]$ since the function presents a branch cut on this interval. Indeed, by considering the behavior of the square root near the real axis (see below), we can easily see that $\mathfrak{g}_\sigma$ is not continuous across the transition from positive to negative imaginary axis in the interval $[-2\sigma,2\sigma]$. This was expected if we look at the definition of $\mathfrak{g}$ in \eqref{eq:rmt_def_res_stiel}, which presents a pole at each eigenvalue of the matrix. As $N\to\infty$ the distribution of GOE eigenvalues expands over the interval $[-2\sigma,2\sigma]$, which leads to a branch cut in $\mathfrak{g}_\sigma$. Along this Chapter we will consider the principal value of the square root, defined by:
\begin{align}
    z=re^{i\theta}\longmapsto \sqrt{z}:=\sqrt{r}e^{i\theta/2},\quad\quad -\pi<\theta\leq \pi,
\end{align}
which means that we use the non-positive real axis as the branch cut. Then we can show that:
\begin{align}
    \lim_{\eta\to 0^+}\sqrt{(x\pm i\eta)^2-4\sigma^2}=
    \begin{cases}
        \sqrt{x^2-4\sigma^2}\quad |x|> 2\sigma\\
        \pm\text{sign}(x)i\sqrt{4\sigma^2-x^2}\quad |x|\leq 2\sigma
    \end{cases}
\end{align}
and by applying this to $\mathfrak{g}_\sigma$, defined in Eq.~\eqref{eq.Stjlt}, we obtain:
\begin{equation}
\lim_{\eta \to 0^+}\mathfrak{g}_\sigma(x \mp i \eta)=\begin{cases}
  \frac{1}{2 \sigma^2} \tonde{x- \text{sign} (x) \sqrt{x^2-4 \sigma^2}} \quad  |x|> 2 \sigma\\
  \frac{1}{2 \sigma^2}\tonde{x \pm i \sqrt{4 \sigma^2-x^2}} \quad  |x|\leq 2 \sigma
  \end{cases} \equiv
\mathfrak{g}_R(x) \pm i \mathfrak{g}_I(x).
\end{equation}
From this we see that the function has a branch cut in $[-2\sigma,2\sigma]$, and moreover this formula will be very important for our computations in the rest of the Chapter. Let us also comment on the choice of $-\text{sign}$ in Eq.~\eqref{eq.Stjlt}: first we know that for large $|z|$, the Stieltjes transform must behave as $1/z$ (as can be seen from its definition), and the choice of $-\text{sign}$ gives the correct scaling ; second, the term $\sqrt{z^2-4\sigma^2}$ has an additional branch cut on the imaginary axis, while the Stieltjes transform should not: the choice of $-\text{sign}$ also solves this issue.\\

\noindent The last important ingredient that we will need for the upcoming calculations is Wick's theorem.\\

\noindent \textit{Wick's theorem. }
If ${\bf X}$ denotes a zero-mean multivariate Gaussian distribution, then for any $n\in\mathbb{N}$ we have the following:
\begin{align*}
    \mathbb{E}[X_1X_2\ldots X_{2n+1}]=0
\end{align*}
and 
\begin{align*}
    \mathbb{E}\left[X_1\ldots X_{2n}\right]=\sum_{\pi\in P_{2n}}\prod_{(i,j)\in \pi}\mathbb{E}[X_iX_j],
\end{align*}
where $P_{2n}$ is the set of all ways to make distinct pairs of $\{1,\ldots,2n\}$, which has cardinality $n!/(2^{n/2}(n/2)!)$.

\begin{figure}[t!]
\centering
\includegraphics[width=\textwidth]{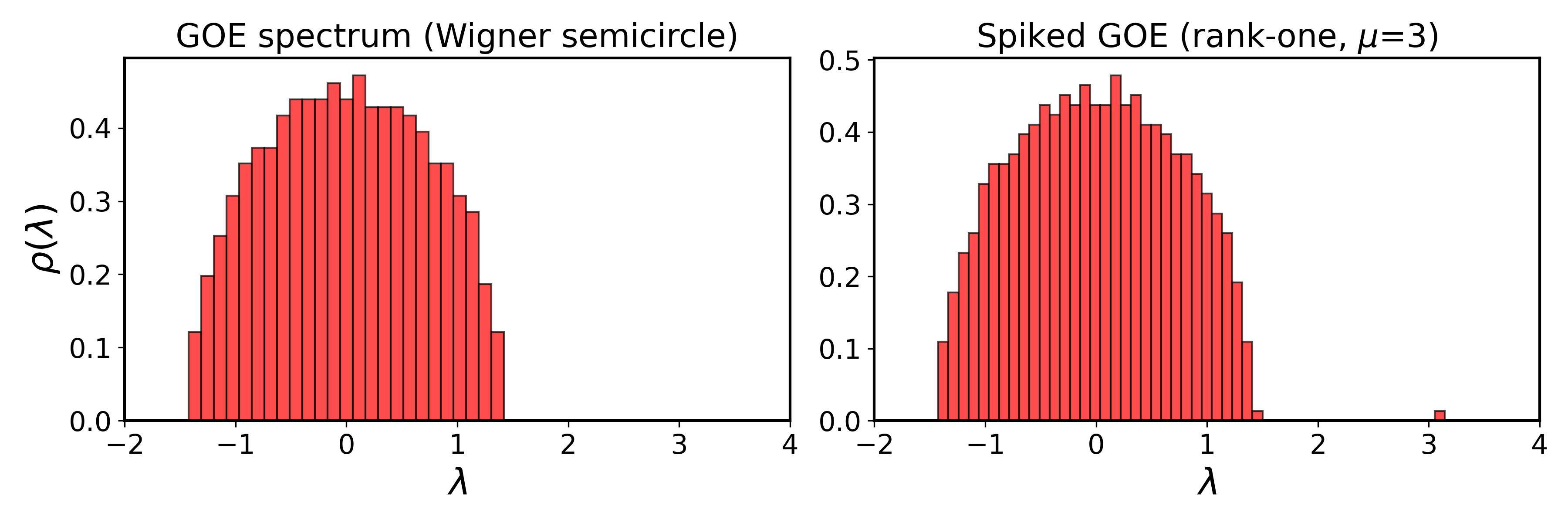}
\caption{Numerical simulation of the eigenvalue spectrum of a GOE matrix (\textit{left}) and a spiked GOE (\textit{right}). Both matrices have variance $\sigma^2=1$, and on the \textit{right} we have $\mu=3$. We see that an outlier eigenvalue pops out of the bulk of the spiked matrix. }
\label{fig:goe_vs_spike}
\end{figure}

\subsection{Spiked random matrices}
\label{sec:rmt_intro_spiked}
Spiked random matrices are random matrices deformed by additive and/or multiplicative perturbations (see Eq.~\ref{eq:rmt_mle} for an additive example) and their study goes back almost 40 years \cite{jones1978eigenvalue, edwards1976eigenvalue,furedi1981eigenvalues}. When the size of these matrices is large and these perturbations are low-rank, there can be isolated eigenvalues that pop out of the bulk of the spectrum. Therefore, these perturbations are often referred to as "spikes" \cite{johnstone2001distribution}, see e.g. Fig.~\ref{fig:goe_vs_spike}~\textit{right}. The appearance of an isolated eigenvalue takes the name of a "BBP transition", with reference to the seminal work by Baik, Ben Arous and Péché \cite{baik2005phase}, where they analyse spectral transitions in perturbed Wishart matrices. Spiked matrix problems and their BBP transitions are still the subject of current research in the mathematical community ~\cite{peche2006largest, benaych2011eigenvalues, benaych2012singular, capitaine2011free, knowles2014outliers, tao2013outliers, bordenave2016outlier, rochet2016isolated, capitaine2016spectrum}. These spiked random matrices and their isolated eigenvalues also play a relevant role in several applications, including: finance~\cite{bun2017cleaning, bun2018overlaps, potters_bouchaud_2020}, inference and detection problems~\cite{montanari2015limitation, lu2020phase}, constraint satisfaction problems~\cite{hwang2020force}, quantum chaos~\cite{fyodorov2022extreme}, localization of polymers by defects~\cite{ikeda2022bose}, theoretical ecology~\cite{fraboul2021artificial, baron2022non}, spin glasses~\cite{ros2021dynamical, ros2020distribution, paccoros} and random neural networks, cf. Chapter~\ref{chapter:scs}. Of particular importance, as we shall see in this Chapter, are the eigenvectors associated to these outliers: their projection on the subspace spanned by the low-rank perturbations remains finite in the limit of large matrix size~\cite{nadler2008finite, benaych2011eigenvalues}. 

\noindent As we show below, spiked matrix problems can be interpreted as \emph{signal versus noise} problems, the low-rank perturbations representing the signal. Then, when the signal is too weak, recovering it is impossible, because the spectral properties of the matrix are the same as if there were no perturbations. There exists a critical value of the strength of the signal beyond which recovery is possible, at least partially. This analysis is done by studying the extremal eigenvalues of the matrix and their associated eigenvectors - that is, by means of principal component analysis (PCA).\\

\noindent
\textbf{Basics of the problem.}\\
Let us give an introduction to the main properties of these spiked matrix problems, reviewed extensively in Ref.~\cite{ros_lecture_2025}. In the simplest setting, the goal is to infer a signal ${\bf e}_N$ corrupted by noise:
\begin{align}
    {\bf M}={\bf X}+\mu\,{\bf e}_N\,{\bf e}_N^\top,
\end{align}
where the first term represents the noise, modeled here with a GOE matrix of variance $\sigma^2$ and size $N\times N$, and the second is a rank-one perturbation with ${\bf e}_N^2=1$ and $\mu>0$ (without loss of generality). For large $N$ this problem is high-dimensional, and it could therefore be challenging to recover the signal. Assuming the knowledge of the form of the noise and its variance ${\bf\sigma}$, we want to know whether we can say something about the signal in the large $N$ limit. The ratio $\mu/\sigma$ is commonly known as \textit{signal-to-noise ratio}: if $\sigma=0$ we can perfectly recover the signal, and if $\sigma>>\mu$ it will be undetectable. Then the question is whether one can retrieve information on $\mu$ and ${\bf e}_N$ by observing ${\bf M}$. To achieve this we will consider the maximum likelihood estimator, which in this case is given by \cite{ros_lecture_2025}:
\begin{align}
\label{eq:rmt_mle}
    {\bf s}_{\text{MLE}}:=\underset{{\bf s}:\,{\bf s}^2=1}{\text{argmax}}\,\,{\bf s}^\top\,{\bf M}\,{\bf s}.
\end{align}
Notice that ${\bf s}_\text{MLE}={\bf u}_\text{N}$, the eigenvector associated to the maximal eigenvalue $\lambda_N$ of ${\bf M}$. Therefore, we see that the problem of signal recovery is associated to the problem of identifying the largest eigenvalue and its eigenvector. Indeed, it is evident that when $\mu>>\sigma$ the maximal eigenvalue of ${\bf M}$ will be close to $\mu$, and ${\bf e}_N$ will be very close to the eigenvector ${\bf u}_N$. In particular when $\sigma=0$ this equivalence becomes exact. It is very interesting to remark that ${\bf s}_\text{MLE}$ also coincides with the ground state of the following energy landscape:
\begin{align}
    \mathcal{E}_\mu({\bf s})=-\sum_{i,j}X_{i,j}s_is_j-\mu\,q_{{\bf s},{\bf e}_N}^2
\end{align}
where $q_{{\bf s},{\bf w}}={\bf s}\cdot{\bf w}$ is the overlap between two vectors on the hypersphere in $\mathbb{R}^N$. When $\mu=0$ this landscape precisely coincides with the $p=2$ spherical spin glass, introduced in \cite{kosterlitz1976spherical}. One can show that, in general, the stationary points of $\mathcal{E}_\mu$ correspond to the eigenvectors of ${\bf M}$, which is rather intuitive if one looks at \ref{eq:rmt_mle}. More precisely, for each eigenvector there are two stationary points of $\mathcal{E}_\mu$, making a total of $2N$ of them. The number of stationary points therefore does not scale exponentially in $N$, and the complexity is zero. If we order the eigenvalues as $\lambda_1\leq \ldots \leq \lambda_N$, then the stationary point associated to the eigenvalue $\lambda_\alpha$ is the eigenvector ${\bf u}_\alpha$ which has a Hessian with an instability index (number of negative modes) given by:
\begin{align}
    I({\bf u}_\alpha)=\alpha-N.
\end{align}
From this we see that all stationary points are saddles, except the two ground states. \\

\noindent Let us now come back to the signal recovery problem. We say that the signal can be detected if 
\begin{align}
    \lim_{N\to\infty} \mathbb{E}[q_{{\bf u}_N,{\bf e}_N}]>0,
\end{align}
meaning that the estimator must have a non-zero overlap with the signal. If $\mu=0$ then it is clear that ${\bf u}_N$ is a random vector with respect to the signal, since the expected overlap is $1/N$ by isotropy, which goes to zero for large $N$. In the limit of $N\to\infty$, this criterion actually corresponds to the appearance of an isolated eigenvalue in the spectrum of ${\bf M}$, which is otherwise given by the Wigner's semicircle. This eigenvalue was first computed in Refs.~\cite{jones1978eigenvalue, edwards1976eigenvalue, furedi1981eigenvalues}. In particular, one has (see also Sec.~\ref{sec:pca} where we recover these results as special cases of our work):
\begin{align}
    \lim_{N\to\infty}\mathbb{E}[\lambda_N]=\begin{cases}
        2\sigma\quad\quad\text{if}\quad\quad \mu\leq \sigma\\
        \frac{\sigma^2}{\mu}+\mu\quad\quad\text{if}\quad\quad \mu> \sigma\\
    \end{cases}
\end{align}
and 
\begin{align}
    \lim_{N\to\infty}\mathbb{E}[q_{{\bf u}_N,{\bf e}_N}]=\begin{cases}
        0\quad\quad\text{if}\quad\quad \mu\leq \sigma\\
        1-\frac{{\sigma}^2}{\mu^2}\quad\quad\text{if}\quad\quad \mu> \sigma\\
    \end{cases}.
\end{align}
The picture that emerges is therefore clear: when $\mu$ is bigger than $\sigma$ the perturbation is strong enough to generate a spike in the otherwise Wigner semicircular spectrum, and this also coincides with the moment where the overlap between the maximal eigenvector and the signal becomes of order 1. When instead $\mu$ is smaller than $\sigma$ the spectrum is indistinguishable from the GOE spectrum, and ${\bf u}_N$ is orthogonal to the signal. Moreover the detection threshold $\mu=\sigma$ has been shown to be optimal, meaning that no other estimator can distinguish between the GOE and the spiked matrix \cite{perry2018optimality}. Let us also remark that the quantities described above (eigenvalues and overlaps) are self-averaging \cite{ros_lecture_2025}, meaning that they converge to their average values as $N\to\infty$.

\subsection{Motivation: curvature driven pathways}

In this work, we are interested in characterizing the squared overlaps between the eigenvectors of \emph{pairs} of correlated, GOE random matrices which are deformed by rank-1 additive \emph{and} multiplicative perturbations, building on the work in Refs.~\cite{bun2016rotational,bun2018overlaps}. In certain parameter regimes, these perturbations generate outliers in the spectra of the pair of matrices: we determine the overlap between the eigenvectors of the outliers, as well as between the eigenvector of the outlier of one matrix and any other eigenvector associated to eigenvalues in the bulk of the other matrix. This analysis is motivated by the study of curvature-driven pathways in energy landscapes of Chapter~\ref{chapter:energy_landscapes}, and more specifically of Sec.~\ref{sec:curvature_driven_paths}. Indeed we have seen around Eq.~\ref{eq:curvature_chi0_chi1} that in order to compute the energy profile along paths between two minima, we need to compute the expressions $\mathbb{E}[\chi_0],\mathbb{E}[\chi_1]$ (cf. Eq.~\ref{eq:curvature_chi0_chi1}). In particular, such expressions were necessary to obtain the energy along perturbed paths (between two fixed points) that try to leverage the local structure of the Hessians. We have seen that such Hessians are correlated GOE matrices with rank-1 additive and multiplicative perturbations, just like those in this Chapter. When the Hessian at the arrival point has an isolated eigenvalue, we saw in Eq.~\ref{eq:curvature_chi0_chi1} that we had to compute the overlap between the eigenvector associated to that eigenvalue, and any eigenvector of the starting Hessian. Hence, here we present a much more general calculation, which has as a special case the one we need for Sec.~\ref{sec:curvature_driven_paths}.\\

\noindent The study of eigenvector overlaps for statistical purposes is a rather recent topic of research ~\cite{potters_bouchaud_2020, bun2016rotational, ledoit2011eigenvectors, allez_free_2014, bun2017cleaning, bun2018overlaps, bun2018optimal, allez_cov_overlaps_2025}. More precisely, our work extends some results of Refs.~\cite{bun2018overlaps, potters_bouchaud_2020}. The work in \cite{bun2018overlaps} is motivated by problems of estimating a (fixed) matrix ${\bf C}$ corrupted by noise, via the so called Rotational Invariant Estimator. The authors consider two cases: multiplicative noise where ${\bf S}=\sqrt{{\bf C}}{\bf \mathcal{W}}\sqrt{{\bf C}}$ and  ${\bf \tilde{S}}=\sqrt{{\bf C}}{\bf \tilde{\mathcal{W}}}\sqrt{{\bf C}}$ with ${\bf \tilde{\mathcal{W}}},{\bf \mathcal{W}}$ independent Wishart matrices; additive noise with 
${\bf S} = {\bf C} +{\bf W}$ and ${\bf \tilde{S}} = {\bf C} + {\bf \tilde{W}}$ with ${\bf W}$ and ${\bf \tilde{W}}$ two independent GOE matrices. In both cases they are able to obtain the expected overlap between eigenvectors of ${\bf S},{\bf \tilde{S}}$ in the large $N$ limit, observing that it does not depend on ${\bf C}$ explicitly, but only on observable quantities and on the statistics of the random noise. As we will see below, we consider a slightly different problem, with correlated GOE matrices perturbed by both additive and multiplicative perturbations. In particular, we do not have a fixed matrix ${\bf C}$, but rather a common GOE matrix ${\bf H}$ to both matrices.

\section{The model and its spectral properties}\label{sec:rmt_theoretical_res}
Let us now become more quantitative, and define the problem that we are interested in studying. We will make our derivations in the most general model of coupled GOE matrices, introduced below in Sec.~\ref{subsec:matrix}. Since these coupled GOE matrices have the same structure, in Sec.\ref{main:isoalted_eigenvalue} and \ref{sec:rmt_outlier_evecs} we will derive their spectral properties in the most general case.

\subsection{The matrix ensembles}
\label{subsec:matrix}
\subsubsection{Definition}
We consider pairs of correlated random matrices with a perturbed GOE statistics.  In our model of interest, the perturbation is given by a special row and column in each matrix of the pair, whose entries are correlated to each others in a different way. More precisely, let $\mathbf{M}^{(a)}$ with $a\in\{0,1\}$ be a pair of $N \times N$ ($N>>1$) matrices with the following block structure:  
\begin{align}\label{eq:MatrixForm}
    \mathbf{M}^{(a)}=
    \begin{pmatrix}
     & && & m^{a}_{1\,N}\\
     &  &{\bf B}^{(a)}  & & \vdots\\
      & & & & m^{a}_{N-1\,N}\\
     m^{a}_{1\,N} && \ldots & m^{a}_{N-1\,N} &m_{N\,N}^{a}
    \end{pmatrix}
\end{align}
\noindent where the ${\bf B}^{(a)}$ are two $N-1 \times N-1$ correlated GOE matrices with components $B_{ij}^a$ having zero mean, and correlations given by:
\begin{equation}
\label{eq:b_correlations}
    \mathbb{E}[B_{ij}^a \, B_{kl}^b]= \tonde{\delta_{a b} \frac{\sigma^2}{N}+ (1-\delta_{ab})\frac{\sigma^2_H}{N}}(\delta_{ik} \delta_{jl}+ \delta_{il} \delta_{jk})
\end{equation}
for $a, b \in \grafe{0,1}$.  
The two GOE matrices ${\bf B}^{(a)} $ have equal variance $ N^{-1}\sigma^2$, and for all $i \leq j$ the component $B_{ij}^0$ is correlated only with $B_{ij}^1$. Similarly, the entries $m^a_{i N}$ for $i<N$ have zero mean and correlations given by:
\begin{equation}
\label{eq:m_correlations}
    \mathbb{E}[m_{iN}^a \, m_{kN}^b]= \tonde{\delta_{a b} \frac{\Delta^2_a}{N}+ (1-\delta_{ab})\frac{\Delta^2_h}{N}}\delta_{ik}
\end{equation}
for $a,b\in\{0,1\}$. Finally, the diagonal entries $m^a_{NN}$ have a non-zero average:
\begin{equation}
\mathbb{E}[m_{NN}^a]= \mu_a,\,\,\quad a\in\{0,1\},
\end{equation}
 and covariances given by:
\begin{equation}
\label{eq:m_NN}
   \text{Cov}(m_{NN}^a, m_{NN}^b)=\mathbb{E}[m_{NN}^a \, m_{NN}^b]-\mu_a \mu_b= \tonde{\delta_{a b} \frac{v^2_a}{N}+ (1-\delta_{ab})\frac{v^2_h}{N}}
\end{equation}
for $a,b\in\{0,1\}$.
The choice of correlations in \eqref{eq:b_correlations} implies that the matrices ${\bf B}^{(0)}, {\bf B}^{(1)}$ can be written as the sum of two GOE matrices:
\begin{equation}
    {\bf B}^{(a)}={\bf H}+ {\bf W}^{(a)},\quad a\in\{0,1\}
\end{equation}
where  ${\bf H}$ is an $N-1\times N-1$ GOE matrix with 
\begin{equation}
\begin{split}
    &\mathbb{E}[H_{ij} H_{kl}]=\frac{\sigma_H^2}{N} (\delta_{ik} \delta_{jl}+ \delta_{il} \delta_{jk}),
\end{split}
\end{equation}
that is in common to both elements of the pair, while  ${\bf W}^{(0)}, {\bf W}^{(1)}$ are $N-1\times N-1$ independent and identically distributed GOE matrices satisfying
\begin{equation}
\begin{split}
\mathbb{E}[W^{a}_{ij} W^{a}_{kl}]=\frac{\sigma_W^2}{N} (\delta_{ik} \delta_{jl}+ \delta_{il} \delta_{jk}),\quad a\in\{0,1\}
\end{split}
\end{equation}
and clearly $\sigma^2= \sigma^2_H + \sigma^2_W$. Thanks to \eqref{eq:m_correlations} and \eqref{eq:m_NN}, the entries belonging to the last row and column admit a similar decomposition in terms of independent random variables, 
\begin{equation}
m^{a}_{iN}= h_{iN} + w_{iN}^a,\quad a\in\{0,1\}
\end{equation}
with $h_{iN}\sim\mathcal{N}(0,N^{-1}\Delta^2_h)$ and $w^a_{iN}\sim\mathcal{N}(0,N^{-1}\Delta^2_{w,a})$ for $i<N$, while  $h_{NN}\sim\mathcal{N}(0,N^{-1} v^2_h)$ and $w^a_{NN}\sim\mathcal{N}(\mu_a,N^{-1} v^2_{w,a})$. Of course, $\Delta^2_a= \Delta^2_h + \Delta^2_{w,a}$ and $v^2_a= v^2_h+ v^2_{w,a}$, for $a=0$ and $a=1$.

\subsubsection{Recasting the matrices}
Each matrix of the form \eqref{eq:MatrixForm} can be re-written as a GOE matrix perturbed with both additive and multiplicative rank-one perturbations along one fixed direction identified by the normalized basis vector ${\bf e}_N$ (corresponding to the last row and column). We can indeed write:
\begin{equation}\label{eq:MatRiscritte}
    {\bf M}^{(a)}= \left[{\bf F}^{(a)}\right]^\top  {\bf X}^{(a)} \,  {\bf F}^{(a)} +  \tonde{\mu_a + \zeta_a\frac{\xi^a}{\sqrt{N}} } \, {\bf e}_N {\bf e}_N^T
\end{equation}
where ${\bf X}^{(a)}$ is now an $N \times N$ GOE with variance $N^{-1}\sigma^2$, $ \xi^a \sim \mathcal{N}(0,1)$ is an independent standard Gaussian variable, and the terms ${\bf F}^{(a)}$ and $\zeta_a$ are introduced to reproduce the correct variance of the entries belonging to the special row and column of the matrices \eqref{eq:MatrixForm}; more precisely, 
\begin{equation}
    {\bf F}^{(a)}= \mathbb{I}_N - \tonde{1-\frac{\Delta_a}{\sigma}}{\bf e}_N {\bf e}_N^T,
\end{equation}
while $\zeta_a= \tonde{v^2_a - \Delta^4_a/\sigma^{2}}^{\frac{1}{2}}$ is chosen in such a way that $m^a_{NN} \sim \mathcal{N}(\mu_a, N^{-1} v^2_a)$ is recovered. The matrix ${\bf F}^{(a)}$ represents a deterministic, multiplicative perturbation to the GOE, while the second term in \eqref{eq:MatRiscritte} is the additive one. \\

\noindent We also remark that each of the two matrices ${\bf M}^{(a)}$  has a statistics that \emph{is not} rotational invariant, since there is a basis vector ${\bf e}_N$ that identifies a special direction along which the statistics of the entries is special. Nonetheless, rotational invariance is preserved in the subspace orthogonal to ${\bf e}_N$, given that the corresponding blocks ${\bf B}^{(a)}={\bf H}+{\bf W}^{(a)}$ have a statistics which is invariant with respect to changes of basis. \\

\noindent We introduce the notation $\mathbf{u}_1^{a},\ldots,\mathbf{u}_N^{a}$ for the normalized eigenvectors of the matrix $\mathbf{M}^{(a)}$, and $\lambda_{1}^{a}, \ldots,\lambda_N^{a}$ for the associated real eigenvalues.
In the rest of the Chapter, we give for granted that the index $a$ can be either 0 or 1, and every time it appears it is understood that that property holds for both $a=0$ and $a=1$.

\subsection{Spectral properties and outliers}
\label{main:isoalted_eigenvalue}
\noindent 
In this section, we will show how to obtain the spectral properties of matrices of the form in Eq.~\eqref{eq:MatrixForm}. Notice that the matrices $\mathbf{M}^{(0)}$ and $\mathbf{M}^{(1)}$ have the same structure; each one has a statistics fully described by the parameters $\sigma, \Delta_a, v_a$ and $\mu_a$ for $a=0,1$. Since the spectral properties discussed in this section involve only eigenvalues and eigenvectors of one single element of the pair of matrices, they are independent of the parameters $\Delta_h, \sigma_H$ and $v_h$ describing the correlations between the entries of the two matrices in the pair. We therefore drop the superscript $a$ and denote the single-matrix parameters simply with $\sigma, \Delta, v$ and $\mu$ in this section. \\

\noindent It has already been shown in Refs.~\cite{ros2019complexity, ros2020distribution} that the perturbation given by the special row and column of ${\bf M}$ can generate a transition in the eigenvalue density in the large-$N$ limit, occurring at a critical value of the parameters $\Delta, \mu, \sigma$. This transition separates a regime in which the eigenvalue density is independent of $\Delta, \mu$ and simply coincides with the eigenvalue density of the GOE matrix $ {\bf X}$ in \eqref{eq:MatRiscritte}, from a regime in which one or two isolated eigenvalues are present, see Fig. \ref{fig:double_evals}. These isolated eigenvalues, which will henceforth be referred to as $\lambda_{\rm iso}$, are detached from the bulk of eigenvalues forming a continuum density in the limit $N \to \infty$. 
As we have already mentioned in the Introduction, these types of spectral transitions belong to the BBP-like transition family \cite{baik2005phase}. For GOE matrices, the BBP transition has been widely investigated in the case of an additive finite-rank perturbation \cite{potters_bouchaud_2020, edwards1976eigenvalue, benaych2011eigenvalues}, corresponding in our setting to $\Delta=\sigma$. We now discuss in full generality the results that hold in the case $\Delta \neq \sigma$ and $\mu\neq 0$, up to $1/N$ corrections, extending the previous works \cite{ros2019complexity, ros2020distribution}, where a special (physically relevant) case of this problem was considered.

\begin{figure}[t!]
\centering
\includegraphics[width=0.55\textwidth]{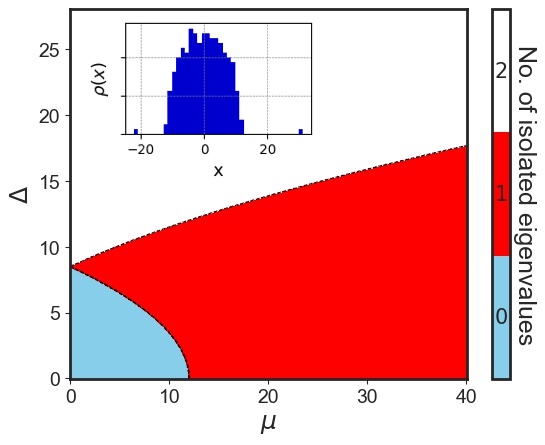}
\caption{The figure represents the regions in the plane $(\Delta,\mu)$ where either 2 (white zone), 1 (red zone) or 0 (lower left light blue zone) isolated eigenvalues emerge out of the bulk of the spectral density of $\mathbf{M}$. The plot is given for $\sigma=6$. As discussed in the main text, the existence conditions and the typical value of the isolated eigenvalue(s) are independent on $v$. \emph{Inset. } A particular realization of the spectral density for a random matrix $\mathbf{M}$ of size $N=300$, with $\sigma=6, \Delta=25$, $\mu=10$ belonging to the "white" zone, thus presenting two outliers.}
\label{fig:double_evals}
\end{figure}

\subsubsection{Computation of the Stieltjes transform}
\noindent The goal of this section is to show the main steps in deriving the Stieltjes transform of ${\bf M}$ up to order $1/N$. The techniques presented are rather general and will be used throughout this chapter. Therefore, here we will give some of the details of these calculations, without the goal of being rigorous, nor exhaustive. We assume here the definitions of the Stieltjes transform, resolvent, and the Sokhotski–Plemelj identity discussed in Sec.~\ref{sec:Goe_intro}, particularly in the case of GOE matrices. The Stieltjes transform of ${\bf M}$, here denoted by $\mathfrak{g}(z)$, can be written as:
\begin{align}
\label{app:sokholski}
    \mathfrak{g}(z)=\frac{1}{N}\mathbb{E}\left[\Tr(z-\mathbf{M})^{-1}\right]=\frac{1}{N}\mathbb{E}\left[\sum_{i=1}^{N-1}(z-\mathbf{M})^{-1}_{ii}+(z-\mathbf{M})^{-1}_{NN}\right].
\end{align}
As in Sec.~\ref{sec:Goe_intro}, we denote by $\mathfrak{g}_\sigma(z)$ the Stieltjes transform for the GOE ensemble with parameter $\sigma$. Given the block structure of the matrix ${\bf M}$, the components of the resolvent matrix can be obtained making use of Schur's matrix inversion lemma, which states:
\begin{align}
\label{eq:Schur_lemma}
\begin{bmatrix}
    \mathbf{A} & \mathbf{B}\\
    \mathbf{C} & \mathbf{D}
    \end{bmatrix}^{-1}=
    \begin{bmatrix}
    (\mathbf{A}-\mathbf{B}\mathbf{D}^{-1}\mathbf{C}) ^{-1} & -(\mathbf{A}-\mathbf{B}\mathbf{D}^{-1}\mathbf{C})^{-1}\mathbf{B}\mathbf{D}^{-1}\\
    - \mathbf{D}^{-1}\mathbf{C}(\mathbf{A}-\mathbf{B}\mathbf{D}^{-1}\mathbf{C})^{-1}\quad\quad & \mathbf{D}^{-1} + \mathbf{D}^{-1}\mathbf{C}(\mathbf{A}-\mathbf{B}\mathbf{D}^{-1}\mathbf{C})^{-1}\mathbf{B}\mathbf{D}^{-1}
    \end{bmatrix}.
\end{align} 

\noindent In the following we will set $M=N-1$, and introduce the $M \times M$ matrices
\begin{equation}
\label{eq:DefA}
   {\bf A}(z)=\frac{{\bf m} \; [{\bf m}]^T }{z- m_{NN}}, \quad {\bf m}=(m_{1N},\, m_{2N},\, \cdots ,\, m_{M \,  N})^T.
\end{equation}
Then, by Eq.~\eqref{eq:Schur_lemma}, for $i,j\leq M$ one has
\begin{equation}\label{eq:Block1}
    (z  -\mathbf{M})^{-1}_{ij}=
    (z -{\bf H}-{\bf W}-{\bf A}(z))^{-1}_{ij}
\end{equation}
and 
\begin{equation}
    (z -\mathbf{M})^{-1}_{iN}=
       -\sum_{k=1}^{N-1} \frac{m_{kN}}{z-m_{NN}}   (z  -\mathbf{M})^{-1}_{ik},
\end{equation}
while the remaining component reads
\begin{equation}\label{eq:Block2}
\begin{split}
    &(z-\mathbf{M})^{-1}_{N\, N}=(z-m_{NN})^{-1}\left\{1+\sum_{k,l=1}^{N-1}\frac{m_{kN}\,m_{lN}}{z-m_{NN}}(z  -\mathbf{M})^{-1}_{kl}\right\}.
    \end{split}
\end{equation}
We will also we make use of the following Dyson expansion:
\begin{align}
\label{eq:dyson}
  \frac{1}{z-{\bf H}-{\bf W}-{\bf A}(z)}=\mathbf{G}(z) \sum_{u=0}^\infty [{\bf A}(z) {\bf G}(z)]^u
\end{align}
with the resolvent defined as 
\begin{equation}
\label{eq:resolvent}
    {\bf G}(z):=(z-{\bf H}-{\bf W})^{-1}.
\end{equation}

\noindent By using these formulas, we obtain:
\begin{align}
\begin{split}
\mathbb{E} \quadre{\sum_{i=1}^{N-1}\tonde{z-\mathbf{M}}^{-1}_{ii}}&=\mathbb{E}\Tr[ (z-{\bf H}-{\bf W}-{\bf A}(z))^{-1}]=\mathbb{E}\Tr\left[\mathbf{G}(z) \sum_{u=0}^\infty [{\bf A}(z) {\bf G}(z)]^u \right]\\
&=\mathbb{E}\left[\Tr {\bf G}(z)\right] + \sum_{u=1}^\infty\mathbb{E}\Tr {\bf G}(z)[{\bf A}(z){\bf G}(z)]^u.
\end{split}
\end{align}
We wish to evaluate this expression to the second leading order. Let us analyse a single element of the form $\mathbb{E}\Tr {\bf G}(z)[{\bf A}(z){\bf G}(z)]^u$. By Gaussian integration, it is not hard to show that for $N>>1$ (see Appendix of Ref.~\cite{paccoros}):
\begin{align}
    \mathbb{E}\left[\frac{1}{(z-m_{NN})^u}\right]=\frac{1}{(z-\mu)^u}+\mathcal{O}\left(\frac{1}{N}\right).
\end{align}
Then, by plugging the expression for ${\bf A}(z)$ and using that $m_{i<N},m_N$ are independent (we remove the subscript $N$ in $m_{iN}$ for simplicity) we obtain:
\begin{align}
\begin{split}
&\mathbb{E}\Tr {\bf G}(z)[{\bf A}(z){\bf G}(z)]^u=\frac{1}{(z-\mu)^u}\mathbb{E}\Tr {\bf G}(z)[{\bf m}\,{\bf m}^\top {\bf G}(z)]^u\\
&=\frac{1}{(z-\mu)^u}\sum_{i_1,\ldots,i_{2u+1}}\mathbb{E}\left[\,{\bf G}(z)_{i_1i_2}\,m_{i_2}\,m_{i_3}\,{\bf G}(z)_{i_3i_4}\ldots m_{i_{2u}}\,m_{i_{2u+1}}\, {\bf G}(z)_{i_{2u+1}i_1}\right]\\
&=\frac{1}{(z-\mu)^u}\sum_{i_1,\ldots,i_{2u+1}}\mathbb{E}\left[ {\bf G}(z)_{i_1i_2}{\bf G}(z)_{i_3i_4}\ldots {\bf G}(z)_{i_{2u+1}i_1}\right]\mathbb{E}\left[m_{i_2}m_{i_3}\ldots m_{i_{2u+1}}\right]
\end{split}
\end{align}
where we used independence of the elements in ${\bf G}(z)$ and ${\bf A}(z)$. Now, to leading order, we must maximize the number of traces involving ${\bf G}$ in the sum above, since each ${\bf G}^k(z)$ converges to the identity multiplied by an $\mathcal{O}(1)$ factor as $N\to\infty$, see Sec.~\ref{sec:Goe_intro}. Hence, each trace will bring an additional $N$ factor. By Wick's theorem, the expected value $\mathbb{E}\left[m_{i_2}m_{i_3}\ldots m_{i_{2u+1}}\right]$ is given by the sum, over all possible pairings, of the product of expected values of each pair. The pairing that maximizes the number of traces is the following:
\begin{align}
    \mathbb{E}[m_{i_2}m_{i_{2u+1}}]\mathbb{E}[m_{i_3}m_{i_{4}}]\mathbb{E}[m_{i_5}m_{i_{6}}]\ldots \mathbb{E}[m_{i_{2u-1}}m_{i_{2u}}]=\frac{\Delta^{2u}}{N^u}.
\end{align}
Indeed, we isolate as many ${\bf G}$'s as possible (i.e. contract their own indices), except the first and the last ones, which must be necessarily paired. We then obtain:
\begin{align}
\begin{split}
    \mathbb{E}\Tr {\bf G}(z)&[{\bf A}(z){\bf G}(z)]^u=\frac{1}{N^u}\frac{\Delta^{2u}}{(z-\mu)^u}\mathbb{E}\left[\Tr{\bf G}(z)^2 [\Tr{\bf G}(z)]^{u-1} \right]+\mathcal{O}\left(\frac{1}{N}\right)\\
    &=\frac{\Delta^{2u}}{(z-\mu)^u}\left(\frac{1}{N}\Tr\mathbb{E}[{\bf G}(z)^2]\right)\left(\frac{1}{N}\Tr\mathbb{E}{\bf G}(z)\right)^{u-1}+\mathcal{O}\left(\frac{1}{N}\right).
\end{split}
\end{align}
Where in the last line we used that, since traces of GOE resolvents are self-averaging for large $N$, we can split the expected value to leading order. Combining everything for all powers of $u$, we finally obtain:
\begin{align}
\label{eq:WeC}
\mathbb{E} \quadre{\sum_{i=1}^{N-1}\tonde{z-\mathbf{M}}^{-1}_{ii}}=\mathbb{E}\Tr{\bf G}(z)+\frac{1}{N}\Tr\mathbb{E}[{\bf G}(z)^2]f(z;\Delta,\mu)+\mathcal{O}\tonde{\frac{1}{N}},
\end{align}
where we defined
\begin{align}
\label{def:f}
    f(z;\Delta,\mu):=\frac{\Delta^2}{z-\mu-\Delta^2\frac{1}{N}\Tr\mathbb{E}{\bf G}(z)}.
\end{align}
Now, notice that from the definition of ${\bf G}$ in Eq.~\eqref{eq:resolvent} we get:
\begin{align}
\frac{1}{N}\Tr\mathbb{E}[{\bf G}(z)^2]=-\frac{1}{N}\partial_z\Tr\mathbb{E}{\bf G}(z)=-\partial_z\mathfrak{g}_\sigma(z)+\mathcal{O}\tonde{\frac{1}{N}}
\end{align}
where the corrections come from the fact that ${\bf G}$ has size $(N-1)\times (N-1)$ while the Gaussian entries of ${\bf H}+{\bf W}$ have variance $\sigma^2/N$. Therefore, to leading order, we get:
\begin{align}
\frac{1}{N}\Tr\mathbb{E}[{\bf G}(z)^2]f(z;\Delta,\mu)=-\frac{\Delta^2}{z-\mu-\Delta^2\mathfrak{g}_\sigma(z)}\partial_z\mathfrak{g}_\sigma(z)+\mathcal{O}\tonde{\frac{1}{N}}.
\end{align}

\noindent We now focus on $\frac{1}{N}\mathbb{E}\Tr{\bf G}(z)$, which is the first term in the right-hand side of \eqref{eq:WeC}, normalized by $N$. To leading order, this term is $\mathfrak{g}_\sigma(z)$. However, three types of $1/N$ corrections can be computed: one coming from the trace, which sums only $N-1$ (and not $N$) matrix elements; one coming from the variances being $\sigma^2/N$ in a matrix of size $(N-1)\times (N-1)$; and one coming from the $1/N$ corrections to the GOE resolvent, already determined in \cite{VERBAARSCHOT1984367}. In order to distinguish between these terms, we multiply the first set of corrections by a factor $u$ and eventually take $u \to 1$ at the end of the calculation. Adapting the derivation of \cite{VERBAARSCHOT1984367} to the perturbed case we obtain:
\begin{equation}\label{eq:SingleRes}
\frac{1}{N}\mathbb{E} \Tr {\bf G}(z)= \mathfrak{g}_\sigma(z)+ \frac{1}{N} \quadre{\frac{z- \sqrt{z^2-4 \sigma^2}}{2[z^2- 4 \sigma^2]} - \frac{u \sigma^2 \mathfrak{g}_\sigma^3(z) }{1- \sigma^2 \mathfrak{g}^2_\sigma(z)}   - u \mathfrak{g}_\sigma(z)}+\mathcal{O}\tonde{\frac{1}{N^2}}.
\end{equation}
In here, the first contribution to the $1/N$ corrections is the one determined in \cite{VERBAARSCHOT1984367}; the first term proportional to $u$ arises from the fact that we are considering matrices of size $N-1$ with variances normalized by a factor $N$, while the second term proportional to $u$ is due to the fact that we normalize by $N$ the sum over $N-1$ components. Proceeding as above we also find:
\begin{equation}
\frac{1}{N}\mathbb{E}\left[(z-\mathbf{M})^{-1}_{NN}\right]=\frac{1}{N}\frac{\Delta^2}{z-\mu-\Delta^2\mathfrak{g}_{\sigma}(z)}\frac{\mathfrak{g}_\sigma(z)}{z-\mu}+  \mathcal{O}\tonde{\frac{1}{N^2}}.
\end{equation}
Combining everything, we finally obtain the Stieltjes transform of ${\bf M}$ up to $1/N$ corrections for large $N$:
\begin{align}\label{eq:FullG}
\begin{split}
    \mathfrak{g}(z)=\mathfrak{g}_\sigma(z)&+\frac{1}{N} \quadre{\frac{z- \sqrt{z^2-4 \sigma^2}}{2[z^2- 4 \sigma^2]}- \frac{u}{\sqrt{z^2- 4 \sigma^2}}}\\&+\frac{1}{N}\frac{\Delta^2}{z-\mu-\Delta^2\mathfrak{g}_{\sigma}(z)}\left[\frac{\mathfrak{g}_\sigma(z)}{z-\mu}-\partial_z\mathfrak{g}_{\sigma}(z)\right]+ \mathcal{O} \tonde{\frac{1}{N^2}}.
\end{split}
\end{align}

\subsubsection{Computation of the spectrum}
The spectral measure can then be obtained using \eqref{eq:InvStj}. To leading order, one recovers the GOE density \eqref{eq:DoSgoe}. The first contribution to the $1/N$ correction, denoted with $\rho^{(1)}_\sigma(x)$, is obtained from:
\begin{equation}
\label{eq:rho1_express}
\rho^{(1)}_{\sigma}(x)=\frac{1}{\pi}\lim_{\eta \to 0^+}\text{Im } \left[\frac{z- \sqrt{z^2-4 \sigma^2}}{2[z^2- 4 \sigma^2]}- \frac{u}{\sqrt{z^2- 4 \sigma^2}}\right]_{z=x-i \eta}
\end{equation}
with $u=1$ (setting $u=0$ would amount to considering matrices of size $N\times N$). The term in brackets exhibits a branch cut in the region $z \in [-2 \sigma, 2 \sigma]$, and two poles at the boundaries of the interval.  Thus, this term gives rise to $1/N$ corrections to the continuous eigenvalue density plus two delta peaks at the boundaries.  The second set of $1/N$ corrections arises from the second $1/N$ term in \eqref{eq:FullG}, which exhibits poles at the solutions of the equation 
\begin{align}
    \label{egval_eqn}
    z-\mu-\Delta^2\mathfrak{g}_{\sigma}(z)=0.
\end{align}
The real solutions $\lambda_{\rm iso, \pm}$ of this equation, whenever they exist, are the isolated eigenvalues of the matrix. We discuss extensively the conditions for their existence below. \\

\noindent In an expansion in $N^{-1}$ the average spectral measure of the matrices ${\bf M}$ reads:
\begin{equation}\label{eq:Dos}
d\nu_N(\lambda)= \rho_N(\lambda) d\lambda + \frac{1}{N}\sum_{*=\pm} \alpha_*\delta(\lambda-\lambda_{\rm iso,*}) + \mathcal{O}\tonde{\frac{1}{N^2}},
\end{equation}
where $\rho_N(\lambda)$ is defined for $|\lambda| \leq 2 \sigma$ and it 
admits the expansion 
\begin{equation}\label{eq:ED}
\rho_N(\lambda)= \rho_\sigma(\lambda)+ \frac{1}{N} \rho^{(1)}_\sigma(\lambda) + \mathcal{O}\tonde{\frac{1}{N^2}},
\end{equation}
and $\alpha_\pm$ are to be determined later. From Eq.~\ref{eq:rho1_express} we can compute the  sub-leading correction:
\begin{equation}\label{eq:Rho1}
\rho^{(1)}_{\sigma}(\lambda)= \frac{\sqrt{4 \sigma^2-\lambda^2}}{2 \pi (\lambda^2-4 \sigma^2)}- \frac{\text{sign}(\lambda)}{\sqrt{4 \sigma^2-\lambda^2}}+ \sum_{x=\pm 1}\frac{1}{4} \, \delta(\lambda +2 x \sigma),
\end{equation}
The delta peaks in the measure \eqref{eq:Dos} correspond to the isolated eigenvalues $\lambda_{\rm iso, \pm}$.
Isolated eigenvalues are real solutions $z \to \lambda$ of \eqref{egval_eqn}. By using the expression for $\mathfrak{g}_\sigma(\lambda)$ we can easily rewrite Eq.~\eqref{egval_eqn} as:
\begin{align}
\label{app:sign_condition}
\lambda\left(1-\frac{\Delta^2}{2\sigma^2}\right)-\mu=-\text{sign}(\lambda)\frac{\Delta^2}{2\sigma^2}\sqrt{\lambda^2-4\sigma^2}
\end{align}
from which we can take the square on both sides, keeping in mind that the equation to be satisfied by our final solution is \eqref{app:sign_condition}, with the proper sign on the right-hand side. Taking the square we get:
\begin{align*}
\lambda^2\left(1-\frac{\Delta^2}{\sigma^2}\right)-2\mu\left(1-\frac{\Delta^2}{2\sigma^2}\right)\lambda+\mu^2+\frac{\Delta^4}{\sigma^2}=0
\end{align*}
which gives us the two solutions
\begin{align}
\label{app:solutions}
\lambda_{\rm iso,\pm}(\mu, \Delta, \sigma)= \frac{{2 \mu \sigma^2- \Delta^2 \mu\pm \text{sign}(\mu) \Delta^2 \sqrt{\mu^2-4 (\sigma^2- \Delta^2)}}}{2 (\sigma^2-\Delta^2)}.
\end{align}
From this expression we get that the condition 
\begin{align}
\label{app:determinant_condition}
\mu^2-4(\sigma^2-\Delta^2)\geq 0
\end{align}
must be satisfied for the solutions to exist on the real line. We now discuss several cases for the parameters.\\

\noindent \textbf{Case $\Delta= \sigma$.}\\
As a consistency check, let us verify that we get back the result of \cite{edwards1976eigenvalue} when $\Delta= \sigma$. This case corresponds to a GOE matrix perturbed by a rank-1 additive term. The solution $\lambda_{\rm iso, +}$ diverges in this limit, while the solution $\lambda_{\rm iso, -}$ converges to
\begin{align}
\label{eq:Delta=Sigma_eval}
    \lambda_{\rm iso}= \lambda_{\rm iso,-}=\mu+\frac{\sigma^2}{\mu}.
\end{align}
 The condition \eqref{app:determinant_condition} is automatically satisfied, whereas \eqref{app:sign_condition} is satisfied provided that $|\mu|\geq \sigma$. The (only) isolated eigenvalue thus exists (i.e. it is bigger than $2\sigma$ in absolute value) for any $|\mu|>\sigma$.\\

\noindent \textbf{Case $\Delta< \sigma$}
This case has been already discussed in \cite{ros2019complexity,ros2020distribution}, and here we re-derive those results.
The isolated eigenvalues exist whenever at least one among $\lambda_{\rm iso, \pm}$ is bigger than $2\sigma$ in absolute value and the conditions \eqref{app:sign_condition}, \eqref{app:determinant_condition} are both satisfied. We notice that in this setting, if they exist, the eigenvalues satisfy $\text{sign}(\lambda_{\rm iso, \pm})=\text{sign}(\mu)$. In order to study these existence conditions, we plug the expressions for \eqref{app:solutions} inside equation \eqref{app:sign_condition} and find:
\begin{align*}
    \text{sign}(\mu)\text{sign}\left[\left(2\sigma^2-\Delta^2\pm\Delta^2\sqrt{1-\frac{4(\sigma^2-\Delta^2)}{\mu^2}}\right)\frac{2\sigma^2-\Delta^2}{4\sigma^2(\sigma^2-\Delta^2)}-1
    \right]=-\text{sign}(\mu)
\end{align*}
where we are assuming that the eigenvalues are indeed isolated (to be verified a posteriori). This expression is equivalent to 
\begin{align*}
&\left(2\sigma^2-\Delta^2\pm\Delta^2\sqrt{1-\frac{4(\sigma^2-\Delta^2)}{\mu^2}}\right)\frac{2\sigma^2-\Delta^2}{4\sigma^2(\sigma^2-\Delta^2)}-1\leq 0 \quad\\
&\Leftrightarrow \quad\Delta^2\pm(2\sigma^2-\Delta^2)\sqrt{1-\frac{4(\sigma^2-\Delta^2)}{\mu^2}}\leq 0
\end{align*}
from which it is clear that the only acceptable isolated eigenvalue in this setting is $\lambda_{\rm iso, -}$, since otherwise we would have that the sum of two positive quantities is smaller or equal than 0. By studying the second degree equation $\lambda_{\rm iso, -}(\mu, \Delta,\sigma )>2\sigma$ we find that it is verified provided that 
\begin{equation}\label{eq:ConditionEx}
|\mu|>  \sigma \tonde{1+ \frac{\sigma^2-\Delta^2}{\sigma^2}}.
\end{equation}
Under this condition it is straightforward to see that \eqref{app:determinant_condition} is automatically verified.  Hence in this setting there exists only one isolated eigenvalue, whose explicit expression is precisely $\lambda_{\text{iso},-}$ in \eqref{app:solutions}. This eigenvalue appears as soon as Eq.~\eqref{eq:ConditionEx} is satisfied. For $\mu>0$, this eigenvalue is the maximal eigenvalue of the random matrix, while for $\mu<0$ it is the minimal.\\

\noindent \textbf{Case  $\sigma<\Delta$.}
This case is richer, and to the best of our knowledge was not discussed in previous literature. We notice that in this case whenever they exist, then the isolated eigenvalues satisfy $\text{sign}(\lambda_{\rm iso, \pm})=\mp\text{sign}(\mu)$. Moreover the condition \eqref{app:determinant_condition} is always verified in this setting. By plugging \eqref{app:solutions} into \eqref{app:sign_condition} we obtain 
\begin{align*}
&\text{sign}(\mu)\text{sign}\left(\frac{\pm\Delta^2\sqrt{1+\frac{4(\Delta^2-\sigma^2)}{\mu^2}}-\Delta^2+2\sigma^2}{2(\Delta^2-\sigma^2)}\frac{2\sigma^2-\Delta^2}{2\sigma^2}+1\right)\\
&=-\text{sign}(\mu)\text{sign}\left(\frac{\pm\Delta^2\sqrt{1+\frac{4(\Delta^2-\sigma^2)}{\mu^2}}-\Delta^2+2\sigma^2}{2(\Delta^2-\sigma^2)}\right)
\end{align*}
which gives us the condition
\begin{align*}
    \pm\left(\frac{\pm\Delta^2\sqrt{1+\frac{4(\Delta^2-\sigma^2)}{\mu^2}}-\Delta^2+2\sigma^2}{2(\Delta^2-\sigma^2)}\frac{2\sigma^2-\Delta^2}.{2\sigma^2}+1\right)\leq 0
\end{align*}
In the case in which we choose the sign $-$, this inequality becomes 
\begin{align*}
    \Delta^2\geq (2\sigma^2-\Delta^2)\sqrt{1+\frac{4(\Delta^2-\sigma^2)}{\mu^2}}
\end{align*}
from which we deduce that it is always verified when $\Delta\geq\sqrt{2}\sigma$ and it is verified only for $|\mu|\geq 2\sigma-\Delta^2/\sigma$ when $\Delta<\sqrt{2}\sigma$. The combination of these conditions leads to $|\mu|\geq 2\sigma-\Delta^2/\sigma$ (in the first case $2\sigma-\Delta^2/\sigma$ becomes negative and therefore any $\mu$ will satisfy the condition).
In the case in which we choose the sign $+$, the inequality becomes 
\begin{align*}
    0\geq \Delta^2+ (2\sigma^2-\Delta^2)\sqrt{1+\frac{4(\Delta^2-\sigma^2)}{\mu^2}}
\end{align*}
which is never true for $\Delta\leq\sqrt{2}\sigma$ and becomes true for $\Delta>\sqrt{2}\sigma$ as long as $|\mu|\leq \Delta^2/\sigma-2\sigma$. It remains to verify that both of these isolated eigenvalues are bigger than $2\sigma$ in their domain of existence. By plugging the expression for $\lambda_{\rm iso, \pm}$ as in \eqref{app:solutions}, it is straightforward to verify that both of these isolated eigenvalues are outside of the bulk if we take strict inequalities in the existence conditions that we have just found. \\
\noindent Henceforth, we can resume our results as follows: $\lambda_{\rm iso, -}$ is an isolated eigenvalue provided that $|\mu|>2\sigma-\Delta^2/\sigma$; $\lambda_{\rm iso, +}$ is also an isolated eigenvalue provided that $\Delta>\sqrt{2}\sigma$ and $|\mu|<\Delta^2/\sigma-2\sigma$. In particular, notice that whenever $\lambda_{\text{iso},+}$ exists, then also $\lambda_{\text{iso},-}$ exists. \\\\
\noindent Therefore, we can incorporate such conditions within the spectral measure by defining that in Eq.~\eqref{eq:Dos} it holds:
\begin{equation}
\begin{split}
&\alpha_-\propto H\tonde{|\mu|-2\sigma+\frac{\Delta^2}{\sigma}}\\
&\alpha_+\propto H(\Delta-\sqrt{2}\sigma) H\tonde{-|\mu|-2\sigma+\frac{\Delta^2}{\sigma}}.
\end{split}
\end{equation}
with $H$ the Heaviside step function. We refer the reader to Fig.~\ref{fig:double_evals} for an example of how the $(\mu,\Delta)$ phase diagram is partitioned in terms of number isolated eigenvalues present. The proportionality factors are not important for the present discussion, but they can be found in Ref.\cite{paccoros}. However, let us just remark that the following holds:
\begin{equation}
\label{eq:rmt_delta_q}
\lim_{z \to \lambda_{\rm iso,\pm}}\frac{1}{\pi} \text{Im} \frac{1}{z-\mu-\Delta^2\mathfrak{g}_{\sigma}(z)}= \delta(z- \lambda_{\rm iso, \pm}) \mathfrak{q}_{\sigma, \Delta}(z, \mu).
\end{equation}
with 
\begin{equation}\label{eq:Defq}
\begin{split}
   & \mathfrak{q}_{\sigma,\Delta}(\lambda,\mu):=\text{sign}(\mu) \frac{\text{sign}(\lambda)\Delta^2\sqrt{\lambda^2-4\sigma^2}-\lambda(2\sigma^2-\Delta^2)+2\mu\sigma^2}{2\Delta^2\sqrt{\mu^2-4(\sigma^2-\Delta^2)}}.
    \end{split}
\end{equation}
This computation can be done by finding the Imaginary part using $z=\lambda_{\text{iso},\pm}-i\eta$ with $\eta\to 0^+$, and using the results in Sec.~\ref{sec:Goe_intro}.

\subsection{Outlier eigenvectors}
\label{sec:rmt_outlier_evecs}
The eigenvector ${\bf u}_{\rm iso,-}$ associated to the isolated eigenvalue $\lambda^0_{\text{iso},-}$~\eqref{app:solutions} has a 
 projection on the basis vector ${\bf e}_N$ corresponding to the special line and column of the matrix, which remains of $O(1)$ when $N$ is large; the typical value of this projection has been computed in~\cite{ros2020distribution} and reads:
\begin{equation}\label{eq:ProjVectors}
\begin{split}
  &\mathbb{E}[({\bf u}_{\rm iso,-} \cdot  {\bf e}_N)^2]= \mathfrak{q}_{\sigma,\Delta}(\lambda_{\rm iso,-},\mu)+\mathcal{O}(1/N)
   \end{split}
\end{equation}
It can be shown rather easily that whenever Eq.\eqref{eq:ConditionEx} is satisfied, then \eqref{eq:ProjVectors} is positive, as it should be. This can be done by considering separately the cases $\mu>0$ and $\mu<0$ and by using the expression of $\lambda_{\rm iso,-}$. In particular, since $\lambda_{\text{iso},-}$ and $\mu$ have the same sign, one can show that the condition \eqref{eq:ConditionEx} is equivalent to 
\begin{align*}
    -|\lambda_{\text{iso},-}|(2\sigma^2-\Delta^2)+2|\mu|\sigma^2 > 0
\end{align*}
which immediately implies the positivity of \eqref{eq:ProjVectors}. In particular, Eq.\eqref{eq:ProjVectors} is zero if and only if $|\lambda_{\text{iso}, -}|=2\sigma$, which is equivalent to $|\mu|=\sigma^{-1}(2\sigma^2-\Delta^2)$, i.e. the isolated eigenvalue is at the edge of the bulk. \\
\noindent The explicit dependence of \eqref{eq:ProjVectors} on the parameters $\sigma, \Delta,\mu$ reads:
 \begin{equation}
     \mathbb{E}[({\bf u}_{\rm iso, -} \cdot  {\bf e}_N)^2]= \frac{\quadre{\Delta^2 (|\mu|+ \Omega)-2 \sigma^2 \Omega+ \text{sign}(\Delta^2-\sigma^2)\sqrt{\kappa}}}{4 \Omega (\Delta^2-\sigma^2)}
 \end{equation}
 where $\Omega=(\mu^2- 4 \sigma^2+ 4 \Delta^2)^{\frac{1}{2}}$
and $\kappa=(\Delta^2 |\mu|-2 \sigma^2 |\mu|+ \Delta^2 \Omega)^2-16 \sigma^2(\sigma^2-\Delta^2)^2$.  In the case of a purely additive perturbation ($\Delta=\sigma$), using \eqref{eq:Delta=Sigma_eval} one sees that this expression reduces to:
\begin{equation}\label{eq:OvPCA}
({\bf u}_{\rm iso,-} \cdot  {\bf e}_N)^2 \stackrel{\Delta \to \sigma}{\longrightarrow}  1-\frac{\sigma^2}{\mu^2}
\end{equation}
consistently with previous results~\cite{benaych2011eigenvalues, potters_bouchaud_2020}.  
We remark that for the matrices \eqref{eq:MatrixForm} the joint isolated eigenvalue-eigenvector large deviation function has been determined as well~\cite{ros2020distribution}, generalizing the case of a purely additive perturbation \cite{BiroliGuionnet}.

\section{Eigenvectors overlaps}\label{sec:theoretical_res2}
\noindent In this section we aim at characterizing the correlations between eigenvectors of pairs of correlated matrices with the distribution  \eqref{eq:MatrixForm}, similarly to what is discussed in \cite{bun2018overlaps} for unperturbed GOE matrices. In particular, our objects of interest are the {averaged squared overlaps} between eigenvectors associated to different eigenvalues of the two matrices:
\begin{equation}
\label{phi}
\Phi(\lambda^{0},\lambda^{1}):=N\mathbb{E}[\left( \mathbf{u}_{\lambda^{0}}\cdot \mathbf{u}_{\lambda^{1}}\right)^2],
\end{equation}
where $\lambda^a$ are eigenvalues of ${\bf M}^{(a)}$, ${\bf u}_{\lambda^a}$ the corresponding normalized eigenvectors, and the expectation value $\mathbb{E}$ represents the average over the distribution of all the entries of the two matrices. In the limit of large $N$  this quantity remains of $\mathcal{O}(1)$ for values of $\lambda^a$ belonging to the continuous part (i.e. the \emph{bulk}) of the eigenvalue density of the two matrices. Indeed, in that case the eigenvectors are random vectors on the high-dimensional sphere and it can be shown that their typical squared overlap is $1/N$ (see paragraph B2 in Ref.~\cite{ros_lecture_2025}).\\

\noindent In this Chapter, we are interested in computing both the overlap between eigenvectors associated to eigenvalues in the bulk, as well as the overlaps involving the eigenvectors associated to the isolated eigenvalues of the matrices, whenever they exist. In the first case, the average $\mathbb{E}$ over different realizations of the random matrices can be replaced by an average, for fixed
randomness, over eigenvectors associated to eigenvalues within windows of width $d\lambda \gg N^{-1}$ centered around $\lambda^{0}, \lambda^{1}$: as a matter of fact, the quantity \eqref{phi} is self-averaging in the large $N$ limit \cite{bun2018overlaps}. 

\noindent Consider now the overlaps involving the eigenvectors associated to the isolated eigenvalues. As we have discussed in the previous section, any element of the pair of matrices in \eqref{eq:MatrixForm} can present zero, one or two isolated eigenvalues (see Fig.~\ref{fig:double_evals}). Such eigenvalues pop out of the bulk of the spectral density, which for $N\to\infty$ is given by the Wigner's semicircle law. Two isolated eigenvalues, denoted by $\lambda^a_{\text{iso},\pm}$, exist for each matrix $a\in\{0,1\}$ only when the noise from the special row and column is considerably bigger than the variance of the main GOE blocks, i.e. $\Delta_a >\sqrt{2} \sigma$. 
In the following, we restrict to the case in which only one isolated eigenvalue exists, equal to $\lambda^a_{\text{iso},-}$. We recall that this happens whenever Eq.~\eqref{eq:ConditionEx} is satisfied. To simplify the notation, henceforth we set
\begin{equation}
\label{eval_iso}
\lambda_{\rm iso}^a := \lambda_{\rm iso, -}^a,
\end{equation}
meaning that 
 $\lambda^0_{\text{iso}}$ is the isolated eigenvalue of $\mathbf{M}^{(0)}$, and $\lambda^1_{\text{iso}}$ of $\mathbf{M}^{(1)}$ . All of the results presented in the following can be easily generalized to the other isolated eigenvalues of the random matrices, whenever they exist. \\
 
 \noindent We also remark that in the case in which both the eigenvalues in \eqref{phi} are isolated, the relevant quantity to compute is the rescaled function:
\begin{align}
\label{phi_rescaled}
\tilde \Phi(\lambda_{\rm iso}^0,\lambda_{\rm iso}^1)   :=\frac{\Phi(\lambda_{\rm iso}^0,\lambda_{\rm iso}^1)}{N}=\mathbb{E}\left[\left( \mathbf{u}_{\lambda_{\text{iso}}^0}\cdot \mathbf{u}_{\lambda_{\text{iso}}^1} \right)^2\right].
\end{align}
This is because both eigenvectors have an $\mathcal{O}(1)$ projection on the special direction ${\bf e}_N$ given by \eqref{eq:ProjVectors}, so that their overlap is also of order 1, as can be shown by expanding these vectors in a component parallel to ${\bf e}_N$ and one orthogonal to it. This clearly indicates that the quantity remaining of $\mathcal{O}(1)$ in the limit of large $N$ is the
rescaled quantity \eqref{phi_rescaled}.


\noindent The overlap \eqref{phi} takes a different form depending on whether the considered eigenvalues (either both, or one of them) belong to the bulk of the eigenvalues density of their respective matrices, or are isolated. Our contribution consists in explicit formulas for the eigenvector overlaps in all the different cases, as a function of the parameters defining the statistics of the matrices \ref{eq:MatrixForm} . For simplicity, we will refer to these cases as \textit{bulk-bulk}, \textit{bulk-iso} and \textit{iso-iso}. Moreover, we will extend the case of bulk-bulk overlaps in \cite{bun2018overlaps} by computing its $1/N$ finite size corrections.\\

\noindent In the following sections, we will dive into the main steps of the calculations of such overlaps, before presenting the final results.

\subsection{A formula to extract the overlaps}
In this section, we aim at giving an overview of how the computation of the overlaps \eqref{phi} is carried out in the three cases mentioned above. The derivation is similar to that discussed in Ref.~\cite{bun2018overlaps}. We begin by introducing the auxiliary function
\begin{align}\label{eq:ExpProRe}
\psi(z,\xi):&=\frac{1}{N}\mathbb{E}\left[\Tr\left[\left(z-\mathbf{M}^{(0)}\right)^{-1}\left(\xi-\mathbf{M}^{(1)}\right)^{-1}\right]\right].
\end{align}
For finite $N$, the spectral decomposition of the matrices yields:
\begin{align}\label{eq:DoubleSum1}
    \psi(x-i\eta,y\pm i\eta)&=\frac{1}{N}\sum_{\alpha,\beta}\mathbb{E}\bigg[\frac{1}{x-i\eta-\lambda^{0}_{\alpha}}\frac{1}{y\pm i\eta-\lambda^{1}_{\beta}}\left( \mathbf{u}^{0}_{\alpha}\cdot\mathbf{u}^{1}_{\beta}\right)^2
    \bigg]\\
    &=\frac{1}{N^2}\sum_{\alpha,\beta}\mathbb{E}\left[
  R_{x,y,\eta}^{\pm}(\lambda^{0}_{\alpha},\lambda^{1}_{\beta}) \; 
    N\left(\mathbf{u}^{0}_{\alpha}\cdot\mathbf{u}^{1}_{\beta}\right)^2
    \right].
\end{align}
where we defined:
\begin{equation}
R_{x,y,\eta}^{\pm}(\lambda, \chi):=\frac{1}{x-\lambda-i\eta}\frac{1}{y-\chi\pm i\eta}.
\end{equation}

\noindent In the large $N$ limit, the sums over the eigenvalues can be turned into integrals over the spectral measure of the matrices, taking care of the presence of the sub-leading terms due to the isolated eigenvalues. 

\noindent The above expression hence becomes equivalent to:
{\medmuskip=0mu
\thinmuskip=0mu
\thickmuskip=0mu
\begin{align}
\begin{split}
\psi(x-i\eta,y\pm i\eta)=&  \int d\lambda\,d\chi\,\rho_\sigma(\lambda)\rho_\sigma(\chi)R^{\pm}_{x,y,\eta}(\lambda, \chi)  \Phi(\lambda,\chi)\\
    &+\frac{1}{N}\int  d\lambda\rho_\sigma(\lambda)R^{\pm}_{x,y,\eta}(\lambda, \lambda^{1}_{\rm iso}) \Phi(\lambda,\lambda^{1}_{\rm iso})\\
    &+\frac{1}{N}\int d\chi\rho_\sigma(\chi)R^{\pm}_{x,y,\eta}(\lambda^{0}_{\rm iso}, \chi) \Phi(\lambda^{0}_{\rm iso},\chi)\\
    &+\frac{1}{N}R^{\pm}_{x,y,\eta}(\lambda^{0}_{\rm iso}, \lambda^{1}_{\rm iso}) \frac{\Phi(\lambda^{0}_{\rm iso}, \lambda^{1}_{\rm iso})}{N}
\end{split}
\end{align}}

\noindent where $\rho_\sigma$ denotes the continuous part of the eigenvalue densities of the matrices ${\bf M}^{(a)}$, for $a\in\{0,1\}$, defined in \eqref{eq:DoSgoe}. We set  $\psi_0=\lim_{\eta\to0^+}\psi$. We recall Sokhotsky's formula:
\begin{align}
    \text{Im}\left[\lim_{\eta\to 0^+}\frac{1}{x\pm i\eta}\right]=\mp \pi\delta(x),
\end{align}
which applied to $R^\pm$ gives:
\begin{align}
\lim_{\eta\to 0^+}\text{Re}[R^+_{x,y,\eta}(\lambda,\chi)-R^-_{x,y,\eta}(\lambda,\chi)]=2\pi^2\delta(x-\lambda)\delta(y-\chi).
\end{align}
Applying this to $\psi_0$ gives the following:
\begin{align}
\label{re_psi}
    \begin{split}
    &\text{Re}\left[\psi_0(x-i\eta,y+i\eta)-\psi_0(x-i\eta,y-i\eta)\right]\\&=2\pi^2\Phi(x,y)\rho_\sigma(x)\rho_\sigma(y)\\
    &+\frac{2\pi^2}{N}\Phi(\lambda^{0}_{\rm iso},y)\rho_\sigma(y)\delta(x-\lambda^{0}_{\rm iso})\\
    &+\frac{2\pi^2}{N}\Phi(x,\lambda^{1}_{\rm iso})\rho_\sigma(x)\delta(y-\lambda^{1}_{\rm iso})\\
    &+\frac{2\pi^2}{N} \tilde{\Phi}(\lambda^{0}_{\rm iso},\lambda^{1}_{\rm iso})\delta(x-\lambda^{0}_{\rm iso})\delta(y-\lambda^{1}_{\rm iso}).
    \end{split}
\end{align}
We therefore see that in order to get the expression for $\Phi(\lambda^{0}_{\rm iso},y)$ we have to compute $\psi(z, \xi)$ and isolate the $1/N$ correction proportional to $\delta(x-\lambda^{0}_{\rm iso})$  appearing in the formula above. Instead, the term proportional to two delta peaks will give information on the overlap $\tilde \Phi(\lambda^{0}_{\rm iso},\lambda^{1}_{\rm iso})$. Notice that even though we are focusing on the case in which one single isolated eigenvalue $\lambda_{\rm iso} \equiv \lambda_{\rm iso, -}$ exists, all calculations can be extended straightforwardly to the second isolated eigenvalue, whenever it exists. \\
\noindent The above computations show that the expressions for the various overlaps can be obtained provided that the auxiliary function $\psi$ is computed up to order $1/N$. In the following sections we give an overview of such computation.

\subsection{Computation of $\psi(z,\xi)$}
It is convenient to decompose $\psi(z, \xi)$ as $\psi=\psi_{00}+\psi_{0N}+\psi_{NN}$ with:
\begin{align}
\label{psi_decomposition}
\begin{split}
    &\psi_{00}(z, \xi)=\mathbb{E}\left[\frac{1}{N}\sum_{i,j=1}^{N-1}(z-\mathbf{M}^{(0)})^{-1}_{ij}(\xi-\mathbf{M}^{(1)})^{-1}_{ij}\right]\\
    &\psi_{0N}(z, \xi)=\mathbb{E}\left[\frac{2}{N}\sum_{i=1}^{N-1}(z-\mathbf{M}^{(0)})^{-1}_{iN}(\xi-\mathbf{M}^{(1)})^{-1}_{iN}\right]\\
    &\psi_{NN}(z, \xi)=\mathbb{E}\left[\frac{1}{N}(z-\mathbf{M}^{(0)})^{-1}_{NN}(\xi-\mathbf{M}^{(1)})^{-1}_{NN}\right].
\end{split}
\end{align}
Here we will make use of the expansions from Eq.~\eqref{eq:DefA} to Eq.~\eqref{eq:resolvent}, where however now we keep track of the (sub/super)script $a$. So, for example, we have ${\bf G}_a(z)=(z-{\bf H}-{\bf W}^{(a)})^{-1}$. For simplicity, we first perform the average over the entries  $m^a_{i \, N}$ for $i<N$, with $a=0,1$, since they don't appear in the resolvents. \\\\

\noindent We will give a detailed derivation of $\psi_{00}$, while computing $\psi_{0N}$ and $\psi_{NN}$ is very similar, and therefore we won't do it here (see Ref.~\cite{paccoros} for additional details). By using \eqref{eq:Block1} and  \eqref{eq:dyson} we can easily rewrite
\begin{align}
\label{eq:Sum}
\begin{split}
&\psi_{00}(z,\xi)= \sum_{k,m=0}^{+\infty}\frac{1}{N}\Tr\mathbb{E}\left[{\bf S}_{k,m} \right],\\
& {\bf S}_{k,m}:={\bf G}_0(z)[{\bf A}^{(0)}(z){\bf G}_0(z)]^k{\bf G}_1(\xi)[{\bf A}^{(1)}(\xi){\bf G}_1(\xi)]^m \quad 
\end{split}
\end{align}
where we recall that the trace is over a subspace of dimension $N-1$. The definition for $\mathbf{A}^{(a)}$ is expressed in \eqref{eq:DefA}. We now compute the partial averages of the strings ${\bf S}_{k,m}$ over the entries $m^0_{iN}, m^1_{iN}$, to order $1/N$. Since the term with $k=0=m$ is independent of the entries $m_{iN}^a$, we focus on the remaining terms. \\

\noindent For either $k$ or $m$ different from 0, we need to evaluate
\begin{align*}
    \frac{1}{N}&\Tr\mathbb{E}\left[{\bf S}_{k,m}\right] =\frac{1}{N}\sum_{\substack{i_1,\ldots,i_{2k+1}=1\\j_1,\ldots j_{2m+1}=1}}^{N-1}\mathbb{E}\bigg[\mathbf{G}_0(z)_{i_1i_2}\frac{m^0_{i_2N}m^0_{i_3N}}{z-m^{0}_{NN}}\mathbf{G}_0(z)_{i_3i_4}\cdots\frac{m^0_{i_{2k}N}m^0_{i_{2k+1}N}}{z-m^{0}_{NN}}\times\\&\times\mathbf{G}_0(z)_{i_{2k+1}j_1} 
    \mathbf{G}_1(\xi)_{j_1j_2}\frac{m^1_{j_2N}m^1_{j_3N}}{\xi-m^{1}_{NN}}\mathbf{G}_1(\xi)_{j_3j_4}\cdots \frac{m^1_{j_{2m}N}m^1_{j_{2m+1}N}}{\xi-m^{1}_{NN}}\mathbf{G}_1(\xi)_{j_{2m+1}i_1}
    \bigg].
\end{align*}
We will first take the average over $m_{NN}^a$ for $a\in\{0,1\}$, since they do not appear in the resolvents. As for the computation of the spectrum, we can use here that (see Appendix in Ref.~\cite{paccoros} for the explicit proof):
\begin{align}
    \mathbb{E} \quadre{\frac{1}{(z-m^{0}_{NN})^{k}} \frac{1}{(\xi-m^{1}_{NN})^{m}}}=\frac{1}{(z-\mu_0)^{k}}\frac{1}{(\xi-\mu_1)^{m}}+\mathcal{O}\tonde{\frac{1}{N}}.
\end{align}

\noindent Therefore:
\begin{align}\label{eq:Nina}
\begin{split}
\frac{1}{N}\Tr \mathbb{E}&\left[{\bf S}_{k,m}\right]= \quadre{\frac{1}{(z-\mu_0)^k}\frac{1}{(\xi-\mu_1)^m} + \mathcal{O}\tonde{\frac{1}{N}}} \\&\times\sum_{\substack{i_1,\ldots,i_{2k+1}=1\\j_1,\ldots j_{2m+1}=1}}^{N-1}\mathbb{E}\bigg[\mathbf{G}_0(z)_{i_1i_2}\cdots \mathbf{G}_0(z)_{i_{2k+1}j_1}
    \mathbf{G}_1(\xi)_{j_1j_2}\cdots \mathbf{G}_1(\xi)_{j_{2m+1}i_1}
    \bigg]\\
    &\times\mathbb{E}
    [m^0_{i_2N}m^0_{i_3N}\cdots m^0_{i_{2k}N}m^0_{i_{2k+1}N}m^1_{j_2N}m^1_{j_3N}\cdots m^1_{j_{2m}N}m^1_{j_{2m+1}N}].
    \end{split}
\end{align}
The second average can be evaluated using Wick's theorem, paying attention on whether the contractions involve matrix elements with the same or with different $a=0,1$. Below, we determine the subset of contractions that contribute to leading order. We begin by discussing some special cases.

\vspace{.3 cm}
\noindent Let us focus first on the case $k=0$. By Wick's theorem, the average $\mathbb{E}
[m^1_{j_2N}\ldots m^1_{j_{2m+1}N}]$ appearing in \eqref{eq:Nina} will be contributed by all possible pairwise contractions of the variables $m^1_{jN}$, each one contributing with a factor of $\Delta^2_1/N$. To each contraction, there is a contraction of the indices in the term $ \mathbf{G}_1(\xi)_{j_1j_2}\mathbf{G}_1(\xi)_{j_3j_4}\cdots \mathbf{G}_1(\xi)_{j_{2m+1}i_1}$ also appearing in \eqref{eq:Nina}. 
We now argue that there is a unique Wick contraction that contributes to \eqref{eq:Nina} to leading order, which is the contraction corresponding to $\delta_{j_3 j_4}\delta_{j_5 j_6}\cdots \delta_{j_{2m+1} j_2}$. In fact, the products of resolvent operators converge in the large-$N$ limit to a deterministic matrix proportional to the identity. Therefore, each trace of such products is of order $N$. For this reason, to get the largest contribution from the term  $\mathbb{E}\bigg[\mathbf{G}_0(z)_{i_{1}j_1}
\mathbf{G}_1(\xi)_{j_1j_2}\mathbf{G}_1(\xi)_{j_3j_4}\cdots \mathbf{G}_1(\xi)_{j_{2m+1}i_1}
    \bigg]$ in \eqref{eq:Nina}, one has to select the contraction of indices that corresponds to maximizing the number of resulting traces, while recalling that some matrices have common indices and cannot therefore be decoupled into separate traces. For $k=0, m\geq 1$ we see that $\mathbf{G}_0(z)_{i_1j_1}\mathbf{G}_1(\xi)_{j_1j_2}\mathbf{G}_1(\xi)_{j_{2m+1}i_1}$ is the only block which cannot be decoupled. Let us define the normalized trace operator as $\TR:=\frac{1}{N}\Tr$ to optimize space. We will use either $\TR$ or $\frac{1}{N}\Tr$ throughout the rest of the chapter. The leading contribution to \eqref{eq:Nina} reads:
\begin{align*}
\begin{split}
\TR \mathbb{E}\left[{\bf S}_{0,m}(z, \xi)\right]&=    \frac{\Delta^{2m}_1}{N(\xi-\mu_1)^m}\left(\TR\mathbb{E}[\mathbf{G}_0(z)\mathbf{G}_1(\xi)^2]\right)\left(\TR\mathbb{E}\mathbf{G}_1(\xi)\right)^{m-1}+ \mathcal{O} \tonde{\frac{1}{N^2}},
\end{split}
\end{align*}
where we recall that $\Delta^2_a= \Delta^2_h + \Delta^2_{w,a}$. The case $k\geq 1, m=0$ is analogous, and we get the leading contribution:
\begin{align*}
 \TR  \mathbb{E}\left[{\bf S}_{k,0}(z, \xi)\right]=   \frac{1}{N}\frac{\Delta^{2k}_0}{(z-\mu_0)^k}\left(\TR \mathbb{E}[\mathbf{G}_0(z)^2\mathbf{G}_1(\xi)]\right)\left(\TR \mathbb{E}\mathbf{G}_0(z)\right)^{k-1}+ \mathcal{O} \tonde{\frac{1}{N^2}}.
\end{align*}

\vspace{.3 cm}

\noindent In the case $k\geq 1, m\geq 1$, the only coupled matrices are the two pairs $\mathbf{G}_0(z)_{i_{2k+1}j_1}\mathbf{G}_1(\xi)_{j_1j_2}$ and $\mathbf{G}_0(z)_{i_1i_2}\mathbf{G}_1(\xi)_{j_{2m+1}i_1}$. A reasoning analogous to the one above shows that the leading term in the $1/N$ expansion is given by:

\begin{align*}
\begin{split}
\TR  \mathbb{E}\left[{\bf S}_{k,m}\right]&=    \frac{\Delta_h^4}{N}\frac{\Delta^{2k-2}_0}{(z-\mu_0)^k}\frac{\Delta^{2m-2}_1}{(\xi-\mu_1)^m}\left(\TR \mathbb{E}[\mathbf{G}_0(z)\mathbf{G}_1(\xi)]\right)^2\\
&\times \left(\TR \mathbb{E}\mathbf{G}_0(z)\right)^{k-1}\left(\TR \mathbb{E}\mathbf{G}_1(\xi)\right)^{m-1} + \mathcal{O}\tonde{\frac{1}{N^2}}
\end{split}
\end{align*}
The dependence on  $\Delta_h$ appears due to the fact that the contractions corresponding to $\delta_{i_{2k+1} j_2}$ and $\delta_{i_2 j_{2m+1}}$ involve elements $m^a_{i N}$ corresponding to two different indices $a\in\{0,1\}$.\\
It is not hard to check that the re-summation of the $1/N$  contributions for arbitrary $k, m$ leads to the following expression:
\begin{equation}
\label{eq:psi00Implicit}
\begin{split}
\psi_{00}(z,\xi)&= \TR\mathbb{E}[\mathbf{G}_0(z)\mathbf{G}_1(\xi)]\\
&+\frac{1}{N}\TR\mathbb{E}[\mathbf{G}^2_0(z)\mathbf{G}_1(\xi)] f(z; \Delta_0, \mu_0)\\
  &+\frac{1}{N}\TR\mathbb{E}[\mathbf{G}_0(z)\mathbf{G}_1^2(\xi)]   f(\xi; \Delta_1, \mu_1)\\
 &+ \frac{1}{N}\tonde{\frac{\Delta_h^4}{\Delta^2_0 \,  \Delta^2_1}} \left(\TR\mathbb{E}[\mathbf{G}_0(z)\mathbf{G}_1(\xi)]\right)^2   f(\xi; \Delta_1, \mu_1) f(z; \Delta_0, \mu_0)\\
 &+ \mathcal{O}\tonde{\frac{1}{N^2}}.
     \end{split}
\end{equation}
where $f$ was defined in \eqref{def:f}, and it is now intended that $f(z; \Delta_a, \mu_a)$ contains the resolvent of ${\bf M}^{(a)}$. With similar computations, one can obtain $\psi_{0N}$ and $\psi_{NN}$:
\begin{equation}
\label{eq:psi0NImplicit}
\begin{split}
    \psi_{0N}(z,\xi)&=\frac{2}{N} \frac{\Delta^2_h}{\Delta^2_0 \,  \Delta^2_1} \; f(\xi; \Delta_1, \mu_1)  f(z; \Delta_0, \mu_0)\times\\
    & \times \frac{1}{N}\Tr\mathbb{E}[\mathbf{G}_0(z)\mathbf{G}_1(\xi)]+\mathcal{O}\left(\frac{1}{N^2}\right)
    \end{split}
\end{equation}
and
\begin{align}
\label{eq:psiNNImplicit}
  \psi_{NN}(z,\xi)=\frac{1}{N}\frac{L_{NN}(z, \xi)}{(z-\mu_0)(\xi-\mu_1)} +\mathcal{O}\left(\frac{1}{N^2}\right)
\end{align}
with 
\begin{equation}
\begin{split}
\label{eq:L}
    &L_{NN}(z,\xi)= 1 + 
    f(z; \Delta_0, \mu_0) \frac{1}{N}\Tr\mathbb{E}[\mathbf{G}_0(z)]\\
    &+f(\xi; \Delta_1, \mu_1) \frac{1}{N}\Tr\mathbb{E}[\mathbf{G}_1(\xi)]\\
    &+ f(z; \Delta_0, \mu_0) \, f(\xi; \Delta_1, \mu_1) \frac{1}{N^2} \Tr\mathbb{E}[\mathbf{G}_0(z)] \,\Tr\mathbb{E}[\mathbf{G}_1(\xi)].
    \end{split}
\end{equation}
From these expressions, one can get the final equation for $\psi$. Let us notice that the following quantity is particularly important to get an explicit expression for $\psi$:
\begin{equation}
\label{main:multi_resolv}
  {\bf \Pi}_{k,m}(z,\xi):= \mathbb{E} \left[  \mathbf{G}_0(z)^{k+1} \, \mathbf{G}_1(\xi)^{m+1} \right],\quad m,k\geq 0.
\end{equation}
In particular, it appears from the expressions above that one needs to compute the leading order contributions to the quantities ${\bf \Pi}_{0,0}$, ${\bf \Pi}_{0,1}$ and ${\bf \Pi}_{1,0}$. 

\subsection{Multiresolvent products}
An important result of our work is to find an explicit expression for ${\bf \Pi}_{k,m}$ up to $1/N$ corrections. For the present discussion, we will only concentrate on the leading contribution as $N\to\infty$; the interested reader can refer to Ref.~\cite{paccoros} for details on the $1/N$ contributions to ${\bf \Pi}_{k,m}$, which are useful when computing the finite-size corrections to the eigenvector overlaps, through $\psi$. Let us now give the main steps of the derivation of the leading contribution to ${\bf \Pi}_{k,m}$ for large $N$.\\

\noindent To leading order in $N$, the matrix ${\bf \Pi}_{0,0}(z,\xi)=\mathbb{E} \left[  \mathbf{G}_0(z) \, \mathbf{G}_1(\xi)\right]$ converges to a diagonal one with components given by \cite{cipolloni2022thermalisation, potters_bouchaud_2020}:
\begin{equation}
\label{BigPsi}
\Psi(z,\xi):=\frac{\mathfrak{g}_\sigma(z)-\mathfrak{g}_\sigma(\xi)}{\xi -z-\sigma_W^2(\mathfrak{g}_\sigma(\xi)-\mathfrak{g}_\sigma(z))}.
\end{equation} 
Hence, for large $N$ we have ${\bf \Pi}_{0,0}=\Psi(z,\xi)\mathbb{I}_{N-1}+\mathcal{O}(1/N)$. Let us begin by giving a short motivation for this fact. Since we are only interested in the leading order behavior here, we will consider that ${\bf G}_0(z)$ and ${\bf G}_1(z)$ are simply of size $N\times N$. Let us start by defining ${\bf R}_H(z):=(z-{\bf H})^{-1}$ and averaging one resolvent:
\begin{align*}
\mathbb{E}_{\bf W^{(a)}}[{\bf G}_a(z)]=\mathbb{E}_{\bf W^{(a)}}\left[\frac{1}{z-{\bf H}-{\bf W}^{(a)}}\right]=\mathbb{E}_{\bf W^{(a)}}\left[\mathbf{R}_H(z)\sum_{u=0}^\infty [{\bf W}^{(a)}\mathbf{R}_H(z)]^u\right]
\end{align*}
Making use of the same techniques as in the paragraphs above, namely Wick's theorem for Gaussian variables on ${\bf W}^{(a)}$ and keeping only leading orders, we obtain \cite{VERBAARSCHOT1984367}:
\begin{align}
\mathbb{E}_{\bf W^{(a)}}[{\bf G}_a(z)]&={\bf R}_H\left(z-\sigma_W^2\frac{1}{N}\Tr\mathbb{E}_{\bf W^{(a)}}[{\bf G}_a(z)]\right)+\mathcal{O}\tonde{\frac{1}{N}}\\
&={\bf R}_H\left(z-\sigma_W^2\mathfrak{g}_\sigma(z)\right)+\mathcal{O}\tonde{\frac{1}{N}}
\end{align}
where we used that $\frac{1}{N}\Tr\mathbb{E}_{\bf W^{(a)}}[{\bf G}_a(z)]=\mathfrak{g}_\sigma(z)+\mathcal{O}\tonde{\frac{1}{N}}$ as $N\to \infty$. We shall now make use of this resolvent identity \cite{bun2018overlaps}:
\begin{align}
\mathbb{E}[{\bf G}_0(z){\bf G}_1(\xi)]&=\mathbb{E}_{\bf H}[{\bf R}_H(z-\sigma_W^2\mathfrak{g}_\sigma(z)){\bf R}_H(\xi-\sigma_W^2\mathfrak{g}_\sigma(\xi))]\\
&=\frac{\mathbb{E}_\mathbf{H}[{\bf R}_H(z-\sigma_W^2\mathfrak{g}_\sigma(z))]-  \mathbb{E}_\mathbf{H}[{\bf R}_H(\xi-\sigma_W^2\mathfrak{g}_\sigma(\xi))]}{\xi -z-\sigma^2_W\tonde{\mathfrak{g}_\sigma(\xi)- \mathfrak{g}_\sigma(z)}},
\end{align}
which finally implies to leading order :
\begin{align}
&\mathbb{E}[{\bf G}_0(z){\bf G}_1(\xi)]=
\Psi(z,\xi)\mathbb{I},\\
&\Psi(z,\xi):=\frac{\mathfrak{g}_\sigma(z)-\mathfrak{g}_\sigma(\xi)}{\xi -z-\sigma^2_W\tonde{\mathfrak{g}_\sigma(\xi)- \mathfrak{g}_\sigma(z)}},
\end{align}
where we used the (not too hard to prove) identity $\mathfrak{g}_{\sigma_H}(z-\sigma_W^2\mathfrak{g}(z))=\mathfrak{g}_\sigma(z)$.

\noindent  Next, we have to show how the quantity ${\bf \Pi}_{k,m}$ can be reduced to ${\bf \Pi}_{0,0}$. We first introduce two infinitesimal parameters $\epsilon, \gamma$ and write:
\begin{align*}
\mathbb{E}\left[ \mathbf{G}_0(z)^{k+1}\mathbf{G}_1(\xi)^{m+1}\right] = \lim_{\epsilon,\gamma\to 0}\mathbb{E}\left[ \mathbf{G}_0(z)\cdots \mathbf{G}_0(z+k\gamma)\mathbf{G}_1(\xi)\cdots \mathbf{G}_1(\xi+m\epsilon)\right].
\end{align*}
We aim at re-writing this product as a sum of single resolvent matrices. To do this, we make use of the following Lemma. 
\begin{theorem}
\label{lemma_prod}
If ${\bf M}$ is a real symmetric matrix and we denote ${\bf A}_j:=(j\epsilon+{\bf M})^{-1}$ for $j\in\mathbb{Z}, \epsilon\in\mathbb{R}$, then for any  $k\in\mathbb{N}_{\geq 1}$:
\begin{align*}
   {\bf  A}_0\cdots {\bf A}_k = \frac{1}{\epsilon^kk!}\sum_{j=0}^k(-1)^j\binom{k}{j}{\bf A}_j.
\end{align*}
\end{theorem}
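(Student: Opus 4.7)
The plan is to reduce the claimed matrix identity to a classical scalar partial-fraction identity, by exploiting the fact that all matrices involved are simultaneously diagonalizable. Since each $\mathbf{A}_j = (j\epsilon + \mathbf{M})^{-1}$ is a rational function of the single symmetric matrix $\mathbf{M}$, the family $\{\mathbf{A}_0,\ldots,\mathbf{A}_k\}$ commutes pairwise, and each $\mathbf{A}_j$ is diagonalized in the same orthonormal basis as $\mathbf{M}$. Writing $\mathbf{M}=\mathbf{U}\Lambda\mathbf{U}^{T}$ with $\Lambda=\mathrm{diag}(\lambda_1,\ldots,\lambda_N)$ (and assuming $-j\epsilon\notin\mathrm{Sp}(\mathbf{M})$ for $j=0,\ldots,k$, so that every $\mathbf{A}_j$ is well defined), both sides of the asserted identity reduce, eigenvalue by eigenvalue, to the scalar statement
$$
\prod_{j=0}^{k}\frac{1}{j\epsilon+\lambda}\;=\;\frac{1}{\epsilon^{k}\,k!}\sum_{j=0}^{k}\frac{(-1)^{j}\binom{k}{j}}{j\epsilon+\lambda},
\qquad \lambda\in\mathbb{C}\setminus\{0,-\epsilon,\ldots,-k\epsilon\}.
$$
So the whole content of the lemma is this identity of rational functions in $\lambda$.

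To prove the scalar identity I would use partial fractions. The left-hand side is a rational function with only simple poles, located at $\lambda=-j\epsilon$ for $j=0,1,\ldots,k$. Its residue at $\lambda=-j\epsilon$ is
$$
\prod_{\ell=0,\ell\neq j}^{k}\frac{1}{(\ell-j)\epsilon}\;=\;\frac{1}{\epsilon^{k}}\,\frac{1}{\bigl[(-1)^{j}j!\bigr]\,(k-j)!}\;=\;\frac{(-1)^{j}}{\epsilon^{k}\,j!\,(k-j)!},
$$
where the sign $(-1)^{j}$ comes from the $j$ negative factors $\ell-j$ with $\ell<j$. The right-hand side has simple poles at the same locations, with residue at $\lambda=-j\epsilon$ equal to
$$
\frac{(-1)^{j}\binom{k}{j}}{\epsilon^{k}\,k!}\;=\;\frac{(-1)^{j}}{\epsilon^{k}\,j!\,(k-j)!}.
$$
The two residues agree for every $j$. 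Therefore the difference of the two sides is a rational function of $\lambda$ with no poles in $\mathbb{C}$ and which manifestly tends to $0$ as $|\lambda|\to\infty$; by Liouville's theorem (or simply by observing it is a polynomial that vanishes at infinity), it is identically zero. Reassembling via $\mathbf{U}\,(\cdot)\,\mathbf{U}^{T}$ then yields the matrix identity.

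I expect essentially no serious obstacle: the only subtleties to flag are (i) the commutativity argument, which is immediate because all $\mathbf{A}_j$ are functions of the same symmetric $\mathbf{M}$, and (ii) the domain condition $-j\epsilon\notin\mathrm{Sp}(\mathbf{M})$, needed for the inverses to exist. A purely algebraic alternative would be induction on $k$ using the resolvent identity $\mathbf{A}_{j}\mathbf{A}_{j+1}=\epsilon^{-1}(\mathbf{A}_{j}-\mathbf{A}_{j+1})$, combined with Pascal's rule for binomial coefficients; however the partial-fraction route is conceptually cleaner because it makes transparent the origin of the coefficients $(-1)^{j}\binom{k}{j}/k!$ as the residues of the product of simple poles. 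The ``hard part,'' if any, is keeping track of signs and factorials in the residue computation, which is entirely routine.
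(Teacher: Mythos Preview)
Your proof is correct, but it takes a genuinely different route from the paper's. The paper proceeds by direct induction on $k$: the base case $k=1$ is just the resolvent identity $\mathbf{A}_0\mathbf{A}_1=\epsilon^{-1}(\mathbf{A}_0-\mathbf{A}_1)$, and for the inductive step one multiplies the assumed formula by $\mathbf{A}_{k+1}$, applies $\mathbf{A}_j\mathbf{A}_{k+1}=\frac{1}{(k+1-j)\epsilon}(\mathbf{A}_j-\mathbf{A}_{k+1})$ termwise, and regroups using Pascal's rule together with the alternating-sum identity $\sum_{j=0}^{k}(-1)^j\binom{k+1}{j}=-(-1)^{k+1}$. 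This is exactly the alternative you sketched and set aside.

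Your partial-fraction argument is cleaner in one respect: it makes the coefficients $(-1)^j\binom{k}{j}/(\epsilon^k k!)$ appear inevitably as residues of the product $\prod_j(j\epsilon+\lambda)^{-1}$, so no combinatorial bookkeeping is needed. The cost is the preliminary diagonalization step and the appeal to a Liouville-type argument, which the inductive proof avoids entirely by working with the matrices directly. Both proofs are short; the paper's is marginally more self-contained, while yours is more transparent about the structure of the identity.
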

\begin{proof}

We proceed by induction. Indeed notice that for $k=1$ we have ${\bf A}_0{\bf A}_1=(\mathbf{M})^{-1}(\epsilon+\mathbf{M})^{-1}=\frac{1}{\epsilon}((\mathbf{M})^{-1}-(\epsilon+\mathbf{M})^{-1})=\frac{1}{\epsilon}\mathbf{A}_0-\frac{1}{\epsilon}\mathbf{A}_1$. Now suppose that our Lemma is true for a certain $k$, we will prove that it works also for $k+1$. Let us write:
\begin{align*}
    &\mathbf{A}_0\cdots \mathbf{A}_k\mathbf{A}_{k+1}= \frac{1}{\epsilon^kk!}\sum_{j=0}^k(-1)^j\binom{k}{j}\mathbf{A}_j\mathbf{A}
_{k+1}\\&=\frac{1}{\epsilon^{k+1}k!}\sum_{j=0}^k(-1)^j\binom{k}{j}\frac{1}{(k+1-j)}(\mathbf{A}_j-\mathbf{A}_{k+1})\\
    &=\frac{1}{\epsilon^{k+1}(k+1)!}\left[\sum_{j=0}^k(-1)^j\binom{k+1}{j}\mathbf{A}_j-\sum_{j=0}^k(-1)^{j}\binom{k+1}{j}\mathbf{A}_{k+1}\right]\\&=\frac{1}{\epsilon^{k+1}(k+1)!}\sum_{j=0}^{k+1}(-1)^j\binom{k+1}{j}\mathbf{A}_j
\end{align*}
where in the last equality we used that the identity $0=(1-1)^{k+1}=\sum_{j=0}^{k+1}(-1)^j\binom{k+1}{j}$ implies that $\sum_{j=0}^k(-1)^j\binom{k+1}{j}=-(-1)^{k+1}\binom{k+1}{k+1}=-(-1)^{k+1}$. Hence the induction hypothesis is proved.
\end{proof}

\noindent Applying this Lemma, we see that the expectation $\mathbb{E}\left[ \mathbf{G}_0(z)^{k+1}\mathbf{G}_1(\xi)^{m+1}\right]$ can be written as a linear combination of terms of the form $\mathbb{E}\left[ \mathbf{G}_0(z + i \gamma)\mathbf{G}_1(\xi+ j \epsilon)\right]$ for integers $i,j$. For instance, for $k=0$ it holds:
\begin{align*}
    \lim_{\epsilon\to 0}\mathbb{E}\left[\mathbf{G}_0(z)\mathbf{G}_1(\xi)\cdots \mathbf{G}_1(\xi+m\epsilon)\right]&=\lim_{\epsilon\to 0}\frac{(-1)^m}{\epsilon^mm!}\sum_{j=0}^m(-1)^{m-j}\binom{m}{j}\mathbb{E}\left[\mathbf{G}_0(z)\mathbf{G}_1(\xi+j\epsilon)\right]\\
    &=\frac{(-1)^m}{m!}\frac{\partial^m}{\partial\xi^m}\mathbb{E}[\mathbf{G}_0(z)\mathbf{G}_1(\xi)]
\end{align*}
where in the last line we recognized the expression of the $m$-th derivative in the limit $\epsilon\to 0$. The same holds for $k>0$. Then it is not hard to see that we finally obtain
\begin{align}
\label{eq:multi_resolv_formula0}
    {\bf \Pi}_{k,m}(z,\xi)=\frac{(-1)^{k+m}}{k!m!}\frac{\partial^k}{\partial z^k} \frac{\partial^m}{\partial \xi^m } 
\mathbb{E} \left[  \mathbf{G}_0(z) \, \mathbf{G}_1(\xi)\right].
\end{align}
At first order, when $N$ is large, we then get 
\begin{align}
\label{eq:final_Pi_km}
    {\bf \Pi}_{k,m}(z,\xi)=\frac{(-1)^{k+m}}{k!m!}\frac{\partial^k}{\partial z^k} \frac{\partial^m}{\partial \xi^m } 
\Psi(z,\xi)+\mathcal{O}\tonde{\frac{1}{N}}.
\end{align}

\section{Expression of the overlaps and simulations}
\label{sec:rmt_simulations_overlaps}
By combining the expressions \eqref{eq:psi00Implicit},\eqref{eq:psi0NImplicit} and \eqref{eq:psiNNImplicit} with Eq.~\eqref{eq:final_Pi_km}, we can obtain an explicit expression for $\psi$:
\begin{align}
\label{eq:explicit_final_psi}
\begin{split}
    \psi(z,\xi)&=\Psi(z,\xi)+\frac{1}{N}\bigg\{\Psi^{(1)}(z,\xi)+\frac{1}{(z-\mu_0)(\xi-\mu_1)}\\
    &+\frac{\Delta_0^2}{z-\mu_0-\Delta_0^2\mathfrak{g}_\sigma(z)}\left[-\partial_z\Psi(z,\xi)+\frac{\mathfrak{g}_\sigma(z)}{(z-\mu_0)(\xi-\mu_1)}\right]\\
    &+\frac{\Delta_1^2}{\xi-\mu_1-\Delta_1^2\mathfrak{g}_\sigma(\xi)}\left[-\partial_\xi\Psi(z,\xi)+\frac{\mathfrak{g}_\sigma(\xi)}{(z-\mu_0)(\xi-\mu_1)}\right]\\
    &+\frac{1}{[z-\mu_0-\Delta_0^2\mathfrak{g}_\sigma(z)][\xi-\mu_1-\Delta_1^2\mathfrak{g}_\sigma(\xi)]}\bigg[\Delta_h^4\Psi^2(z,\xi)\\
&+2\Delta_h^2\Psi(z,\xi)+\frac{\Delta_0^2\Delta_1^2\mathfrak{g}_\sigma(z)\mathfrak{g}_\sigma(\xi)}{(z-\mu_0)(\xi-\mu_1)}\bigg]\bigg\}+ \mathcal{O}\tonde{\frac{1}{N^2}}.
\end{split}
\end{align}
For completeness, we introduced $\Psi^{(1)}$, which represents the $1/N$ finite size correction to $\frac{1}{N}\Tr\mathbb{E}[{\bf G}_0(z){\bf G}_1(\xi)]$. However, one can show that such contribution is not important when using $\psi$ to extract the eigenvector overlaps with Eq.~\eqref{re_psi}. The interested reader can refer to Ref.~\cite{paccoros} for a derivation of $\Psi^{(1)}$.

With the expression for $\psi$, we can now use equation \eqref{re_psi} to compute the three types of eigenvector overlaps. Moreover, we make numerical simulations to verify their correctness.

\subsection{Bulk-bulk overlaps}
Each element of the pair of random matrices defined in Eq. \eqref{eq:MatrixForm} has a GOE block ${\bf B}^{(a)}$ having the same statistics (only the matrix elements in the special row and column have a statistics that depends on $a$). The bulk spectral densities  $\rho_\sigma(\lambda)$ of both matrices in the large $N$ limit are determined by these blocks, and thus are exactly the same for both matrices, given by \eqref{eq:DoSgoe}. The spectral densities are supported on the interval $[-2\sigma,2\sigma]$; when the respective eigenvalues are in the bulk, i.e. $\lambda^0,\lambda^1\in [-2\sigma,2\sigma]$, recall that the overlap between the two corresponding eigenvectors reads:
\begin{align*}
\Phi(\lambda^{0},\lambda^{1}):=N\mathbb{E}[\left( \mathbf{u}_{\lambda^{0}}\cdot \mathbf{u}_{\lambda^{1}}\right)^2].
\end{align*}
To recover the leading-order term in the overlap between bulk eigenvectors we have to neglect all $1/N$ corrections in Eq.~\eqref{eq:explicit_final_psi}. In general, we see that when applying Eq.~\eqref{re_psi}, the square roots in $\psi$ will give rise to branch cuts, contained in $\mathfrak{g}_\sigma$ and its derivatives. In order to face this issue we have to carefully take the limit of $\mathfrak{g}_\sigma(x\pm i\eta)$ when $\eta\to0^+$. Since the branch cuts come from all the terms of the form $\sqrt{x^2-4\sigma^2}$, we have to carefully analyse $\sqrt{(x\pm i\eta)^2-4\sigma^2}$ as $\eta\to 0^+$. The square root function in the complex plane presents a branch cut, which recall we fix here to be toward the negative real axis (i.e. we define angles between $(-\pi,\pi]$). With such convention, we simply have that the square root behaves as follows, see also Sec.~\ref{sec:Goe_intro}:
\begin{align*}
    \lim_{\eta\to 0^+}\sqrt{(x\pm i\eta)^2-4\sigma^2}=
    \begin{cases}
        \sqrt{x^2-4\sigma^2}\quad |x|\geq 2\sigma\\
        \pm\text{sign}(x)i\sqrt{4\sigma^2-x^2}\quad |x|< 2\sigma
    \end{cases}
\end{align*}
and by applying this to $\mathfrak{g}_\sigma$, defined in Eq.~\eqref{eq.Stjlt}, we obtain
\begin{equation}
\label{br_cut}
\lim_{\eta \to 0^+}\mathfrak{g}_\sigma(x \mp i \eta)=\begin{cases}
  \frac{1}{2 \sigma^2} \tonde{x- \text{sign} (x) \sqrt{x^2-4 \sigma^2}} \quad  |x|> 2 \sigma\\
  \frac{1}{2 \sigma^2}\tonde{x \pm i \sqrt{4 \sigma^2-x^2}} \quad  |x|< 2 \sigma
  \end{cases} \equiv
\mathfrak{g}_R(x) \pm i \mathfrak{g}_I(x),
  \end{equation}
where $\text{Im}\,\mathfrak{g}(x) \neq 0$ only if $|x|< 2 \sigma$. For later convenience, let us also define the following function:
\begin{equation}
    \zeta(z):=z-\sigma_W^2\mathfrak{g}_\sigma(z)
\end{equation}
and notice that
\begin{equation}
   \lim_{\eta \to 0^+} \zeta(x \mp i \eta)= x - \sigma_W^2
\mathfrak{g}_R(x) \mp i \sigma^2_W \mathfrak{g}_I(x)=: \zeta_R(x)\pm i \zeta_I(x).
\end{equation}
With these definitions, from Eq.~\eqref{re_psi} we can extract the bulk-bulk overlap by noting that:
\begin{align}
\Phi(x, y)&=  \frac{\lim_{\eta\to 0^+}\text{Re}\left[\Psi(x-i\eta,y+i\eta)-\Psi(x-i\eta,y-i\eta)\right]}{2\pi^2\rho_\sigma(x)\rho_\sigma(y)}+ \mathcal{O}\tonde{\frac{1}{N}}.
\end{align} 
Now, we have that we can write
\begin{align}
    \Psi(z,\xi)=\frac{\mathfrak{g}_\sigma(z)-\mathfrak{g}_\sigma(\xi)}{\zeta(\xi)-\zeta(z)},
\end{align}
which implies:
\begin{align*}
&\lim_{\eta\to 0^+}\text{Re}\left[\Psi(x-i\eta,y+i\eta)-\Psi(x-i\eta,y-i\eta)\right]\\
&=\frac{\mathfrak{g}_R(x)-\mathfrak{g}_R(y)+i\mathfrak{g}_I(x)+i\mathfrak{g}_I(y)}{\zeta_R(y)-\zeta_R(x)-i\zeta_I(y)-i\zeta_I(x)}-\frac{\mathfrak{g}_R(x)-\mathfrak{g}_R(y)+i\mathfrak{g}_I(x)-i\mathfrak{g}_I(y)}{\zeta_R(y)-\zeta_R(x)+i\zeta_I(y)-i\zeta_I(x)}\\
&=\frac{ -4\,\mathfrak{g}_I(x)\,\mathfrak{g}_I(y)\, \sigma_W^2\, 
((\mathfrak{g}_R(x) - \mathfrak{g}_R(y))\sigma_w^2 - x + y) \, (2 (\mathfrak{g}_I(x) - \mathfrak{g}_I(y))\sigma_w^2 - x + y)}{\quadre{\tonde{\zeta_R(x)-\zeta_R(y)}^2+ \tonde{\zeta_I(x)+\zeta_I(y)}^2}\quadre{\tonde{\zeta_R(x)-\zeta_R(y)}^2+ \tonde{\zeta_I(x)-\zeta_I(y)}^2}}.
\end{align*}
By using the definitions for $\mathfrak{g}$ and $\zeta$, we can simplify to get:
\begin{align}
\label{eq:phi_bulk_bulk}
\begin{split}
    \Phi(x,y)&=\frac{ 2 \sigma_W^2 \quadre{\zeta_R(x)-\zeta_R(y)} (x-y) }{\quadre{\tonde{\zeta_R(x)-\zeta_R(y)}^2+ \tonde{\zeta_I(x)+\zeta_I(y)}^2}\quadre{\tonde{\zeta_R(x)-\zeta_R(y)}^2+ \tonde{\zeta_I(x)-\zeta_I(y)}^2}}\\
    &=\frac{2 \sigma_W^2 \, \tonde{1-\frac{\sigma_W^2}{2 \sigma^2}} (x-y)^2}{\displaystyle \prod_{k=\pm }A_k} 
\end{split}
\end{align}

where we have defined 
\begin{align}
\begin{split}
    A_k:=\frac{\sigma_W^4}{4 \sigma^4}\tonde{\sqrt{4 \sigma^2-x^2}+k\sqrt{4 \sigma^2-y^2}}^2+\tonde{1-\frac{\sigma_W^2}{2 \sigma^2}}^2 (x-y)^2.
\end{split}
\end{align}

This expression depends only on the parameters $\sigma, \sigma_W$ defining the statistics of the GOE blocks ${\bf B}^{(a)}$, and it is consistent with the results of Ref.~\cite{bun2018overlaps}. Indeed, Ref.~\cite{bun2018overlaps} presents the calculation of the overlap between bulk eigenvectors of matrices of the form ${\bf C}+ {\bf A}+ {\bf D}^{(a)}$, where ${\bf C}$ is a deterministic (population) matrix, while ${\bf A}$ and ${\bf D}^{(a)}$ are $N\times N$ GOE matrices with variances $\sigma_{A}^2$ and $\sigma^2_a-\sigma^2_{A}$, respectively. The overlap is shown to be independent of the matrix ${\bf C}$, and to coincide with \eqref{eq:phi_bulk_bulk} with $\sigma_H^2 \to \sigma_A^2$ and $ \sigma_W^2 \to \sigma_a^2-\sigma_A^2$, as expected.  
Notice that the case considered in Ref.~\cite{bun2018overlaps} corresponds to vanishing finite-rank perturbations ($\Delta_a=v_a=\sigma$, $\mu_a=0$); therefore, no isolated eigenvalue(s) are present in that case. Eq.~\eqref{eq:phi_bulk_bulk}  shows that the finite rank perturbations do not affect the overlap between bulk eigenvectors, to leading order.  
Let us stress that the $1/N$ contribution to \eqref{eq:phi_bulk_bulk} can also be determined explicitly, see Fig.5 in Ref.~\cite{paccoros}. \\

\begin{figure}[t!]
\centering
\includegraphics[width=0.65\textwidth]{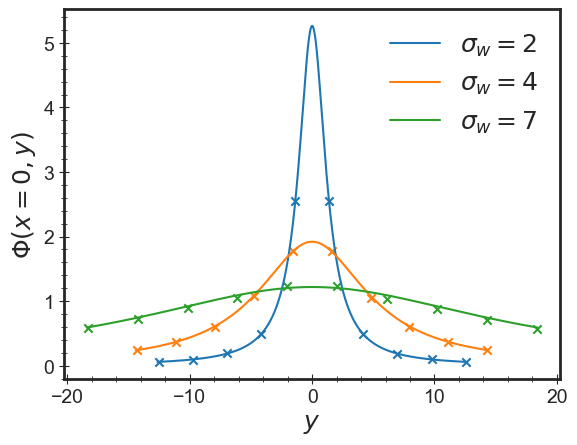}
\caption{Plot representing the theoretical curves of the bulk-bulk overlap \eqref{eq:phi_bulk_bulk} with their respective numerical simulations (colored crosses). We used $\sigma_H=6, \Delta_h=2.5, \Delta_{w,0}=2, \Delta_{w,1}=1.5, \mu_0=\mu_1=0$, and we plot the overlap for $x=\lambda^0=0$ and $y=\lambda^1\in[-2\sigma,2\sigma]$, for several choices of $\sigma_W$: the central peak height is inversely proportional to the value of $\sigma_W$.
The numerical simulations were carried out by generating 500 times pairs of random matrices of size $N=200$, with $d\lambda=0.1$.}
\label{fig:phi_bulk_bulk}
\end{figure}

\noindent A numerical check of \eqref{eq:phi_bulk_bulk} is given in Fig.~\ref{fig:phi_bulk_bulk}. Let us briefly comment on how the numerical simulations are performed. In order to obtain the eigenvector overlaps numerically, we generate the three GOE random matrices $\mathbf{H}$, $\mathbf{W}^{(0)}$ and $\mathbf{W}^{(1)}$; similarly, we generate the Gaussian variables $h_{iN}$, $w_{iN}^0, w_{iN}^1$. The elements $m_{NN}^0$ and $m_{NN}^1$ are simply set equal to $\mu_0$ and $\mu_1$ respectively, i.e. we set $v_0=v_1=0$; this is motivated by the fact that, as we have shown above, to leading order all of our analytical results are independent on the variances $v_0,v_1$. After having generated such entries, we sum them up to get the two matrices $\mathbf{M}^{(0)}$ and $\mathbf{M}^{(1)}$, according to Sec.\ref{subsec:matrix}. We then diagonalize them and consider eigenvectors associated to eigenvalues in the intervals $[x-d\lambda,x+d\lambda]$ and $[y-d\lambda,y+d\lambda]$ respectively, with $d\lambda\gg N^{-1}$. Then for each pair of such eigenvectors of the two matrices, we compute their squared dot product, and average them together. We repeat this procedure over many realizations: the numerical points in the Figures correspond to averages over the realizations. All the Figures reported in the following are generated in this way, with the slight difference that when isolated eigenvalues are considered, there is no window $d\lambda$ on which to perform the first average, and thus the number of realizations has to be increased significantly.

\subsection{Iso-bulk overlaps}
Consider the case in which at least one of the two matrices $\mathbf{M}^{(0)}, \mathbf{M}^{(1)}$ has the isolated eigenvalue. Without loss of generality, we take such matrix to be $\mathbf{M}^{(0)}$, meaning that \eqref{eq:ConditionEx} is satisfied and $\lambda^{0}_{\text{iso}}$ exists (it is clear that all results will hold if we exchange the two matrices).  We impose no condition on $\mathbf{M}^{(1)}$, and we pick a bulk eigenvalue $y:=\lambda^1\in[-2\sigma,2\sigma]$. \\

\noindent In order to compute $\Phi(\lambda_{\rm iso}^0,y)$ one has to make use of  Eq.~\eqref{re_psi}, and consider only the terms of $\psi$ in Eq.~\eqref{eq:explicit_final_psi} which present a singularity when evaluated at $x=\lambda_{\rm iso}^0:=\lambda_{\rm iso, -}^0$, see \eqref{app:solutions}. We are therefore focusing on  $|x|=|\lambda_{\rm iso}^0|>2\sigma$ and $|y|\leq 2\sigma$: the first argument of $\psi(x,y)$ does not belong to the bulk of the eigenvalue density, while the second does. From the expression of $\psi$ in Eq.~\eqref{eq:explicit_final_psi} we see that the term that can generate a singularity is $1/(x-\mu_0\Delta_0^2\mathfrak{g}_\sigma(x))$. It is simple to check that, given the expressions for $\lambda^0_{\text{iso},\pm}$ in \eqref{app:solutions} (notice that $\lambda^0_{\text{iso},+}$ might not be isolated by its expression is well defined when $\lambda^0_{\text{iso},-}$ exists), then for $x$ real one has:
\begin{align}\label{eq:res}
    \frac{1}{x-\mu_0-\Delta_0^2\mathfrak{g}_\sigma(x)}= \frac{\tonde{1- \frac{\Delta_0^2}{2 \sigma^2}}x-\mu_0-\text{sign}(x) \frac{\Delta_0^2}{2 \sigma^2} \sqrt{x^2-4\sigma^2}}{\tonde{1-\frac{\Delta_0^2}{\sigma^2}} [x-\lambda^0_{\text{iso},+}(\mu_0, \Delta_0, \sigma)][x-\lambda^0_{\text{iso},-}(\mu_0, \Delta_0, \sigma)]}.
\end{align}
This term is therefore singular for $x \to \lambda_{\rm iso}^0$.  In particular, as we already pointed out in Eq.~\ref{eq:rmt_delta_q}, using Eq.~\eqref{eq:res} we obtain:
\begin{align}\label{eq:Residuo}
   \lim_{\eta \to 0^+} \frac{1}{x\pm i \eta-\mu_0-\Delta_0^2\mathfrak{g}_\sigma(x\pm i \eta)}=  \mp i \pi \delta(x-\lambda_{\rm iso}^0) \mathfrak{q}_{\sigma,\Delta_0}(\lambda_{\rm iso}^0,\mu_0) + \text{   regular terms},
   \end{align}
where $\mathfrak{q}_{\sigma,\Delta}(\lambda,\mu)$ is given in Eq. \eqref{eq:Defq} while the regular terms are not proportional to the delta. To select the relevant contributions to  $\Phi(\lambda_{\rm iso}^0,y)$ we single out the terms in \eqref{eq:explicit_final_psi} which produce a delta function when $x \to \lambda_{\rm iso}^0$. These include all terms multiplied by $(z-\mu_0-\Delta_0^2\mathfrak{g}(z))^{-1}$ that have a branch cut in $y\pm i\eta$ when $\eta\to 0^+$ (because, due to \eqref{eq:Residuo}, we need factors that provide a non-zero imaginary part). From \eqref{eq:explicit_final_psi} we can group these relevant terms in a new function $\hat{\psi}$:
\begin{align}\label{eq:Aux}
\begin{split}
    \hat{\psi}(z,\xi):=\frac{1}{z-\mu_0-\Delta_0^2\mathfrak{g}(z)}\Bigg[&-\Delta_0^2\partial_z\Psi(z,\xi)+\frac{1}{\xi-\mu_1-\Delta_1^2\mathfrak{g}_\sigma(\xi)}\bigg[\Delta_h^4\Psi^2(z,\xi)\\
&+2\Delta_h^2\Psi(z,\xi)+\frac{\Delta_0^2\Delta_1^2\mathfrak{g}_\sigma(z)\mathfrak{g}_\sigma(\xi)}{(z-\mu_0)(\xi-\mu_1)}\bigg]
\Bigg].
\end{split}
\end{align}
Then, by using \eqref{re_psi}, one recovers the iso-bulk overlap from the following expression:
\begin{align}
\label{eq:psihat_to_overlap}
\begin{split}
       \text{Re}\lim_{\eta\to 0^+}\left[\hat{\psi}(x-i\eta,y+i\eta)-\hat{\psi}(x-i\eta,y-i\eta)\right]=2\pi^2\rho_\sigma(y)\delta(x-\lambda_\text{iso}^0)\Phi(\lambda_\text{iso}^0,y)
\end{split}
\end{align}
\noindent Consider the first term, one has:
\begin{equation}
    \partial_z \Psi(z,\xi)= \frac{\mathfrak{g}_\sigma(z)-\mathfrak{g}_\sigma(\xi)+ \mathfrak{g}'_\sigma(z)(\xi-z)}{[\xi-z -\sigma_W^2 (\mathfrak{g}_\sigma(\xi)-\mathfrak{g}_\sigma(z))]^2}.
\end{equation}
Now, for real $|x|> 2 \sigma$ , given that $\mathfrak{g}_I(x)= \zeta_I(x)=0$, we obtain:
\begin{align*}
&\text{Im}\lim_{\eta \to 0^+} \; \lim_{z \to x-i \eta} \quadre{\partial_z \Psi(z, y + i \eta)- \partial_z \Psi(z, y - i \eta)}\\&=
\text{Im}\left[\frac{\mathfrak{g}_R(x)-\mathfrak{g}_R(y)+i\mathfrak{g}_I(y) + \mathfrak{g}'_R(x)(y-x)}{[\zeta_R(y)-i\zeta_I(y)-\zeta_R(x)]^2}-\frac{\mathfrak{g}_R(x)-\mathfrak{g}_R(y)-i\mathfrak{g}_I(y) + \mathfrak{g}'_R(x)(y-x)}{[\zeta_R(y)+i\zeta_I(y)-\zeta_R(x)]^2}\right]\\
&=\mathfrak{g}_I(y)\frac{2 [\zeta_R(y)- \zeta_R(x)]^2 -2\zeta^2_I(y) +4 \sigma_W^2 \, \quadre{ \mathfrak{g}_R(x)-\mathfrak{g}_R(y)-(x-y)\mathfrak{g}'_R(x)} \,[\zeta_R(y)-\zeta_R(x)]}{\tonde{\quadre{\zeta_R(y)-\zeta_R(x)}^2 + \zeta_I^2(y)}^2}\\
&\equiv \mathfrak{g}_I(y)A(x,y),
\end{align*}
notice that $\mathfrak{g}_I(y)$ will simplify with $\rho_\sigma(y)$ in \eqref{eq:psihat_to_overlap} (up to a factor $\pi$). We can proceed in a similar fashion for all the other terms in Eq.~\eqref{eq:Aux}. The final expression for the iso-bulk overlap then reads:
\begin{align}
\begin{split}
\label{eq:phi_iso_bulk}
\Phi(\lambda_\text{iso}^0,y)&=\frac{\mathfrak{q}_{\sigma,\Delta_0}(\lambda_{\rm iso}^0,\mu_0) }{2}\\&\times \quadre{\Delta_0^2 A(\lambda_{\rm iso}^0,y)-\Delta_h^4 B(\lambda_{\rm iso}^0,y) - 2 \Delta_h^2 C(\lambda_{\rm iso}^0,y)- \Delta_0^2\Delta_1^2 D(\lambda_\text{iso}^0,y)}
\end{split}
\end{align}
where $\mathfrak{q}_{\sigma,\Delta}$ has been defined in Eq.~\eqref{eq:Defq}, and with analogous analytical computations as above, we  have computed
\begin{align}
\begin{split}
&B(x,y)=\frac{1}{[y -\mu_1-\Delta^2 \mathfrak{g}_R(y)]^2+ \Delta^4 \mathfrak{g}^2_I(y)} \frac{1}{\quadre{(\zeta_R(x)- \zeta_R(y))^2 +\zeta^2_I(y)}^2} \, b(x,y)\\
&C(x,y)=\frac{
2  (y-x)\,  [y -\mu_1-\Delta^2 \mathfrak{g}_R(y)]- 2 \Delta^2\quadre{\sigma_W^2 \mathfrak{g}^2_I(y) + (\mathfrak{g}_R(x)-\mathfrak{g}_R(y)) (\zeta_R(x)-\zeta_R(y))}}{\quadre{\tonde{\zeta_R(x)-\zeta_R(y)}^2+ \zeta^2_I(y)} \; \quadre{\tonde{y -\mu_1-\Delta^2 \mathfrak{g}_R(y)}^2+ \Delta^4 \mathfrak{g}^2_I(y)}},\\
&D(x,y)=-\frac{\mathfrak{g}_R(x)}{(x - \mu_0) \left[y^2 + \mathfrak{g}_I(y)^2 \Delta_1^4 - 
2 y (\mathfrak{g}_R(y) \Delta_1^2 + \mu_1) + (\mathfrak{g}_R(y)\Delta_1^2 + \mu_1)^2\right]}
\end{split}
\end{align}
with
\begin{equation}\label{eq:CostantiMisto2}
\begin{split}
 b(x,y)&= \quadre{y -\mu_1-\Delta^2 \mathfrak{g}_R(y)} 4 (x-y)
\quadre{(\mathfrak{g}_R(x)- \mathfrak{g}_R(y)) \, (\zeta_R(x)- \zeta_R(y))- \sigma_W^2 \mathfrak{g}_I^2(y)}\\
&- 2 \Delta^2 
\tonde{  \quadre{(\mathfrak{g}_R(x)- \mathfrak{g}_R(y)) \, (\zeta_R(x)- \zeta_R(y))- \sigma_W^2 \mathfrak{g}_I^2(y)}^2- (x-y)^2 \mathfrak{g}^2_I(y)}.
 \end{split}
\end{equation}
Let us remark that we have slightly changed notation with respect to the original work in Ref.~\cite{paccoros}, and that one can get a more explicit expression in terms of the parameters of the model. However, we have found that the above method is more straightforward to understand the computation. \\

\noindent In Fig.~\ref{fig:phi_iso_y} we show that the complicated parameter dependencies of \eqref{eq:phi_iso_bulk} are exact, and numerical simulations perfectly agree with our theoretical results.\\

The formula for $\Phi(\lambda_\text{iso}^0,y)$ can be directly applied to Eq.~\eqref{eq:energy_quench_1} in Chapter~\ref{chapter:energy_landscapes}, in order to obtain the perturbed geodesic pathways. This is done numerically by plugging inside Eq.~\eqref{eq:phi_iso_bulk} the specific parameters of that chapter, found in Eq.~\eqref{eq:app:FormPar}.

\begin{figure}[h!]
\centering
\includegraphics[width=0.65\textwidth]{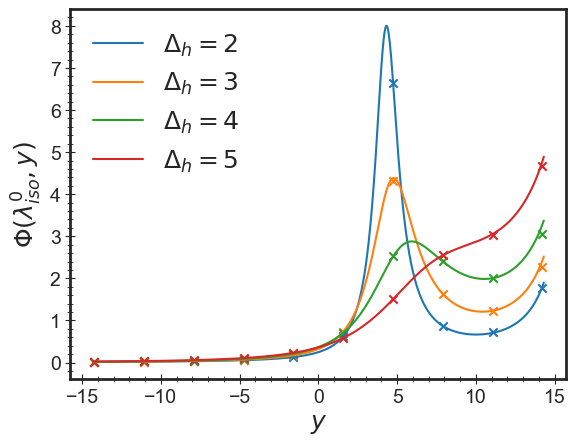}
\caption{Plot representing the theoretical curves of the bulk-isolated overlap \eqref{eq:phi_iso_bulk} with their respective numerical simulations (colored crosses). We used $\sigma_H=6.5, \sigma_W=3, \Delta_{w,0}=2, \Delta_{w,1}=1.5, \mu_0=15, \mu_1=4$, and we plot the overlap for $x=\lambda^0_{\text{iso}}$ and $y=\lambda^1\in[-2\sigma,2\sigma]$ (were clearly $\sigma=\sqrt{\sigma_H^2+\sigma_W^2}$) for several choices of $\Delta_h$: the function value at $y=2\sigma$ is proportional to the strength of $\Delta_h$.  The numerical simulations were carried out by generating 1000 times pairs of random matrices of size $N=500$. As for Fig.\ref{fig:double_evals} we set $v_0=v_1=0$ given that the final results do not depend on them, to leading orders.}
\label{fig:phi_iso_y}
\end{figure}

\subsection{Iso-iso overlaps}
\noindent We now consider the case in which both $\lambda_{\text{iso}}^0$ and $\lambda_{\text{iso}}^1$ exist, and we give the expression for the rescaled overlap \eqref{phi_rescaled} of the corresponding eigenvectors. 
In order to compute $\Phi(\lambda_{\rm iso}^0,\lambda_{\rm iso}^1)$ we have to make use of  Eq.~\eqref{re_psi}, and consider only the part of $\psi$ in Eq.~\eqref{eq:explicit_final_psi} which presents a singularity when evaluated at both of the isolated eigenvalues $\lambda_{\rm iso}^a:=\lambda_{\rm iso, -}^a$ for $a\in\{0,1\}$. The relevant term is the one proportional to the product  
\begin{align}
\gamma(z,\xi):=\frac{1}{z-\mu_0-\Delta_0^2\mathfrak{g}_\sigma(z)}\frac{1}{\xi-\mu_1-\Delta_1^2\mathfrak{g}_\sigma(\xi)}
\end{align}
in \eqref{eq:explicit_final_psi}. We single out such a term, defining:
\begin{align}\label{eq:PsiTilde}
    \begin{split}
&\tilde{\psi}(z,\xi):=
\gamma(z,\xi)\beta(z,\xi)
    \end{split}
\end{align}
with 
\begin{align}
\label{eq:rmt_beta}
\beta(z,\xi):=\bigg(\Delta_h^4\Psi^2(z,\xi)+2\Delta_h^2\Psi(z,\xi)+\frac{\Delta_0^2\Delta_1^2\mathfrak{g}_\sigma(z)\mathfrak{g}_\sigma(\xi)}{(z-\mu_0)(\xi-\mu_1)}\bigg)
\end{align}
Then, by \eqref{re_psi}, we extract the iso-iso overlap with the following identity:
\begin{align}
\label{eq:psi_tilde_Re}
    \frac{\text{Re}\lim_{\eta\to0^+}[\tilde{\psi}(x-i\eta,y+i\eta)-\tilde{\psi}(x-i\eta,y-i\eta)]}{2\pi^2}=\delta\left(x-\lambda^0_\text{iso}\right)\delta\left(x-\lambda^1_\text{iso}\right)\tilde{\Phi}(\lambda^0_\text{iso},\lambda^1_\text{iso})
\end{align}

\noindent Using Eq.~\eqref{br_cut} we can obtain:
\begin{equation}
\label{eq:gamma_sokot}
\begin{split}
     \lim_{\eta \to 0^+} \gamma(x-i\eta,y\pm i\eta)=&\mp \pi^2  \delta(x-\lambda_{\rm iso}^0)  \delta(y-\lambda_{\rm iso}^1) \mathfrak{q}_{\sigma,\Delta_0}(\lambda_{\rm iso}^0,\mu_0)\mathfrak{q}_{\sigma,\Delta_1}(\lambda_{\rm iso}^1,\mu_1)\\&+ \text{  regular terms  },
    \end{split}
\end{equation}
where we neglect all terms that are not proportional to the product of delta functions, since they won't contribute to \eqref{eq:psi_tilde_Re}. 
The terms within brackets in \eqref{eq:rmt_beta} are real when computed at $x,y \to \lambda_{\rm iso}^a$ due to the fact that $|\lambda_{\rm iso}^a|> 2 \sigma$, for $a\in\{0,1\}$. Therefore, by combining \eqref{eq:gamma_sokot} and \eqref{eq:psi_tilde_Re}, we find:
\begin{align}
\label{eq:iso_iso_overlap}
\begin{split}
\tilde{\Phi}(\lambda_\text{iso}^0,\lambda_\text{iso}^1)&=
\mathfrak{q}_{\sigma,\Delta_0}(\lambda_{\rm iso}^0,\mu_0)\mathfrak{q}_{\sigma,\Delta_1}(\lambda_{\rm iso}^1,\mu_1) \beta(\lambda_\text{iso}^0,\lambda_\text{iso}^1)\\
&=\mathfrak{q}_{\sigma,\Delta_0}(\lambda_{\rm iso}^0,\mu_0)\mathfrak{q}_{\sigma,\Delta_1}(\lambda_{\rm iso}^1,\mu_1)\quadre{\Delta_h^2\Psi(\lambda_{\rm iso}^0,\lambda_{\rm iso}^1)
+1}^2
\end{split}
\end{align}
where in the last line we used that the equation satisfied by the isolated eigenvalues, Eq. \eqref{egval_eqn}, implies that the last term within brackets in \eqref{eq:rmt_beta} is equal to $1$. In Fig.~\ref{fig:phi_iso_iso_simulations} we show the validity of our formula, comparing with some numerical simulations. \\

\begin{figure}[t!]
\centering
\includegraphics[width=0.495\textwidth]
{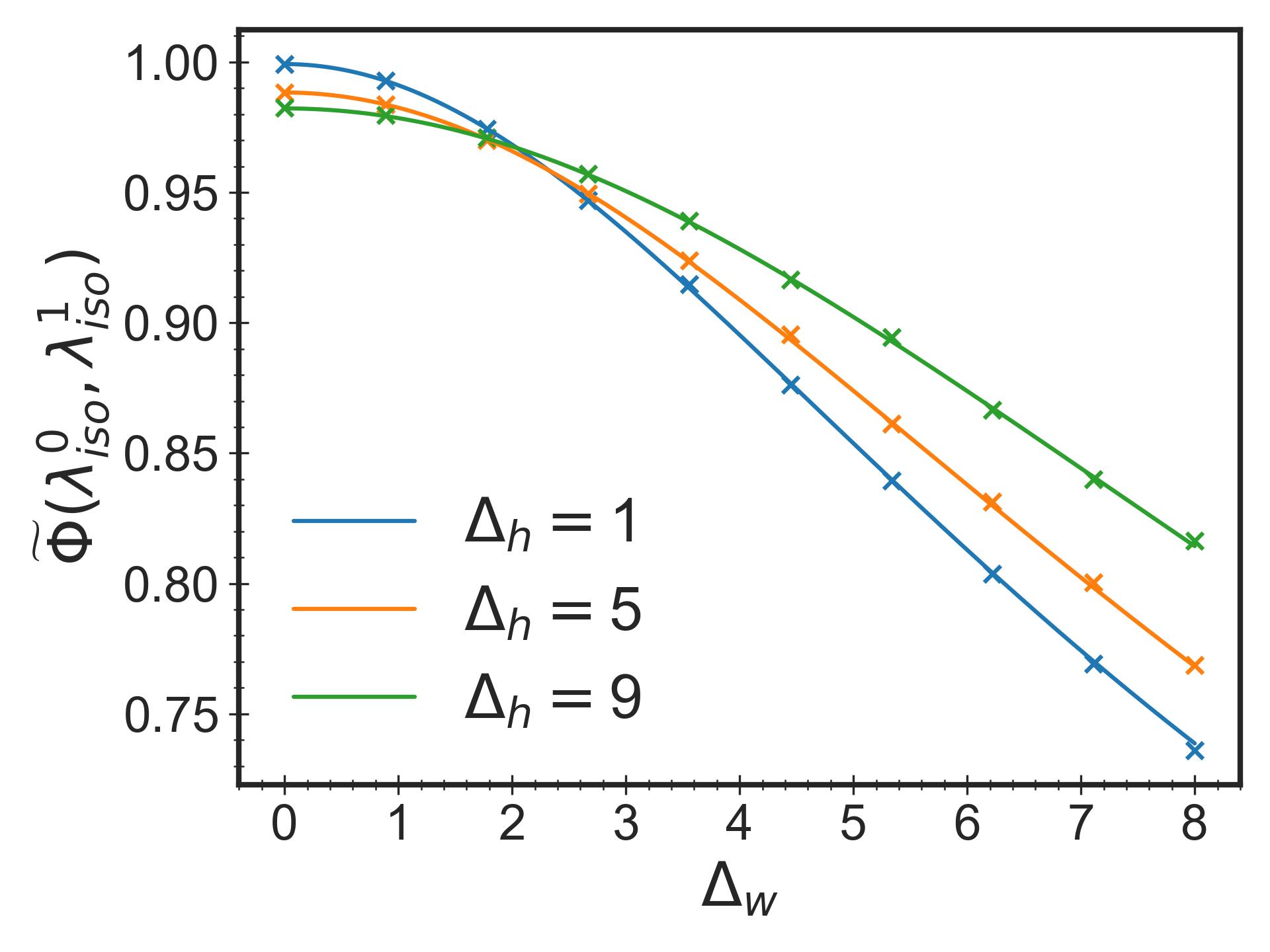}
\includegraphics[width=0.495\textwidth]{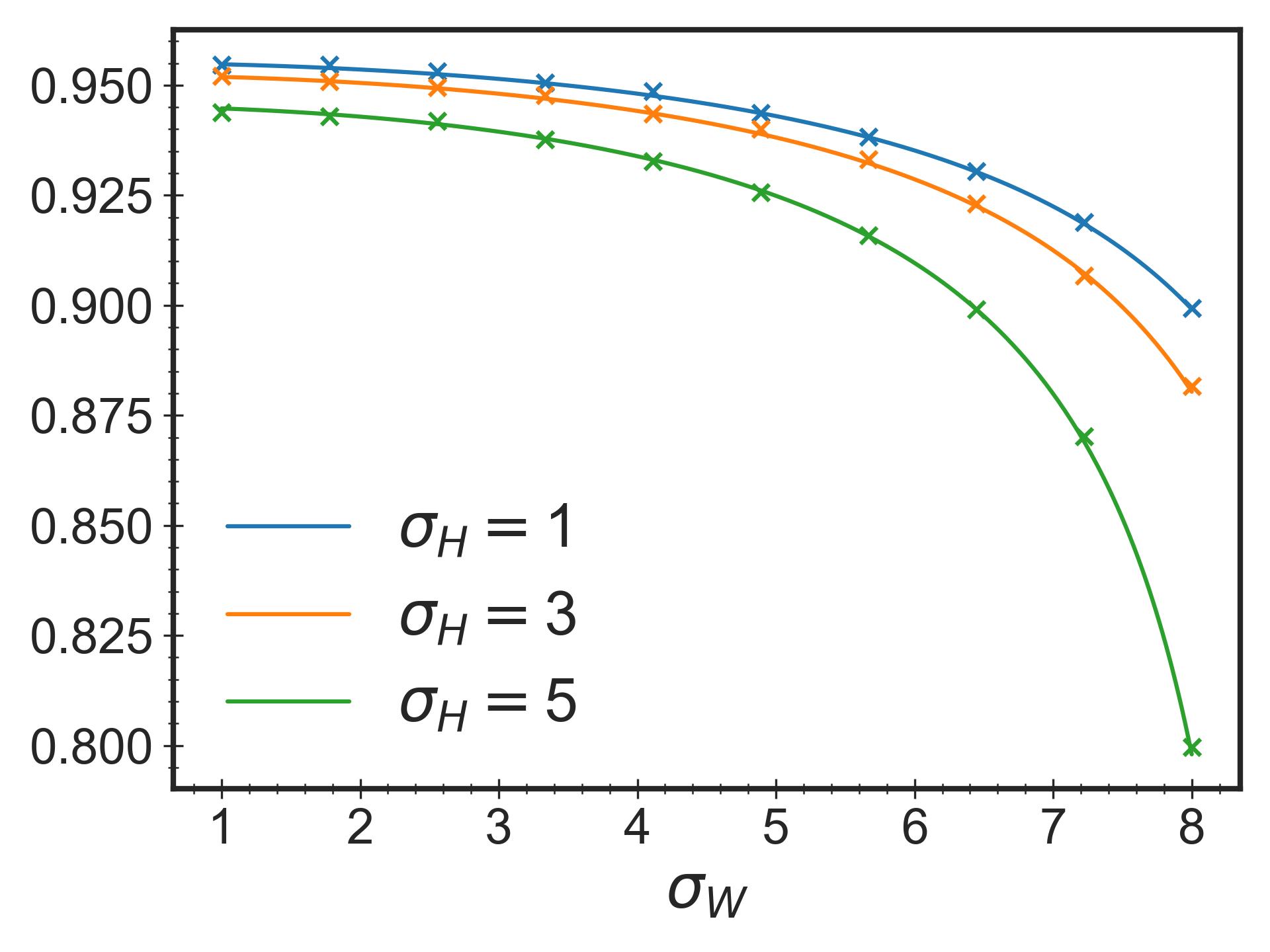}
\caption{Plots of the iso-iso overlap in Eq.~\ref{eq:iso_iso_overlap} and comparison with numerical simulations. \textit{Left}: we used $\sigma_H=8$, $\sigma_W=5$, $\mu_0=\mu_1=23$, $v_0=v_1=0$, and we plot the overlap as a function of $\Delta_w=\Delta_{w,0}=\Delta_{w,1}$ for various choices of $\Delta_h$ (one per color, three in total). \textit{Right}: we used $\Delta_h$, $\Delta_w=\Delta_{w,0}=\Delta_{w,1}=2$, $\mu_0=\mu_1=19$ and we plot as a function of $\sigma_W$ for various choices of $\sigma_H$ (one per color, three in total). \\
In both plots we made simulations with 500 iterations of matrices with $N=300$.}
\label{fig:phi_iso_iso_simulations}
\end{figure}

\noindent Let us comment on some limiting values of this expression. In the case in which the two matrices ${\bf M}^{(a)}$ have uncorrelated entries in the special line and column (meaning that $\Delta_h=0$) then the overlap reduces to $\mathfrak{q}_{\sigma,\Delta_0}(\lambda_{\rm iso}^0,\mu_0)\mathfrak{q}_{\sigma,\Delta_1}(\lambda_{\rm iso}^1,\mu_1)$ and thus it coincides with the product of two terms like \eqref{eq:ProjVectors}, one for each matrix. In fact, if we decompose the eigenvectors associated to the isolated eigenvalues into a component along the special direction ${\bf e}_N$ and one orthogonal to it, we obtain: $\mathbf{u}_{\lambda_{\text{iso}}^a}= (\mathbf{u}_{\lambda_{\text{iso}}^a} \cdot {\bf e}_N) {\bf e}_N + {\bf v}^a$ with ${\bf v}^a\cdot{\bf e}_N=0$.
In particular, the overlap reads $\mathbf{u}_{\lambda_{\text{iso}}^1}\cdot \mathbf{u}_{\lambda_{\text{iso}}^0}=(\mathbf{u}_{\lambda_{\text{iso}}^0} \cdot {\bf e}_N)(\mathbf{u}_{\lambda_{\text{iso}}^1} \cdot {\bf e}_N)+{\bf v}^0\cdot{\bf v}^1$. Then it is natural to expect ${\bf v}^0\cdot{\bf v}^1=0$ if the two special lines are not correlated, implying that $(\mathbf{u}_{\lambda_{\text{iso}}^0} \cdot \mathbf{u}_{\lambda_{\text{iso}}^1})^2= (\mathbf{u}_{\lambda_{\text{iso}}^0} \cdot {\bf e}_N)^2 (\mathbf{u}_{\lambda_{\text{iso}}^1} \cdot {\bf e}_N)^2$, which using \eqref{eq:ProjVectors} is precisely \eqref{eq:iso_iso_overlap} for $\Delta_h=0$.\\

\noindent On the other hand, when the two matrices are fully correlated ($\sigma_W=0=\Delta_{w,0}=\Delta_{w,1}$) the overlap is maximal and equal to one, as can be checked. The dependence of the overlap $\tilde{\Phi}(\lambda_{\rm iso}^0,\lambda_{\rm iso}^1)$ on the variances $\sigma_W, \sigma_H$ and $\Delta_h,\Delta_w$ is shown in Fig. \ref{fig:phi_iso_iso_simulations}.

\subsection{Application: matrix denoising}
\label{sec:pca}

In this section, we consider the case of purely additive rank-1 perturbations to the GOE matrices. In our setting, this corresponds to choosing $\Delta_h=v_h=\sigma_H$ and $\Delta_{w,a}=v_{w,a}=\sigma_W$ for both $a=0,1$. Moreover we shall assume $\mu_0=\mu_1=\mu$. This setting has a clear interpretation as a denoising problem: the perturbed matrices \eqref{eq:MatRiscritte} can in fact be written in this case as:
\begin{equation}\label{eq:addPert}
{\bf M}^{(a)}= {\bf X}^{(a)} + \mu\, {\bf e}_N {\bf e}_N^T,
\end{equation}
where the second term is a rank-one perturbation, also known as the \textit{spike}, ${\bf e}_N$ is the \textit{signal}, and ${\bf X}^{(a)}$ are $N \times N$  GOE matrices with variance $\sigma^2$ identified as the  \textit{noise}. The quantity $\mu/\sigma$ is known as \emph{signal-to-noise ratio}. The single matrix properties (i.e. by dropping the superscript $a$) of this class of problems have been introduced in Sec.~\ref{sec:rmt_intro_spiked}. \\

\begin{figure}[t!]
\centering
\includegraphics[width=0.49\textwidth]
{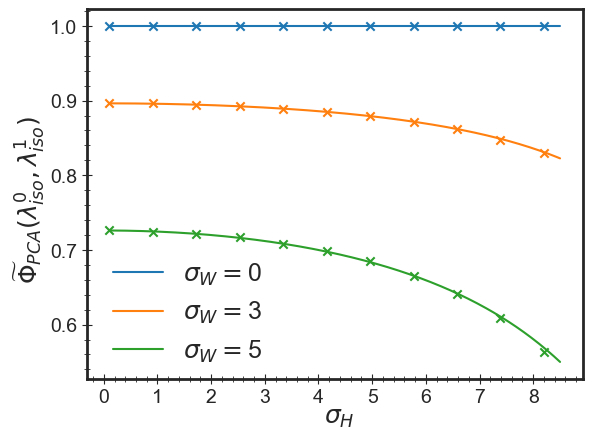}
\includegraphics[width=0.49\textwidth]{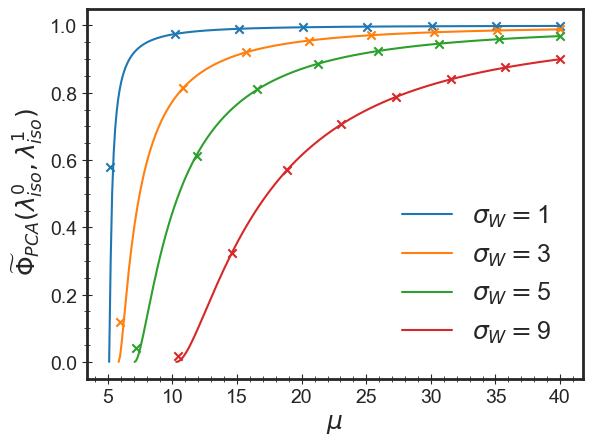}
\caption{
Overlap of the eigenvectors associated to the isolated eigenvalues of matrices subject to purely additive perturbations ($\Delta_h=\sigma_H$, $\Delta_{w,0}=\Delta_{w,1}=\sigma_W$) with $\mu=\mu_0=\mu_1$. The points are obtained from direct diagonalization of matrices of size $N=600$ averaged over $1500$ realization, while the continuous curves correspond to Eq.~\eqref{eq:PhiPCA}. As for Fig.\ref{fig:double_evals} we set $v_0=v_1=0$ given that the final results do not depend on them, to leading orders. {\it Left.}  Overlap as a function of $\sigma_H$,  for various $\sigma_W$ and $\mu_0=\mu_1=13$. Higher curves correspond to lower $\sigma_W$. {\it Right.} Overlap as a function of $\mu$, for  various $\sigma_W$ (from left to right the curves have lower $\sigma_W$) and $\sigma_H=5$. }
\label{fig:phi_iso_iso_pca_vs_sigmah_sigmaw}
\end{figure}

\noindent Consider now the case in which pairs of  spiked matrices ${\bf M}^{(a)}$ of the form \eqref{eq:addPert} are given, which differ from each others only by the  fluctuations in the noisy component ${\bf X}^{(a)}$ (thus $\mu_0=\mu=\mu_1$), the noise being correlated as described in Sec.~\ref{subsec:matrix}. Such pairs may correspond to measurements performed at different times, between which the noise has changed partially, without decorrelating completely with the previous configuration. At both times the estimator of the signal is given by the eigenvector ${\bf u}^{a}_{\rm iso}$ associated to the isolated eigenvalue of the spiked matrix. The correlation in the noisy components of the matrices implies that estimators ${\bf u}^a_{\rm iso}$ will have a non-trivial overlap with each other, which corresponds to $\tilde \Phi(\lambda^0_{\rm iso},\lambda^1_{\rm iso})$. This function then quantifies the typical similarity between the estimators ${\bf u}^{a}_{\rm iso}$ of the signal, obtained from different measurements of the signal corrupted by correlated noise. For the purely additive rank-1 perturbation considered here we have the following simplifications: 
\begin{align}
&\lambda_{\rm iso}= \mu + \frac{\sigma^2}{\mu}\quad\quad \mathfrak{q}_{\sigma,\Delta}(\lambda_{\rm iso},\mu)= 1- \frac{\sigma^2}{\mu^2}.
\end{align}
Then the overlap \eqref{eq:iso_iso_overlap} in this particular limit reduces to:
\begin{equation}\label{eq:PhiPCA}
\begin{split}
&\tilde \Phi_{\rm PCA}(\lambda_{\rm iso},\lambda_{\rm iso}):=\tonde{1-\frac{\sigma^2}{\mu^2}}^2 \, \bigg[\sigma_H^2\omega(\lambda_{\rm iso})
    +1\bigg]^2,
 \end{split}
\end{equation}
where
\begin{align}\label{eq:omega}
\omega(z):=\lim_{z\to\xi}\Psi(z,\xi)=\frac{\mathfrak{g}_\sigma(z)}{z+(\sigma_W^2-2\sigma^2)\mathfrak{g}_\sigma(z)}.
\end{align}

\noindent In Fig.~\ref{fig:phi_iso_iso_pca_vs_sigmah_sigmaw} we compare this expression with the overlaps obtained from the direct diagonalization of the random matrices, for different values of $\sigma_W$. As expected, at fixed $\sigma_H$ the overlap is equal to one in the case of fully correlated noise ($\sigma_W=0$), and decreases monotonically with the strength $\sigma_W$ of the uncorrelated part of the noise. At fixed $\sigma_W$, the overlap also decreases with increasing $\sigma_H$, as the relative contribution of the noise $\sigma=(\sigma_W^2 + \sigma_H^2)^{1/2}$
with respect to the signal $\mu$ increases. For $\sigma_H=0$, the noise in the two sets of measurements is uncorrelated and the overlap converges to the square of \eqref{eq:OvPCA}. As discussed in Sec. \ref{sec:theoretical_res2}, this corresponds to the fact that the estimators ${\bf u}^{a}_{\rm iso}$ are orthogonal in the subspace orthogonal to the signal direction ${\bf e}_N$.

\chapter*{Conclusive remarks}
\addcontentsline{toc}{chapter}{Conclusive remarks}
\markboth{}{}
The works presented in this thesis were motivated by problems of high-dimensional random landscapes, that arise in a variety of domains: from neural networks \cite{ChaosSompo88, mannelli_spiked_matrix_2019} to spin glasses \cite{ros2019complex, folena2020rethinking} and complex ecosystems \cite{MAY_1972, fyodorov2016nonlinear}, just to name a few. The physics of high-dimensional landscapes is a relatively new subject, dating back about 20/30 years \cite{cavagna1998stationary, Fyod2004}, and many questions remain open. Among these questions we were particularly interested in the "statics" versus "dynamics" debate. This question sparks from the interest to understand the system's dynamics (often solved via Dynamical mean-field theory for fully connected systems with many interacting units) 
from a landscape perspective. More precisely, one is interested to understand what is the influence of the landscape's topology and geometry on the dynamics of the system, often modeled as a gradient descent with, possibly, noise (Langevin dynamics). The prototypical model of energy landscape which we explained in Sec.~\ref{sec:p_spin_model}, i.e. the pure spherical $p$-spin model, is arguably the only high-dimensional complex system where a clear relation between the statics and the dynamics has been found \cite{Castellani_2005}. As we saw in Sec.~\ref{sec:mixed_models}, already a slight modification of such model breaks phenomena that were previously though to be general \cite{folena2020rethinking, Folena_gd_simul_2021}. Moreover, in the context of systems with non-reciprocal interactions (such as random neural networks and complex ecosystems), the theory is even more far behind, given that the system's dynamics cannot be seen as the optimization of an energy landscape. In Chapters \ref{chapter:non_reciprocal},\,\ref{chapter:scs} we were particularly interested in this question. We therefore decided to conduct a thorough analysis between the Kac-Rice complexity of equilibria and the (explicit) solution of the chaotic dynamics. We have seen that while some connections are present, in general, we cannot use one to infer the other. The problem of understanding the chaotic attractor in terms of a static calculation remains therefore open. Hopefully, our work provides the first steps and motivates further research to overcome this challenge. \\

\noindent Let us moreover underline that there are still open questions regarding the pure $p$-spin model. A question that has attracted a lot of attention recently is the one of activated dynamics \cite{pacco2024curvature, ros2021dynamical, ros2019complexity, rizzo2021path, folena_rare_2025}, namely to understand how the system restores ergodicity once initialized in a local minimum within the glassy region with exponentially many local minima. In Chapter~\ref{chapter:energy_landscapes} we tried to tackle this question by means of static approaches that leverage the Kac-Rice formalism. With the first approach we studied energetic pathways between fixed points, a problem that is ubiquitous in the context of machine learning \cite{draxler2018essentially, annesi2023star, mannelli_afraid_2019, garipov2018loss}. To compute such paths we also had to solve in Chapter~\ref{chapter:rmt_} a problem about overlaps of eigenvectors between spiked, correlated GOE random matrices. With the second approach, instead, we were able to understand the arrangement of triplets of stationary points in the glassy part of the landscape, and identify both signatures of strong correlations among them, as well as regions where memory of the initial points is forgotten.\\

\noindent One of the most challenging problems for the future of high-dimensional random landscapes is, arguably, the study of non-Gaussian landscapes. These arise especially in the context of deep learning, and are thought to have rather different phenomenology from simple glassy systems \cite{baity2018comparing}, mainly by the presence of flat regions in the bottom of the landscape. Although some directions have been taken \cite{kent_algo_marg_2024, maillard_landscape_2020,tsironic_perceptron_kac_2025,montanari_erm_localmin_2025}, a lot of work and challenges remain open.\\

\noindent Finally, I hope that this thesis and the works that it contains will open new doors, thus motivating further research to better explore high-dimensional landscapes and their fascinating mysteries.


\newgeometry{left=20mm, right=20mm, top=24mm, bottom=24mm}

\cleardoublepage
\addcontentsline{toc}{chapter}{Bibliography} 
\printbibliography


\cleardoublepage
\setcounter{chapter}{6}    
\renewcommand{\thechapter}{A}  
\chapter*{Appendix}        
\addcontentsline{toc}{chapter}{Appendix}  
\markboth{APPENDIX}{Appendix}             

\renewcommand{\thesection}{\thechapter.\arabic{section}}
\setcounter{section}{0}


\section{Gaussian conditioning}
\label{app:sec_gausian_conditioning}
\noindent We begin by recalling the formula for Gaussian conditioning.
Given a multivariate normal vector distributed as $Z\sim\mathcal{N}({\bm \mu}, \Sigma)$, and given the  partition:
\begin{align*}
    \begin{split}
        &Z=\begin{bmatrix}
            X\\
            Y
        \end{bmatrix}\quad\quad{\bm\mu}=\begin{bmatrix}
            \mu_X\\
            \mu_Y
        \end{bmatrix}\quad\quad\Sigma=\begin{bmatrix}
            \Sigma_{XX} & \Sigma_{XY}\\
            \Sigma_{YX} & \Sigma_{YY}
        \end{bmatrix},
    \end{split}
\end{align*}
then the conditional law of $(X|Y={\bf y})$ is a multivariate normal $\mathcal{N}(\tilde{{\bm\mu}},\tilde{\Sigma})$ with parameters 
\begin{align}
\begin{split}
&\tilde{{\bm\mu}}={\bm\mu}_X+\Sigma_{XY}\Sigma_{YY}^{-1}({\bf y}-{\bm\mu}_Y)\\
&\tilde{\Sigma}=\Sigma_{XX}-\Sigma_{XY}\Sigma_{YY}^{-1}\Sigma_{YX}.
\end{split}
\end{align}

\section{Dynamical mean-field theory 
 of Gaussian fields}\label{app:dynamical_calculations}
 
Taking inspiration from Ref.~\cite{HeliasBook20}, we summarize the derivation of the DMFT equations.\\

\noindent We write our dynamical system as
\begin{align}
\label{app:dynamical_example}
    \partial_t{\bf x}(t)={\bf F}({\bf x}(t))+{\bm\eta}(t),\quad\quad{\bf x}\in\mathbb{R}^N
\end{align}
where ${\bf F}$ is the random Gaussian field in Eq.~\ref{eqapp:mod} of Chapter~\ref{chapter:non_reciprocal} and ${\bm\eta}$ is a mean-zero Gaussian white noise with $\langle \eta_i(t)\eta_j(t')\rangle_\eta=2T\delta_{ij}\delta(t-t')$. We also assume that we start at ${\bf x}(0)={\bf a}$. There are two common ways to pass from the continuous to the discretized version of this dynamical system: the Itô and the Stratonovich conventions. In the case of additive noise (i.e. $\eta$ does not depend on $x$), the two give the same continuous time limit \cite{HeliasBook20}. Consider partitioning time in steps $\Delta t$, then by integrating in time the LHS and RHS of \eqref{app:dynamical_example} we get, in the Itô convention (which implies evaluating functions at the current time only):
\begin{align}
&\int_t^{t+\Delta t}\partial_\tau{\bf x}(\tau)d\tau=\int_t^{t+\Delta t}{\bf F}({\bf x})d\tau+\int_t^{t+\Delta t}{\bm \eta}(\tau)d\tau\\
\Rightarrow& \begin{cases}{\bf x}(t+\Delta t)={\bf x}(t)+{\bf F}({\bf x}(t))\Delta t+\sqrt{2T\Delta t}\,\,{\bm \xi} \\
{\bf x}(0)={\bf a}\,\delta_{t0}
\end{cases}
\end{align}
with ${\bm\xi}\sim\mathcal{N}(0,\mathbb{I})$ and it is meant that at each time step we have a new ${\bm\xi}$. In the last line we used that the integral of the white noise is Gaussian (as a sum of Gaussians), and has zero-mean and variance given by:
\begin{align}
    \left\langle\int_t^{t+\Delta t}\int_t^{t+\Delta t}{\bf \eta}_i(\tau)\eta_j(\tau') \right\rangle_\eta=2T\delta_{ij}\Delta t,
\end{align}
which implies that 
\begin{align}
    \int_t^{t+\Delta t}{\bm\eta}(\tau)d\tau\sim \mathcal{N}(0, 2T\Delta t\mathbb{I})\sim \sqrt{2T\Delta t}\,\,{\bm\xi}.
\end{align}
In short, we may write the above discretized system as:
\begin{align}
    {\bf x}_{l+1}={\bf x}_l+{\bf F}({\bf x}_l)\Delta t+\sqrt{2T\Delta t}\,\,{\bm\xi}_l\equiv {\bf y}({\bf x}_l,{\bm\xi}_l)
\end{align}
for $l=0,\cdots,t/\Delta t$ with ${\bf x}_0={\bf a}$, and ${\bm\xi}_l\overset{iid}{\sim}\mathcal{N}(0,\mathbb{I})$. Now, we can express the joint probability of a certain path $\{{\bf x}_l\}_{l=1,\cdots,t/\Delta t }$ (in short $\{{\bf x}(t)\}$) as the sum over all realizations of the white noise that generate such path:
\begin{align}
    p(\{{\bf x}(t)\})=\int \prod_{l=0}^{t/\Delta t} d{\bm\xi}_{l}\,\rho({\bm\xi}_l)\,\delta({\bf x}_{l+1}-{\bf y}({\bf x}_{l},{\bm\xi}_l)),
\end{align}
where $\rho$ is the pdf \footnote{probability density function} of each Gaussian vector ${\bm\xi}$. Notice that ${\bf y}$ only depends on ${\bf x}_l$ and not on the previous history, a property called Markov property of the process \cite{HeliasBook20}. The next step consists in opening up the Dirac deltas and introduce their Fourier representation:
\begin{align}
    \delta({\bf x})=\frac{1}{(2\pi i)^N}\int_{-i\infty}^{i\infty}d\tilde{{\bf x}}\,e^{{\bf x}\cdot\tilde{{\bf x}}}
\end{align}
where $\tilde{{\bf x}}$ is called the response field. This gives:
\begin{align*}
p(\{{\bf x}(t)\})&=\prod_l\int d{\bm\xi}_l\,\rho({\bm\xi}_l)\, \int_{-i\infty}^{i\infty} \frac{d\tilde{{\bf x}}_l}{(2\pi i)^N}e^{\tilde{{\bf x}}_l\cdot ({\bf x}_{l+1}-{\bf x}_l-{\bf F}({\bf x}_l)\Delta t-\sqrt{2T\Delta t}\,{\bm\xi}_l)}\\
&=\prod_l\int_{-i\infty}^{i\infty} \frac{d\tilde{{\bf x}}_l}{(2\pi i)^N}e^{\tilde{{\bf x}}_l\cdot ({\bf x}_{l+1}-{\bf x}_l-{\bf F}({\bf x}_l)\Delta t)}e^{W[\tilde{\bf x}_l]}
\end{align*}
where 
\begin{align}
    W[\tilde{{\bf x}}_l]=\ln\left\langle e^{-\sqrt{2T\Delta t}\,\,\tilde{{\bf x}}_l\cdot{\bm\xi}_l}\right\rangle_{\bm \xi}=\ln\left[e^{T\Delta t\tilde{{\bf x}}_l^2}\right]=T\Delta t\,\tilde{{\bf x}}_l^2.
\end{align}
Here we also see why the factor 2 multiplying $T$ was necessary, so that now we get rid of it in the path integral. This representation of $p({\bf x})$ takes the name of MSRDJ path integral \cite{HeliasBook20}. By abusing notation, in the limit $\Delta t\to 0$, we may write the probability distribution of a path $\{{\bf x}(\tau)\}_{\tau\leq t}$ (or, in short $\{{\bf x}(t)\}$) starting from ${\bf x}(0)={\bf a}$ as 
\begin{align}
    p(\{{\bf x}(t)\})=\int \mathcal{D}\tilde{{\bf x}}\,e^{\int_0^t ds\,\tilde{{\bf x}}\cdot[\partial_s{\bf x}-{\bf F}({\bf x})]}e^{T\int_0^t ds\, \tilde{{\bf x}}^2 }
\end{align}
where we neglect for simplicity the function's arguments, but it is intended that all the fields are functions of time, and $\mathcal{D}$ contains all normalization factors. We recall that $\mathbb{E}$ indicates the average over the quenched disorder (we haven't used it yet), $\langle\cdot\rangle_\eta$ indicates the average with respect to the gaussian white noise; we define then the average with respect to the probability distribution of paths up to time $t$ as:
\begin{align}
    \langle \cdot \rangle=\int \mathcal{D}{\bf x}\,\,p(\{{\bf x}(t)\})\,\,\cdot
\end{align}
where $\cdot$ indicates any possible functional of ${\bf x}$. In particular, notice that 
\begin{align}
    1=\int \mathcal{D}{\bf x}\int \mathcal{D}\tilde{{\bf x}}\,\,\text{exp}\left[\int dt\left(\, S_0[{\bf x},\tilde{{\bf x}}]-{\bf f}({\bf x})\cdot\tilde{\bf x}\right)\right]
\end{align}
where we have defined $S_0$ to be the action containing only single unit properties, thus excluding the interaction term ${\bf f}$:
\begin{align}
    S_0[{\bf x}(t),\tilde{{\bf x}}(t)]=\tilde{\bf x}(t)\cdot[\partial_t+\lambda({\bf x})]{\bf x}(t)+T\,\tilde{\bf x}^2(t).
\end{align}
The idea underlying the next few steps is that, by taking the quenched average, we change from the variables ${\bf x},\tilde{\bf x}$ to two new fields $Q_1,Q_2$. We will then take a saddle point over these two fields, thus calculating the maximum contribution to the probability mass. By doing so, we will see that the interaction terms between single units decouple, and we get an action over $N$ identical units with an effective self-consistently determined noise term. The idea for doing this is that in the limit $N\to \infty$ all quantities of interest (response functions, autocorrelation functions) are self-averaging and thus converge to their mean values irrespective of the quenched disorder: by taking an average over the quenched disorder each unit will feel the same (effective) noise as any other unit and we can therefore find these average values. We thus obtain:
\begin{align*}
    1&=\int \mathcal{D}{\bf x}\int \mathcal{D}\tilde{{\bf x}}\,\,\mathbb{E}\,\,\text{exp}\left[\int dt\left(\, S_0[{\bf x},\tilde{{\bf x}}]-{\bf f}({\bf x})\cdot\tilde{{\bf x}}\right)\right]=\int \mathcal{D}{\bf x}\int \mathcal{D}\tilde{{\bf x}}\,\,\text{exp}\left[\int dt\, S_0[{\bf x},\tilde{{\bf x}}]\right]\mathbb{E}\left[e^{-\int dt\,{\bf f}({\bf x}(t))\cdot\tilde{\bf x}(t)}\right].
\end{align*}
Now, recall that for a generic Gaussian vector ${\bf X}\sim\mathcal{N}({\bm\mu},{\bf \Sigma})$ we have that the moment generating function reads:
\begin{align}
    \mathbb{E}\left[e^{{\bf X}^\top{\bf t}}\right]=e^{{\bm\mu}^\top{\bf t}+\frac{1}{2}{\bf t}^\top{\bm\Sigma}\,{\bf t}}
\end{align}
for any constant vector ${\bf t}$. Applying this in the case where we have an integral in the exponent (by discretizing time), it follows:
\begin{align*}
\mathbb{E}\left[e^{-\int dt\,{\bf f}({\bf x}(t))\cdot\tilde{\bf x}(t)}\right]&=\text{exp}\left[-J\int dt\,m({\bf x}(t)){\bf 1}\cdot\tilde{\bf x}+\frac{1}{2}\int dt\,ds\,\tilde{\bf x}^\top(t){\bf \Sigma}(t,s)\tilde{\bf x}(s)\right]
\end{align*}
where we used the shortcut ${\bf\Sigma}_{ij}(t,s)\equiv\text{Cov}[f_i({\bf x}(t)),f_j({\bf x}(s))]$. We further compute the last term, which reads:
\begin{align*}
\text{exp}&\left[\frac{1}{2}\int dt\,ds\,\tilde{\bf x}^\top(t){\bf \Sigma}(t,s)\tilde{\bf x}(s)\right]=\text{exp}\left[\frac{1}{2}\int dt\,ds\,\tilde{\bf x}^\top(t)\tilde{\bf x}(s)\Phi_1\left(\frac{{\bf x}(t)\cdot{\bf x}(s)}{N}\right)\right]\times \\&\times\text{exp}\left[\frac{1}{2}\int dt\,ds\,\frac{\tilde{\bf x}^\top(t){\bf x}(s)}{N}\tilde{\bf x}^\top(s){\bf x}(t)\Phi_2\left(\frac{{\bf x}(t)\cdot{\bf x}(s)}{N}\right)\right].
\end{align*}
Let us now introduce the following order parameters:
\begin{align}
\begin{split}
&Q_1(t,s)=\frac{{\bf x}(t)\cdot{\bf x}(s)}{N}\\
&Q_2(t,s)=\frac{\tilde{\bf x}(t)\cdot\tilde{\bf x}(s)}{N}\\
&Q_3(t,s)=\frac{\tilde{\bf x}(t)\cdot{\bf x}(s)}{N}\\
&Q_4(t,s)=\frac{{\bf x}(t)\cdot\tilde{\bf x}(s)}{N}\\
\end{split}
\end{align}
which we enforce through Dirac deltas:
\begin{align*}
&\delta(-NQ_1(t,s)+{\bf x}(t)\cdot{\bf x}(s))\cdots\delta(-NQ_4(t,s)+{\bf x}(t)\cdot\tilde{\bf x}(s))\\=&\int \mathcal{D}\lambda_1\cdots \mathcal{D}\lambda_4\,e^{\int dt\,ds\,\lambda_1(-NQ_1(t,s)+{\bf x}(t)\cdot{\bf x}(s))}e^{\ldots}e^{\int dt\,ds\,\lambda_4(-NQ_4(t,s)+{\bf x}(t)\cdot\tilde{\bf x}(s))}
\end{align*}
By plugging this into our equation for the number 1 we get:
\begin{align}
\label{app:eq:one}
1=\int\prod_{i=1}^4\mathcal{D}Q_i\mathcal{D}\lambda_i\,\text{exp}\left\{N\left[-\sum_{i=1}^4Q_i\lambda_i+\ln Z +\frac{1}{2}Q_2\Phi_1(Q_1)+\frac{1}{2}Q_3Q_4\Phi_2(Q_1)\right]\right\}
\end{align}
where we see that, by introducing the order parameters, the integrals over ${\bf x}$ and $\tilde{\bf x}$ decouple and can be factorized as a unique integral to the power $N$, which we grouped within the variable $Z$:
\begin{align}
Z=\int\mathcal{D}x\mathcal{D}\tilde{x}\,\text{exp}\left\{S_0[x,\tilde{x}]-(J\,m )\,\tilde{x}+x\lambda_1x+\tilde{x}\lambda_2\tilde{x}+\tilde{x}\lambda_3x+x\lambda_4\tilde{x}\right\}
\end{align}
where all integrals have been omitted but are clear from the context. Now, we wish to maximize the action to get the leading order parameters that contribute to the probability mass as $N\to\infty$. We thus get at the Saddle Point:
\begin{align}
\begin{split}
&Q_1(t,s)=C(t,s)\\
&Q_2(t,s)=0\\
&Q_3(t,s)=R(s,t)\\
&Q_4(t,s)=R(t,s)
\end{split}
\end{align}
and 
\begin{align}
\begin{split}
&\lambda_1(t,s)=\frac{1}{2}Q_2(t,s)\Phi_1'(Q_1(t,s))+\frac{1}{2}Q_3(t,s)Q_4(t,s)\Phi_2'(Q_1(t,s))=0\\
&\lambda_2(t,s)=\frac{1}{2}\Phi_1(Q_1(t,s))\\
&\lambda_3(t,s)=\frac{1}{2}Q_4(t,s)\Phi_2(Q_1(t,s))\\
&\lambda_4(t,s)=\frac{1}{2}Q_3(t,s)\Phi_2(Q_1(t,s))
\end{split}
\end{align}
Now, if we plug back these results into \eqref{app:eq:one}, we have that the terms outside of $\ln Z$ cancel, and we are thus left with 
\begin{align}
    1\overset{N\to\infty}{\approx}(Z^*)^N
\end{align}
with 
\begin{align*}
Z^*=\int \mathcal{D}x\mathcal{D}\tilde{x}\,&\text{exp}\Bigg\{\int dt \Bigg(S_0[x(t),\tilde{x}(t)]-\tilde{x}(t)(J\,m)+\int ds\, \frac{1}{2}\tilde{x}(t)\Phi_1(C(t,s))\tilde{x}(s)\\
&+\tilde{x}(t)\int_0^t ds\,R(t,s)\Phi_2(C(t,s))x(s)\Bigg)\Bigg\}
\end{align*}
and we recall that:
\begin{align}
    S_0[x(t),\tilde{x}(t)]=\tilde{x}(t)[\partial_t+\lambda(x)]x(t)+T\int ds\,\tilde{x}(t)\tilde{x}(s)\delta(t-s).
\end{align}
By introducing the $\delta$ function in the white noise term, we can now match this action with the new effective action:
\begin{align}
    \tilde{S}[x(t),\tilde{x}(t)]=\tilde{x}(t)[\partial_t+\lambda(x)]x(t)+\int ds\,\tilde{x}(t)\tilde{x}(s)\left[T\delta(t-s)+\frac{1}{2}\Phi_1(C(t,s))\right]
\end{align}
This finally gives us a SDE for one single unit with self-consistently determined noise:
\begin{align}
\begin{split}
&\partial_tx(t)=-\lambda(x)x(t)+\int_0^t ds\,R(t,s)\Phi_2(C(t,s))x(s)+Jm(t)+\eta(t)\\
&\langle\eta(t)\eta(s)\rangle=2T\delta(t-s)+\Phi_1(C(t,s))
\end{split}.
\end{align}


\subsection{Computation of $\langle \eta(t)x(t')\rangle$}
\label{app:comp_eta_x_t_tp}
The probability distribution of the Gaussian noise $\eta$ corresponds to the continuum limit of a discretized process over time steps $\eta_i$. The integration measure over all paths, denoted by $\mathcal{D}\eta$ corresponds to the continuum limit of the product $\prod d\eta_i$ at each time-step. For simplicity, let us denote $\langle \eta(t)\eta(t')\rangle=\Sigma(t,t')$. Then, it holds:
\begin{align}
    \int dt'\, \Sigma(t,t')\Sigma^{-1}(t',t'')=\delta(t-t'').
\end{align}
We also write compactly the probability distribution as
\begin{align}
    P[\eta]\propto e^{-\frac{1}{2}\int dt dt' \Sigma^{-1}(t,t')\eta(t)\eta(t')}.
\end{align}
Now, let us consider the following identity:
\begin{align}
    \int \mathcal{D}\eta\,\frac{\delta}{\delta\eta(t)}(x(t')P[\eta])=0,
\end{align}
which follows from the fact that the Gaussian distribution decays to zero at infinity. This identity can be expanded and recast in the following form:
\begin{align}
    0=\int\mathcal{D}\eta\left[\frac{\delta x(t')}{\delta \eta(t)}P[\eta]+x(t')\frac{\delta P[\eta]}{\delta \eta(t)}\right]=R(t',t)+\int \mathcal{D}\eta\, x(t')\frac{\delta P[\eta]}{\delta\eta(t)}.
\end{align}
Now, let us analyse the second term. Since $P$ is a Gaussian distribution, we directly find
\begin{align}
    \frac{\delta P[\eta]}{\delta\eta(t)}=-\int dt''\,\Sigma^{-1}(t,t'')\eta(t'')P[\eta],
\end{align}
which ultimately implies
\begin{align*}
&R(t',s)=-\int \mathcal{D}\eta\, x(t')\frac{\delta P[\eta]}{\delta\eta(s)}=\int dt'' \,\Sigma^{-1}(s,t'')\langle \eta(t'')x(t')\rangle \\
\Rightarrow &\int ds\, R(t',s)\Sigma(t,s)=\int dt''\, ds\,\Sigma(t,s)\Sigma^{-1}(s,t'')\langle \eta(t'')x(t')\rangle\\
\Rightarrow &\int ds\, R(t',s)\Sigma(t,s)=\int dt''\delta(t-t'')\langle\eta(t'')x(t')\rangle=\langle\eta(t)x(t')\rangle\\
\Rightarrow& \int_0^{t'}ds\,R(t',s)\Sigma(t,s)=\langle\eta(t)x(t')\rangle 
\end{align*}

\subsection{DMFT equations}
\label{app:dmft_equations}
We recall that the autocorrelation and response functions are given by:
\begin{align}
    C(t,t')=\langle x(t),x(t')\rangle,\quad R(t,t')=\left\langle\frac{\delta x(t)}{\delta\eta(t')}\right\rangle,\quad m(t)=\langle x(t)\rangle.
\end{align}
The equation for the autocorrelation function is obtained by multiplying the SDE at time $t$ by $x(t')$, and taking the average:
\begin{align}
    \partial_tC(t,t')=-\lambda(t)C(t,t')+\int_0^tds\,R(t,s)\Phi_2(C(t,s))C(t',s)+Jm(t)m(t')+\langle \eta(t)x(t')\rangle.
\end{align}
The equation for $R$ is easier to find, as we just need to take the derivative of the SDE with respect to $\eta(t')$, and then take an average:
\begin{align}
\partial_t R(t,t')=-\lambda(t)R(t,t')+\int_{t'}^tds\,R(t,s)\Phi_2(C(t,s))R(s,t')+\delta(t-t')
\end{align}
where now the integral starts from $t'\leq t$ to respect causality. Finally, the equation for $m(t)$ is simply obtained by taking the average in the SDE:
\begin{align}
\partial_t m(t)=-\lambda(t)m(t)+\int_0^t ds\,R(t,s)\Phi_2(C(t,s))m(s)+Jm(t).
\end{align}
Regarding $\lambda(t)$, as we have seen in Chapter~\ref{chapter:non_reciprocal}, either it is a confining potential
\begin{align}
    \lambda(t)=\lambda(C(t,t)),
\end{align}
or it imposes a spherical constraint, that is, $C(t,t)=1$ at all times, and $\partial_tx^2(t)=0$. In order to obtain a formula for $\lambda$, we multiply the LHS and RHS of the SDE by $x(t)$, and exploit the fact that, according to Itô's prescription (and imposing directly the spherical constraint):

\begin{align}
    0=\partial_tx^2(t)=2x(t)\partial_tx(t)+2T\Rightarrow x(t)\partial_t x(t)=-T.
\end{align}

\noindent This imposes the following expression for $\lambda$:
\begin{align}
\lambda(t)=T+\int_0^t ds\,R(t,s)\Phi_2(C(t,s))C(t,s)+Jm^2(t)+\langle\eta(t)x(t)\rangle,
\end{align}
and using the result above we obtain:
\begin{align}
\langle\eta(t)x(t')\rangle=\int_0^{t'}ds\,R(t',s)\langle\eta(t)\eta(s)
\rangle=2TR(t',t)+\int_0^{t'}ds\,R(t',s)\Phi_1(C(t,s))
\end{align}
and we recall that $R(t^-,t)=0$, but $R(t^+,t)>0$. With these results, we can summarize the DMFT equations as follows:
\begin{align}
&\begin{cases}
\lambda(t)=T+ \int_0^t ds\,R(t,s)\Phi_2(C(t,s))C(t,s)+     \int_0^{t}ds\,R(t,s)\Phi_1(C(t,s))+Jm^2(t)\quad\text{SpM}\\
\lambda(t)=\lambda(C(t,t)) \quad\quad\text{CM}
\end{cases}\\
&\partial_t R(t,t')=-\lambda(t)R(t,t')+\int_{t'}^tds\,R(t,s)\Phi_2(C(t,s))R(s,t')+\delta(t-t')\\
\begin{split}
&\partial_tC(t,t')=-\lambda(t)C(t,t')+\int_0^tds\,R(t,s)\Phi_2(C(t,s))C(s,t')+Jm(t)m(t')+2TR(t',t)\\
&\quad\quad\quad\quad +\int_0^{t'}ds\,R(t',s)\Phi_1(C(t,s))
\end{split}
\end{align}
where we recall that $CM$ stands for confined model and $SpM$ for spherical model.

\subsection{Reduction to the pure spherical $p$-spin}
To get the dynamical equation for the pure spherical $p$-spin model, we need to choose $\Phi_1(u)=\frac{p}{2}u^{p-1}$ and $\Phi_2(u)=\Phi_1'(u)=\frac{p(p-1)}{2}u^{p-2}$, and $J=0$ to get the standard landscape without external field. With these choices, we get:
\begin{align}
&\lambda(t)=T+\frac{p^2}{2}\int_0^t ds\,R(t,s)C^{p-1}(t,s)\\
\begin{split}
&\partial_t C(t,t')=-\lambda(t)C(t,t')+\frac{p(p-1)}{2}\int_0^tds\,R(t,s)C(t,s)^{p-2}C(s,t')\\
&\quad\quad\quad\quad+\frac{p}{2}\int_0^{t'}ds\,R(t',s)C^{p-1}(t,s)+2TR(t',t)\\
\end{split}
\\
&\partial_tR(t,t')=-\lambda(t)R(t,t')+\frac{p(p-1)}{2}\int_{t'}^tds\,R(t,s)R(s,t')C^{p-2}(t,s)+\delta(t-t')
\end{align}

\newpage
\section{Quenched complexity of Gaussian fields}
\label{app:rnn_quenched}
Section based on \cite{us_non_reciprocal_2025}.
\subsection{The replicated Kac-Rice formalism}
For the class of models in Chapter~\ref{chapter:non_reciprocal} we compute the quenched complexity of equilibria by means of the replicated Kac-Rice formalism, following the procedure outlined in \cite{ros2019complex, ros2019complexity}. Recall from Chapter~\ref{chapter:non_reciprocal} that we had
\begin{align}
    \mathcal{N}(\lambda,m,q)=\int_{\mathbb{R}^N} d{\bf x}\,\Delta({\bf x})|\det \mathcal{H}({\bf x})|\delta({\bf f}({\bf x})-\lambda\,{\bf x})
\end{align}
with
\begin{align*}
&\Delta({\bf x})=\delta\left(\sum_ix_i-Nm\right)\delta({\bf x}^2-Nq),\quad \quad 
[\mathcal{H}({\bf x})]_{ij}=\frac{\partial f_i({\bf x})}{\partial x_j}-\lambda\delta_{ij}.
\end{align*}
The $N \times N$ matrix $\mathcal{H}({\bf x})$ is the Jacobian \footnote{we neglect the term $d\lambda/dx_i$, since ultimately the result for the complexity does not change, up to finite rank perturbations of the Jacobian.}. The integer powers $\mathcal{N}^n$ are obtained multiplying the integral representation $n$ times: they involve therefore $n$ integration variables, ${\bf x}^a$ with $a=1, \cdots, n$, which we refer to as replicas. Taking the average with respect to the random field ${\bf f}$, one gets the Kac-Rice formula for the moments of $\mathcal{N}$, which reads:
\begin{align}\label{eq:kRM}
\mathbb{E}\mathcal{N}^n=\int \prod d{\bf x}^a \prod_a\Delta({\bf x}^a)\mathbb{E}\left[\prod_a\delta({\bf f}({\bf x}^a)-\lambda{\bf x})\right]\mathbb{E}\left[\prod_a |\det\mathcal{H}({\bf x}^a)|\bigg|{\bf f}({\bf x}^a)=\lambda{\bf x}^a,\, a=1,\ldots,n \right].
\end{align}
In this expression, the expectation value of the product of determinants of the matrices $\mathcal{H}({\bf x}^a)$ is conditioned to ${\bf f}({\bf x}^a)=\lambda {\bf x}^a$. The annealed complexity can be derived setting $n=1$ in this expression. \\
The expectation values in this expression are functions of the configurations $ {\bf x}^a$; below, we show that for large $N$ the dependence on these configurations enters only through the overlaps $Q^{ab}=N^{-1}  {\bf x}^a \cdot {\bf x}^b$ between the different replicas $a \neq b$. This implies a huge dimensionality reduction in the integral \eqref{eq:kRM}, which can be represented in the form:
\begin{align}\label{eq:ActionQ}
\mathbb{E}\mathcal{N}^n(m, \lambda, q)=\int \prod_{a < b} dQ_{ab} \,e^{N n \tilde{\Sigma} \tonde{\lambda,m,q,Q_{ab}}+ o(Nn)}
\end{align}
for some function $\tilde{\Sigma} \tonde{\lambda,m,q,Q_{ab}}$. The leading order exponential behavior of this quantity can be determined via a saddle-point on the variables $Q_{ab}$. We solve the resulting problem within the RS (Replica Symmetric) ansatz, which corresponds to setting
\begin{equation}
    Q_{ab}= q \delta_{ab}+ \tilde Q (1-\delta_{ab}), \quad \quad \tilde{\Sigma} \tonde{\lambda,m,q,Q^{ab}} \to  \tilde{\Sigma}(\lambda,m,q,\tilde{Q}).
\end{equation}
 As a consequence of the saddle-point calculation, the quenched complexity within the RS framework is given in terms of the variational problem \eqref{eq.compvar}.
 In the following subsections, we determine the form of the function $\tilde{\Sigma}(\lambda,m,q,\tilde{Q})$, by studying separately each of the three terms appearing in the integrand in \eqref{eq:kRM}.

\subsection{Joint probability that all replicas are equilibria}
Under the constraints imposed on the ${\bf x}^a$, the probability that all the configurations ${\bf x}^a$ are equilibria with prescribed values of $m, q$ and $\lambda$ reads:
\begin{align}
\label{app:proba_term_initial}
\begin{split}
\mathbb{P}:=\mathbb{E}\left[\prod_a\delta({\bf f}({\bf x}^a)-\lambda{\bf x}^a)\right]&=\frac{1}{\sqrt{(2\pi)^{Nn}\det\hat{C}}}\int \prod_a d{\bf f}^a\delta(-\lambda{\bf x}^a+{\bf f}^a) e^{-\frac{1}{2}\sum_{a,b}({\bf f}^a-Jm\mathbf{1})[\hat{C}^{-1}]^{ab}({\bf f}^b-Jm\mathbf{1})}\\
    &=\frac{1}{\sqrt{(2\pi)^{Nn}\det\hat{C}}}e^{-\frac{1}{2} \sum_{a,b}\tonde{\lambda {\bf x}^a - J m {\bm 1}} [\hat C^{-1}]^{ab}\tonde{\lambda {\bf x}^b - J m {\bm 1}}},
\end{split}
\end{align}
where we have used the fact that the field ${\bf f}({\bf x})$ is Gaussian with the statistics \eqref{app:eq:def_Covariance}, and where we defined the covariance matrix
\begin{align}\label{eq:CovMat}
    \hat{C}^{ab}_{ij}=\text{Cov}[f_i({\bf x}^a),f_j({\bf x}^b)]= \delta_{ij} \Phi_1\tonde{\frac{{\bf x}^a \cdot {\bf x}^b}{N}}+ \frac{x^a_j \, x^b_i}{N} \Phi_2\tonde{\frac{{\bf x}^a \cdot {\bf x}^b}{N}}.
\end{align}
{The joint probability \eqref{app:proba_term_initial} in principle depends on all the configurations ${\bf x}^a$; we now show that, due to the isotropic structure of the covariances of the random fields, the term \eqref{app:proba_term_initial} can actually be written solely as a functions of the overlaps $Q^{ab}= N^{-1} {\bf x}^a \cdot {\bf x}^b$ between the replicas. To show this, we follow the procedure introduced in \cite{ros2019complex}.
The key observation for computing the expression in the exponent of \eqref{app:proba_term_initial} is that there is no need to invert the full $N n \times N n$ covariance matrix $\hat{C}$; instead, it is sufficient to determine its inverse within an appropriately chosen subspace.} To proceed, we decompose the matrix $\hat{C}$ as the sum of a diagonal and non-diagonal part in replica space:
\begin{equation}
    \hat C= \hat D + \hat O, \quad \quad \hat C^{-1}= \hat D^{-1}[\mathbb{I}+ \hat O \hat D^{-1}]^{-1}
\end{equation}
where $\mathbb{I}$ is the $N n \times N n$ identity matrix, and
\begin{equation}
\begin{split}
     D^{ab}_{ij}&= \delta_{ab}\quadre{\delta_{ij} \Phi_1(q) + \frac{x_j^a x_i^a}{N} \Phi_2(q)}\\
     O^{ab}_{ij}&=(1-\delta_{ab})\quadre{\delta_{ij} \Phi_1(Q^{ab})+ \frac{x_j^a x_i^b}{N} \Phi_2(Q^{ab})}.
\end{split}
\end{equation}
We exploit the notation \eqref{eq:Notation}, and additionally define:
\begin{align}\label{eq:not2}
    \Phi_{i}(Q^{ab})\equiv \Phi_i^{ab},\quad\quad i\in\{1,2\},
\end{align}
with $\Phi_i^{aa}=\Phi_i^q$. 
By means of the Shermann-Morrison formula, we get:
\begin{equation}
\begin{split}
  &  [\hat D^{-1}]^{ab}_{ij}= \delta_{ab} \quadre{(\Phi_1^q)^{-1} \delta_{ij}- \eta \frac{x_j^a x_i^a}{N} }, \quad \quad \eta:=\frac{\Phi_2^q/\Phi_1^q}{\Phi_1^q+ \Phi_2^q q}.
    \end{split}
\end{equation} 
We now introduce a set of vectors spanning a sub-space (of the $(Nn)$- dimensional space on which the matrix $\hat C$ acts) that is closed under the action of $\hat C$, and which is relevant to reconstruct the products appearing in the exponent of Eq.~\eqref{app:proba_term_initial}. In the following, we implement explicitly the RS assumption on the overlap matrix. We define the following three $Nn$-dimensional vectors:
\begin{equation}
\begin{split}
    &{\bm \xi}_1:= ({\bf x}^1, \cdots, {\bf x}^n),\\
    &{\bm \xi}_2:=({\bm 1}, \cdots, {\bm 1}),\\
    &{\bm \xi}_3:= \tonde{\sum_{b \neq 1} {\bf x}^b, \cdots, \sum_{b \neq n} {\bf x}^b},
    \end{split}
\end{equation}
and determine the action of the matrix $\hat O \hat D^{-1}$ on these vectors. From Eq.\eqref{app:proba_term_initial} it appears that the expression at the exponent can be expressed in terms of the action of $\hat C^{-1}$ on these vectors, as  $
-\frac{1}{2} \tonde{\lambda {\bm \xi}_1 - J m\, {\bm \xi}_2 } \hat C^{-1}\tonde{\lambda {\bm \xi}_1 - J m\, {\bm \xi}_2}$, 
from which it is evident that determining the action of $\hat{C}$ on this set of vectors suffices for our purposes.
The action on $\hat{D}^{-1}$  is given by the following expressions:
\begin{equation}
\begin{split}
    & [\hat D^{-1}{\bm \xi}_1]^a= (\Phi_1^q)^{-1}{\bm \xi}_1^a - \eta q {\bm \xi}_1^a,\\
    &[\hat D^{-1}{\bm \xi}_2]^a=(\Phi_1^q)^{-1}{\bm \xi}_2^a- \eta m{\bm \xi}_1^a,\\
    &[\hat D^{-1}{\bm \xi}_3]^a= (\Phi_1^q)^{-1}{\bm \xi}_3^a- \eta {\bm \xi}_1^a \sum_{b \neq a} Q^{ab}, 
    \end{split}
\end{equation}
while the action of $\hat{O}$ reads:
\begin{equation}
\begin{split}
    & [\hat O{\bm \xi}_1]^a_i=\sum_{b\neq a}\left[\Phi_1^{ab}+\Phi_2^{ab}Q^{ab}\right]x_i^b, \\
    &[\hat O{\bm \xi}_2]^a_i=\sum_{b\neq a}\left[\Phi_1^{ab}+m\Phi_2^{ab}x_i^b\right],\\
    &[\hat O{\bm \xi}_3]^a_i=\sum_{b\neq a}\left[\Phi_1^{ab}\sum_{c\neq a}x_i^c+x_i^b\left(\Phi_2^{ab}\sum_{c\neq a}Q^{ac}+q\Phi_2^{ab}-\Phi_1^{ab}-Q^{ab}\Phi_2^{ab}\right)+\Phi_1^{ab}x_i^a\right].
    \end{split}
\end{equation}
Now, within the RS ansatz assumption, the previous expressions simplify, taking the form:
\begin{equation}
\label{app:action_B_xi}
\begin{split}
    &\hat D^{-1}{\bm \xi}_1= [(\Phi_1^q)^{-1}- \eta q]{\bm \xi}_1,\\
    &\hat D^{-1}{\bm \xi}_2=(\Phi_1^q)^{-1}{\bm \xi}_2- \eta m{\bm \xi}_1,\\
    &\hat D^{-1}{\bm \xi}_3= (\Phi_1^q)^{-1}{\bm \xi}_3- \eta(n-1)\tilde{Q}{\bm \xi}_1,
    \end{split}
\end{equation}
\begin{equation}
\begin{split}
     \hat O{\bm \xi}_1&=\left[\tilde{\Phi}_1+\tilde{\Phi}_2\tilde{Q}  \right]{\bm\xi}_3, \\
    \hat O{\bm \xi}_2&=(n-1)\tilde{\Phi}_1{\bm\xi}_2+m\tilde{\Phi}_2{\bm\xi}_3,\\
    \hat O{\bm \xi}_3&=(n-1)\tilde{\Phi}_1{\bm\xi}_3+\left(\tilde{\Phi}_2\tilde{Q}(n-1)+q\tilde{\Phi}_2-\tilde{\Phi}_1-\tilde{Q}\tilde{\Phi}_2\right){\bm\xi}_3+(n-1)\tilde{\Phi}_1{\bm\xi}_1\\
    &=\left[(n-2)\tilde{\Phi}_1+\tilde{\Phi}_2\left(\tilde{Q}(n-2)+q\right)\right]{\bm\xi}_3+(n-1)\tilde{\Phi}_1{\bm\xi}_1.
    \end{split}
\end{equation}
Combining these formulas, we get:
\begin{align}
\label{app:action_xi_OD}
\begin{split}
\hat{O}\hat{D}^{-1}{\bm\xi}_1&=[(\Phi_1^q)^{-1}-\eta q][\tilde{\Phi}_1+\tilde{\Phi}_2\tilde{Q}]{\bm\xi}_3\\
\hat{O}\hat{D}^{-1}{\bm\xi}_2&=(\Phi_1^q)^{-1}\tilde{\Phi}_1(n-1){\bm\xi}_2+\left[(\Phi_1^q)^{-1}m\tilde{\Phi}_2
-\eta m[\tilde{\Phi}_1+\tilde{\Phi}_2\tilde{Q}]\right]{\bm\xi}_3\\
\hat{O}\hat{D}^{-1}{\bm\xi}_3&=(\Phi^q_1)^{-1}(n-1)\tilde{\Phi}_1{\bm\xi}_1 +\\&\Bigg\{-\eta(n-1)\tilde{Q}\left[\tilde{\Phi}_1+\tilde{\Phi}_2\tilde{Q}\right]+(\Phi^q_1)^{-1}\left[(n-2)\tilde{\Phi}_1+\left(\tilde{Q}(n-2)+q\right)\tilde{\Phi}_2\right]\Bigg\}{\bm\xi}_3.
\end{split}
\end{align}
In order to invert the matrix $\mathbb{I}+ \hat O \hat D^{-1}$, it is convenient to express it in a basis of the subspace spanned by ${\bm\xi}_1,{\bm\xi}_2,{\bm\xi}_3$, that is made of orthogonal vectors. We thus consider the following orthogonal basis vectors:
\begin{equation}
\label{app:def_v}
    \begin{split}
        {\bf v}_1&= \alpha_1{\bm \xi}_1,\\
       {\bf v}_2&= \alpha_2\tonde{{\bm  \xi}_2- \frac{m}{q} {\bm \xi}_1},\\
       {\bf v}_3&= {\bm \xi}_3 - \alpha_3{\bm \xi}_1 - \alpha_4 {\bm \xi}_2,
    \end{split}
\end{equation}
with
\begin{align}
    \begin{split}
        &\alpha_1=\frac{1}{\sqrt{N  n q}},\\
        &\alpha_2=\frac{\sqrt{q}}{\sqrt{N n (q-m^2)}}, \\
        &\alpha_3=\frac{(n-1) (\tilde{Q}-m^2 )}{q-m^2 },\\
        &\alpha_4=\frac{m (n-1) (q - \tilde{Q})}{q-m^2}.
    \end{split}
\end{align}
Notice that we did not normalize the last vector  ${\bf v}_3$, as the final expression we aim to compute depends only on the components of the matrix along the vectors  ${\bf v}_1$ and ${\bf v}_2$, making the normalization of ${\bf v}_3$ unnecessary. We denote by $\hat{M}^{{\bm\xi}}$ the $3 \times 3$ matrix expressing the action of  $\hat{M}:=\mathbb{I}+\hat{O}\hat{D}^{-1}$ on the subspace spanned by the vectors ${\bm\xi}$, such that $\hat{M} {\bm \xi}_k=\sum_{l=1}^3[\hat{M}^{{\bm\xi}}]_{lk}{\bm \xi}_l$ for $k=1, 2,3$. The components of this matrix can be easily read from Eq.\eqref{app:action_xi_OD}. We also denote by $P$ the $3 \times 3$ matrix encoding for the change of basis from the vectors ${\bm\xi}$ to ${\bf v}$, meaning that ${\bf v}_k=\sum_{l=1}^3 P_{lk} \, {\bm \xi}_l$ $k=1, 2,3$. We have:
\[
P=\begin{pmatrix}
\frac{1}{\sqrt{n \, N \, q}} & -\frac{m}{ \sqrt{n \, N \,q \left(q-m^2\right)}} & -\frac{\left(n-1\right) \, \left(m^2 - \tilde{Q}\right)}{m^2 - q} \\
0 & \frac{\sqrt{q}}{\sqrt{n \, N \, \left(q-m^2\right)}} & \frac{m \, \left(n-1\right) \, \left(q - \tilde{Q}\right)}{m^2 - q} \\
0 & 0 & 1
\end{pmatrix}
\]
which gives us an inverse:
\[
P^{-1}=\begin{pmatrix}
\sqrt{n  N  q} & m\sqrt{\frac{nN}{q}} & (n-1)\tilde{Q}\sqrt{\frac{nN}{q}} \\
0 & \sqrt{\frac{n  N  (q-m^2)}{q}} & m(n-1)(q-\tilde{Q})\sqrt{\frac{nN}{q(q-m^2)}} \\
0 & 0 & 1
\end{pmatrix},
\]
such that ${\bm \xi}_k=\sum_{l=1}^3 P^{-1}_{lk} \, {\bf v}_l$. Then the action of $\hat M$ on the subspace spanned by the vectors ${\bf v}$ is obtained as:
\begin{align}
    \hat{M}^{{\bf v}}=P^{-1}\hat{M}^{{\bm\xi}}P,
\end{align}
meaning that $\hat M {\bf v}_k= \sum_{l=1}^3 [\hat{M}^{{\bf v}}]_{lk} {\bf v_l}$ for $k=1, 2,3$.
Therefore
\begin{align}
    [\hat{M}^{{\bf v}}]^{-1}=P^{-1}[\hat{M}^{{\bm\xi}}]^{-1}P,
\end{align}
and by plugging the expressions for $\hat{M}^{\bm\xi}$ and $P$, we get:
\begin{align}\label{eq:Ys}
    \begin{split}
    Y_{11}:&=[\hat{M}^{{\bf v}}]^{-1}_{11}\Big|_{n=0}=
    \frac{(\Phi_1^q + q \Phi_2^q) (q \Phi_1^q - 
   2 q \tilde{\Phi}_1 + 
   \tilde{Q} \tilde{\Phi}_1 + (q - \tilde{Q})^2 \tilde{\Phi}_2)}{q\mathcal{A}}\\
   Y_{12}:&=[\hat{M}^{{\bf v}}]^{-1}_{12}\Big|_{n=0}=\frac{m (q - \tilde{Q}) ( \Phi_1^q + q \Phi_2^q) (\tilde{\Phi}_1 + 
   \tilde{Q} \tilde{\Phi}_2)}{q\mathcal{A}\sqrt{q-m^2}}\\
   Y_{21}:&=[\hat{M}^{{\bf v}}]^{-1}_{21}\Big|_{n=0}=\frac{m (q - \tilde{Q})\Phi_1^q (\tilde{\Phi}_1 + 
   \tilde{Q} \tilde{\Phi}_2)}{q\mathcal{A}\sqrt{q-m^2}}\\
   Y_{22}:&=[\hat{M}^{{\bf v}}]^{-1}_{22}\Big|_{n=0}=
   \frac{q\,\Phi_1^q}{(\Phi_1^q - \tilde{\Phi}_1)(q-m^2)}
   -\frac{\Phi_1^q\,m^2 (q (\Phi_1^q + q \Phi_2^q) - 
   \tilde{Q}(\tilde{\Phi}_1 + \tilde{Q} \tilde{\Phi}_2))}{q\mathcal{A}(q-m^2)}
    \end{split}
\end{align}
where
\begin{align}
\begin{split}
    \mathcal{A}=(\Phi_1^q)^2 + (\tilde{\Phi}_1)^2 - 2 q \tilde{\Phi}_1 \Phi_2^q + (q - \tilde{Q})^2 \Phi_2^q \tilde{\Phi}_2 + \tilde{Q} \tilde{\Phi}_1 (\Phi_2^q + \tilde{\Phi}_2) + \Phi_1^q (-2 \tilde{\Phi}_1 - 2 \tilde{Q} \tilde{\Phi}_2 + q (\Phi_2^q + \tilde{\Phi}_2)).
\end{split}
\end{align}
In these expressions, we have already taken the limit $n \to 0$ since we are interested in the linear term (in $n$) of the expansion of the exponent in \eqref{app:proba_term_initial}. As we shall show below, a multiplicative factor of order $n$ will be provided by the  normalization of ${\bf v}_1$, allowing us to set $n=0$ in the matrix elements. Now, recall that $\hat{C}^{-1}=\hat{D}^{-1}\hat{M}$. We have the action of $\hat{M}$ on the ${\bf v}$ basis, and so it remains to find the action of $\hat{D}^{-1}$. This is easily done from Eq.~\eqref{app:action_B_xi} and \eqref{app:def_v}. Using that  $\hat{D}$ is symmetric, we find:
\begin{align}
    \begin{split}
        &{\bf v}_1^\top\hat{D}^{-1}=[(\Phi_1^q)^{-1}- \eta q]{\bf v}_1^\top\\
        &{\bf v}_2^\top\hat{D}^{-1}=(\Phi_1^q)^{-1}{\bf v}_2^\top.
    \end{split}
\end{align}
The terms needed to compute the exponent in \eqref{app:proba_term_initial} are
\begin{align}
\begin{split}
U_{11}:= \lim_{n \to 0}\frac{{\bm\xi}_1^\top\hat{C}^{-1}{\bm\xi}_1}{N n}&=\lim_{n \to 0} \frac{1}{\alpha_1^2}\frac{{\bf v}_1^\top\hat{D}^{-1}\hat{M}^{-1}{\bf v}_1}{N n }= q[(\Phi_1^q)^{-1}-\eta q]Y_{11}\\
\end{split}
\end{align}
and 
\begin{align}
    \begin{split}
      U_{22}:&=\lim_{n \to 0} \frac{{\bm\xi}_2^\top\hat{C}^{-1}{\bm\xi}_2}{N n}=\lim_{n \to 0}  \frac{1}{N n}\left(\frac{{\bf v}_2}{\alpha_2}+\frac{m}{q}\frac{{\bf v}_1}{\alpha_1}\right)^\top\hat{C}^{-1}\left(\frac{{\bf v}_2}{\alpha_2}+\frac{m}{q}\frac{{\bf v}_1}{\alpha_1}\right)\\
        &=\lim_{n \to 0} \frac{1}{Nn} \quadre{ \frac{1}{\alpha_2^2}{\bf v}_2\hat{C}^{-1}{\bf v}_2+\frac{m}{q}\frac{1}{\alpha_1\alpha_2}\tonde{{\bf v}_2\hat{C}^{-1}{\bf v}_1+{\bf v}_1\hat{C}^{-1}{\bf v}_2}+\frac{m^2}{q^2}\frac{1}{\alpha_1^2}{\bf v}_1\hat{C}^{-1}{\bf v}_1}\\
        &= 
       \frac{q-m^2}{q}(\Phi_1^q)^{-1}Y_{22}+\frac{m \sqrt{q-m^2}}{q}\left((\Phi_1^q)^{-1}Y_{21}+[(\Phi_1^q)^{-1}- \eta q]Y_{12}\right)+\frac{m^2}{q}[(\Phi_1^q)^{-1}- \eta q]Y_{11}
    \end{split}
\end{align}
and 
\begin{align}
\begin{split}
U_{12}:&=\lim_{n \to 0}\frac{1}{Nn} \tonde{{\bm\xi}_2^\top\hat{C}^{-1}{\bm\xi}_1+{\bm\xi}_1^\top\hat{C}^{-1}{\bm\xi}_2}=\lim_{n \to 0}\frac{1}{Nn} \quadre{\frac{1}{\alpha_1}\left(\frac{{\bf v}_2}{\alpha_2}+\frac{m}{q}\frac{{\bf v}_1}{\alpha_1} \right)\hat{C}^{-1}{\bf v}_1+\frac{1}{\alpha_1}{\bf v}_1\hat{C}^{-1}\left(\frac{{\bf v}_2}{\alpha_2}+\frac{m}{q}\frac{{\bf v}_1}{\alpha_1} \right)}\\
&=\lim_{n \to 0}\frac{1}{Nn}\frac{1}{\alpha_1\alpha_2}\left[{\bf v}_2\hat{C}^{-1}{\bf v}_1+{\bf v}_1\hat{C}^{-1}{\bf v}_2\right]+\lim_{n \to 0}\frac{1}{Nn}\frac{2}{\alpha_1^2}\frac{m}{q}{\bf v}_1\hat{C}^{-1}{\bf v}_1\\
&=\sqrt{q-m^2}\left[(\Phi_1^q)^{-1}Y_{21}+ [(\Phi_1^q)^{-1}- \eta q]Y_{12}\right]+2m [(\Phi_1^q)^{-1}- \eta q]Y_{11}.
\end{split}
\end{align}
Substituting the expressions \eqref{eq:Ys} into these formulas, we finally get:
\begin{align}\label{eq:Ufin}
\begin{split}
U_{11}&=\frac{q \Phi_1^q - 2 q \tilde{\Phi}_1 + 
 \tilde{Q} \tilde{\Phi}_1 + (q - \tilde{Q})^2 \tilde{\Phi}_2}{\mathcal{A}}\\
 U_{22}&=\frac{1}{\Phi_1^q - \tilde{\Phi}_1}-\frac{m^2(\Phi_2^q-\tilde{\Phi}_2)}{\mathcal{A}}\\
U_{12}&=2m\frac{\tilde{\Phi}_2(q-\tilde{Q})+\Phi_1^q-\tilde{\Phi}_1}{\mathcal{A}}.
\end{split}
\end{align}

To complete the calculation of the joint probability \eqref{app:proba_term_initial}, it remains to determine the determinant of the matrix
\[
\hat{C}^{ab}_{ij}=\delta_{ij}\Phi_1(Q^{ab})+\frac{x_j^ax_i^b}{N}\Phi_2(Q^{ab}).
\]
It is useful to decompose $\mathbb{R}^N$ into a subspace $S=\text{Span}\grafe{{\bf x}^1/\sqrt{N},\ldots,{\bf x}^n/\sqrt{N},{\bf 1}/\sqrt{N}}$ with ${\bf 1}=(1, \cdots, 1)^T$, and its orthogonal complement $S^\perp$. By doing so, the matrix $\hat{C}$ in this basis, and within the RS ansatz, reads:
\begin{align}
    \hat{C}^{ab} =(\Phi_1^q\delta_{ab}+(1-\delta_{ab})\tilde{\Phi}_1)\mathbb{I}+\frac{[{\bf x}^b]^\top}{\sqrt{N}}\frac{{\bf x}^a}{\sqrt{N}}\Phi_2^{ab}
\end{align}
Therefore, within the RS ansatz, by the matrix determinant lemma, to leading exponential order in $N$ the determinant reads:
\begin{align*}
    \det\hat{C}&=\quadre{\det\begin{pmatrix}
        \Phi_1^q & \ldots & \tilde{\Phi}_1\\
        \vdots & \ddots & \vdots\\
        \tilde{\Phi}_1 & \ldots &\Phi_1^q
    \end{pmatrix}}^N o(e^N)=\left[(\Phi_1^q-\tilde{\Phi}_1)^{n-1}(\Phi_1^q+(n-1)\tilde{\Phi}_1) \right]^N o(e^N) \\&=e^{\grafe{nN\left[\text{log}(\Phi_1^q-\tilde{\Phi}_1)+\frac{\tilde{\Phi}_1}{\Phi_1^q-\tilde{\Phi}_1}\right]+ o(N n)}}.\\
\end{align*}

\paragraph*{Asymptotic behavior of the joint probability: quenched case}
Combining all the pieces derived above, the contribution of the joint probability to the quenched complexity reads:
\begin{align}
\label{app:P_quench}
\begin{split}
\mathcal{P}:=\lim_{n\to0,N\to\infty}\frac{\log\mathbb{P}}{nN}=-\frac{1}{2}\left\{\log(2\pi)+\log(\Phi_1^q-\tilde{\Phi}_1)+\frac{\tilde{\Phi}_1}{\Phi_1^q-\tilde{\Phi}_1}+\lambda^2U_{11}-\lambda Jm U_{12}+J^2m^2U_{22}\right\},
    \end{split}
\end{align}
where $U_{11}, U_{22}$ and $U_{12}$ are given in \eqref{eq:Ufin}. As we remarked, this contribution depends on the configurations ${\bf x}^a$ only through the overlaps $q$ and $\tilde Q$. \\

\paragraph*{Asymptotic behavior of the joint probability: annealed case}
To get the  annealed case we set $n=1$ and $\tilde{Q}=0$. This gives 
\begin{align}
\det\hat{C}=e^{N\log\Phi_1^q+ o(N)}
\end{align}
and 
\begin{align}
\begin{split}
U_{11}^{A}=\frac{q}{\Phi_1^q + q \Phi_2^q}, \quad \quad
U_{12}^A =\frac{2m}{\Phi_1^q + q \Phi_2^q}, \quad \quad
U_{22}^A=\frac{\Phi_1^q + (q-m^2) \Phi_2^q}{\Phi_1^q (\Phi_1^q + 
   q \Phi_2^q)}
\end{split}
\end{align}
which finally implies
\begin{align}
\label{app:P_ann}\mathcal{P}_A:=\lim_{N\to\infty}\frac{\log\mathbb{P}|_{n=1,\tilde{Q}=0}}{N}=-\frac{1}{2}\left[\log(2\pi)+\log(\Phi_1^q)+\lambda^2U_{11}^A-\lambda J m U_{12}^A+J^2m^2U_{22}^A \right].
\end{align}

\subsection{The conditional expectation of the Jacobians}
\label{app:determinant_calc}
Consider now the term:
\begin{align}\label{eq.det}
   \mathbb{D}:= \mathbb{E}\left[\prod_{a=1}^n|\det\mathcal{H}({\bf x}^a)|\Bigg| \substack{{\bf f}^b=\lambda{\bf x}^b \\ b=1, \cdot, n} \right]=\mathbb{E}\left[ e^{\sum_{a=1}^n\text{Tr} \log |\mathcal{H}({\bf x}^a)|}\Bigg| \substack{{\bf f}^b=\lambda{\bf x}^b \\ b=1, \cdot, n} \right].
\end{align}
We define $\mathcal{H}^a:=\mathcal{H}({\bf x}^a)$  and  $G^a_{ij}:=\partial_jf_i({\bf x}^a)$
where $a$ is the replica index, and $\partial_j$ denotes the derivative with respect to $x_j^a$. Computing the expectation $  \mathbb{D}$ requires a priori to determine the joint distribution of the matrices $G^a$ for $a=1, \cdots, n$ conditioned to the forces ${\bf f}^b:={\bf f}({\bf x}^b)$ taking values $\lambda {\bf x}^b$. In fact, the second equality in \eqref{eq.det} shows that, since the trace can be expressed as a sum over eigenvalues, the quantity we are interested in depends on the random matrices $G^a$ only through their eigenvalue distribution: denoting with $\tilde \rho^a_N(z)$ the eigenvalue distribution of the \emph{conditioned} matrix $G^a$, it holds:
\begin{align}\label{eq.det2}
   \mathbb{D}= \mathbb{E}\left[ e^{N \sum_{a=1}^n \int d\tilde \rho_N^a(z) \log |z-\lambda|} \right].
\end{align}
We now argue that the problem simplifies drastically if one is interested only in the leading order behavior (in $N$) of this quantity.\\

\paragraph*{Statistics of the Jacobian matrices prior to conditioning. } We begin by discussing the statistics of the entries of the random Gaussian matrices $G^a$. It holds
\begin{align}\label{eq:mean}
    \mu_{G}^{aij}:=\mathbb{E}\left[ G_{ij}^a\right] = \frac{J}{N}
\end{align}
and (making use of he compact notation \eqref{eq:not2}):
\begin{align}
\label{app:eq:sigma_G}
\begin{split}
    \Sigma_G^{aij,bkl}=\text{Cov}\left[ G_{ij}^a,G_{kl}^b\right]&=\left(\delta_{ik}\delta_{jl}\frac{[\Phi_1']^{ab}}{N}+\delta_{il}\delta_{jk}\frac{\Phi_2^{ab}}{N} \right)+\delta_{ik}[\Phi_1'']^{ab}\frac{x_l^ax_j^b}{N^2}+\delta_{jl}[\Phi'_2]^{ab}\frac{x_i^b x_k^a}{N^2}\\
    &+\delta_{jk}[\Phi_2']^{ab}\frac{x_l^a x_i^b}{N^2}+\delta_{il}[\Phi'_2]^{ab}\frac{x_k^a x_j^b}{N^2}+[\Phi_2'']^{ab}\frac{x_l^a x_k^a x_i^b x_j^b}{N^3}.
\end{split}
\end{align}
To analyze the expression for the covariances, it is convenient to perform a change of the basis in which the matrices are expressed. We consider the same decomposition of $\mathbb{R}^N$ into the $(n+1)$-dimensional subspace $S$ and its $(N-n-1)$-dimensional orthogonal complement $S^\perp$ that we have made use of in the previous subsection. Consider a set of orthonormal basis vectors ${\bf e}_\alpha$ with $\alpha=1, \cdots, N-n-1$ spanning the subspace $S^\perp$: these vectors are orthogonal to ${\bf x}^a$ and to ${\bf 1}$. Let $G^a_{\alpha \beta}={\bf e}_\alpha \cdot  G^a \cdot  {\bf e}_\beta$ be the components of the matrix $G$ in this basis. From \eqref{eq:mean} and the fact that ${\bf e}_\alpha \perp {\bf 1}$ (for $\alpha=1, \cdots, N-n-1$) it follows that these components have zero average, while \eqref{app:eq:sigma_G} shows that they have covariances 
\begin{align}
\label{app:eq:sigma_G_rotate}
\begin{split}
    \Sigma_G^{a\alpha \beta,b \gamma \delta}=\text{Cov}\left[ G_{\alpha \beta}^a,G_{\gamma \delta}^b\right]&=\left(\delta_{\alpha \gamma}\delta_{\beta \delta}\frac{[\Phi_1']^{ab}}{N}+\delta_{\alpha \delta}\delta_{\beta \gamma}\frac{\Phi_2^{ab}}{N} \right) \quad \quad \alpha, \beta, \gamma, \delta \leq N-n-1.
\end{split}
\end{align}
These covariances between the components in $S^\perp$ thus depend on the configurations ${\bf x}^a$ only through the overlap $Q^{ab}$ between them. Moreover, in this subspace the statistics is isotropic: it is invariant with respect to changes of the basis spanning the subspace. This invariance will be crucial for the subsequent calculation. \\
The covariances of the components with respect to the basis vectors in $S$, as well as the covariances between mixed components, have a more complicated expression which depends explicitly on the choice of the basis vectors in $S$, and which therefore is not basis invariant. These covariances can be computed explicitly for specific choices of the basis vectors, see \cite{ros2019complex, ros2019complexity} for similar examples. Since we are however interested only to the leading order contribution (in $N$) of the expectation value, we can neglect computing such covariances explicitly. Indeed, in the subspace  decomposition of $\mathbb{R}^N$ that we have chosen the matrices $G^a$ have a block structure, 
\begin{equation}\label{eq:Block}
    G^a= 
\begin{pNiceArray}{ccc|c}
  \Block{3-3}<\Large>{E^a}  & && \\
  & & & b^a\\
  & & &\\
    \hline
&[b^a]^T & & c^a
\end{pNiceArray},
\end{equation}
where $E^a$ is a block of dimension $(N-n-1) \times (N-n-1)$, corresponding to the subspace $S^\perp$, where the entries have statistics \eqref{app:eq:sigma_G_rotate}, while $c^a$ and $b^a$ are $(n+1) \times (n+1)$ and $(N-n-1) \times (n+1)$ dimensional blocks with entries with covariances that we have not determined explicitly. To leading order in $N$, the determinant of the matrix $G^a$ equals to the determinant of the block $E^a$, which has dimension scaling with $N$: the continuous part of the eigenvalue distribution of $G^a$ is in fact determined solely by this block in the limit $N \to \infty$. The remaining components have a different statistics, that can be expressed in terms of finite-rank perturbations (both additive and multiplicative) to a $N \times N$ matrix with entries with the same statistics as $E^a$, namely, invariant with respect to changes of basis. These finite-rank perturbations can contribute to the eigenvalues distribution with sub-leading terms in $1/N$, corresponding to isolated eigenvalues (aka, outliers). Since these isolated eigenvalues, being sub-leading, do not contribute to the complexity, we do not perform their calculation in this work. Notice that such outliers must be tracked when discussing the stability of the equilibria counted by the complexity. Indeed, there may be cases where the isolated eigenvalues are the only ones with negative real parts, leading to a dynamical instability of the equilibrium that would be overlooked if these eigenvalues were ignored. However, as we discuss extensively below and in the main text, in the models we consider, most equilibria are already linearly unstable. Therefore, for now, we neglect the calculation of the Jacobian's outliers. \\

\paragraph*{Statistics of the Jacobian matrices after to conditioning. }
We now discuss how the statistics of the matrices $G^a$ is affected by conditioning to ${\bf f}({\bf x}^b)= \lambda {\bf x}^b$ for all $b$. We use the shorthand notation $f^a_i := f_i({\bf x}^a)$. 
It holds:
\begin{align}\label{eq:mixed1}
   \mu_f^{ai}:= \mathbb{E}[f^a_i]=J \,m,\quad\quad \Sigma_{ff}^{ai,bj}:=\text{Cov}[f^a_i,f^b_j]=\delta_{ij}\Phi_1^{ab}+\frac{x_j^ax_i^b}{N}\Phi_2^{ab}.
\end{align}
and
\begin{align}\label{eq:mixed2}
    \text{Cov}\left[f_i^a,G^b_{kl} \right]=\delta_{ik}[\Phi'_1]^{ab}\frac{x^a_l}{N}+[\Phi_2']^{ab}\frac{x_l^ax_j^ax_i^b}{N^2}+\delta_{il}\Phi_2^{ab}\frac{x_j^a}{N}
\end{align}
The conditional law of $G$ can be determined with the standard procedure for Gaussian conditioning. However, Eqs.~\eqref{eq:mixed1} and \eqref{eq:mixed2} show that the only components $G_{\alpha \beta}^a= {\bf e}_\alpha \cdot G \cdot {\bf e}_\beta$ whose statistics is affected by the conditioning are those such that either ${\bf e}_\alpha$ or ${\bf e}_\beta$ belong to the subspace $S$. These are precisely the components whose statistics we are neglecting, since it gives a sub-leading contribution to the eigenvalue density. Again, these sub-leading contributions can contribute to eventual outliers: if one is interested in such outliers, the effect of the conditioning has to be worked out explicitly.\\

\paragraph*{Large-$N$ factorization and concentration. } The final simplifying ingredient to proceed with the calculation consists in the observation that the correlations between the Jacobian matrices evaluated at different ${\bf x}^a$, which are non-zero according to \eqref{app:eq:sigma_G}, are not relevant when computing the expectation value \eqref{eq.det2} to leading exponential order in $N$. In fact, to leading order in $N$ the expectation value factorizes,
\begin{equation}
         \mathbb{D}= \mathbb{E}\left[ e^{N \sum_{a=1}^n \int d\tilde \rho_N^a(z) \log |z-\lambda|} \right]= \prod_{a=1}^n\mathbb{E}\left[ e^{N  \int d\tilde \rho_N^a(z) \log |z-\lambda|} \right]o(e^N).
\end{equation}
An argument for such factorization can be found in \cite{ros2019complex}; it relies on the fact that the eigenvalues distribution of the Gaussian matrices $G^a$ has a large-deviation law with speed higher than $N$. For a rigorous proof of this factorization in the case of symmetric matrices and $n=2$, see \cite{subag2017complexity}.   \\

\paragraph*{The elliptic ensemble determinant. } Combining all the arguments reported above, we conclude that: (a) the joint expectation value of the conditioned Jacobian is, to leading order in $N$, determined only in terms of the single-matrix eigenvalue distributions $\tilde\rho_N^a(x)$: the correlations between the different Jacobians do not enter in the calculation; (b) the matrices $G^a$, both prior and after conditioning to the forces, have the block structure \eqref{eq:Block}; (c) to leading order in $N$, the eigenvalue distribution of matrices of this form coincides with the eigenvalue density of the block $E^a$, which has extensive size in $N$; (d) the block $E^a$ has Gaussian components with the statistics \eqref{app:eq:sigma_G_rotate}: this block is therefore a matrix belonging to the real elliptic ensemble \cite{girko1986elliptic, sommers1988spectrum, nguyen2015elliptic}. Hence, to compute the contribution of $\mathbb{D}$ to the complexity, it suffices to exploit results on the asymptotic eigenvalue density of matrices belonging to the real elliptic ensemble. This eigenvalue density is well known: it is uniform, with a support with elliptic shape in the complex plane. Consider the rescaled matrices $\tilde G^a= G^a/ \sqrt{\Phi_1'(q)}$ having covariances $\mathbb{E}\left[\tilde G^a_{ij}\tilde G^a_{kl}\right]=N^{-1}\left(\delta_{ik}\delta_{jl}+\alpha_q\delta_{il}\delta_{jk}\right)$, where we recall that $ \alpha_q= {\Phi_2^q}/{\dot \Phi_1^q}$. The eigenvalue density of the matrices $\tilde G^a$ is equal for each $a$. It is supported on a domain with elliptic shape in the complex plane $z=x+i y$, 
\begin{align}
\frac{x^2}{(1+\alpha_q)^2}+\frac{y^2}{(1-\alpha_q)^2}\leq 1.
\end{align}
The density is uniform, equal to
\begin{equation}
    \rho_{ \tilde G}(x,y)=\frac{1}{\pi(1-\alpha_q^2)}
\end{equation}
. We therefore find that
\begin{equation}
         \mathbb{D}= [\Phi_1'(q)]^{\frac{N n}{2}}  \tonde{ e^{N  \int dx dy\, \rho_{\tilde G}(x,y) \log |x+i y-\kappa|} }^n o(e^N), \quad \quad \kappa=\frac{\lambda}{\sqrt{\Phi_1'(q)}}.
\end{equation}
Consider now
\begin{equation}
\begin{split}
    I(\kappa)&:= \int dx dy\, \rho_{\tilde G}(x,y) \log |x+i y-\kappa|=\frac{1}{2\pi}\frac{1}{1-\alpha_q^2}\int dx dy \,\log\left[(x-\kappa)^2 + y^2\right]\\
    &= \frac{1}{\pi}\int_{-1}^1 dx\int_0^{\sqrt{1-x^2}} dy\log\left[\left(x(1+\alpha_q)-\kappa
    \right)^2+y^2(1-\alpha_q)^2\right].
    \end{split}
\end{equation}
This integral can be computed explicitly, see Appendix A.3 in \cite{RosEcoQuenched2023}. One finds that for arbitrary real $\kappa$ and $\alpha_q> -1$:
\begin{align}
   I(\kappa)=\begin{cases}
    \frac{1}{2}\left(\frac{\kappa^2}{1+\alpha_q}-1\right)\quad\quad\text{if } |\kappa|\leq 1+\alpha_q\\
    \frac{1}{8\alpha_q}\left(\kappa-\text{sign}(\kappa)\sqrt{\kappa^2-4\alpha_q}\right)^2+\log\left|\frac{\kappa+\text{sign}(\kappa)\sqrt{\kappa^2-4\alpha_q}}{2}\right|\quad\quad\text{if }|\kappa|>1+\alpha_q
    \end{cases},
\end{align}
and therefore we get back Eq.~\eqref{app:det_eqn}:
\begin{align*}
    \Theta:= \lim_{N \to \infty} \lim_{n \to 0} \frac{\log \mathbb{D}}{Nn}= \begin{cases}
    \frac{\log\Phi'_1(q)}{2}+\frac{1}{2}\left(\frac{\kappa^2}{1+\alpha_q}-1\right)\quad\quad\text{if } |\kappa|\leq 1+\alpha_q\\
    \frac{\log\Phi'_1(q)}{2}+\frac{1}{8\alpha_q}\left(\kappa-\text{sign}(\kappa)\sqrt{\kappa^2-4\alpha_q}\right)^2+\log\left|\frac{\kappa+\text{sign}(\kappa)\sqrt{\kappa^2-4\alpha_q}}{2}\right|\quad\quad\text{else}
    \end{cases}
\end{align*}
Notice that one gets the same contribution for both the annealed and quenched case. Whenever $\alpha_q=0$, this reduces to: 
\begin{align}
\Theta|_{\alpha_q=0}=\begin{cases}
    \frac{\log\Phi'_1(q)}{2}+\frac{1}{2}\left(\kappa^2-1\right)\quad\quad\text{if } |\kappa|\leq 1\\
    \frac{\log\Phi'_1(q)}{2}+\log |\kappa|\quad\quad\text{if }|\kappa|>1.\\
    \end{cases}
\end{align}

\paragraph*{A note on the spherical model. } As we have remarked in Chapter~\ref{chapter:non_reciprocal}, the topological complexity of equilibria in the spherical model can be obtained from $\Sigma(\lambda, m,q)$ by setting $q \to 1$ and optimizing over $\lambda$; while in the confined model we have to set $\lambda\to \lambda(q)$. To be more precise, when considering the spherical model we should encode for the spherical constraint in the Jacobian (as in \cite{Fyodorov_2016}), or alternatively project it onto the hypersphere; instead when considering the confined model we should include the derivative of $\lambda$ with respect to ${\bf x}$ in the Jacobian. However, in both cases, we are safe in the $N\to\infty$ limit when computing $\Sigma$.

\subsection{The constrained integral over the replicas}
As illustrated in the sections above, the isotropy of the correlations of the force field implies that the asymptotic behavior of both $\mathbb{P}$ and $\mathbb{D}$ depends on the vectors ${\bf x}^a$ only through the overlap parameters $Q_{ab}$. Therefore, the expression \eqref{eq:kRM} for the moments simplifies to \eqref{eq:ActionQ}. We define the phase space factor:
\begin{align*}
V(Q_{ab})&:=\int \prod_a d{\bf x}^a\Delta({\bf x}^a) \prod_{a\leq b}\delta({\bf x}^a\cdot {\bf x}^b-Q_{ab})\\
&=\int \prod_a d{\bf x}^a \delta\left(\sum_i x^a_i-N m\right)\delta({\bf x}^a\cdot{\bf x}^a-N q) \prod_{a\leq b}\delta({\bf x}^a\cdot {\bf x}^b-Q_{ab}).
\end{align*}
Using the integral representation of the delta distributions we obtain
\begin{align*}
V(Q^{ab})
&=\int \prod_a d{\bf x}^a \int \prod_{a\leq b}\frac{d\lambda_{ab}}{\sqrt{2\pi}}\prod_a \frac{dw_a}{\sqrt{2\pi}}e^{-i\sum_{a\leq b}\lambda_{ab}({\bf x}^a\cdot {\bf x}^b-Q^{ab}N)-i\sum_aw_a(\sum_i x_i-Nm)}\\
&=\int \prod_a d{\bf x}^a \int \prod_{a\leq b}\frac{d\lambda_{ab}}{\sqrt{2\pi}}\prod_a \frac{dw_a}{\sqrt{2\pi}}e^{\frac{N}{2}\text{Tr}(\hat{\Lambda}\hat{Q})+Nm\sum_a\hat{w}_a}\left[\int \prod_a d x^a e^{-\sum_{a\leq b}x^a\lambda_{ab}x^b-\sum_aw_ax^a}  \right]^N,
\end{align*}
where $\Lambda_{ab}=2\lambda_{aa}\delta_{ab}+(1-\delta_{ab})\lambda_{ab}$. We consider the RS ansatz $w_a=\hat{w}$ and 
$\Lambda_{ab}=2\hat{\lambda}_1\delta_{ab}+\hat{\lambda}_0(1-\delta_{ab}).$
Then integral in square brackets then reads 
\begin{align}
    \int d{\bf x}\, e^{-\frac{1}{2}{\bf x}\hat{\Lambda}{\bf x}-\hat{w}{\bf 1}^\top {\bf x}}=(2\pi)^{\frac{n}{2}}(\det \hat{\Lambda})^{-\frac{1}{2}}e^{\frac{1}{2}\hat{w}^2{\bf 1}^\top\hat{\Lambda}^{-1}{\bf 1}},
\end{align}
where now in bold we denote $n$-dimensional vectors.
The inverse of $\hat{\Lambda}$:
\begin{align}
\hat{\Lambda}=
    \begin{pmatrix}
        2\hat{\lambda}_1 & \ldots & \hat{\lambda}_0\\
        \vdots & \ddots & \vdots\\
        \hat{\lambda}_0 & \ldots & 2\hat{\lambda}_1
    \end{pmatrix},
    \quad\quad\quad\quad
    \hat{\Lambda}^{-1}=
    \begin{pmatrix}
        \tilde{h} & \ldots & \tilde{z}\\
        \vdots & \ddots & \vdots\\
        \tilde{z} & \ldots & \tilde{h}
    \end{pmatrix}
\end{align}
has entries
\begin{align*}
\tilde{h}=\frac{2\hat{\lambda}_1+(n-2)\hat{\lambda}_0}{(2\hat{\lambda}_1-\hat{\lambda}_0)(2\hat{\lambda}_1+(n-1)\hat{\lambda}_0)}, \quad \quad
&\tilde{z}=-\frac{\hat{\lambda}_0}{(2\hat{\lambda}_1-\hat{\lambda}_0)(2\hat{\lambda}_1+(n-1)\hat{\lambda}_0)}.
\end{align*}
It follows that 
\begin{align*}
\text{Tr}(\hat{\Lambda}\hat{Q})=n\left[2\hat{\lambda}_1q+(n-1)\tilde{Q}\hat{\lambda}_0\right],\quad
&{\bf 1}\hat{\Lambda}^{-1}{\bf 1}=n\tilde{h}+(n^2-n)\tilde{z}, \quad \log\det\hat\Lambda\approx n\left[\frac{\hat{\lambda}_0}{2\hat{\lambda}_1-\hat{\lambda}_0}+\log(2\hat{\lambda}_1-\hat{\lambda}_0)\right].
\end{align*}
Summing up, we obtain for arbitrary $n$:
\begin{align*}
\frac{\log V}{N}&=\text{extr}_{\hat{\lambda}_0,\hat{\lambda}_1, \hat{w}}\Bigg\{\frac{1}{2}n\left[2\hat{\lambda}_1q+(n-1)\tilde{Q}\hat{\lambda}_0\right]
+n\hat{w}m+\frac{n}{2}\log(2\pi)-\frac{1}{2}n\left[\frac{\hat{\lambda}_0}{2\hat{\lambda}_1-\hat{\lambda}_0}+\log(2\hat{\lambda}_1-\hat{\lambda}_0)\right]\\
&+\frac{1}{2}\hat{w}^2(n\tilde{h}+(n^2-n)\tilde{z})
\Bigg\}.
\end{align*}
The contribution to the annealed complexity is obtained setting $n \to 1$, with $\hat{\lambda}_0,\tilde{Q}\to 0$. It reads:
\begin{align}
\label{app:vol_ann}
    \mathcal{V}_A(m,q):=\frac{1}{2}+\frac{1}{2}\log(2\pi (q-m^2)).
\end{align}
The contribution to the quenched complexity is instead obtained taking $n \to 0$,
\begin{align}
\begin{split}
&\mathcal{V}:=\lim_{N\to\infty,n\to0}\frac{\log V}{nN}\\
&=\frac{1}{2}\text{extr}_{\hat w,\hat\lambda_0,\hat\lambda_1}\Bigg\{
    2 m \hat w - \tilde{Q} \hat\lambda_0 - 
   \frac{\hat w^2}{\hat\lambda_0 - 2 \hat\lambda_1} + \frac{\hat\lambda_0}{\hat\lambda_0 - 
    2\hat\lambda_1} + 2 q \hat\lambda_1 + \log(2 \pi) - 
   \log(2\hat\lambda_1-\hat\lambda_0)
    \Bigg\}.
\end{split}
\end{align}
Optimizing over the parameters results in:
\begin{align}
\label{app:vol_quench}
    \mathcal{V}(m, q , \tilde Q)=
\frac{ q-m^2 + (q - \tilde{Q}) \log(2 \pi) + (q - \tilde{Q}) \log\left(q - \tilde{Q}\right)}{2 (q - \tilde{Q})}.
\end{align}
Notice that quenched and annealed contributions coincide only for $m=0,\tilde{Q}=0$. \\\\
\noindent Finally, we see that we recover all the expressions in Sec.~\ref{sec:rnn_topo_general}.

\newpage
\section{Annealed complexity of a random neural network}
\label{app:scs_ann_compl}
In this section of the Appendix we carry out the computation of the annealed complexity of Chapter~\ref{chapter:scs}. We will be more sloppy with the notation with respect to the main Chapter, in particular we drop all function arguments, to make things cleaner. Let us recall that we have to compute the following quantity:
\begin{align*}
   \mathbb{E}[\mathcal{N}]= \int_{\mathbb{R}^N} d{\bf x} \,
    \Omega({\bf x}) \mathbb{P}({\bf F}({\bf x})=0) \, \mathbb{E} \left[ |\mathrm{det} \,  \partial {\bf F}({\bf x})| \, \Big| {\bf F}({\bf x})=0\right].
\end{align*}
This calculation can be divided in three parts: the phase space term, which is essentially $\int d{\bf x}\,\Omega({\bf x})$, the probability term $\mathbb{P}({\bf F}({\bf x})={\bf 0})$ and the determinant term $\mathbb{E} \left[ |\mathrm{det} \,  \partial {\bf F}({\bf x})| \, \Big| {\bf F}({\bf x})=0\right]$. Let us remark that this separation is only possible because we have identified all the relevant order parameters that make the second two terms ${\bf x}$ independent, and we have fixed such order parameters thanks to $\Omega({\bf x})$, which then is used to compute the "abundance" of vectors ${\bf x}$ that satisfy such constraints (hence the name "phase space term"). Let us start from the probability term.

\subsection{Probability term}
\noindent The components of the vector ${\bf F}$ are Gaussian random variables with mean and variance given by:
\begin{equation}
\begin{split}
&\mu_i\equiv \mathbb{E}[{F_i({\bf x})}] =- x_i + g J_0 M_\phi\\
&\hat{C}_{ij}\equiv \text{Var}[F_i({\bf x}), F_j({\bf x})]=g^2 \left[ \delta_{ij}  Q_\phi + \frac{1}{N}\alpha \, \phi(x_i) \phi(x_j) \right].
\end{split}
\end{equation}

\noindent By the Sherman-Morrison formula, the inverse of the correlation matrix $C$ is 
\begin{equation}
    [\hat{C}^{-1}]_{ij}= \frac{1}{g^2 Q_\phi} \left[\delta_{ij}- \frac{1}{N} \frac{\alpha}{ Q_\phi(1+ \alpha)} \phi(x_i)\, \phi(x_j) \right]
\end{equation}
The second term is a rank-1 projection, hence we can neglect it to leading order in $N$: 
\begin{equation}
  \lim_{N\to\infty}\frac{\log \mathrm{det} \,\hat{C}}{N}= \log (g^2 Q_\phi).
\end{equation}
Thus
\begin{equation}
   \mathbb{P}({\bf F}({\bf x})=0)=\frac{1}{(2\pi)^{N/2}[\det \hat{C}]^{1/2}}e^{-\frac{1}{2}{\bm \mu}^\top \hat{C}\,{\bm\mu}}
\end{equation}
where, in terms of the order parameters:
\begin{equation}
\begin{split}
    \sum_{ij} \mu_i [\hat{C}^{-1}]_{ij} \mu_j= \frac{N}{g^2 Q_\phi} \left( q-2 g J_0 M_\phi \, m+ g^2 J_0^2 M_\phi^2 \right)\\
    -\frac{N \alpha}{g^2 Q_\phi^2 (1+\alpha)}\left( Z^2 - 2 g J_0 \, M^2_\phi Z + g^2 J_0^2 M_\phi^4 \right)
    \end{split}
\end{equation}
Hence, at the exponential scale, the probability term is controlled by the following function, see Eq.~\eqref{eq:scs_p_alpha}, of the order parameters:
\begin{align*}
    p_\alpha(M_\phi, Q_\phi, Z):=\lim_{N\to\infty} \frac{1}{N}\log\mathbb{P}=&-\frac{1}{2} \log (2 \pi g^2 Q_\phi)-\frac{1}{2}\frac{1}{g^2 Q_\phi} \left( q-2 g J_0 M_\phi \, m+ g^2 J_0^2 M_\phi^2 \right)\\
    &+\frac{1}{2}\frac{\alpha}{g^2 Q_\phi^2 (1+\alpha)}\left( Z^2 - 2 g J_0 \, M^2_\phi Z + g^2 J_0^2 M_\phi^4 \right)
\end{align*}

\subsection{Phase space term}
Let us denote $V:=\int_{\mathbb{R}^N} d{\bf x}\,\Omega({\bf x})$, where we recall that $\Omega({\bf x})$ encodes all the different constraints, see Eq.~\eqref{eq:scs_constraints}. We will use the Fourier representation of the deltas as is usually done for these computations:
\begin{align}
    \delta(g(x))=\frac{1}{2\pi}\int d\hat{x}\, e^{i\hat{x}g(x)}.
\end{align}
Hence we get (by neglecting factors that do not contribute exponentially in $N$):
\begin{align*}
    V&\propto \int d\hat{\lambda} \,d\hat{\omega} \,d\hat{\eta}\, d\hat{\xi}\, d\hat{t}\,d\hat{\theta}\, \int d{\bf x}\,e^{i\hat{\lambda}\left(Q_\phi-\frac{\sum_k \phi^2(x_k)}{N}\right) + i\hat{\omega}\left(q-\frac{\sum_k x_k^2}{N}\right)+i\hat{\eta}\left(m-\frac{\sum_k x_k}{N}\right)+i\hat{\xi}\left(M_\phi-\frac{\sum_k \phi(x_k)}{N}\right)}\\
    &\quad\times e^{i\hat{t}\left(D_\phi-\frac{\sum_k \phi'(x_k)}{N}\right)+i\hat{\theta}\left(Z-\frac{\sum_k x_k\phi(x_k)}{N}\right)}
    \\
\end{align*}
let us denote for simplicity $d\hat{\Omega}:=d\hat{\lambda} \,d\hat{\omega} \,d\hat{\eta}\, d\hat{\xi}\, d\hat{t}\,d\hat{\theta}$, which are usually called \textit{conjugate variables}. Then the next step consists in rescaling the conjugate variables (so that, for example, $i\hat{\lambda}\to \hat{\lambda}N$) and neglecting again all those terms that do not contribute exponentially in $N$. Now the integral of the conjugate variables is over $\mathbb{C}$ (the imaginary axis), and we have:
\begin{align*}
    V&\propto \int d\hat{\Omega}\,e^{N\left[\hat{\lambda} Q_\phi+\hat{\omega} q + \hat{\eta} m + \hat{\xi} M_\phi+\hat{t}D_\phi + \hat{\theta} Z\right]}\int d{\bf x}\,e^{-\sum_k\left[\hat{\lambda} \phi^2(x_k) + \hat{\omega} x_k^2+\hat{\eta} x_k +\hat{\xi} \phi(x_k) + \hat{t} \phi'(x_k) + \hat{\theta} x_k\phi(x_k)\right]}\\
    &=\int d\hat{\Omega}\,e^{N\left[\hat{\lambda} Q_\phi+\hat{\omega} q + \hat{\eta} m + \hat{\xi} M_\phi+\hat{t}D_\phi + \hat{\theta} Z\right]}I^N,
\end{align*}
where we have defined the following integral (over $\mathbb{R}$):
\begin{align}
    I=\int dx \,e^{-\left[\hat{\lambda} \phi^2(x) + \hat{\omega} x^2+\hat{\eta} x +\hat{\xi} \phi(x) + \hat{t} \phi'(x) + \hat{\theta}\, x\,\phi(x)\right]}.
\end{align}
Now, it is evident why our choice of $\phi$, see Eq.~\eqref{eq:def_phi_sign}, is convenient. In fact, we can divide this integral into the contribution $|x|\leq 1$ and the contribution $|x|>1$. Hence, we find it useful to write $I=I_<+I_>$, where we define:
\begin{align}
\begin{split}
&I_<=\int_{|x|\leq 1}dx\,e^{-\left[\hat{\lambda} x^2 + \hat{\omega} x^2+\hat{\eta} x +\hat{\xi} x + \hat{t}  + \hat{\theta}\, x^2\right]}\\
&I_>=\int_{|x|>1}dx\,e^{-\left[\hat{\lambda} + \hat{\omega} x^2+\hat{\eta} x +\hat{\xi} \,\text{sign}(x) + \hat{\theta}\, x\,\text{sign}(x)\right]}.
\end{split}
\end{align}
Now, these are Gaussian integrals that are not hard to compute. For the first integral we get:
\begin{align}
I_<=e^{-t}\Theta_1
\end{align}
and for the second integral (by splitting the positive and negative part), we get:
\begin{align}
    I_>=e^t\left[\Theta_2+\Theta_3 \right]
\end{align}
where $\Theta_1,\Theta_2,\Theta_3$ were defined in Eq.~\eqref{eq:scs_thetas}. From this, we get that
\begin{align*}
\lim_{N\to\infty}\frac{1}{N}\log V=\text{extr}_{\hat\lambda,\hat\omega,\hat\xi,\hat\theta,\hat t,\hat\eta}v
\end{align*}
with 
\begin{align*}
v&:=\left\{\hat\lambda Q_\phi+\hat\omega q + \hat\eta m + \hat\xi M_\phi+\hat t(D_\phi-1) + \hat\theta Z+\log({\Theta}_1+{\Theta}_2+{\Theta}_3) \right\},
\end{align*}
where we used the Method of Steepest descent to evaluate the integral in the $N\to\infty$ limit, thus recovering Eq.~\eqref{eq:scs_v}

\subsection{Determinant term}
This terms is found in the same way as in Appendix.~\ref{app:determinant_calc}, since in both cases the Jacobian follows to the elliptic law. Hence Eq.~\eqref{eq:scs_d_alpha} is the same as Eq.~\eqref{app:det_eqn} applied to this case.

\subsection{Quenched complexity}
\label{sec:quenched_scs}
We give a very short summary of the computation of the quenched complexity in the case $\alpha=0$. In this case we will work with $n$ replicas ${\bf x}^a$, and so we define the following order parameters:
\begin{align*}
&M_\phi=\frac{1}{N}\sum_i\phi(x_i^a) \quad\quad&m&=\frac{1}{N}\sum x_i^a\\
&\hat{Q}_\phi^{ab}=\frac{1}{N}\sum_i\phi(x_i^a)\phi(x_i^b)\quad\quad&\hat{q}^{ab}&=\frac{1}{N}\sum_i x_i^ax_i^b\\
&D_\phi=\frac{1}{N}\sum_i\phi'(x_i^a).
\end{align*}
Like before, let us define
\begin{align}
\begin{split}
    \Omega({\bf x}^a)&=
    \delta\left(m-\frac{1}{N}\sum_ix^a_i\right)
    \delta\left(q-\frac{1}{N}\sum_i(x^a_i)^2\right)\delta\left(D_\phi-\frac{1}{N}\sum_i \phi'(x^a_i)\right)\times\\
    &
    \times\delta\left(M_\phi-\frac{1}{N}\sum_i\phi(x^a_i)\right)
    \delta\left(Q_\phi-\frac{1}{N}\sum_i\phi^2(x^a_i)\right)
\end{split}
\end{align}
and moreover we will indicate $F_i^a\equiv F_i({\bf x}^a)$ the force for the $i-$th replica. Hence by means of the replica trick, the complexity reads:
\begin{align}
   \Sigma(m,M_\phi,q,Q_\phi,D_\phi):=\lim_{N\to\infty}\frac{1}{N} \mathbb{E}\log\mathcal{N}(m,M_\phi,q,Q_\phi,D_\phi)=\lim_{N\to\infty,n\to 0}\frac{\mathbb{E}[\mathcal{N}^n]-1}{Nn}
\end{align}
where we have 
\begin{align}
    \mathbb{E}[\mathcal{N}^n]=\int \prod_a d{\bf x}^a\Omega({\bf x}^a)\mathbb{E}\left[\prod_a \delta({\bf F}^a)\right]\mathbb{E}\left[\prod_a|\det\partial {\bf F}^a|\Bigg|{\bf F}^a=0,\quad a=0,\ldots,n\right].
\end{align}

\noindent Like for the annealed computation, we have three terms that contribute to the complexity: the probability term, the determinant term, and the phase-space factor. In order to carry out the calculations, we assume a RS (replica symmetric) ansatz on the distribution of the replicas:
\begin{align}
    \hat{Q}_\phi=\begin{pmatrix}
        Q_\phi & \cdots & \tilde{Q} \\
         & \ddots & \\
        \tilde{Q} & \cdots & Q_\phi\\
    \end{pmatrix},\quad\quad 
    \hat{q}=
     \begin{pmatrix}
        q& \cdots & \tilde{q} \\
         & \ddots & \\
        \tilde{q} & \cdots & q\\
    \end{pmatrix}.
\end{align}
When opening up the delta functions to compute the phase space term, we introduce conjugate parameters $\hat{\Lambda}_{ab},\hat{\Omega}_{ab},\hat{\eta}_a,\hat\xi_a,\hat{t}_a$. The RS ansatz is also applied to these parameters, which are grouped in the following way: 
\begin{align}
    \hat{\Lambda}=\begin{pmatrix}
        2\hat{\lambda}_1 & \cdots & \hat{\lambda}_0 \\
         & \ddots & \\
        \hat{\lambda}_0 & \cdots & 2\hat{\lambda}_1\\
    \end{pmatrix}
    \quad\quad 
    \hat{W}=\begin{pmatrix}
        2\hat{\omega}_1 & \cdots & \hat{\omega}_0 \\
         & \ddots & \\
        \hat{\omega}_0 & \cdots & 2\hat{\omega}_1\\
    \end{pmatrix}
\end{align}
and 
\begin{align}
    \eta_a=\hat\eta,\quad \xi_a=\hat\xi,\quad t_a=\hat{t}.
\end{align}
We may also write 
\begin{align}
\begin{split}
    &\hat\Lambda_{ab}=\hat\lambda_0+\delta_{ab}(2\hat\lambda_1-\hat\lambda_0)\\
    &\hat W_{ab}=\hat w_0+\delta_{ab}(2\hat w_1-\hat w_0).
\end{split}
\end{align}
Let us be a bit sloppy to give a quick recap of the calculation. One proceeds by the introduction of delta functions encoding the overlaps between replicas, and opening them up in Fourier:
\begin{align*}
    \mathbb{E}[\mathcal{N}^n]&=\int\prod dQ_\phi^{ab}dq^{ab}d\Lambda_{ab}dW_{ab}\left[\,\mathbb{P}\cdot \mathbb{D}\cdot\int\prod d{\bf x}^aV(\{{\bf x}^a\})\right]\\
    &=\int\prod dQ_\phi^{ab}dq^{ab}d\Lambda_{ab}dW_{ab}\,e^{Nn(v+p_0+d_0)}
\end{align*}
where $V$ represents the phase space term (containing those expressions that still depend on the ${\bf x}^a$) and that has to be integrated, and $\mathbb{P}$, $\mathbb{D}$ only depend on the order parameters and conjugate variables (the ${\bf x}$ dependence being lost thanks to the introduction of the delta functions encoding the overlaps). At this point we introduce the RS ansatz and apply the Method of Steepest Descent on the action:
\begin{align}
\Sigma(m,M_\phi,q,Q_\phi,D_\phi)=d_0+\text{extr}_{\tilde{Q},\tilde{q},\hat{\lambda}_0,\hat{\lambda}_1,\hat{\omega}_0,\hat{\omega}_1}(v+p_0).
\end{align}

A careful calculation of the integrals reveals the following:

\subsubsection{Probability term}
\begin{align}
\begin{split}
    p_0:&=\lim_{N\to\infty,n\to 0}\frac{\log \mathbb{P}}{Nn}= -\frac{1}{2}\log[2\pi g^2(Q_\phi-\tilde{Q})]-\frac{1}{2}\frac{\tilde{Q}}{Q_\phi-\tilde{Q}}\\&-\frac{1}{2g^2}\frac{q (Q_\phi - 2 \tilde{Q}) + g^2 J_0^2 M_\phi^2 (Q_\phi - \tilde{Q}) + \tilde{q} \tilde{Q} + 
   2 g J_0 m M_\phi (-Q_\phi + \tilde{Q})}{(Q_\phi - \tilde{Q})^2}
\end{split}
\end{align}

\subsubsection{Phase-space term}
The phase space term can be computed to be equal to:
\begin{align}
\begin{split}
    v:&=\lim_{N\to\infty,n\to 0}\frac{\log V}{Nn} =\hat\lambda_1Q_\phi-\frac{\tilde{Q}}{2}\hat\lambda_0+\hat{w}_1q-\frac{\tilde{q}}{2}\hat{w}_0+\hat\eta m+\hat\xi M_\phi+\hat{t}D_\phi\\
    &+\int du_1 du_2\mathcal{G}(u_1,\hat{w}_0)\mathcal{G}(u_2,\hat\lambda_0)\log[I_k(u_1,u_2)+I_u(u_1,u_2)+I_p(u_1,u_2)]
\end{split}
\end{align}
with
\begin{align}
    I_k(u_1,u_2)=\int_{-1}^1dx\,e^{-\frac{1}{2}a_kx^2+b_kx+c_k}=\frac{\sqrt{\pi}e^{\frac{b_k^2}{2a_k}+c_k}}{\sqrt{2a_k}}\left[\text{Erf}\left(\frac{a_k-b_k}{\sqrt{2a_k}}\right)+\text{Erf}\left(\frac{a_k+b_k}{\sqrt{2a_k}}\right)\right]
\end{align}
and
\begin{align}
    I_u(u_1,u_2)=\int_{-\infty}^{-1}dx\, e^{-\frac{1}{2}a_{up}x^2+b_{up}x+c_u}=\frac{\sqrt{\pi}e^{\frac{b_{up}^2}{2a_{up}}+c_u}}{\sqrt{2a_{up}}}\left[1-\text{Erf}\left(\frac{a_{up}+b_{up}}{\sqrt{2a_{up}}}\right)\right]
\end{align}
and 
\begin{align}
    I_p(u_1,u_2)=\int_{1}^{\infty}dx\,e^{-\frac{1}{2}a_{up}x^2+b_{up}x+c_p}=\frac{\sqrt{\pi}e^{\frac{b_{up}^2}{2a_{up}}+c_p}}{\sqrt{2a_{up}}}\left[1-\text{Erf}\left(\frac{a_{up}-b_{up}}{\sqrt{2a_{up}}}\right)\right]
\end{align}
where we defined:
\begin{align}
\begin{split}
    &a_k=2(\hat{w}_1+\hat{\lambda}_1)-(\hat{w}_0+\hat{\lambda}_0)\\
    &b_k=u_1+u_2-\hat{\eta}-\hat\xi\\
    &c_k=-\hat{t}\\
    &a_{up}=2\hat{w}_1-\hat{w}_0\\
    &b_{up}=u_1-\hat\eta\\
    &c_u=\hat\xi-\frac{1}{2}(2\hat{\lambda}_1-\hat{\lambda}_0)-u_2\\
    &c_p=-\hat\xi-\frac{1}{2}(2\hat{\lambda}_1-\hat{\lambda}_0)+u_2.
\end{split}
\end{align}
The function $\mathcal{G}$ is the Gaussian kernel $\mathcal{G}(x,a):=\frac{e^{\frac{1}{2a}x^2}}{\sqrt{2\pi(-a)}}$. 

\subsubsection{Final expression}
The final expression reads:
\begin{align}
\Sigma(D_\phi)=d_0+\text{extr}_{m,M_\phi,Q_\phi,\tilde{Q},q,\tilde{q},\hat\eta,\hat\xi,\hat{t},\hat{\lambda}_1,\hat{\lambda}_0,\hat\omega_0,\hat\omega_1}\left[p_0+v\right].
\end{align}
Note that with our expressions we could use $n=2$ above, to obtain the second moment of $\mathcal{N}$, which could then be used for a second moment approach to prove that when $J_0=0$ the annealed complexity is exact.

\subsubsection{Reduction to the annealed complexity}
In the paramagnetic case the quenched complexity gives back the annealed one with $\tilde{Q}=\tilde{q}=\hat\eta=\hat\xi=\hat\lambda_0=\hat\omega_0=0$ and $\hat\omega_1\to\hat\omega$, $\hat\lambda_1\to\hat\lambda$. The second point where annealed and quenched computations match is the cavity point, defined as the prolongation of the FFP solution into the unstable phase. At this point, the annealed is recovered by imposing $\tilde{Q}=0,\,\,\tilde{q}=(gJ_0M_\phi)^2$ which imply $\hat\omega_0=\hat\lambda_0=0$, thus allowing us to simplify the convolution integral in $v$ by using delta functions on $\hat\omega_0=\hat\lambda_0$. The other order parameters satisfy the same equations as in the annealed case, Eqs.~\eqref{eq:ffp_scs_params}, \eqref{eq:ffp_scs_mphi}, \eqref{eq:ffp_scs_qphi}, and the same result is thus recovered.

\newpage
\section{Gradient along the geodesic path}
\label{app:gradient_max_barr}
In this Appendix, adapted from \cite{pacco2024curvature}, we derive the statistical distribution of the vector $ \nabla h({\bm\sigma}[\gamma;f=0])$ at each configuration along the geodesic path, parametrized as \eqref{eq:path} with $f(\gamma)=0$. We denote each configuration along the path simply as ${\bm \sigma}(\gamma):= {\bm\sigma}[\gamma;0]$, and take $p=3$. For any fixed value of $\gamma\in(0,1)$,  plugging the expression for ${\bm\sigma}(\gamma)$ inside the formula for $\nabla h$ one gets
\begin{align}
    \nabla h({\bm\sigma}(\gamma))=\gamma^2\,\nabla h({\bm\sigma}_1)+\beta^2\,\nabla h({\bm\sigma}_0)+\gamma\,\beta\,\nabla^2h({\bm\sigma}_0)\cdot {\bm\sigma}_1,
\end{align}
which using a change of basis and conditioning on the properties of ${\bm \sigma}_a$ reduces to:
\begin{align}
\begin{split}
    \nabla h({\bm\sigma}(\gamma))&=\gamma^2 3 \sqrt{2N}\epsilon_1{\bm\sigma}_1+\beta^2 3 \sqrt{2N}\epsilon_0{\bm\sigma}_0+\gamma\beta\nabla^2 h({\bm\sigma}_0)\cdot \left(q{\bm\sigma}_0-\sqrt{1-q^2}{\bf e}_{N-1}({\bm\sigma}_0)\right)\\
    &=\gamma^2 3 \sqrt{2N}\epsilon_1{\bm\sigma}_1+\beta^2 3\sqrt{2N}\epsilon_0{\bm\sigma}_0+\gamma\beta\left[ 
    6q\, \sqrt{2N}\epsilon_0{\bm\sigma}_0-\sqrt{1-q^2}\,\nabla^2h({\bm\sigma}_0)\cdot {\bf e}_{N-1}({\bm\sigma}_0)
    \right].
\end{split}
\end{align}
A small reminder: $\beta$ is the parameter that fixes ${\bm\sigma}(\gamma)$ on the surface of the hypersphere. We now consider the vector ${\bf g}(\gamma):={\bf g}({\bm \sigma}(\gamma))$, that is the projection of $\nabla h$ on the $(N-1)$-dimensional tangent plane $\tau[{\bm\sigma}(\gamma)]$. We choose a basis of this tangent plane in such a way that the first $N-2$ elements span the subspace orthogonal to both ${\bm\sigma}_0,{\bm\sigma}_1$; we denote them with ${\bf x}_1,\ldots,{\bf x}_{N-2}$ as in the main text, see Sec.~\ref{sec:en_land_settin_two_point}. The remaining vector equals to
\begin{align}
    {\bf e}_{N-1}(\gamma):= {\bf e}_{N-1}({\bm \sigma}(\gamma))=A({\bm\sigma}_0+C{\bm\sigma}_1), \quad \quad A=(1+2Cq+C^2)^{-1/2}, \quad \quad C=-\frac{\gamma q+\beta}{\gamma +\beta q}.
\end{align}
This vector is orthogonal to  $ {\bm\sigma}(\gamma)$ and has unit norm. It is simple to check that it coincides with the tangent vector to the geodesic path at the point ${\bm \sigma}(\gamma)$. Therefore, the gradient ${\bf g}(\gamma)= {\bf g}^\parallel(\gamma)+ {\bf g}^\perp(\gamma)$, where ${\bf g}^\parallel(\gamma) = ({\bf g}(\gamma) \cdot {\bf e}_{N-1}(\gamma)) {\bf e}_{N-1}(\gamma)$ is the component tangent to the geodesic, while ${\bf g}^\perp(\gamma)$ is the orthogonal one. Using that 
\begin{align}
    {\bf e}_{N-1}(\gamma)=
    (A+ACq){\bm\sigma}_0-AC\sqrt{1-q^2}{\bf e}_{N-1}({\bm\sigma}_0)
\end{align}
 we see that the tangent component equals to:
\begin{align}
    \begin{split}
       \nabla h({\bm\sigma}(\gamma))\cdot {\bf e}_{N-1}(\gamma)&=3\gamma^2  \sqrt{2N}\epsilon_1(A q + A C)+3\beta^2 \sqrt{2N}\epsilon_0(A+ACq)+\gamma\beta\bigg[6q\,\sqrt{2N}\epsilon_0(A+ACq)\\
        &-\sqrt{1-q^2}\,{\bf e}_{N-1}(\gamma)\cdot \nabla^2 h({\bm\sigma}_0)\cdot {\bf e}_{N-1}({\bm\sigma}_0)
        \bigg].
    \end{split}
\end{align}
One can check explicitly that this expression vanishes at the point $\gamma$ where the geodesic energy profile reaches its maximum. Consider now the orthogonal component of the gradient. We see that component-wise in the chosen local basis it holds for $i<N-1$:
\begin{align}
    \frac{\nabla h({\bm\sigma}(\gamma))}{\sqrt{N-1}}\cdot{\bf x}_i=-\gamma\beta\,\sqrt{1-q^2}{\bf x}_i\cdot \frac{\nabla^2 h({\bm\sigma}^0)}{\sqrt{N-1}}\cdot {\bf e}_{N-1}({\bm\sigma}^0)=-\gamma\beta\sqrt{1-q^2}\,m_{i,N-1}^0.
\end{align}
where $m_{i,N-1}^0$ was defined around Eq.~\eqref{eq:app:annealed_matrix_0}. Therefore, the orthogonal component ${\bf g}^\perp(\gamma)$ is proportional to the vector that makes up the last column of the shifted Hessian $\tilde{\mathcal{H}}({\bm \sigma}_0)$ (neglecting the last component of the column). This vector is in general not vanishing at the point that corresponds to the maximum of the energy profile along the geodesic. We define the normalized vector 
\begin{equation}
    {\bf v}_{\rm Hess}=  Z \sum_{i=1}^{N-2} [{\bf x}_i \cdot \tilde{\mathcal{H}}({\bm \sigma}_0) \cdot {\bf e}_{N-1}({\bm \sigma}_0)] \, {\bf x}_i =\frac{1}{\sqrt{\sum_{i=1}^{N-2} [m_{i \, N-1}^0]^2}} \tonde{m_{1 \, N-1}^0, m_{2 \, N-1}^0, \cdots, m_{N-2 \, N-1}^0,0}^T,
\end{equation}
which is orthogonal to both ${\bm \sigma}_a$ ($Z$ is the normalization factor). Plugging this into \eqref{eq:PrePAth} and making use of the fact that the entries $m_{i N-1}^a$ of the Hessians are uncorrelated to all other entries of the Hessian matrices, we see that the only matrix element that survives\footnote{the others are elements of the first block, which have zero average} in Eq. \eqref{eq:PrePAth} is:
\begin{equation}
    \mathbb{E} \quadre{ {\bf v}_{\rm Hess} \cdot \frac{\tilde{\mathcal{H}}({\bm \sigma}_0)}{\sqrt{2N}} \cdot {\bf e}_{N-1}({\bm \sigma}_0) }=  \mathbb{E} \quadre{\sqrt{\frac{\sum_{i=1}^{N-2} [m_{i \, N-1}^0]^2}{2 N}}}= \sqrt{\frac{\Delta^2}{2}}+\mathcal{O}\tonde{\frac{1}{N}}\stackrel{p=3}{=} \sqrt{\frac{3(1-q^2)}{1+q^2}}+\mathcal{O}\tonde{\frac{1}{N}},
\end{equation}
which leads to Eq. \ref{eq:energy_profile_Hess} in the main text. \\

\newpage
\section{Three-point complexity: some calculations}
\label{app:three_point}
In this Appendix we give ideas behind the doubly-annealed three-point complexity calculation \cite{pacco_triplets_2025}.
\subsection{Probability term in the doubly-annealed calculation}
Like in the previous complexity calculations, here we have a phase-space term, a probability term, and a determinant term. The phase space and determinant terms are the easiest: the first one is just an integral over the phase space; the second can be shown to have the same form as for the one-point case when $N\to\infty$ \cite{ros2019complex, ros2019complexity} (since the conditioning creates at most finite-rank perturbations that do not contribute to the determinant). The hardest part is therefore the probability term, call it $P(\epsilon_2, q_0, q_1| \epsilon_1, \epsilon_0, q)$, which is the probability that the point ${\bf s}_2$ is a fixed point of energy density $\epsilon_2$ conditioned to ${\bf s}_0,{\bf s}_1$. If we write this using the field $h$ and ${\bm\sigma}={\bf s}/\sqrt{N}$ as in Sec.~\ref{sec:curvature_driven_paths} we have:
\begin{align}
P(\epsilon_2, q_0, q_1| \epsilon_1, \epsilon_0, q)=\mathbb{E}\left[\delta(h({\bm\sigma}_2)-\sqrt{2N}\epsilon_2)\,\delta \tonde{{\bf g}({\bm\sigma}_2)}\bigg|\grafe{
    \begin{subarray}{l}
    h^a = \sqrt{2N}\epsilon_a\\ 
    {\bf g}^a={\bf 0}, \; \forall  a=0,1 \end{subarray}}\right]
\end{align}
where we are implicitly assuming that we realize the overlaps imposed by delta functions in the integration over phase space. Then we can write:
\begin{align}\label{eq:DistRatio}
   P(\epsilon_2, q_0, q_1| \epsilon_1, \epsilon_0, q)=\frac{P_3({\bm \epsilon}, {\bf q})}{P_2(\epsilon_1, \epsilon_0, q)},\quad{\bm\epsilon}=(\epsilon_0,\epsilon_1,\epsilon_2),\quad{\bf q}=(q_0,q_1,q_2)
\end{align}
where $P_3({\bm \epsilon}, {\bf q})$ is the joint probability density of ${\bf g}^a=0$ and $h^a= \sqrt{2N} \epsilon_a$ for $a=0,1,2$, while $P_2(\epsilon_1, \epsilon_0, q)$ is the analogous quantity for $a=0,1$. The term $P_2(\epsilon_0, \epsilon_1, q)$ can be read from the calculation of the two-point complexity in \cite{ros2019complexity}. We therefore focus on the calculation of the numerator. Since for the pure $p$-spin model it holds $\nabla h({\bm \sigma})= {\bf g}({\bm\sigma}) + p\,h({\bm \sigma}){\bf e}_N({\bm \sigma})$, we can include the conditioning on the value of the energies as part of the conditioning on the values of the (unconstrained) gradients. Hence, we have to compute the joint probability density function of the three gradients  evaluated at
\begin{align*}
\nabla h({\bm \sigma}_a)={\bf 0}+p\sqrt{2N}\epsilon_a\,{\bf e}_N({\bm \sigma}_a).
\end{align*}
Using that the gradients components are centered Gaussian random variables, we get:
\begin{align}
    \begin{split}
   P_3({\bm \epsilon}, {\bf q})=     \mathbb{P}(\{\nabla h({\bm \sigma}_a)=p\sqrt{2N}\epsilon_a\,{\bf e}_N({\bm\sigma}_a)\}_{a=0}^2)=\frac{e^{-\frac{1}{2}\sum_{a,b=0}^2\nabla h^a \cdot\left[\hat{C}^{-1}\right]^{ab}\cdot\nabla h^b}}{(2\pi)^{\frac{3N}{2}}(\det\hat{C})^{\frac{1}{2}}}=\frac{e^{-N\,F({\bm\epsilon},{\bf q})}}
        {(2\pi)^{\frac{3N}{2}}(\det\hat{C})^{\frac{1}{2}}}
    \end{split}
\end{align}
where 
\begin{align}
\begin{split}\label{eq:effe}
F({\bm\epsilon},{\bf q})&=\epsilon_0^2\,Y_{0}^{(p)}({\bf q})+\epsilon_0\epsilon_1\,Y_{01}^{(p)}({\bf q})+\epsilon_0\epsilon_2\,Y_{02}^{(p)}({\bf q})
+\epsilon_1\epsilon_2\,Y_{12}^{(p)}({\bf q})+\epsilon_1^2Y_{1}^{(p)}({\bf q})+\epsilon_2^2Y_{2}^{(p)}({\bf q})
\end{split}
\end{align}
 with $Y_{ab}^{(p)}({\bf q}):=p^2\left[{\bm \sigma}_a\cdot[\hat{C}^{-1}]^{ab}\cdot{\bm \sigma}_b +
 {\bm \sigma}_b\cdot[\hat{C}^{-1}]^{ba}\cdot{\bm \sigma}_a\right]$ for $a<b$ (and only one term for $a=b$) and where the correlation matrix $\hat{C}$ takes the following form (in any orthonormal basis of $\mathbb{R}^N$), by Eq.~\eqref{app:eq:CovGradComp}:
\begin{align}
    C_{ij}^{ab}=\mathbb{E}\left[  \nabla h_i^a\nabla h^b_j\right]=p\, Q^{p-1}_{ab}\delta_{ij}+p(p-1)Q^{p-2}_{ab}\sigma_j^a\sigma_i^b, \quad \quad Q_{ab}={\bm\sigma}_a\cdot{\bm\sigma}_b.
\end{align}
 The matrix $\hat C$ is high-dimensional ($3N \times 3N$); since $F({\bm \epsilon},{\bf q})$ requires to compute only contractions of its inverse with the vectors ${\bm \sigma}_a$, a convenient way to proceed is to identify subsets of vectors (including the ${\bm \sigma}_a$) that are closed under the action of  $\hat{C}$: this allows to invert the correlation matrix only in the corresponding subspace, reducing the complexity of the inversion \cite{ros2019complex}.  We therefore proceed with the following steps: (i) find  a set of vectors that are closed under the action of $\hat{C}$, (ii) orthogonalize them, (iii)  invert the matrix $\hat{C}$ (or better its action on this set of vectors), and (iv) extract the quantities ${\bm\sigma}_a\cdot[\hat{C}]^{ab}\cdot{\bm\sigma}_b$ for any $a,b=0,1,2$.\\
\noindent It can be easily checked that the following set of $(3N)$-dimensional vectors is closed under the action of $\hat{C}$:
\begin{align}
\label{eq:app:xi_vecs}
\begin{split}
    &{\bm \xi}^1=({\bm \sigma}_0,{\bf 0},{\bf 0})\quad\quad{\bm \xi}^2=({\bm \sigma}_1,{\bf 0},{\bf 0})\quad \quad {\bm \xi}^3=({\bm \sigma}_2,{\bf 0},{\bf 0})\\
    &{\bm \xi}^4=({\bf 0},{\bm \sigma}_0,{\bf 0})\quad\quad{\bm \xi}^5=({\bf 0},{\bm \sigma}_1,{\bf 0})\quad \quad {\bm \xi}^6=({\bf 0},{\bm \sigma}_2,{\bf 0})\\
    &{\bm \xi}^7=({\bf 0},{\bf 0},{\bm \sigma}_0)\quad\quad{\bm \xi}^8=({\bf 0},{\bf 0},{\bm \sigma}_1)\quad \quad {\bm \xi}^9=({\bf 0},{\bf 0},{\bm \sigma}_2).
\end{split}
\end{align}
In our notation, the vector ${\bm \xi}^1=({\bm \sigma}_0,{\bf 0},{\bf 0})$ is obtained concatenating the $N$-dimensional vector ${\bm \sigma}_0$ with two other $N$-dimensional vectors with zero entries, and similar for the others. The vectors in this family are linearly independent, but not orthogonal to each others. We therefore introduce a set of orthogonal $(3N)$-dimensional vectors ${\bm\chi}$:
\begin{align}
\begin{aligned}
 {\bm\chi}^{1} & = {\bm\xi}^1 \quad\quad & {\bm\chi}^4 & = {\bm\xi}^4 - q{\bm\xi}^5 \quad\quad & {\bm\chi}^7 & = {\bm\xi}^7 - q_0{\bm\xi}^9 \\
 {\bm\chi}^2 & = {\bm\xi}^2 - q{\bm\xi}^1 \quad\quad & {\bm\chi}^5 & = {\bm\xi}^5 \quad\quad & {\bm\chi}^8 & = {\bm\xi}^8 + c_1{\bm\xi}^7 + d_1{\bm\xi}^9 \\
 {\bm\chi}^3 & = {\bm\xi}^3 + c{\bm\xi}^1 + d{\bm\xi}^2 \quad\quad & {\bm\chi}^6 & = {\bm\xi}^6 + c{\bm\xi}^4 + d{\bm\xi}^5 \quad\quad & {\bm\chi}^9 & = {\bm\xi}^9,
\end{aligned}
\end{align}
with 
\begin{align}
\begin{aligned}
c&=\frac{q q_1-q_0 }{1 - q^2}\quad\quad d=\frac{qq_0 - q_1}{1 - q^2}\quad\quad c_1=\frac{ q_0 q_1-q}{1 - q_0^2}\quad\quad d_1=\frac{qq_0 -  q_1}{1 - q_0^2}.
\end{aligned}
\end{align}
\noindent The key trick to compute the action of $\hat C^{-1}$ on these vectors is to write the correlation matrix as $\hat{C}=p(\hat{D}+\hat{O})$,  where 
\begin{align}
    \begin{split}
        &D_{ij}^{ab}=\delta_{ab}[\delta_{ij}+(p-1)\sigma_{a,i}\sigma_{a,j}]\\
        &O_{ij}^{ab}=(1-\delta_{ab})[\delta_{ij}Q_{ab}^{p-1}+(p-1)Q_{ab}^{p-2}\sigma_{b,i}\sigma_{a,j}].
    \end{split}
\end{align}
This implies that $\hat{C}^{-1}=p^{-1}\hat{D}^{-1}[1+\hat{O}\hat{D}^{-1}]^{-1}$;
moreover, the Sherman-Morrison formula allows us to write
\begin{align}
    [\hat{D}^{-1}]^{ab}_{ij}=\delta_{ab}[\delta_{ij}-(p-1)p^{-1}\sigma_{a,i}\sigma_{a,j}].
\end{align}
\noindent Our goal is to determine what is the action of the matrices $\hat{D}^{-1}$ and $[1+\hat{O}\hat{D}^{-1}]$ on the vectors ${\bm \chi}^i$, and to then invert the matrix $[1+\hat{O}\hat{D}^{-1}]$ restricted to this subspace. 
One might notice that, at variance with \cite{ros2019complex}, we are working here with vectors ${\bm \chi}^i$ that are orthogonal but not necessarily of unit norm, except for those vectors that only contain one configuration ${\bm \sigma}_a$, which are ${\bm\chi}^1,{\bm\chi}^5,{\bm\chi}^9$. This is due to the fact that the quadratic form \eqref{eq:effe} is a function of the matrix elements of $\hat C^{-1}$ only along these three directions: the normalization of the remaining ${\bm \chi}^i$ will not enter in the final result, and it can therefore be safely neglected, simplifying the  formalism.\\

\noindent To compute the action of $\hat{C}$ on the ${\bm\chi}$ vectors, it is convenient to first do it on the ${\bm\xi}$ vectors and then to make a change of basis. For convenience, we introduce the following $(3N)$-dimensional vector:
\begin{align}
    {\bf v}(n,\beta):=(0,\ldots,\underbrace{{\bm\sigma}_n}_{\beta},0,\ldots)\quad\quad n,\beta\in\{0,1,2\}
\end{align}
where we basically put the $n$-th configuration in the location $\beta$ (with $n,\beta$ starting at 0). In this way, each ${\bm\xi}$ vector is written as a ${\bf v}$ vector (notice that this can be generalized to an arbitrary number of configurations). Then, one gets
\begin{align}
    \hat{D}^{-1}{\bf v}(n,\beta)={\bf v}(n,\beta)-\frac{p-1}{p}Q_{n\beta}\,{\bf v}(\beta,\beta)
\end{align}
and similarly
\begin{align}
\hat{O}{\bf v}(n,\beta)=\sum_{\alpha\neq\beta}\left[Q_{\alpha\beta}^{p-1}{\bf v}(n,\alpha)+(p-1)Q_{\alpha\beta}^{p-2}Q_{\alpha n}\,{\bf v}(\beta,\alpha)\right].
\end{align}
From this, one can obtain the explicit form of the matrix $[\hat{D}^{-1}]_{\xi}$\footnote{the subscript indicates the basis in which it is expressed} (see Appendix of Ref.~\cite{pacco_triplets_2025}) and of $[\hat{O}^{-1}]_{\xi}$. Then by denoting $U_{\xi\to\chi}$ the change of basis matrix, we define 
\begin{align}
    \hat{M}:=\mathbb{I}_{9}+[\hat{O}\hat{D}^{-1}]_\chi=U^{-1}_{\xi\to\chi}(\mathbb{I}_9+[\hat{O}\hat{D}^{-1}]_\xi)^{-1}U_{\xi\to\chi}
\end{align}
and obtain that 
\begin{align}
\begin{split}
&Y_{0}^{(p)}({\bf q})=[\hat{M}^{-1}]_{11},\quad Y_{1}^{(p)}({\bf q})=[\hat{M}^{-1}]_{55},\quad Y_{2}^{(p)}({\bf q})=[\hat{M}^{-1}]_{99},\quad Y_{01}^{(p)}({\bf q})=[\hat{M}^{-1}]_{15}+[\hat{M}^{-1}]_{51}\\& Y_{02}^{(p)}({\bf q})=[\hat{M}^{-1}]_{19}+[\hat{M}^{-1}]_{91},\quad Y_{12}^{(p)}({\bf q})=[\hat{M}^{-1}]_{59}+[\hat{M}^{-1}]_{95}.\\
\end{split}
\end{align}
The idea behind this calculation was used by us also to compute the probability term in the quenched case \cite{pacco_quenched_triplets_2025}. However, in that case there are $1+m+k$ replicas: 1 for ${\bm\sigma}_0$ which is annealed and so does not need to be replicated; $m$ for ${\bm\sigma}_1$; $k$ for ${\bm\sigma}_2$. It turns out that for the RS computation we need $18$ vectors such as those in \eqref{eq:app:xi_vecs}, in order for their action on $\hat{C}$ to be closed (for the two-point complexity only 5 were needed \cite{ros2019complexity}). Notice that, moreover, this method can be extended to any $n-$point complexity calculation, but inverting $[\hat{O}\hat{D}^{-1}]_\xi$ will quickly increase in difficulty (you need $n^2$ vectors).

\subsection{The $Y^{(3)}$ variables for the doubly-annealed case}
\label{app:Y_vars}
The expressions above can be easily solved with a mathematical software, but we were not able to find a simple closed form solution of the $Y^{(p)}$ variables as a function of $p$. For $p=3$ the formulas are already cumbersome:
\begin{equation}\label{eq:yp31}
\begin{split}
Y_0^{(3)}({\bf q})=&-((6 q^6 + 6 q_0^6 + 2 q_0^2 (-1 + q_1^2)^3 - (-1 + q_1^2)^3 (1 + q_1^2) - 
     12 q^5 q_0 q_1 (2 + q_1^2) + 3 q_0^4 (1 - 4 q_1^2 + 3 q_1^4)\\
     &- 
     4 q q_0 q_1 (-(-1 + q_1^2)^3 + 3 q_0^4 (2 + q_1^2) + 
        3 q_0^2 (-1 + q_1^4)) + 
     12 q^3 q_0 q_1 (1 - q_1^4\\
     &+ q_0^2 (1 - 12 q_1^2 + q_1^4)) + 
     3 q^4 (1 - 4 q_1^2 + 3 q_1^4 + q_0^2 (-4 + 28 q_1^2 + 6 q_1^4))+ 
     2 q^2 ((-1 + q_1^2)^3\\
     &+ q_0^4 (-6 + 42 q_1^2 + 9 q_1^4) + 
        q_0^2 (3 - 30 q_1^2 + 33 q_1^4 - 6 q_1^6)))/((-1 + q^2 + q_0^2 - 
       2 q q_0 q_1 + q_1^2)^2\\
       &(-1 + q^4 + q_0^4 + 8 q q_0 q_1 - 
       4 q_0^2 q_1^2 + q_1^4 + 2 q^2 (-2 q_1^2 + q_0^2 (-2 + q_1^2)))))
 \end{split}
 \end{equation}

 \begin{equation}
 \begin{split}
Y_1^{(3)}({\bf q})=&
(-6 q^6 + (-1 + q_0^2)^3 (1 + q_0^2) + 12 q^5 q_0 (2 + q_0^2) q_1 - 
   2 (-1 + q_0^2)^3 q_1^2 - 3 (1 - 4 q_0^2 + 3 q_0^4) q_1^4 - 6 q_1^6\\& -
   12 q^3 q_0 q_1 (1 - q_0^4 + (1 - 12 q_0^2 + q_0^4) q_1^2) + 
   4 q q_0 q_1 (-(-1 + q_0^2)^3 + 3 (-1 + q_0^4) q_1^2 + 
      3 (2 + q_0^2) q_1^4) \\
      &+ 
   2 q^2 (-(-1 + q_0^2)^3 + 
      3 (-1 + q_0) (1 + q_0) (1 - 9 q_0^2 + 2 q_0^4) q_1^2 - 
      3 (-2 + 14 q_0^2 + 3 q_0^4) q_1^4) \\&- 3 q^4 (1 - 4 q_1^2 + q_0^4 (3 + 6 q_1^2) + 
      4 q_0^2 (-1 + 7 q_1^2)))/((-1 + q^2 + q_0^2 - 2 q q_0 q_1\\& + 
     q_1^2)^2 (-1 + q^4 + q_0^4 + 8 q q_0 q_1 - 4 q_0^2 q_1^2 + q_1^4 + 
     2 q^2 (-2 q_1^2 + q_0^2 (-2 + q_1^2))))
      \end{split}
 \end{equation}

 \begin{equation}
 \begin{split}
Y_2^{(3)}({\bf q})=&(-1 + 2 q^2 - 2 q^6 + q^8 + 2 q_0^2 - 6 q^2 q_0^2 + 6 q^4 q_0^2 - 
   2 q^6 q_0^2 - 3 q_0^4 + 12 q^2 q_0^4 - 9 q^4 q_0^4 - 6 q_0^6\\& - 
   4 q q_0 ((-1 + q^2)^3 - 3 (-1 + q^4) q_0^2 - 3 (2 + q^2) q_0^4) q_1 + 
   2 (-(-1 + q^2)^3\\
   &+ 3 (-1 + q) (1 + q) (1 - 9 q^2 + 2 q^4) q_0^2 - 
      3 (-2 + 14 q^2 + 3 q^4) q_0^4) q_1^2\\& - 
   12 q q_0 (1 - q^4 + (1 - 12 q^2 + q^4) q_0^2) q_1^3 - 
   3 (1 - 4 q_0^2 + q^4 (3 + 6 q_0^2)\\& + 4 q^2 (-1 + 7 q_0^2)) q_1^4 + 
   12 q (2 + q^2) q_0 q_1^5 - 
   6 q_1^6)/((-1 + q^2 + q_0^2\\& - 2 q q_0 q_1 + q_1^2)^2 (-1 + q^4 + q_0^4 + 
     8 q q_0 q_1 - 4 q_0^2 q_1^2 + q_1^4 + 
     2 q^2 (-2 q_1^2 + q_0^2 (-2 + q_1^2))))
     \end{split}
     \end{equation}

     \begin{equation}\label{eq:yp32}
     \begin{split}
Y_{01}^{(3)}({\bf q})=&(2 (q - q_0 q_1)^2 (3 q^5 - 6 q^4 q_0 q_1 + 2 q_0 q_1 (1 + q_0^2 + q_1^2) - 
     2 q^2 q_0 q_1 (-2 + 3 q_0^2 + 3 q_1^2)\\& - 
     q (-1 + 3 q_0^4 + 2 q_1^2 + 3 q_1^4 + q_0^2 (2 - 4 q_1^2)) + 
     2 q^3 (2 - 3 q_1^2 + q_0^2 (-3 + 9 q_1^2))))/((-1 + q^2 + q_0^2\\& - 
     2 q q_0 q_1 + q_1^2)^2 (-1 + q^4 + q_0^4 + 8 q q_0 q_1 - 4 q_0^2 q_1^2 + 
     q_1^4 + 2 q^2 (-2 q_1^2 + q_0^2 (-2 + q_1^2))))    \end{split}
 \end{equation}

 \begin{equation}
 \begin{split}
Y_{12}^{(3)}({\bf q})=&-((2 (-q q_0 + 
       q_1)^2 (-2 q q_0 (1 + q^2 + q_0^2) + (-1 + 3 q^4 + 2 q_0^2 + 
          3 q_0^4 + q^2 (2 - 4 q_0^2)) q_1\\& + 
       2 q q_0 (-2 + 3 q^2 + 3 q_0^2) q_1^2 + 
       2 (-2 + 3 q_0^2 + q^2 (3 - 9 q_0^2)) q_1^3 + 6 q q_0 q_1^4 - 
       3 q_1^5))/((-1 + q^2 \\&+ q_0^2 - 2 q q_0 q_1 + q_1^2)^2 (-1 + q^4 + 
       q_0^4 + 8 q q_0 q_1 - 4 q_0^2 q_1^2 + q_1^4 + 
       2 q^2 (-2 q_1^2 + q_0^2 (-2 + q_1^2)))))
           \end{split}
 \end{equation}

 \begin{equation}
 \label{app:eq:last_Y}
 \begin{split}
Y_{02}^{(3)}({\bf q})=&(2 (q_0 - q q_1)^2 (3 q_0^5 - 6 q q_0^4 q_1 + 2 q q_1 (1 + q^2 + q_1^2) - 
     2 q q_0^2 q_1 (-2 + 3 q^2 + 3 q_1^2)\\& - 
     q_0 (-1 + 3 q^4 + 2 q_1^2 + 3 q_1^4 + q^2 (2 - 4 q_1^2)) + 
     2 q_0^3 (2 - 3 q_1^2 + q^2 (-3 + 9 q_1^2))))/((-1 + q^2\\& + q_0^2 - 
     2 q q_0 q_1 + q_1^2)^2 (-1 + q^4 + q_0^4 + 8 q q_0 q_1 - 4 q_0^2 q_1^2 + 
     q_1^4 + 2 q^2 (-2 q_1^2 + q_0^2 (-2 + q_1^2)))).
\end{split}
\end{equation}

\end{document}